\DeclareSymbolFontAlphabet{\mathbbm}{bbold}
\DeclareSymbolFontAlphabet{\mathbb}{AMSb}%
\theoremstyle{plain}
\newtheorem{theorem}{Theorem}[section]
\newtheorem{lemma}{Lemma}[section]
\newtheorem{proposition}{Proposition}[section]
\newtheorem{corollary}{Corollary}[section]
\newtheorem{definition}{Definition}
\newtheorem{remark}{Remark}[section]
\numberwithin{equation}{section}
\numberwithin{theorem}{section}
\numberwithin{lemma}{section}
\numberwithin{proposition}{section}
\numberwithin{corollary}{section}
\numberwithin{definition}{section}
\numberwithin{cons}{section}
\numberwithin{remark}{section}
\numberwithin{exa}{section}
\numberwithin{table}{section}
\numberwithin{figure}{section}
\newcommand{\iz}{z}
\newcommand{\MP}{\operatorname{MP}}
\renewcommand{\i}{\mathrm{i}}
\newcommand{\bZ}{\mathbf{{Z}}}
\newcommand{\bv}{\mathbf{v}}
\newcommand{\bY}{\mathbf{Y}}
\newcommand{\bF}{\mathbf{F}}
\newcommand{\bG}{\mathbf{G}}
\newcommand{\bH}{\mathbf{H}}
\newcommand{\bL}{\mathbf{L}}
\newcommand{\bJ}{\mathbf{J}}
\newcommand{\bW}{\mathbf{W}}
\newcommand{\calD}{\mathcal{D}}
\newcommand{\calF}{\mathcal{F}}
\newcommand{\calG}{\mathcal{G}}
\newcommand{\calH}{\mathcal{H}}
\newcommand{\calI}{\mathcal{I}}
\newcommand{\calK}{\mathcal{K}}
\newcommand{\calN}{\mathcal{N}}
\newcommand{\calW}{\mathcal{W}}
\newcommand{\calZ}{\mathcal{Z}}
\newcommand{\frakB}{\mathfrak{B}}
\newcommand{\frakD}{\mathfrak{D}}
\newcommand{\frakM}{\mathfrak{M}}
\newcommand{\frakN}{\mathfrak{N}}
\newcommand{\mE}{\mathbb{E}}
\newcommand{\mP}{\mathbb{P}}
\newcommand{\tr}{\mathrm{tr}}
\newcommand{\bw}{\mathbf{w}}
\newcommand{\TW}{\mathbb{TW}}
\newcommand{\supp}{\operatorname{supp}}
\newcommand{\tbZ}{\tilde{\bZ}}
\newcommand{\bK}{\mathbf{K}}
\newcommand{\SNR}{{\rm SNR}}
\begin{document}

\begin{frontmatter}
\title{Ridge-Regularized Largest Root Test For High-Dimensional General Linear Hypotheses}
\runtitle{Ridge-regularized largest root test}

\begin{aug}
\author[A]{\fnms{Haoran}~\snm{Li}\ead[label=e1]{hzl0152@auburn.edu}}
\address[A]{Department of Mathematics and Statistics, Auburn
University\printead[presep={,~}]{e1}}
\end{aug}

\begin{abstract}
A fundamental problem in multivariate analysis is testing general linear hypotheses for regression coefficients in a multivariate linear model. This framework includes many important tasks such as MANOVA, joint significance testing, and detecting trends or seasonal effects. Roy’s largest root test is a classical and powerful method for detecting concentrated signals, based on the largest eigenvalue of an $F$-matrix constructed from residual covariance matrices. However, in high-dimensional settings, these matrices often become ill-conditioned or singular, making the test infeasible. To address this, we propose a ridge-regularized Roy’s test that stabilizes the covariance estimation via a ridge term. We establish that the largest eigenvalue of the regularized $F$-matrix follows an asymptotic Tracy–Widom distribution under a high-dimensional regime where the number of variables and hypotheses are comparable to the sample size, assuming only finite-moment conditions. We develop an efficient procedure to estimate the centering and scaling parameters, and analyze the power of the test under a class of low-rank alternatives, with attention to the effect of the regularization parameter. Data-driven strategies for selecting the regularization parameter are developed from both the Bayesian decision-theoretic and minimax perspectives. The proposed method shows strong performance in simulations and is applied to Human Connectome Project data to assess associations between brain measurements and behavioral outcomes.
\end{abstract}

\begin{keyword}[class=MSC]
\kwd[Primary ]{62H15}
\kwd{62J05}
\kwd[; secondary ]{60B20}
\end{keyword}

\begin{keyword}
\kwd{Ridge regularization}
\kwd{Tracy-Widom law}
\kwd{F-matrix} 
\kwd{Random Matrix Theory}
\end{keyword}

\end{frontmatter}

\section{Introduction}\label{sec:introduction}
In multivariate analysis, one of the fundamental inferential problems is to test a group of hypotheses jointly involving linear transformations of regression coefficients under a multivariate linear model. We consider a setting with \(p\)-variate responses and \(m\)-variate predictors observed from \(n_0\) independent subjects. The relationship between the responses and predictors is modeled as
\[
\bY = B X + \Sigma^{1/2}_p \bZ,
\]
where \(\bY\) is a \(p \times n_0\) response matrix, \(X\) is a \(m \times n_0\) deterministic predictor matrix, \(B\) is a \(p \times m\) coefficient matrix, and \(\bZ\) is a \(p \times n_0\) error matrix with i.i.d. entries of mean zero and variance one. \(\Sigma_p\) is the \(p \times p\) positive definite population covariance matrix, and \(\Sigma^{1/2}_p\) is its  symmetric ``square-root'', satisfying \((\Sigma^{1/2}_p)^2 = \Sigma_p\). The primary objective is to test linear hypotheses about the regression coefficients:
\[
H_0: BC = 0 \quad \text{vs.} \quad H_a: BC \neq 0,
\]
where \(C\) is a \(m \times n_1\) constraint matrix, with each column specifying a linear constraint on the coefficients. Without loss of generality, we assume that \(X\) and \(C\) are of full rank, and $BC$ is estimable in the sense that the matrix $C^T(XX^T)^{-1}C$ is positive definite.  

With various choices of $X$ and $C$, this formulation is broadly applicable across various fields, encompassing many classical problems such as MANOVA, testing the joint significance of predictors, and detecting trends or seasonal patterns. When the sample size $n_0$ is substantially larger than $p$, $m$ and $n_1$, the problem is well-studied. \cite{anderson1958introduction} and \cite{muirhead2009aspects} are among standard references.

Classical inferential procedures rely on the eigen-analysis of an F-matrix of the form 
\[\bF = \bW_1\bW_2^{-1}\] 
where $\bW_1 = n_1^{-1} \bY P_1 \bY^T $ and $\bW_2 = n_2^{-1} \bY P_2 \bY^T$ with 
\begin{equation}\label{eq:W1_W2}
\begin{aligned}
&P_1 = X^T\left(X X^T\right)^{-1} C\left[C^T\left(X X^T\right)^{-1} C\right]^{-1} C^T\left(X X^T\right)^{-1} X, \\
&P_2 = I_{n_0} -X^T\left(X X^T\right)^{-1} X.
\end{aligned}
\end{equation}
Here, $n_1$ and $n_2 = n_0 - m$ are the respective degrees of freedom. The matrix $\bW_2$ is the residual covariance of the full model and serves as an estimator of $\Sigma_p$, while $\bW_1$ is the hypothesis sum-of-squares and cross-products matrix, scaled by $n_1^{-1}$. In a one-way MANOVA set-up, $\bW_1$ and $\bW_2$ reduce to the between-group and within-group sum-of-squares and cross-products matrices, each normalized by its associated degrees of freedom. Both matrices involve the projection of $\bY$ onto two orthogonal subspaces via the projection matrices \(P_1\) and \(P_2\), defined in \eqref{eq:W1_W2}, of rank \(n_1\) and \(n_2\). Under the null hypothesis and assuming Gaussian errors, \(\bW_1\) and \(\bW_2\) are independent Wishart matrices: \(n_1 \bW_1 \sim \operatorname{Wishart}(\Sigma_p, n_1)\) and \(n_2 \bW_2 \sim \operatorname{Wishart}(\Sigma_p, n_2)\). For this reason, $\bF$ is often referred to as a double Wishart matrix.

To test the hypothesis, classical literature commonly employs two main types of techniques, both relying on the eigenvalues of $\bF$. The first approach involves aggregating all eigenvalues of \(\bF\) after applying a specific transformation, as in the classical likelihood ratio test statistic, also known as Wilk's lambda. This method is useful for assessing the overall deviation from the null hypothesis.  

In this paper, we focus on the second approach, known as {Roy's largest root test}, which relies exclusively on the largest eigenvalue of \(\bF\), denoted by \(\ell_{\max}\). This technique is particularly powerful for detecting concentrated alternatives, where the signal is primarily aligned with the leading eigen-direction. Specifically, the test rejects the null hypothesis \(H_0: BC = 0\) if \(\ell_{\max} > \zeta_{\alpha}\) at a significance level \(\alpha\), where \(\zeta_{\alpha}\) is the critical value corresponding to the \((1-\alpha) \times 100\%\) quantile of the distribution of \(\ell_{\max}\) under the null. Traditionally, \(\zeta_{\alpha}\) is approximated using the critical value of an \(F\) or \(\chi^2\) distribution after appropriate scaling; see Chapter 10 of \citet{muirhead2009aspects} for details. 

In contemporary statistical research, it is increasingly common for the dimension \(p\) to be at least comparable to, or even exceed, the sample size $n_0$. While Roy's largest root test performs well when \(n_0\) are much larger than \(p\) and $m$, its reliability diminishes significantly in high-dimensional settings. In particular, when \(p > n_2\), the singularity of \(\bW_2\) renders the F-matrix ill-defined. Even when \(p\) is smaller than \(n_2\), the presence of small eigenvalues in \(\bW_2\) can cause instability in \(\bW_2^{-1}\), thereby compromising the power of the test. Addressing these issues is essential to ensure the robustness and validity of inferential procedures in high-dimensional regimes.

To correct the asymptotic null distribution of the largest root, \citet{johnstone2008multivariate} showed that, under normality, the largest root of $\bF$ converges to the Tracy--Widom distribution of type one \citep{tracy1994level,tracy1996orthogonal}, after appropriate scaling, provided that \(p\), \(n_1\), and \(n_2\) grow proportionally with \(p < n_2\). This result was later extended to non-Gaussian settings by \citet{han2016tracy} and \citet{han2018unified}.

However, when \(p > n_2\), existing remedy methods have primarily focused on problems corresponding to the case when \(n_1\) is small, such as two-sample mean tests and MANOVA. As summarized in \citet{huang2022overview}, these methods can be roughly grouped into three categories. The first category involves quadratic-form tests, which construct modified statistics by replacing \(\bW_2^{-1}\) with regularized substitutes. For example, \citet{bai1996effect} and \citet{chen2010two} proposed replacing \(\bW_2^{-1}\) with the identity matrix in two-sample mean tests. This idea was extended to MANOVA settings by \citet{srivastava2006multivariate}, \citet{yamada2015testing}, and \citet{hu2017testing}. \citet{chen2014two} introduced a ridge regularization in the context of mean testing, further developed in \citet{li2020adaptable} and extended to MANOVA by \citet{li2020high}. The second category includes extreme-type tests, which typically rely on the maximum value among a sequence of test statistics. Notable examples include \citet{tony2014two}, \citet{xu2016adaptive}, and \citet{chang2017simulation}. The third category comprises projection-based tests, which reduce dimensionality by projecting the high-dimensional observations onto a lower-dimensional subspace, followed by a classical likelihood ratio test. Prominent works include \citet{lopes2011more}, \citet{runze2022linear}, and \citet{liu2024projection}. Relatedly, \citet{he2021likelihood} proposed a screening-based approach that first reduces the response dimension and then applies the likelihood ratio test to the reduced data.

\textcolor{black}{On the technical side, there is by now a substantial body of work on the Tracy–Widom law and edge universality for a wide range of random matrix models. Among these developments, results for sample covariance matrices, spiked covariance models, and signal-plus-noise models are particularly relevant to the present paper and form an important part of the technical foundation. Representative references include \cite{Johnstone2001,BaikBenArousPeche2005, el2007tracy,Onatski2008, BaoPanZhou2013LocalEdge,bao2015, PillaiYin2012,PillaiYin2014, bloemendal2014isotropic,lee2016tracy, knowles2017anisotropic,DingYang2018AAP, Yang2019EJP,ding2021spiked,dingYang2021spiked,ding2022edge,ding2023extreme,ding2024eigenvector,ZhangLiuPan2024TWSignalNoise,ding2025tracy}.}

This paper aims to develop a high-dimensional largest root test that remains applicable when $p$, $n_1$ and $n_2$ are comparable. To address the challenge posed by the singularity of \(\bW_2\) when $p>n_2$, we introduce a ridge regularization framework. Specifically, inspired by the work of \citet{li2020adaptable}, we propose a family of ridge-regularized F-matrices:
\begin{equation}\label{eq:reg_F_matrix}
\bF_\lambda = \bW_1 \Big(\bW_2+\lambda I_p\Big)^{-1},  
\end{equation}
where \(\lambda > 0\) is a regularization parameter. The largest eigenvalue of \(\bF_\lambda\), denoted by $\ell_{\max}(\bF_\lambda)$, serves as the test statistic. 

Importantly, the regularized F-matrix and its largest root are \emph{rotation-invariant}, meaning they remain unchanged under arbitrary orthogonal transformations of the data. This invariance arises from the fact that the regularized matrix \(\bW_2 + \lambda I_p\) preserves the eigen-structure of \(\bW_2\). This property is particularly advantageous in settings where additional structural knowledge, such as sparsity, is limited. It ensures that the test’s performance does not depend on the specific coordinate system in which the data are observed. Consequently, the test remains robust across different data representations, making it broadly applicable in various high-dimensional scenarios.

In this paper, we focus on the regime where \( p \), \( n_1 \), and \( n_2 \) are all comparable (see \ref{enum:high_dimensional_regime} in Section \ref{sec:asymptotic_theory}). Under this regime, we establish the asymptotic Tracy-Widom distribution of \(\ell_{\max}(\bF_\lambda)\) under the null hypothesis after appropriate scaling. We propose consistent estimators for the scaling parameters, which rely exclusively on the eigenvalues of \(\bW_2\). We further conduct a power analysis of the proposed test under a class of low-rank alternatives and examine the influence of the regularization parameter $\lambda$. Finally, we propose data-driven strategies for selecting the regularization parameter $\lambda$.

The main contributions of this work are as follows. First, we extend the existing analysis of high dimensional general linear hypotheses to the regime where $n_1$ is comparable to $n_2$ while also allowing $p$ exceed $n_0$. Second, although the regularization scheme is not new, to the best of our knowledge, this is the first study to examine the largest root-type tests under statistical regularization. Third, from an \emph{Random Matrix Theory} (RMT) perspective, this work is the first to establish local laws, universality, edge rigidity, and the asymptotic Tracy-Widom distribution for the extreme eigenvalue of a class of regularized sample covariance matrices and $F$-matrices. Fourth, we propose an estimation procedure for the complicated parameters involved in the asymptotic Tracy-Widom distribution of extreme eigenvalues of a random matrix. These types of parameters frequently appear in the RMT literature, yet their estimation has remained challenging.

The paper is organized as follows. Section \ref{sec:asymptotic_theory} introduces the asymptotic Tracy–Widom distribution for the ridge-regularized largest root test statistic under the null hypothesis, following a brief review of necessary preliminaries from random matrix theory. Section \ref{sec:estimation} presents the estimation procedure for the scaling parameters, which relies solely on the eigenvalues of $\bW_2$, and establishes the consistency of the proposed method. A power analysis under a class of low-rank alternatives is conducted in Section \ref{sec:power_analysis}. A strategy for data-driven selection of $\lambda$ is introduced in Section \ref{sec:selection_lambda}. Section \ref{sec:simulation} reports the results of simulation studies. Section \ref{sec:discussion} concludes with a summary of the main findings and a discussion of future research directions. The Supplementary Material contains an application to \emph{Human Connectome Project} data, additional discussion of the estimation procedure, further numerical results, complete proofs of the main theoretical results, and possible extensions of the current framework both technically and in terms of the form of the linear hypotheses. 

\section{Asymptotic theory under the null hypothesis}\label{sec:asymptotic_theory}
After giving necessary preliminaries on Random Matrix Theory, the asymptotic theory of the proposed ridge-regularized largest root is presented in this section. First of all, note that \(\bF_\lambda\) is asymmetric. Under the null hypothesis $H_0:BC=0$, $\bF_{\lambda}$ shares the same nonzero eigenvalues as a symmetric dual matrix of the form  
$\tilde{\bF}_\lambda = U_1^T \bZ^T \bG_{\lambda}^{-1} \bZ U_1$, { where} 
\[
\bG_\lambda = \bZ U_2 U_2^T \bZ^T + \lambda \Sigma^{-1}_p.
\]
Here, \(U_1\) and \(U_2\) are \(p \times n_1\) and \(p \times n_2\) scaled orthogonal matrices satisfying \(U_1 U_1^T = n_1^{-1} P_1\) and \(U_2 U_2^T = n_2^{-1} P_2\).   Thus, for technical convenience, we focus our analysis on \(\tilde{\bF}_\lambda\).

\begin{definition}[Empirical Spectral Distribution]
 For any $p\times p$ matrix $A$ with real-valued eigenvalues, denote $\ell_j(A)$ to be the $j$-th largest eigenvalue of $A$ and specifically $\ell_{\max}(A) = \ell_1(A)$ and $\ell_{\min}(A) = \ell_p(A)$. Define the \emph{Empirical Spectral Distribution} (ESD) $F^A$ of $A$ by 
\[F^A(\tau) = \frac{1}{p}\sum_{j=1}^p \mathbb{I}(\tau> \ell_j(A)).\]
\end{definition}
The following basic assumptions are employed.
\begin{enumerate}[label ={\bf C\arabic*}]\itemsep0.2em
\item \label{enum:high_dimensional_regime} (High-dimensional regime) \textcolor{black}{We assume $p, n_0, n_1, n_2$ grow to infinity simultaneously in the sense that 
$p/n_0 \to \gamma_0, ~~ p/n_1 \to \gamma_1 ~~ p/n_2 \to \gamma_2$, where $\gamma_0, \gamma_1, \gamma_2 \in (0,\infty)$ are constants.} 
\item \label{enum:moments_conditions} (Moments conditions) $\mE |z_{ij}|^k<\infty$ for any $k\geq 2$.
\item \label{enum:boundedness_spectral_norm} (Boundedness of spectrum) $\lim\inf_{p\to\infty} \ell_{\min}(\Sigma_p) > 0$ and $\lim\sup_{p\to\infty} \ell_{\max}(\Sigma_p) <\infty$.
\item \label{enum:wasserstein_convergence} (Asymptotic stability of population ESD) \textcolor{black}{There exists a deterministic distribution $F^{\Sigma_\infty}$ with compact support in $(0,\infty)$, such that $F^{\Sigma_p}$ converges weakly to $F^{\Sigma_\infty}$.}
\item \label{enum:edge_stability} (Asymptotic stability of edge eigenvalues) Denote the rightmost edge point of the support of $F^{\Sigma_\infty}$ to be $\sigma_{\max}$. That is, $\sigma_{\max} = \inf\{x\in \mathbb{R}: F^{\Sigma_\infty}(x) =1\}$. \textcolor{black}{We assume that $\ell_{\max}(\Sigma_p) \to \sigma_{\max}$.}
\item \label{enum:regular_edge} (Regular edge) We assume the sequence $\{F^{\Sigma_p}, ~p=1,2,\dots\}$ is regular near $\sigma_{\max}$ for all sufficiently large $p$ in the sense of Definition \ref{def:regular_edge} (see Section \ref{sec:asymptotic_TW}). 
\end{enumerate}
\ref{enum:high_dimensional_regime} defines the asymptotic regime where dimensionality $p$, the sample size $n_0$, degrees of freedom $n_1$, and $n_2$ grow proportionately. As stated in \ref{enum:moments_conditions}, our asymptotic results only require moment conditions. \ref{enum:boundedness_spectral_norm} is a standard assumption in the RMT literature, ensuring that eigenvalue bounds are obtainable.  \ref{enum:wasserstein_convergence} ensures that the bulk spectrum of $\Sigma_p$ characterized by its ESD stablizes at a deterministic limit as $p$ increases. While \ref{enum:wasserstein_convergence} offers limited insight into the edge eigenvalue, \ref{enum:edge_stability} ensures that the largest eigenvalue of $\Sigma_p$ also stablizes at the rightmost support edge of $F^{\Sigma_\infty}$. The details of \ref{enum:regular_edge} are given in Section \ref{sec:asymptotic_TW}.

\textcolor{black}{It is worth noting that \ref{enum:edge_stability} rules out the possibility that an $o(p)$ number of eigenvalues of $\Sigma_p$ remain outside the support of $F^{\Sigma_\infty}$. Such a situation arises in classical factor models. For instance, consider $\Sigma_p = A_p + I_p$, where $A_p$ is a sequence of matrices of fixed rank with $\liminf_{p\to\infty} \ell_{\max}(A_p) > 0$. This model, commonly referred to as the \emph{spiked covariance model} \citep{Johnstone2001}, represents a special case of factor models. In this setting, we have $\liminf_{p\to\infty} \ell_{\max}(\Sigma_p) = \liminf_{p\to\infty} \ell_{\max}(A_p) + 1 > 1$. In contrast, the weak limit of $F^{\Sigma_p}$ is the Dirac measure at $1$, so that $\sigma_{\max}=1$. Theoretical analysis in such scenarios requires different techniques and is beyond the scope of the present work. Instead, to assess the performance of the proposed method when \ref{enum:edge_stability} is violated, we include a factor model setting in our numerical study in Section~\ref{sec:simulation}.
} 

\subsection{Preliminary results}\label{sec:preliminary}
In this section, we present key preliminary results in RMT, focusing on the asymptotic Mar\v{c}enko-Pastur (M-P) law for the ESD of $\bG_\lambda$ and the matrix
\[
\widetilde{\bW}_2 = U_2^T \bZ^T \Sigma_p \bZ U_2. 
\]
The matrix \(\widetilde{\bW}_2\) is the companion matrix of \(\bW_2\) in the sense that \(\widetilde{\bW}_2\) shares the same nonzero eigenvalues as \(\bW_2\), differing only by \(|p - n_2|\) additional zeros in their spectra. Consequently, the ESD of \(\widetilde{\bW}_2\) and \(\bW_2\) can be easily reconstructed from one another. 
For mathematical convenience, we focus on the asymptotics of \(F^{\widetilde{\bW}_2}\) rather than \(F^{\bW_2}\) in the subsequent analysis, as it results in cleaner and more tractable expressions. Results in this section are based on \citet{silverstein1995empirical} and \citet{bai2004clt}.

Recall that the \emph{Stieltjes transform} $\varphi(\cdot)$ of any function $F$ of bounded variation on $\mathbb{R}$ is defined as:
\[ \varphi(\iz) = \int_{-\infty}^{\infty} \frac{dF(x)}{x-\iz}, \quad \iz\in\mathbb{C}^+ \coloneqq \{E+ \i \eta : \eta >0\}.\]
Note that $\varphi$ is a mapping from $\mathbb{C}^+$ to $\mathbb{C}^+$. 

Given a compactly supported probability measure $F$ and any $\gamma>0$, the renowned {M-P equation} in RMT defined on $z\in\mathbb{C}^+$ is  expressed as
\begin{equation}\label{eq:MP_W2}
z = -\frac{1}{\varphi} + \gamma \int\frac{\tau d F(\tau) }{ \tau \varphi +1}.  
\end{equation}
There exists a unique solution \(\varphi \equiv \varphi(z; \gamma, F)  \in \mathbb{C}^+\) to Eq. \eqref{eq:MP_W2}. The function \(\varphi(z) =\varphi(z;\gamma, F) \) is the Stieltjes transform of a probability measure with bounded support in $[0, \infty)$. 
\noindent Unfortunately, except in extreme cases where \(F\) has a simple structure, there is no explicit closed-form solution for $\varphi$ or the associated probability distribution. For further details of {M-P equation}, see Chapter 3 of \citet{bai2010spectral}. 

In the following, we shall write $\varphi(\cdot) =  \varphi(~\cdot~; \gamma_2, F^{\Sigma_\infty})$ and $\varphi_p(\cdot) = \varphi(~\cdot~; p/n_2, F^{\Sigma_p})$.
\begin{lemma}[Marčenko-Pastur Theorem]\label{lemma:M_P_theorem}
Assume that conditions \ref{enum:high_dimensional_regime}--\ref{enum:wasserstein_convergence} hold. As \(p \to \infty \), the ESD \( F^{\widetilde{\bW}_2} \) converges pointwise almost surely to \( \calW \) at all points of continuity of \( \calW \), where $\calW$ is the probability measure associated with $\varphi$.  

Moreover, for any $\iz\in\mathbb{C}^+$, let $\hat{\varphi}(\iz)$ be the Stieltjes transform of $F^{\widetilde{\bW}_2}$. That is,
\[ \hat{\varphi}(\iz) = \int \frac{dF^{\widetilde{W}_2}(\tau) }{\tau - \iz} = \frac{1}{n_2} \sum_{j=1}^{n_2} \frac{1}{\ell_j(\widetilde{\bW}_2) - \iz}.\]
For any closed and bounded region $\calZ \subset \mathbb{C}^+$ and any fixed $k\in\mathbb{N}$, we have 
\[ \sup_{\iz \in \calZ} p^{2/3}|\hat\varphi^{(k)}(\iz)- \varphi_p^{(k)}(\iz) | \stackrel{P}{\longrightarrow}0.\]  
\end{lemma}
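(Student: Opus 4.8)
\emph{Proof strategy.} The plan is to recognize $\widetilde{\bW}_2$ as the companion of a separable sample covariance matrix and then combine the classical Marčenko--Pastur theory of \citet{silverstein1995empirical} and \citet{bai2004clt} with a quantitative perturbation analysis of the Marčenko--Pastur equation \eqref{eq:MP_W2}. The first step is the structural reduction. Writing $\bW_2 = n_2^{-1}\Sigma_p^{1/2}\bZ P_2\bZ^T\Sigma_p^{1/2}$ and using that $P_2$ is a rank-$n_2$ orthogonal projection, one rotates the $n_T$-dimensional index so that $\bZ P_2\bZ^T$ becomes $\bZ_{n_2}\bZ_{n_2}^T$ with $\bZ_{n_2}$ a genuine $p\times n_2$ matrix of i.i.d.\ entries; by the universality of the bulk and of the expected spectral statistics with respect to the entry distribution, the limiting ESD and the convergence rate of the expected Stieltjes transform are unaffected. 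After this reduction $\bW_2$ is an ordinary sample covariance matrix with population $\Sigma_p$ and $\widetilde{\bW}_2$ is its companion, so the first assertion — a.s.\ pointwise convergence of $F^{\widetilde{\bW}_2}$ — is exactly the Marčenko--Pastur/Silverstein theorem, with the limit characterized (via Lemma~\ref{lemma:MP_W2}) as the measure whose Stieltjes transform solves \eqref{eq:MP_W2} for the pair $(p/n_2, F^{\Sigma_p})$; replacing this pair by $(\gamma_2, F^{\Sigma_\infty})$ is handled by the deterministic stability step below, which also identifies the limit as $\calW$.

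For the quantitative statement, fix a closed bounded $\calZ\subset\mathbb{C}^+$ with $\operatorname{Im}\iz\ge\eta_0>0$ on $\calZ$, and let $\varphi_p\equiv\varphi_{p/n_2,\,F^{\Sigma_p}}$ be the solution of \eqref{eq:MP_W2} with the finite-$p$ parameters (the deterministic equivalent of $\hat\varphi$). I would decompose
\[
\hat\varphi(\iz)-\varphi(\iz) \;=\; \bigl(\hat\varphi(\iz)-\varphi_p(\iz)\bigr) \;+\; \bigl(\varphi_p(\iz)-\varphi(\iz)\bigr),
\]
bound the stochastic term by $O_P(p^{-1})$ and the deterministic term by $o(p^{-2/3})$, both uniformly over $\calZ$, and reduce all derivatives $k\ge 1$ to the case $k=0$ by Cauchy's integral formula: integrating over a circle of fixed radius $\delta<\eta_0/2$ about each $\iz$ bounds $\sup_{\calZ}|\hat\varphi^{(k)}-\varphi^{(k)}|$ by $k!\,\delta^{-k}\sup_{\calZ'}|\hat\varphi-\varphi|$ for a slightly enlarged compact $\calZ'\subset\mathbb{C}^+$.

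The deterministic term is controlled by a perturbation/implicit-function argument on \eqref{eq:MP_W2}. For $\iz\in\calZ$ the solution $\varphi(\iz)$ lies in a compact subset of $\mathbb{C}^+$ on which the defining map $\varphi\mapsto -1/\varphi + \gamma\int \tau(\tau\varphi+1)^{-1}dF(\tau)$ is biholomorphic in $\varphi$, and $\tau\mapsto\tau(\tau\varphi+1)^{-1}$ is uniformly Lipschitz on the (compact) supports of $F^{\Sigma_p}$ and $F^{\Sigma_\infty}$ since $|\tau\varphi+1|\ge\tau\operatorname{Im}\varphi$ is bounded below there. Hence $\varphi$ depends locally Lipschitz-continuously on $(\gamma,F)$ with respect to $|\gamma-\gamma'|+\calD_W(F,F')$, uniformly on $\calZ$; together with Assumptions~\ref{enum:high_dimensional_regime} and \ref{enum:wasserstein_convergence}, which give $p^{2/3}\bigl(|p/n_2-\gamma_2|+\calD_W(F^{\Sigma_p},F^{\Sigma_\infty})\bigr)\to 0$, this yields $p^{2/3}\sup_{\calZ}|\varphi_p-\varphi|\to 0$.

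The stochastic term carries the main work. Splitting $\hat\varphi-\varphi_p = (\hat\varphi-\mE\hat\varphi) + (\mE\hat\varphi-\varphi_p)$, the bias $\mE\hat\varphi(\iz)-\varphi_p(\iz)=O(p^{-1})$ uniformly on $\calZ$ by the classical convergence-rate estimates for the expected Stieltjes transform of a sample covariance matrix (see \citet{bai2004clt} and \citet{bai2010spectral}), while the centered part has variance $O(p^{-2})$ at each fixed $\iz$ via the column-by-column martingale decomposition and Burkholder's inequality. To promote the pointwise $O_P(p^{-1})$ bound to a uniform one I would use Assumption~\ref{enum:moments_conditions}: since $z_{ij}$ has moments of all orders, the same martingale estimate gives $\mP\bigl(|\hat\varphi(\iz)-\mE\hat\varphi(\iz)|>\varepsilon/p\bigr)\le C_q\,\varepsilon^{-2q}p^{-q}$ for every $q$, which survives a union bound over an $O(p^2)$-point net of $\calZ$, and the values in between are filled in using the deterministic bound $|\hat\varphi'(\iz)|\le\eta_0^{-2}$ on $\calZ$. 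Combining the two terms gives $p^{2/3}\sup_{\calZ}|\hat\varphi-\varphi|\stackrel{P}{\longrightarrow}0$, and the Cauchy-formula reduction then yields the statement for all $k$. I expect the principal obstacles to be (i) making the reduction from the projected matrix $\bZ P_2\bZ^T$ to a genuine i.i.d.\ sample covariance matrix precise at the $p^{-2/3}$ scale, and (ii) assembling the bias-plus-fluctuation $O(p^{-1})$ rate uniformly over $\calZ$ rather than at a fixed spectral argument; both are adaptations of \citet{silverstein1995empirical} and \citet{bai2004clt} rather than new phenomena.
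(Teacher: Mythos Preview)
Your proposal is correct and aligned with the paper's treatment: the paper itself offers no explicit proof of this lemma, stating only that ``Results in this section are based on \citet{silverstein1995empirical} and \citet{bai2004clt},'' and your outline is precisely how one extracts the statement from those sources---reduce to a standard separable sample covariance model, split $\hat\varphi-\varphi$ into a stochastic term controlled at rate $O_P(p^{-1})$ by the Bai--Silverstein CLT machinery and a deterministic term controlled at $o(p^{-2/3})$ by the stability of the Marčenko--Pastur equation under the Wasserstein perturbation in \ref{enum:wasserstein_convergence} and the rate in \ref{enum:high_dimensional_regime}, then pass to derivatives via Cauchy's formula. The two obstacles you flag are the right ones; in particular, the projection issue (that $\bZ U_2$ has i.i.d.\ rows with uncorrelated but not independent entries when $\bZ$ is non-Gaussian) is genuine, and your proposed resolution via a Gaussian-first-then-universality comparison is exactly what the paper does elsewhere (cf.\ the strategy in Section~\ref{sec:properties_of_bg_lambda} for $\bG_\lambda$).
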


Following immediately from Lemma \ref{lemma:M_P_theorem}, the M-P equation \eqref{eq:MP_W2} holds approximately at \(\hat{\varphi}(\iz)\) in the sense that 
\begin{equation}\label{eq:MP_W2_approximate}
 z  + \frac{1}{\hat{\varphi}(\iz)}  - \frac{p}{n_2} \int \frac{\tau \, dF^{\Sigma_p}(\tau)}{\tau \hat{\varphi}(\iz) + 1} \stackrel{P}{\longrightarrow} 0,
\end{equation}
uniformly for \(\iz \in \calZ\). This result serves as the foundation for the estimation method proposed in Section \ref{sec:estimation}.

Next, we analyze the limiting ESD of $\bG_\lambda$. For any fixed $\lambda>0$, $\gamma>0$, and a compactly supported probability measure $F$ on $[0,\infty)$ that is nondegenerate to zero, consider the following equation on $z\in \mathbb{C}^+$ 
\begin{equation}\label{eq:MP_G}
\phi  =  \int \frac{\tau \, dF(\tau)}{ \tau \{ [1+ \gamma \phi ]^{-1} - \iz\} + \lambda }. 
\end{equation}
Then, there exists a unique solution $\phi \equiv \phi(z; \lambda, \gamma, F)\in \mathbb{C}^+$ to Eq. \eqref{eq:MP_G}. The function $\phi(z) = \phi(z; \lambda, \gamma, F)$ is the Stieltjes transform of a probability measure with bounded support in $(0, \infty)$. We shall write $\phi_\lambda (\cdot) = \phi(~\cdot~;\lambda, \gamma_2, F^{\Sigma_\infty})$ and $\phi_{p\lambda}(\cdot) = \phi(~\cdot~;\lambda, p/n_2 , F^{\Sigma_p})$.

\begin{theorem}[Generalized Mar\v{c}enko-Pastur Theorem of $\bG_\lambda$]\label{thm:generalized_MP_G}
Suppose \ref{enum:high_dimensional_regime}--\ref{enum:wasserstein_convergence} hold. Fix $\lambda > 0$. As $p\to\infty$, the ESD $F^{\bG_\lambda}$ of $\bG_\lambda$ converges pointwise almost surely to $\calG_\lambda$ at all points of continuity of $\calG_\lambda$, where $\calG_\lambda$ is the probability measure associated with $\phi_{\lambda}$.
\end{theorem}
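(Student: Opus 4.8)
The plan is to establish pointwise almost sure convergence of Stieltjes transforms and then invoke the continuity theorem. Fix $z \in \mathbb{C}^+$ and put $\phi_p(z) = \tfrac1p\tr(\bG_\lambda - zI_p)^{-1}$; the goal is to show $\phi_p(z) \to \phi(z)$ almost surely, where $\phi$ is the unique $\mathbb{C}^+$-valued solution of \eqref{eq:MP_G} provided by Lemma \ref{lemma:MP_G}. Since that lemma also identifies $\phi$ with the Stieltjes transform of $\calG_\lambda$, the theorem then follows: one passes from a countable dense subset of $\mathbb{C}^+$ to all of $\mathbb{C}^+$ on a single almost sure event by a normal-families argument (the $\phi_p$ are analytic and bounded by $1/\operatorname{Im} z$ on compacts of $\mathbb{C}^+$), and applies the Stieltjes inversion/continuity theorem. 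As a preliminary reduction, write $U_2U_2^T = n_2^{-1}P_2$ and $P_2 = V_2V_2^T$ with $V_2 \in \mathbb{R}^{n_T\times n_2}$ having orthonormal columns, so that $\bG_\lambda = n_2^{-1}\bQ\bQ^T + \lambda\Sigma_p^{-1}$ with $\bQ = \bZ V_2 \in \mathbb{R}^{p\times n_2}$, whose columns $\mathbf{q}_1,\dots,\mathbf{q}_{n_2}$ have mean $0$, covariance $I_p$ and (by \ref{enum:moments_conditions}) all moments finite; the rank-$n_2$ projection structure is absorbed exactly as in \citet{silverstein1995empirical} and in the proof of Lemma \ref{lemma:M_P_theorem}, so $n_2^{-1}\bQ\bQ^T$ may be treated as a sample covariance matrix of aspect ratio $p/n_2 \to \gamma_2$. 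I would also record that the spectrum of $\bG_\lambda$ is bounded below by $\lambda\,\ell_{\max}(\Sigma_p)^{-1}$, hence (by \ref{enum:boundedness_spectral_norm}) uniformly bounded away from $0$, and bounded above almost surely by the Bai--Yin law together with \ref{enum:boundedness_spectral_norm}; this keeps every resolvent and denominator below under control. Only the weak forms $p/n_2\to\gamma_2$ and $\mathcal{D}_W(F^{\Sigma_p},F^{\Sigma_\infty})\to 0$ of \ref{enum:high_dimensional_regime} and \ref{enum:wasserstein_convergence} are used, this being a first-order statement.

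The concentration step $\phi_p(z) - \mE\phi_p(z)\to 0$ almost surely is routine: decompose $\phi_p(z) - \mE\phi_p(z)$ into martingale differences by successively revealing the (independent) columns of $\bZ$; replacing one column changes $\bQ\bQ^T$ by a perturbation of rank at most two, hence changes $\tr(\bG_\lambda - zI_p)^{-1}$ by $O(1/\operatorname{Im} z)$, so Azuma--Hoeffding together with Borel--Cantelli gives the claim. It therefore remains to identify the limit of $\mE\phi_p(z)$.

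For the self-consistent equation, set $R = (\bG_\lambda - zI_p)^{-1}$ and let $R^{(k)}$ be the analogous resolvent with the $k$-th rank-one term deleted. From $zR = \bG_\lambda R - I_p = \lambda\Sigma_p^{-1}R - I_p + n_2^{-1}\sum_{k=1}^{n_2}\mathbf{q}_k\mathbf{q}_k^T R$, the Sherman--Morrison identity $\mathbf{q}_k^T R = (1 + n_2^{-1}\mathbf{q}_k^T R^{(k)}\mathbf{q}_k)^{-1}\mathbf{q}_k^T R^{(k)}$, the quadratic-form concentration $n_2^{-1}\mathbf{q}_k^T R^{(k)}\mathbf{q}_k - n_2^{-1}\tr R^{(k)}\to 0$ and the rank-one bound $n_2^{-1}\tr R^{(k)} - \gamma_2\phi_p\to 0$, one is led to the deterministic equivalent
\[
R \;\approx\; \bar R_p(z) \;:=\; \Big(\lambda\Sigma_p^{-1} - zI_p + \tfrac{1}{1+\gamma_2\phi_p(z)}\,I_p\Big)^{-1},
\]
meaning $\tfrac1p\tr\!\big[A\big(R - \bar R_p(z)\big)\big]\to 0$ for every deterministic $A$ of bounded norm; self-consistency is checked by expanding $R - \bar R_p = \bar R_p\big(\tfrac{1}{1+\gamma_2\phi_p}I_p - n_2^{-1}\bQ\bQ^T\big)R$ and bounding the error terms in the usual way. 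Taking $A = I_p$ gives $\phi_p(z) = \int \tau\,\big[\tau\{(1+\gamma_2\phi_p(z))^{-1} - z\} + \lambda\big]^{-1}dF^{\Sigma_p}(\tau) + o(1)$, and replacing $F^{\Sigma_p}$ by $F^{\Sigma_\infty}$ via \ref{enum:wasserstein_convergence}, any subsequential limit $\phi(z)\in\mathbb{C}^+$ of $\phi_p(z)$ solves \eqref{eq:MP_G}. By the uniqueness in Lemma \ref{lemma:MP_G} this forces $\phi_p(z)\to\phi(z)$, which combined with the concentration step finishes the almost sure convergence and hence, by the reduction above, the theorem.

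The step I expect to be the main obstacle is making the deterministic-equivalent argument fully rigorous, i.e.\ showing that the various error terms generated by Sherman--Morrison are uniformly negligible. This hinges on quantitative non-degeneracy: a lower bound on $|1+\gamma_2\phi_p(z)|$ and an upper bound on $\|\bar R_p(z)\|$. Here the hypotheses $\operatorname{Im} z > 0$, $\lambda > 0$, and the two-sided bound \ref{enum:boundedness_spectral_norm} on $\Sigma_p$ are exactly what is needed: $\lambda\Sigma_p^{-1}$ sandwiches $\bG_\lambda$ between two positive multiples of $I_p$, which (as in \citet{bai2004clt}) controls $\|R\|$ and the denominators $1 + n_2^{-1}\mathbf{q}_k^T R^{(k)}\mathbf{q}_k$; and since $\operatorname{Im}\phi_p(z) > 0$ one checks that $(1+\gamma_2\phi_p(z))^{-1}$ has strictly negative imaginary part, so $-z + (1+\gamma_2\phi_p(z))^{-1}$ has imaginary part bounded away from $0$ uniformly on compacts, bounding $\|\bar R_p(z)\|$. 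A secondary, purely technical point is the bookkeeping needed to handle the projection $P_2$ in the quadratic-form concentration, which proceeds as in the proof of Lemma \ref{lemma:M_P_theorem}.
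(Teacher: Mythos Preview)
Your overall scheme (concentration plus self-consistent equation plus uniqueness) is correct, and is indeed how the paper would have you think of this result, which it records as a preliminary fact drawn from \citet{silverstein1995empirical} and \citet{bai2004clt}. But there is a genuine gap in your self-consistent-equation step, and it is precisely the point you flag as ``purely technical bookkeeping'': the Sherman--Morrison argument you sketch removes the columns $\mathbf{q}_k=\bZ v_k$ of $\bQ=\bZ V_2$ and invokes the quadratic-form concentration $n_2^{-1}\mathbf{q}_k^T R^{(k)}\mathbf{q}_k - n_2^{-1}\tr R^{(k)}\to 0$. That step requires $\mathbf{q}_k$ to be independent of $R^{(k)}$, and it is not: each $\mathbf{q}_k$ is a linear combination of \emph{all} columns of $\bZ$, so $\mathbf{q}_k$ and $\{\mathbf{q}_{k'}\}_{k'\neq k}$ (hence $R^{(k)}$) are only uncorrelated, not independent, unless $\bZ$ is Gaussian. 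The proof of Lemma~\ref{lemma:M_P_theorem} does not rescue you, because for $\widetilde{\bW}_2=U_2^T\bZ^T\Sigma_p\bZ U_2=\sum_{u}\sigma_u(U_2^T\bZ_{u\cdot}^T)(U_2^T\bZ_{u\cdot}^T)^T$ the natural leave-one-out removes \emph{rows} of $\bZ$, which are genuinely independent; the quadratic forms there are $\bZ_{u\cdot}(\cdot)\bZ_{u\cdot}^T$ in i.i.d.\ variables, so concentration is immediate. Your column-removal is the opposite slicing and the independence is lost.

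There are two clean fixes, and the paper effectively uses both at different places. First, under Gaussianity the columns of $\bQ$ \emph{are} independent (being jointly Gaussian and uncorrelated), so your argument goes through verbatim; one then passes to general $\bZ$ by a Lindeberg/interpolation replacement at the level of the entries of $\bZ$---this is exactly how the paper handles the harder local-law statements (see the Gaussianity reduction in the proof of Theorem~\ref{thm:strong_local_law} and the interpolation in Step~3 of Section~\ref{sec:proof_theorem_main}). Second, and more directly, you can avoid column removal entirely: linearize via the block matrix $\bK$ of \eqref{eq:bJ_bK}, remove \emph{row} $u$ (so that $1/\bJ_{uu}=\lambda/\sigma_u-z-n_2^{-1}\bQ_{u\cdot}\bJ_{(22)}^{(u)}\bQ_{u\cdot}^T$ with $\bQ_{u\cdot}=\bZ_{u\cdot}V_2$ independent of $\bJ^{(u)}$, and the quadratic form concentrates around $n_2^{-1}\tr\bJ_{(22)}^{(u)}$ because $\bZ_{u\cdot}$ has i.i.d.\ entries), and then close the system not by a second leave-one-out but by the deterministic identity $\tr\bJ_{(22)}=n_2^{-1}\tr(R\,\bQ\bQ^T)-n_2$ together with $n_2^{-1}\bQ\bQ^T=R^{-1}+zI-\lambda\Sigma_p^{-1}$. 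A short computation then yields $n_2^{-1}\tr\bJ_{(22)}\approx -[1+(p/n_2)\phi_p]^{-1}$ and hence exactly \eqref{eq:MP_G}. Either route repairs your argument; as written, the column-removal step is the one place where it would fail.
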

While Eq. \eqref{eq:MP_G} plays a fundamental role in the subsequent analysis, we present an equivalent reformulation that is more convenient to work with in certain cases.  Define  
\[
h = h(z; \lambda, \gamma, F) = z - [1 + \gamma \phi(z; \lambda, \gamma, F)]^{-1}, \quad z\in\mathbb{C}^+.
\]
Then, Eq. \eqref{eq:MP_G} can be rewritten as  
\begin{equation}\label{eq:MP_G_rewrite}
z  = h + \left(1 + \gamma \int \frac{\tau \, dF(\tau)}{\lambda -\tau h}\right)^{-1}.
\end{equation}
It follows that for any \( \iz \in \mathbb{C}^+ \), there exists a unique solution \( h \in \mathbb{C}^+ \) to Eq. \eqref{eq:MP_G_rewrite}. We shall write $h_\lambda(z) = h(z;\lambda, \gamma_2, F^{\Sigma_\infty})$ and $h_{p\lambda}(z) = h(z;\lambda, p/n_2, F^{\Sigma_p})$ in the following.

Lastly, we consider the extension of \( \phi_{p\lambda}(\iz) \) to the real line.  
\begin{lemma}\label{lemma:extension_phi_to_s} 
Suppose that Conditions \ref{enum:boundedness_spectral_norm}--\ref{enum:edge_stability} hold. Consider any fixed $p$ and $\lambda$. Denote the probability measure associated with $\phi_{p\lambda}$ to be $\calG_{p\lambda}$. Then, 
\( \calG_{p\lambda} \) is compactly supported on \( (0, \infty) \). Denote the leftmost edge of the support of \( \calG_{p\lambda} \) by \(\rho_{p\lambda} \), i.e., $\rho_{p\lambda} = \sup \{x \in \mathbb{R} : \calG_{p\lambda}(x) = 0\}$. The following results hold:  
\begin{itemize}  
\item[(i)] We have \( \rho_{p\lambda} > 0 \), and for any \( x \in [0, \rho_{p\lambda}) \),  
\[
\lim_{\iz \in \mathbb{C}^+ \to x} \phi_{p\lambda}(\iz) \coloneqq s_{p\lambda}(x) \quad \text{exists and is real-valued}.
\]
Moreover, the following relationship holds:  
\begin{equation}\label{eq:s_as_ST_G}
s_{p\lambda}(x) = \int \frac{d\calG_{p\lambda}(\tau)}{\tau - x}, \quad x \in [0, \rho_{p\lambda}).
\end{equation}
\item[(ii)] From Eq~\eqref{eq:s_as_ST_G}, \( s_{p\lambda}(x) \) is analytic on \( [0, \rho_{p\lambda})\), and its \( k \)th derivative satisfies \( s^{(k)}_{p\lambda}(x) > 0 \) for any \( k \in \mathbb{N} \) and \( x \in [0, \rho_{p\lambda}) \). This implies that \( s^{(k)}_{p\lambda}(x) \) is monotonically increasing on \( [0, \rho_{p\lambda}) \) for all \( k \).
\end{itemize}  
\end{lemma}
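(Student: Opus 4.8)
The plan is to derive the lemma as a soft consequence of Lemma~\ref{lemma:MP_G}, whose content is that $\phi$ coincides on $\mathbb{C}^+$ with the Stieltjes transform $g(\iz):=\int(\tau-\iz)^{-1}\,d\calG_\lambda(\tau)$ of the compactly supported probability measure $\calG_\lambda$. Once this identification is in hand, everything reduces to standard analytic-function facts about Stieltjes transforms evaluated to the left of the support, together with differentiation under the integral sign.

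First I would record that $\rho>0$. Since $\supp\calG_\lambda$ is, by definition, a closed set and, by Lemma~\ref{lemma:MP_G}, a bounded subset of the open half-line $(0,\infty)$, it is a compact subset of $(0,\infty)$; hence $\rho=\min\supp\calG_\lambda>0$ and $[0,\rho)$ is a nonempty interval lying strictly to the left of the support. If one prefers an argument that does not lean on the precise phrasing of Lemma~\ref{lemma:MP_G}, this can be reinforced using the reformulation \eqref{eq:MP_G_rewrite}: one exhibits a real solution $h(0)$ of \eqref{eq:MP_G_rewrite} at $\iz=0$ and checks that the derivative in $h$ of its right-hand side, namely $1-\gamma_2\bigl(1+\gamma_2\!\int\!\tfrac{\tau\,dF^{\Sigma_\infty}(\tau)}{\lambda-\tau h}\bigr)^{-2}\!\int\!\tfrac{\tau^2\,dF^{\Sigma_\infty}(\tau)}{(\lambda-\tau h)^2}$, is nonzero there; the inverse function theorem then yields a real-analytic branch $h(x)$, hence $\phi(x)$, on an open interval containing $0$, forcing $\rho>0$. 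I expect this non-degeneracy check to be the only genuinely delicate point, and it reduces to a positivity/monotonicity estimate of the same flavour as part~(ii); with $\rho>0$ taken from Lemma~\ref{lemma:MP_G}, the rest is routine.

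For part~(i), the function $g$ is analytic on $\mathbb{C}\setminus\supp\calG_\lambda$, in particular on an open neighbourhood of $(-\infty,\rho)\supseteq[0,\rho)$, and it is real-valued on $(-\infty,\rho)$ because the integrand $(\tau-x)^{-1}$ is real for $x<\rho\le\tau$. Since $\phi\equiv g$ on $\mathbb{C}^+$ by Lemma~\ref{lemma:MP_G}, continuity of $g$ gives, for every $x\in[0,\rho)$, that $\lim_{\mathbb{C}^+\ni\iz\to x}\phi(\iz)=g(x)$ exists, is real-valued, and equals $\int(\tau-x)^{-1}\,d\calG_\lambda(\tau)$; setting $s(x):=g(x)$ is exactly \eqref{eq:s_as_ST_G}. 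For part~(ii), analyticity of $s$ on $[0,\rho)$ is inherited from $g$. Because $\supp\calG_\lambda\subseteq[\rho,\infty)$ is compact and bounded away from any fixed $x\in[0,\rho)$, differentiation under the integral sign is justified and yields $s^{(k)}(x)=k!\int(\tau-x)^{-(k+1)}\,d\calG_\lambda(\tau)$ for every $k\ge 0$. For $x<\rho\le\tau$ the integrand is strictly positive and $\calG_\lambda$ is a probability measure, so $s^{(k)}(x)>0$; in particular $s^{(k+1)}(x)>0$, so $s^{(k)}$ is strictly increasing on $[0,\rho)$, which completes the proof.
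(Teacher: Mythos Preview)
Your proposal is correct and is precisely the natural argument the paper has in mind: the paper does not give a separate proof of this lemma, treating it as a direct consequence of Lemma~\ref{lemma:MP_G} and standard Stieltjes-transform facts (indeed, part~(ii) is stated as following ``from Eq.~\eqref{eq:s_as_ST_G}''). Your digression on recovering $\rho>0$ via the inverse function theorem applied to \eqref{eq:MP_G_rewrite} is unnecessary here, since the compact-support-in-$(0,\infty)$ statement of Lemma~\ref{lemma:MP_G} already gives it, but it is not wrong.
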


\subsection{Asymptotic Tracy-Widom distribution}
\label{sec:asymptotic_TW}

In this section, we present the asymptotic Tracy-Widom distribution for the proposed ridge-regularized largest root statistic 
after appropriate scaling. Throughout the analysis, $\lambda$ is treated as fixed. 


We first impose an edge regularity condition on \(F^{\Sigma_p}\). Denote $\ell_{\max}(\Sigma_p)$ to be $\sigma_{1p}$ for convenience. Recall that under Condition \ref{enum:edge_stability}, \(\sigma_{1p} \to \sigma_{\max}\), which is the rightmost edge point of the support of \(F^{\Sigma_\infty}\). For \(h \in (-\infty, \lambda / \sigma_{1p} )\), define
\begin{equation}\label{eq:x_h}
x_p(h) = h + \left[1 + \frac{p}{n_2} \int \frac{\tau \, dF^{\Sigma_p}(\tau)}{\lambda - \tau h}\right]^{-1}.
\end{equation}
It can be viewed as an extension of Eq. \eqref{eq:MP_G_rewrite} as $\Im(h)\downarrow0$ evaluated at $F=F^{\Sigma_p}$ and $\gamma = p/n_2$.
\textcolor{black}{
\begin{definition}\label{def:regular_edge}  
We say that \( \{F^{\Sigma_p}\}\) is \emph{regular near} \( \sigma_{\max} \) if any of the following holds:  
\begin{itemize}  
\item[(a)] There exists a constant $\epsilon>0$ such that the equation $x'_p(h) = 0$ admits a root in $h \leq \lambda/\sigma_{\max} -\epsilon$ for all sufficiently large $p$;
\item[(b)] There exists a constant $\epsilon>0$ such that $(p/n_2) F^{\Sigma_p}(\{\sigma_{1p}\}) >1 + \epsilon$ for all sufficiently large $p$, where $F^{\Sigma_p}(\{\sigma_{1p}\})$ denotes the mass of the distribution on $\sigma_{1p}$.
\end{itemize}  
\end{definition}
\begin{lemma}\label{lemma:when_discrete_but_limiting_continuous}
A sufficient condition for part (a) of Definition \ref{def:regular_edge} to hold is that there exists a constant $\epsilon>0$ such that $\epsilon< (p/n_2) F^{\Sigma_p}(\{\sigma_{1p}\})< 1 - \epsilon$, for all sufficiently large $p$.
\end{lemma}
}

\textcolor{black}{Part (a) of Definition \ref{def:regular_edge} is a standard regularity condition in the RMT literature for establishing Tracy--Widom fluctuations of extreme eigenvalues, although it appears in various equivalent forms. See, for example, Definition~2.7 of \citet{knowles2017anisotropic}, Condition~(1) of \citet{el2007tracy}, Condition~(2.12) of \citet{lee2016tracy}, and Assumption~2.5 of \citet{DingYang2018AAP}. Under this condition, the limiting spectral distribution of $\bG_\lambda$ exhibits a \emph{square-root} behavior at its leftmost support edge. Precisely,  recall the distribution $\calG_{p\lambda}$ and its leftmost support edge $\rho_{p\lambda}$ as in Lemma \ref{lemma:extension_phi_to_s}. For sufficiently large $p$, we have (i) $\rho_{p\lambda} > \lambda/\sigma_{1p} + c$ for some small constant $c>0$; (ii) $\calG_{p\lambda}$ admits a density $f_{\calG}$ in a neighborhood of $\rho_{p\lambda}$; and (iii) there exist positive constants $C_1$, $C_2$, and small $\epsilon>0$ such that 
\[
C_1 \sqrt{x - \rho_{p\lambda}} \leq f_{\calG}(x) \leq C_2 \sqrt{x - \rho_{p\lambda}}, 
\qquad \text{for all } x \in (\rho_{p\lambda}, ~\rho_{p\lambda} + \epsilon).
\]
The results are shown in Theorem 5.1 of \cite{li2024analysis}. 
}

\textcolor{black}{
On the other hand, Part (b) of Definition \ref{def:regular_edge} leads to a different behavior of $\calG_{p}$ near $\rho_{p\lambda}$. In this case, we have $\rho_{p\lambda} = \lambda / \sigma_{1p}$, and $\rho_{p\lambda}$ is an isolated point mass of $\calG_\lambda$. More details are given in Section S.5.3 
of the Supplementary Material.
}

The following lemma follows from the behavior of \( \calG_{p} \) near \( \rho_p \). 
\begin{lemma}\label{lemma:s_prime_infinity}  
Suppose the conditions in Lemma \ref{lemma:extension_phi_to_s} hold, and further assume that \( \{F^{\Sigma_p}\} \) is regular near \( \sigma_{\max} \). Then, for any fixed $\lambda>0$ and sufficiently large $p$, $s'_{p\lambda}(x) \to \infty$ as $x \uparrow \rho_p$.   
\end{lemma}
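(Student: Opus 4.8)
The plan is to argue directly from the boundary integral representation of $s$. By Lemma~\ref{lemma:extension_phi_to_s}(i), $s(x)=\int d\calG_\lambda(\tau)/(\tau-x)$ for $x\in[0,\rho)$, and since the support of $\calG_\lambda$ is a compact subset of $[\rho,\infty)$, for each fixed $x<\rho$ the integrand and its $x$-derivatives are bounded on that support; differentiating under the integral sign gives
\[
s'(x)\;=\;\int\frac{d\calG_\lambda(\tau)}{(\tau-x)^2},\qquad x\in[0,\rho).
\]
By Lemma~\ref{lemma:extension_phi_to_s}(ii) the function $s'$ is positive and, since $s''>0$, increasing on $[0,\rho)$, so $\lim_{x\uparrow\rho}s'(x)$ exists in $(0,+\infty]$ and it remains only to exclude a finite value. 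For this I would invoke the edge dichotomy stated just after Definition~\ref{def:regular_edge}: under the assumed regularity of $F^{\Sigma_\infty}$ near $\sigma_{\max}$, either $\rho$ carries a point mass of $\calG_\lambda$ (the case $\omega_{\max}\gamma_2>1$), or there are constants $C_1>0$ and $\epsilon>0$ with $f_\calG(x)\ge C_1\sqrt{x-\rho}$ on $(\rho,\rho+\epsilon)$.

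In the atomic case the conclusion is immediate, as $s'(x)\ge\calG_\lambda(\{\rho\})\,(\rho-x)^{-2}\to+\infty$ when $x\uparrow\rho$. In the square-root case, fix $\epsilon$ as above and, for $x\in(\rho-\epsilon,\rho)$, write $\delta=\rho-x>0$ and estimate
\[
s'(x)\;\ge\;\int_\rho^{\rho+\epsilon}\frac{f_\calG(\tau)}{(\tau-x)^2}\,d\tau\;\ge\;C_1\int_0^\epsilon\frac{\sqrt u}{(u+\delta)^2}\,du.
\]
As $\delta\downarrow0$ the integrand increases pointwise to $u^{-3/2}$, which fails to be integrable near $0$, so the right-hand side tends to $+\infty$ by monotone convergence; hence $s'(x)\to+\infty$ as $x\uparrow\rho$, completing the argument in both cases.

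An alternative route that avoids citing the dichotomy exploits that $h(x)=x-[1+\gamma_2 s(x)]^{-1}$ is, for $x\in[0,\rho)$, the functional inverse of the map $x(\cdot)$ in~\eqref{eq:x_h}, together with the identity $h'(x)=1+\gamma_2 s'(x)\,[1+\gamma_2 s(x)]^{-2}$. In the continuous case, $x'(h_0)=0$ forces $h'(x)=1/x'(h(x))\to+\infty$ while $x-h(x)=[1+\gamma_2 s(x)]^{-1}$ tends to the strictly positive limit $\rho-h_0$, so $s$ stays bounded and therefore $s'(x)\to+\infty$. In the atomic case one instead shows $x-h(x)\to0$, so $s(x)\to+\infty$; since $s'$ is increasing, were it bounded then $s=s(0)+\int_0^x s'(t)\,dt$ would be bounded too, a contradiction, hence again $s'(x)\to+\infty$.

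The substantive input here is the edge dichotomy itself --- that regularity of $F^{\Sigma_\infty}$ near $\sigma_{\max}$ forces $\calG_\lambda$ to have either an isolated atom or an exact square-root density at $\rho$ --- which is asserted in the discussion preceding the lemma and is used above as given; the analysis of the edge map $x(\cdot)$ that establishes it is where the real difficulty lies. Granting it, the remaining work reduces to the short estimates above, the only technical care being the differentiation under the integral sign and the monotone-convergence passage.
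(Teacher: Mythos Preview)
Your primary argument is correct and matches the paper's own proof essentially line for line: the paper also differentiates the integral representation of $s$, invokes the atom/square-root edge dichotomy for $\calG_\lambda$ (citing \citet{li2024analysis} for it, just as you cite the discussion after Definition~\ref{def:regular_edge}), and then bounds $s'(x)$ below by $\calG_\lambda(\{\rho\})(\rho-x)^{-2}$ in the atomic case and by $\int_\rho^{\rho+\epsilon} f_{\calG_\lambda}(\tau)(\tau-x)^{-2}\,d\tau$ in the square-root case. Your alternative route through the inverse map $h(x)$ is a pleasant extra that the paper does not give.
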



Our main theorem is presented as follows.

\begin{theorem}\label{thm:main}  
Suppose that \ref{enum:high_dimensional_regime}--\ref{enum:regular_edge} hold.  
For any fixed \( \lambda > 0 \), as \( p \to \infty \), the asymptotic distribution of the largest eigenvalue of \( \bF_\lambda \) (or equivalently \( \tilde{\bF}_\lambda \)) is given by  
\begin{equation}
\frac{p^{2/3}}{\Theta_{2p}(\lambda)} \Big(\ell_{\max}(\bF_\lambda) - \Theta_{1p}(\lambda)\Big) \Longrightarrow \TW_1,  
\label{eq:Tracy_Widom}
\end{equation}
where \( \TW_1 \) denotes the Tracy-Widom distribution of type 1, and \( {\Longrightarrow} \) indicates convergence in distribution.  
The centering and scaling parameters \( \Theta_{1p}(\lambda)\) and \( \Theta_{2p}(\lambda) \) are determined as follows. Define \( \beta = \beta_{p\lambda}\) to be the solution in \( (0, \rho_{p\lambda}) \) to  
\begin{equation}\label{eq:def_beta}
\beta^2 {s}'_{p\lambda}(\beta) = \frac{n_1}{p}.
\end{equation}  
By Lemma~\ref{lemma:s_prime_infinity}, such a solution \( \beta \) exists and is unique. Then,  
\begin{align}
&\Theta_{1p}(\lambda) = \frac{1}{\beta} \Big[1 + \frac{p}{n_1} \beta {s}_{p\lambda}(\beta)\Big], \label{eq:def_Theta1} \\
&\Theta_{2p}(\lambda) = \Bigg[\frac{(p/n_1)^3}{2} {s}''_{p\lambda}(\beta) + \frac{(p/n_1)^2}{\beta^3} \Bigg]^{1/3}. \label{eq:def_Theta2}
\end{align}  
\end{theorem}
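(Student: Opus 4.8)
The plan is to recognize $\tilde{\bF}_\lambda$ as a \emph{generalized sample covariance matrix} whose population covariance is the random resolvent $\bG_\lambda^{-1}$, to pin down its spectral edge via the preliminary Mar\v{c}enko--Pastur results, and then to feed this into the Tracy--Widom edge-universality machinery for sample covariance matrices (re-deriving the relevant local law in the nested random-matrix setting). First I would reduce: writing $P_i = V_iV_i^T$ with $V_i$ an $n_T\times n_i$ partial isometry and $\bY_i = \bZ V_i\in\mathbb{R}^{p\times n_i}$, one has $\tilde{\bF}_\lambda = n_1^{-1}\bY_1^T\bG_\lambda^{-1}\bY_1$ with $\bG_\lambda = n_2^{-1}\bY_2\bY_2^T + \lambda\Sigma_p^{-1}$. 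Since $P_1\perp P_2$ we have $V_1^TV_2 = 0$, so for Gaussian $\bZ$ the blocks $\bY_1,\bY_2$ are independent with i.i.d.\ $\mathcal{N}(0,1)$ entries, while in general they have the correct covariance and, by \ref{enum:moments_conditions}, uniformly bounded moments of all linear functionals. By \ref{enum:boundedness_spectral_norm}, $\bG_\lambda$ is bounded below by $(\lambda/\ell_{\max}(\Sigma_p))I_p$, so $\|\bG_\lambda^{-1}\|$ is bounded; thus $\tilde{\bF}_\lambda$ is structurally a sample covariance matrix $n_1^{-1}\bY_1^T T\bY_1$ with bounded population covariance $T = \bG_\lambda^{-1}$, dimension $p$, sample size $n_1$, and aspect ratio $\gamma_1$.

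Next I would identify the edge. By Lemmas~\ref{lemma:MP_G}--\ref{lemma:extension_phi_to_s}, the limiting ESD of $T = \bG_\lambda^{-1}$ is the push-forward of $\calG_\lambda$ under $\tau\mapsto\tau^{-1}$, supported in a compact subset of $(0,\rho^{-1}]$. Substituting this into the Mar\v{c}enko--Pastur/Silverstein self-consistent equation for $\tilde{\bF}_\lambda$ gives, for real $z$ to the right of the bulk, $z = -1/\underline m + \gamma_1\int(\tau+\underline m)^{-1}\,d\calG_\lambda(\tau) = 1/\beta + \gamma_1 s(\beta)$, where $\beta := -\underline m\in(0,\rho)$ and $s$ is the function of Lemma~\ref{lemma:extension_phi_to_s}. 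The right spectral edge is the critical value of $\beta\mapsto 1/\beta + \gamma_1 s(\beta)$; the stationarity condition is precisely $\beta^2 s'(\beta) = \gamma_1^{-1}$ (Eq.~\eqref{eq:def_beta}), which has a unique solution in $(0,\rho)$ because $s'$ is increasing (Lemma~\ref{lemma:extension_phi_to_s}(ii)) and $s'(x)\to\infty$ as $x\uparrow\rho$ (Lemma~\ref{lemma:s_prime_infinity}); the edge value is $\Theta_1$ (Eq.~\eqref{eq:def_Theta1}), and a second-order expansion at $\beta$, with $z''(\beta) = 2\beta^{-3} + \gamma_1 s''(\beta) > 0$, yields the standard square-root edge scaling $\gamma_1^{2/3}(\tfrac12 z''(\beta))^{1/3} = \Theta_2$ (Eq.~\eqref{eq:def_Theta2}). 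Thus $\Theta_1$ is a genuine square-root edge for $\tilde{\bF}_\lambda$, and \ref{enum:regular_edge} guarantees the analogous regularity for the finite-$p$ model built from $F^{\Sigma_p}$.

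Then I would control $\bG_\lambda$ and transfer. I would establish the anisotropic local law and edge rigidity for the sample-covariance-plus-deterministic-shift matrix $\bG_\lambda = n_2^{-1}\bY_2\bY_2^T + \lambda\Sigma_p^{-1}$, giving $\ell_{\min}(\bG_\lambda) = \rho + O(n^{-2/3+\epsilon})$ and, crucially via \ref{enum:edge_stability}, \emph{no} eigenvalue of $\bG_\lambda$ strictly below $\rho$ (equivalently, no outlier of $T = \bG_\lambda^{-1}$ above $\rho^{-1}$, so $\tilde{\bF}_\lambda$ has no eigenvalue above $\Theta_1$); moreover smooth linear functionals of $F^{\bG_\lambda}$ evaluated away from the support --- in particular $s,s',s''$ at the fixed point $\beta<\rho$ --- concentrate around their deterministic limits at rate $O(n^{-1})$ by the CLT for linear spectral statistics. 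Hence the data-driven edge parameters obtained by solving \eqref{eq:def_beta}--\eqref{eq:def_Theta2} with $F^{\bG_\lambda}$ in place of $\calG_\lambda$ agree with $(\Theta_1,\Theta_2)$ up to $O(n^{-1}) = o(n^{-2/3})$, where I also use the rate hypotheses in \ref{enum:high_dimensional_regime} and \ref{enum:wasserstein_convergence}--\ref{enum:edge_stability} to pass from $F^{\Sigma_p}$ to $F^{\Sigma_\infty}$. For Gaussian $\bZ$, conditioning on $\bY_2$ makes $\tilde{\bF}_\lambda$ a generalized sample covariance matrix with \emph{deterministic} population $\bG_\lambda^{-1}$ whose spectral edge is regular with high probability, so the edge-universality theorems for real sample covariance matrices with general population (El Karoui; Bao--Pan--Zhou; Lee--Schnelli; Knowles--Yin) give the $\TW_1$ limit conditionally, and replacing the data-driven centering/scaling by $(\Theta_1,\Theta_2)$ as above completes that case.

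The main obstacle is the general (non-Gaussian) case. There I would transfer the limit by a Green's-function comparison (Lindeberg swapping of the entries of $\bZ$, which enter both $\bY_1$ and $\bY_2$ simultaneously), and this requires the anisotropic local law for the \emph{nested} matrix $\tilde{\bF}_\lambda$ itself --- the resolvent of $\bG_\lambda$ occurring as the population covariance --- proved by coupling a resolvent/cumulant expansion for $(\tilde{\bF}_\lambda - z)^{-1}$ with the local law for $\bG_\lambda$. The delicate point, and the real content of the theorem, is that the $O(n^{-1/2})$ operator-norm fluctuations of $\bG_\lambda^{-1}$ must be shown to average out and to perturb the edge eigenvalue only at the negligible scale $o(n^{-2/3})$: a naive perturbation estimate is far too lossy, so the argument has to be run at the level of the self-consistent (Dyson) equation, where $\bG_\lambda^{-1}$ enters only through $p^{-1}$-normalized traces that concentrate at the $O(n^{-1})$ scale. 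Everything else --- the square-root nature of the edge, the absence of outliers, and the convergence of the data-driven scaling parameters --- is bookkeeping on top of the preliminary results of Section~\ref{sec:preliminary} and Assumptions \ref{enum:high_dimensional_regime}--\ref{enum:regular_edge}.
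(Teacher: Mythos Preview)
Your proposal is correct and follows essentially the same route as the paper: reduce to the Gaussian case by conditioning on $\bG_\lambda$ (where $\bZ U_1\perp\bZ U_2$), apply Lee--Schnelli/Knowles--Yin edge universality to the conditional sample covariance matrix $n_1^{-1}\tilde{\bZ}_1^T\bG_\lambda^{-1}\tilde{\bZ}_1$, verify that the data-driven edge parameters built from $F^{\bG_\lambda}$ match $(\Theta_1,\Theta_2)$ to $o(n^{-2/3})$ via rigidity and an $O(n^{-1})$ local law for $\bG_\lambda$ away from its spectrum, and then transfer to general $\bZ$ by Green-function comparison. The only point you leave implicit that the paper makes explicit is the \emph{linearization device}: rather than expanding $(\tilde{\bF}_\lambda - z)^{-1}$ directly (where $\bZ$ enters nonlinearly through both factors), the paper embeds the resolvent as the upper-left block of $\bH(z)^{-1}$ for a $(n_1+p+n_2)\times(n_1+p+n_2)$ block matrix $\bH(z)$ that is \emph{linear} in $\bZ$, which is what makes the Lindeberg/interpolation swap tractable and lets the arguments of Knowles--Yin and Han et al.\ carry over nearly verbatim.
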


\begin{remark}\label{remark:Tracy-Widom}
For detailed information on Tracy-Widom distributions, we refer readers to \citet{tracy1994level}, \citet{tracy1996orthogonal}, and \citet{johnstone2008multivariate}. Available software for working with Tracy-Widom laws includes the \texttt{R} package \texttt{RMTstat} \citep{johnstone2022package} and the \texttt{MATLAB} package \texttt{RMLab} \citep{dieng2006matlab}.
\end{remark}

\begin{remark}
Unlike the \( O(p) \) scaling observed in classical extreme value theory for i.i.d. random variables with appropriate tail behavior, the normalized largest eigenvalue of \( \bF_\lambda \) fluctuates on the \( O(p^{2/3}) \) scale. An analogous scaling has been established for the largest eigenvalue of various random matrices, such as those in the \emph{Jacobi orthogonal ensemble} \citep{johnstone2008multivariate}. This \( O(p^{2/3}) \) fluctuation reflects the repulsion between eigenvalues, causing them to be more regularly spaced than i.i.d. random variables.
\end{remark}

\begin{remark}\label{remark:link_to_literature}
\textcolor{black}{When $p < n_1+n_2$, \citet{han2016tracy} propose a test based on the largest root of the equation $\det(\ell \bW_2 - \bW_1)=0$, denoted by $\ell_H$. This statistic can be interpreted as the limiting case of $\ell_{\max}(\bF_\lambda)$ as $\lambda \to 0$, since $\ell_{\max}(\bF_\lambda)$ is the largest root of $\det(\ell (\bW_2+\lambda I_p) - \bW_1)=0$. Consequently, Theorem~\ref{thm:main} may be regarded as an extension of Theorem~2.1 in \citet{han2018unified} to the regularized setting $\lambda>0$. When $p<n_1$, $\ell_H = \ell_{\max}(\bF_0)$. When $p>n_2$, there is no simple functional relationship. Notably, under the assumed model, $\ell_H$ has a limiting distribution that is independent of $\Sigma_p$, whereas the distribution of $\ell_{\max}(\bF_\lambda)$ depends on $\Sigma_p$. In Sections~\ref{sec:power_analysis} and \ref{sec:simulation}, we compare the power of the proposed test with that of $\ell_H$ both analytically and numerically under a class of low-rank alternatives.}
\end{remark}

\begin{remark}\label{remark:generalization_k_largest_eigenvalues}  
Theorem \ref{thm:main} can be generalized to the joint asymptotic distribution of the first \( k \) largest eigenvalues of \( \bF_\lambda \). Specifically, under \ref{enum:high_dimensional_regime}--\ref{enum:regular_edge}, for any fixed integer \( k \) and $\lambda>0$, we have  
\begin{equation}\label{eq:generalization_k_largest_eigenvalues}
\begin{aligned}
\lim_{p \to \infty} &\mP\left( \frac{p^{2/3}}{\Theta_{2p}(\lambda)} \Big(\ell_1(\bF_\lambda) - \Theta_{1p}(\lambda)\Big) \leq t_1, \dots, \frac{p^{2/3}}{\Theta_{2p}(\lambda)} \Big(\ell_k(\bF_\lambda) - \Theta_{1p}(\lambda)\Big) \leq t_k \right) \\ 
& = \lim_{p \to \infty} \mP\left( {p^{2/3}} \Big(\ell_1^{\rm GOE} - 2\Big) \leq t_1, \dots, \Big(p^{-1/2}\ell_k^{\rm GOE} - 2\Big) \leq t_k \right),  
\end{aligned}
\end{equation}
for any \( t_1, \dots, t_k \in \mathbb{R}\). Here, \( \ell_k^{\rm GOE} \) denotes the \( k \)-th largest eigenvalue of a \( p \times p \) matrix from the Gaussian Orthogonal Ensemble (GOE), scaled by $p^{-1/2}$.
\end{remark}

\textcolor{black}{
While the formal proof of Theorem \ref{thm:main} is presented in the Supplementary Material, we briefly outline the main ideas and highlight the main technical contributions. We begin by working under Gaussianity, which has the additional advantage that $\bW_1$ and $\bW_2$ are independent. This independence enables us to treat $\bG^{-1}_{\lambda}$ as an effective population covariance matrix, so that $\tilde{\bF}_\lambda$ can be viewed as a companion sample covariance matrix (arising from the structure of $\widetilde{\bW}_2$) with population covariance $\bG^{-1}_{\lambda}$. The proof then proceeds in three stages. We first analyze the asymptotic behavior of $\bG_{\lambda}$. Secondly, conditional on $\bG_{\lambda}$, we show the asymptotic behavior of $\tilde{\bF}_{\lambda}$. Thirdly, we extend the results to non-Gaussianity.} 


\textcolor{black}{Note that $\bG_\lambda$ has the form of a sample covariance matrix $\bZ U_2U_2^T \bZ^T$ plus a full-rank shift matrix $\lambda \Sigma^{-1}_p$, a structure considered in the pioneering works of \cite{marchenko1967distribution} and \cite{silverstein1995analysis}, where convergence of the ESD of $\bG_\lambda$ is established (see Theorem~\ref{thm:generalized_MP_G}). However, subsequent developments on local laws, universality, edge rigidity, and $O_P(1/p)$-fluctuation of linear spectral statistics (LSS) \citep{bai1998no,bai2004clt,bloemendal2014isotropic,bao2015,knowles2017anisotropic} have primarily focused on the case of a zero shift matrix. Extending these results to the presence of a nonzero shift matrix is nontrivial and plays a central role in the technical arguments. The present work contributes in this direction by establishing these properties for general $\Sigma_p$.}

\textcolor{black}{
\textbf{Step 1:} We establish the aforementioned properties of $\bG_\lambda$ by closely following the strategy of \cite{knowles2017anisotropic}, with suitable modifications to accommodate the presence of the deterministic shift term $\lambda \Sigma_p^{-1}$, which alters the associated M-P equation from \eqref{eq:MP_W2} to \eqref{eq:MP_G}. The key idea is to construct a suitable linearization block matrix (see $\bK$ in Eq.~S.6.1 
in the Supplementary Material) and prove convergence of the entries of $\bK^{-1}$. This step is carried out in Section~S.6 
of the Supplementary Material.}

\textcolor{black}{
\textbf{Step 2:} Under Gaussianity, treating $\bG_{\lambda}^{-1}$ as an effective population covariance matrix, we verify that the regularity conditions of \citet{knowles2017anisotropic} and \citet{lee2016tracy} are satisfied with high probability. This allows us to establish the Tracy--Widom limit for $\ell_{\max}(\tilde{\bF}_\lambda)$ under Gaussianity by applying the main theorem of \cite{lee2016tracy}. This step is carried out in Section S.7.1 
of the Supplementary Material.
}

\textcolor{black}{
\textbf{Step 3:} To establish universality of the largest root and extend the results to the general setting of Condition~\ref{enum:moments_conditions}, we outline a proof strategy based on a Green function comparison theorem. In particular, we show that universality can be reduced to proving local laws for the inverse of an appropriate linearization block matrix $\bH_\lambda$ (see Eq. S.7.4 
in the Supplementary Material). This step is carried out in Section S.7.2 
of the Supplementary Material. 
}

\textcolor{black}{
\textbf{Step 4:} We show the local laws for $\bH^{-1}_\lambda$ and completing the extension to non-Gaussian distributions. Indeed, $\bH_\lambda$ differs from the linearization matrix studied in \cite{han2016tracy} and \cite{han2018unified} only by a deterministic block. Owing to this structural similarity, we build upon the arguments in \cite{han2016tracy} and \cite{knowles2017anisotropic}. To avoid unnecessary repetition, we only explicitly highlight the similarities, distinctions, and required modifications in our setting. It is carried out in Section~S.7.3 
of the Supplementary Material. 
}

\section{Estimation}\label{sec:estimation}

In this section, we develop an estimation procedure for the centering and scaling parameters \(\Theta_{1p}(\lambda)\) and \(\Theta_{2p}(\lambda)\). The proposed estimators leverage the eigenvalues of \(\bW_2\) and are designed to be computationally efficient while maintaining robustness in high-dimensional settings. This section is organized as follows. Section \ref{subsec:reformulation} introduces a reformulation of the problem as an optimization task. The proposed algorithm is described in detail in Section \ref{subsec:algorithm}. In Section \ref{subsec:consistency}, we establish the consistency of the proposed estimators. Additional discussion, implementation details, and practical considerations are provided in Section S.2 
of the Supplementary Material.

Throughout this section, $\lambda$ is treated as fixed. For notational simplicity, we suppress the explicit dependence of certain quantities on $\lambda$ and $p$ whenever no confusion may arise. We denote \(\hat{\gamma}_1 = p / n_1\) and \(\hat{\gamma}_2 = p / n_2\).


\subsection{Problem formulation} \label{subsec:reformulation}

The estimation of \(\Theta_1\) and \(\Theta_2\) necessitates the precise estimation of the edge \(\rho\) and the function \(s(x)\) along with its derivatives on the interval \([0, \rho)\). Although \(s(x)\) is formally defined as the extension of the Stieltjes transform \(\phi(z)\) from \(\mathbb{C}^+\) to the real line as in Lemma \ref{lemma:extension_phi_to_s}, this definition provides limited practical utility due to the absence of an explicit analytical expression. In this section, we rigorously demonstrate that both $\rho$ and \(s(x)\) can be characterized by Eq. \eqref{eq:x_h}, thereby establishing a foundation for their accurate estimation.

\begin{lemma}\label{lemma:determine_rho}
Assume that \(\{F^{\Sigma_p}\}\) is regular near \(\sigma_{\max}\) as defined in Definition \ref{def:regular_edge}.  If $\hat{\gamma}_2 F^{\Sigma_p}(\{\sigma_{1p}\}) \geq 1$, then $\rho = \lambda/\sigma_{1p}$. Otherwise, \(\rho = x(h_0) > \lambda / \sigma_{1p}\), where $h_0$ is the unique solution in $(-\infty, \lambda / \sigma_{1p})$ to \(x'(h_0) = 0\).  
\end{lemma}


Define the auxiliary functions on \( h \in (-\infty, \lambda / \sigma_{1p}) \) as
\[
\calH_j(h) = \int \frac{\tau^j \, dF^{\Sigma_p}(\tau)}{( \lambda - \tau h)^j}, \quad j = 1, 2, 3.
\]
With the definition, $x(h) = h + [1+ \hat{\gamma}_2 \calH_1(h)]^{-1}$ and $x'(h) = 1- \hat{\gamma}_2 [1+\hat{\gamma}_2 \calH_1(h)]^{-2} \calH_2(h)$. 
\begin{theorem}\label{thm:determine_s}
Suppose that Conditions \ref{enum:boundedness_spectral_norm}--\ref{enum:edge_stability} hold. 
Fix any $x_0 \in [0, \rho)$. Then, $s(x_0)$ is determined as  
\[ s(x_0) = \frac{1}{\hat{\gamma}_2} \left( \frac{1}{x_0 - h_0} -1 \right),\]
where $h_0$ is the unique solution in $(-\infty, \lambda/\sigma_{1p})$ to 
\begin{equation}\label{eq:determine_s}
\begin{cases}
&x_0  =  h_0 + [1+ \hat{\gamma}_2 \calH_1(h_0)]^{-1},\\   
&1 - \hat{\gamma}_2 [1+\hat{\gamma}_2 \calH_1(h_0)]^{-2} \calH_2(h_0) >0. 
{}\end{cases}
\end{equation}
Furthermore, the derivatives $s'(x_0)$ and $s''(x_0)$ are determined as:  
{\small
\begin{equation}\label{eq:characterization_s2s3}
\begin{aligned}
&g(x_0) = x_0 - \frac{1}{1 + \hat{\gamma}_2 {s}(x_0)},\\
&\zeta(x_0) = \frac{\hat{\gamma}_2}{( 1+ \hat{\gamma}_2 s(x_0))^{2}},\\
&{s'}(x_0) =  \calH_2(g(x_0))\left(\zeta(x_0) s'(x_0) + 1 \right),\\
&{s''}(x_0) = 2 \left(\zeta(x_0) s'(x_0) + 1 \right)^2 \calH_3(g(x_0))+ \calH_2(g(x_0)) \Big[ \frac{- 2 (\hat{\gamma}_2 s'(x_0))^2 }{(1+\hat{\gamma}_2s(x_0)))^3} + \zeta(x_0)s''(x_0) \Big].
\end{aligned}
\end{equation}
}
\end{theorem}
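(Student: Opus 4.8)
The plan is to exploit the relation $h(z) = z - [1+\gamma_2\phi(z)]^{-1}$ together with the reformulated Mar\v{c}enko--Pastur equation \eqref{eq:MP_G_rewrite}, pushed down to the real axis on $[0,\rho)$. First I would record the real-line version: for $x_0 \in [0,\rho)$, Lemma~\ref{lemma:extension_phi_to_s} gives that $\phi(z) \to s(x_0)$ as $z \to x_0$ in $\mathbb{C}^+$, and $s(x_0)$ is real. Setting $h_0 \coloneqq \lim_{z\to x_0} h(z) = x_0 - [1+\gamma_2 s(x_0)]^{-1}$, passing to the limit in \eqref{eq:MP_G_rewrite} yields precisely the first equation in \eqref{eq:determine_s}, i.e. $x_0 = x(h_0)$ with $x(\cdot)$ as in \eqref{eq:x_h}. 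Inverting the defining relation for $h_0$ gives $s(x_0) = \gamma_2^{-1}\big(\tfrac{1}{x_0 - h_0} - 1\big)$, which is the claimed formula. To justify the limit I need $h_0 < \lambda/\sigma_{\max}$ so that $\calH_1(h_0)$ and the integral in \eqref{eq:x_h} are finite; this should follow because $g(x_0)$ (in the notation of \eqref{eq:characterization_s2s3}, $g = h_0$) lies strictly left of the pole $\lambda/\sigma_{\max}$ whenever $x_0 < \rho$, a fact that can be extracted from the behavior of $x(h)$ and Lemma~\ref{lemma:determine_rho}.

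The key remaining point for the value $s(x_0)$ is \emph{uniqueness} of the solution to \eqref{eq:determine_s}, i.e. that the constraint $x'(h_0) > 0$ (equivalently $1 - \gamma_2[1+\gamma_2\calH_1(h_0)]^{-2}\calH_2(h_0) > 0$) singles out the correct root. Here I would argue that $x(h)$ is smooth and strictly convex-type on $(-\infty, \lambda/\sigma_{\max})$ in the relevant range: $x(h) \to +\infty$ as $h \uparrow \lambda/\sigma_{\max}$ (the integral blows up), and by Lemma~\ref{lemma:determine_rho} either $x$ is strictly increasing on the whole interval (case $\rho = \lambda/\sigma_{\max}$, no interior critical point) or it has a unique minimizer $h_* $ with $x(h_*) = \rho$. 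In the first case $x$ is a bijection onto its range and the $x'>0$ constraint is automatic; in the second, for $x_0 \in [0,\rho)$ the level set $\{h : x(h) = x_0\}$ has exactly two points, one on each side of $h_*$, and the condition $x'(h_0) > 0$ selects the left branch — which is the one produced as a genuine limit of $h(z)$ from $\mathbb{C}^+$ (the Stieltjes transform picks out the branch on which $\phi$, hence $s$, is the transform of $\calG_\lambda$). This branch-selection argument, tying the analytic constraint to the correct limiting branch, is the part I expect to require the most care.

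For the derivatives I would differentiate the implicit relations rather than differentiate $s$ directly. Recall $g(x) = x - [1+\gamma_2 s(x)]^{-1}$ plays the role of the real-line $h$, so that from \eqref{eq:MP_G_rewrite} one has $x = g(x) + [1+\gamma_2\calH_1(g(x))]^{-1}$; but it is cleaner to use the alternative form $s(x) = \calH_1(g(x))$ — which holds because, comparing \eqref{eq:MP_G} and the definition of $h$, $\phi(z) = \calH_1(h(z))$ in the $\Sigma_\infty$ integral sense. Differentiating $s(x) = \calH_1(g(x))$ in $x$ and using $\calH_1' = \calH_2$, $\calH_2' = 2\calH_3$ (these follow by differentiating under the integral sign, valid since $g(x_0) < \lambda/\sigma_{\max}$), together with $g'(x) = 1 - \big(-\gamma_2 s'(x)/(1+\gamma_2 s(x))^2\big) = 1 + \zeta(x) s'(x)$ where $\zeta(x) = \gamma_2/(1+\gamma_2 s(x))^2$, gives $s'(x_0) = \calH_2(g(x_0))\,(\zeta(x_0) s'(x_0) + 1)$, the third line of \eqref{eq:characterization_s2s3}. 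Differentiating once more, applying the product and chain rules to $s'(x) = \calH_2(g(x))\,g'(x)$ and simplifying $g''$ via $\zeta'(x) = -2\gamma_2^2 s'(x)/(1+\gamma_2 s(x))^3$, yields the stated expression for $s''(x_0)$. All of these are routine once the limit relations and the differentiability of $s$ on $[0,\rho)$ (Lemma~\ref{lemma:extension_phi_to_s}(ii)) are in hand; the only subtlety is confirming that $g(x_0)$ stays bounded away from $\lambda/\sigma_{\max}$ so differentiation under the integral sign is legitimate, which again reduces to the edge analysis already carried out for Lemma~\ref{lemma:determine_rho}.
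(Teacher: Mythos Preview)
Your proposal is correct and follows essentially the same approach as the paper: the paper's proof simply says that \eqref{eq:determine_s} follows from Theorem~4.1 of \citet{li2024analysis} and that \eqref{eq:characterization_s2s3} is obtained by differentiating \eqref{eq:determine_s} in $x_0$. Your write-up is a more self-contained version of exactly this --- take the real-line limit of \eqref{eq:MP_G} via Lemma~\ref{lemma:extension_phi_to_s} to get $s(x)=\calH_1(g(x))$ and $x_0=x(h_0)$, then differentiate using $\calH_1'=\calH_2$, $\calH_2'=2\calH_3$.

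Two small slips to fix. First, as $h\uparrow\lambda/\sigma_{\max}$ one has $\calH_1(h)\to+\infty$ and hence $x(h)\to\lambda/\sigma_{\max}$, not $+\infty$; and by Lemma~\ref{lemma:monotonicity_x_h} the function $x$ is \emph{concave}, so $h_*$ is a \emph{maximizer} with $x(h_*)=\rho$, not a minimizer. Neither affects your argument: since $x'(h)\to 1$ as $h\to-\infty$ and $x'$ is monotone decreasing, the left branch $(-\infty,h_*]$ is a bijection onto $(-\infty,\rho]$, which gives existence and uniqueness of the root with $x'(h_0)>0$ for every $x_0<\rho$. Second, for the branch selection you can avoid any delicate Stieltjes-transform argument: the map $x_0\mapsto h_0=x_0-[1+\gamma_2 s(x_0)]^{-1}$ is continuous on $(-\infty,\rho)$, satisfies $x'(h_0)\neq 0$ there (since $x'(h_0)=0$ forces $x_0=\rho$ by Lemma~\ref{lemma:determine_rho}), and $x'(h_0)\to 1>0$ as $x_0\to-\infty$; hence $x'(h_0)>0$ throughout by continuity.
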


\noindent It is worth noting that the third equation in \eqref{eq:characterization_s2s3} is linear in \(s'(x_0)\), given \(s(x_0)\) and \(\calH_2(\cdot)\). Similarly, the last equation in \eqref{eq:characterization_s2s3} is linear in \(s''(x_0)\), given \(s(x_0)\), \(s'(x_0)\), \(\calH_2(\cdot)\), and \(\calH_3(\cdot)\). 

The following lemma is useful for the numerical computation of $s(x_0)$.
\begin{lemma}\label{lemma:monotonicity_x_h}
Under the conditions of Theorem \ref{thm:determine_s}, the derivative \(x'(h) = 1 - \hat{\gamma}_2 [1 + \hat{\gamma}_2 \calH_1(h)]^{-2} \calH_2(h)\) is monotonically decreasing on \((-\infty, \lambda / \sigma_{1p})\).
\end{lemma}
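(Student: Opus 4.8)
The plan is to show that $x'$ is strictly decreasing by computing $x''$ explicitly and exhibiting it as a negative quantity via the Cauchy--Schwarz inequality. First I would record the elementary differentiation rules for the auxiliary functions $\calH_j$. Since the support of $F^{\Sigma_\infty}$ is compact and contained in $(0, \sigma_{\max}]$, and since $h < \lambda/\sigma_{\max}$ forces $\lambda - \tau h > 0$ for every $\tau$ in the support --- with a positive lower bound uniform over $\tau$ and over $h$ in any compact subinterval of $(-\infty, \lambda/\sigma_{\max})$ --- each integrand $\tau^j(\lambda - \tau h)^{-j}$ is positive, bounded, and smooth in $h$ there. Differentiation under the integral sign is therefore justified, and a direct computation gives $\calH_1'(h) = \calH_2(h)$ and $\calH_2'(h) = 2\calH_3(h)$; moreover all three $\calH_j(h)$ are finite and strictly positive on $(-\infty, \lambda/\sigma_{\max})$.

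Next I would differentiate $x'(h) = 1 - \gamma_2 \calH_2(h)\big[1 + \gamma_2 \calH_1(h)\big]^{-2}$. Writing $A(h) = 1 + \gamma_2 \calH_1(h) > 0$, so that $A'(h) = \gamma_2 \calH_2(h)$, the quotient rule yields
\[
x''(h) = -\frac{2\gamma_2}{A(h)^{3}}\Big[\calH_3(h)\big(1 + \gamma_2 \calH_1(h)\big) - \gamma_2 \calH_2(h)^2\Big] = -\frac{2\gamma_2}{A(h)^{3}}\Big[\calH_3(h) + \gamma_2\big(\calH_1(h)\calH_3(h) - \calH_2(h)^2\big)\Big].
\]
Since $A(h)^{3} > 0$ and $\gamma_2 > 0$, the sign of $x''(h)$ is opposite to that of the bracket.

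Finally I would bound the bracket below. Applying Cauchy--Schwarz to the factorization $\tau^2(\lambda - \tau h)^{-2} = \big(\tau^{1/2}(\lambda - \tau h)^{-1/2}\big)\big(\tau^{3/2}(\lambda - \tau h)^{-3/2}\big)$ gives $\calH_2(h)^2 \le \calH_1(h)\calH_3(h)$, so that $\calH_1(h)\calH_3(h) - \calH_2(h)^2 \ge 0$; together with $\calH_3(h) > 0$ and $\gamma_2 > 0$ this shows the bracket is at least $\calH_3(h) > 0$. Hence $x''(h) < 0$ throughout $(-\infty, \lambda/\sigma_{\max})$, which gives the claimed strict monotonicity of $x'$.

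I do not expect any genuine obstacle: the argument is a short calculation followed by one inequality. The only places that call for a little care are the justification of differentiating $\calH_1, \calH_2$ under the integral sign (immediate from the uniform positivity of $\lambda - \tau h$ on the support and dominated convergence on compact $h$-intervals) and the observation that the $\gamma_2$-order correction $\calH_1\calH_3 - \calH_2^2$ is precisely a Cauchy--Schwarz defect, hence nonnegative.
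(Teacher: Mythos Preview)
Your proposal is correct. The paper does not give a separate detailed proof of this lemma---it treats the monotonicity of $x'$ as an elementary fact and simply invokes it (for example, in establishing that $h_0^{(p)}$ exists and is unique in the proof of Lemma~\ref{lemma:convergence_rho_p})---so your explicit computation of $x''(h)$ together with the Cauchy--Schwarz bound $\calH_2^2 \le \calH_1\calH_3$ is exactly the natural way to supply those details.
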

\noindent This result implies that \(x(h)\) is either strictly increasing on \((-\infty, \lambda / \sigma_{1p})\) or exhibits a single concave-down peak within this interval. If \(x_0 = h + [1 + \hat{\gamma}_2 \calH_1(h)]^{-1}\) has two distinct roots, Eq. \eqref{eq:determine_s} indicates that \(h_0\) corresponds to the smaller root.

Through the results in Theorem \ref{thm:determine_s}, we can transform the problem of estimating \(s(x)\), \(s'(x)\), and \(s''(x)\) into the task of estimating \(\calH_j(h)\) for \(j = 1, 2, 3\) over \(h \in (-\infty, \lambda / \sigma_{1p})\). To this end, we revisit Lemma \ref{lemma:M_P_theorem} and Eq. \eqref{eq:MP_W2_approximate}, which together suggest that:
\begin{align*}
&\calH_1(-\lambda \hat{\varphi}(\iz)) = \int \frac{\tau \, dF^{\Sigma_p}(\tau)}{\lambda + \tau \lambda \hat{\varphi}(\iz)} \approx \frac{\iz}{\lambda {\hat{\gamma}}_2} + \frac{1}{\lambda {\hat{\gamma}}_2 \hat{\varphi}(\iz)},\\
&\calH_2(-\lambda \hat{\varphi}(\iz)) = \int \frac{\tau^2 \, dF^{\Sigma_p}(\tau)}{(\lambda + \tau \lambda \hat{\varphi}(\iz))^2} \approx \frac{1}{\lambda^2 {\hat{\gamma}}_2 \hat{\varphi}^2(z)} - \frac{1}{\lambda^2{\hat{\gamma}}_2 \hat{\varphi}'(z)}. 
\end{align*}
Here, we are extending the definition of $\calH_j(h)$ to the complex domain. The second result is obtained by differentiating both sides of Eq. \eqref{eq:MP_W2_approximate}. Notably, $\hat{\varphi}(\iz)$ solely relies on the eigenvalues of $\bW_2$ as 
 \[\hat{\varphi}(\iz) = \frac{1}{n_2} \sum_{j=1}^{n_2} \frac{1}{\ell_j(\bW_2) - \iz}, \quad  \text{ with } \ell_j(\bW_2) =0 \text{ if } j>p.\]  This observation motivates the following strategy for estimating \(\calH_j(y)\) for \(j = 1, 2, 3\).

Note that \(F^{\Sigma_p}\) can be approximated by a weighted sum of point masses:
\[
F^{\Sigma_p}(\tau) \simeq \sum_{k=1}^K w_k \delta_{\sigma_k}(\tau),
\]
where \(\{\sigma_k\}_{k=1}^K\) is a grid of points chosen to densely cover the range from $\liminf\nolimits_{p} \ell_{\min}(\Sigma_p)$ to $\limsup\nolimits_p \ell_{\max}(\Sigma_p)$,
and \(w_k\) are weights satisfying \(w_k \geq 0\) and \(\sum_{k=1}^K w_k = 1\). The grid \(\{\sigma_k\}_{k=1}^K\) is user-selected and fixed throughout the estimation procedure, while the weights \(w_k\) are model parameters. This approach was introduced in \citet{el2008spectrum} and later generalized in \citet{ledoit2012nonlinear}. 
Accordingly, the integrals can be approximated as:
\[
\calH_j(h) = \int \frac{\tau^j \, dF^{\Sigma_p}(\tau)}{(\lambda - \tau h)^j} \simeq \sum_{k=1}^K \frac{w_k \sigma_k^j}{(\lambda - \sigma_k h)^j} \coloneqq \tilde{\calH}_j(h, w_1, \dots, w_K).
\]
For \(\iz \in \mathbb{C}^+\), define:
\[
\hat{Q}_1(\iz) = \frac{\iz}{\lambda \hat{\gamma}_2} + \frac{1}{\lambda \hat{\gamma}_2 \hat{\varphi}(\iz)}, \quad \mbox{and}\quad \hat{Q}_2(\iz) =  \frac{1}{\lambda^2 \hat{\gamma}_2 \hat{\varphi}^2(z)} - \frac{1}{\lambda^2\hat{\gamma}_2 \hat{\varphi}'(z)}. 
\]
Consider a grid of points \(\iz\) in $\mathbb{C}^+$, denoted by \(\{\iz_i\}_{i=1}^I\). We select the weights \(w_k\) such that \(\hat{Q}_j(\iz_i)\) approximately matches \(\tilde{\calH}_j(-\lambda \hat{\varphi}(\iz_i), w_1, \dots, w_K)\) for all \(i = 1, 2, \dots, I\), and $j=1,2$.
 Further details are presented in the following sections. The recommended choice of the grids \(\{\sigma_k\}_{k=1}^K\) and \(\{\iz_i\}_{i=1}^I\) is provided in Section S.2 
 of the Supplementary Material. 
\subsection{The algorithm}\label{subsec:algorithm}
Given the grids $\{\sigma_k\}_{k=1}^K$ and $\{\iz_i\}_{i=1}^I$,  define the loss function as 
\[ L_\infty(w_1,\dots, w_K) = \max_{i=1,\dots,I} \max_{j=1,2} \{ |\Re(e_{ij} )|, |\Im(e_{ij})|\},~~\mbox{where} \]
\[e_{ij} = \frac{\hat{Q}_j(\iz_i) - \tilde{\calH}_j(-\lambda\hat{\varphi}(\iz_i), w_1,\dots, w_K)}{|\hat{Q}_j(\iz_i)|}, \quad i = 1,\dots, I; j=1,2.\] 
\begin{algorithm}
\caption{Linear Programming Formulation for Selecting Weights}
\label{algo:linear_program}
\begin{algorithmic}
\STATE \textbf{Input:} A grid of masses \(\sigma_1, \dots, \sigma_K\); a grid of points \(\iz_1, \dots, \iz_I\); eigenvalues $\ell_j$'s of $\bW_2$; $\lambda$. 
\STATE \textbf{Calculation:}  Compute
\[e_{ij} =  \frac{\hat{Q}_j(\iz_i)}{|\hat{Q}_j(\iz_i)| }- \frac{1}{|\hat{Q}_j(\iz_i)|}\sum_{k=1}^K \frac{\sigma_k^jw_k}{(\sigma_k  \lambda\hat{\varphi}(\iz_i) +\lambda)^j},~ i =1,\dots, I; j=1,2,\]

\STATE \textbf{Optimization:} 
Find the optimal $(w_1,\dots, w_K, \theta)$ through the linear programming
\renewcommand{\arraystretch}{0.3}  
\[
\begin{array}{ll}
\underset{\theta,w_1,\dots,w_K}{\text{minimize}} & \phantom{-\theta\leq}\theta\\[1pt]
\text{subject to} &  -\theta\leq \Re(e_{ij}) \leq \theta, \quad \forall~ i = 1,\dots, I; j=1,2, \\[1pt]
&  -\theta\leq \Im(e_{ij}) \leq \theta, \quad \forall~ i = 1,\dots, I; j=1,2, \\[1pt]
& w_k \geq 0, \quad \forall k = 1,\dots, K, \\[1pt]
&  \mbox{and } w_1 + w_2 + \dots + w_K = 1.
\end{array}
\]
\STATE \textbf{Truncation}: To mitigate the effect of floating-point errors, truncate the optimal weight $w_k$ to $w_k \times \mathbbm{1}(w_k>K^{-1}10^{-d})$ and rescale the truncated weights to be of sum one. The recommended choice of $d$ is $d =2$.  
\STATE \textbf{Output:} All postive weights \(\hat{w}_1, \dots, \hat{w}_B\), the associated masses $\hat{\sigma}_1, \dots, \hat{\sigma}_B$ and loss \(\theta\).
\end{algorithmic}
\end{algorithm}

We choose the weights as
\[ (w^*_1, \dots, w^*_K) = \arg\min_{w_j\geq 0,\sum w_j =1} L_\infty(w_1,\dots, w_K).\]
The main motivation for choosing the $L_\infty$ type loss is that given the point masses $\{\sigma_k\}_{k=1}^K$, the optimization can be formulated into a linear programming problem as in Algorithm \ref{algo:linear_program}.

Suppose the optimal positive weights given by Algorithm \ref{algo:linear_program} are $\hat{w}_1,\dots, \hat{w}_B$ and the associated point masses are $\hat{\sigma}_1>\hat{\sigma}_2>\cdots > \hat{\sigma}_B$. We estimate $\calH_j(h)$ by 
\[\hat{\calH}_j(h) = \sum_{k=1}^B\frac{\hat{w}_k \hat{\sigma}_k^j}{(\lambda - \hat{\sigma}_k h)^j},\quad j=1,2,3,  \quad 0< h< \lambda/\hat\sigma_{1}.\]
Given $\hat{\calH}_j(\cdot)$, $j=1,2,3$, the estimation of $s(x)$ and its derivatives over the domain $[0,\rho)$ can be performed efficiently by leveraging its inherent smoothness.  
First of all, following Lemma \ref{lemma:determine_rho}, the estimator $\hat{\rho}$ of $\rho$ is obtained in Algorithm \ref{algo:estimation_rho}. 
 Secondly, note that Eq. \eqref{eq:characterization_s2s3} defines an ordinary differential equation (ODE) system. We propose to estimate $s(x)$, $s'(x)$, $s''(x)$ by solving this ODE, as detailed in Algorithm \ref{algo:ODE}.

\begin{algorithm}[htbp]
\caption{Estimation of $\rho$}
\label{algo:estimation_rho}
\begin{algorithmic}
\STATE \textbf{Input:} $\lambda$, $\hat\calH_1(\cdot)$, $\hat\calH_2(\cdot)$, largest point-mass $\hat{\sigma}_1$, the associated weight $\hat{w}_1$;
\STATE \textbf{Case 1:} If $\hat{\gamma}_2\hat{w}_1 \geq 1$, $\hat{\rho} = \lambda/\hat{\sigma}_1$; 
\STATE \textbf{Case 2:} Otherwise, solve for the unique root $\hat{h}$ in $(-\infty, \lambda/\hat{\sigma}_1)$ to 
\[\frac{\hat{\gamma}_2 \hat{\calH}_2(h)}{(1+\hat{\gamma}_2 \hat{\calH}_1(h))^2}  =1.\]
The function on the left-hand side is monotonically increasing on $(-\infty, \lambda/\hat{\sigma}_1)$. The equation can be solved using Newton's method with an initial value pre-selected through grid search.  Then, $\hat{\rho} = \hat{h} + [1+\hat{\gamma}_2 \hat{\calH}_1(\hat{h})]^{-1}$.
\end{algorithmic}
\end{algorithm}

Following Algorithm \ref{algo:estimation_rho} and Algorithm \ref{algo:ODE}, we estimate $\beta$ by $\hat{\beta}$, which is the unique solution on $(0, \hat{\rho})$ to 
$\beta^2 \hat{s}'(\beta) = 1/\hat{\gamma}_1$.
Then, $\Theta_1$ and $\Theta_2$ are estimated as:
\begin{equation}\label{eq:estimators_Theta1Theta2}
\hat\Theta_{1} = \frac{1}{\hat\beta} \Big[1 + \hat\gamma_1 \hat\beta {\hat{s}} (\hat\beta)\Big] ~~\mbox{and}~~ \hat\Theta_{2} = \Big[\frac{\hat\gamma_1^3}{2} {\hat{s}''}(\hat\beta) + \frac{\hat\gamma_1^2}{\hat\beta^3} \Big]^{1/3}.
\end{equation}
\begin{algorithm}[htbp]
\caption{ODE Formulation for Estimation of $s(x)$, $s'(x)$, and $s''(x)$}
\label{algo:ODE}
\begin{algorithmic}
\STATE \textbf{Input:} $\hat{\rho}$, $\hat{\calH}_1(\cdot)$, $\hat{\calH}_2(\cdot)$, $\hat{\calH}_3(\cdot)$;
\STATE \textbf{Initial:} Find  $s$ as the unique solution to 
$\hat{\calH}_1 \Big( \frac{-1}{1+\hat{\gamma}_2 s}\Big)  = s$ {and} $\hat{\gamma}_2 \hat{\calH}_2\Big(\frac{-1}{1+\hat{\gamma}_2 s}\Big) < (1+\hat{\gamma}_2 s)^2$.
\STATE \textbf{ODE}: On $x \in [0,\hat{\rho})$, solve the following ODE with the initial $\hat{s}(0) =s$:
\begin{equation*}
\begin{aligned}
&\hat{s}'(x) =  \frac{\hat{\calH}_2( \hat{g}(x) )}{ 1 - \hat{\calH}_2(\hat{g}(x)) \hat{\zeta}(x) },\\
&\hat{s}''(x) = \frac{2[\hat\zeta(x)\hat{s}'(x) +1]^2 \hat{\calH}_3(\hat{g}(x)) - 2{[1+\hat{\gamma}_2 \hat{s}(x)]^{-3} } {\hat{\calH}_2(\hat{g}(x)) (\hat{\gamma}_2\hat{s}'(x))^2}  }{1 -  \hat{\zeta}(x) \hat{\calH}_2(\hat{g}(x))},\\
&\hat{g}(x) = x - \frac{1}{1+\hat{\gamma}_2 \hat{s}(x)}, ~~\mbox{and } ~~ \hat\zeta(x) = \frac{\hat{\gamma}_2}{ (1+\hat{\gamma}_2\hat{s}(x))^2} .
\end{aligned}
\end{equation*}
\STATE \textbf{Output:} Estimated functions: $\hat{s}(x)$, $\hat{s}'(x)$, $\hat{s}''(x)$ on $[0,\hat{\rho})$.
\end{algorithmic}
\end{algorithm}


In Algorithm \ref{algo:linear_program}, the choice of the $L_\infty$ type loss function leads to a linear programming problem. Alternative convex loss functions, such as the $L_2$-norm, can also be considered, transforming the optimization into a general convex problem. However, our numerical studies indicate that the $L_2$-loss provides no significant advantage over the $L_\infty$-loss. Given that linear programming is computationally more efficient, we recommend the use of the $L_\infty$-loss. A detailed comparison between different loss functions is beyond the scope of the current work.

Additional discussion, including a comparison between Algorithm~\ref{algo:linear_program} and the methods of \cite{el2008spectrum} and \cite{ledoit2012nonlinear}, is provided in Section~S.2 
of the Supplementary Material.

\subsection{Consistency}\label{subsec:consistency}
In this section, we show the consistency of the proposed estimators.

\begin{definition}\label{def:grid_size}
For a grid of complex values $J = \{t_1, t_{2}, t_3,\dots\}$, define the corresponding grid size as $\operatorname{size}(J) = \sup_{k} |t_k-t_{k-1}|$.
\end{definition}
\begin{definition}\label{def:grid_cover}
A grid $R$ of real numbers is said to cover a compact interval $[a,b]$ if there exists at least one $t_k \in R$ with $t_k\leq a$ and at least another $t_{k'}$ with $t_{k'} \geq b$. A sequence of grids $\{R_m, m =1,2,3,\dots \}$ is said to eventually cover a compact interval $[a,b]$ if for every $\epsilon >0$ there exists $M$ such that $R_m$ covers the compact interval $[a+\epsilon, b-\epsilon]$ for all $m\geq M$. 
\end{definition}

\begin{theorem}\label{thm:consistency_estimators} 
Suppose that \ref{enum:high_dimensional_regime}--\ref{enum:regular_edge} hold. Assume that \(R_p = \{\sigma_{p1}, \sigma_{p2}, \dots\}\), for \(p = 1, 2, \dots\), is a sequence of grids that eventually covers \([0, \sigma_{\max}]\), with \(\text{size}(R_p) = o(p^{-2/3})\). Suppose that \(J_p = \{\iz_{p1}, \iz_{p2}, \dots\}\), for \(p = 1, 2, \dots\), is a sequence of grids.

\begin{itemize}
    \item[(i)] Let \(\hat{F}_p\) denote the estimated measure obtained from Algorithm \ref{algo:linear_program} using the inputs \(\{\sigma_k\} = R_p\) and \(\{\iz_i\} = J_p\). If $J_p$ is  such that for some $\nu>0$,
\begin{equation}\label{eq:consistency_condition1}
\sup_{\iz \in D(\nu)} \inf_{y \in J_p} |\iz - \varphi(y)| = o(p^{-2/3}),
\end{equation}
where \(D(\nu)\) is the closed half-disk in \(\mathbb{C}^+\) centered at 0 with radius \(\nu\). Then, there exists a constant $C$ such that for any $T\geq 3$
    \[
    \calD_W(\hat{F}_p, F^{\Sigma_p}) \leq \frac{C(1+\nu)}{\nu^2} p^{-2/3} \exp(T)\delta_p + \frac{C}{T},
    \]
    where $\mathcal{D}_W(F_1, F_2)$ denotes the \emph{Wasserstein distance} between distributions $F_1$ and $F_2$, defined as   
\[ \mathcal{D}_W(F_1, F_2) = \sup_{f} \left\{ \left|\int f dF_1 - \int fdF_2  \right|: f \text{ is 1- Lipschitz}\right\},\]
and $\delta_p$ is such that $\delta_p \stackrel{P}{\longrightarrow}0$, as $p\to\infty$. 
    \item[(ii)] Let \(h_\beta\) be the solution to Eq. \eqref{eq:determine_s} when \(x_0 = \beta\). If $J_p$ is such that there exists $\varepsilon>0$ such that $\{\varphi(y), ~y\in J_p\}$ eventually covers $[-h_\beta/\lambda - \varepsilon, -h_\beta/\lambda + \varepsilon]$. 
    Then, the proposed estimators \(\hat{\Theta}_1\) and \(\hat{\Theta}_2\), obtained using the inputs \(\{\sigma_k\} = R_p\) and \(\{\iz_i\} = J_p\), satisfy:
    \begin{align*}
    p^{2/3} |\hat{\Theta}_1 - \Theta_1| \stackrel{P}{\longrightarrow} 0~~\mbox{and}~~
    |\hat{\Theta}_2 - \Theta_2| \stackrel{P}{\longrightarrow} 0.
    \end{align*}
\end{itemize}
\end{theorem}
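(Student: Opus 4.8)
The plan is to split the argument into three stages: first bound the optimal value of the linear program in Algorithm~\ref{algo:linear_program} by exhibiting the (discretized) true spectral distribution as an almost-feasible point; then invert this, turning a small fitting loss into control of the fitted auxiliary functions $\hat\calH_1,\hat\calH_2,\hat\calH_3$ (equivalently, of the fitted measure $\hat F_p$); and finally propagate these estimates through the algebraic characterizations of Theorem~\ref{thm:determine_s} and the definitions \eqref{eq:def_beta}--\eqref{eq:def_Theta2}. For Stage~1, I would take $\mu_p$ to be the image of $F^{\Sigma_p}$ under transport onto the grid $R_p$; since $\operatorname{size}(R_p)=o(p^{-2/3})$ and $p^{2/3}\calD_W(F^{\Sigma_p},F^{\Sigma_\infty})\to0$ by \ref{enum:wasserstein_convergence}, we get $p^{2/3}\calD_W(\mu_p,F^{\Sigma_\infty})\to0$ (the $O(1)$ eigenvalues near the edge each carry mass $1/p$, so their possibly larger transport distance is harmless). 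Using Lemma~\ref{lemma:M_P_theorem}, the approximate Mar\v{c}enko--Pastur identity \eqref{eq:MP_W2_approximate}, and its first derivative, one checks $\hat Q_j(\iz)=\calH_j(-\lambda\hat\varphi(\iz))+o_P(p^{-2/3})$ uniformly over $J_p$ (with $\calH_j$ against $F^{\Sigma_\infty}$), and replacing $F^{\Sigma_\infty}$ by $\mu_p$ perturbs $\calH_j$ by $o_P(p^{-2/3})$ because $\tau^j/(\lambda-\tau h)^j$ is Lipschitz in $\tau$ with constants uniform once $h$ stays away from $\lambda/\sigma_{\max}$, which holds on the active region by \ref{enum:boundedness_spectral_norm}, \ref{enum:edge_stability}, and the covering hypotheses. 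Hence the weights of $\mu_p$ are feasible with objective $\theta\le p^{-2/3}\delta_p$ for some $\delta_p\stackrel{P}{\to}0$, so the optimizer $\hat w$ obeys the same bound (the truncation step shifts weights by $O(K^{-1}10^{-d})$, absorbed into $\delta_p$).

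For Stage~2, from $\theta\le p^{-2/3}\delta_p$ and the relative normalization in the loss I get $|\hat\calH_j(-\lambda\hat\varphi(\iz_i))-\hat Q_j(\iz_i)|=o_P(p^{-2/3})$ at every node, $j=1,2$; combined with Stage~1, with $\hat\varphi^{(k)}\to\varphi^{(k)}$ at the $p^{2/3}$ rate (Lemma~\ref{lemma:M_P_theorem}), and with the Lipschitz bound for $\hat\calH_j$ — whose poles $\lambda/\hat\sigma_k$ stay off the active region once $\hat\sigma_1\to\sigma_{\max}$, which follows from Corollary~\ref{corollary:discrete_case} and Algorithm~\ref{algo:estimation_rho} — this gives $|\hat\calH_j(-\lambda\varphi(\iz_i))-\calH_j(-\lambda\varphi(\iz_i))|=o_P(p^{-2/3})$ at the nodes, $j=1,2$. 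For part~(i) of the theorem, the nodes $-\lambda\varphi(\iz_i)$ are $o(p^{-2/3})$-dense in a complex half-disk about $0$ by \eqref{eq:consistency_condition1}; expanding $\calH_1$ in its Taylor/moment series there, Cauchy's estimates convert node-wise accuracy into control of the first $T$ moments of $\hat F_p-F^{\Sigma_\infty}$ with amplification $\nu^{-k}$ — this produces the $\exp(T)$ factor and the $(1+\nu)/\nu^2$ constant — while Jackson's polynomial-approximation theorem bounds the residual of a $1$-Lipschitz test function modulo its degree-$T$ approximant by $C/T$; adding the two yields the claimed inequality, and optimizing $T$ gives $\calD_W(\hat F_p,F^{\Sigma_\infty})\stackrel{P}{\to}0$. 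For part~(ii), the same node-wise estimates hold on a fixed real interval of $h$-values around $h_\beta$ (the image of the $\varphi$-interval in the covering hypothesis under $\varphi\mapsto-\lambda\varphi$), and with equicontinuity of $\{\hat\calH_1,\hat\calH_2\}$ there — and, for $\hat\calH_3=\tfrac12\hat\calH_2'$, a Vitali/Cauchy-estimate argument on a complex neighborhood where $\{\hat\calH_2\}$ is uniformly bounded — one obtains $\sup|\hat\calH_1-\calH_1|+\sup|\hat\calH_2-\calH_2|=o_P(p^{-2/3})$ and $\sup|\hat\calH_3-\calH_3|=o_P(1)$ near $h_\beta$.

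For Stage~3, note that near $x_0=\beta$ the maps of Theorem~\ref{thm:determine_s} are well-conditioned: one has $g(x_0)=h_0$, so the implicit-function Jacobian of \eqref{eq:determine_s} is $x'(h_\beta)>0$ (strict, since $x'$ is strictly decreasing by Lemma~\ref{lemma:monotonicity_x_h} with its only zero at the critical $h$ mapping to $\rho$, while $\beta<\rho$), and the linear equations for $s'(x_0),s''(x_0)$ in \eqref{eq:characterization_s2s3} share the nonzero coefficient $1-\zeta(x_0)\calH_2(g(x_0))=x'(h_0)>0$. Therefore Stage~2 gives $\hat s,\hat s'\to s,s'$ at the $p^{-2/3}$ rate and $\hat s''\to s''$ at rate $o_P(1)$ uniformly near $\beta$ (the Algorithm~\ref{algo:ODE} route instead uses Gr\"onwall, with $\hat\gamma_2=p/n_2\to\gamma_2$ at rate $o(p^{-2/3})$ by \ref{enum:high_dimensional_regime}). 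Since $\tfrac{d}{dx}\bigl[x^2 s'(x)\bigr]\big|_{x=\beta}=2\beta s'(\beta)+\beta^2 s''(\beta)>0$ by Lemma~\ref{lemma:extension_phi_to_s}(ii), inverting \eqref{eq:def_beta} yields $p^{2/3}|\hat\beta-\beta|\stackrel{P}{\to}0$, also using $p^{2/3}|\hat\gamma_1-\gamma_1|\to0$. Plugging into \eqref{eq:estimators_Theta1Theta2} and comparing with \eqref{eq:def_Theta1}--\eqref{eq:def_Theta2}: $\Theta_1$ is a smooth function of $(\beta,s(\beta),\gamma_1)$, so $p^{2/3}|\hat\Theta_1-\Theta_1|\stackrel{P}{\to}0$; $\Theta_2$ is a continuous function of $(\beta,s''(\beta),\gamma_1)$, so $|\hat\Theta_2-\Theta_2|\stackrel{P}{\to}0$.

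I expect the main obstacle to be Stage~2: the quantitative inversion converting a small $L_\infty$ fitting error into control of $\hat\calH_j$ (equivalently of $\hat F_p$) at the sharp $p^{-2/3}$ rate. This is an ill-posed, deconvolution-type problem — reading moments off a Cauchy transform near $0$ amplifies errors geometrically — and it is exactly here that the unusual form of the bound in part~(i), with its $\exp(T)/C\!/T$ trade-off and $(1+\nu)/\nu^2$ dependence on the half-disk radius, appears unavoidable; part~(ii) then hinges on the finer point that only $\hat\calH_1,\hat\calH_2$ need the $p^{-2/3}$ rate while $\hat\calH_3$ (hence $\hat\Theta_2$) needs only consistency, matching the reduced reliability of $\hat\varphi''$ remarked after Algorithm~\ref{algo:linear_program}.
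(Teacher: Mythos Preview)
Your three-stage architecture---bound the optimizer's loss by exhibiting the discretized truth as a near-feasible point, invert to control $\hat\calH_j$ (and $\hat F_p$), then propagate through Theorem~\ref{thm:determine_s} and \eqref{eq:def_beta}--\eqref{eq:def_Theta2}---matches the paper's proof, including the use of power-series expansion and Cauchy's coefficient estimates to turn uniform control of the Cauchy-type transform on a half-disk into moment bounds. Two implementation choices differ. For part~(i), the paper does not use Jackson's theorem: instead it converts moment control into control of the characteristic-function difference $t\mapsto (\mE_{\hat F_p}e^{itX}-\mE_{F^{\Sigma_\infty}}e^{itX})/t$ and its $t$-derivative, then invokes a smoothing inequality of Bobkov (Corollary~8.3 of \emph{Proximity of probability distributions in terms of Fourier--Stieltjes transforms}) bounding $\calD_W$ by $L^2$-norms of these on $[-T,T]$ plus a $C/T$ remainder; this is where the precise $(1+\nu)/\nu^2\,e^T$ constants originate, and your Jackson route would need extra work on the monomial coefficients of the degree-$T$ approximant to reproduce them. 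For part~(ii), the paper handles $\hat\calH_3$ not via Cauchy/Vitali on $\hat\calH_2'$ but by appealing directly to $\calD_W(\hat F_p,F^{\Sigma_\infty})\to 0$, treating $\tau\mapsto\tau^3/(\lambda-\tau h)^3$ as a Lipschitz test function near $h=h_\beta$. Your Vitali route is more self-contained under the part-(ii) hypotheses alone, but you would need to justify uniform boundedness of the family $\{\hat\calH_2\}$ on a genuine complex neighborhood (the nodes $-\lambda\hat\varphi(\iz_i)$ sit only near the real line), and your claim that $\hat\sigma_1\to\sigma_{\max}$ ``follows from Corollary~\ref{corollary:discrete_case} and Algorithm~\ref{algo:estimation_rho}'' is circular---those describe how $\hat\rho$ is computed \emph{from} $\hat\sigma_1$, not why $\hat\sigma_1$ is controlled. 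Stage~3 is identical: the paper simply cites smoothness of the maps in Theorem~\ref{thm:determine_s}.
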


\begin{lemma}\label{lemma:consistency_condition1}
For any \(\gamma_1\), \(\gamma_2\), \(\lambda\), and \(F^{\Sigma_p}\) satisfying \ref{enum:high_dimensional_regime}--\ref{enum:regular_edge}, there exists \(\nu > 0\) and a sequence of grids \(\{J_p\}\) such that Condition \eqref{eq:consistency_condition1} is satisfied.  
\end{lemma}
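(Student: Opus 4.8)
The plan is to construct the grid $J_p$ explicitly by choosing a suitable compact set $\calZ \subset \mathbb{C}^+$ on which $\varphi$ is well-controlled, and then taking $J_p$ to be the preimage under $\varphi$ of a fine mesh on $\varphi(\calZ)$. The target set to be covered in Condition \eqref{eq:consistency_condition1} is the half-disk $D(\nu)$; so I first need to understand the image $\varphi^{-1}(D(\nu) \cap \mathbb{C}^+)$. Recall from Lemma \ref{lemma:MP_W2} that $\varphi = \varphi_{\gamma_2, F^{\Sigma_\infty}}$ is the Stieltjes transform of the probability measure $\calW$, which by the Mar\v{c}enko--Pastur theory has bounded support in $[0,\infty)$; say $\supp(\calW) \subseteq [0, b_0]$. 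Then $\varphi$ extends analytically to $\mathbb{C} \setminus [0, b_0]$, and in particular is a conformal bijection from a neighborhood of $\infty$ (in the Riemann sphere) onto a neighborhood of $0$. Concretely, for $|\iz|$ large, $\varphi(\iz) = -1/\iz + O(|\iz|^{-2})$, so $\varphi$ maps the exterior of a large disk onto a punctured neighborhood of $0$; choosing $\nu$ small enough, $D(\nu) \cap \mathbb{C}^+ \subseteq \varphi(\calV)$ where $\calV = \{\iz \in \mathbb{C}^+ : |\iz| \geq M_0\}$ for a suitable $M_0 = M_0(\nu)$, together with a bounded buffer region near the real axis. The key point is to select a \emph{compact} region $\calZ \subset \mathbb{C}^+$ with $D(\nu) \cap \overline{\mathbb{C}^+} \subseteq \varphi(\calZ) \cup \{0\}$ — this requires truncating $\calV$ at a large but finite radius and separately handling the part of $D(\nu)$ near the real line, which corresponds under $\varphi^{-1}$ to points with large modulus but small positive imaginary part; I would take $\calZ$ to be a rectangle $[-A, A] \times [\eta_0, A] \subset \mathbb{C}^+$ for appropriately large $A$ and small $\eta_0 > 0$, arguing that $\varphi(\partial \calZ)$ encloses a region containing $D(\nu')$ for some $\nu' \le \nu$ by the argument principle / open mapping theorem.

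Next, I would exploit that $\varphi$ is analytic and nonconstant on a neighborhood of the compact set $\calZ$, hence Lipschitz there with some constant $L$, and moreover $\varphi'$ is bounded below in modulus away from its (isolated) zeros; by shrinking $\calZ$ slightly if necessary I can assume $|\varphi'| \geq c > 0$ on $\calZ$, so that $\varphi$ is bi-Lipschitz on $\calZ$ with constants $c \le |\varphi'| \le L$. Now define the grid: let $G_p$ be a uniform mesh of $\mathbb{C}^+$ with spacing $\operatorname{size}(G_p) = o(p^{-2/3})$ intersected with $\calZ$, and set $J_p = G_p$. Then for any $\iz \in D(\nu') \cap \mathbb{C}^+$, writing $\iz = \varphi(w)$ with $w \in \calZ$, there is $y \in J_p$ with $|w - y| \leq \operatorname{size}(G_p)$, whence $|\iz - \varphi(y)| = |\varphi(w) - \varphi(y)| \leq L \operatorname{size}(G_p) = o(p^{-2/3})$. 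Taking the supremum over $\iz \in D(\nu')$ gives exactly Condition \eqref{eq:consistency_condition1} with $\nu$ replaced by $\nu'$. Since the statement only asserts existence of \emph{some} $\nu > 0$, this suffices.

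The main obstacle I anticipate is the geometry near the real axis: one must verify that a \emph{compact} region $\calZ \subset \mathbb{C}^+$ (bounded away from $\mathbb{R}$) has image under $\varphi$ that still covers a full half-disk $D(\nu')$ including points on the real segment $(0, \nu')$. This is delicate because the boundary segment $[0, \nu')$ of the half-disk is the image under $\varphi$ of real points outside $[0, b_0]$ — which are not in $\mathbb{C}^+$. The resolution is that Condition \eqref{eq:consistency_condition1} only requires the mesh points $\varphi(y)$, $y \in J_p$, to come $o(p^{-2/3})$-close to each $\iz \in D(\nu)$; a real target point $\iz \in (0,\nu')$ can be approximated by $\varphi(w)$ for $w$ a complex point just above the real axis with $\Im(w)$ small but fixed (independent of $p$), since $\varphi$ is continuous up to the real boundary outside $\supp(\calW)$ and $\varphi$ takes real values on $(b_0, \infty) \cup (-\infty, 0)$ in the limit. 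Thus choosing $\eta_0 > 0$ small enough that $\varphi$ maps $\{\Im w = \eta_0, \ |w| \ge M_0\}$ within distance $\nu'/2$ of the real segment, the remaining $o(p^{-2/3})$ approximation is handled by the mesh as above. I would also need the auxiliary fact — which follows from \ref{enum:regular_edge} and the square-root edge behavior, or simply from $\calW$ having a density bounded away from the relevant region — that $\varphi$ has no zeros of $\varphi'$ accumulating at the part of $\calZ$ we use; since zeros of the analytic function $\varphi'$ are isolated in $\mathbb{C}^+$, a generic choice of rectangle $\calZ$ avoids them, and we may perturb $A, \eta_0$ slightly to ensure $|\varphi'| \geq c > 0$ on $\calZ$. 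Assembling these pieces yields the claimed grid and completes the proof.
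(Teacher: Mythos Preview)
Your proposal is correct and follows essentially the same route as the paper: observe that $\varphi$ is the Stieltjes transform of $\calW$, use the large-$|z|$ behavior to see that $\varphi$ maps a region of $\mathbb{C}^+$ bounded away from $\supp(\calW)$ onto a set containing a half-disk $D(\nu)$, and then take $J_p$ to be a fine mesh whose image under $\varphi$ is $o(p^{-2/3})$-dense in $D(\nu)$. The paper's proof is only a three-line sketch, so your version is considerably more careful---in particular your discussion of the real boundary of $D(\nu)$ and the compactness of $\calZ$ fills in details the paper leaves implicit. One minor simplification: you do not actually need the lower bound $|\varphi'|\geq c$ (bi-Lipschitz); the upper Lipschitz bound $|\varphi'|\leq L$ on the compact set $\calZ$ already suffices to transfer the $o(p^{-2/3})$ mesh spacing in the $w$-domain to the $\varphi(w)$-domain.
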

Theorem \ref{thm:consistency_estimators} establishes two fundamental results. First, it provides a Berry-Esseen type bound on the Wasserstein distance between the estimated population ESD and its limiting counterpart. Second, it demonstrates that the proposed estimator achieves \(o_P(p^{-2/3})\) consistency, provided that $\{ \varphi(y), y\in J_p\}$ is dense around $-h_{\beta}/\lambda$. We can find a sequence of $J_p$ satisfying the condition when \(\beta\) is not excessively close to \(\rho\). The positioning of \(\beta\) is influenced by the parameters \(\gamma_1\) and \(\gamma_2\). Specifically, $\beta$ approaches $0$, as $\gamma_1$ grows. Reversely, for small $\gamma_1$, $\beta$ is close to $\rho$.   


\begin{lemma}\label{lemma:consistency_condition3}
Fix any \(\gamma_2\), \(\lambda\), and \(F^{\Sigma_p}\) satisfying \ref{enum:high_dimensional_regime}--\ref{enum:regular_edge}. There exists \(\gamma_1^{(0)}\) such that for any \(\gamma_1 > \gamma_1^{(0)}\), we can find a sequence of grids \(\{J_p\}\) such that the condition in (ii) of Theorem \ref{thm:consistency_estimators} is satisfied.
\end{lemma}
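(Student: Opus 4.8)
The plan is to track the positions of $\beta$ and of the point at which $\varphi$ must be sampled as $\gamma_1\to\infty$, noting that $\gamma_2,\lambda,F^{\Sigma_\infty}$ (hence $\rho$, $s$, $\calG_\lambda$, $\calW$, $\varphi$) are held fixed while only $\beta$ (and $h_\beta$) move with $\gamma_1$. I will show these positions converge to a fixed configuration lying strictly to the left of $\supp\calW$, where $\varphi$ is real‑analytic and increasing, so that a single fixed grid works once $\gamma_1^{(0)}$ is taken large enough. \emph{Step 1 (the edge parameter collapses).} By Lemma~\ref{lemma:extension_phi_to_s}(ii), $s'>0$ and $s''>0$ on $[0,\rho)$, so $\beta\mapsto\beta^2 s'(\beta)$ is continuous and strictly increasing, vanishes at $0$, and tends to $\infty$ as $\beta\uparrow\rho$ (Lemma~\ref{lemma:s_prime_infinity} together with $\rho>0$). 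Hence it is a bijection of $[0,\rho)$ onto $[0,\infty)$, and the unique solution $\beta=\beta(\gamma_1)$ of $\beta^2 s'(\beta)=1/\gamma_1$ satisfies $\beta(\gamma_1)\downarrow 0$ as $\gamma_1\to\infty$.

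\emph{Step 2 (identifying the sampling point for $\varphi$).} By Theorem~\ref{thm:determine_s}, $h_\beta=g(\beta)$ with $g(x)=x-[1+\gamma_2 s(x)]^{-1}$, and reading Eq.~\eqref{eq:MP_G_rewrite} at $z=\beta$ gives $\calH_1(h_\beta)=s(\beta)$. Since $s$ is analytic at $0$ with $0<s(0)=\int\tau^{-1}\,d\calG_\lambda(\tau)<\infty$ (using $\rho>0$), Step~1 yields $h_\beta\to h_\ast:=-[1+\gamma_2 s(0)]^{-1}\in(-1,0)$. Put
\[
z_\beta \;=\; \lambda\gamma_2 s(\beta)+\frac{\lambda}{h_\beta}.
\]
Plugging $\varphi=-h_\beta/\lambda$ into the Marčenko--Pastur equation \eqref{eq:MP_W2} for $\calW$ turns its right‑hand side into $\lambda/h_\beta+\gamma_2\lambda\calH_1(h_\beta)=\lambda/h_\beta+\gamma_2\lambda s(\beta)=z_\beta$; since $-h_\beta/\lambda>0$ and, for $\gamma_1$ large, $z_\beta$ is real and lies to the left of $\supp\calW\subseteq[0,\infty)$, this positive value is the unique positive real root of \eqref{eq:MP_W2} at $z_\beta$, i.e.\ $\varphi(z_\beta)=\int(\tau-z_\beta)^{-1}\,d\calW(\tau)$. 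Letting $\gamma_1\to\infty$ and using $1/h_\ast=-(1+\gamma_2 s(0))$ gives $z_\beta\to-\lambda$ and $-h_\beta/\lambda\to-h_\ast/\lambda=\varphi(-\lambda)>0$.

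\emph{Step 3 (the grid).} Fix $K=[-2\lambda,-\lambda/2]$; since $-2\lambda<0\le\inf\supp\calW$, $K$ lies strictly to the left of $\supp\calW$, so $\varphi|_K$ is real‑analytic with $\varphi'=\int(\tau-\cdot)^{-2}\,d\calW>0$, hence an increasing homeomorphism onto $[\varphi(-2\lambda),\varphi(-\lambda/2)]$, which contains $\varphi(-\lambda)$ in its interior. Take $J_p$ to be a uniform grid on $K$ with $\operatorname{size}(J_p)\to0$ (on the real line, where $\varphi$ extends continuously; one may equivalently use $x_k+\mathrm{i}\eta_p$ with $\eta_p\downarrow0$). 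Then $\{\varphi(y):y\in J_p\}$ is a grid of spacing $\le\|\varphi'\|_{\infty,K}\operatorname{size}(J_p)\to0$ spanning $[\varphi(-2\lambda),\varphi(-\lambda/2)]$. Choose $\gamma_1^{(0)}$ so large that for all $\gamma_1>\gamma_1^{(0)}$ both $z_\beta\in K$ and $|-h_\beta/\lambda-\varphi(-\lambda)|<\varepsilon$, where $2\varepsilon:=\min\{\varphi(-\lambda)-\varphi(-2\lambda),\,\varphi(-\lambda/2)-\varphi(-\lambda)\}$; then $[-h_\beta/\lambda-\varepsilon,-h_\beta/\lambda+\varepsilon]$ sits inside $[\varphi(-2\lambda),\varphi(-\lambda/2)]$ with a fixed margin, so $\{\varphi(y):y\in J_p\}$ eventually covers it, which is the condition in (ii) of Theorem~\ref{thm:consistency_estimators}. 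If one also wants \eqref{eq:consistency_condition1} to hold simultaneously, replace $J_p$ by its union with the grid supplied by Lemma~\ref{lemma:consistency_condition1}; this does not disturb the covering just established.

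\emph{Main obstacle.} The one genuinely delicate point is Step~2: correctly passing between the coupled Marčenko--Pastur equations — \eqref{eq:MP_W2} governing $\varphi$, \eqref{eq:MP_G}/\eqref{eq:MP_G_rewrite} governing $s$, together with $\calH_1(h_\beta)=s(\beta)$ — so as to pin down the real point $z_\beta$ at which $\varphi$ must be evaluated, and to verify that its limit $-\lambda$ (and a fixed neighborhood of it, for $\gamma_1$ large) lies strictly to the left of $\supp\calW$, where $\varphi$ is genuinely real‑valued, analytic, and monotone. Steps~1 and 3 are then routine continuity and monotonicity arguments.
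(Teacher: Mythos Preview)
Your proof is correct and follows essentially the same route as the paper's three-sentence sketch: show $\beta\to 0$ and hence $-h_\beta/\lambda\to\varphi(-\lambda)$ as $\gamma_1\to\infty$, then take $J_p$ to be a grid dense in a fixed real interval around $-\lambda$ (the paper uses $[-3\lambda/2,-\lambda/2]$; you use $[-2\lambda,-\lambda/2]$). You supply the details the paper omits, in particular the identification of $z_\beta$ and the verification via the Mar\v{c}enko--Pastur equation that the limiting value $-h_*/\lambda$ coincides with $\varphi(-\lambda)$.
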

Lemma \ref{lemma:consistency_condition3} indicates that the condition in (ii) of Theorem \ref{thm:consistency_estimators} can be satisfied when \(\gamma_1\) is large. This scenario corresponds to the regime where \(n_2\) is large and \(n_1\) is relatively small compared to \(p\). 
The numerical simulation results reported in Tables S.4.1 and S.4.2 
of the Supplementary Material support these findings, showing that the estimation precision decreases as \(p/n_2\) increases or as \(p/n_1\) decreases under the given simulation settings. See Section \ref{subsec:estimation_precision} for details. Nonetheless, the overall estimation precision remains well-controlled within a reasonable range for all settings under consideration.

\section{Power analysis under low-rank alternatives}\label{sec:power_analysis}
In this section, we study the power of the proposed family of tests under a class of low-rank alternatives, with particular emphasis on the role of the regularization parameter $\lambda$. We begin with a deterministic alternative framework, which is commonly used in the literature for analyzing asymptotic power. We then adopt a Bayesian perspective by introducing a class of priors on the alternatives that capture structural information. Within this framework, we study the effect of $\lambda$ and compare the proposed method with the limiting cases when $\lambda\to 0$ \citep{han2016tracy} and $\lambda\to\infty$ both analytically and numerically. 

\subsection{Deterministic alternatives}\label{subsec:deterministic_alternatives}

Consider a sequence of rank-one deterministic alternatives as follows. 
\begin{description}
\item[\textbf{DA}] Under $H_a: BC \neq 0$, assume that $BC$ satisfies 
\begin{equation}\label{eq:alternatives_deterministic}
\frac{1}{n_1} BC ( C^T(XX^T)^{-1} C)^{-1} C^T B^T = p^{s} d_p q_pq_p^T, 
\end{equation}
where $s>0$ is a constant controlling the signal magnitude; $d_p$ is a sequence of positive scalars such that $d_p \to d>0$; and $q_p\in \mathbb{R}^p$ is a sequence of unit vectors representing the direction in which the signal resides.
\end{description}

\begin{lemma}\label{lemma:alternatives_deterministic}
Suppose that \ref{enum:high_dimensional_regime}-\ref{enum:regular_edge} hold and $BC$ is as in  \textbf{DA}. For any fixed $\lambda>0$, we have as $p\to\infty$, 
\[ \frac{1}{p^s}\ell_{\max}(\bF_\lambda) =  d_p q_p^T \left( \lambda \varphi_p(-\lambda) \Sigma_p + \lambda I_p \right)^{-1} q_p   + o_P(1).\]
\end{lemma}

Given the estimators $\hat\Theta_1(\lambda)$ and $\hat{\Theta}_2(\lambda)$ of $\Theta_{1p}(\lambda)$ and $\Theta_{2p}(\lambda)$, respectively, define the standardized test statistic as 
\begin{equation}\label{eq:std_stat}
\tilde{\ell}(\lambda) = \frac{p^{2/3}}{\hat{\Theta}_2(\lambda)}\{ \ell_{\max}(\bF_\lambda) - \hat{\Theta}_1(\lambda)\}.
\end{equation}
\begin{corollary}\label{corollary:power_consistency_deterministic}
Suppose that \ref{enum:high_dimensional_regime}-\ref{enum:regular_edge} hold and $BC$ is as in  \textbf{DA}. Fix $\lambda>0$. If $\hat\Theta_1(\lambda) - \Theta_{1p}(\lambda) \stackrel{P}{\longrightarrow}0$ and  $\hat\Theta_2(\lambda) - \Theta_{2p}(\lambda) \stackrel{P}{\longrightarrow}0$, then the power of the test is such that, as $p\to\infty$,
\[\mP\left( \tilde{\ell}(\lambda) > \TW_1(1-\alpha)\mid BC\right) \to 1,\]
where $\TW_1(1-\alpha)$ denotes the $(1-\alpha)$-quantile of $\TW_1$.
\end{corollary}

\subsection{Probabilistic alternatives} \label{subsec:probabilistic_alternatives}

While deterministic alternatives offer valuable insight, they can be somewhat restrictive for a systematic investigation of the power properties of the test. To address this limitation, we consider probabilistic alternatives in the form of a sequence of prior distributions on $BC$. This approach offers greater flexibility, allowing structural information about the regression coefficient matrix $B$ and the constraint matrix $C$ to be incorporated naturally through the choice of prior distributions. Such a probabilistic approach has been previously proposed and explored in \citet{li2020adaptable} and \citet{li2020high}.

\begin{description}
	\item[\textbf{PA}] Assume that under $H_a$, $BC$ is such that 
	\begin{equation}\label{eq:alternatives_prob}
	\frac{1}{n_1} BC ( C^T(XX^T)^{-1} C)^{-1} C^T B^T = p^{s-1} D^{1/2} \nu \nu^T D^{1/2},   
	\end{equation} 
	where $s>0$ is a constant controlling the signal magnitude; $D$ is the $p\times p$ symmetric covariance matrix satisfying $\|D\|_2 <\infty$; and $\nu$ is a random vector with independent entries having mean zero, variance one, and finite fourth moments.  
\end{description}
Under the \textbf{PA} model, the signal is aligned with the eigen-direction defined by the random vector $D^{1/2}\nu$. Structural information can be incorporated through appropriate choices of the covariance matrix $D$ and the distribution of $\nu$. For instance, $D$ may encode community-based structure or stationary auto-covariance patterns. The vector $\nu$ can be selected to control signal sparsity: a Gaussian prior yields a dense signal across all coordinates, while a Bernoulli prior induces sparsity by concentrating the signal on a subset of coordinates.

\begin{lemma}\label{lemma:alternatives_prob}
Suppose that \ref{enum:high_dimensional_regime}-\ref{enum:regular_edge} hold and $BC$ is as in  \textbf{PA}. For any fixed $\lambda>0$, we have as $p\to\infty$, 
\[ \frac{1}{p^s}\ell_{\max}(\bF_\lambda) =  \frac{1}{p}  \tr \left[\left( \lambda \varphi_p(-\lambda) \Sigma_p + \lambda I_p \right)^{-1} D\right]   + o_P(1).\]
\end{lemma}

\begin{corollary}\label{corollary:power_consistency_prob}
Suppose that \ref{enum:high_dimensional_regime}-\ref{enum:regular_edge} hold and $BC$ is as in  \textbf{PA}. Fix $\lambda>0$. If $\hat\Theta_1(\lambda) - \Theta_{1p}(\lambda) \stackrel{P}{\longrightarrow}0$ and  $\hat\Theta_2(\lambda) - \Theta_{2p}(\lambda) \stackrel{P}{\longrightarrow}0$, then the power of the test is such that, as $p\to\infty$,
\[\mP\left( \tilde{\ell}(\lambda) > \TW_1(1-\alpha)\mid BC \right) \xrightarrow{\hspace*{0.2cm}\mP_{BC}\hspace*{0.2cm}} 1,\]
where $\xrightarrow{\hspace*{0.2cm}\mP_{BC}\hspace*{0.2cm}}$ denotes convergence in probability with respect to the prior measure of $BC$. 
\end{corollary}

Lemma \ref{lemma:alternatives_prob} shows that, under the rank-one alternative as in \textbf{PA}, the power of the proposed test is primarily governed—up to a scaling factor—by the term 

\begin{equation}\label{eq:def_Xi}
\begin{split}
&\SNR_p(\lambda, D) = \xi_p(\lambda, D)/\Theta_{2p}(\lambda),\quad\mbox{where} \\
&\xi_p(\lambda, D) = p^{-1}\tr\left[\left(\lambda \varphi_p(-\lambda) \Sigma_p + \lambda I_p \right)^{-1} D \right].    
\end{split}
\end{equation}
We interpret $\SNR_p(\lambda,D)$ as the scaled asymptotic signal-to-noise ratio (SNR).

\subsection{Power comparison}\label{subsec:power_comparison}
\textcolor{black}{Recall from Remark~\ref{remark:link_to_literature} that the statistic $\ell_H$ in \citet{han2016tracy} can be interpreted as the limiting case of $\ell_{\max}(\bF_\lambda)$ as $\lambda \to 0$. On the other hand, as $\lambda \to \infty$, the proposed test reduces to one based on the largest eigenvalue of $\bW_1$, since $\lambda \bW_1 (\bW_2+\lambda I_p)^{-1} \to \bW_1$ pointwise. In this subsection, we derive the asymptotic behavior in these two limiting regimes and compare them with the proposed method. 
\begin{lemma}\label{lemma:alternative_Han2016}
Suppose that Conditions~\ref{enum:high_dimensional_regime}--\ref{enum:regular_edge} hold and that $BC$ is in \textbf{PA}. (i) If $\gamma_2 <1$, as $p\to\infty$, we have $p^{-s}\ell_{H} =  (1-\gamma_2)^{-1}\, p^{-1}\tr(D \Sigma_p^{-1}) + o_P(1)$;
(ii) If $\gamma_2 >1$ and $\gamma_1^{-1} + \gamma_2^{-1}>1$, then for any constant $\varepsilon>0$, we have 
\[\mP\Big(p^{-s} \ell_H  \leq \gamma_2^{-1} (\gamma_2-1)^{-1} p^{-1} \tr(D \Sigma_p^{-1}) + \varepsilon \Big) \longrightarrow1, \quad \text{as }p\to\infty.\]
\end{lemma}
We denote the asymptotic SNR of the test using $\ell_H$ to be $\SNR_p(0,D)$, after the same scaling as that in $\SNR_{p}(\lambda, D)$. Using the normalization parameter $\sigma_p$ defined in (2.7) of \citet{han2016tracy}, we obtain an upper bound on $\SNR_p(0,D)$ as
\begin{equation}\label{eq:SNR_0}
{\rm SNR}_{p}(0,D)  \leq \Theta_{H}^{-1} \,(\gamma_2^{-1} \wedge 1)\, |\gamma_2 - 1|^{-1}\, p^{-1}\tr(D \Sigma_p^{-1}).
\end{equation}
Here, $\Theta_{H}$ is the counterpart of $\Theta_{2p}(\lambda)$ derived using $\sigma_p$ after transformation. We give an explicit expression for $\Theta_{H}$ in Section~S.3 
of the Supplementary Material. Notably, we can show that $\Theta_{H}^{-1}|\gamma_2 -1|^{-1} \leq \gamma_1^{-2/3} |\gamma_2-1|^{-1} (1-(\gamma_2 \wedge \gamma_2^{-1} )^{1/2})^2$. It then follows that when $\gamma_2$ is close to $1$,
\[{\rm SNR}_p(0,D)\lesssim \frac{1}{2}\gamma_1^{-2/3} (1- \sqrt{\gamma_2 \wedge \gamma_2^{-1}} ) \tr(D \Sigma_p^{-1}).\] 
This bound shows that the SNR deteriorates as $\gamma_2 \to 1\pm$ for any sequence of $D$ whose spectral norms remain uniformly bounded. By contrast, when $\gamma_2$ is away from $1$, ${\rm SNR}_p(0,D)$ becomes relatively large. See Section S.3 for the proof of the technical results.} 


\textcolor{black}{
In the limiting regime when $\lambda \to \infty$, the test based on $\ell_{\max}(\bW_1)$ can be viewed as an instance of replacing $\bW_2$ by $I_p$, an idea originally proposed in \cite{bai1996effect} in the context of two-sample mean testing. Accordingly, the associated scaled asymptotic SNR is given by 
\begin{equation}\label{eq:SNR_infty}
\SNR_p(\infty,D) \coloneqq \lim_{\lambda \to \infty} \SNR_p(\lambda,D) =  \frac{\lim_{\lambda \to \infty}\lambda \xi_p(\lambda,D)}{\lim_{\lambda \to \infty}\lambda\Theta_{2p}(\lambda) }  =\frac{p^{-1}\tr(D)}{\Theta_{2p}(\infty)}.
\end{equation}
We give an explicit expression for $\Theta_{2p}(\infty)$ in Section~5.2 of the Supplementary Material. Notably, this limit is independent of $\gamma_2$, indicating that the resulting test is particularly advantageous when $\gamma_2$ is excessively large.} 

\textcolor{black}{The proposed method can be viewed as interpolating between $\ell_H$ and $\ell_{\max}(\bW_1)$: small values of $\lambda$ produce behavior close to the former, whereas large values of $\lambda$ approach the latter. However, a direct analytical characterization of how $\lambda$ influences the SNR is intractable, due to the implicit and nonlinear dependence of $\Theta_{2p}(\lambda)$ on $\lambda$. In the next section, we therefore investigate this relationship numerically under several representative settings.}

\subsection{Numerical study: Effect of the regularization parameter}\label{subsec:effect_lambda}

\textcolor{black}{We examine four spectral structures of $\Sigma_p$: the identity model (Identity), a polynomially decaying spectrum (Poly-Decay), an AR(1) autocovariance structure (AR-ACF), and a three–point mixture model (Point-Mix). For each choice of $\Sigma_p$, we consider two signal structures, $D \propto I_p$ and $D \propto \Sigma_p$. The detailed specifications are provided in Section~S.3 
of the Supplementary Material.}

\begin{figure}[htbp]
  \centering 
  \begin{subfigure}{0.24\textwidth}
    \includegraphics[width=\linewidth,height=0.6\linewidth]{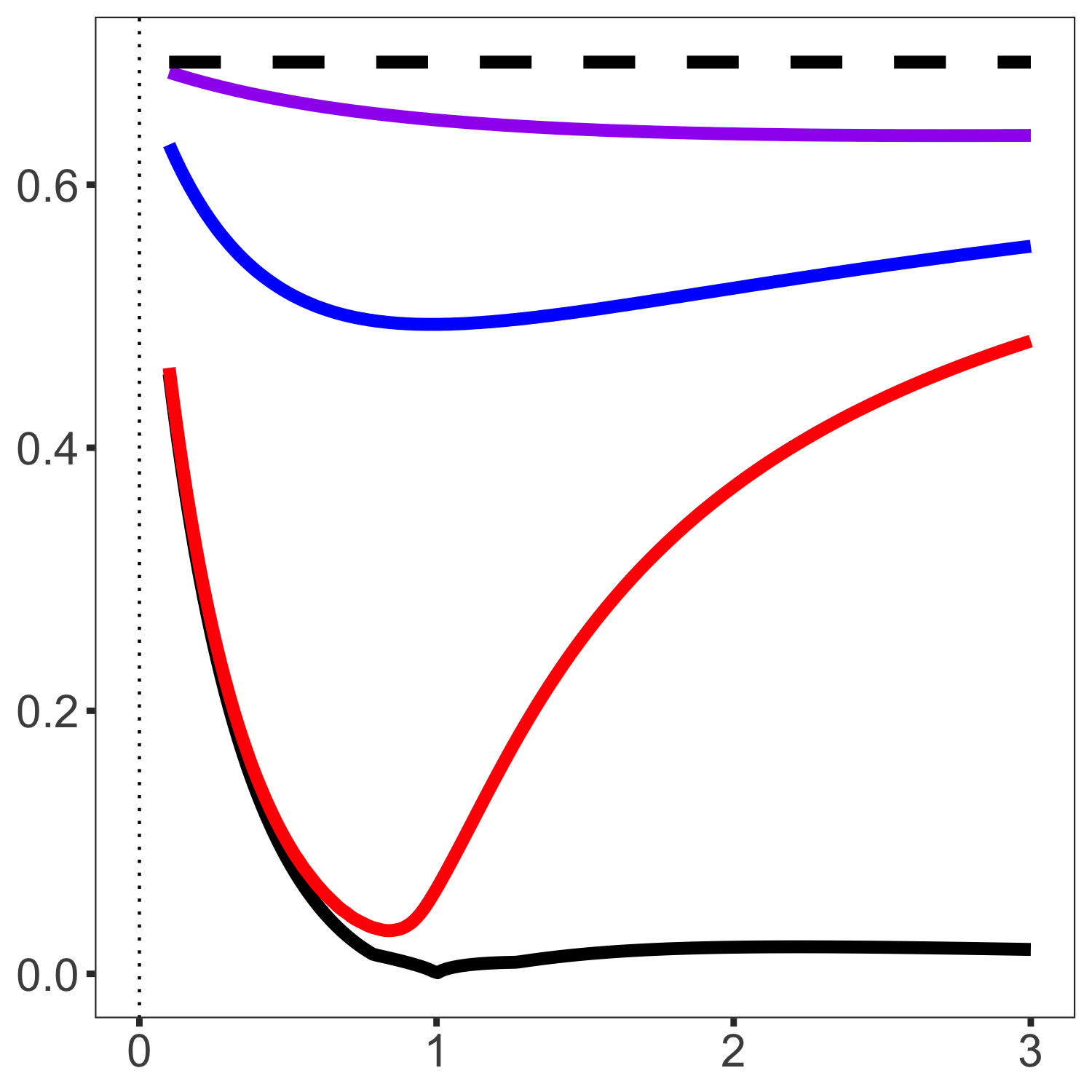}
   \end{subfigure}
  \hfill
  \begin{subfigure}{0.24\textwidth}
    \includegraphics[width=\linewidth,height=0.6\linewidth]{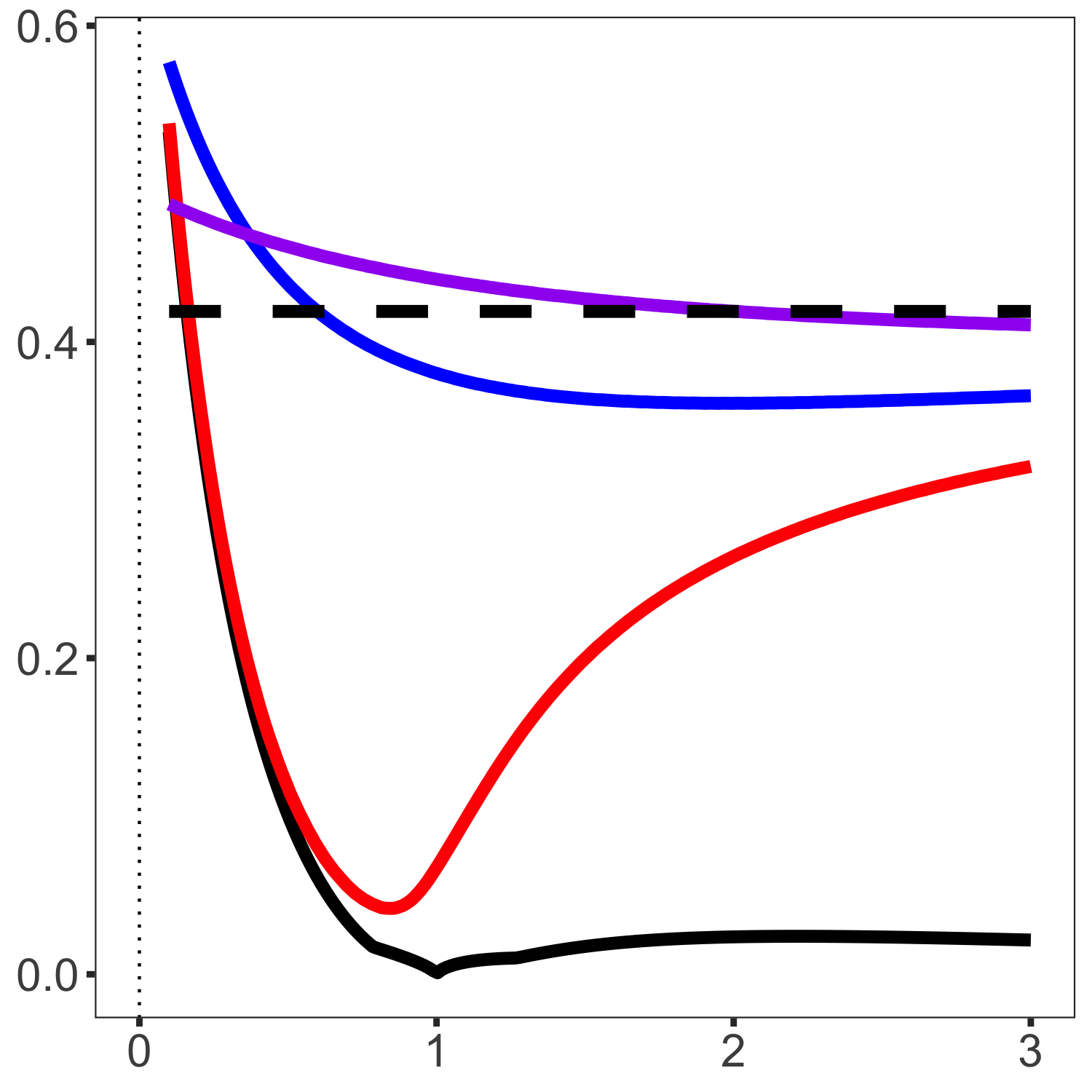}
   \end{subfigure}
  \hfill
  \begin{subfigure}{0.24\textwidth}
    \includegraphics[width=\linewidth,height=0.6\linewidth]{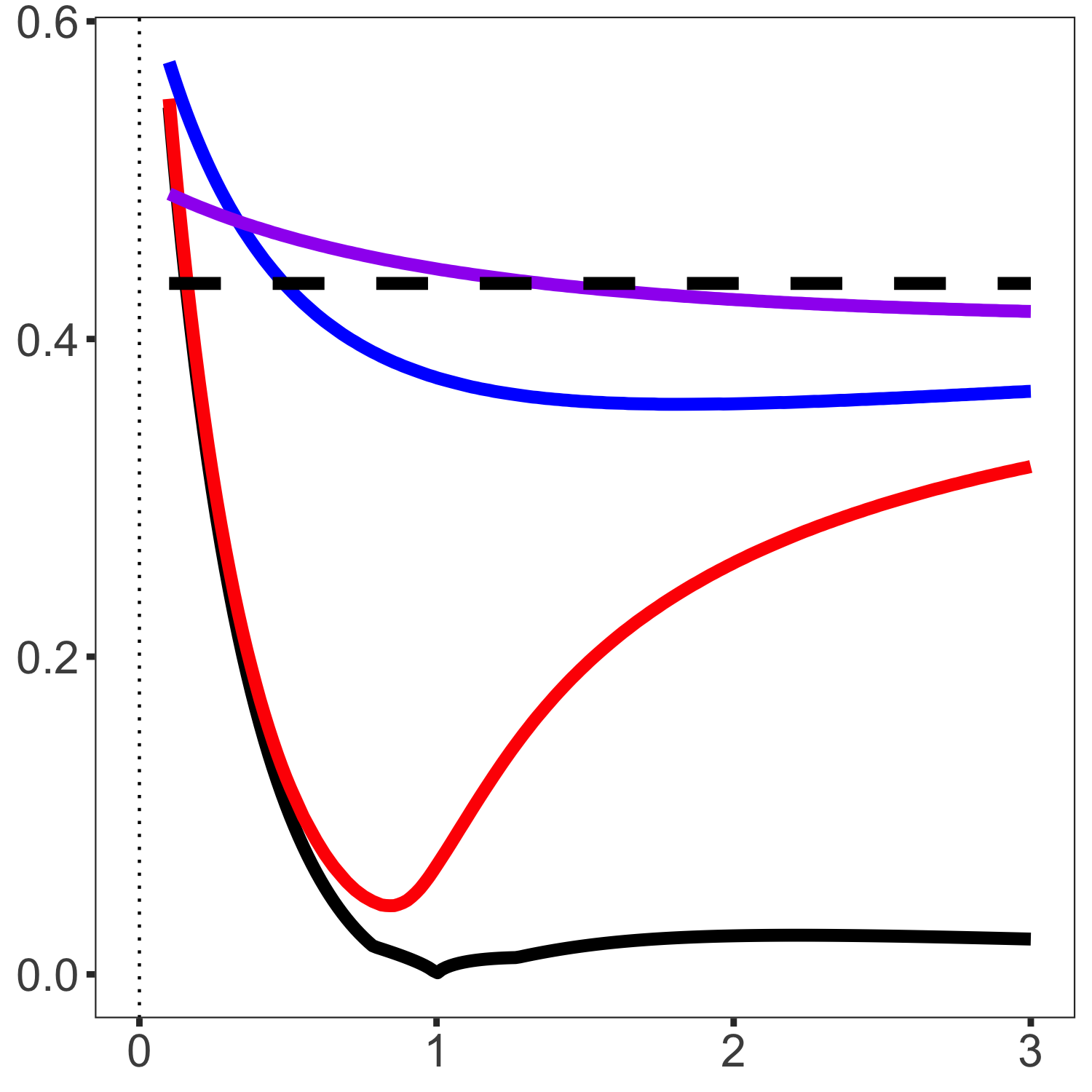}
   \end{subfigure}
  \begin{subfigure}{0.24\textwidth}
    \includegraphics[width=\linewidth,height=0.6\linewidth]{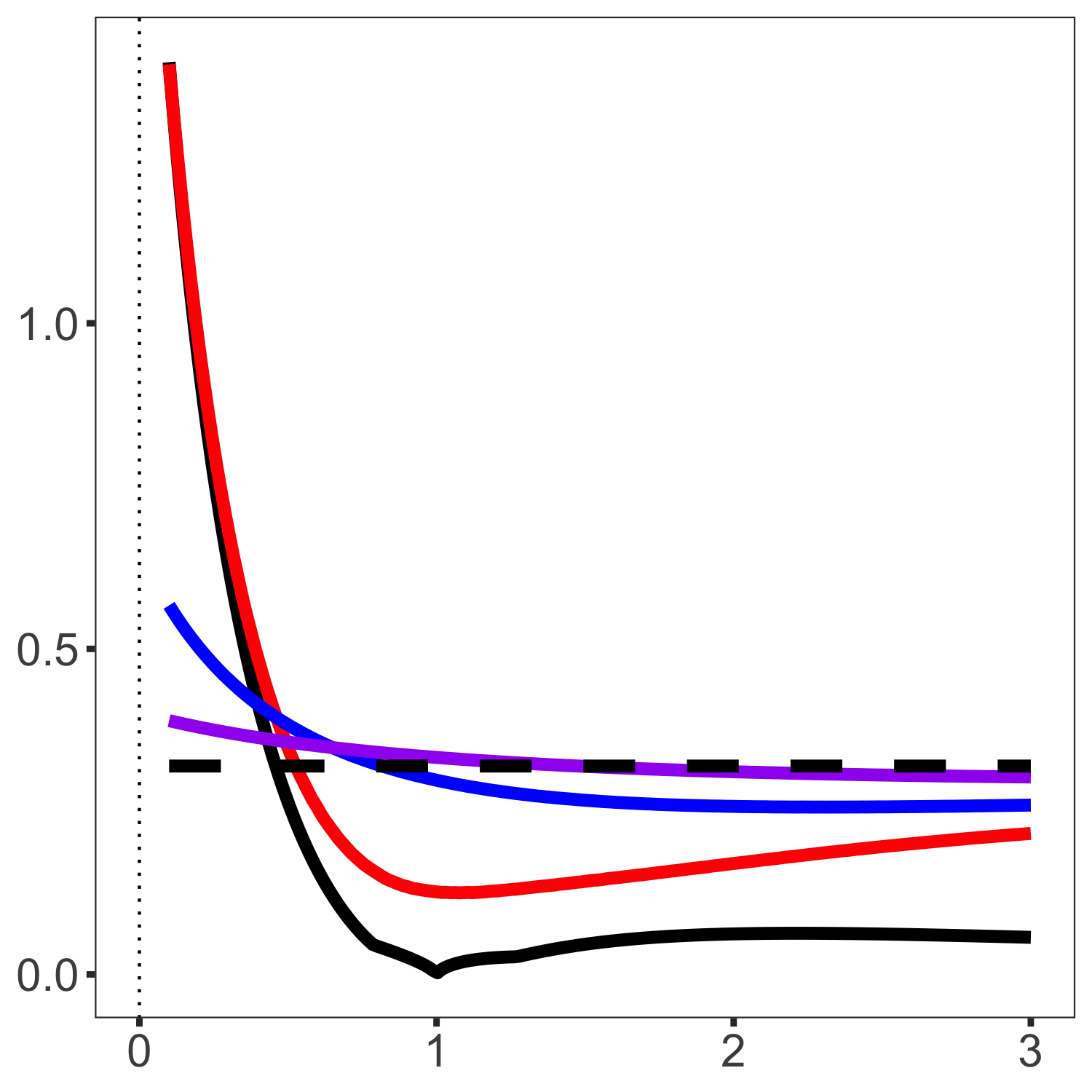}
   \end{subfigure}
  \vfill
  \begin{subfigure}{0.24\textwidth}
    \includegraphics[width=\linewidth,height=0.6\linewidth]{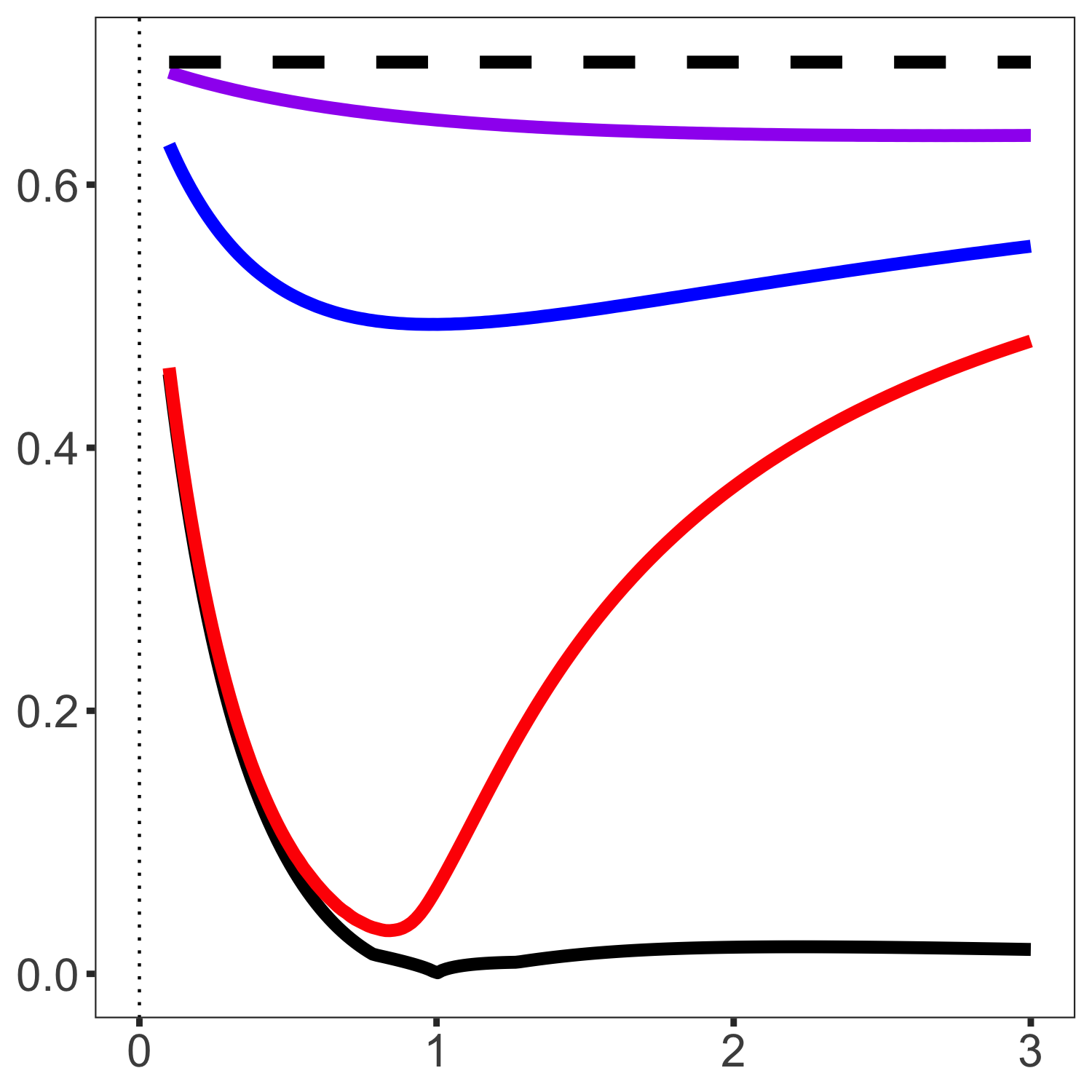}
   \end{subfigure}
  \hfill
  \begin{subfigure}{0.24\textwidth}
    \includegraphics[width=\linewidth,height=0.6\linewidth]{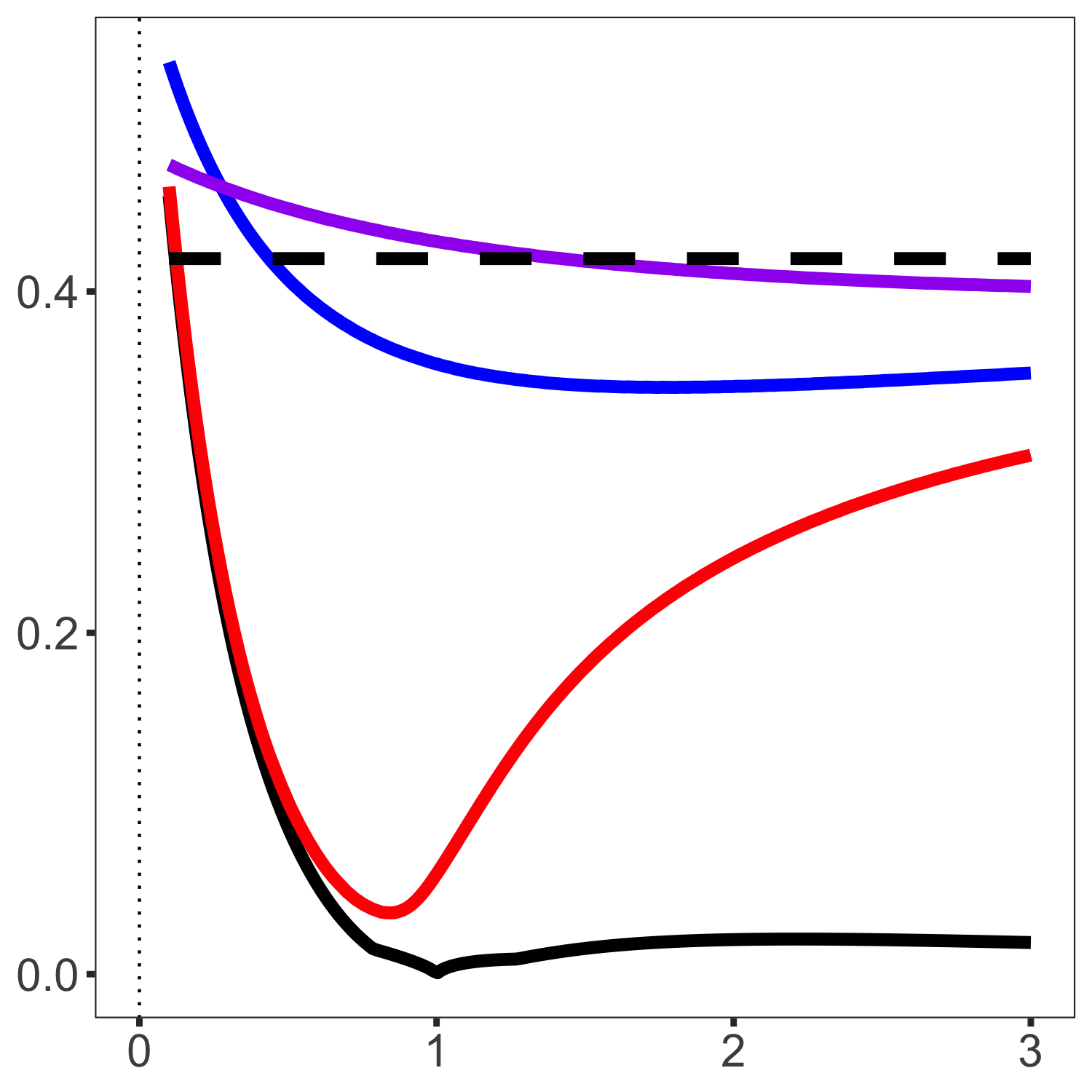}
   \end{subfigure}
   \hfill
   \begin{subfigure}{0.24\textwidth}
    \includegraphics[width=\linewidth,height=0.6\linewidth]{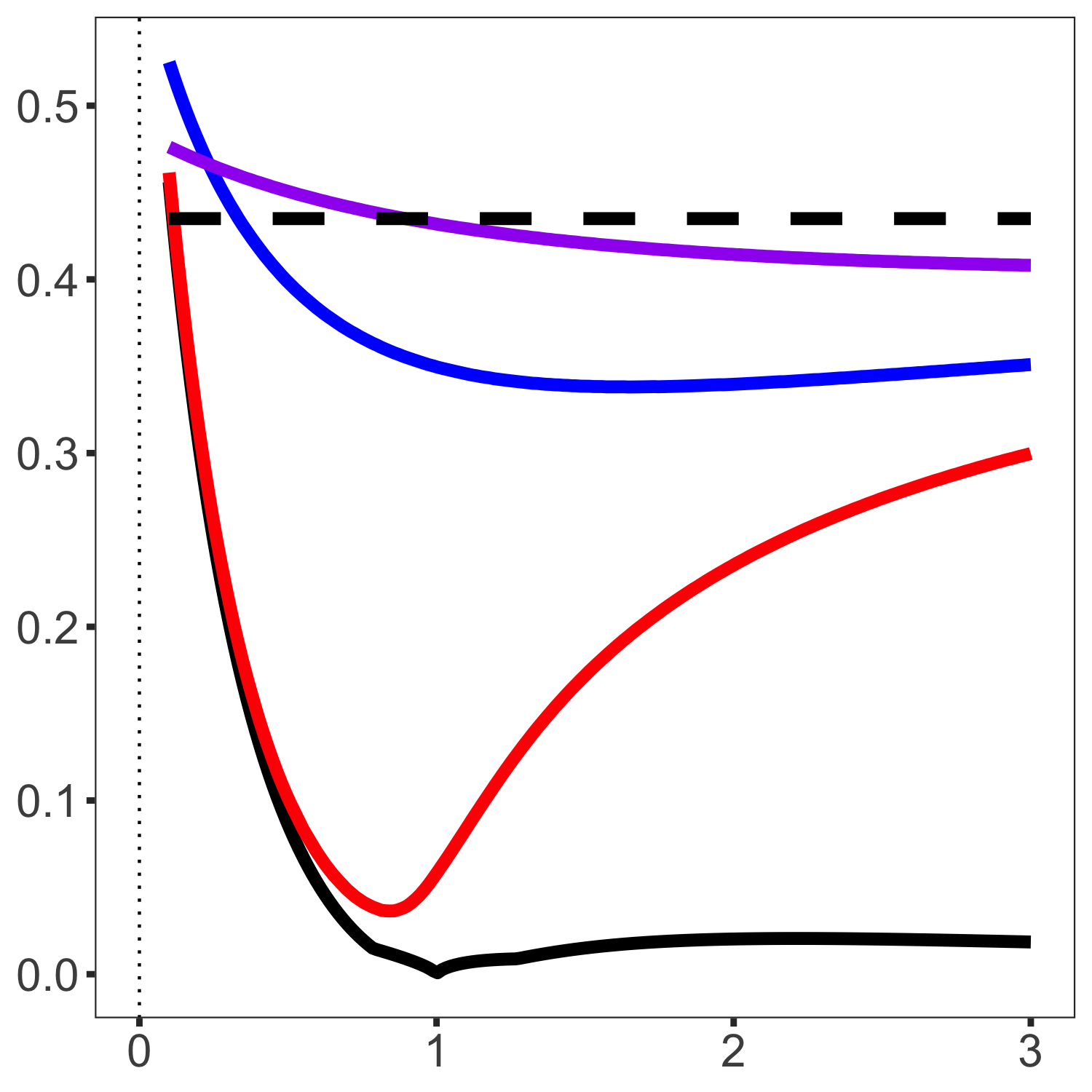}
   \end{subfigure}
  \hfill
  \begin{subfigure}{0.24\textwidth}
    \includegraphics[width=\linewidth,height=0.6\linewidth]{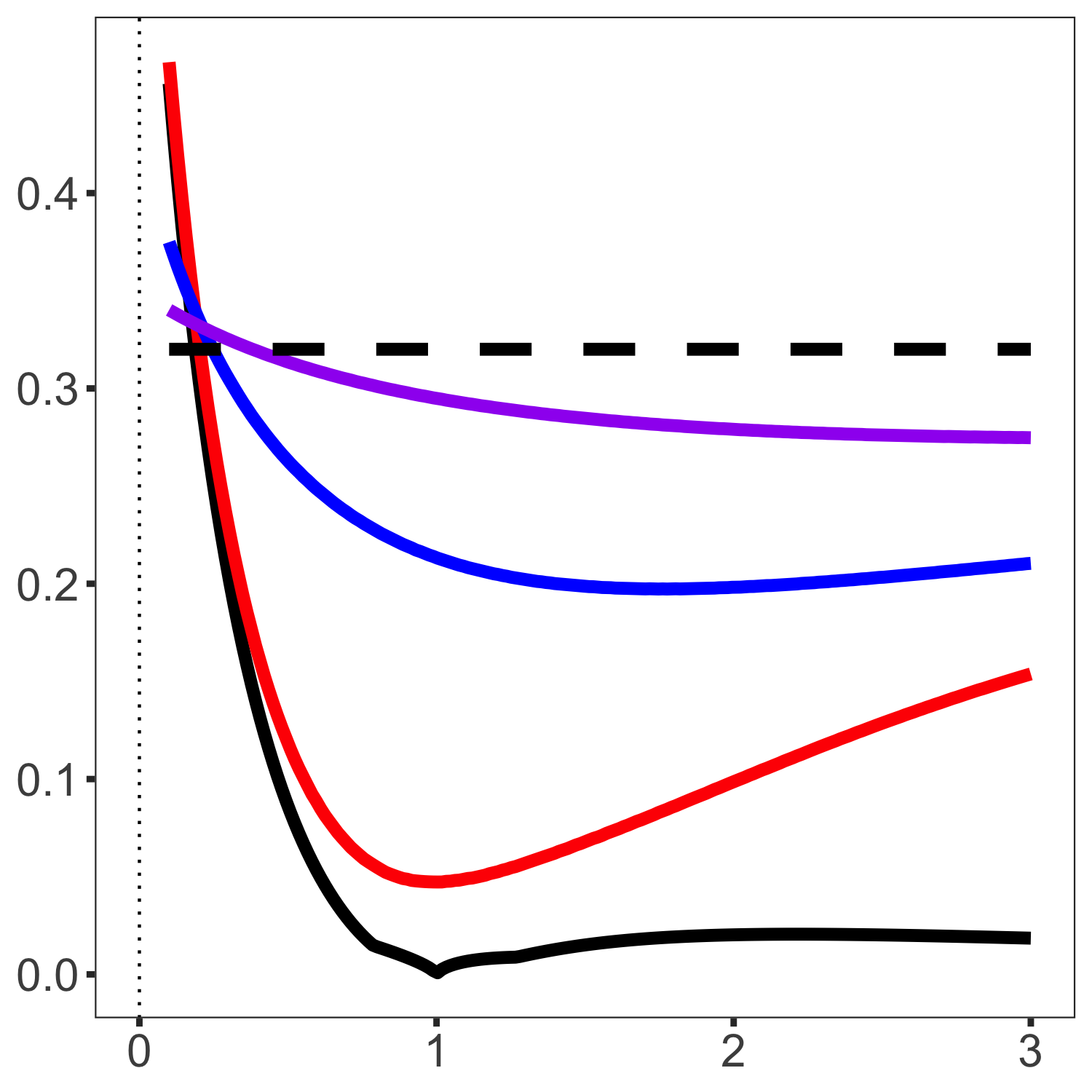}
   \end{subfigure}
\caption{$\SNR$ as a function of $\gamma_2$ when $\gamma_1 =0.5$. Columns: $\Sigma_p$ = Identity, Poly-Decay, AR-ACF, Point-Mix. Rows: $D \propto I_p$ (top), $D \propto \Sigma_p$ (bottom). Red/blue/purple: $\lambda =0.01$, $1$, and $5$. Black solid: upper bound of $\SNR_p(0)$ \citep{han2016tracy}. Black dashed: $\SNR_p(\infty)$.}
  \label{fig:SNRvsGamma2_Gamma1_05}
\end{figure}

\begin{figure}[htbp]
    \centering
    \begin{subfigure}{0.24\textwidth}
    \includegraphics[width=\linewidth,height=0.6\linewidth]{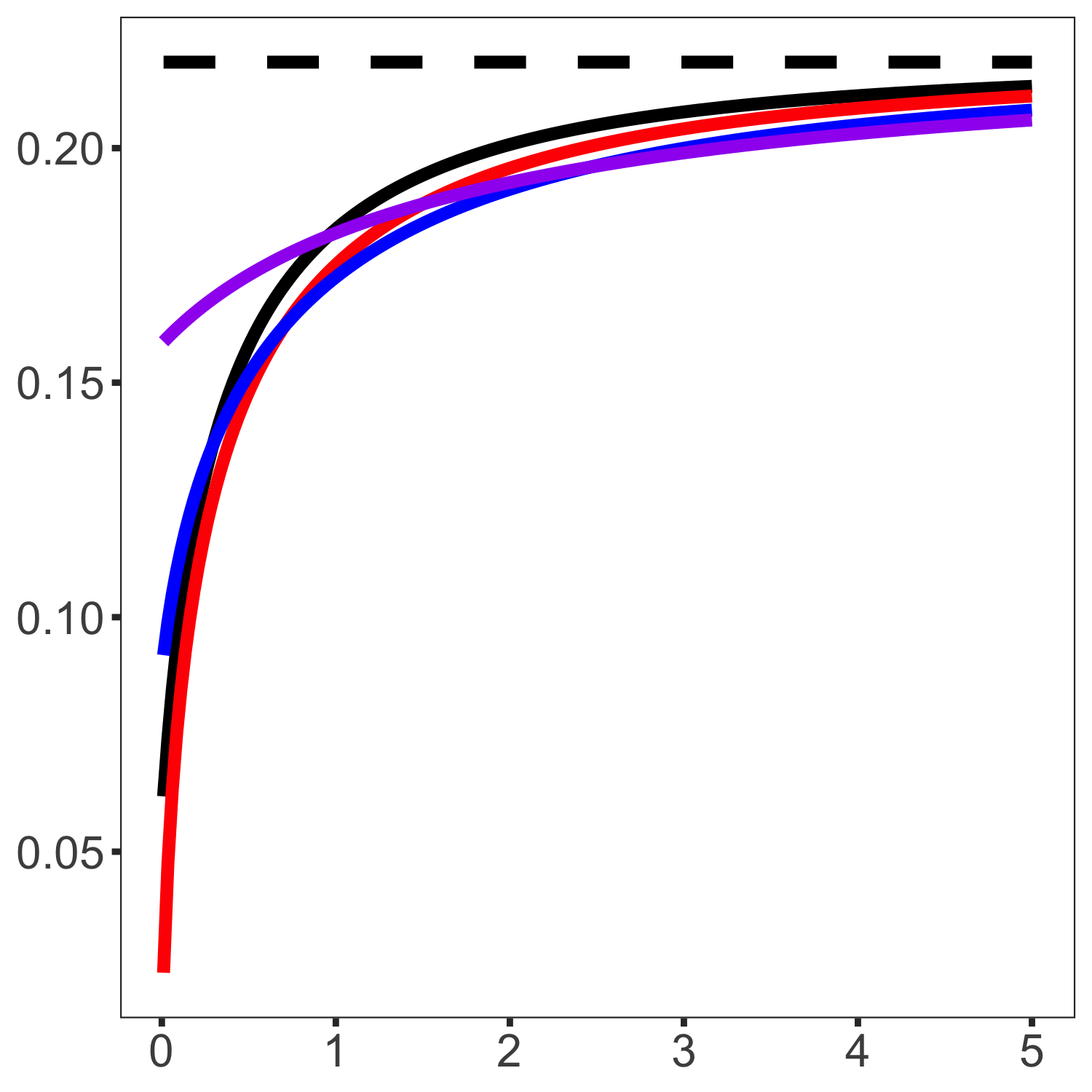}    
    \end{subfigure}
    \hfill
    \begin{subfigure}{0.24\textwidth}
    \includegraphics[width=\linewidth, height=0.6\linewidth]{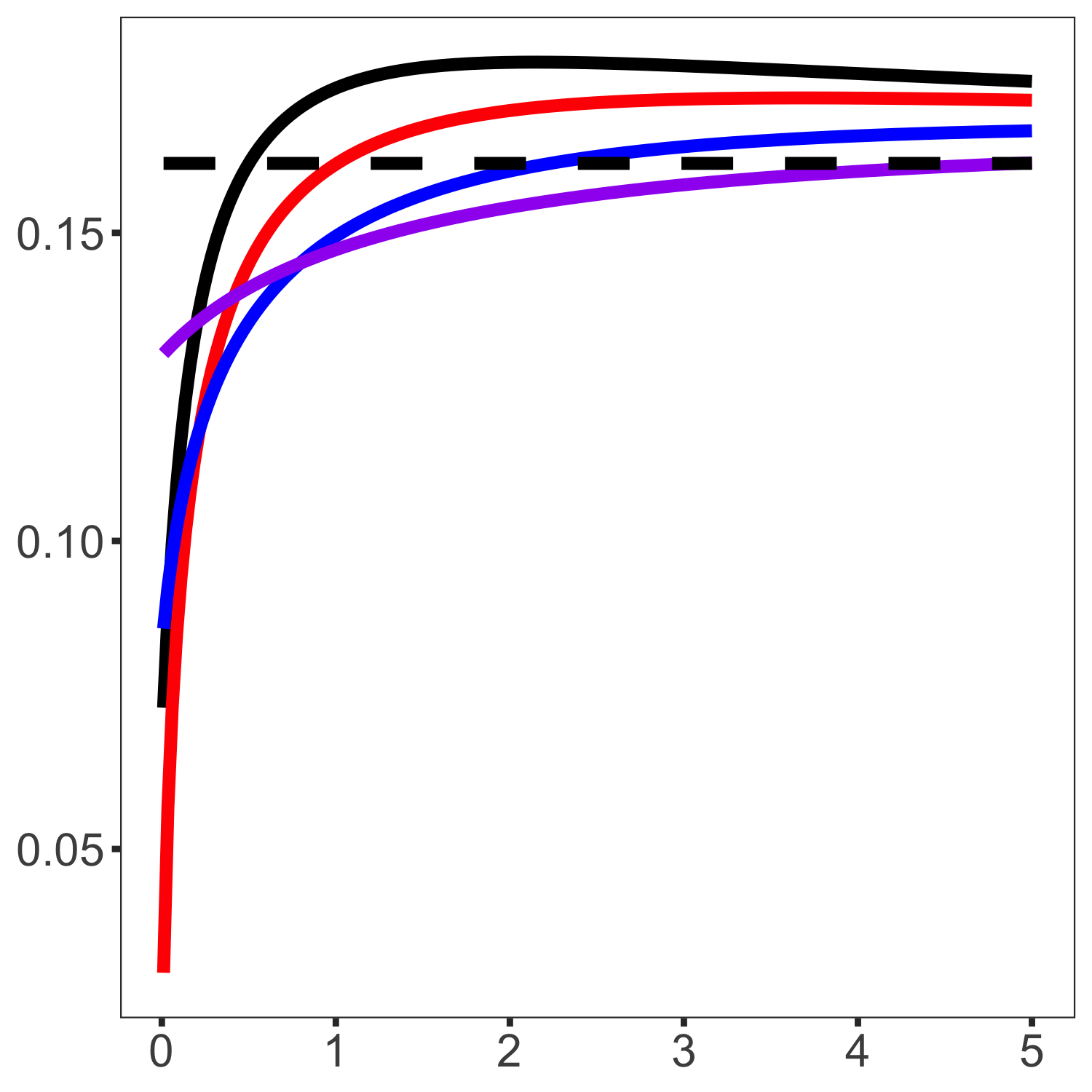}    
    \end{subfigure}
    \hfill
    \begin{subfigure}{0.24\textwidth}
    \includegraphics[width=\linewidth, height=0.6\linewidth]{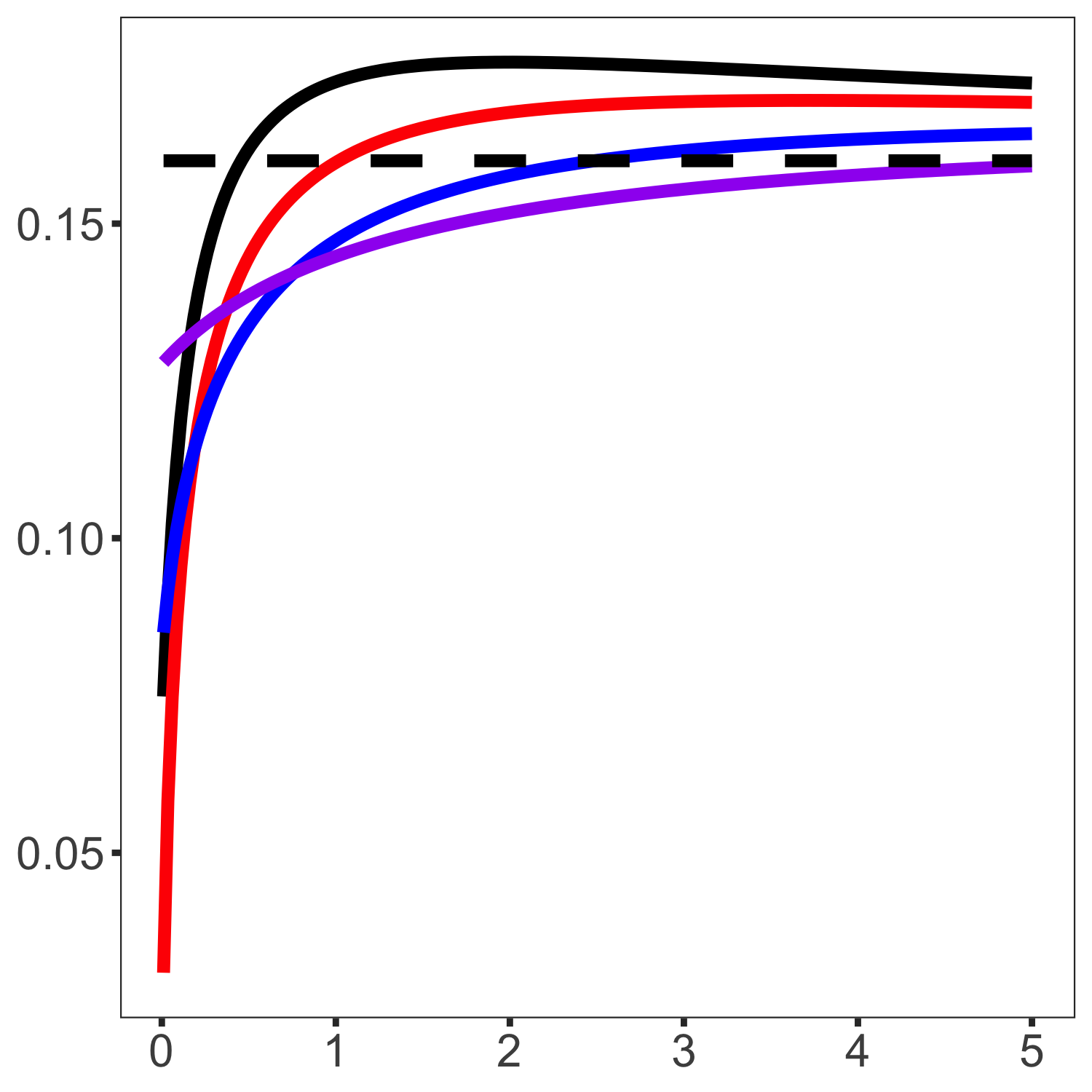}    
    \end{subfigure}
    \hfill
    \begin{subfigure}{0.24\textwidth}
    \includegraphics[width=\linewidth, height=0.6\linewidth]{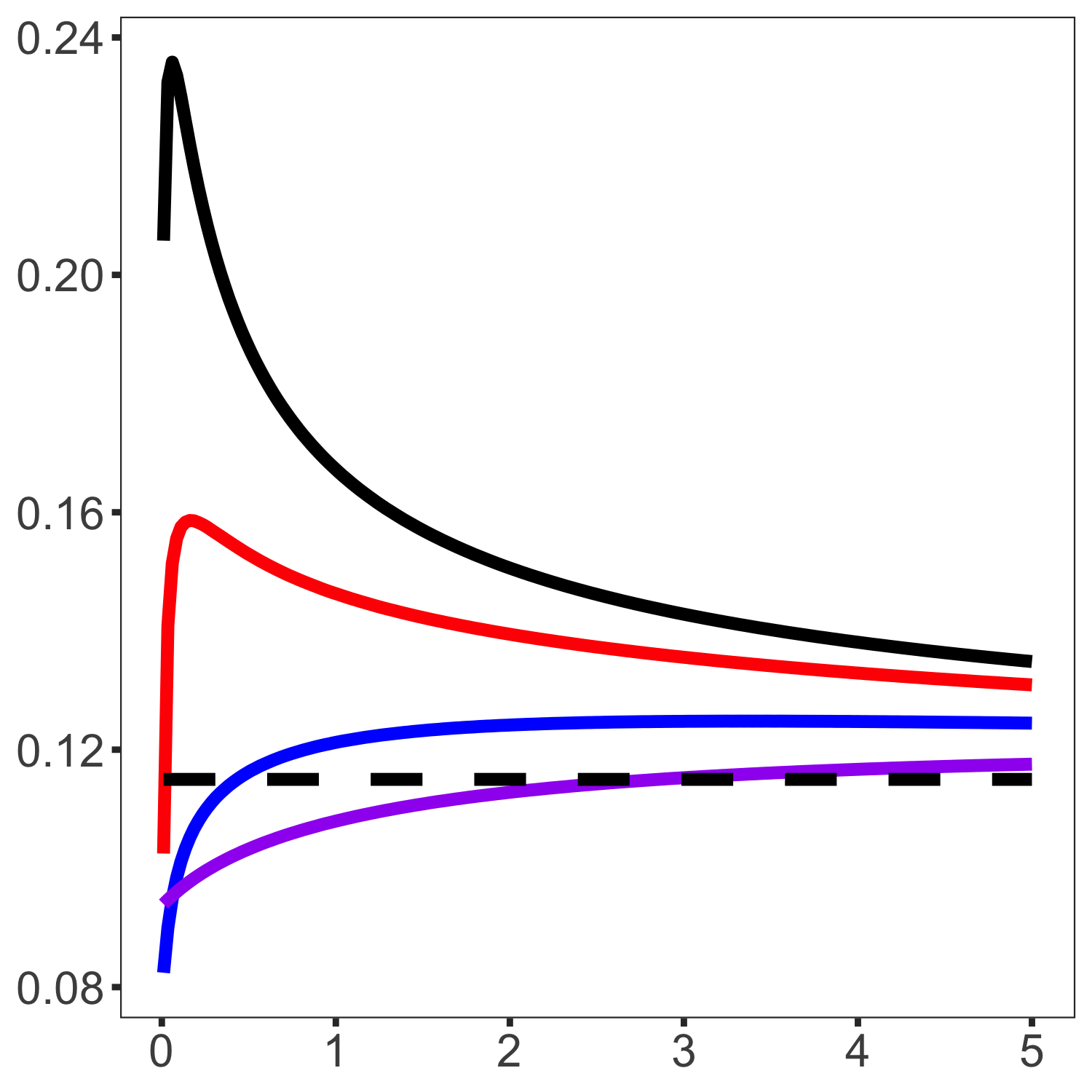}    
    \end{subfigure}
    \vfill
    \begin{subfigure}{0.24\textwidth}
    \includegraphics[width=\linewidth, height=0.6\linewidth]{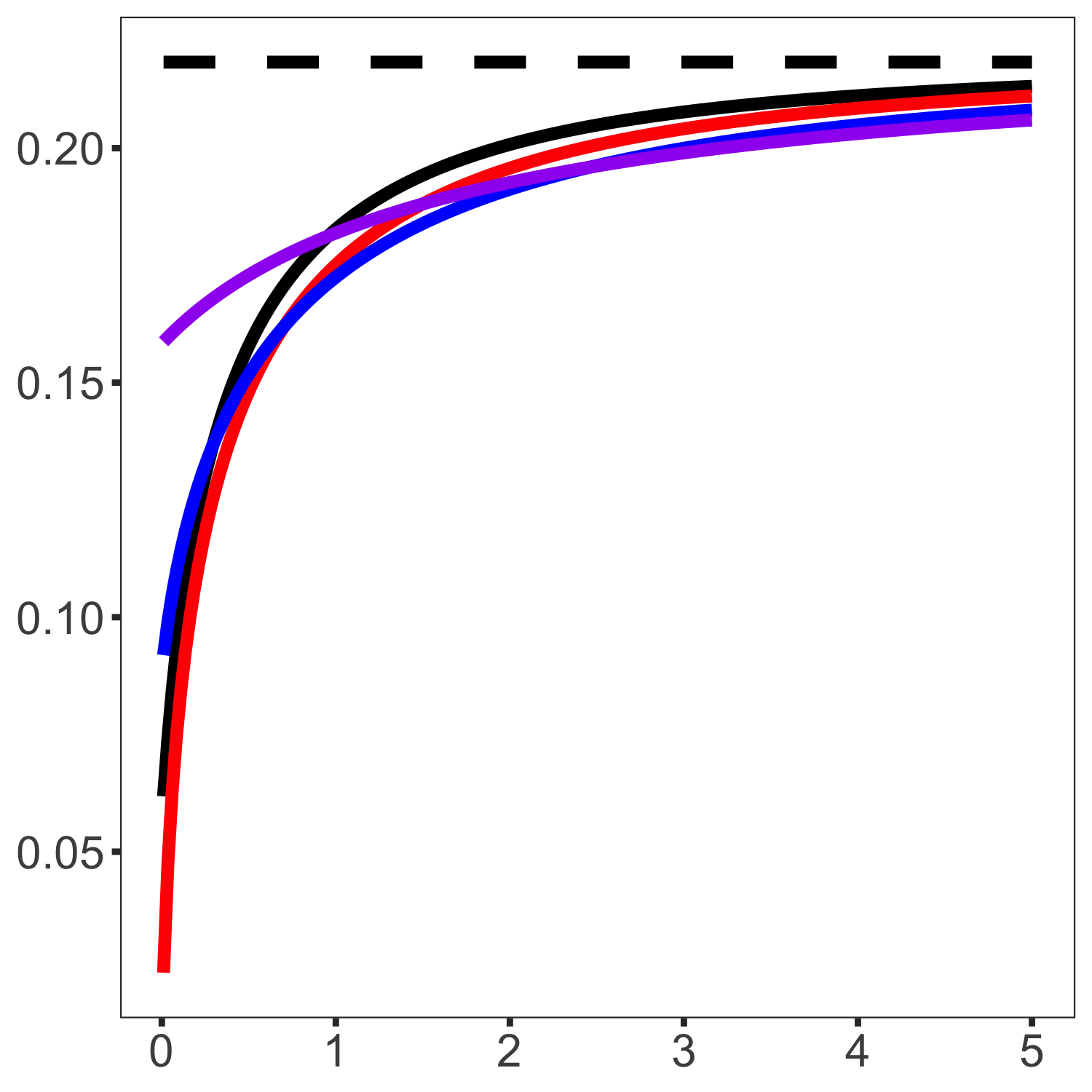}    
    \end{subfigure}
    \hfill
    \begin{subfigure}{0.24\textwidth}
    \includegraphics[width=\linewidth, height=0.6\linewidth]{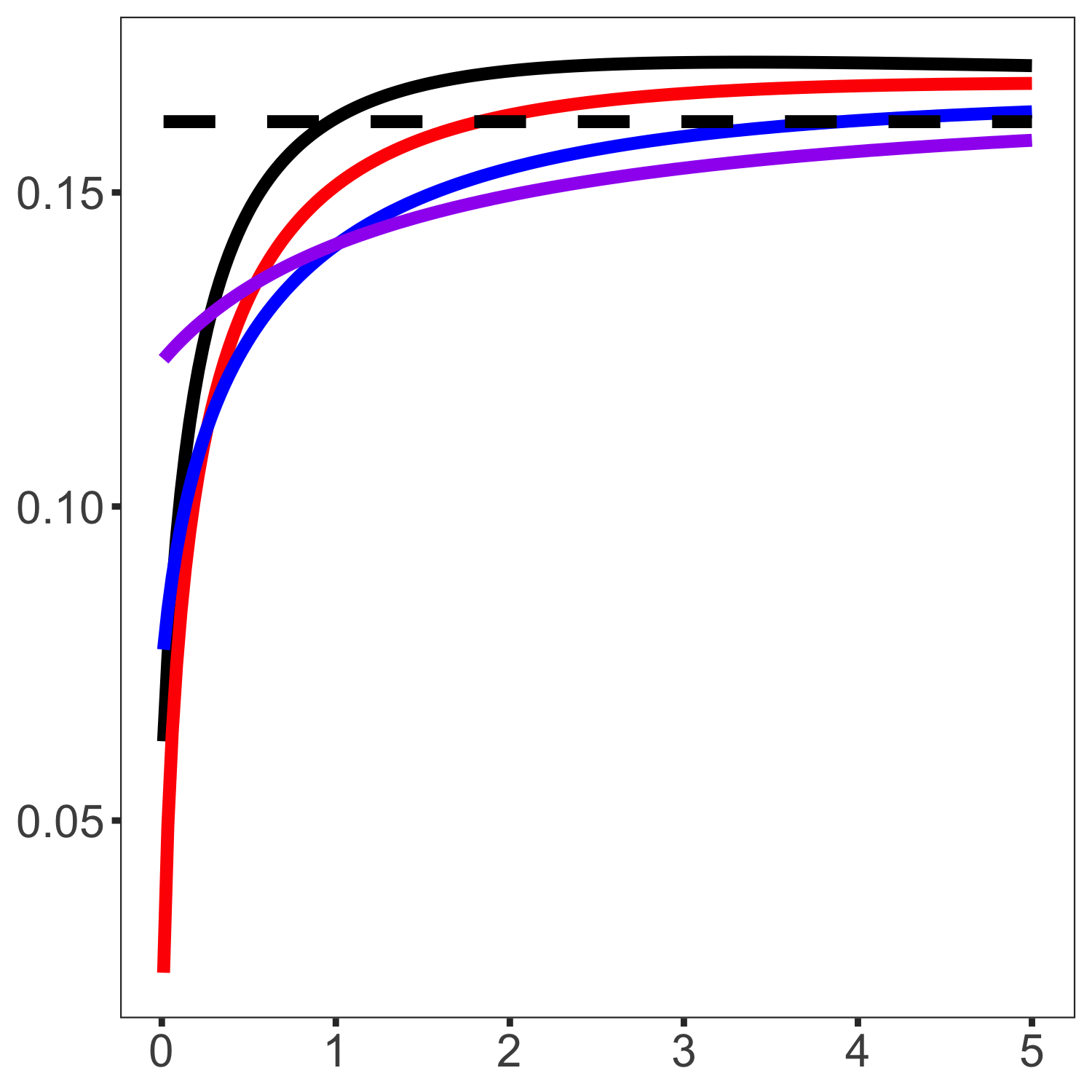}    
    \end{subfigure}
    \begin{subfigure}{0.24\textwidth}
    \includegraphics[width=\linewidth, height=0.6\linewidth]{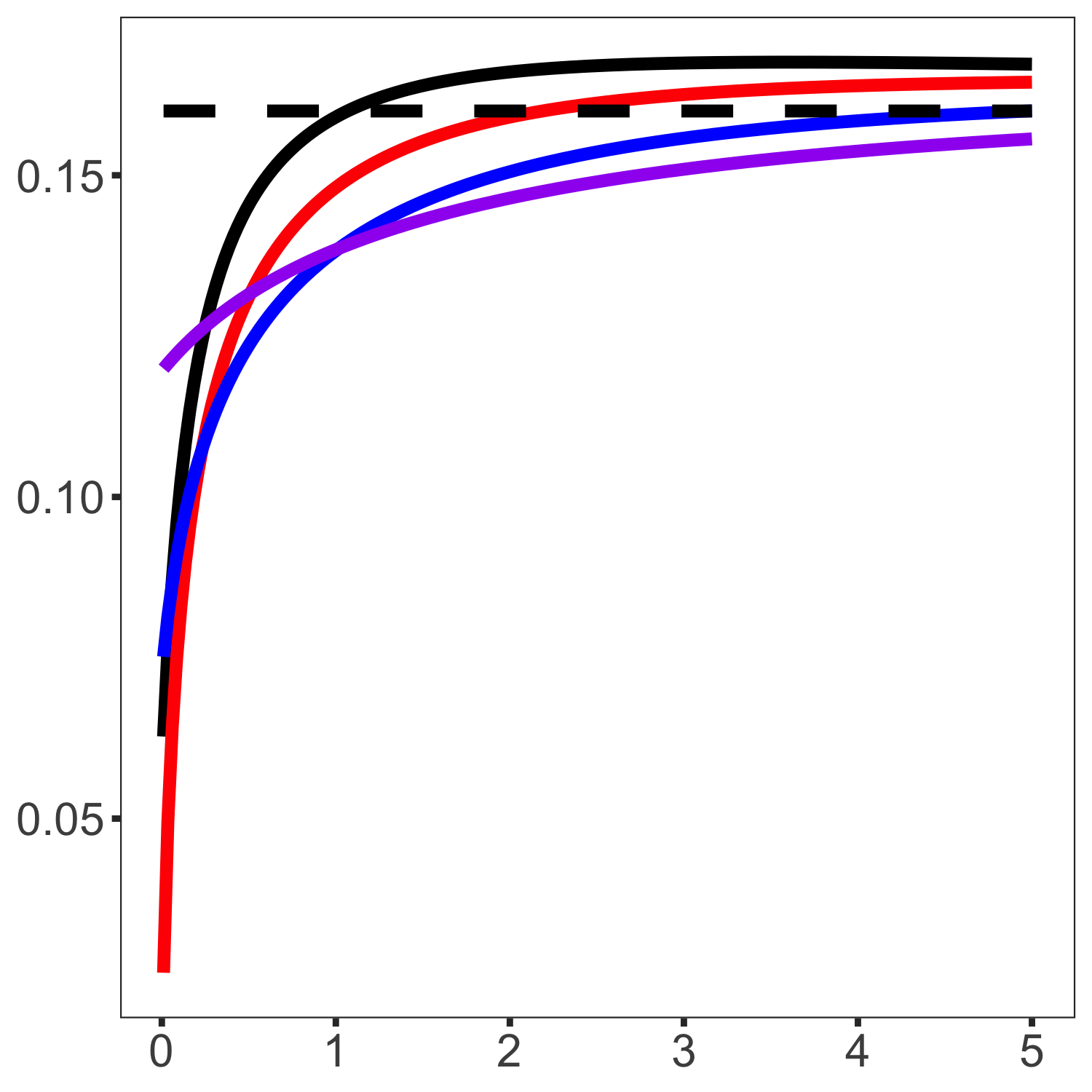}    
    \end{subfigure}
    \begin{subfigure}{0.24\textwidth}
    \includegraphics[width=\linewidth, height=0.6\linewidth]{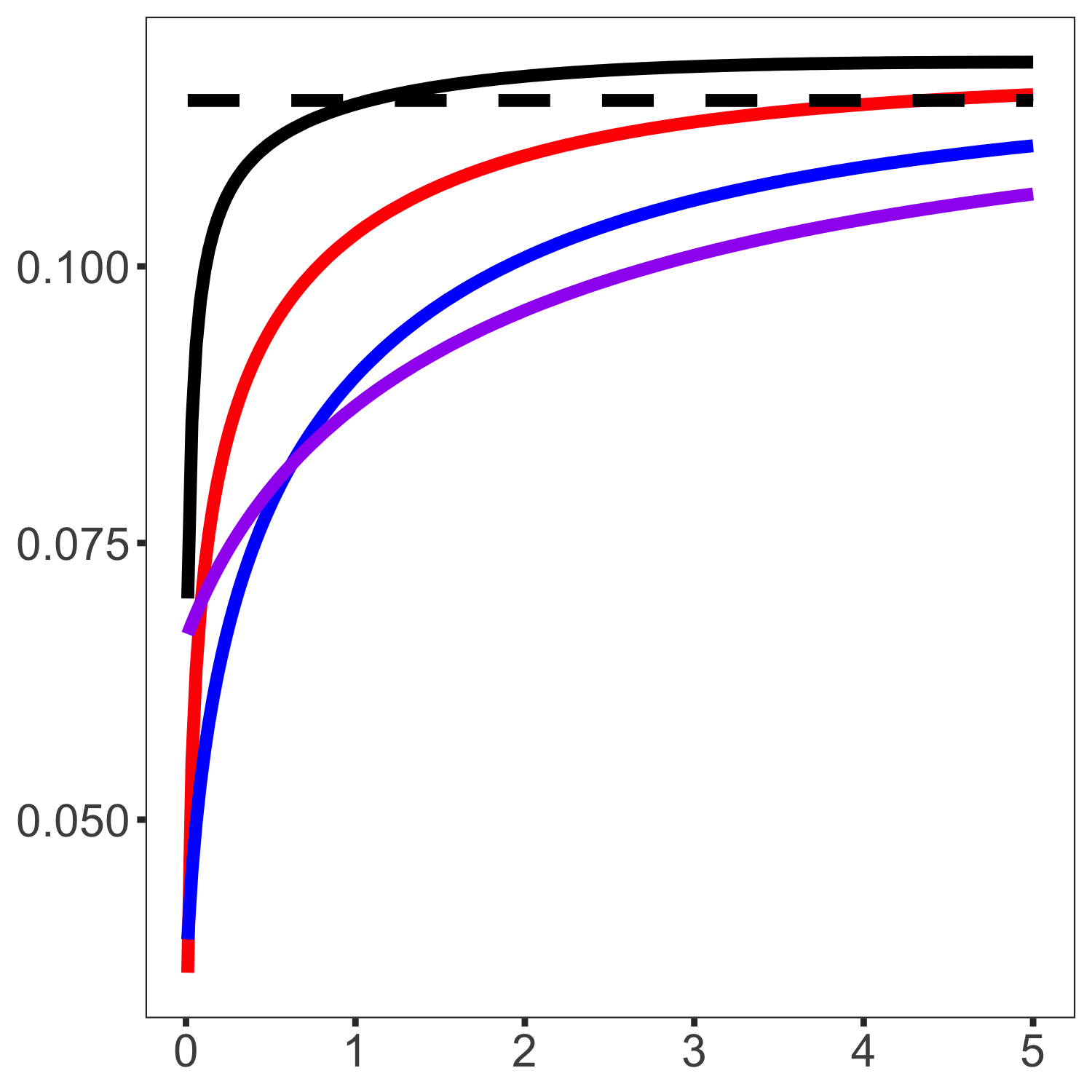}    
    \end{subfigure}
        \caption{$\SNR_p(\lambda)$ on $\lambda\in[0.01, 5]$ when $\gamma_1=2$. Columns: $\Sigma_p$ = Identity, Poly-Decay, AR-ACF, Point-Mix. Rows: $D \propto I_p$ (top), $D \propto \Sigma_p$ (bottom). Black/red/blue/purple: $\gamma_2 = 0.5$, $0.8$, $1.5$, $3$. Dashed: $\SNR_p(\infty)$ (independent of $\gamma_2$ and $\lambda$). }
    \label{fig:SNRvsLamda_Gamma1_2}
\end{figure}

\textcolor{black}{Figure~\ref{fig:SNRvsGamma2_Gamma1_05} shows the SNR as $\gamma_2$ increases when $\gamma_1=0.5$. The corresponding results for $\gamma_1=2$ are presented in Figure~S.3.1 
in the Supplementary Material. We observe that $\ell_H$ achieves comparable SNR when $\gamma_2$ is relatively small (e.g., $\gamma_2 \lesssim 0.2$), whereas the regularized tests exhibit superior performance when $\gamma_2$ is relatively large (e.g., $\gamma_2 \gtrsim 0.5$), demonstrating the advantage of spectral regularization in high dimension. When $\gamma_2$ is large, larger values of $\lambda$ tend to yield higher SNR. Overall, our results suggest that choosing $\lambda \approx 5\, p^{-1}\tr(\bW_2)$ yields a test that closely approximates the behavior of the $\lambda \to \infty$ limit under the investigated settings. For moderately large $\gamma_2$, however, tests with finite $\lambda$ may outperform the limiting case $\lambda \to \infty$.  The results highlight the benefit of flexibly selecting $\lambda$.}

\textcolor{black}{Figure~\ref{fig:SNRvsLamda_Gamma1_2} shows the SNR as a function of $\lambda$ when $\gamma_1=2$. The corresponding results for $\gamma_1=0.5$ are presented in Figure~S.3.2 
in the Supplementary Material. Overall, the results indicate that the SNR is not overly sensitive to $\lambda$ when it lies within a reasonable range (e.g., $\lambda \in [1,5]$) under the considered settings. When the spectrum of $\Sigma_p$ is highly dispersed, such as under the Point-Mix model, the optimal $\lambda$ tends to occur at a relatively small value. Except in extreme cases, the optimal choice of $\lambda$ is attained at a finite value. When $\Sigma_p = I_p$, the limiting regime $\lambda \to \infty$, which corresponds to replacing $\bW_2^{-1}$ with the identity matrix, outperforms the alternatives; in this setting, such a test appears at least intuitively the most natural choice since it utilizes the true covariance matrix $\Sigma_p=I_p$.}


\section{Data-driven selection of the regularization parameter}\label{sec:selection_lambda}

In this section, we discuss data-driven strategies for selecting $\lambda$, guided by the principles of Bayesian decision theory and minimaxity. Throughout, we focus on the \textbf{PA} framework of the alternatives under which the asymptotic behavior of the proposed test is primarily governed by $\SNR_p(\lambda, D)$. Notably, $\SNR_p(\lambda,D)$ depends on the \textbf{PA} model only through the covariance matrix $D$. Accordingly, we regard two \textbf{PA} priors as equivalent whenever they share the same $D$, and we identify a prior model with its associated matrix $D$.

\subsection{Bayes choice of $\lambda$}\label{subsec:Bayes_selection}
Suppose first that a prior model on the alternatives, as specified under \textbf{PA}, is given. Within the Bayesian decision-theoretic framework, we choose $\lambda$ by maximizing the power, or equivalently, maximizing $\SNR_p(\lambda,D)$. 
Implementing this strategy requires a consistent estimator of $\xi_p(\lambda,D)$. If the large matrix $D$ can be fully specified, we can estimate $\xi_p(\lambda,D)$ consistently by 
\begin{equation}\label{eq:xi_hat_D_specified}
\hat\xi_p(\lambda) = \frac{1}{p} \tr [(\bW_2 + \lambda I_p)^{-1} D ].
\end{equation}
The consistency of $\hat{\xi}_p(\lambda)$ is well known in the RMT literature and is therefore omitted. Such a setting typically arises when additional structural information about the data is available. For example, if it is reasonable to assume a stationary autocovariance structure, $D$ can be taken as a Toeplitz matrix with $(i,j)$-th entry $a^{|i-j|}$ for some $a>0$. However, when no additional information is available, specifying a high-dimensional positive definite matrix $D$ can be challenging. 

In the following, we take a different approach. Motivated by \citet{li2020high}, we consider a polynomial alternative framework, by which we mean the following model: $BC$ satisfies \textbf{PA} with $D$ given as a matrix polynomial of $\Sigma_p$, namely 
\begin{equation}\label{eq:polynomial_alternatives} 
D = \sum_{i=0}^r \pi_i \Sigma_p^i, 
\end{equation}
for coefficients $\pi_0$, $\pi_1$, $\dots$, $\pi_r$ such that $D$ is positive definite. In this sense, $D$ is specified by the coefficients up to the unknown covariance $\Sigma_p$. Since any arbitrary smooth function can be approximated by polynomials, this formulation is quite useful and fairly general. Moreover, the choice of $D$ as a matrix polynomial in $\Sigma_p$ also allows for an easier interpretation of the signal structure under the alternative. Note that unless $\Sigma_p = I_p$, such a prior implies a certain distribution of the rank-one alternative in the spectral coordinate system. Specifically, larger values of $\pi_i$ for high powers $i$ imply that the signal has a larger contribution from the leading eigenvectors of $\Sigma_p$. 

Under the polynomial alternatives Eq. \eqref{eq:polynomial_alternatives}, $\xi_p(\lambda,D)$ can be consistently estimated by 
\begin{equation}\label{eq:xi_hat_poly_alter}
\hat{\xi}_p(\lambda) = \sum_{i=0}^r \pi_i \Upsilon_i(\lambda)
\end{equation}
with $\Upsilon_i(\lambda)$ satisfying the recursive formula
\[ \Upsilon_{i+1}(\lambda)  = \frac{1}{\lambda \hat{\varphi}_p(-\lambda) }  \Big\{ \frakM_i  -\lambda \Upsilon_i(\lambda)\Big\}, \quad i=0,1,2, \dots, \]
and $\Upsilon_0(\lambda) = (n_2/p) (\hat{\varphi}_p(-\lambda) - (1-p/n_2)/\lambda)$, where $\frakM_i$ is a consistent estimator of the $i$th population spectral moment $\int \tau^i dF^{\Sigma_p}(\tau)$, given in Lemma 1 of \cite{bai2010estimation}. The above formula was derived in \citet{li2020high} using Lemma~3 of \citet{ledoit2011eigenvectors}, where consistency of the estimator $\hat{\xi}_p(\lambda)$ is also established. 
In this paper, we restrict to the case $r=2$ which requires only $\frakM_0 = 1$ and $\frakM_1 = p^{-1}\tr(\bW_2)$. There are several considerations that guides this choice of $r$. First, for $r=2$, all quantities involved in estimating $\hat{\xi}_p(\lambda)$ can be computed explicitly without requiring knowledge of higher order moments of the observations. Also, the corresponding estimating equations are more stable as they do not involve higher-order spectral moments. Secondly, the choice of $r=2$ yields a significant, yet nontrivial, concentration of the prior covariance in the directions of the leading eigenvectors of $\Sigma_p$. Finally, the choice $r=2$  leads to a quadratic polynomial, which accommodates both convex and concave shapes. For extensions to $r>2$, we refer to Lemma 1 of \citet{bai2010estimation} for the construction of $\frakM_i$ for $i\geq 2$.  

Given a \textbf{PA} prior such that $\xi_p(\lambda,D)$ can be consistently estimated, the resulting data-driven Bayes choice of the regularization parameter $\hat{\lambda}_{\rm B}$ is defined as
\[
\hat{\lambda}_{\mathrm{B}} = \arg\max_{\lambda \in \mathfrak{X} } \frac{\hat{\xi}_p(\lambda)}{\hat{\Theta}_{2}(\lambda)},
\]
where $\mathfrak{X}\subset \mathbb{R}_+ $ is a user-specified parameter space.

\textcolor{black}{
A potential concern arises from the “double-dipping” issue, since the same dataset is used both to select $\lambda$ and to perform the test. Although Theorem~\ref{thm:main} shows that, for a fixed $\lambda$, the proposed test statistic follows an asymptotic Tracy–Widom distribution after appropriate scaling, the data-driven selection of $\lambda$ introduces additional variability that may affect the limiting behavior. To mitigate the fluctuation of $\hat{\lambda}_{\rm B}$, we propose restricting the search to a finite discrete parameter set $\mathfrak{X}$ rather than a dense interval. This discretization is not expected to incur a substantial loss of efficiency, owing to the smooth dependence of the test statistic on $\lambda$. Indeed, the numerical results in Section~\ref{subsec:effect_lambda} indicate that $\SNR_p(\lambda, D)$ is not overly sensitive to $\lambda$ under the considered settings. A practical suggestion for $\mathfrak{X}$ is a set of 10 evenly spaced points ranging from $(0.1\gamma_2\wedge 1) p^{-1}\tr(\bW_2)$ to $5p^{-1}\tr(\bW_2)$.}
\textcolor{black}{
\begin{lemma}\label{lemma:converge_hat_lambda}
Suppose that Conditions~\ref{enum:high_dimensional_regime}--\ref{enum:regular_edge} hold, and let $\mathfrak{X}\subset\mathbb{R}_+$ be any finite parameter set. Assume that under \textbf{PA}, the matrix $D$ is such that there exist $\lambda_\infty\in\mathfrak{X}$ and $\varepsilon>0$ satisfying $\max_{\lambda\in\mathfrak{X}\setminus\{\lambda_\infty\}}
\SNR_p(\lambda,D)
<
\SNR_p(\lambda_\infty,D)-\varepsilon$, for all sufficiently large $p$. Suppose further that $\hat{\xi}_p(\lambda) - \xi_p(\lambda, D) \stackrel{P}{\longrightarrow}0$ and $\hat{\Theta}_2(\lambda) -\Theta_{2p}(\lambda) \stackrel{P}{\longrightarrow}0$ for any $\lambda\in\mathfrak{X}$. Then $\mathbb{P}\!(\hat{\lambda}_{\mathrm{B}}=\lambda_\infty)\to1$, as $p\to\infty$. Consequently, the conclusions of Theorem~\ref{thm:main} remain valid when $\lambda$ is replaced by $\hat{\lambda}_{\rm B}$.
\end{lemma}}

\subsection{Minimax selection of $\lambda$}\label{subsec:minimax_selection}
\textcolor{black}{In practice, rather than a particular choice of the prior, we may consider a collection of such priors. In such situations, a strategy is needed to synthesize the information they provide and enhance the robustness against prior misspecification. In this subsection, we consider a procedure for selecting the regularization parameter based on the principle of minimaxity.}

\textcolor{black}{Consider a family $\{D_\theta:\theta\in \mathfrak{D}\}$ indexed by $\theta$ in a compact set $\mathfrak{D}$, satisfying the requirements in \textbf{PA}, such that
\begin{equation}\label{eq:constrainst_on_B}
p^{-1}\tr(D_\theta)=\calK,\qquad \text{for all }\theta\in \mathfrak{D},
\end{equation}
for some constant $\calK>0$. This constraint ensures that the expected total signal strength, $\mE n_1^{-1} \tr[BC (C^T (XX^T)^{-1}C)^{-1}C^TB^T] = p^{s-1} \tr(D_\theta)$, remains the same across all priors identified by $D_\theta$.}

\textcolor{black}{
Under \textbf{PA}, we consider the asymptotic type-II error rate and define the risk function 
\[ \text{\rm Risk}_p(\lambda, D_\theta) = F_{\TW} \Big( \TW_1(1-\alpha) - p^{s} \SNR_p(\lambda, D_\theta)\Big), \]
where $F_{\TW}$ denotes the distribution function of $\TW_1$ and $\alpha \in (0,1)$ is the level of significance. We restrict attention to the group of all decision rules $\mathbb{I}\big(\tilde{\ell}(\lambda) > \TW_1(1-\alpha)\big)$ when $\lambda \in \mathfrak{X}$. We say that the decision rule associated with $\lambda_*$ is \emph{asymptotically minimax} with respect to the prior family identified by $\{D_\theta$, $\theta \in \mathfrak{D}\}$ if 
\[\lambda_* = \arg\min_{\lambda\in \mathfrak{X}} \sup_{\theta\in \mathfrak{D}} \text{\rm Risk}_p(\lambda, D_\theta) = \arg\max_{\lambda\in\mathfrak{X}} \inf_{\theta\in\mathfrak{D}} \frac{ \xi_p(\lambda, D_\theta)}{\Theta_{2p}(\lambda)}.\]
The parameter $\lambda_*$ is referred to as the \emph{minimax selection} of $\lambda$. In practice, this procedure is implemented by replacing the SNR with its estimator, and the resulting estimated minimax choice is denoted by $\hat{\lambda}_*$.
\begin{lemma}\label{lemma:converge_lambda_*}
    Suppose that Conditions~\ref{enum:high_dimensional_regime}--\ref{enum:regular_edge} hold, and let $\mathfrak{X}\subset\mathbb{R}_+$ be any finite parameter set. Assume that the family $\{ D_\theta, \theta\in\mathfrak{D}\}$ satisfies:
    \begin{itemize}
        \item[(i)] There exist $\lambda_\infty \in \mathfrak{X}$ and $\varepsilon>0$ such that 
        \[\max_{\lambda \in \mathfrak{X}\setminus \{\lambda_\infty\}} \inf_{\theta \in \mathfrak{D}} \SNR_p(\lambda, D_{\theta}) < \inf_{\theta \in\mathfrak{D}} \SNR_p(\lambda_\infty, D_\theta) -\varepsilon.\]
        \item[(ii)] $\sup_{\theta\in\mathfrak{D}} |\hat{\xi}_p(\lambda, D_\theta) - \xi_p(\lambda, D_\theta)| \stackrel{P}{\longrightarrow}0$, for any $\lambda \in \mathfrak{X}$. 
    \end{itemize}
    Additionally, assume that $\hat{\Theta}_2(\lambda)-\Theta_{2p}(\lambda) \stackrel{P}{\longrightarrow}0$, for any $\lambda\in\mathfrak{X}$. Then, $\mP(\hat{\lambda}_*  = \lambda_*) \to 1$.  Consequently, the conclusions of Theorem~\ref{thm:main} remain valid when $\lambda$ is replaced by $\hat{\lambda}_*$.
\end{lemma}
Under the two scenarios described in Eq. \eqref{eq:xi_hat_D_specified} and \eqref{eq:xi_hat_poly_alter}, Condition (ii) is satisfied if $\sup_{\theta\in\mathfrak{D}}\|D_\theta\|_2<\infty$.
}


\section{Simulation studies}\label{sec:simulation}

In this section, we evaluate the performance of the proposed ridge-regularized F-matrix framework and the associated estimation methods through Monte Carlo experiments. 

Due to space limitations, detailed configurations and full descriptions of all competing methods are deferred to Section~S.4 
of the Supplementary Material. Briefly, we consider four spectral structures for $\Sigma_p$: the identity model (Identity), a polynomial decaying spectrum (Poly-Decay), an AR(1) auto-correlation structure (AR-ACF), and a factor model (Factor). We evaluate the proposed procedures with different choices of $\lambda$ and compare them with representative methods from the literature. Specifically, we include the test of \citet{han2016tracy}, corresponding to $\lambda=0$; the ridge-regularized likelihood ratio test of \citet{li2020high}, denoted Ridge-LRT; and a projection-based test, denoted Proj-LRT, in which the data are randomly projected to a lower-dimensional space and a classical likelihood ratio test is applied to the projected data.

\subsection{Estimation precision}\label{subsec:estimation_precision}
Figure~\ref{fig:example_s_fun} displays the estimated curves $\hat{s}(x)$, $\hat{s}'(x)$, and $\hat{s}''(x)$ under a representative setting. Two additional examples are provided in Section~S.3.1 of the Supplementary Material; further results are omitted, as they exhibit similar patterns. The estimation accuracy of $\hat{\Theta}_1$ and $\hat{\Theta}_2$ is also reported in Tables~S.4.1 and S.4.2 
of the Supplementary Material.

The results indicate that the overall estimation accuracy is high, although it deteriorates as $x$ approaches $\rho_\lambda$ and as $\hat\gamma_2$ increases. This suggests that the estimators $\hat{\Theta}_1$ and $\hat{\Theta}_2$ remain reliable unless $\beta$ is close to $\rho_\lambda$ and $\hat\gamma_2$ is large. Since $\beta$ satisfies $\beta^2 s'_{p\lambda}(\beta)=1/\hat\gamma_1$, it approaches $\rho_\lambda$ only when $\hat\gamma_1$ is small. Hence, estimation becomes more challenging in regimes where $n_1 \gg p \gg n_2$, which in practice corresponds to situations in which a large number of hypotheses ($n_1$) are tested with a relatively small effective sample size ($n_2$). 
Figure~S.4.3 
in the Supplementary Material shows the empirical density of the estimation error $\hat{\Theta}_1-\Theta_1$ under representative settings; the distribution appears approximately bell-shaped. 
\begin{figure}[t]
\centering
\includegraphics[width=\textwidth]{./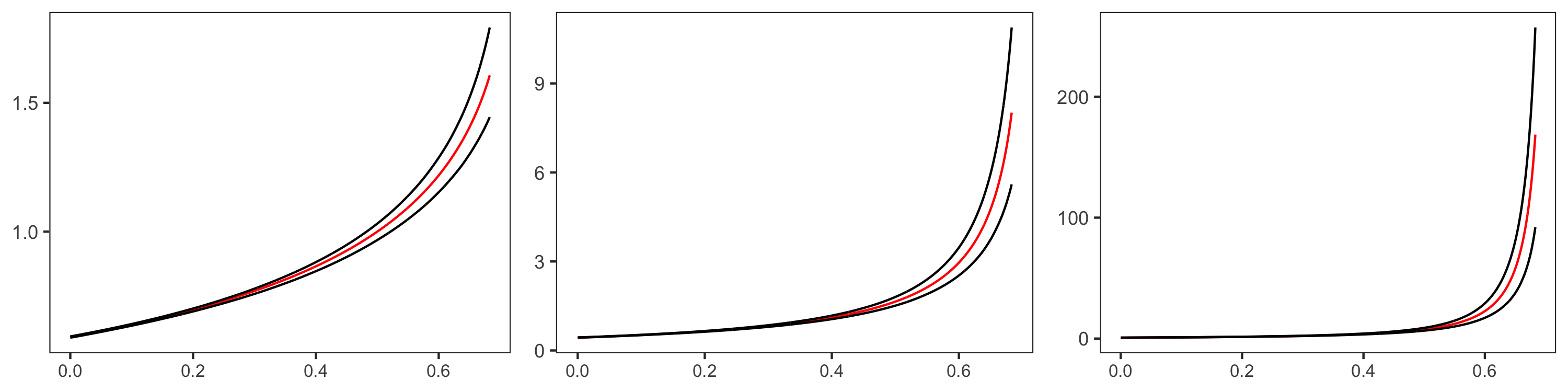}
\caption{The estimated $s(x)$, $s'(x)$ and $s''(x)$ (from left to right) when $\Sigma_p$ is {AR-ACF}, $\hat{\gamma}_2 = 1$, $\lambda/\hat{\gamma}_2 = 1$. Red: the true functions; Black: 5\% and 95\% pointwise percentile bands of the estimated functions.}
\label{fig:example_s_fun}
\end{figure}

\begin{table}[htbp]
    \centering
    \resizebox{\textwidth}{!}{
    \begin{tabular}{clc ccc ccc ccc ccc}
        \toprule
        \multicolumn{3}{c}{Size$\times100\%$, $n_2=500$} & \multicolumn{3}{c}{$\hat\gamma_2 = 0.3$} & \multicolumn{3}{c}{$\hat\gamma_2 = 0.5$} & \multicolumn{3}{c}{$\hat\gamma_2 = 0.9$} & \multicolumn{3}{c}{$\hat\gamma_2 = 2$} \\
        \cmidrule(lr){4-6} \cmidrule(lr){7-9} \cmidrule(lr){10-12} \cmidrule(lr){13-15}
        $\Sigma$ & $\lambda$ & $n_1 =$ & $ 50 $ & $ 100$ & $ 250$ & $ 50$ & $100$ & $250$ & $50$ & $ 100$ & $250$ & $50$ & $100$ & $250$ \\
        \midrule
        \multirow{5}{*}{Poly-Decay} 
        & $0$ && 4.12 & 4.35 & 4.53 & 4.01 & 4.18 & 4.95 & 4.76 & 4.83 & 5.08 & -- & -- & -- \\
        & $0.5$ &  & 3.80 & 3.92 & 4.30 & 4.00 & 4.40 & 4.60 & 4.58 & 4.08 & 4.85 & 4.83 & 4.72 & 4.85 \\
         & $1$ &  & 3.35 & 3.90 & 4.10 & 3.98 & 4.23 & 4.00 & 4.42 & 4.15 & 4.58 & 4.32 & 4.45 & 4.28 \\
         & $\hat{\lambda}_{I_p}$ &  & 3.45 & 3.88 & 3.98 & 4.15 & 3.92 & 4.00 & 4.23 & 4.08 & 3.85 & 3.62 & 4.45 & 3.77 \\
         & $\hat{\lambda}_{\Sigma_p}$ &  & 3.62 & 4.10 & 4.05 & 3.95 & 3.85 & 4.08 & 4.12 & 3.88 & 3.67 & 3.62 & 4.45 & 3.77 \\
         & $\hat{\lambda}_*$ &  & 3.43 & 4.12 & 4.05 & 4.65 & 4.88 & 5.62 & 5.15 & 5.65 & 6.35 & 5.85 & 5.20 & 5.25 \\
        \midrule
        \multirow{5}{*}{AR-ACF}
        & $0$ && 4.12 & 4.35 & 4.53 & 4.01 & 4.18 & 4.95 & 4.76 & 4.83 & 5.08 & -- & -- & -- \\
        & $0.5$ &  & 3.90 & 4.62 & 4.50 & 3.77 & 4.30 & 4.50 & 4.42 & 4.60 & 4.58 & 5.17 & 4.72 & 4.45 \\
         & $1$ &  & 3.65 & 4.30 & 4.10 & 3.65 & 4.15 & 4.30 & 4.30 & 4.35 & 4.17 & 4.83 & 4.30 & 4.15 \\
         & $\hat{\lambda}_{I_p}$ &  & 3.52 & 4.35 & 4.10 & 3.70 & 3.90 & 4.17 & 4.10 & 3.80 & 3.98 & 4.25 & 3.98 & 4.00 \\
         & $\hat{\lambda}_{\Sigma_p}$ &  & 3.67 & 4.10 & 4.03 & 3.70 & 3.88 & 4.32 & 3.95 & 3.65 & 3.75 & 4.25 & 3.98 & 3.98 \\
         & $\hat{\lambda}_*$ &  & 4.28 & 4.92 & 4.45 & 4.40 & 4.67 & 5.80 & 5.20 & 5.33 & 5.92 & 5.73 & 5.15 & 5.03 \\
        \midrule
        \multirow{5}{*}{Factor} 
        & $0$ && 4.12 & 4.35 & 4.53 & 4.01 & 4.18 & 4.95 & 4.76 & 4.83 & 5.08 & -- & -- & -- \\
        & $0.5$ &  & 3.88 & 4.15 & 4.45 & 4.52 & 4.15 & 3.72 & 4.52 & 4.38 & 4.58 & 4.83 & 5.40 & 4.92 \\
         & $1$ &  & 3.50 & 3.57 & 4.32 & 4.70 & 4.10 & 3.75 & 4.25 & 4.20 & 4.23 & 4.80 & 4.72 & 4.92 \\
         & $\hat{\lambda}_{I_p}$ &  & 3.80 & 3.95 & 4.40 & 4.67 & 4.05 & 4.20 & 3.90 & 4.12 & 4.30 & 4.25 & 4.67 & 4.72 \\
         & $\hat{\lambda}_{\Sigma_p}$ &  & 3.38 & 3.57 & 4.40 & 4.50 & 4.15 & 3.90 & 4.05 & 3.80 & 4.32 & 4.17 & 4.40 & 4.78 \\
         & $\hat{\lambda}_*$ &  & 4.75 & 5.40 & 4.90 & 4.65 & 4.78 & 5.03 & 5.05 & 5.60 & 5.88 & 5.53 & 6.17 & 5.05 \\
        \midrule
        \multirow{5}{*}{Identity} 
        & $0$ && 4.12 & 4.35 & 4.53 & 4.01 & 4.18 & 4.95 & 4.76 & 4.83 & 5.08 & -- & -- & -- \\
        & $0.5$ &  & 4.58 & 4.32 & 4.92 & 4.20 & 4.67 & 5.17 & 4.52 & 4.17 & 4.78 & 4.72 & 5.03 & 4.65 \\
         & $1$ &  & 4.30 & 4.20 & 4.83 & 3.88 & 4.38 & 4.52 & 4.30 & 4.00 & 4.58 & 4.60 & 4.88 & 4.23 \\
         & $\hat{\lambda}_{I_p}$ &  & 4.28 & 4.20 & 4.17 & 3.60 & 4.50 & 4.05 & 3.98 & 3.82 & 4.17 & 4.03 & 3.98 & 3.08 \\
         & $\hat{\lambda}_{\Sigma_p}$ &  & 4.28 & 4.20 & 4.17 & 3.60 & 4.50 & 4.05 & 3.98 & 3.82 & 4.17 & 4.03 & 3.98 & 3.08 \\
         & $\hat{\lambda}_*$ &  & 4.05 & 4.47 & 4.83 & 4.83 & 5.60 & 5.88 & 4.92 & 5.53 & 5.40 & 4.98 & 5.53 & 5.08 \\
        \bottomrule
    \end{tabular}
    }
    \caption{Empirical sizes of empirically normalized $\ell_{\max}(\bF_\lambda)$ at asymptotic level $5\%$ under different settings.}
    \label{table:empirical_sizes}
\end{table}

\subsection{Empirical null distribution}\label{subsec:empirical_null_distribution}
We examine the empirical distribution of $\ell_{\max}(\bF_\lambda)$ under the null hypothesis. We consider five choices of the regularization parameter $\lambda$: two fixed values, $\lambda=0.5$ and $\lambda=1$, and three data-driven values, namely the Bayes selections corresponding to $D\propto I_p$ and $D\propto \Sigma_p$, and the minimax selection when the prior family consists of all linear combinations of $I_p$ and $\Sigma_p$ with nonnegative coefficients satisfying \eqref{eq:constrainst_on_B}. See Section S.4.1 
of the Supplementary Material for details. The latter three are denoted by $\hat{\lambda}_{I_p}$, $\hat{\lambda}_{\Sigma_p}$, and $\hat{\lambda}_*$. In addition, we report the test size of the procedure in \citet{han2016tracy}, when $p<n_1+n_2$, corresponding to $\lambda=0$. Table~\ref{table:empirical_sizes} reports the empirical sizes at the asymptotic $5\%$ level. The results indicate that empirical sizes are consistently close to the nominal level, generally falling within the range of $3\%\text{--}6\%$, even with data-driven choices of $\lambda$. 

In Section S.4.3 
of the Supplementary Material, we report the empirical density of $\ell_{\max}(\bF_\lambda)$. The results show that the empirical distributions closely follow the Tracy–Widom law of type~1, both when normalized using the true values $\Theta_{1p}$ and $\Theta_{2p}$ and when using their estimators, thereby supporting the validity of the asymptotic theory and the accuracy of the proposed estimators. This pattern is consistent across all five choices of $\lambda$. 

\begin{figure}[htbp]
    \centering
    \begin{subfigure}[t]{0.23\textwidth}
        \centering
         \includegraphics[width=\linewidth, height=0.7\linewidth]{./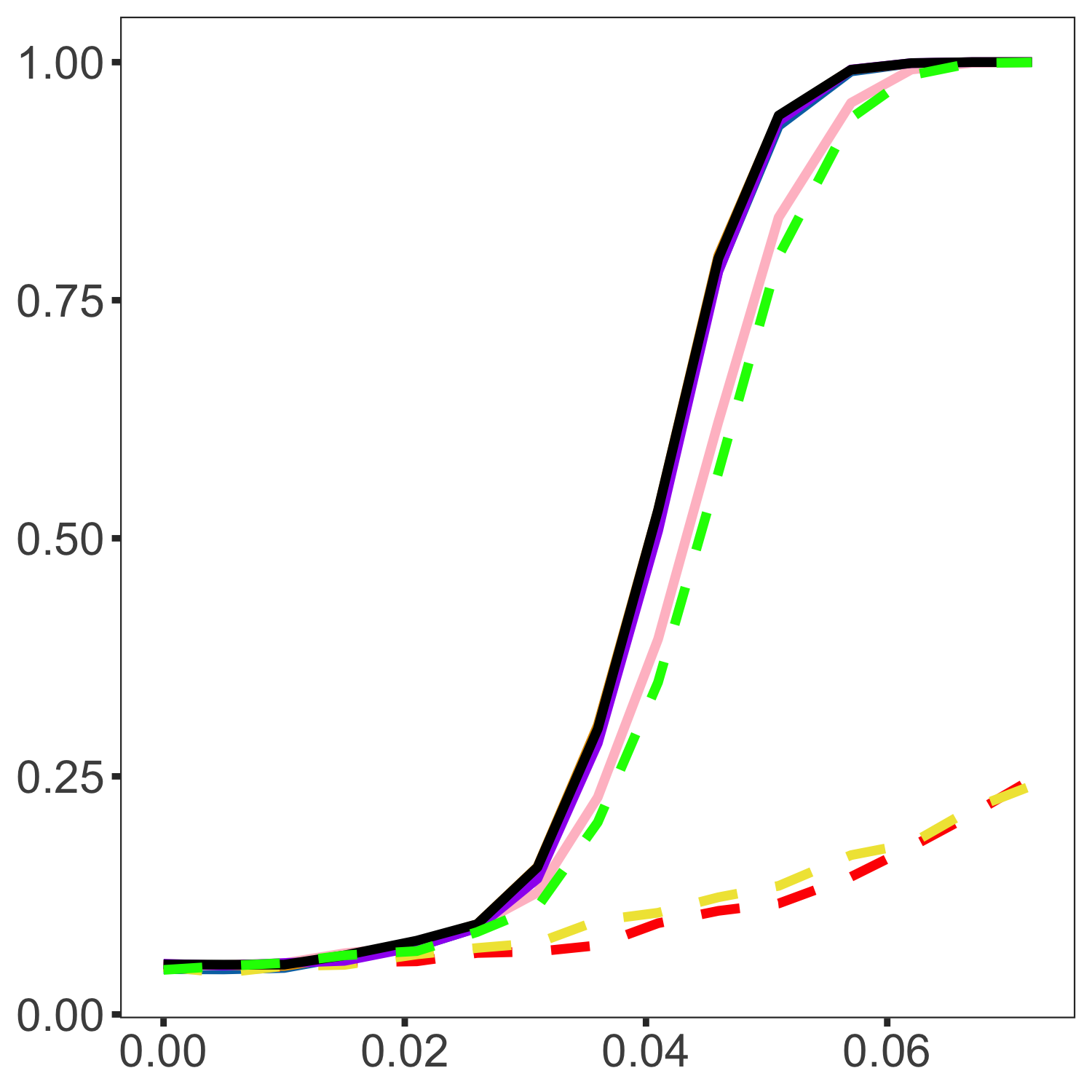}
    \end{subfigure}%
    \begin{subfigure}[t]{0.23\textwidth}
        \centering
        \includegraphics[width=\linewidth, height=0.7\linewidth]{./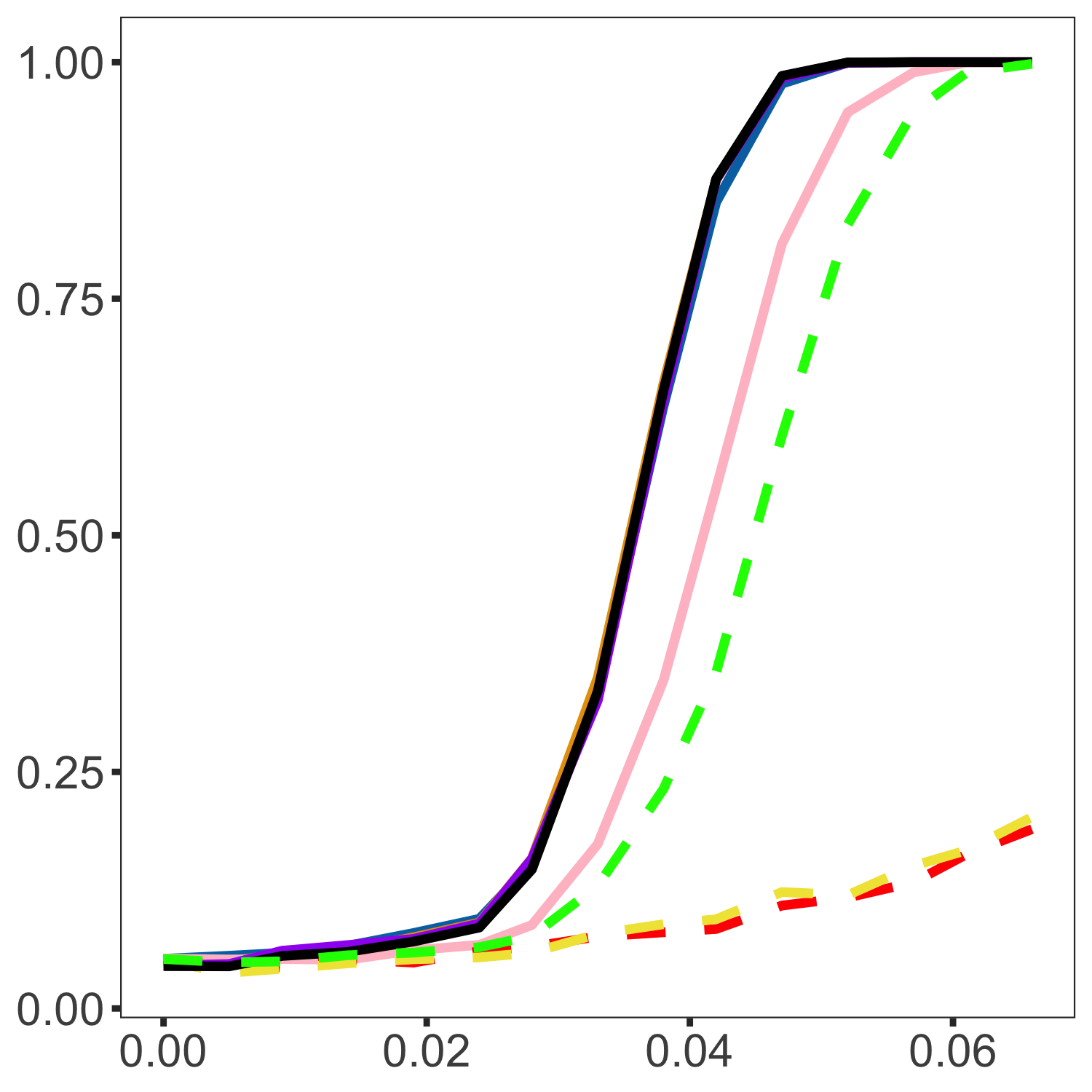}
    \end{subfigure}
     \begin{subfigure}[t]{0.23\textwidth}
        \centering
        \includegraphics[width=\linewidth, height=0.7\linewidth]{./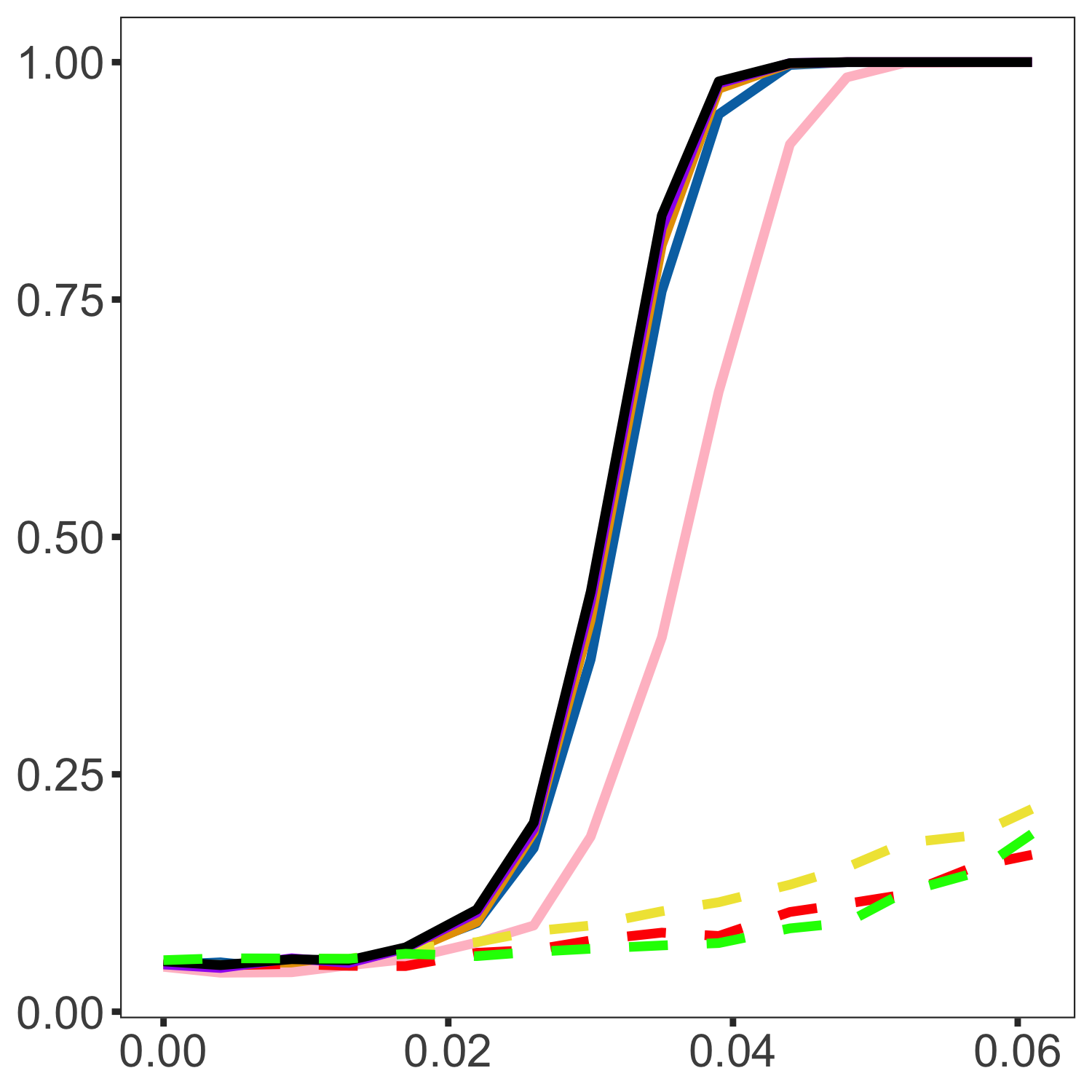}
    \end{subfigure}
    \begin{subfigure}[t]{0.23\textwidth}
        \centering
        \includegraphics[width=\linewidth, height=0.7\linewidth]{./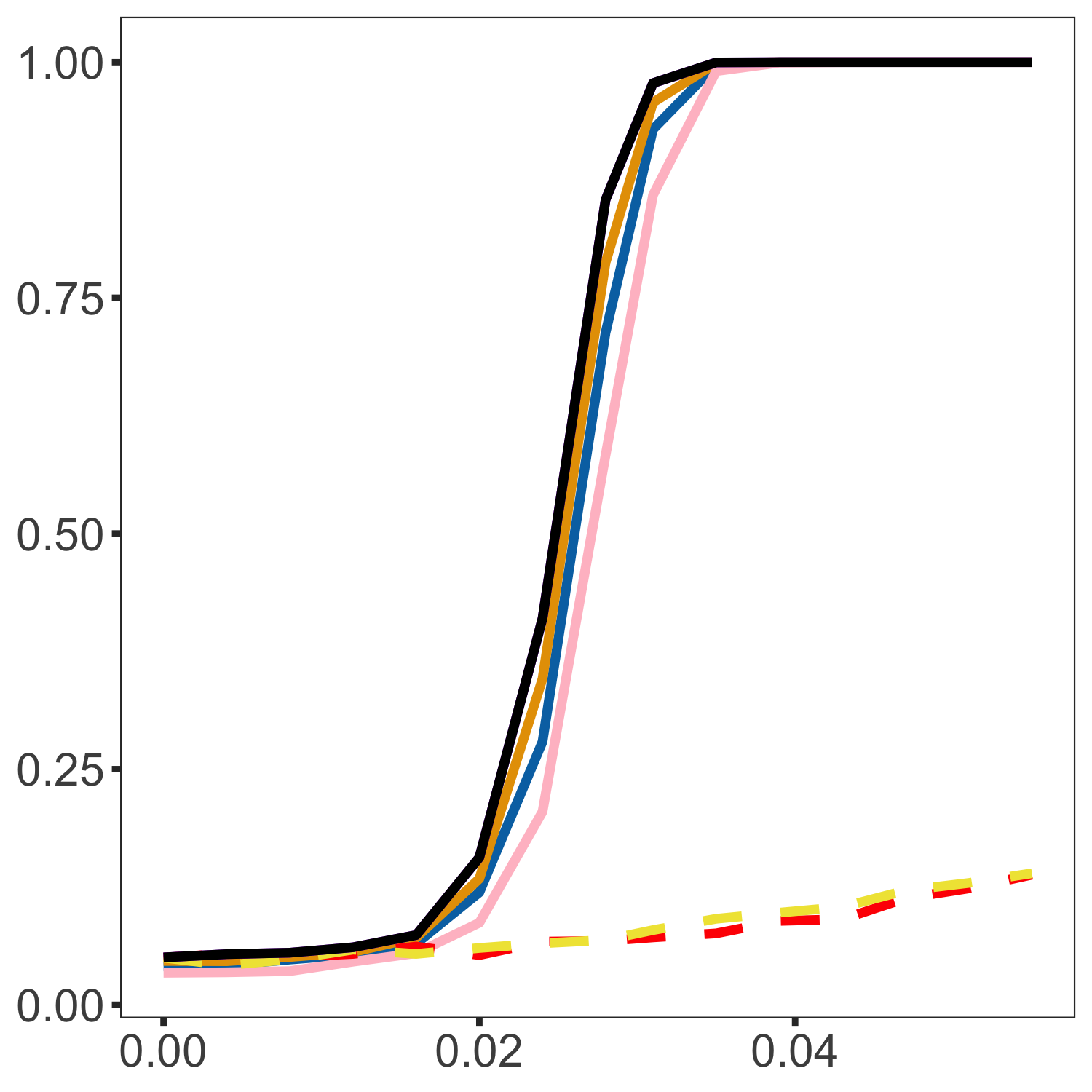}
    \end{subfigure}
    \vfill
    \begin{subfigure}[t]{0.23\textwidth}
        \centering
         \includegraphics[width=\linewidth, height=0.7\linewidth]{./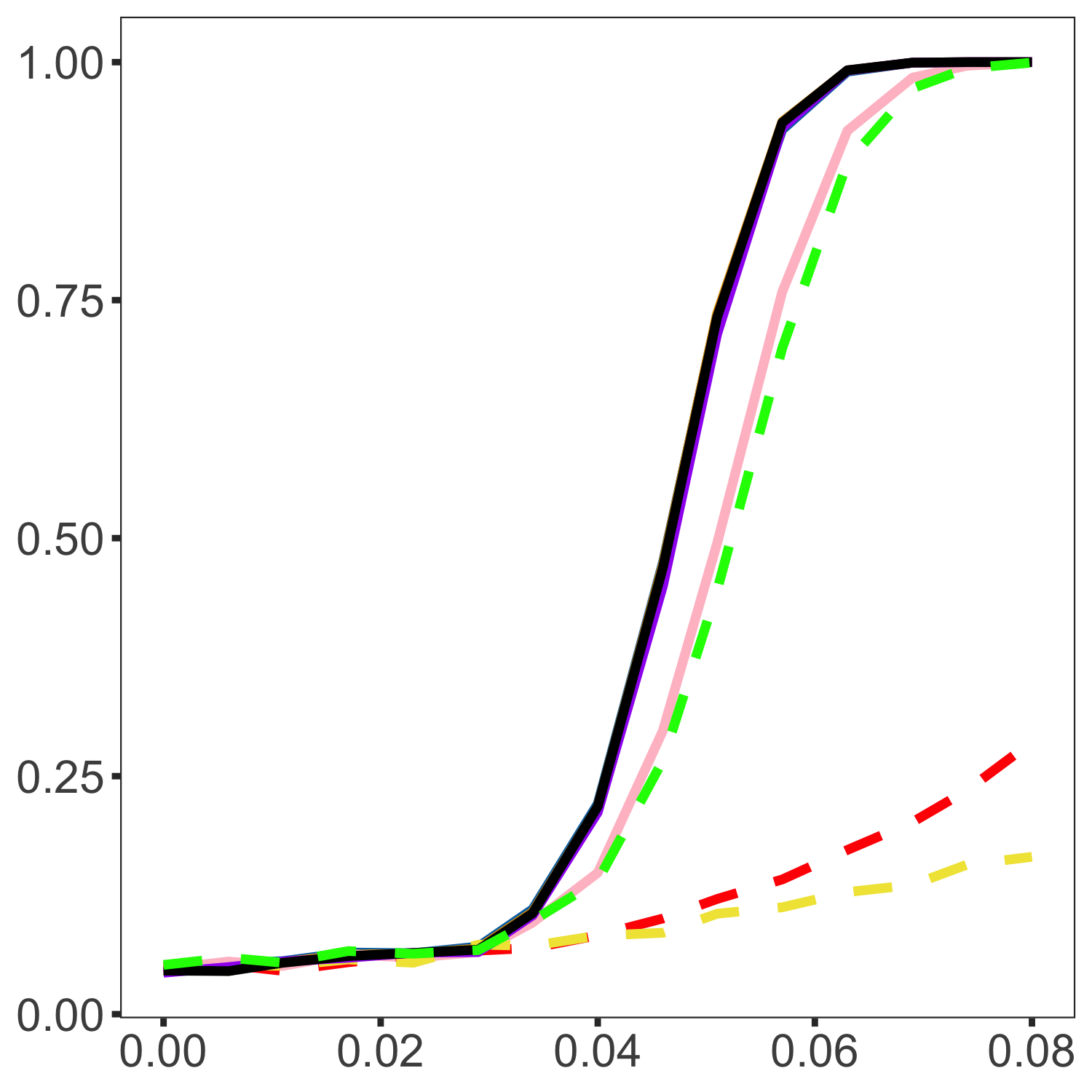}
    \end{subfigure}%
    \begin{subfigure}[t]{0.23\textwidth}
        \centering
        \includegraphics[width=\linewidth, height=0.7\linewidth]{./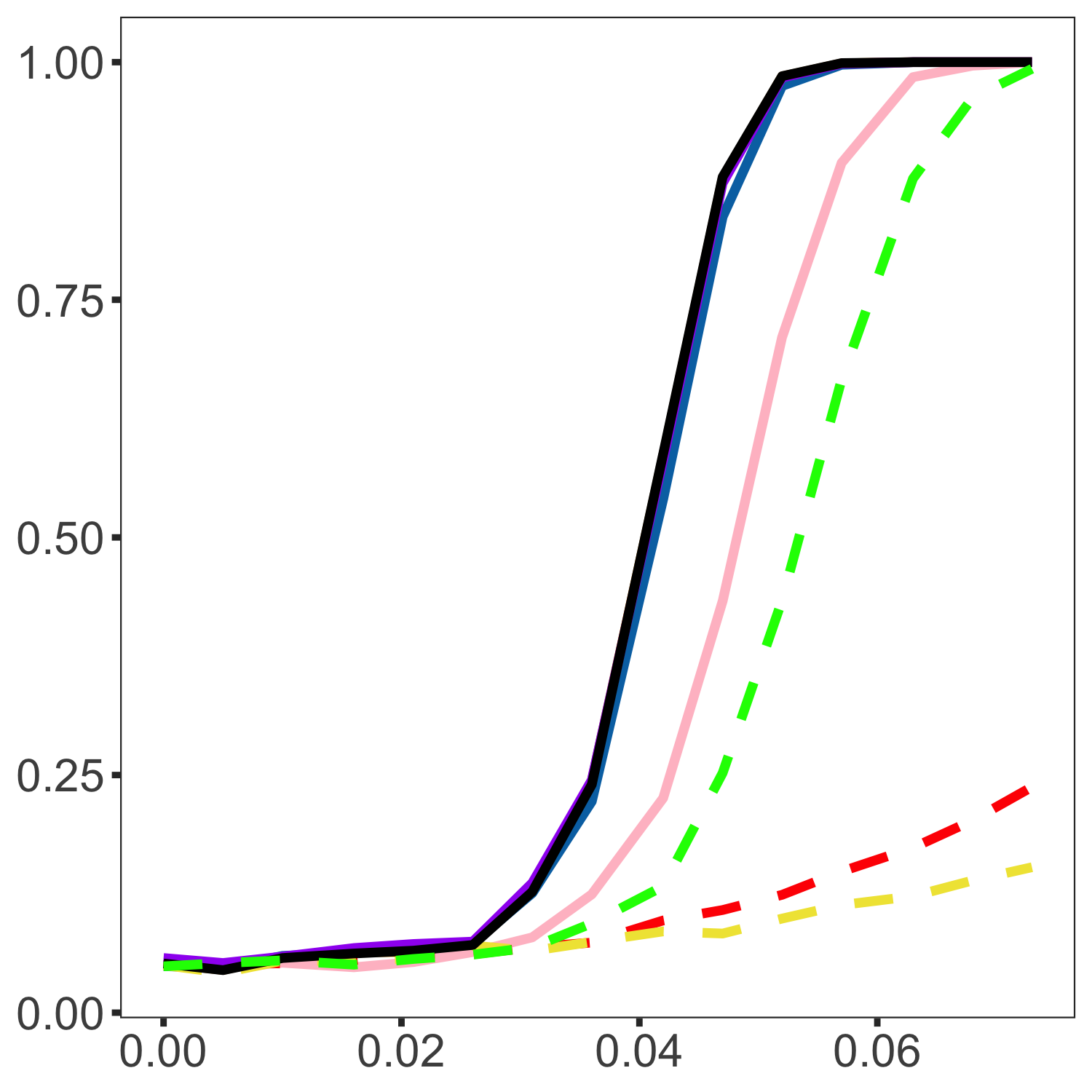}
    \end{subfigure}
     \begin{subfigure}[t]{0.23\textwidth}
        \centering
        \includegraphics[width=\linewidth, height=0.7\linewidth]{./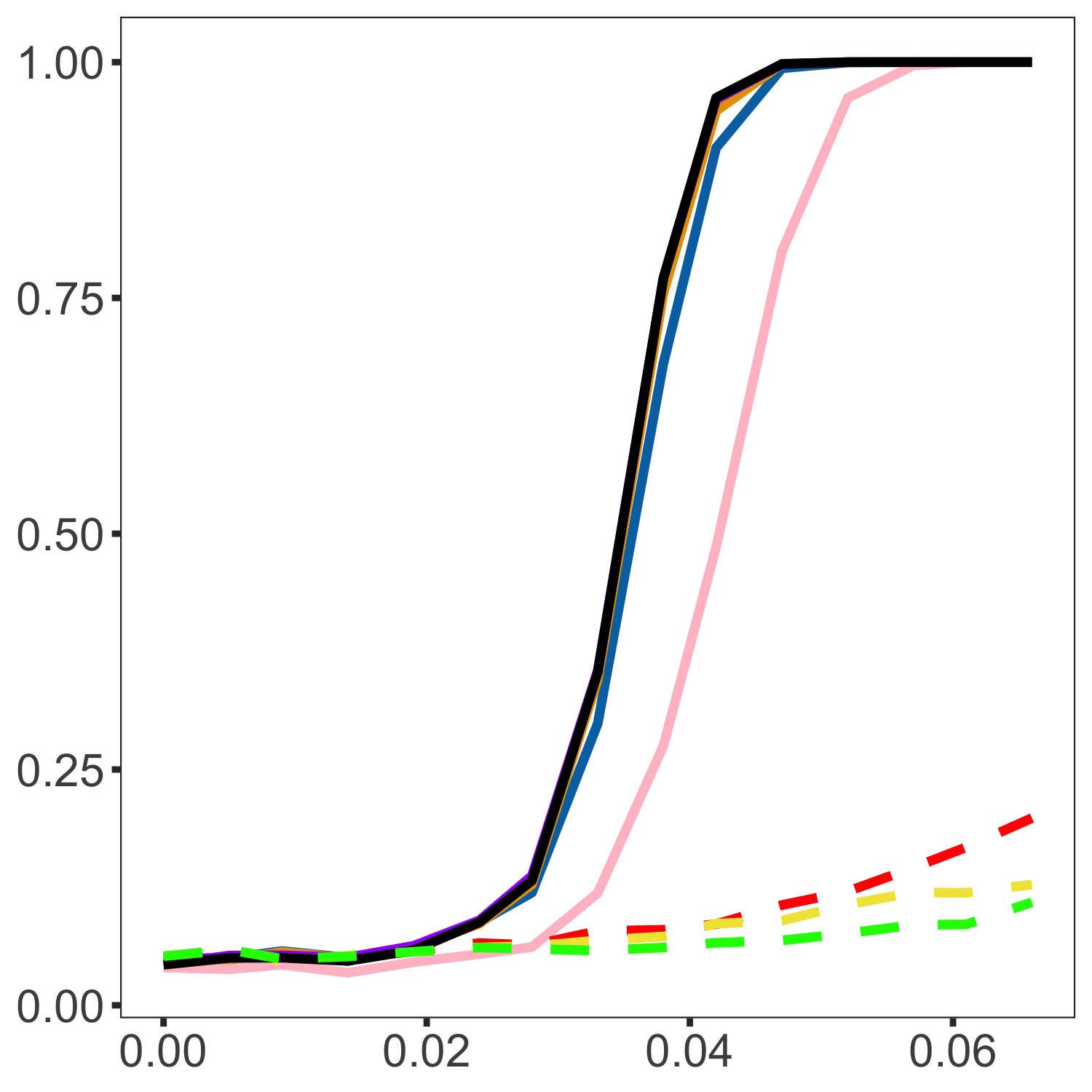}
    \end{subfigure}
    \begin{subfigure}[t]{0.23\textwidth}
        \centering
        \includegraphics[width=\linewidth, height=0.7\linewidth]{./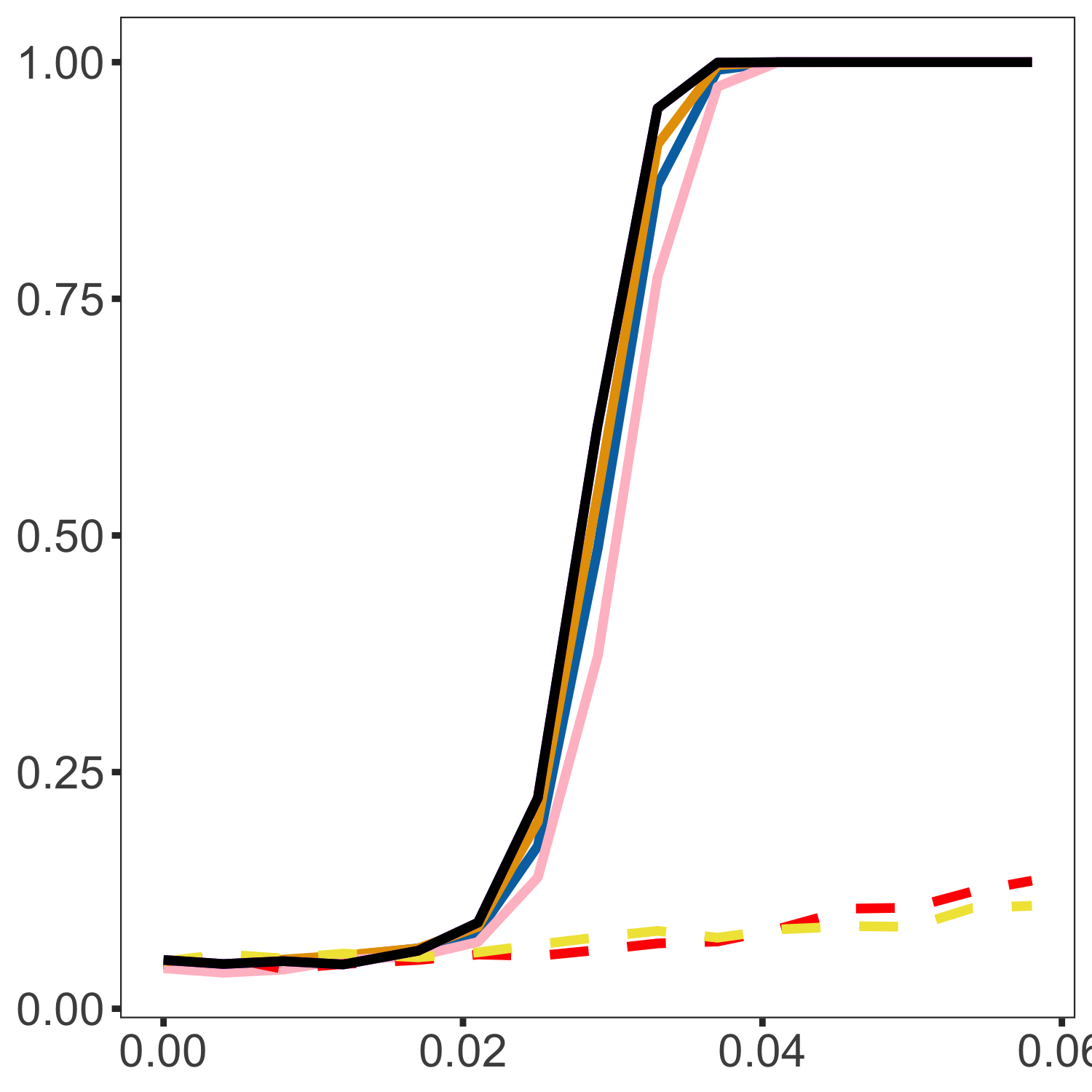}
    \end{subfigure}
    \vfill
    \begin{subfigure}[t]{0.23\textwidth}
        \centering
         \includegraphics[width=\linewidth, height=0.7\linewidth]{./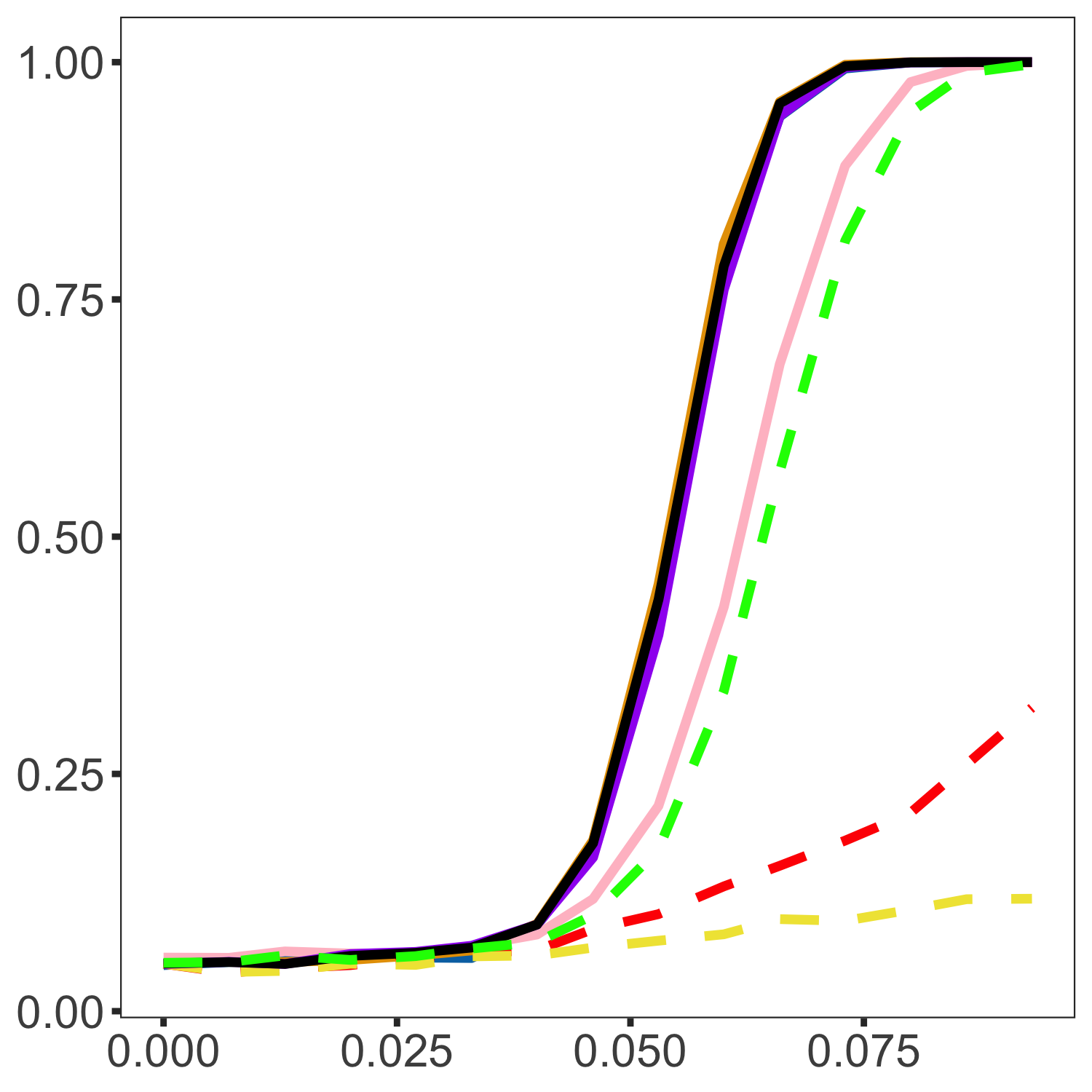}
    \end{subfigure}%
    \begin{subfigure}[t]{0.23\textwidth}
        \centering
        \includegraphics[width=\linewidth, height=0.7\linewidth]{./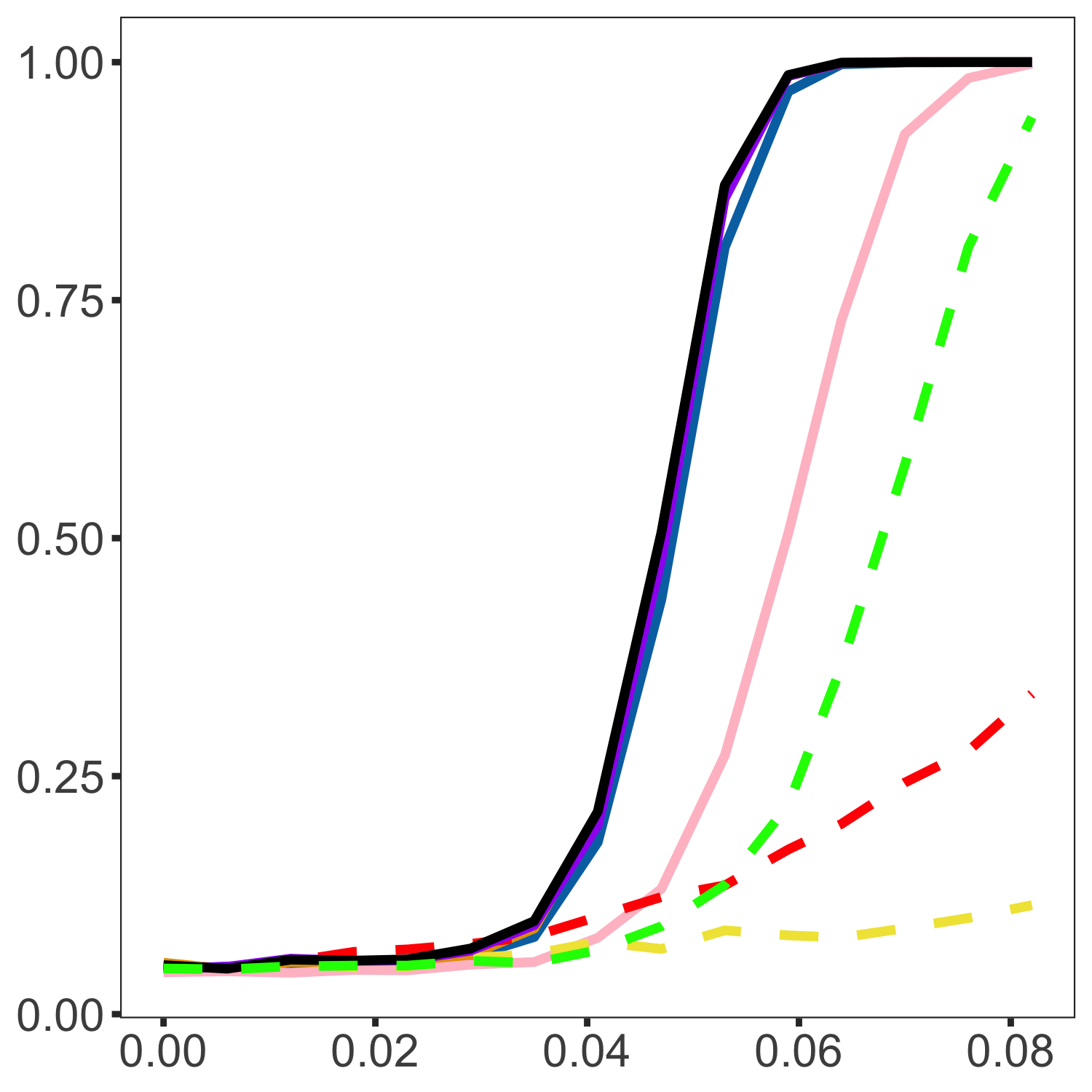}
    \end{subfigure}
     \begin{subfigure}[t]{0.23\textwidth}
        \centering
        \includegraphics[width=\linewidth, height=0.7\linewidth]{./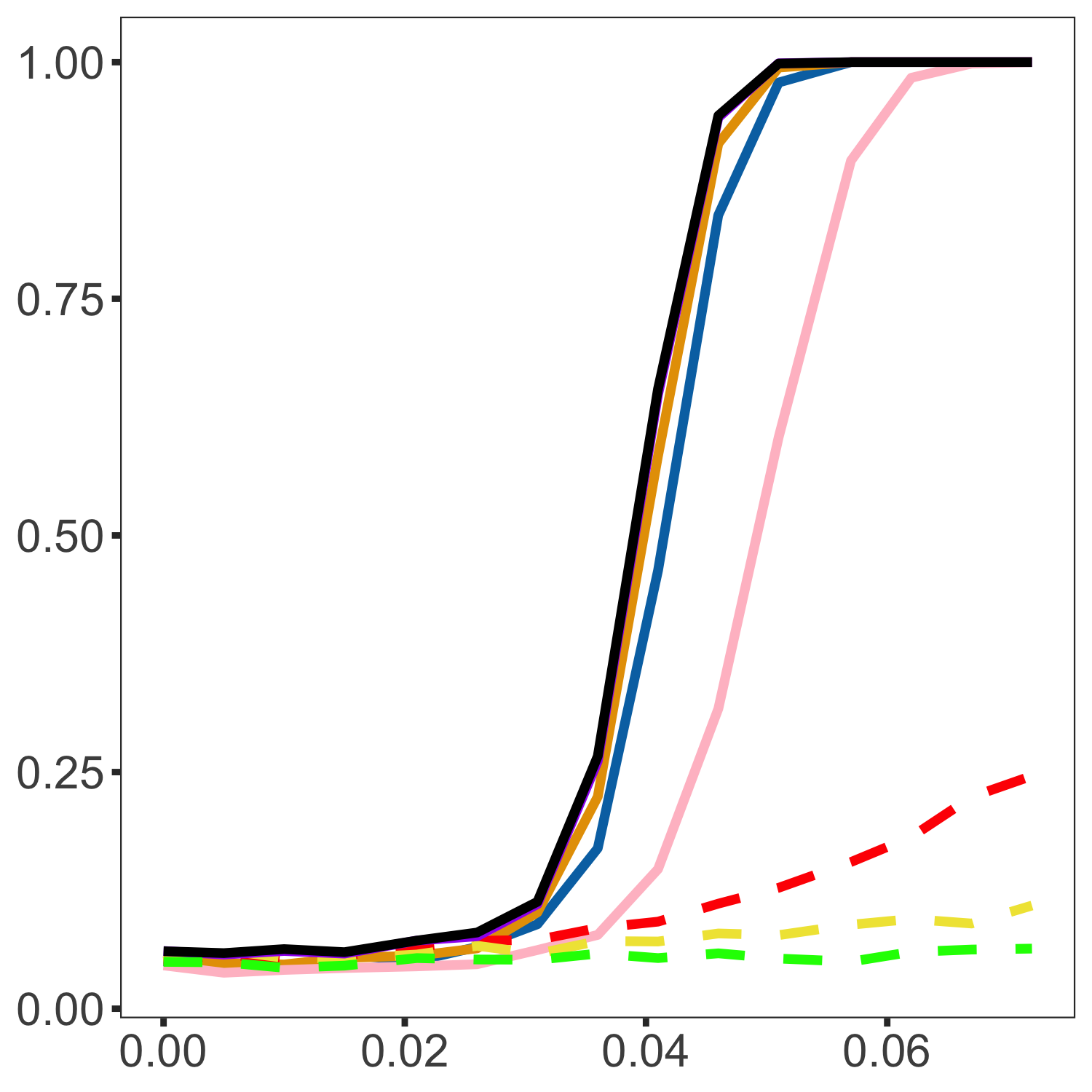}
    \end{subfigure}
    \begin{subfigure}[t]{0.23\textwidth}
        \centering
        \includegraphics[width=\linewidth, height=0.7\linewidth]{./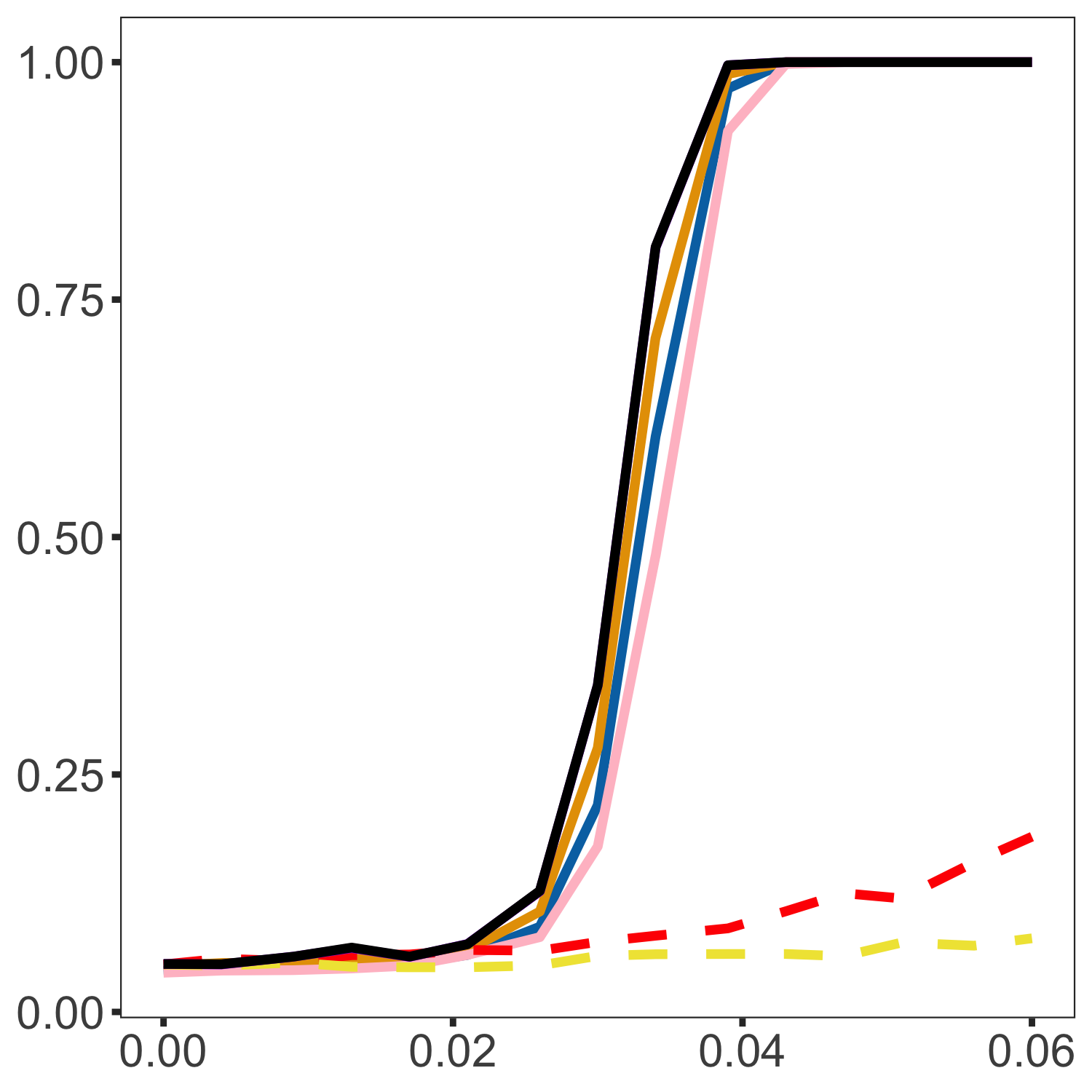}
    \end{subfigure}
   \caption{Size-adjusted empirical power \(\Sigma\) is Poly-Decay. Columns (left to right) correspond to \(\hat{\gamma}_2 = 0.3, 0.5, 0.9, 2\); rows (top to bottom) correspond to \(n_1 = 50, 100, 250\). Solid curves: blue (\(\lambda = 0.5\)), orange (\(\lambda = 1\)), black (\(\lambda=\hat{\lambda}_{I_p}\)), purple (\(\lambda=\hat{\lambda}_{\Sigma_p}\)), and pink (\(\lambda=\hat{\lambda}_*\)). Dashed curves: red (Proj-LRT), yellow (Ridge-LRT), and green (\cite{han2016tracy}, \(\lambda=0\)), the latter available only when \(p<n_1+n_2\).}
    \label{fig:emp_power_Sigma2}
\end{figure}

\subsection{Empirical power}\label{sub:empirical_power}
In this section, we evaluate the empirical power of the proposed tests and compare them with Ridge-LRT, Proj-LRT, and \cite{han2016tracy} ($\lambda \to 0$). Figure~\ref{fig:emp_power_Sigma2} presents power curves against the signal strength parameter $\zeta$ (See Section S.4.1 
in the Supplementary Material) under the Poly-Decay setting; results for other settings are provided in Section~S.3.3 of the Supplementary Material. To ensure a fair comparison, all power curves are adjusted using size-corrected cutoffs derived from each test’s empirical null distribution via simulation. Results of the proposed methods based on asymptotic cutoffs are similar and therefore omitted for brevity.

\textcolor{black}{The proposed methods demonstrate substantially higher power across all simulation settings than the likelihood-based methods, Ridge-LRT and Proj-LRT, underscoring their effectiveness in detecting concentrated alternatives. The test of \cite{han2016tracy} exhibits comparable performance when $\hat{\gamma}_2 =0.3$, but becomes noticeably less powerful as $\hat{\gamma}_2$ increases, particularly when $\hat{\gamma}_2$ approaches 1. This behavior is consistent with the theoretical power analysis in Sections~\ref{subsec:power_comparison} and~\ref{subsec:effect_lambda}.} 

\textcolor{black}{Under the alternative model in Setting (v) (See Section S.4.1 
in the Supplementary Material), the matrix $BC$ can be viewed as a generalization of the \textbf{PA} model \eqref{eq:polynomial_alternatives} with $D = I_p$. Accordingly, $\lambda = \hat{\lambda}_{I_p}$ corresponds to the data-driven Bayesian choice under a correctly specified prior, whereas $\hat{\lambda}_{\Sigma_p}$ represents the choice under a mildly misspecified prior. The choice $\lambda = \hat{\lambda}_{I_p}$ consistently attains the highest power among all considered settings as expected. Nevertheless, across all settings, the loss in power is relatively modest when $\lambda$ is fixed at $0.5$, $1$ or when the prior is mildly misspecified. In contrast, the minimax selection exhibits a noticeable efficiency loss, which is typical of minimax procedures that trade optimality under specific alternatives for robustness across a broader class of scenarios.}

Lastly, the power of the largest root tests remains essentially stable as $n_1$ varies, whereas increasing $n_1$ leads to a noticeable decline in power for Proj-LRT and Ridge-LRT. This behavior arises because the variance of LRT–type statistics inflates when $p$ is large, thereby degrading the signal-to-noise ratio. This highlights the advantage of the proposed methods in settings that involve the simultaneous testing of many linear hypotheses.

\section{Discussion}\label{sec:discussion}

In this paper, we proposed a powerful and computationally efficient largest-root test for high-dimensional general linear hypotheses, based on a ridge-regularized $F$-matrix. Leveraging tools from Random Matrix Theory, we established that, under the null hypothesis, the largest eigenvalue follows an asymptotic Tracy–Widom distribution after appropriate scaling, in regimes where the dimension is comparable to the effective sample size and the number of hypotheses. We developed a consistent and efficient procedure for estimating the scaling parameters, relying solely on the empirical eigenvalues of the test matrices and without requiring structural assumptions. The power of the test and the influence of the regularization parameter were analyzed under a class of rank-one alternatives. Methods for data-driven selection of the regularization parameter are developed, guided by Bayesian decision theory and minimaxity. Extensive simulation studies demonstrate that the proposed test maintains well-controlled type I error rates and exhibits strong power across diverse settings. Furthermore, the method is robust to variations in covariance structure and data distributions, underscoring its practical utility in high-dimensional hypothesis testing.

Several promising research directions can be pursued to extend the current work. From a technical standpoint, our analysis assumes regular stability properties for the leading eigenvalues of the population covariance matrix. This excludes important models such as classical factor models, where a finite number of eigenvalues are well-separated from the bulk. Investigation under a class of spiked covariance models is a valuable direction for future research. Another avenue is the integration of the proposed test with variable screening techniques to enable application in ultra-high-dimensional regimes, where $p$ grows substantially faster than the sample sizes. By reducing dimensionality prior to testing, such an approach could improve computational efficiency while preserving statistical power.

\section*{Supplementary Material}
The Supplementary Material includes real data analysis, additional discussions and implementation details for Algorithms~\ref{algo:linear_program}, \ref{algo:estimation_rho}, and \ref{algo:ODE}, additional simulation results, detailed proofs of the technical results presented in the paper, and possible extensions of the current framework both
technically and in terms of the form of the linear hypotheses.

\bibliographystyle{apalike}
\bibliography{Ridge_reg_Fmatrix}

\newpage\clearpage


\setcounter{page}{1}
\setcounter{section}{0}
\renewcommand{\thesection}{S.\arabic{section}}
\setcounter{subsection}{0}
\renewcommand{\thesubsection}{S.\arabic{section}.\arabic{subsection}}
\setcounter{equation}{0}
\renewcommand{\theequation}{S.\arabic{section}.\arabic{equation}}
\setcounter{figure}{0}
\setcounter{table}{0}
\begin{center}
\textbf{ \large Supplementary Material to \\ ``Ridge-regularized Largest Root Test for High-Dimensional \\ General Linear Hypotheses''}\\
{Haoran Li}\\
{Department of Mathematics and Statistics}\\
{Auburn University}
\end{center}

This Supplementary Material is organized as follows.
\begin{enumerate}
\item[] Section~\ref{sec:real_data_analysis} presents an application of the proposed method to Human Connectome Project data, assessing associations between brain measurements and behavioral outcomes.
\item[] Section~\ref{sec:implement_details} provides additional discussion and implementation details for the estimation algorithms presented in Section \ref{sec:estimation}.
\item[] Section~\ref{sec:supp_power_analysis} gives further details on the power comparison reported in Section~\ref{subsec:power_comparison}, along with additional numerical results complementing Section~\ref{sec:power_analysis}.
\item[] Section~\ref{sec:additional_simulation_results} reports supplementary simulation results.
\end{enumerate}

The remaining sections are devoted to proofs of the technical results in the manuscript.  
\begin{enumerate}
\item[] Section~\ref{sec:basic_definitions} introduces the notation, definitions, and technical tools used in the proofs.
\item[] Section~\ref{sec:properties_of_bg_lambda} establishes local laws and rigidity for the matrix \(\bG_\lambda = \bZ U_2U_2^T \bZ^T + \lambda \Sigma_p^{-1}\).
\item[] Section~\ref{sec:proof_theorem_main} contains the proof of Theorem~\ref{thm:main}.
\item[] Section~\ref{sec:proof_of_theorem_ref_thm_consistency_estimators} provides proofs of Theorem~\ref{thm:consistency_estimators}, Lemma~\ref{lemma:consistency_condition1}, and Lemma~\ref{lemma:consistency_condition3}.
\item[] Section~\ref{sec:proof_power} contains proofs of Lemma~\ref{lemma:alternatives_deterministic}, Corollary~\ref{corollary:power_consistency_deterministic}, Lemma~\ref{lemma:alternatives_prob}, and Corollary~\ref{corollary:power_consistency_prob}.
\item[] Section~\ref{sec:relaxtion_conditions} discusses potential relaxation of the technical assumptions. 
\end{enumerate}

Other technical results, including Lemmas~\ref{lemma:converge_hat_lambda} and~\ref{lemma:converge_lambda_*}, are straightforward and their proofs are omitted.

Lastly, in Section \ref{sec:extension_to_more_general_hypotheses}, we extend the proposed procedure to accommodate a more general family of linear hypotheses of the form
\[ H_0: ABC = \Gamma \quad \text{against} \quad H_a: ABC \neq \Gamma, \]
for some given matrices $A$, $C$ and $\Gamma$.

\section{Real data analysis}\label{sec:real_data_analysis}

The Human Connectome Project (HCP), funded by the National Institutes of Health (NIH), aims to map the neural pathways that underlie human brain function and behavior. A consortium led by Washington University in St. Louis and the University of Minnesota has collected extensive neuroimaging and behavioral data from  $n_0 = 1,113$ healthy young adults aged $22$ to $35$. This dataset facilitates detailed analyses of brain circuits and their relationships to behavior and genetics. Among the publicly available data are cerebral volumetric measurements and various behavioral evaluation scores. In this section, we utilize the proposed tests to investigate the associations between cerebral measurements and behavioral variables using the HCP dataset.

The behavioral evaluation scores in the HCP dataset encompass several domains, including cognition, emotion, sensation, and motor functions. The cognition domain assesses various cognitive abilities such as episodic memory, cognitive flexibility, and attention control. The emotion domain includes measures of psychological well-being, social relationships, stress, and self-efficacy. The sensation domain comprises tests related to audition, olfaction, taste, and vision. The motor domain evaluates aspects like cardiovascular endurance, manual dexterity, grip strength, and gait speed. After a pre-screening process to filter out highly correlated variables, we selected $p=127$ representative behavioral variables to study their associations with cerebral measurements.

Cerebral measurements were obtained for 68 cortical surface regions, assessing surface area, cortical thickness, and gray matter volume. These regions are distributed across 14 cerebral lobes symmetrically located in both hemispheres, resulting in a total of $204$ cortical surface variables ($68$ regions $\times$ $3$ measurement types). Additionally, each voxel in the brain's subcortical structures was assigned one of 44 labels, with the volume of each labeled region measured, yielding 44 subcortical variables. Four labels were excluded from the analysis due to a high proportion of missing values.

Available demographic information includes the age and gender of subjects. Specifically, the subjects are divided into four age groups, namely 22–25, 26–30, 31–35, and 36+. The data set is roughly balanced with respect to gender, as the 1113 subjects consist of 606 females and 507 males.

Specifically, we consider the following multivariate regression model: 
\begin{equation}\label{eq:regression_model_real_data}
Y_i = \beta_0 + \beta_1^T \mathbf{L}_{1i} + \dots +\beta_{14}^T \mathbf{L}_{14,i} + \beta_{15}^T \mathbf{SC}_{i} + \beta_{16}^T \mathbf{D}_i +  \Sigma_{p}^{1/2}Z_i.  
\end{equation}
where (i) $Y_i$ is the vector of $127$ behavioral scores of subject $i$; (ii) $\mathbf{L}_{ji}$ ($j=1,\dots, 14$) represents the cerebral measurements vector of the surface regions belonging to lobe $j$ for subject $i$, with dimensions varying from $3$ to $33$ depending on the lobe. The variables are of 3 types: surface area, cortical thickness, and gray matter volume; (iii) $\mathbf{SC}_{i}$ contains the volume measurements of $40$ subcortical regions of subject $i$; (iv) $\mathbf{D}_i$ are age and gender group dummy variables of subject $i$. This model allows us to assess the associations between various cerebral measurements and behavioral outcomes while accounting for demographic variables. The dimension of the explanatory variables (including the intercept) is $m = 249$, the dimension of the response $Y_i$ is $p=127$, and the sample size is $n_0 = 1113$. We are interested in testing the joint significance of all variables associated with each cerebral lobe. Namely, we test $H_0: \beta_j =0$ against $H_a: \beta_j \neq 0$, for each $j=1,\dots, 14$.  The proposed test procedure is applied with five choices of $\lambda$: $0.5 \bar{\tau}$, $\bar{\tau}$, $1.5 \bar{\tau}$, the data-driven choice of $\lambda$ described in Section \ref{sec:selection_lambda} with $D=I_p$ and when $D= \Sigma_p$. The latter two are referred to as $\hat{\lambda}_{I_p}$ and $\hat{\lambda}_{\Sigma_p}$, respectively. Here, $\bar{\tau}$ is the average of eigenvalues of $\bW_2$. The p-values are reported in Table \ref{table:pvalues}
\begin{table}[ht]
\centering
\begin{tabular}{lccccc}
\toprule
\textbf{Lobe name} &  $\lambda =0.5\bar{\tau}$   & $\lambda =\bar{\tau}$ & $\lambda =1.5\bar{\tau}$ & $\lambda =\hat{\lambda}_{I_p}$ & $\lambda=\hat{\lambda}_{\Sigma_p}$  \\
\midrule
Left Temporal Lobe Medial Aspect  & 0.070  & 0.020 & 0.009 & 0.070 & 0.070  \\
Right Temporal Lobe Medial Aspect  & 0.435&0.565&0.637&0.435&0.435 \\
Left Temporal Lobe Lateral Aspect & 0.723&0.766&0.760&0.723&0.723 \\
Right Temporal Lobe Lateral Aspect & 0.085&0.035&0.022&0.085&0.085\\
Left Frontal Lobe                   & 0.658&0.585&0.579&0.658&0.658\\
Right Frontal Lobe                   & 0.010&0.001&0.001&0.010&0.010 \\
Left Parietal Lobe                  & 0.108&0.056&0.045&0.108&0.108\\
Right Parietal Lobe                  & 0.365&0.337&0.338&0.365&0.365\\
Left Occipital Lobe                 & 0.786&0.759&0.725&0.786&0.786\\
Right Occipital Lobe                 & 0.204&0.357&0.357&0.204&0.204 \\
Left Cingulate Cortex               & 0.252&0.252&0.285&0.252&0.252 \\
Right Cingulate Cortex               & 0.153&0.094&0.086&0.153&0.153\\
Left Insula                         &0.164&0.210&0.266&0.164&0.164\\
Right Insula                         & 0.185&0.244&0.291&0.185&0.185\\
\bottomrule
\end{tabular}
\caption{P-values: Joint significance tests of 14 lobes.}
\label{table:pvalues}
\end{table}


Among the significant coefficients at the 90\% level detected by the proposed tests with at least one choice of $\lambda$ are left medial temporal lobe, right frontal lobe, and left parietal lobe. Similar results are reported in the literature, for example \cite{li2024testing}. The left medial temporal lobe, encompassing structures such as the hippocampus and parahippocampal gyrus, plays a pivotal role in memory formation and retrieval. Damage to the lobe has been associated with impairments in verbal memory tasks, indicating its importance in processing verbal material. Functional connectivity within the lobe is crucial for cognitive functions like episodic memory. 

The frontal lobes, located directly behind the forehead, are the largest lobes in the human brain and are essential for voluntary movement, expressive language, and higher-level executive functions. These executive functions encompass a range of cognitive abilities, including planning, organizing, initiating, self-monitoring, and controlling responses to achieve specific goals. Damage to the frontal lobe can result in impairments in these areas, affecting one's ability to perform tasks that require these skills. 

The parietal lobe, situated behind the frontal lobe, plays a pivotal role in integrating sensory information from various parts of the body. It is responsible for processing touch, temperature, pressure, and pain sensations. Additionally, the left parietal lobe is involved in controlling skilled motor actions and is associated with numerical processing and language functions. Damage to this area can lead to difficulties in these domains, affecting one's ability to perform tasks that require these skills.

\clearpage

\section{Discussion and implementation details of the proposed estimation method}\label{sec:implement_details}
~\\

\emph{Discussion}. Additional constraints can be incorporated into Algorithm \ref{algo:linear_program} to further enhance precision. For instance, to improve the accuracy of  $\hat{\calH}_3(y)$, we may want to ensure that the second-order derivative of $\hat{Q}_1(\iz)$ matches those of $\tilde{\calH}_1(-\lambda \hat{\varphi}(\iz))$. Since the derivatives are also linear in the weights $w_k$, it is straightforward to regularize the problem and add such constraints. Our numerical studies suggest that when $\hat{\gamma}_2$ is relatively small (e.g., $\hat{\gamma}_2 \leq 1$), incorporating constraints on the second-order derivative of $\tilde{\calH}_1(-\lambda \hat{\varphi}(\iz))$ can slightly improve the estimation accuracy of $\hat{\Theta}_1$ and $\hat{\Theta}_2$. However, when $\hat{\gamma}_2$ is relatively large (e.g., $\hat{\gamma}_2 \geq 5$), such constraints tend to reduce estimation precision. This is mainly because the accuracy of $\hat{\varphi}''(\iz)$, which serves as the estimator of $\varphi''(\iz)$, diminishes when $p$ becomes much larger than $n_2$.

 Algorithm \ref{algo:linear_program} is inspired by \citet{el2008spectrum} and \citet{ledoit2012nonlinear}, but it diverges in the choice of basis functions and the design of the loss function. While \citet{el2008spectrum} and \citet{ledoit2012nonlinear} prioritize achieving smoothness in the estimator of \(F^{\Sigma_\infty}\) by employing a combination of smooth basis functions and point masses, our focus is fundamentally different. In the context of estimating \(s(x)\), the smoothness of \(F^{\Sigma_\infty}\) is not a primary concern. 
Consequently, we adopt a simpler approach by approximating \(F^{\Sigma_\infty}\) solely using a mixture of point masses. Our numerical experiments indicate that this method performs comparably to, if not better than, the smooth basis approach, provided the grid of \(\sigma_k\) points is sufficiently dense. This finding suggests that the additional complexity introduced by smooth basis functions offers no substantial benefit in this specific setting. As a result, we focus exclusively on point-mass mixture models in our estimation procedure. In addition, the loss functions used by \citet{el2008spectrum} and \citet{ledoit2012nonlinear} control only the discrepancy between $\hat{Q}_1$ and $\tilde{\calH}_1$. In contrast, our method explicitly penalizes the discrepancy between $\hat{Q}_2$ and $\tilde{\calH}_2$, as accurate estimation of $\calH_2$ is of direct interest.  

\emph{Implementation details for Algorithm \ref{algo:linear_program}}. The recommended choice of the grid \(\{\sigma_k\}_{k=1}^K\) is equally spaced on \([\tilde{\ell}_{n_2} , \tilde{\ell}_1]\), with \(K \approx 500\). The choice of \(K\) strikes a balance between computational efficiency and estimation accuracy.
The recommended choice of \(\{\iz_i\}_{i=1}^I\) is such that \(\Re(\hat{\varphi}(\iz_i))\) are evenly spaced on \([\hat{\varphi}(1.05\tilde{\ell}_1), \hat{\varphi}(-\lambda)]\), and $\Im(\hat{\varphi}(\iz_i)) = 10^{-2}\times \tilde{\ell}_1^{-1}$. We can first select $\hat{\varphi}(\iz_i)$ and numerically find the corresponding $\iz_i$ using optimization methods like Newton-Raphson. While more points improve accuracy, we recommend setting \(I \approx 500\) to balance computational efficiency and estimation precision. 
Traditional linear programming solvers can be used for implementation. In our simulation, we utilize the \texttt{R} package \texttt{Rglpk}, which implements the GNU Linear Programming Kit. With the recommended settings, the problem can be efficiently handled on a typical PC (\(\leq 10\) seconds).

\emph{Implementation details for Algorithm \ref{algo:ODE}}. The proposed ODE requires high accuracy on the initial condition $\hat{s}(0)$. We propose using Newton-Raphson method to find $\hat{s}(0)$, starting from the initial point $1/(\lambda\hat{\gamma}_2\hat{\varphi}(-\lambda)) -1/ \hat{\gamma}_2$. The ODE can then be solved by traditional numerical methods such as the fourth-order Runge-Kutta method (RK4). In our simulation studies, the \texttt{R} package \texttt{deSolve} is used for implementation.

\newpage
\clearpage

\section{Effect of the regularization parameter: additional details}\label{sec:supp_power_analysis}
First, we give the explicit expressions for $\Theta_H$ and $\Theta_{2p}(\infty)$ involved in the SNR of the tests in the two limiting regimes when $\lambda\to0$ and $\lambda\to\infty$. Let $\MP_p(\tau)$ be the Mar\v{c}enko-Pastur law with parameter $\Breve{\gamma}_2 = \hat{\gamma}_2\wedge \hat{\gamma}_2^{-1}$, that is 
\[ d\MP_p(\tau) = \frac{1}{2\pi\breve{\gamma}_2} \frac{\sqrt{[b_+ - \tau ][\tau - b_- ]}}{\tau }d\tau,  \quad \tau \in [b_-, b_+],\]
where $b_{\pm} = (1\pm \sqrt{\breve{\gamma}_2})^2$. Denote 
\[ s_{p0}(x) = \int\frac{d\MP_p(\tau)}{\tau - x},  \quad \text{ for }~~ x \in [0, b_-).\]
An explicit form of $s_{p0}(x)$ is available; see, for example, \cite{bai2010spectral}. Further, call 
\[ \breve{p} = p\wedge n_2 \quad \text{and} \quad \breve{n}_1 = n_1 \wedge (n_1+n_2 -p).\]

Define $\beta = \beta_{p0}$ to be the solution in $(0, b_-)$ to 
\[ \beta^2 s'_{p0}(\beta) =  \frac{\breve{n}_1 }{\breve{p}}.\]
Then, 
\[\Theta_H = \left[ \frac{p^2 \breve{p} }{2 \breve{n}_1^3  }  s''_{p0}(\beta) +  \frac{p^2}{\breve{n}_1^2\beta^3}  \right]^{1/3}. \]
It can be verified that $\Theta_{H} = p^{2/3} \breve{n}_1 ^{-2/3} \sigma_p^{-1}$, where $\sigma_p$ is the quantity defined in (2.7) of \cite{han2016tracy}. 

To obtain a bound on $\Theta_H^{-1}$, notice that 
\[  \Theta_H \geq  \frac{p^{2/3}}{ (n_1\wedge (n_1+n_2 -p))^{2/3}} \frac{1}{\beta} \geq \frac{p^{2/3}}{n_1^{2/3}} \frac{1}{b_-} = \gamma_1^{2/3} (1- \sqrt{\breve{\gamma}_2})^{-2}.\]
Therefore, 
\[ \Theta_H^{-1}|\gamma_2 -1|^{-1} \leq  \gamma_1^{-2/3} (1-\sqrt{\breve{\gamma}_2})^2 |1-\gamma_2|^{-1}.\]
If $\gamma_2<1$, 
\[\Theta_H^{-1}|\gamma_2 -1|^{-1} \leq \gamma_1^{-2/3} \frac{(1-\sqrt{\gamma_2})^2}{ 1-\gamma_2} = \gamma_1^{-2/3} \frac{1-\sqrt{\gamma_2}}{1+\sqrt{\gamma_2}}\sim \frac{1}{2\gamma_1^{2/3}} (1-\sqrt{\gamma_2}), \quad \text{if }\gamma_2 \sim 1-. \]
If $\gamma_2>1$,
\[ \Theta_H^{-1}|\gamma_2-1|^{-1} \leq \gamma_1^{-2/3} \gamma_2^{-1} \frac{ (1-\sqrt{\breve{\gamma}_2})^2}{1-\breve{\gamma}_2} \sim \frac{1}{2\gamma_1^{2/3}} (1-\sqrt{\breve{\gamma}_2}), \quad \text{ if }\gamma_2\sim 1+.\]

\medskip
Next, we consider the limiting case when $\lambda \to \infty$. The definition of $\Theta_{2p}(\infty)$ is given as follows. Let 
\[s_{p\infty}(x) = \int \frac{dF^{\Sigma_p}(\tau)}{1/\tau - x}, \quad \text{for }~~x\in [0, 1/\sigma_{1p}). \]
Define $\beta = \beta_{p\infty}$ to be the solution in $(0, 1/\sigma_{1p})$ to 
\[ \beta^2 s'_{p\infty}(\beta) = \frac{n_1}{p}.\]
Then, 
\[ \Theta_{2p}(\infty) = \left[\frac{(p/n_1)^3}{2} s''_{p\infty}(\beta) + \frac{(p/n_1)^2}{\beta^3}\right]^{-1/3}.\]
Indeed, $\Theta_{2p}(\infty) = \lim_{\lambda \to \infty}\lambda \Theta_{2p}(\lambda)$. Also, it can be verified that $\Theta_{2p}(\infty)= (n_1/p)^{2/3} \gamma_0$, where $\gamma_0$ is the quantity defined in (2.17) of \citet{lee2016tracy}. 

\begin{proof}[Proof of Lemma \ref{lemma:alternative_Han2016}]
    \label{proof:Lemma_Han2016}
    When $\gamma_2<1$, we have that $\ell_H = \ell_{\max}(\tilde{\bF}_0)$. Therefore, the limit of $p^{-s}\ell_H$ follows directly from that of $p^{-s}\ell_{\max}(\tilde{\bF}_\lambda )$ as in Lemma \ref{lemma:alternatives_prob} when we take $\lambda \to 0$. In particular,
    \[ \lambda \varphi_p(-\lambda) \to (1-\gamma_2), \quad \text{if }\gamma_2<1.\]
    
    When $\gamma_2>1$, $\ell_H$ is characterized as follows. Set $\hat{\bW}_1 = \Sigma_p^{-1/2} \bW_1 \Sigma_p^{-1/2}$ and $\hat{\bW}_2 = \Sigma_p^{-1/2} \bW_2\Sigma_p^{-1/2}$. Note that $\hat\bW_2$ is of rank $n_2$ with probability 1. We denote the eigen-decomposition of $\hat{\bW}_2$ as 
    \[  \hat{\bW}_2 = [Q_1, Q_2] \begin{pmatrix}
        D ~&~ 0 \\0 ~&~0  
    \end{pmatrix}  [Q_1,Q_2]^T.\]
    Here, $D$ is $n_2\times n_2$ and $Q_1$ is $p\times n_2$. Then, following the analysis in (5.4)--(5.9) of \cite{han2016tracy}, $\ell_H$ is the largest root to 
    \[ \det\left( -\ell D + Q_1^T \hat{\bW}_1^{1/2}  [ I_p - \hat{\bW}_1^{1/2}Q_2 (Q_2^T \hat\bW_{1}  Q_2)^{-1}Q_2^T \hat{\bW}_1^{1/2}]  \hat{\bW}_1^{1/2} Q_1  \right) =0.\]
    Equivalently, $\ell_H$ is the largest eigenvalue of 
    \[  D^{-1}Q_1^T \hat{\bW}_1^{1/2}  [ I_p - \hat{\bW}_1^{1/2}Q_2 (Q_2^T \hat\bW_{1}  Q_2)^{-1}Q_2^T \hat{\bW}_1^{1/2}]  \hat{\bW}_1^{1/2} Q_1.\]
    Notice that the matrix in the square bracket above is a projection matrix. We then have that $\ell_H$ is bounded by 
    \[\ell_H  \leq  \text{the largest eigenvalue of } D^{-1}Q_1^T \hat{\bW}_1 Q_1, \text{or equivalently, that of } Q_1 D^{-1} Q_1^T \hat{\bW}_1.\]
    Since $Q_1 D^{-1} Q_1^T$ is the Moore-Penrose inverse of $\hat{\bW}_2$, denoted by $\hat{\bW}_2^+$,  we have that 
    \[  \ell_H \leq \ell_{\max}(\hat{\bW}_2^+ \hat{\bW}_1).\]
    
    Under \textbf{PA}, $\hat{\bW}_1$ is decomposed into 
    \[ \hat{\bW}_1^{(0)} + p^{s-1} \Sigma_p^{-1/2} D^{1/2} \nu \nu^T D^{1/2}\Sigma_p^{-1/2} + n_1^{-1}\Sigma^{-1/2}_p BX P_1 \bZ^T + n_1^{-1}\bZ P_1 X^TB^T\Sigma^{-1/2}_p.\]
    Here, $\hat{\bW}_1^{(0)}= n_1^{-1} \bZ P_1 \bZ^T$.  Following a well-known result in RMT, the largest (smallest) singular value of $n_1^{-1/2}\bZ P_2$ is bounded away from infinity (zero). We conclude that 
    \begin{align*}
    p^{-s} \ell_{\max}(\hat{\bW}_2^+ \hat{\bW}_1) &= p^{-1}  \nu ^TD^{1/2}  \Sigma^{-1/2}_p \hat{\bW}_2^+ \Sigma_p^{-1/2} D^{1/2}  \nu+ o_P(1)\\
    & = p^{-1} \tr(\hat{\bW}^+_2 \Sigma_p^{-1/2}D\Sigma^{-1/2}_p) + o_P(1).
    \end{align*}
    
    Recently, the asymptotic properties of the Moore-Penrose inverse of a large-dimensional sample covariance matrix have been studied in \cite{bodnar2024reviving}. In particular, Theorem 2.1 of \cite{bodnar2024reviving} implies that 
    \[p^{-1} \tr(\hat{\bW}^+_2 \Sigma_p^{-1/2}D\Sigma^{-1/2}_p) = \gamma_2^{-1}(\gamma_2-1)^{-1}p^{-1}\tr(D\Sigma_p^{-1}) + o_P(1).\]
    The bound on $\ell_H$ when $\gamma_2>1$ follows. 
\end{proof}

\clearpage

The specifications in the numerical study presented in Section \ref{subsec:effect_lambda} are as follows:
\begin{itemize}
    \item[(1)] We consider four spectral structures for $\Sigma_p$:
    \begin{itemize}
        \item[(i)] (Identity) $\Sigma_p = I_p$.
        \item[(ii)] (Poly-Decay) The eigenvalues of $\Sigma_p$ are: $(1+ j/p)^{-2}$ when $j=1,2,\dots, p$.
        \item[(iii)] (AR-ACF) The entries of $\Sigma_p$ are  $0.3^{|i-j|}$, $1\leq i,j\leq p$.
        \item[(iv)] (Point-Mix) The eigenvalues are 1, 5, 15 of proportion $40\%$, $30\%$, and $30\%$.  
    \end{itemize}
    Under all settings, the spectrum is further normalized so that $\tr(\Sigma_p) = p$.
    \item[(2)] We consider two settings for $D$. Namely, $D=I_p$ and $D=\Sigma_p$. 
\end{itemize}

Figure \ref{fig:SNRvsGamma2_Gamma1_2} displays the SNR as $\gamma_2$ increases when $\gamma_1 =2$. Figure \ref{fig:SNRvsLamda_Gamma1_05} displays $\SNR_p(\lambda)$ as a function of $\lambda$ when $\gamma_1 = 0.5$.

\begin{figure}[H]
  \centering 
  \begin{subfigure}{0.24\textwidth}
    \includegraphics[width=\linewidth,height=0.6\linewidth]{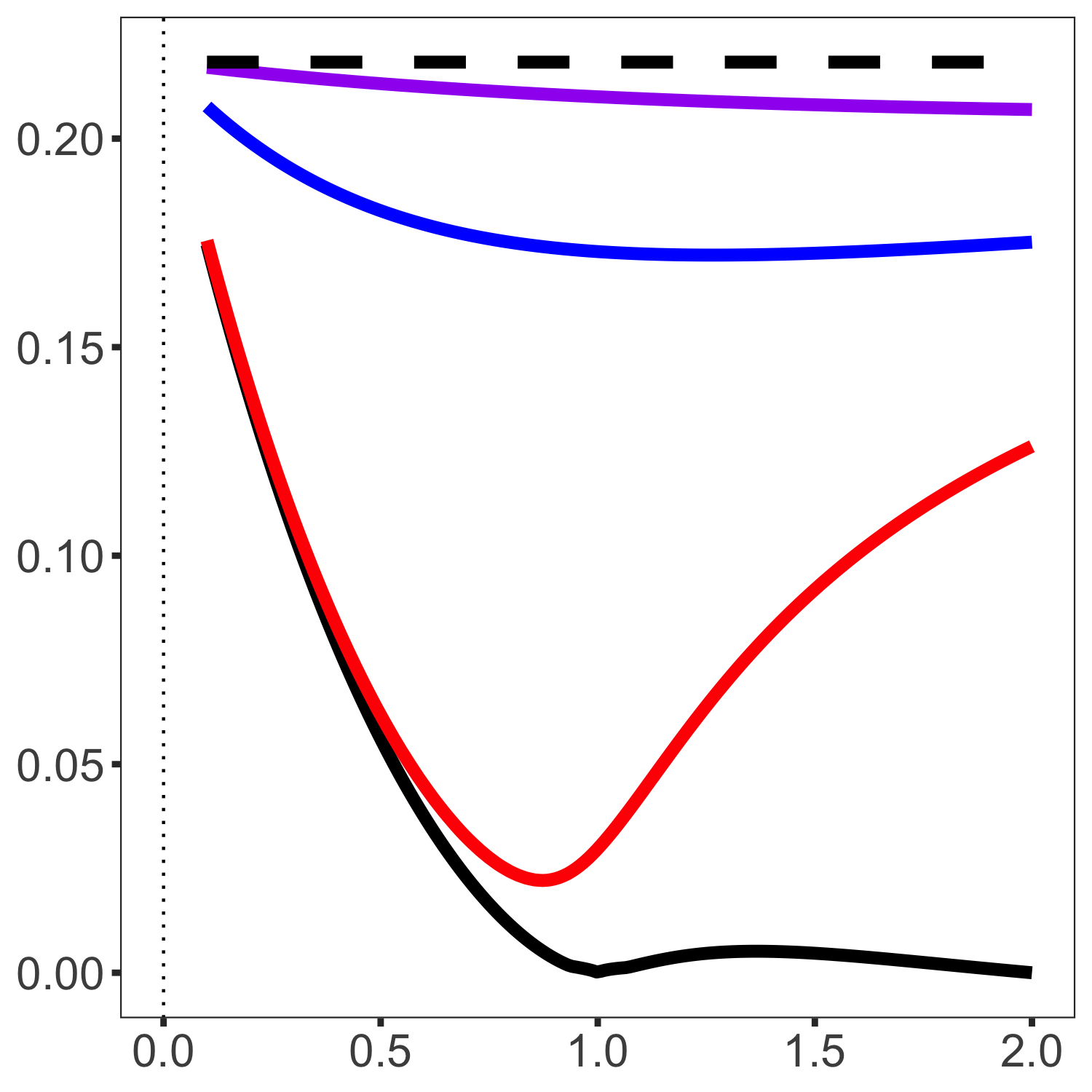}
   \end{subfigure}
  \hfill
  \begin{subfigure}{0.24\textwidth}
    \includegraphics[width=\linewidth,height=0.6\linewidth]{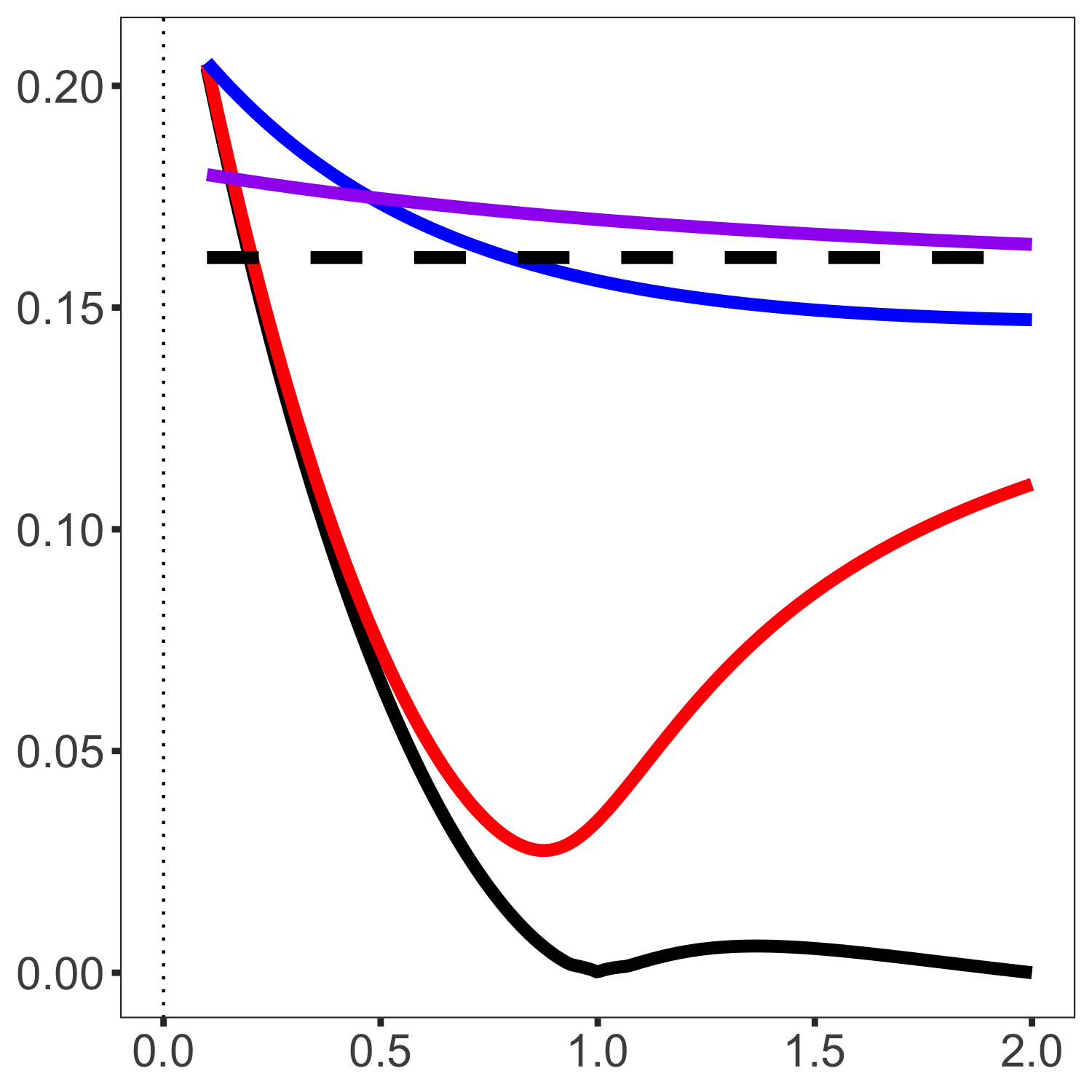}
   \end{subfigure}
  \hfill
  \begin{subfigure}{0.24\textwidth}
    \includegraphics[width=\linewidth,height=0.6\linewidth]{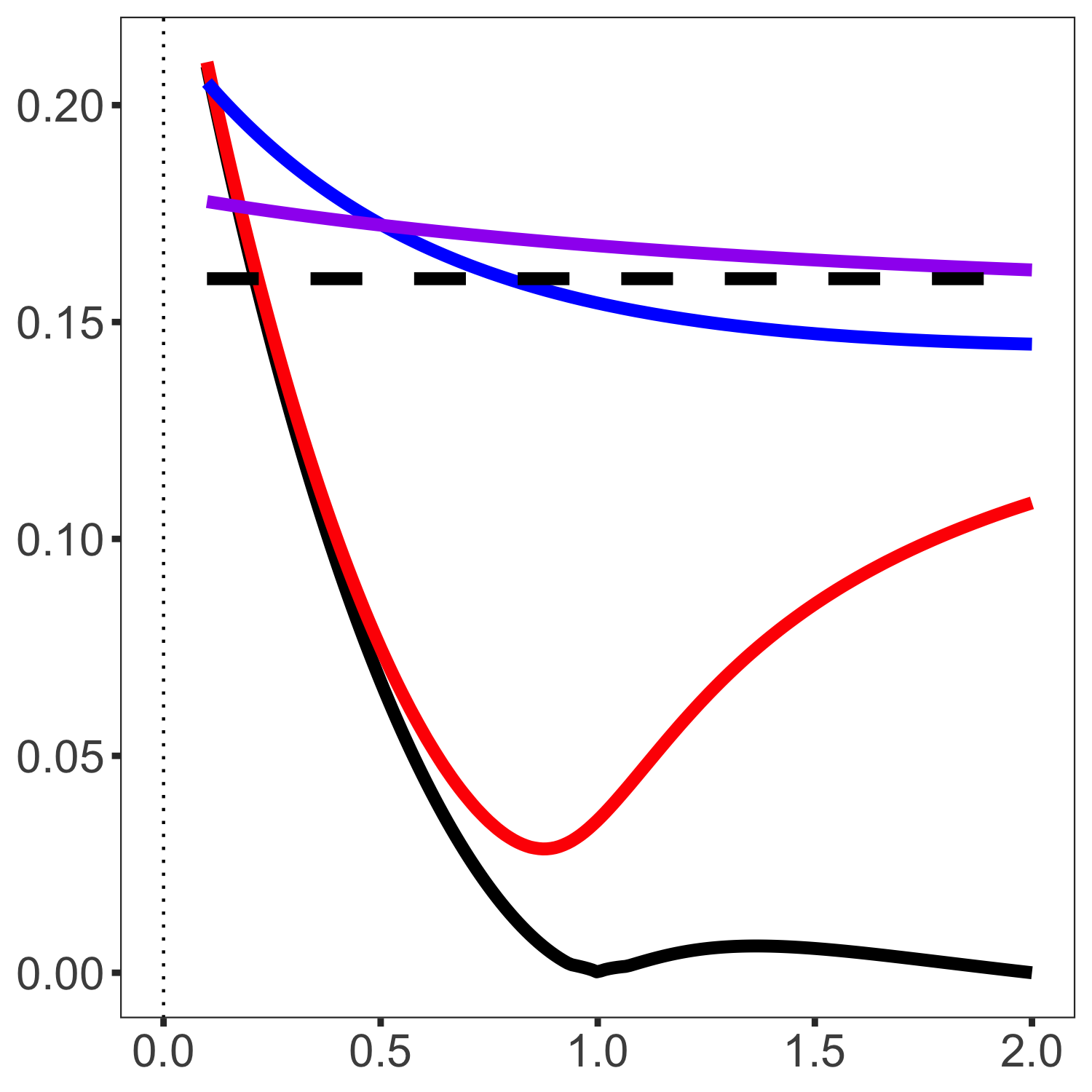}
   \end{subfigure}
  \begin{subfigure}{0.24\textwidth}
    \includegraphics[width=\linewidth,height=0.6\linewidth]{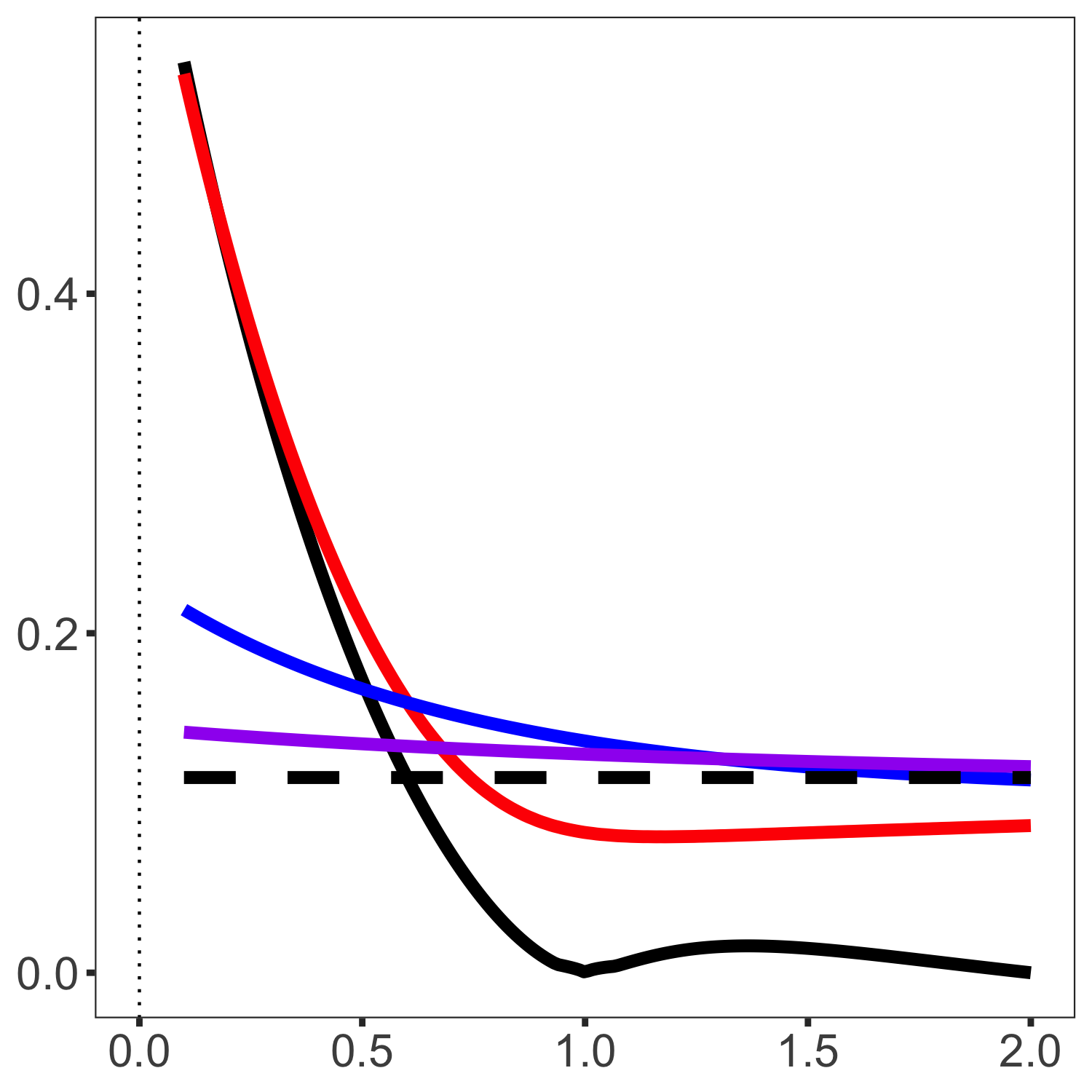}
   \end{subfigure}
  \vfill
  \begin{subfigure}{0.24\textwidth}
    \includegraphics[width=\linewidth,height=0.6\linewidth]{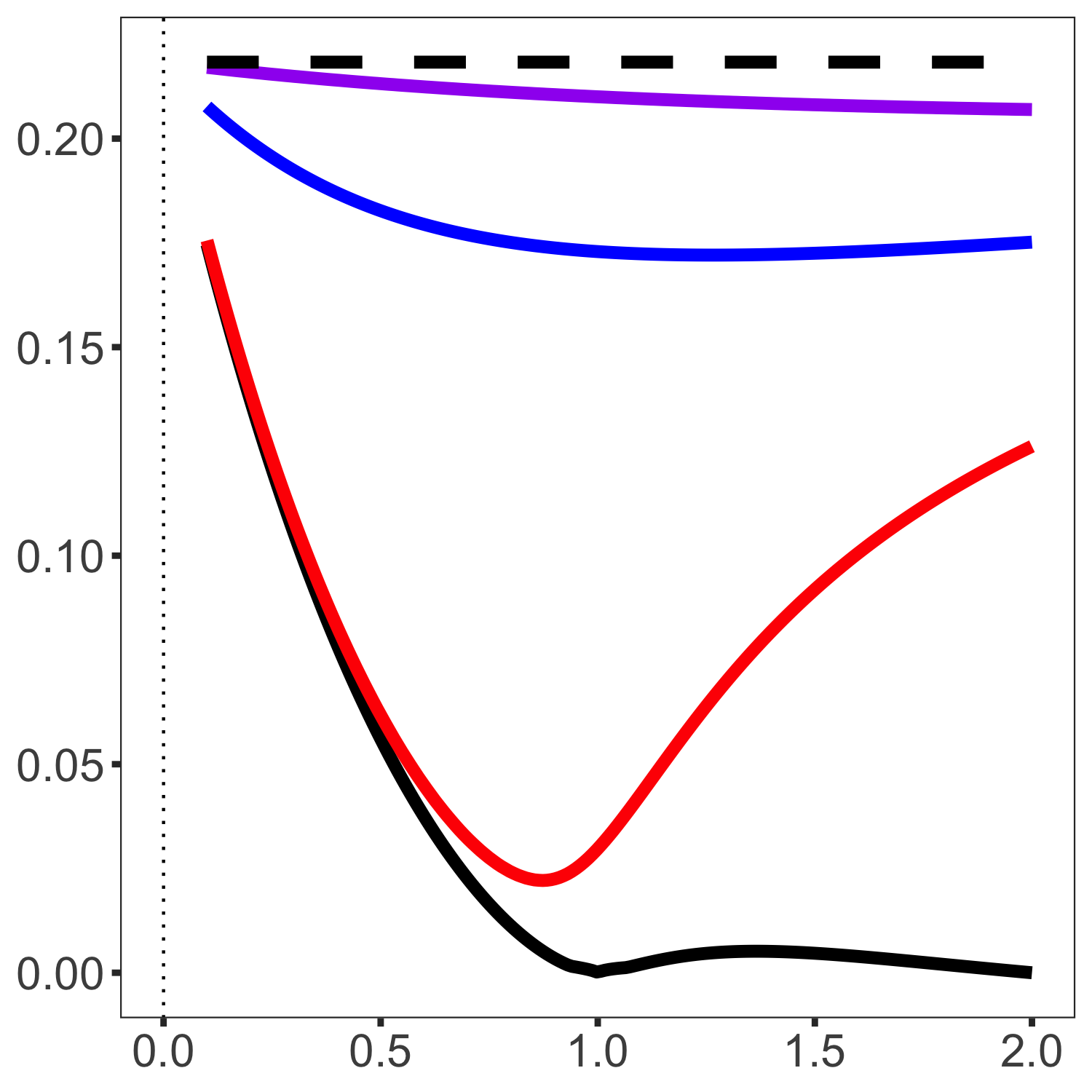}
   \end{subfigure}
  \hfill
  \begin{subfigure}{0.24\textwidth}
    \includegraphics[width=\linewidth,height=0.6\linewidth]{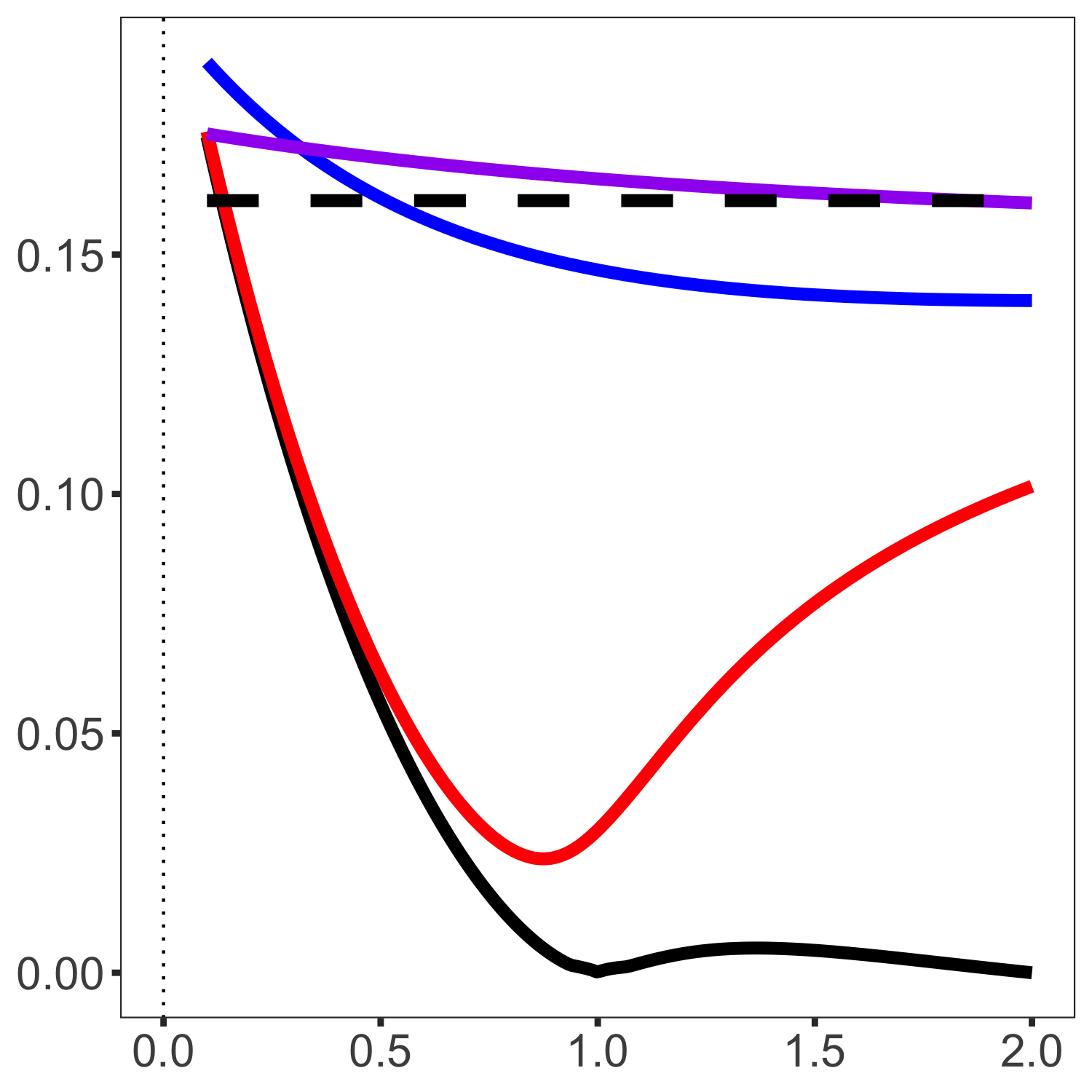}
   \end{subfigure}
   \hfill
   \begin{subfigure}{0.24\textwidth}
    \includegraphics[width=\linewidth,height=0.6\linewidth]{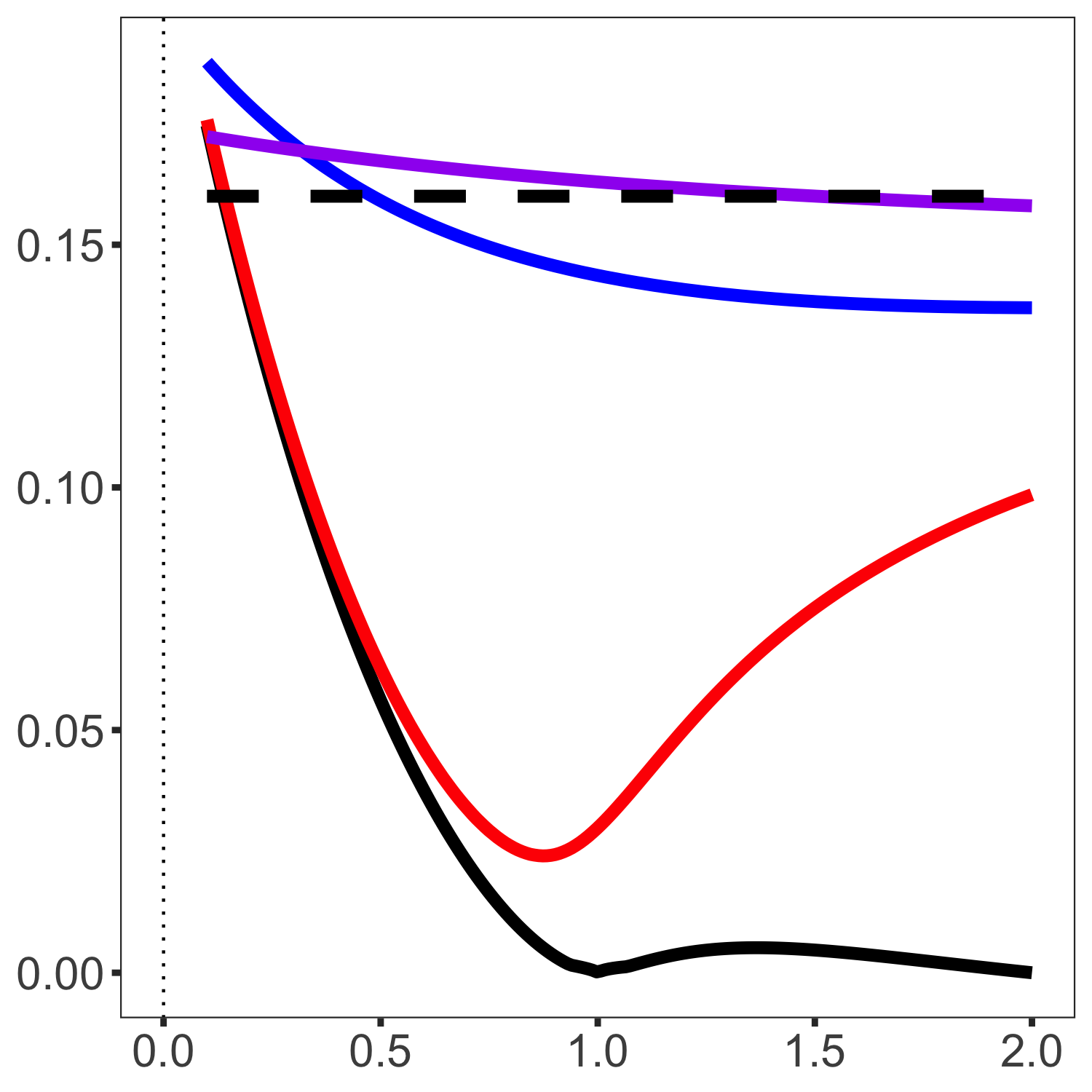}
   \end{subfigure}
  \hfill
  \begin{subfigure}{0.24\textwidth}
    \includegraphics[width=\linewidth,height=0.6\linewidth]{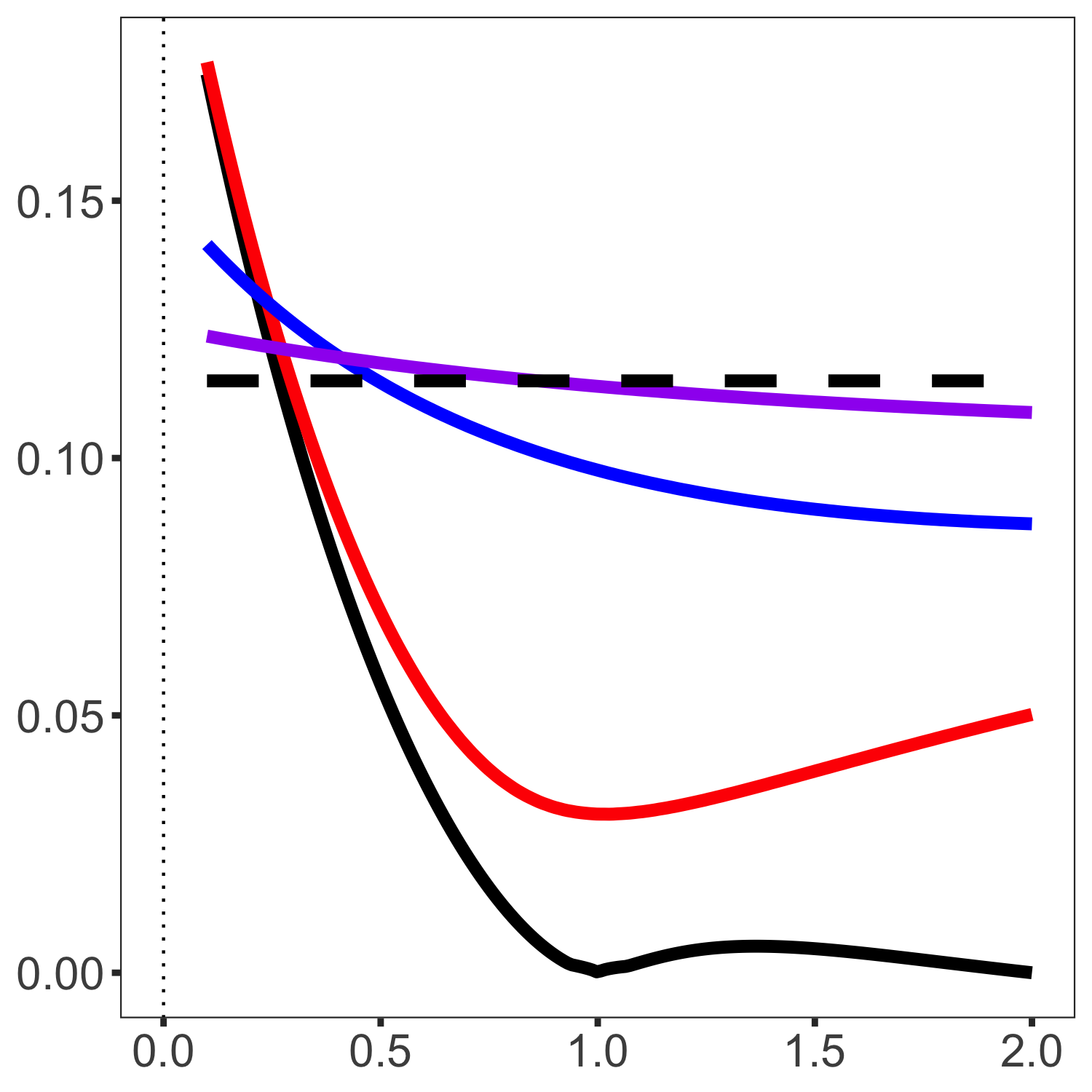}
   \end{subfigure}
\caption{$\SNR$ as a function of $\gamma_2$ when $\gamma_1 =2$. Columns: $\Sigma_p$ = Identity, Poly-Decay, AR-ACF, Point-Mix. Rows: $D \propto I_p$ (top), $D \propto \Sigma_p$ (bottom). Red/blue/purple: $\lambda =0.01$, $1$, and $5$. Black solid: upper bound of $\SNR_p(0)$ \citep{han2016tracy} in \eqref{eq:SNR_0}. Black dashed: $\SNR_p(\infty)$ in \eqref{eq:SNR_infty}.}
  \label{fig:SNRvsGamma2_Gamma1_2}
\end{figure}

\begin{figure}[h]
    \centering
    \begin{subfigure}{0.24\textwidth}
    \includegraphics[width=\linewidth,height=0.6\linewidth]{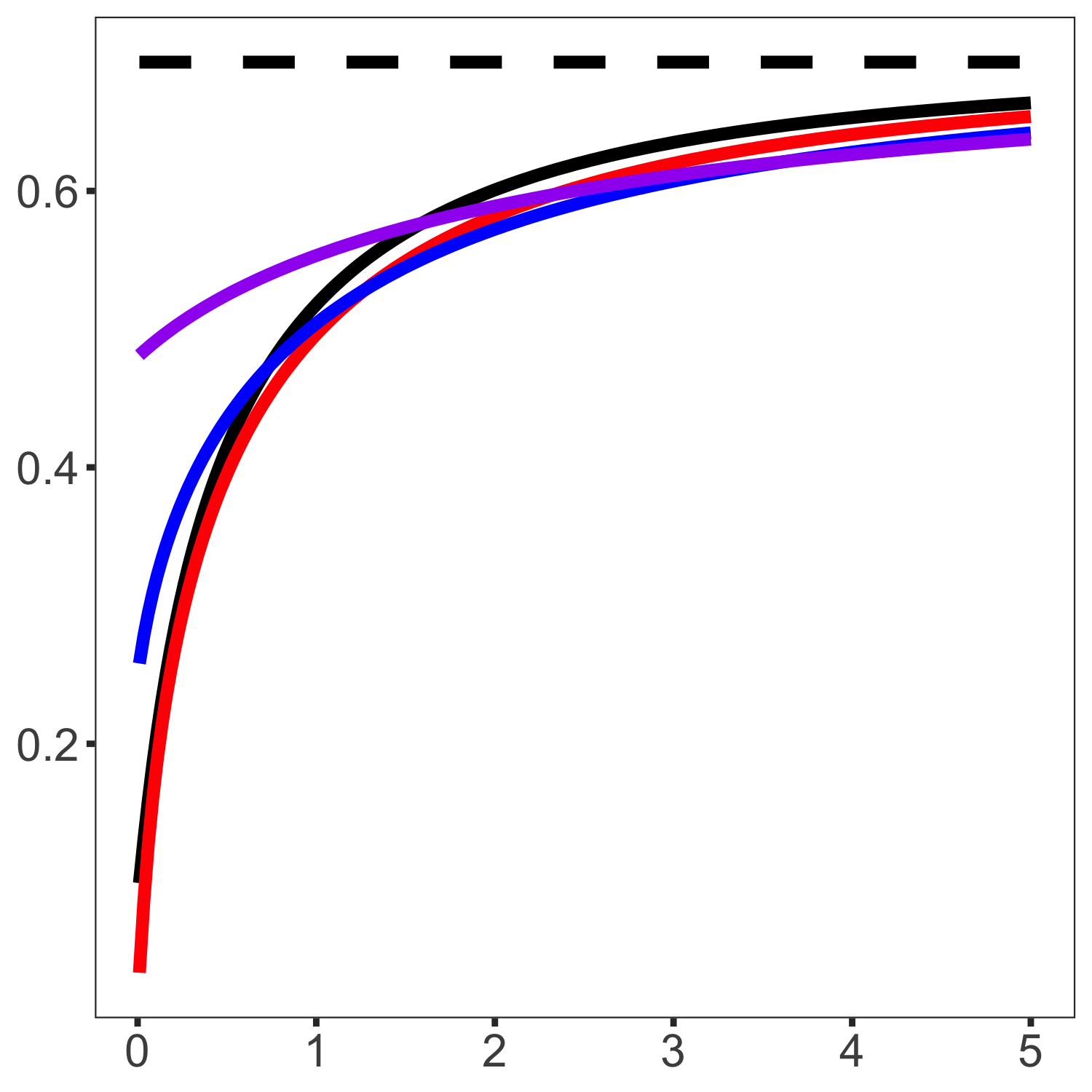}    
    \end{subfigure}
    \hfill
    \begin{subfigure}{0.24\textwidth}
    \includegraphics[width=\linewidth, height=0.6\linewidth]{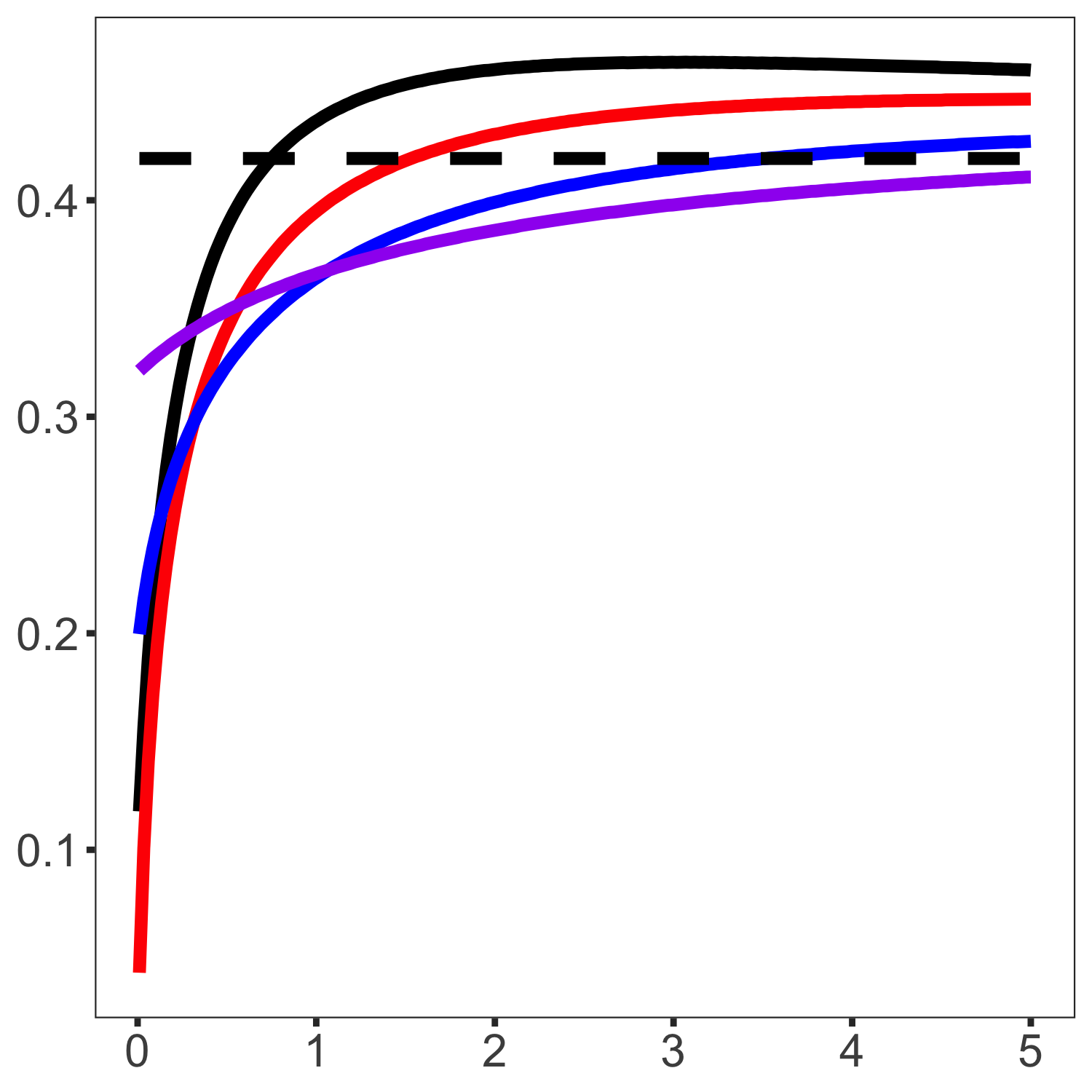}    
    \end{subfigure}
    \hfill
    \begin{subfigure}{0.24\textwidth}
    \includegraphics[width=\linewidth, height=0.6\linewidth]{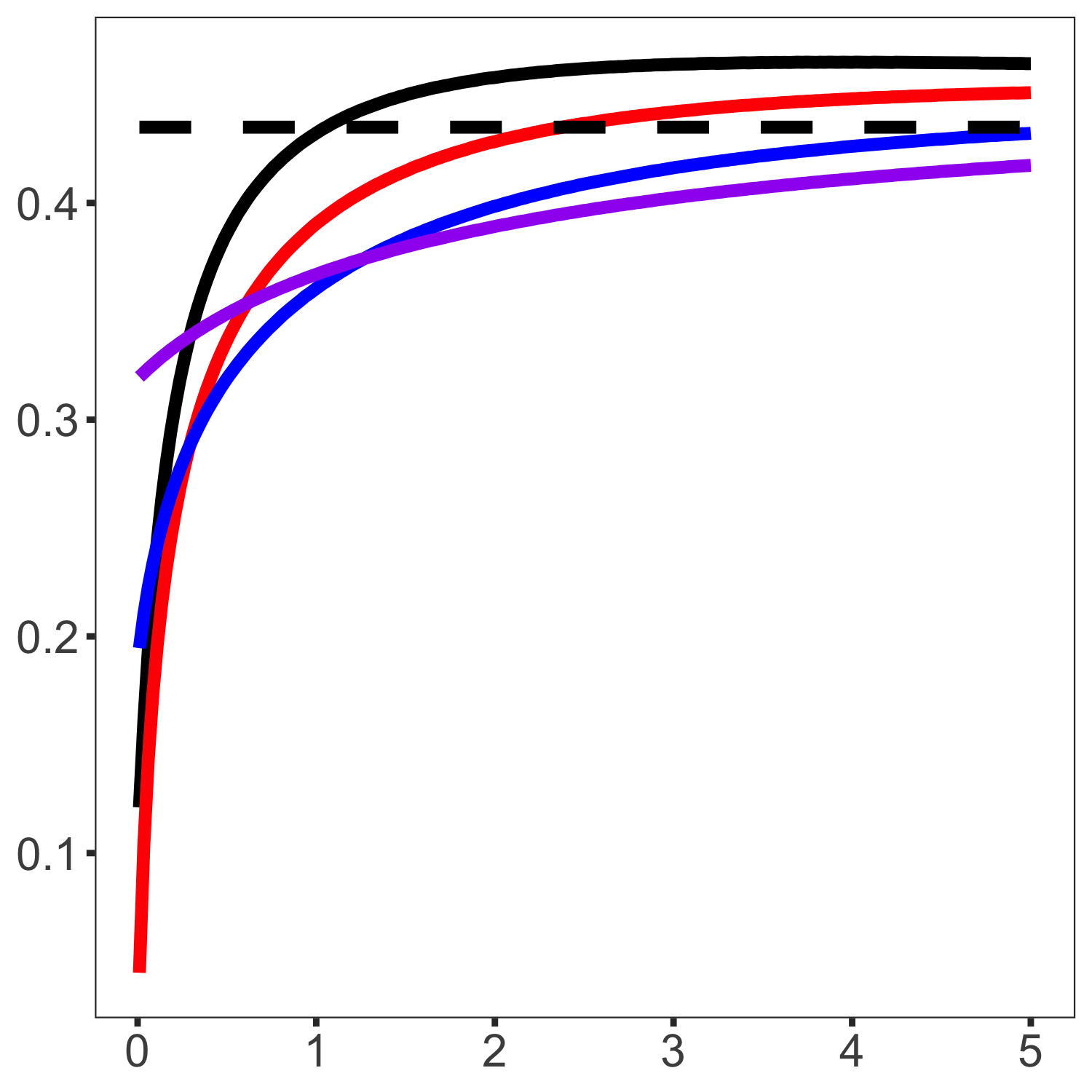}    
    \end{subfigure}
    \hfill
    \begin{subfigure}{0.24\textwidth}
    \includegraphics[width=\linewidth, height=0.6\linewidth]{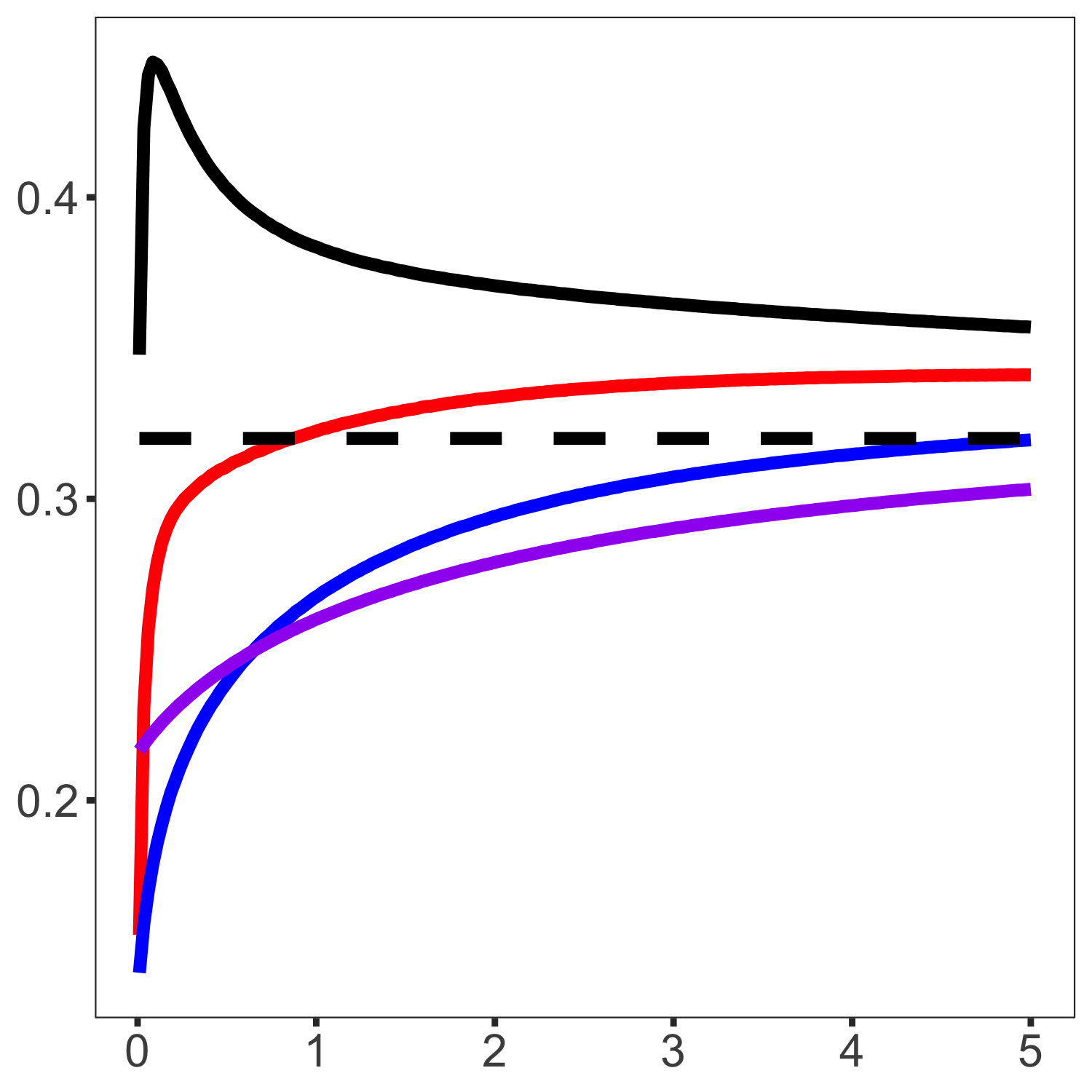}    
    \end{subfigure}
    \vfill
    \begin{subfigure}{0.24\textwidth}
    \includegraphics[width=\linewidth, height=0.6\linewidth]{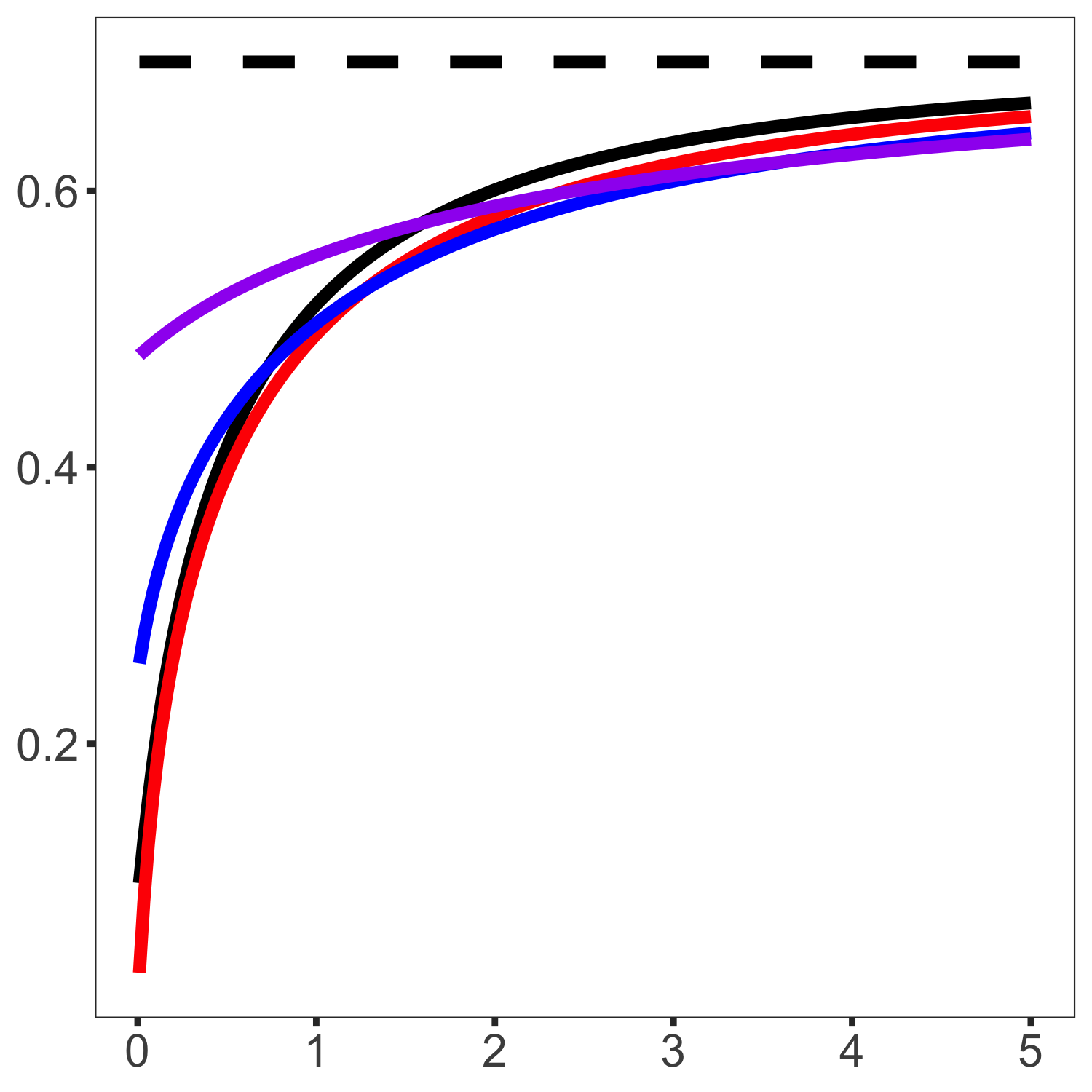}    
    \end{subfigure}
    \hfill
    \begin{subfigure}{0.24\textwidth}
    \includegraphics[width=\linewidth, height=0.6\linewidth]{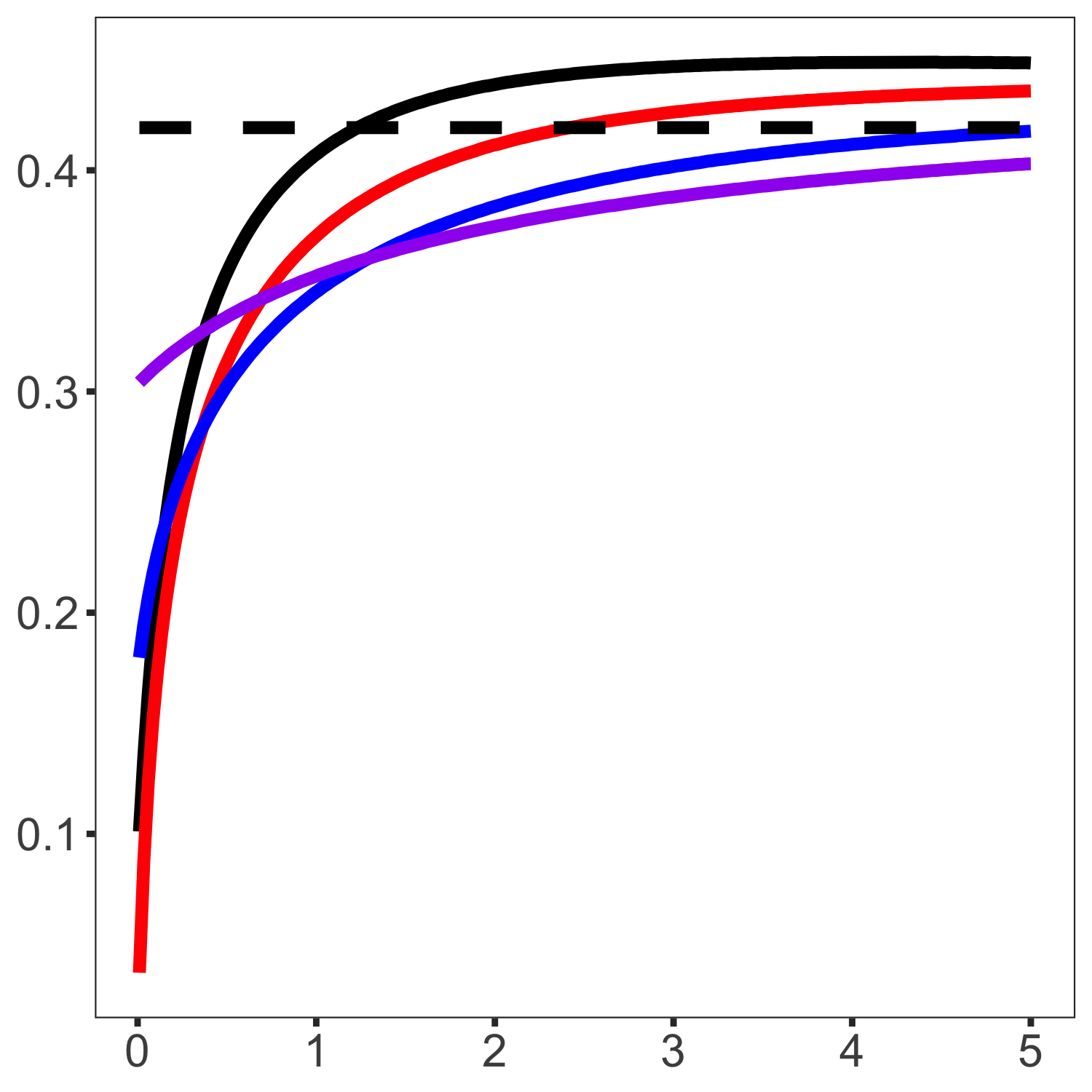}    
    \end{subfigure}
    \begin{subfigure}{0.24\textwidth}
    \includegraphics[width=\linewidth, height=0.6\linewidth]{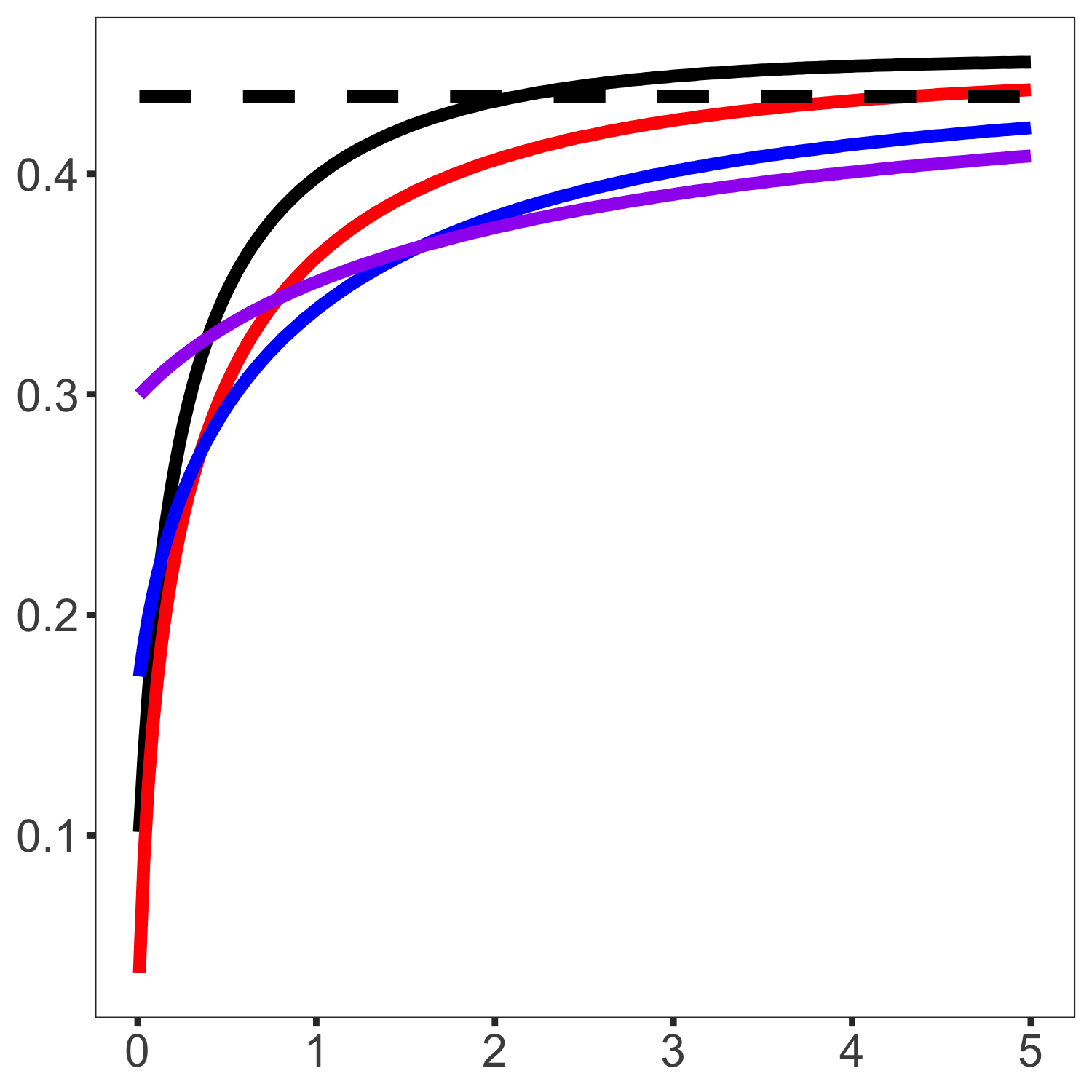}    
    \end{subfigure}
    \begin{subfigure}{0.24\textwidth}
    \includegraphics[width=\linewidth, height=0.6\linewidth]{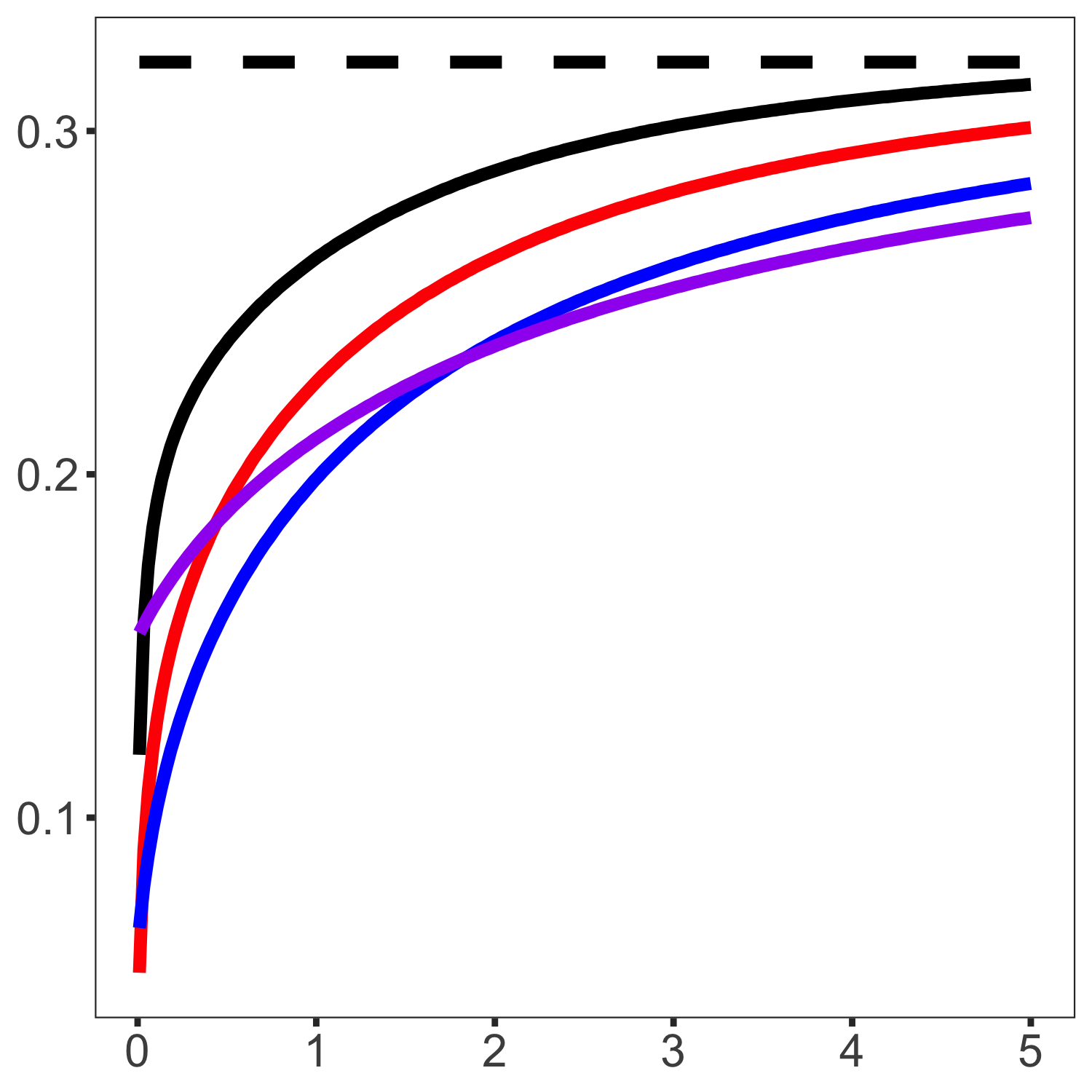}    
    \end{subfigure}
        \caption{$\SNR_p(\lambda)$ on $\lambda\in[0.01, 5]$ when $\gamma_1=0.5$. Columns: $\Sigma_p$ = Identity, Poly-Decay, AR-ACF, Point-Mix. Rows: $D \propto I_p$ (top), $D \propto \Sigma_p$ (bottom). Black/red/blue/purple: $\gamma_2 = 0.5$, $0.8$, $1.5$, $3$. Dashed: $\SNR_p(\infty)$ (independent of $\gamma_2$ and $\lambda$). }
    \label{fig:SNRvsLamda_Gamma1_05}
\end{figure}

\clearpage
\section{Simulation study: additional details}\label{sec:additional_simulation_results}

\subsection{Simulation settings and competing methods}\label{sec:simulation_settings}
The study considers the following configurations:
\begin{itemize}
    \item[(i)] The entries of the error matrix $\bZ$ are drawn from normal, t-distributions with 4 degrees of freedom, or Poisson distributions. However, as the results are consistent across these settings, only those for normal errors are reported. 
    \item[(ii)] With $n_2 = 500$, the dimension $p$ is set to $150$, $250$, $450$ or $1000$, corresponding to $\hat{\gamma}_2 = 0.3$, $0.5$, $0.9$, and $2$. 
    \item[(iii)] Three cases of $n_1$ are considered: $n_1\in\{50, 100, 250 \}$.  
    \item[(iv)] The eigenvalues $\tau_j$ of $\Sigma_p$ are set according to three models:
    \begin{itemize}
        \item[(a)] \emph{Identity}: $\Sigma_p = I_p$.
        \item[(b)] \emph{Poly-Decay}: $\tau_j = (1+j/p)^{-2}$, $j = 1, \dots, p$, representing polynomial decay.
        \item[(c)] \emph{AR-ACF}: $\Sigma_p$ is a Toeplitz matrix with $(i,j)$-th entry $0.3^{|i-j|}$, for $i,j = 1, \dots, p$.
        \item[(d)] \emph{Factor}: For $j \geq 6$, the eigenvalues $\tau_j$ match those from the \emph{Poly-Decay} model. The first five eigenvalues are spikes: $\tau_j = (2.2 - 0.2j) \tau_6$, for $j = 1, \dots, 5$. This represents a setting in which Condition~\ref{enum:edge_stability} is violated.
    \end{itemize}
    All models are further normalized so that $\tr(\Sigma_p) = p$. The eigenvector matrix of $\Sigma_p$ is randomly generated from the Haar distribution on the group of orthogonal matrices.
    \item[(v)] $X$ is a $n_1\times (n_1+n_2)$ matrix whose entries are iid $N(0,1)$. The coefficient matrix $B$ ($p\times n_1$) is assumed to be of low rank with its first column drawn from $N(0, \zeta^2)$ and all other columns set to zero. The parameter $\zeta$ controls the signal strength. We test $H_0: B = 0$ against $H_a: B\neq 0$. That is, $C = I_{n_1}$. 
\end{itemize}

Additionally, while the technical power analysis in Section~\ref{sec:power_analysis} focuses on a rank-one probabilistic alternative model, it is also of interest to examine the numerical performance of the proposed methods when the hypothesized matrix $BC$ has high rank, so that the signal is distributed across a collection of eigendirections. Empirical power results for additional settings in which $BC$ has rank $(n_1/2)$ are reported in Section~\ref{subsec:full_rank_simulation}.

Across all simulation settings, the estimators $\hat{\Theta}_1(\lambda)$ and $\hat{\Theta}_2(\lambda)$ are computed using Algorithms~\ref{algo:linear_program}, \ref{algo:estimation_rho}, and \ref{algo:ODE}, implemented with the recommended configurations detailed in Section~\ref{sec:implement_details}.

\medskip

All competing methods are summarized as follows. For the proposed method, we consider five choices of $\lambda$:
\begin{itemize}
\item[(1)] $\lambda=0.5$;
\item[(2)] $\lambda=1$;
\item[(3)] the Bayes selection when $D\propto I_p$, with $\mathfrak{X}$ chosen as recommended in Section~\ref{sec:selection_lambda};
\item[(4)] the Bayes selection when $D\propto \Sigma_p$, with $\mathfrak{X}$ chosen as recommended in Section~\ref{sec:selection_lambda}; and
\item[(5)] the estimated minimax selection when the prior family is
\[
\left\{
D_\theta=\theta_1 I_p+\theta_2 \Sigma_p
:\;
\theta_1\ge0,\;
\theta_2\ge0,\;
\theta_1+\theta_2\, p^{-1}\tr(\bW_2)=1
\right\},
\]
with $\mathfrak{X}$ again taken as in Section~\ref{sec:selection_lambda}.
\end{itemize}
The proposed method is compared with representative approaches from the literature. As discussed in the Introduction, most related methods fall into three broad categories. Since extreme-type tests—such as those of \citet{tony2014two} and \citet{chang2017simulation}—are not readily applicable to general linear hypotheses, we restrict attention to quadratic-form and projection-based procedures.

\begin{itemize}
\item[\textbf{Ridge-LRT}:] We consider the ridge-regularized likelihood ratio test of \citet{li2020high}, referred to as Ridge-LRT. The tests employs the same ridge regularization but aggregate all eigenvalues of $\bF_\lambda$ through a $\log(1+x)$ transformation. Specifically, we consider the Bayesian choice of the regularization parameter with the true prior presented in Section 3 of \cite{li2020high}.  

\item[\textbf{Proj-LRT}:] To benchmark projection-based methods, we note that most existing projection tests assume $n_1 \ll p$ and are not directly applicable to the general linear hypothesis setting considered here. We therefore construct a baseline procedure, denoted Proj-LRT, by projecting the observations onto random subspaces of dimension $\min(n_1,n_2,p)/10$, and then applying the likelihood ratio test to the projected data. A composite statistic is formed by taking the supremum over 20 such projections, with projection matrices drawn according to the Haar measure.

\item[\textbf{Han et. al. (2016)}:] We also include the test proposed by \citet{han2016tracy}, treated as the case of the proposed method when $\lambda\to0$. 
\end{itemize}

\subsection{Estimation accuracy: additional details}\label{sub:estimation_accuracy_additional_details}

Figure \ref{fig:example_s_fun2} and \ref{fig:example_s_fun3} show the estimated $s(x)$, $s'(x)$ and $s''(x)$ under two additional settings. Figure \ref{fig:density_curve_Theta1} shows the density plot of $p^{2/3}(\hat{\Theta}_1 -\Theta_1)/\Theta_2$ when $\Sigma$ is Poly-Decay. Table \ref{tab:Theta1_means_sd} and Table \ref{tab:Theta2_means_sd} present the estimation error of $\hat{\Theta}_1$ and $\hat{\Theta}_2$. 

 \begin{figure}[H]
\centering
\includegraphics[width=\textwidth, height = 0.25\textwidth]{./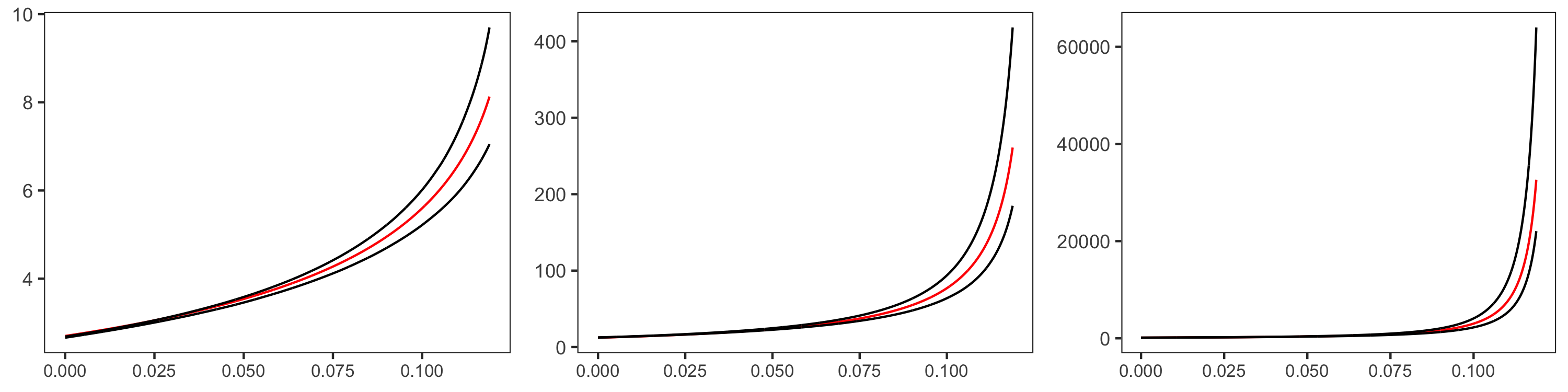}
\caption{The estimated $s(x)$, $s'(x)$ and $s''(x)$ (from left to right) when $\Sigma_p$ is Poly-Decay, $\hat{\gamma}_2 = 2$, $\lambda/\hat{\gamma}_2 = 0.1$. Red: the true functions; Black: 5\% and 95\% pointwise percentile bands of the estimated functions.}
\label{fig:example_s_fun2}
\end{figure}

\begin{figure}[H]
\centering
\includegraphics[width=\textwidth, height =0.25 \textwidth]{./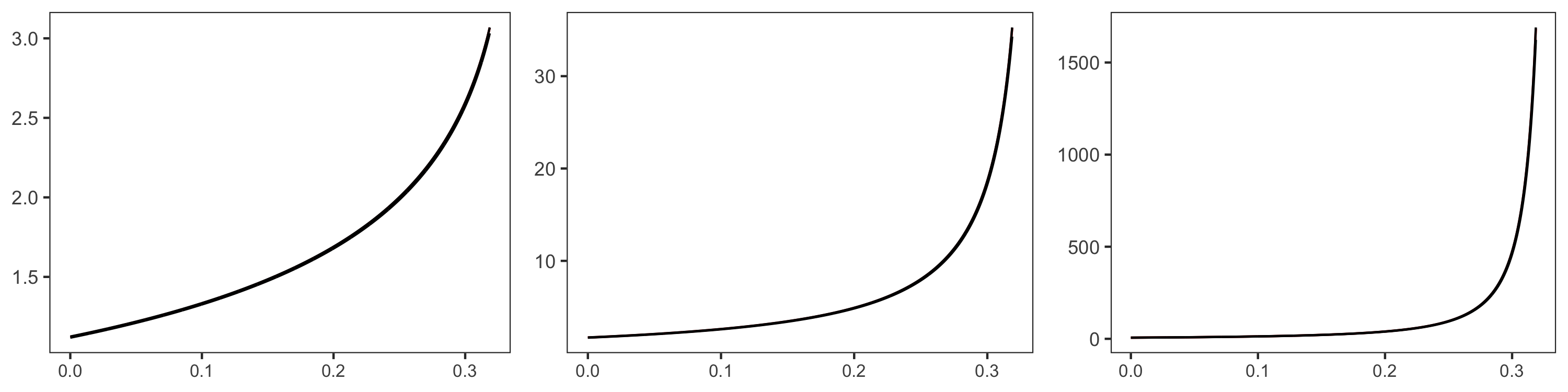}
\caption{The estimated $s(x)$, $s'(x)$ and $s''(x)$ (from left to right) when $\Sigma_p$ is Identity, $\hat{\gamma}_2 = 0.5$, $\lambda/\hat{\gamma}_2 = 0.5$. Red: the true functions; Black: 5\% and 95\% pointwise percentile bands of the estimated functions.}
\label{fig:example_s_fun3}
\end{figure}

\begin{figure}[H]
\centering
\includegraphics[width=\textwidth, height = 0.2\textwidth]{./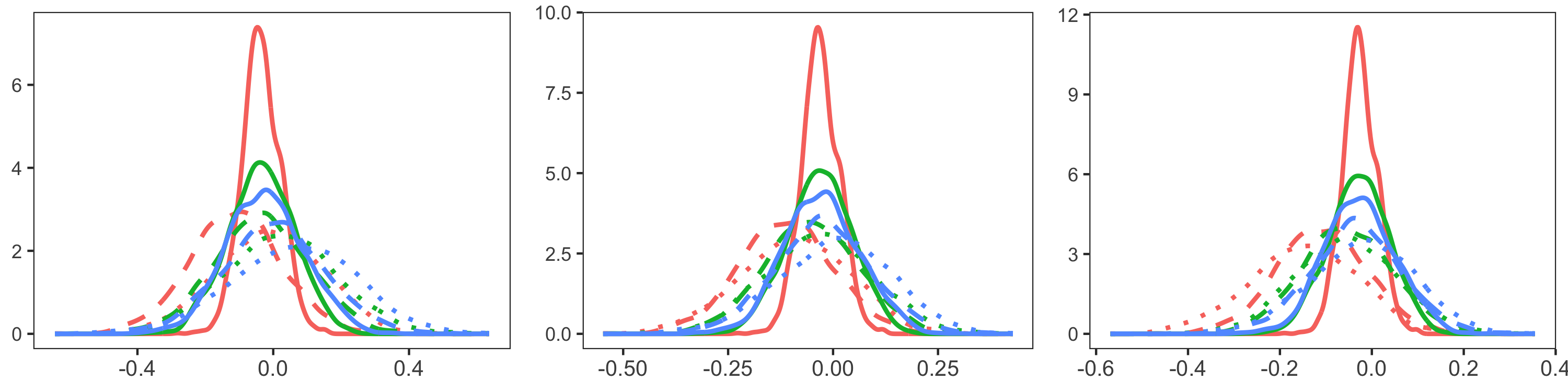}
\caption{Density of $p^{2/3}(\hat{\Theta}_1 -\Theta_1)/\Theta_2$ when $\Sigma_p$ is Poly-Decay. From left to right: $n_1 = 250$, $n_1 = 100$, $n_1 = 50$. Line type: $\hat\gamma_2 = 0.5$ (solid), $1$ (dashed), $2$ (dotted). Color: $\lambda/\gamma_2 = 0.5$ (red), $1$ (green), $1.5$ (blue).  }
\label{fig:density_curve_Theta1}
\end{figure}

\begin{table}[htbp]
\centering
\begin{tabular}{c c c| c c c}
\toprule
\textbf{$\Sigma_p$} & $\hat\gamma_2$ & $\lambda/\hat\gamma_2$ & $n_1 = 250$ & $ n_1 =100$ & $n_1 = 50$ \\ 
\midrule
\multirow{9}{*}{Identity} 
& \multirow{3}{*}{0.5} & 0.1 & 0.05 (0.04) & 0.04 (0.03) & 0.03 (0.03) \\
& & 0.5 & 0.09 (0.07) & 0.07 (0.06) & 0.06 (0.05) \\
& & 1.0 & 0.12 (0.09) & 0.09 (0.07) & 0.07 (0.06) \\
\cline{2-6}
& \multirow{3}{*}{1.0} & 0.1 & 0.15 (0.11) & 0.13 (0.09) & 0.12 (0.08) \\
& & 0.5 & 0.13 (0.09) & 0.10 (0.08) & 0.09 (0.07) \\
& & 1.0 & 0.14 (0.11) & 0.11 (0.08) & 0.09 (0.07) \\
\cline{2-6}
& \multirow{3}{*}{2.0} & 0.1 & 0.18 (0.13) & 0.24 (0.13) & 0.27 (0.12) \\
& & 0.5 & 0.14 (0.11) & 0.11 (0.08) & 0.10 (0.08) \\
& & 1.0 & 0.17 (0.13) & 0.12 (0.09) & 0.10 (0.07) \\
\midrule
\multirow{9}{*}{Poly-Decay} 
& \multirow{3}{*}{0.5} & 0.1 & 0.05 (0.04) & 0.04 (0.03) & 0.04 (0.03) \\
& & 0.5 & 0.08 (0.06) & 0.07 (0.05) & 0.06 (0.04) \\
& & 1.0 & 0.09 (0.07) & 0.08 (0.06) & 0.07 (0.05) \\
\cline{2-6}
& \multirow{3}{*}{1.0} & 0.1 & 0.14 (0.10) & 0.13 (0.09) & 0.12 (0.08) \\
& & 0.5 & 0.11 (0.09) & 0.10 (0.07) & 0.09 (0.07) \\
& & 1.0 & 0.12 (0.09) & 0.09 (0.07) & 0.08 (0.06) \\
\cline{2-6}
& \multirow{3}{*}{2.0} & 0.1 & 0.12 (0.09) & 0.13 (0.09) & 0.16 (0.10) \\
& & 0.5 & 0.13 (0.10) & 0.10 (0.08) & 0.10 (0.07) \\
& & 1.0 & 0.16 (0.11) & 0.11 (0.08) & 0.09 (0.07) \\
\midrule
\multirow{9}{*}{AR-ACF} 
& \multirow{3}{*}{0.5} & 0.1 & 0.05 (0.04) & 0.04 (0.03) & 0.04 (0.03) \\
& & 0.5 & 0.08 (0.06) & 0.07 (0.05) & 0.06 (0.04) \\
& & 1.0 & 0.09 (0.07) & 0.08 (0.06) & 0.07 (0.05) \\
\cline{2-6}
& \multirow{3}{*}{1.0} & 0.1 & 0.14 (0.10) & 0.13 (0.09) & 0.12 (0.08) \\
& & 0.5 & 0.11 (0.08) & 0.10 (0.07) & 0.09 (0.06) \\
& & 1.0 & 0.12 (0.09) & 0.10 (0.07) & 0.08 (0.06) \\
\cline{2-6}
& \multirow{3}{*}{2.0} & 0.1 & 0.12 (0.09) & 0.13 (0.09) & 0.16 (0.10) \\
& & 0.5 & 0.13 (0.10) & 0.10 (0.08) & 0.10 (0.07) \\
& & 1.0 & 0.16 (0.12) & 0.11 (0.08) & 0.09 (0.07) \\
\midrule
\multirow{9}{*}{Factor} 
& \multirow{3}{*}{0.5} & 0.1 & 0.06 (0.04) & 0.05 (0.04) & 0.04 (0.03) \\
& & 0.5 & 0.08 (0.06) & 0.07 (0.05) & 0.06 (0.04) \\
& & 1.0 & 0.09 (0.07) & 0.07 (0.06) & 0.07 (0.05) \\
\cline{2-6}
& \multirow{3}{*}{1.0} & 0.1 & 0.15 (0.10) & 0.13 (0.09) & 0.12 (0.08) \\
& & 0.5 & 0.11 (0.08) & 0.09 (0.07) & 0.09 (0.06) \\
& & 1.0 & 0.12 (0.09) & 0.09 (0.07) & 0.08 (0.06) \\
\cline{2-6}
& \multirow{3}{*}{2.0} & 0.1 & 0.13 (0.10) & 0.13 (0.09) & 0.15 (0.10) \\
& & 0.5 & 0.13 (0.10) & 0.10 (0.08) & 0.10 (0.07) \\
& & 1.0 & 0.15 (0.11) & 0.10 (0.08) & 0.09 (0.07) \\
\bottomrule
\end{tabular}
\caption{Mean (standard deviation) of $p^{2/3} |\hat{\Theta}_1 -\Theta_1|/\Theta_2$ under different settings. }
\label{tab:Theta1_means_sd}
\end{table}

\begin{table}[!h]
\centering
\begin{tabular}{c c c| c c c}
\toprule
\textbf{$\Sigma_p$} & $\hat\gamma_2$ & $\lambda/\hat\gamma_2$ & $n_1 = 250$ & $ n_1 =100$ & $n_1 = 50$ \\ 
\midrule
\multirow{9}{*}{Identity} & \multirow{3}{*}{0.5} & 0.1 & 0.19 (0.16) & 0.20 (0.15) & 0.18 (0.14) \\
& & 0.5 & 0.28 (0.22) & 0.24 (0.19) & 0.23 (0.18) \\
& & 1.0 & 0.33 (0.25) & 0.27 (0.21) & 0.25 (0.19) \\
\cline{2-6}
& \multirow{3}{*}{1.0} & 0.1 & 0.24 (0.19) & 0.24 (0.18) & 0.27 (0.19) \\
& & 0.5 & 0.32 (0.24) & 0.24 (0.18) & 0.21 (0.16) \\
& & 1.0 & 0.41 (0.32) & 0.29 (0.21) & 0.25 (0.19) \\
\cline{2-6}
& \multirow{3}{*}{2.0} & 0.1 & 0.60 (0.33) & 0.46 (0.27) & 0.34 (0.22) \\
& & 0.5 & 0.46 (0.33) & 0.29 (0.21) & 0.21 (0.16) \\
& & 1.0 & 0.50 (0.37) & 0.29 (0.22) & 0.20 (0.16) \\
\midrule
\multirow{9}{*}{Poly-Decay} & \multirow{3}{*}{0.5} & 0.1 & 0.17 (0.14) & 0.18 (0.14) & 0.17 (0.13) \\
& & 0.5 & 0.21 (0.17) & 0.19 (0.15) & 0.19 (0.14) \\
& & 1.0 & 0.25 (0.19) & 0.20 (0.16) & 0.18 (0.14) \\
\cline{2-6}
& \multirow{3}{*}{1.0} & 0.1 & 0.20 (0.16) & 0.20 (0.15) & 0.24 (0.18) \\
& & 0.5 & 0.26 (0.20) & 0.20 (0.15) & 0.18 (0.14) \\
& & 1.0 & 0.32 (0.25) & 0.23 (0.17) & 0.19 (0.15) \\
\cline{2-6}
& \multirow{3}{*}{2.0} & 0.1 & 0.64 (0.36) & 0.43 (0.26) & 0.29 (0.19) \\
& & 0.5 & 0.37 (0.27) & 0.25 (0.19) & 0.18 (0.15) \\
& & 1.0 & 0.37 (0.27) & 0.24 (0.19) & 0.17 (0.14) \\
\midrule
\multirow{9}{*}{AR-ACF} & \multirow{3}{*}{0.5} & 0.1 & 0.18 (0.15) & 0.19 (0.15) & 0.19 (0.14) \\
& & 0.5 & 0.22 (0.18) & 0.20 (0.16) & 0.20 (0.15) \\
& & 1.0 & 0.26 (0.20) & 0.21 (0.17) & 0.19 (0.15) \\
\cline{2-6}
& \multirow{3}{*}{1.0} & 0.1 & 0.22 (0.18) & 0.22 (0.17) & 0.25 (0.19) \\
& & 0.5 & 0.27 (0.22) & 0.21 (0.17) & 0.19 (0.15) \\
& & 1.0 & 0.34 (0.27) & 0.24 (0.18) & 0.20 (0.16) \\
\cline{2-6}
& \multirow{3}{*}{2.0} & 0.1 & 0.65 (0.36) & 0.43 (0.27) & 0.29 (0.20) \\
& & 0.5 & 0.38 (0.28) & 0.25 (0.19) & 0.19 (0.15) \\
& & 1.0 & 0.38 (0.28) & 0.25 (0.19) & 0.17 (0.14) \\
\midrule
\multirow{9}{*}{Factor} & \multirow{3}{*}{0.5} & 0.1 & 0.17 (0.14) & 0.19 (0.14) & 0.19 (0.14) \\
& & 0.5 & 0.22 (0.17) & 0.20 (0.16) & 0.19 (0.15) \\
& & 1.0 & 0.26 (0.19) & 0.21 (0.17) & 0.19 (0.15) \\
\cline{2-6}
& \multirow{3}{*}{1.0} & 0.1 & 0.20 (0.16) & 0.20 (0.15) & 0.24 (0.18) \\
& & 0.5 & 0.27 (0.20) & 0.20 (0.15) & 0.18 (0.14) \\
& & 1.0 & 0.33 (0.25) & 0.23 (0.17) & 0.19 (0.15) \\
\cline{2-6}
& \multirow{3}{*}{2.0} & 0.1 & 0.63 (0.36) & 0.43 (0.26) & 0.29 (0.19) \\
& & 0.5 & 0.37 (0.27) & 0.25 (0.19) & 0.18 (0.15) \\
& & 1.0 & 0.37 (0.27) & 0.24 (0.18) & 0.17 (0.14) \\
\bottomrule
\end{tabular}
\caption{Mean (standard deviation) of $p^{2/3} |\hat{\Theta}_2 -\Theta_2|/\Theta_2$ under different settings. }
\label{tab:Theta2_means_sd}
\end{table}

\newpage
\clearpage

\subsection{Empirical null distribution: additional details}\label{sub:empirical_null_distribution_additional_details}
In Figures~\ref{fig:emp_density_lambda1}--\ref{fig:emp_density_lambda5}, we display the empirical density of $\ell_{\max}(\bF_\lambda)$ under representative settings. The remaining configurations are omitted, as they exhibit similar patterns. The reported cases include both fixed and data-driven choices of $\lambda$ and span all considered settings of $\Sigma_p$, $\gamma_1$, and $\gamma_2$.

\begin{figure}[H]
\centering
\includegraphics[width =\textwidth]{./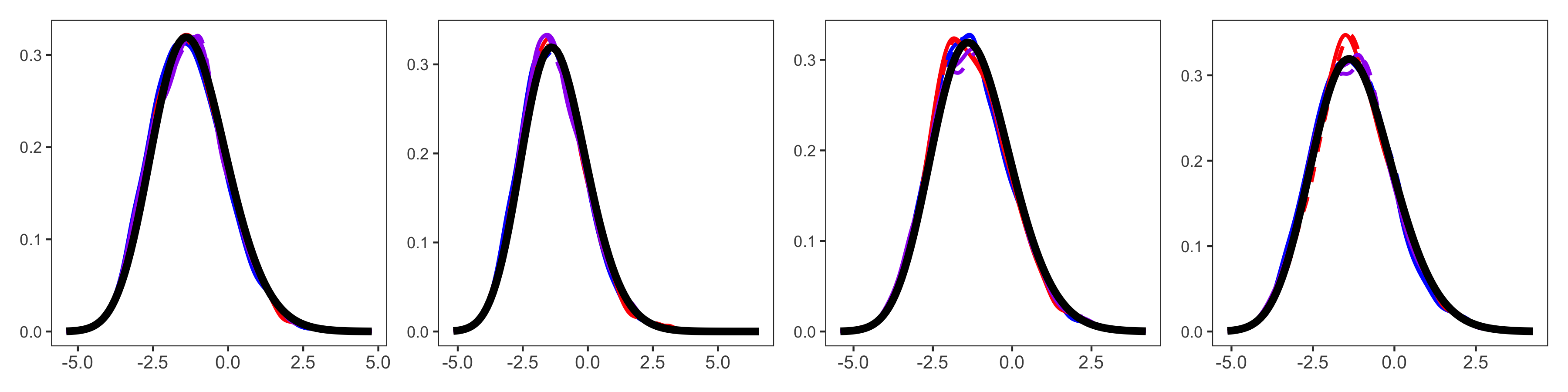}
\caption{Empirical density of $\ell_{\max}(\bF_\lambda)$ with $\lambda = 0.5$ and $\Sigma_p$ is Identity. Solid curves use true parameters for normalization; dashed curves use estimated ones. Colors indicate $n_1$: blue = 50, red = 100, purple = 250. Columns (left to right) correspond to $\hat{\gamma}_2=0.3,\,0.5,\,0.9,$ and $2$. Black solid: density of $\TW_1$.}
\label{fig:emp_density_lambda1}
\end{figure}

\begin{figure}[H]
\centering
\includegraphics[width =\textwidth]{./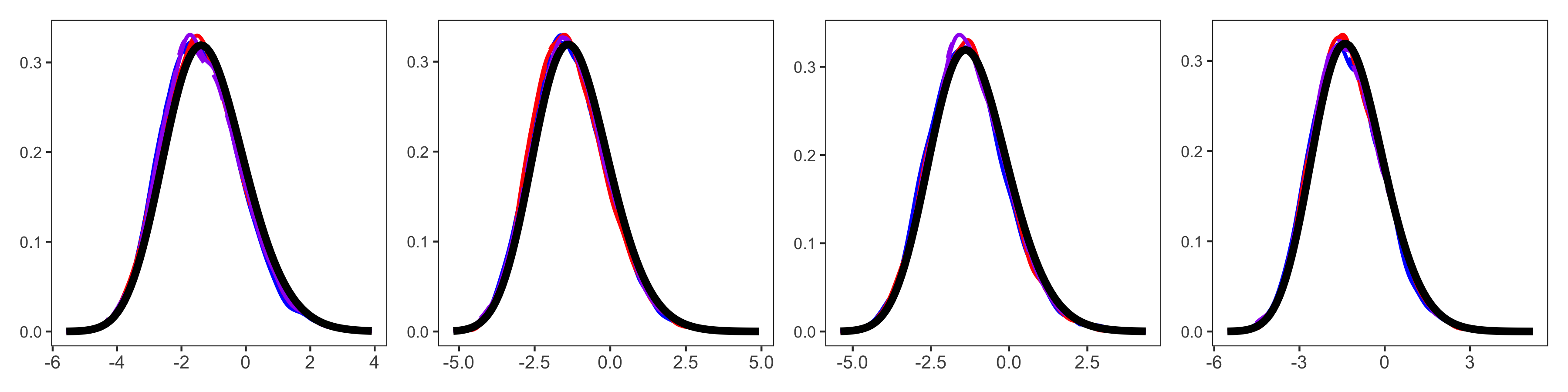}
\caption{Empirical density of $\ell_{\max}(\bF_\lambda)$ with $\lambda = 1$ and $\Sigma_p$ is Poly-Decay. Solid curves use true parameters for normalization; dashed curves use estimated ones. Colors indicate $n_1$: blue = 50, red = 100, purple = 250. Columns (left to right) correspond to $\hat{\gamma}_2=0.3,\,0.5,\,0.9,$ and $2$. Black solid: density of $\TW_1$.}
\label{fig:emp_density_lambda2}
\end{figure}

\begin{figure}[H]
\centering
\includegraphics[width =\textwidth]{./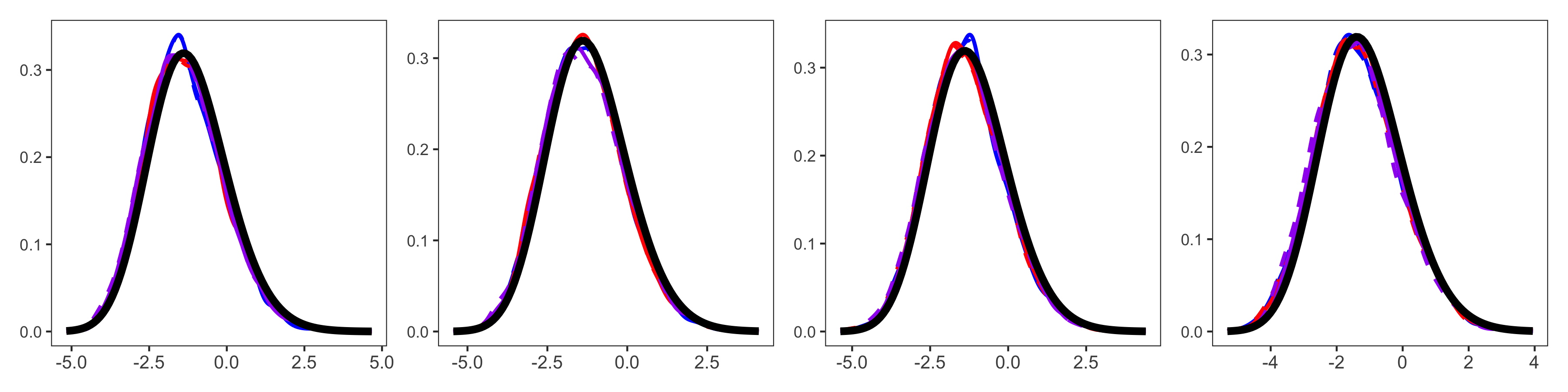}
\caption{Empirical density of $\ell_{\max}(\bF_\lambda)$ with $\lambda = \hat{\lambda}_{I_p}$ and $\Sigma_p$ is AR-ACF. Solid curves use true parameters for normalization; dashed curves use estimated ones. Colors indicate $n_1$: blue = 50, red = 100, purple = 250. Columns (left to right) correspond to $\hat{\gamma}_2=0.3,\,0.5,\,0.9,$ and $2$. Black solid: density of $\TW_1$.}
\label{fig:emp_density_lambda3}
\end{figure}

\begin{figure}[H]
\centering
\includegraphics[width =\textwidth]{./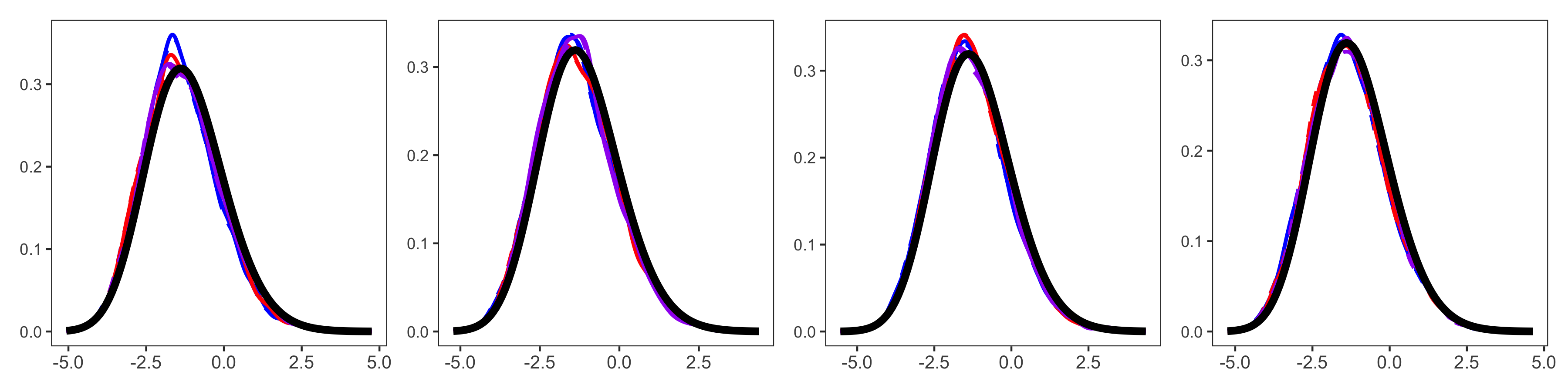}
\caption{Empirical density of $\ell_{\max}(\bF_\lambda)$ with $\lambda = \hat{\lambda}_{\Sigma_p}$ and $\Sigma_p$ is Factor. Solid curves use true parameters for normalization; dashed curves use estimated ones. Colors indicate $n_1$: blue = 50, red = 100, purple = 250. Columns (left to right) correspond to $\hat{\gamma}_2=0.3,\,0.5,\,0.9,$ and $2$. Black solid: density of $\TW_1$.}
\label{fig:emp_density_lambda4}
\end{figure}

\begin{figure}[H]
\centering
\includegraphics[width =\textwidth]{./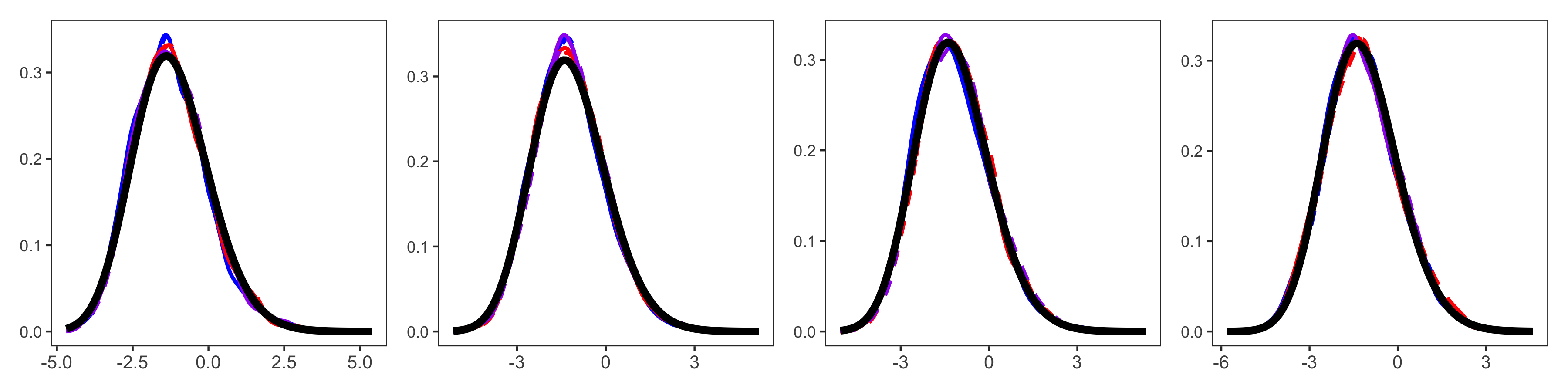}
\caption{Empirical density of $\ell_{\max}(\bF_\lambda)$ with $\lambda = \hat{\lambda}_*$ and $\Sigma_p$ is Factor. Solid curves use true parameters for normalization; dashed curves use estimated ones. Colors indicate $n_1$: blue = 50, red = 100, purple = 250. Columns (left to right) correspond to $\hat{\gamma}_2=0.3,\,0.5,\,0.9,$ and $2$. Black solid: density of $\TW_1$.}
\label{fig:emp_density_lambda5}
\end{figure}

\newpage
\clearpage

\subsection{Empirical power: additional details}\label{sub:empirical_power_additional_details}

Figures \ref{fig:emp_power_Sigma1}, \ref{fig:emp_power_Sigma3}, and \ref{fig:emp_power_Sigma4} show the size-adjusted empirical power of the proposed tests when $\Sigma$ is Identity, AR-ACF, and Factor. 
\begin{figure}[htbp]
    \centering
    \begin{subfigure}[t]{0.23\textwidth}
        \centering
         \includegraphics[width=\linewidth, height=0.7\linewidth]{./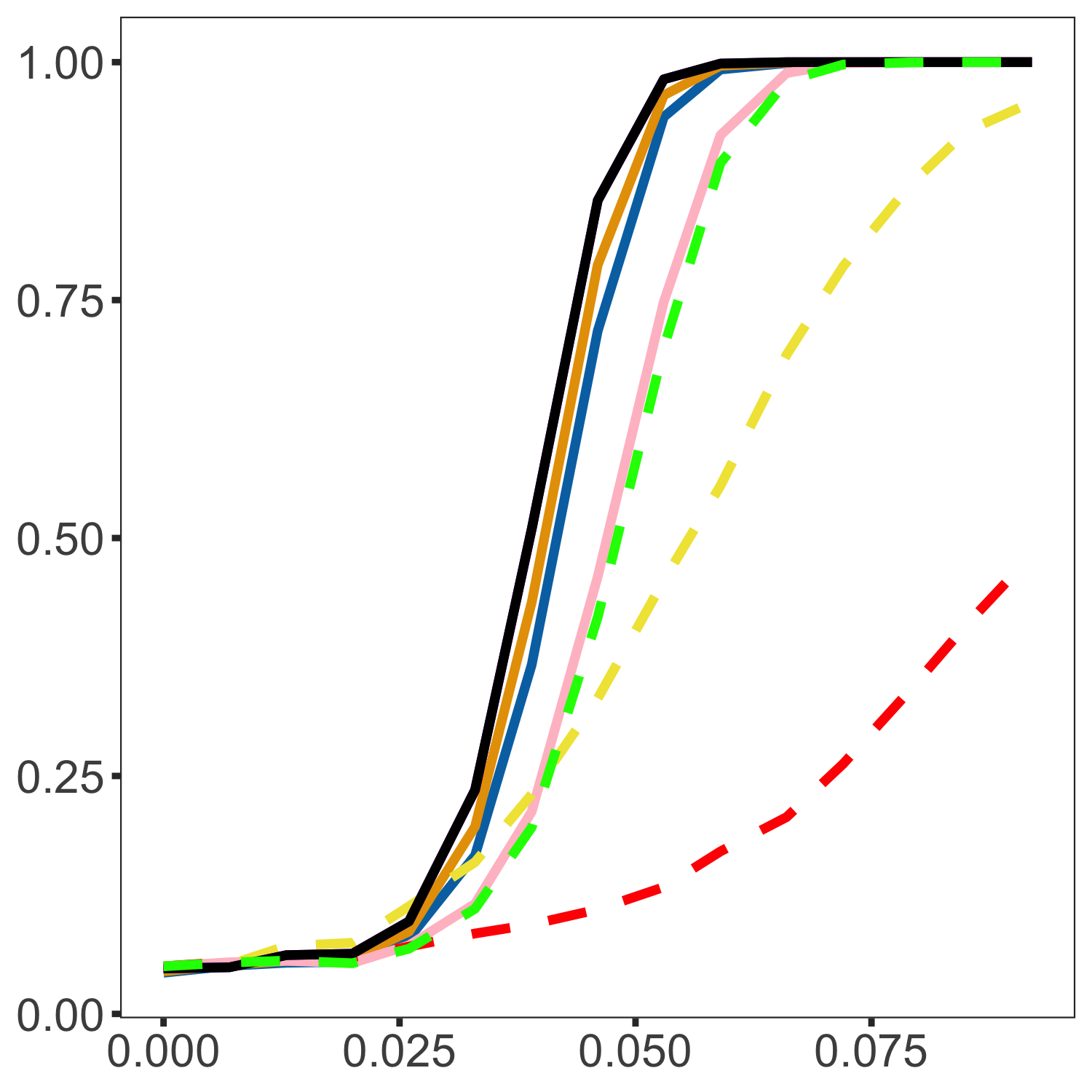}
    \end{subfigure}%
    \begin{subfigure}[t]{0.23\textwidth}
        \centering
        \includegraphics[width=\linewidth, height=0.7\linewidth]{./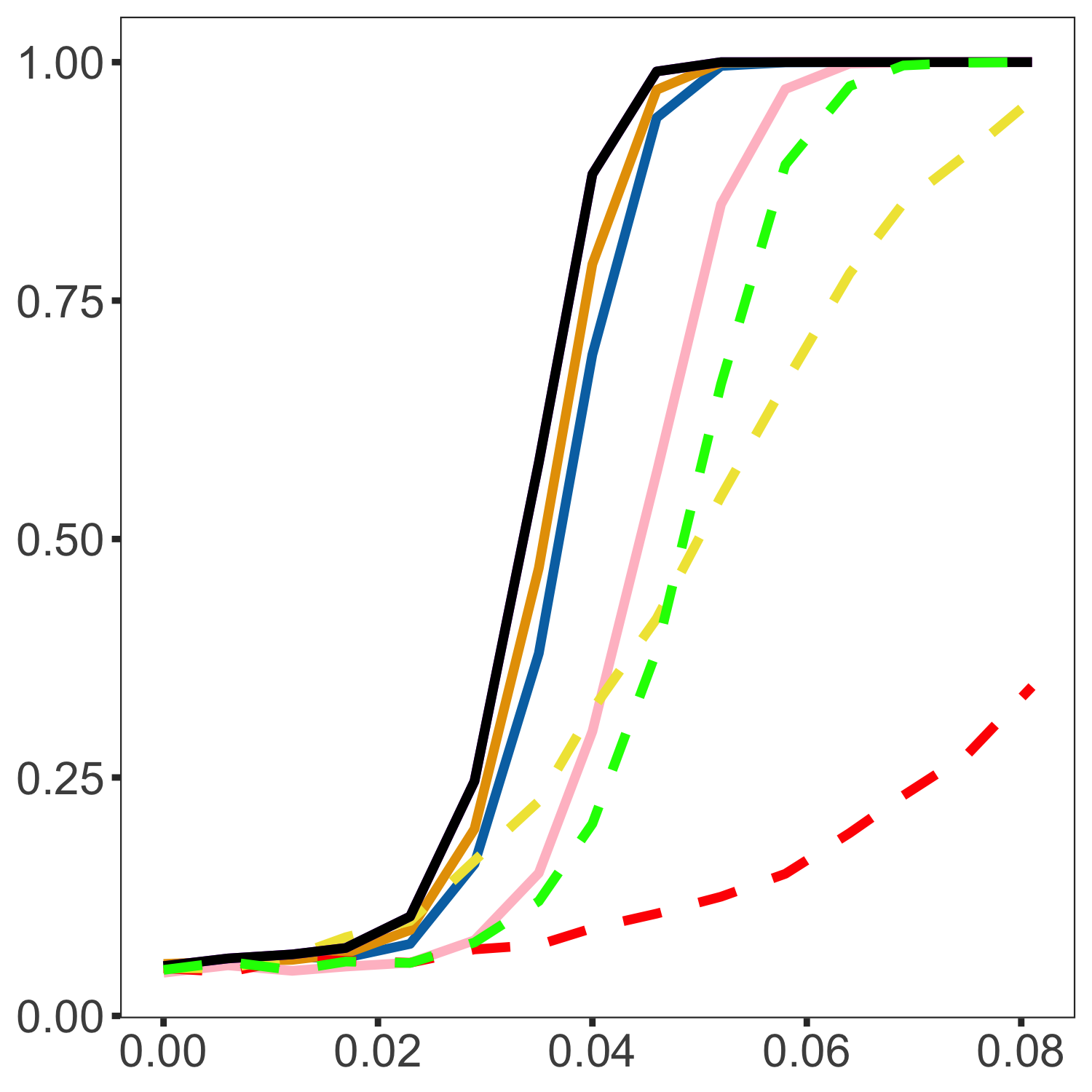}
    \end{subfigure}
     \begin{subfigure}[t]{0.23\textwidth}
        \centering
        \includegraphics[width=\linewidth, height=0.7\linewidth]{./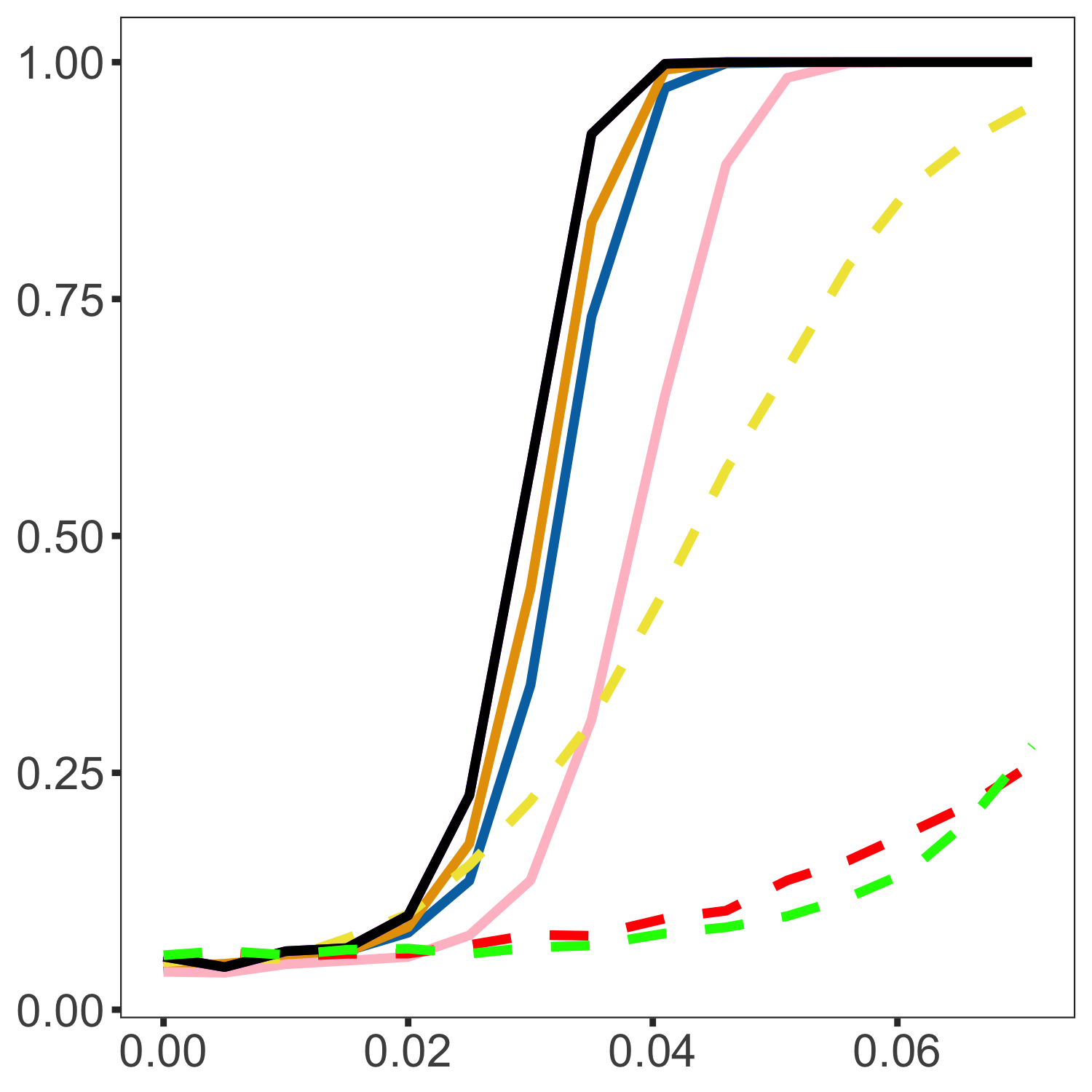}
    \end{subfigure}
    \begin{subfigure}[t]{0.23\textwidth}
        \centering
        \includegraphics[width=\linewidth, height=0.7\linewidth]{./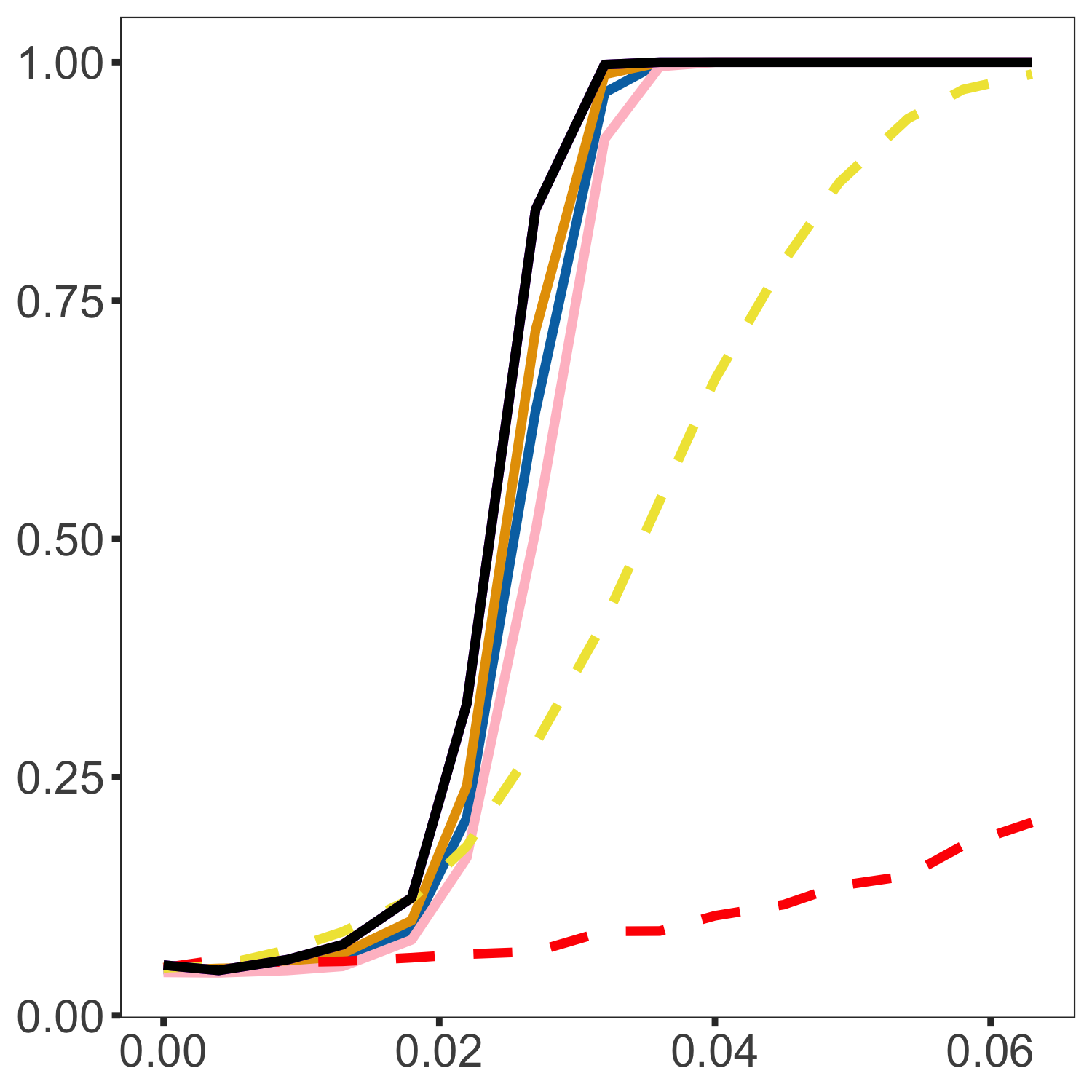}
    \end{subfigure}
    \vfill
    \begin{subfigure}[t]{0.23\textwidth}
        \centering
         \includegraphics[width=\linewidth, height=0.7\linewidth]{./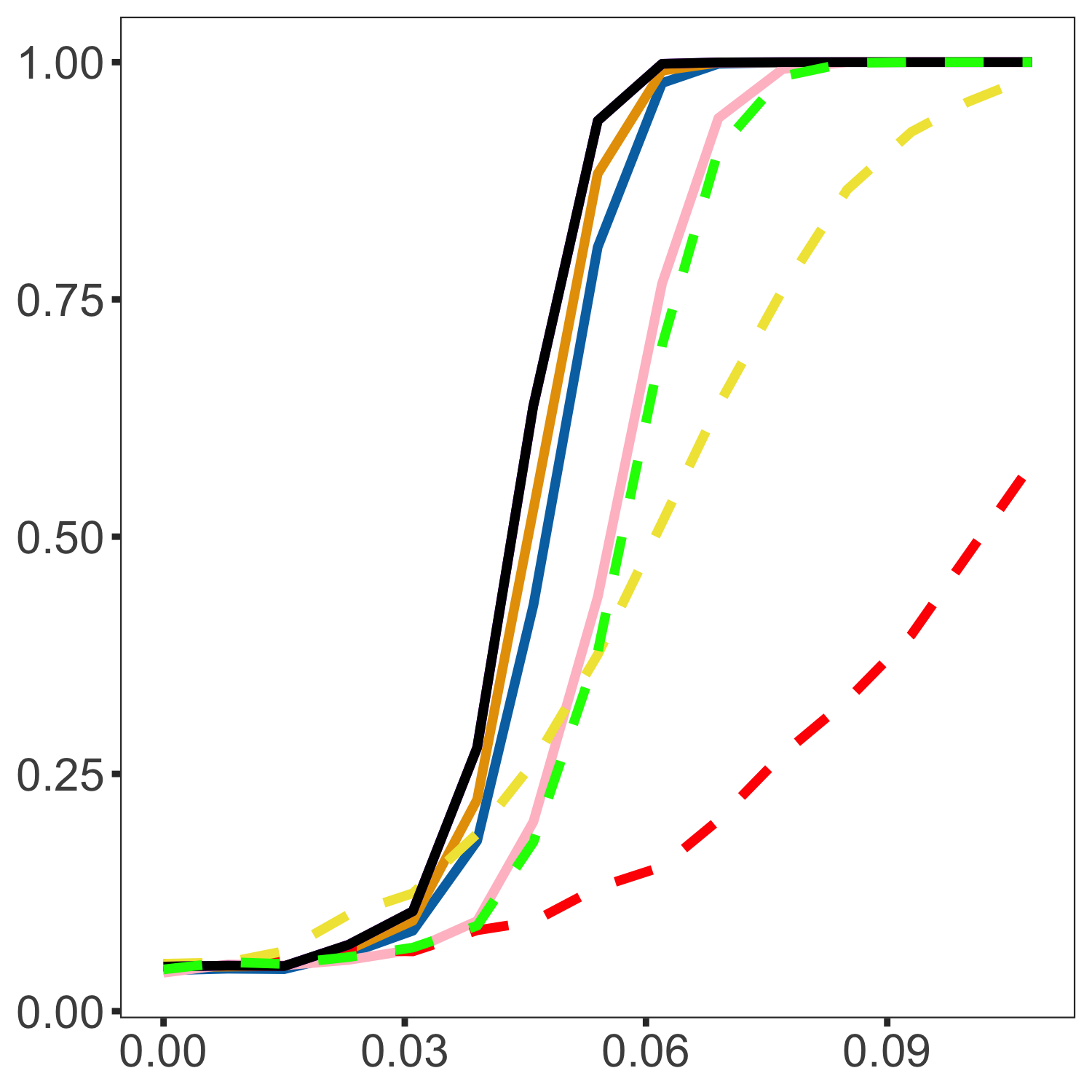}
    \end{subfigure}%
    \begin{subfigure}[t]{0.23\textwidth}
        \centering
        \includegraphics[width=\linewidth, height=0.7\linewidth]{./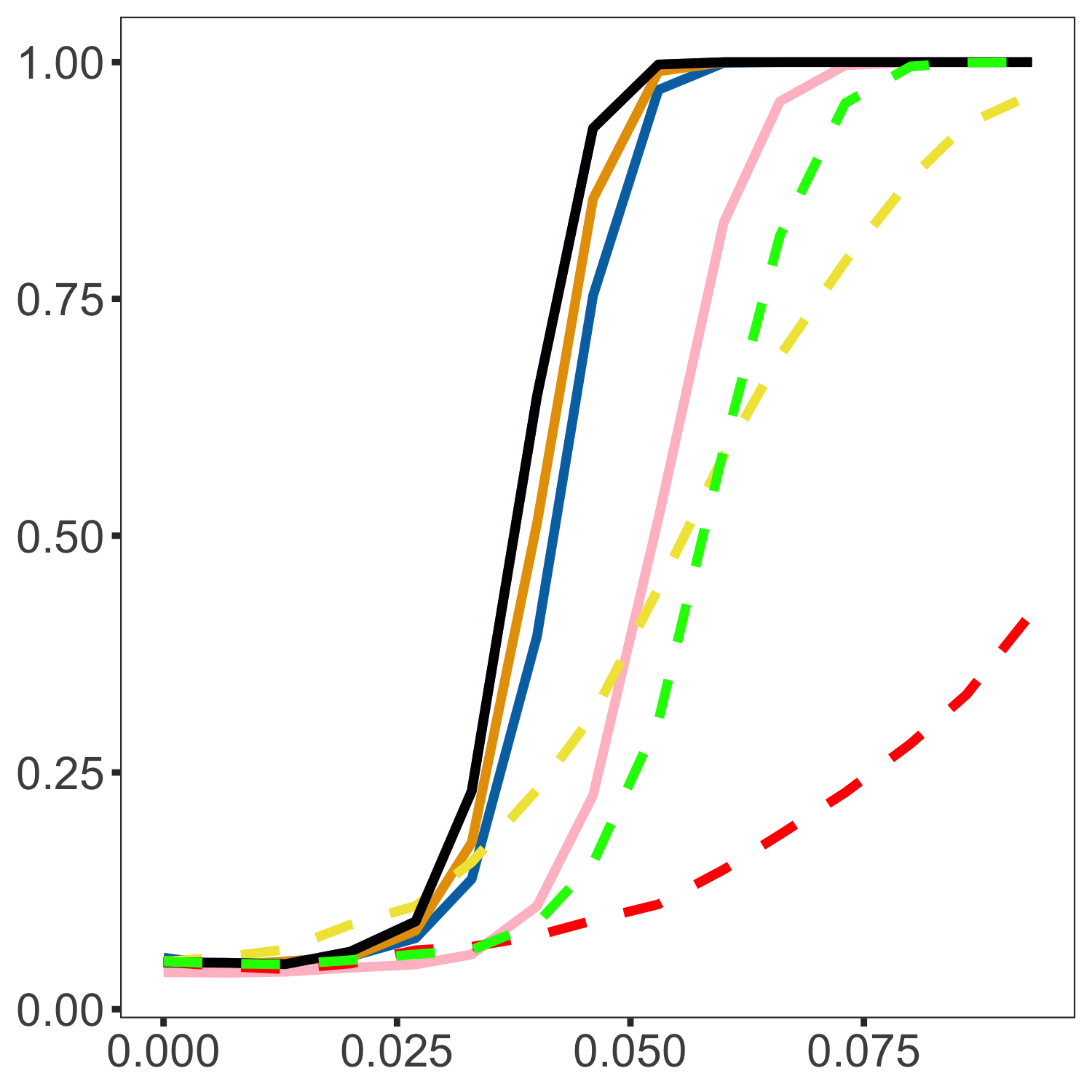}
    \end{subfigure}
     \begin{subfigure}[t]{0.23\textwidth}
        \centering
        \includegraphics[width=\linewidth, height=0.7\linewidth]{./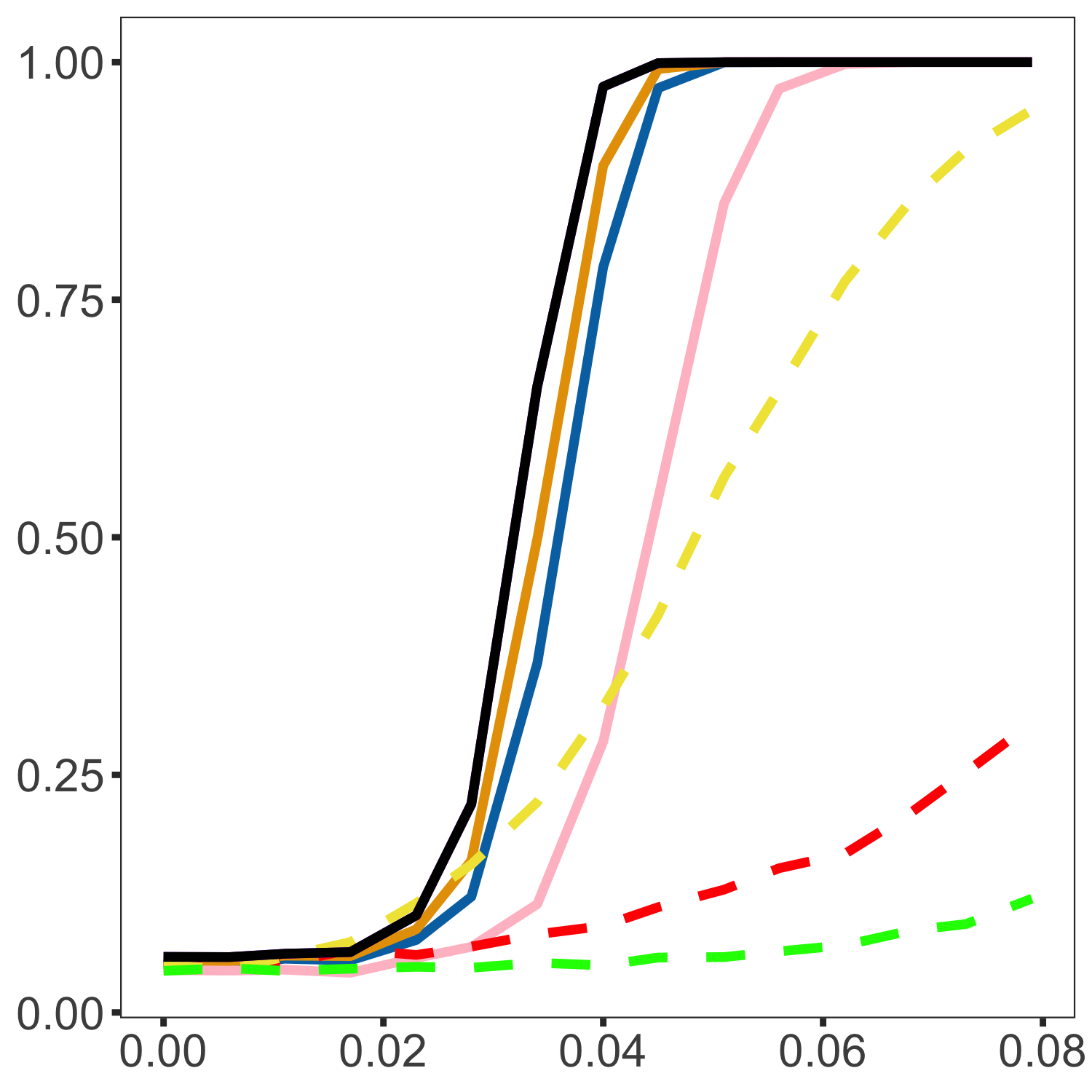}
    \end{subfigure}
    \begin{subfigure}[t]{0.23\textwidth}
        \centering
        \includegraphics[width=\linewidth, height=0.7\linewidth]{./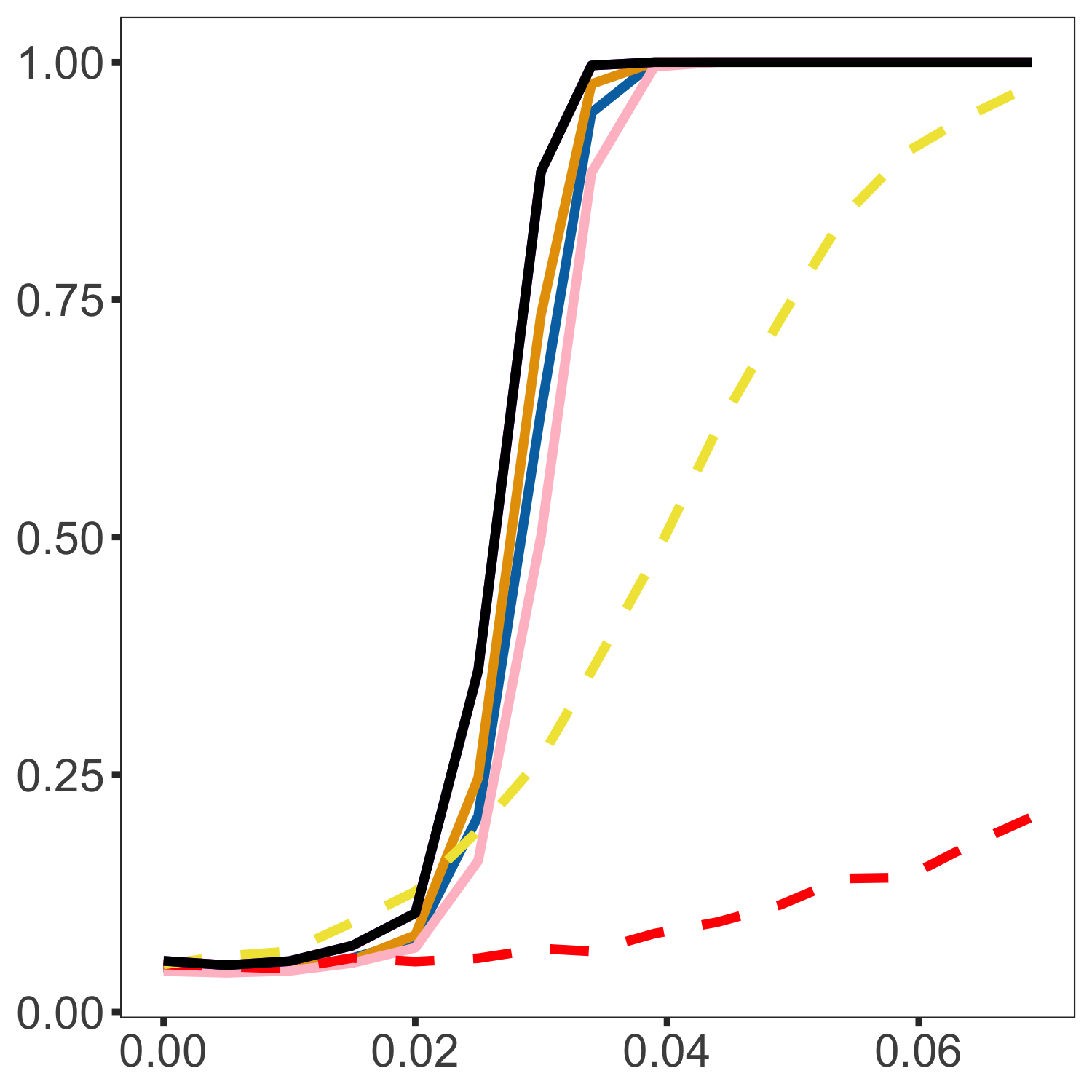}
    \end{subfigure}
    \vfill
    \begin{subfigure}[t]{0.23\textwidth}
        \centering
         \includegraphics[width=\linewidth, height=0.7\linewidth]{./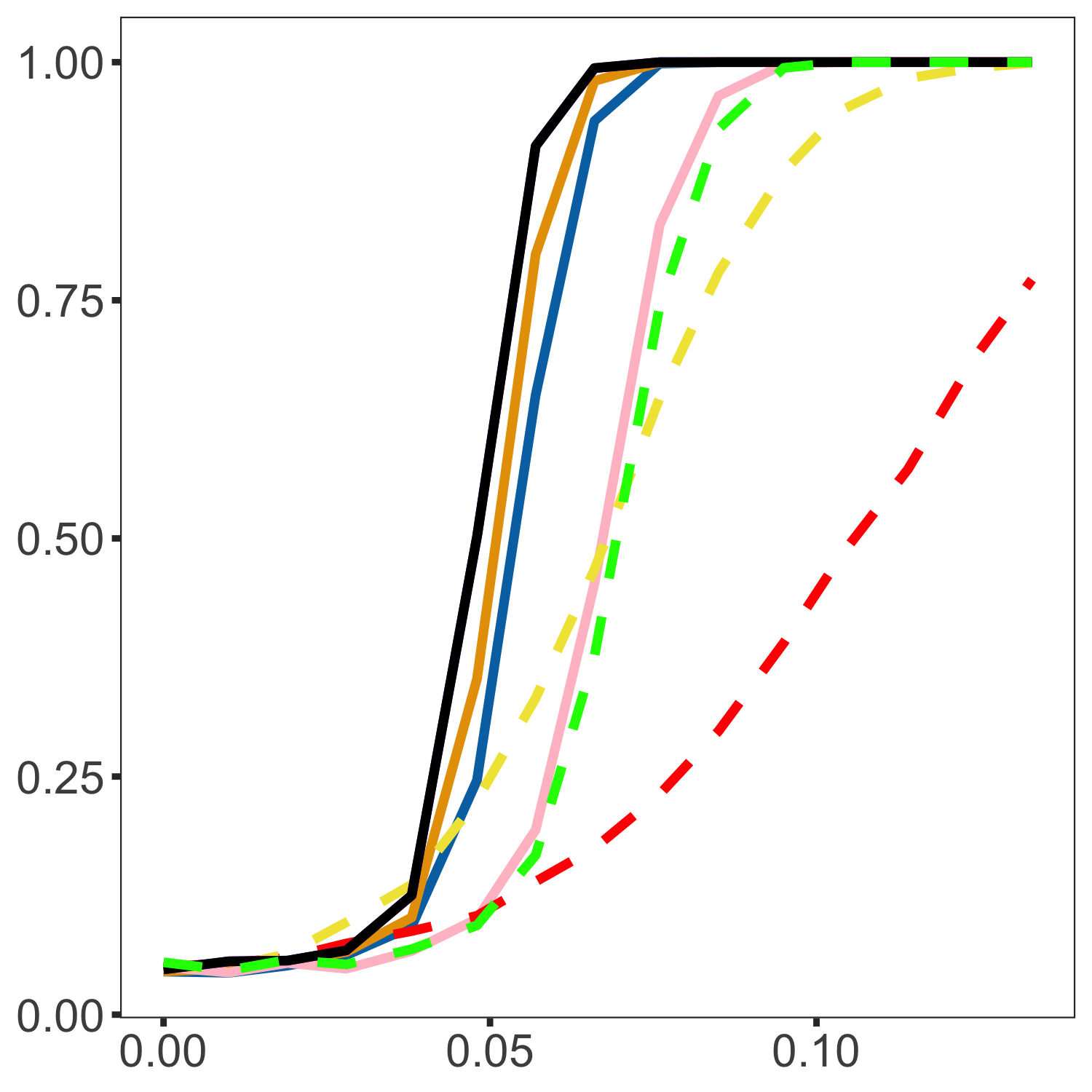}
    \end{subfigure}%
    \begin{subfigure}[t]{0.23\textwidth}
        \centering
        \includegraphics[width=\linewidth, height=0.7\linewidth]{./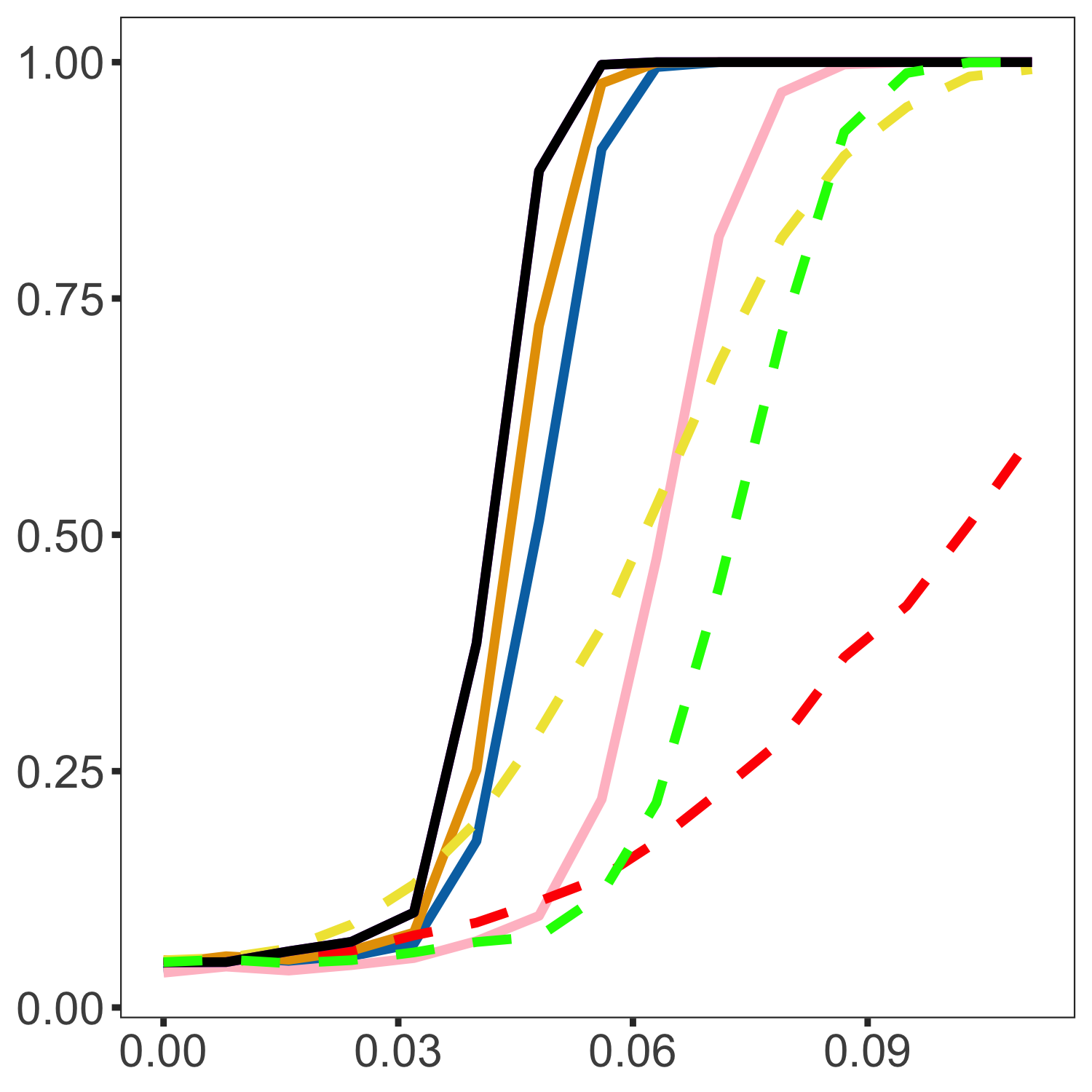}
    \end{subfigure}
     \begin{subfigure}[t]{0.23\textwidth}
        \centering
        \includegraphics[width=\linewidth, height=0.7\linewidth]{./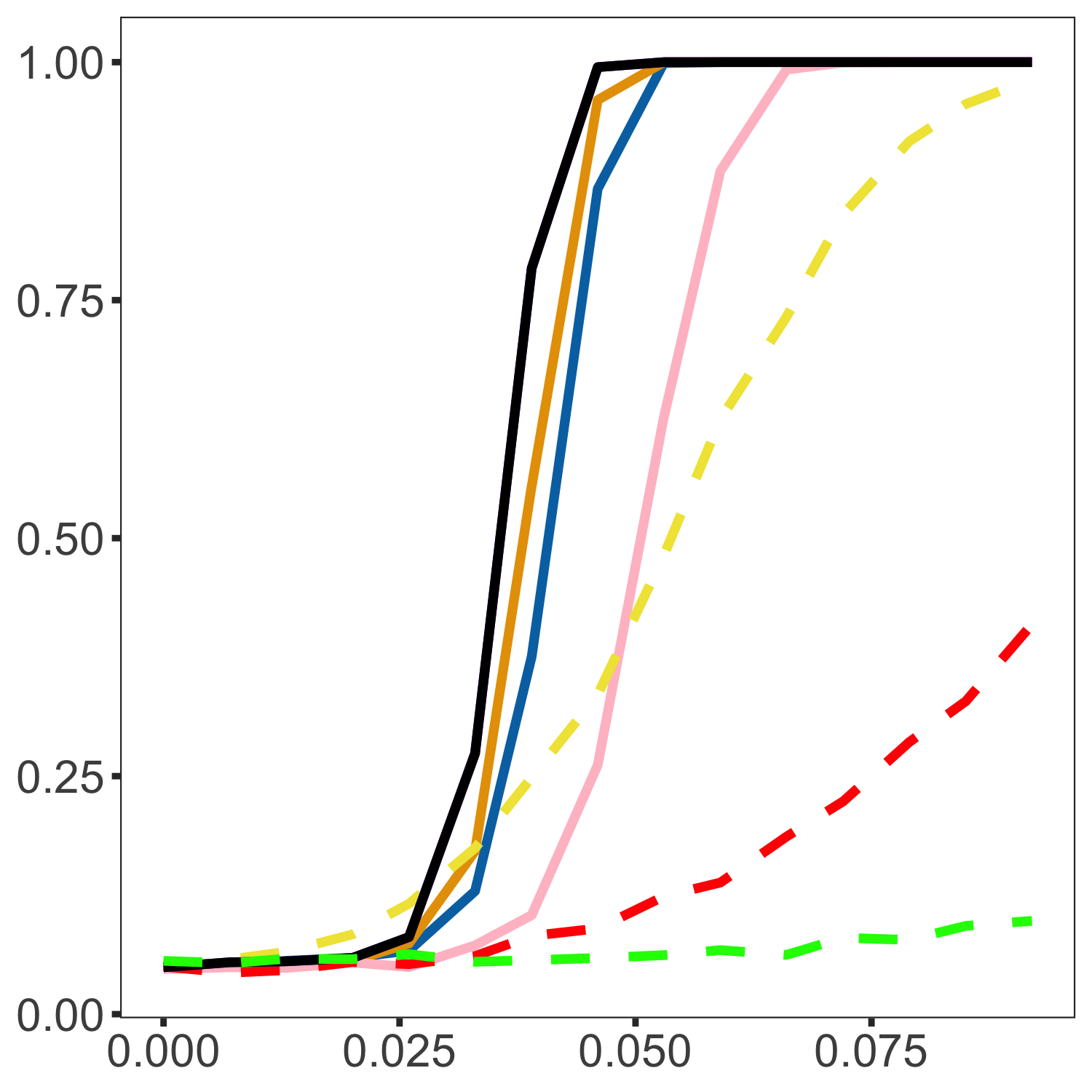}
    \end{subfigure}
    \begin{subfigure}[t]{0.23\textwidth}
        \centering
        \includegraphics[width=\linewidth, height=0.7\linewidth]{./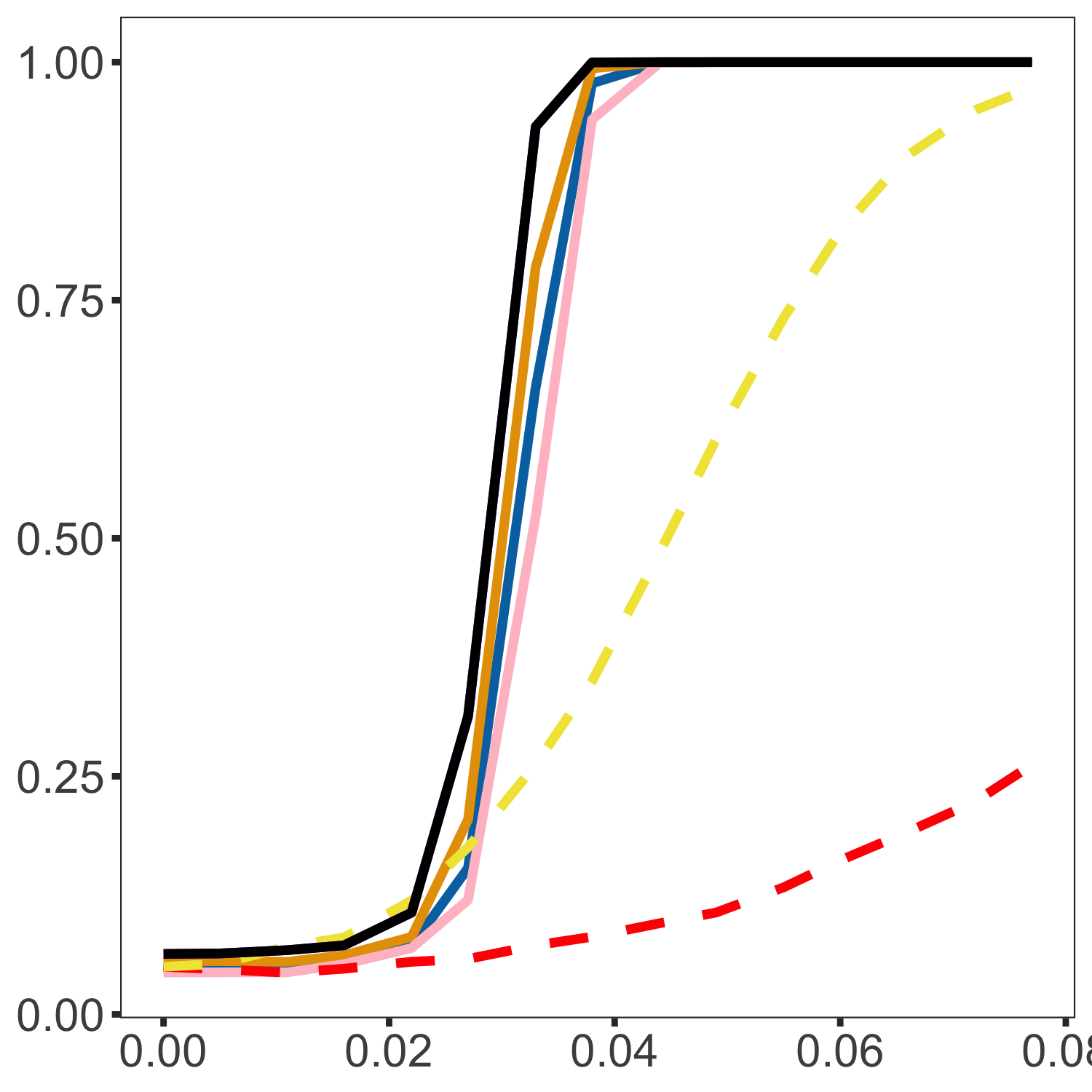}
    \end{subfigure}
   \caption{Size-adjusted empirical power when \(\Sigma\) is Identity. Columns (left to right) correspond to \(\hat{\gamma}_2 = 0.3, 0.5, 0.9, 2\); rows (top to bottom) correspond to \(n_1 = 50, 100, 250\). Solid curves: blue (\(\lambda = 0.5\)), orange (\(\lambda = 1\)), black (\(\lambda=\hat{\lambda}_{I_p}\)), purple (\(\lambda=\hat{\lambda}_{\Sigma_p}\)), and pink (\(\lambda=\hat{\lambda}_*\)). Dashed curves: red (Proj-LRT), yellow (Ridge-LRT), and green (\cite{han2016tracy}, \(\lambda=0\)), the latter available only when \(p<n_1+n_2\).}
    \label{fig:emp_power_Sigma1}
\end{figure}

\begin{figure}[htbp]
    \centering
    \begin{subfigure}[t]{0.23\textwidth}
        \centering
         \includegraphics[width=\linewidth, height=0.7\linewidth]{./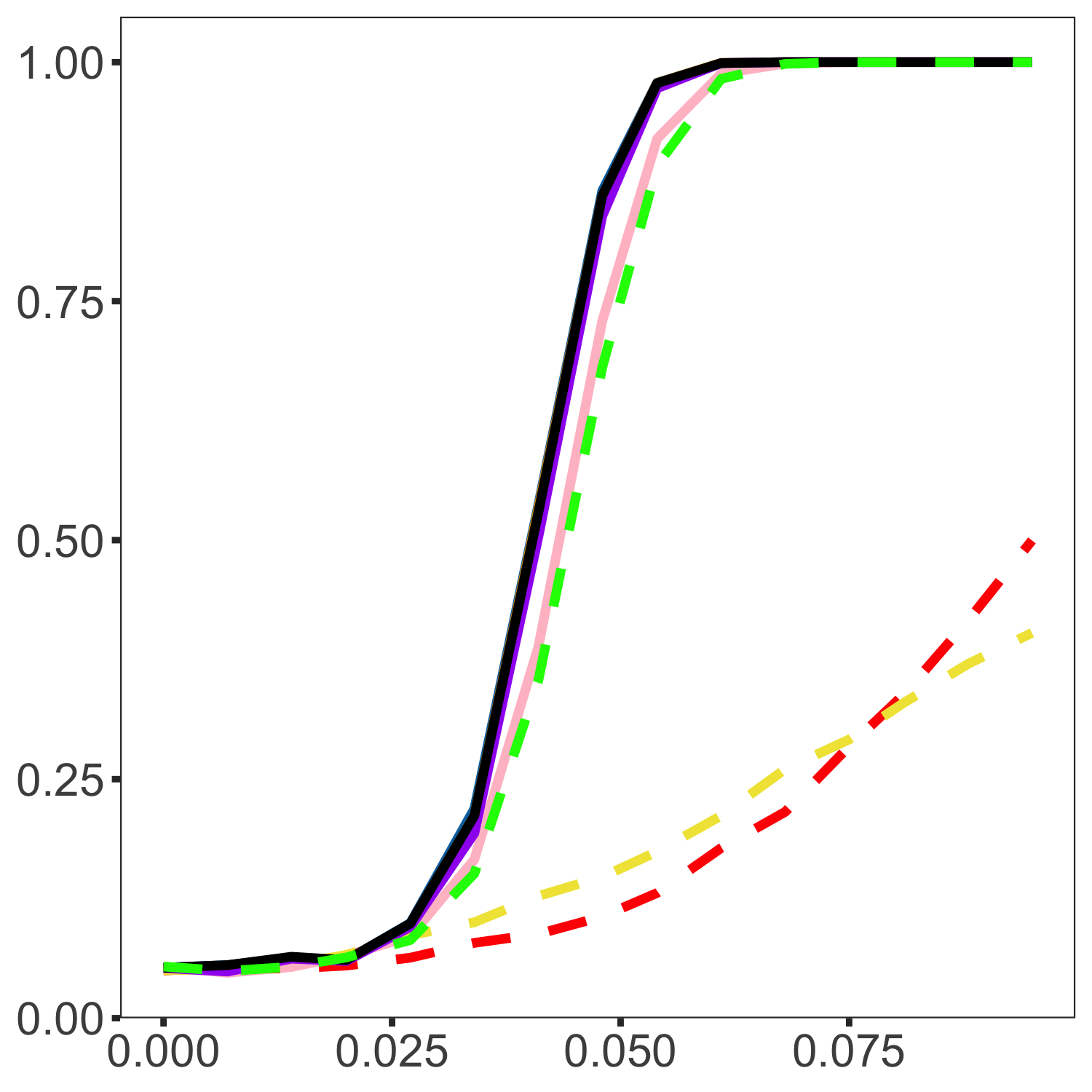}
    \end{subfigure}%
    \begin{subfigure}[t]{0.23\textwidth}
        \centering
        \includegraphics[width=\linewidth, height=0.7\linewidth]{./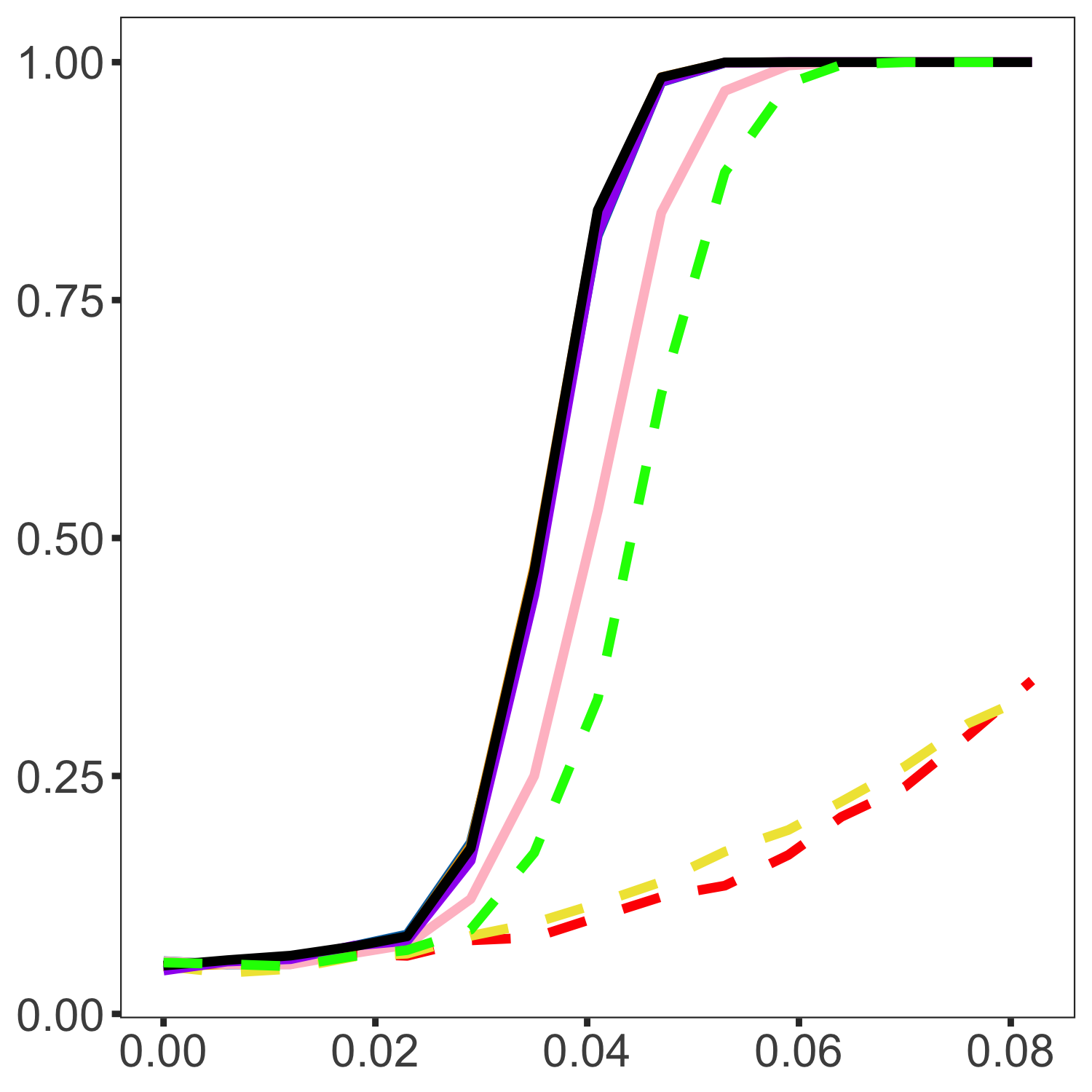}
    \end{subfigure}
     \begin{subfigure}[t]{0.23\textwidth}
        \centering
        \includegraphics[width=\linewidth, height=0.7\linewidth]{./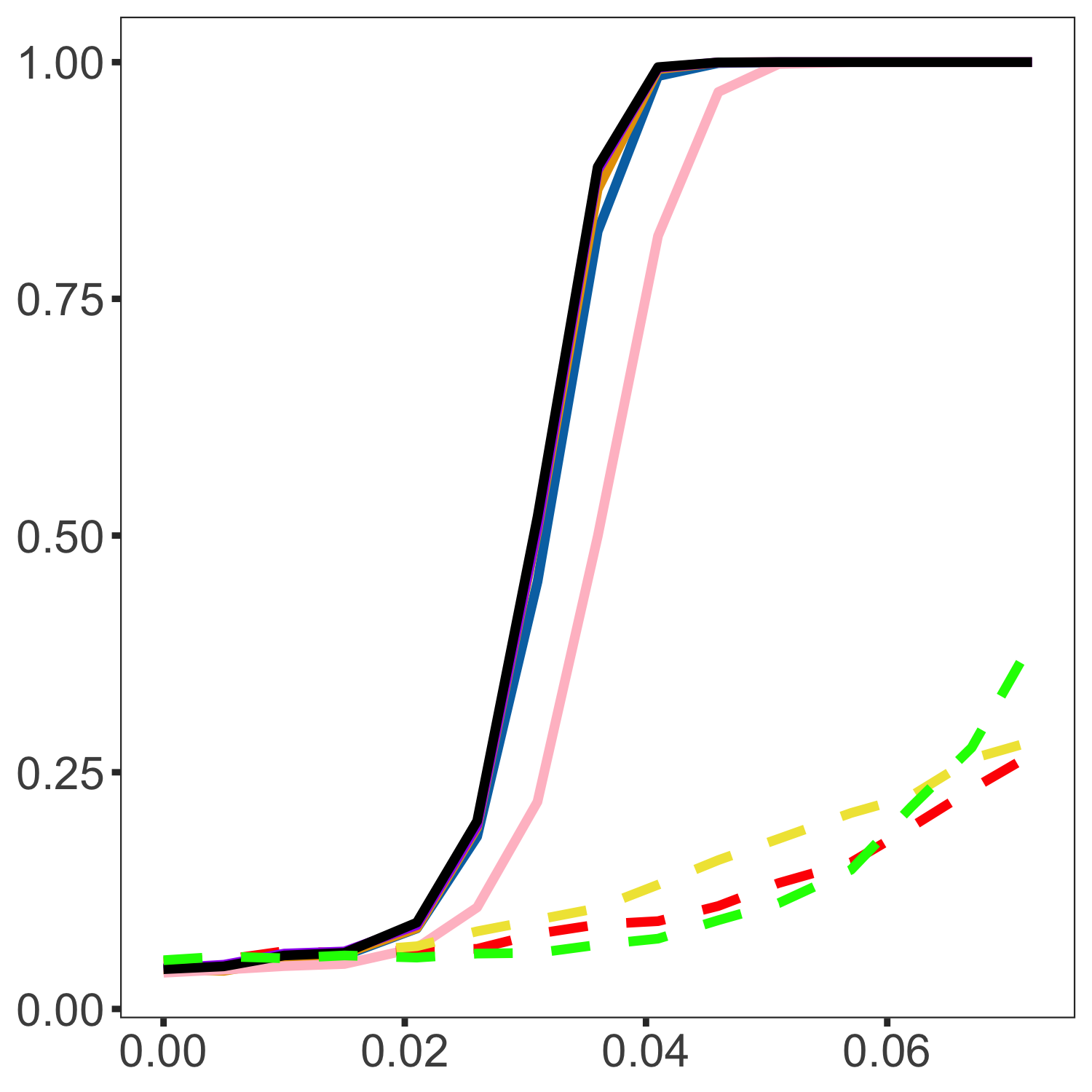}
    \end{subfigure}
    \begin{subfigure}[t]{0.23\textwidth}
        \centering
        \includegraphics[width=\linewidth, height=0.7\linewidth]{./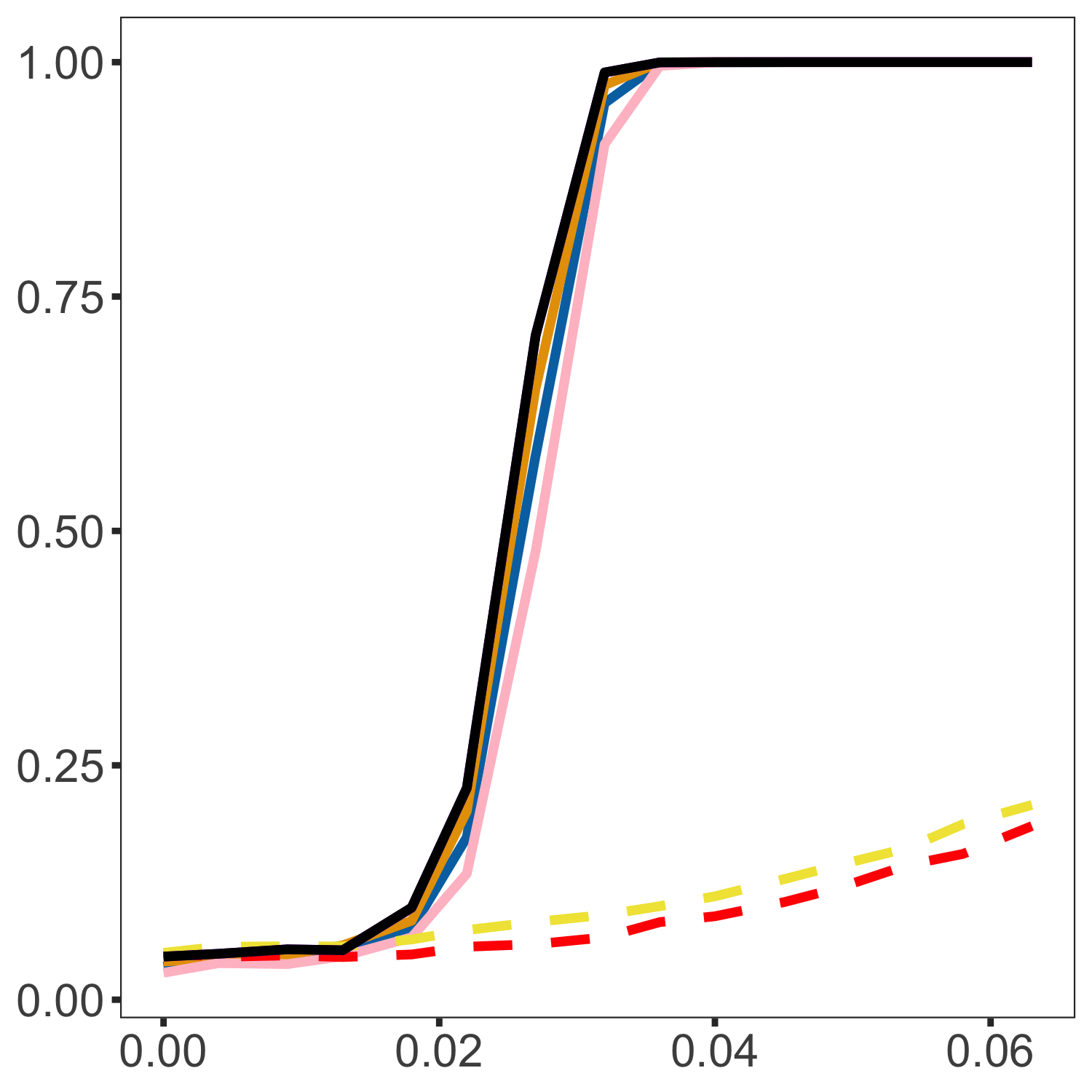}
    \end{subfigure}
    \vfill
    \begin{subfigure}[t]{0.23\textwidth}
        \centering
         \includegraphics[width=\linewidth, height=0.7\linewidth]{./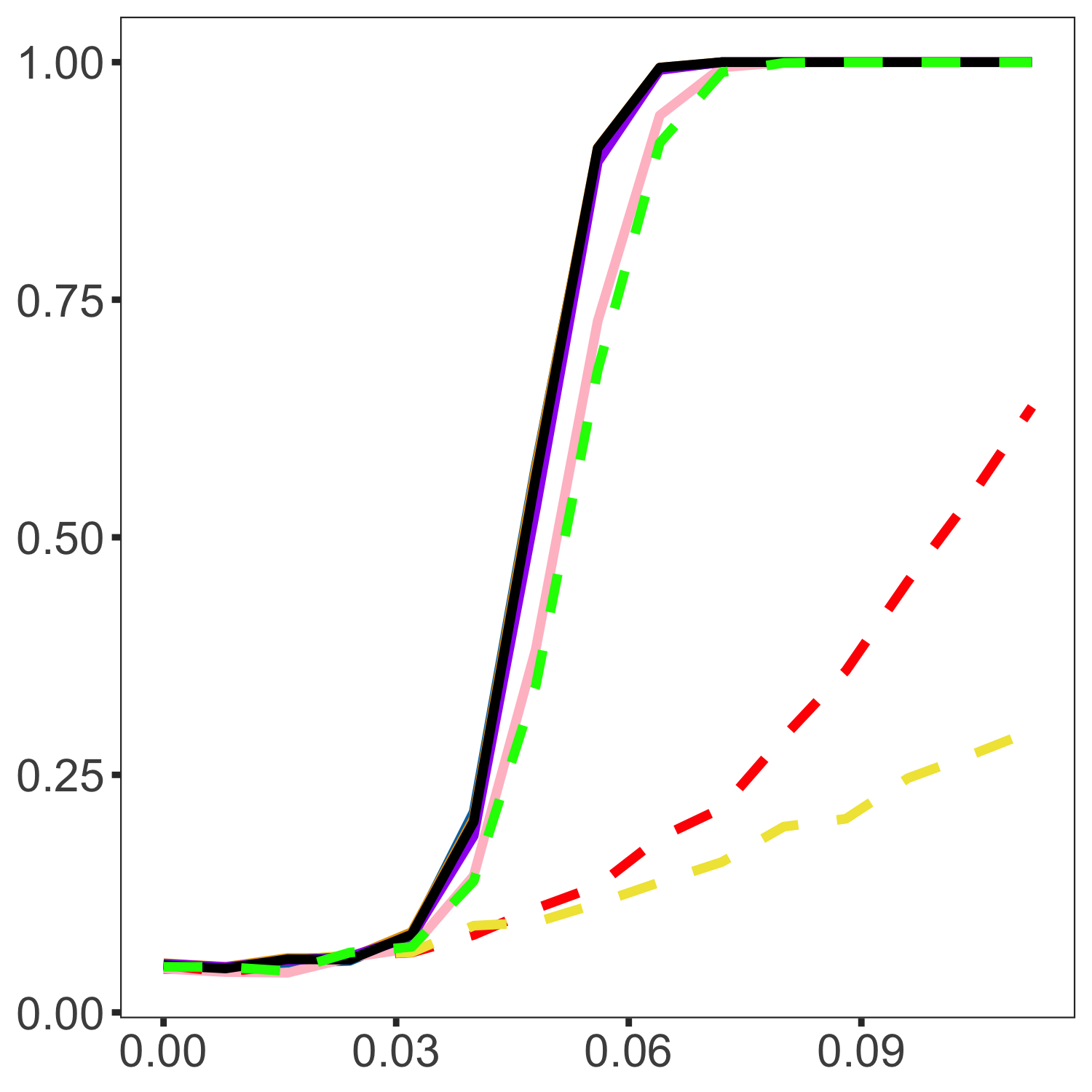}
    \end{subfigure}%
    \begin{subfigure}[t]{0.23\textwidth}
        \centering
        \includegraphics[width=\linewidth, height=0.7\linewidth]{./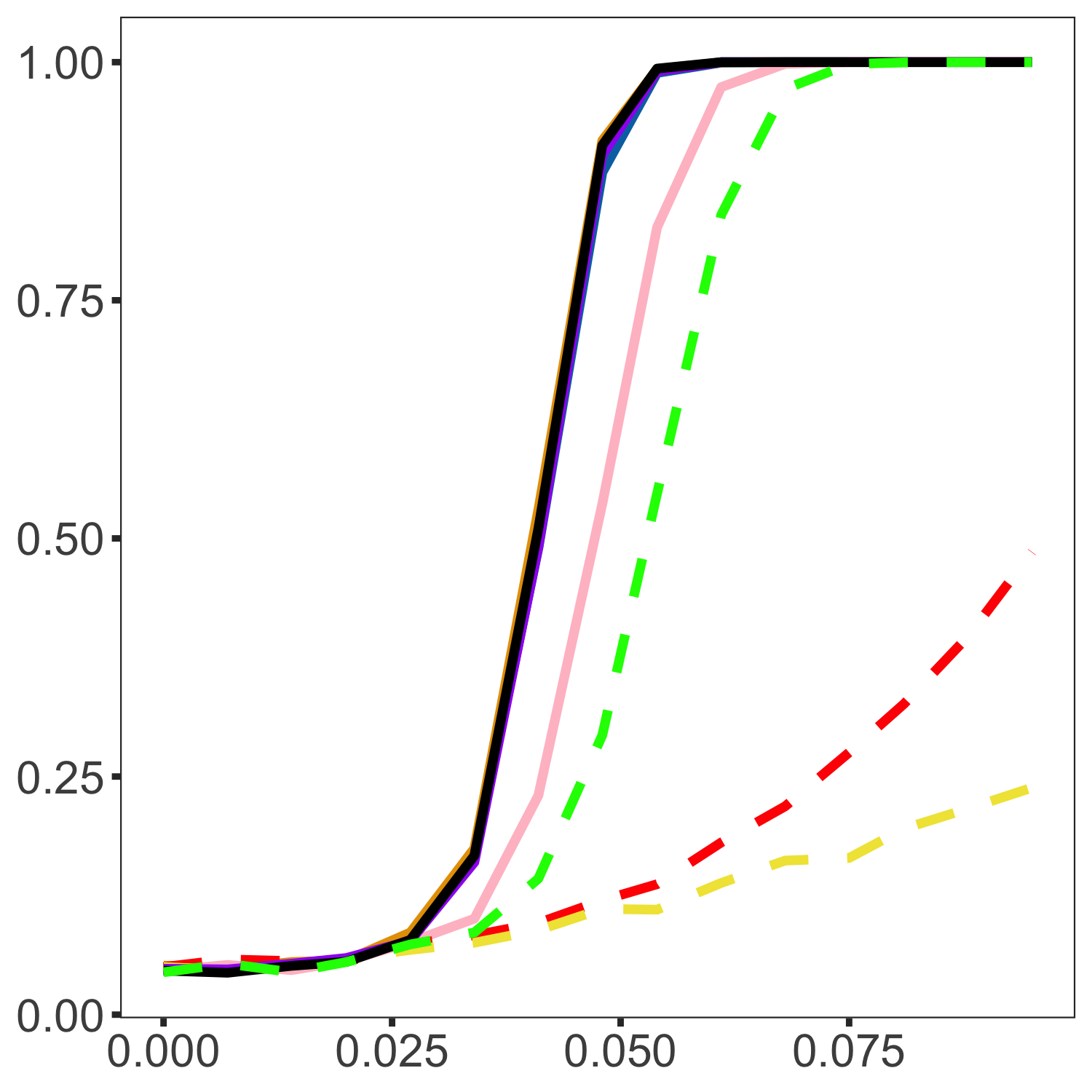}
    \end{subfigure}
     \begin{subfigure}[t]{0.23\textwidth}
        \centering
        \includegraphics[width=\linewidth, height=0.7\linewidth]{./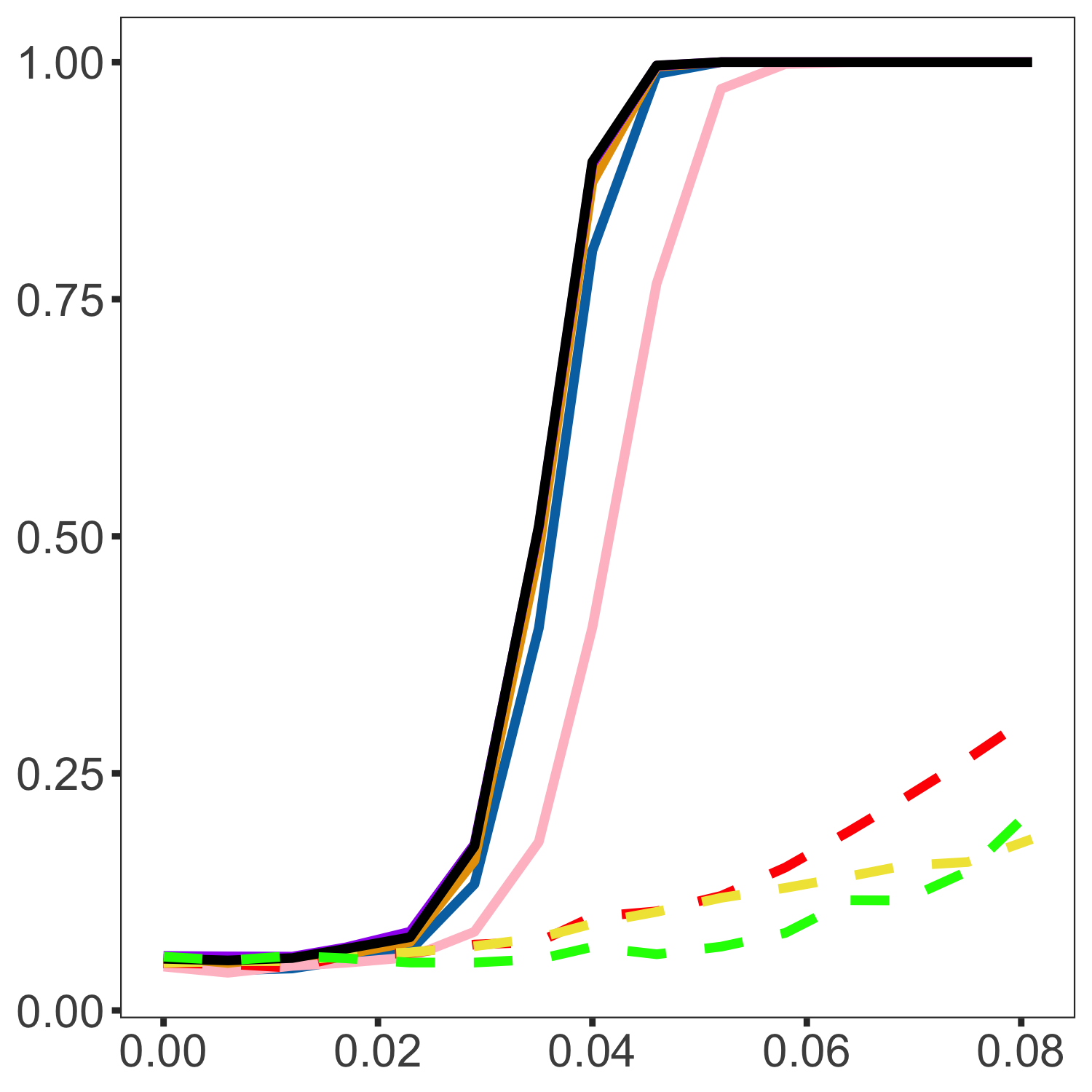}
    \end{subfigure}
    \begin{subfigure}[t]{0.23\textwidth}
        \centering
        \includegraphics[width=\linewidth, height=0.7\linewidth]{./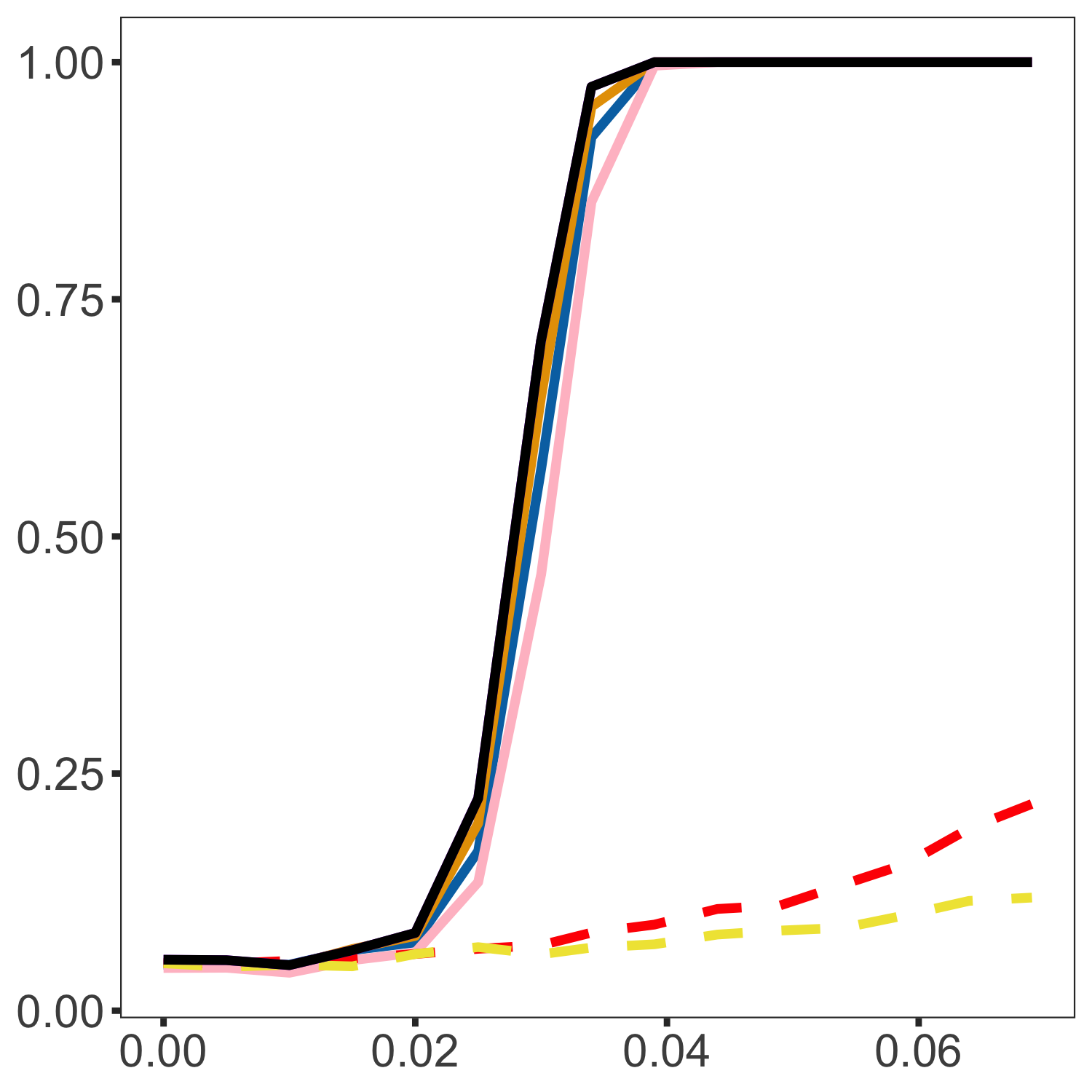}
    \end{subfigure}
    \vfill
    \begin{subfigure}[t]{0.23\textwidth}
        \centering
         \includegraphics[width=\linewidth, height=0.7\linewidth]{./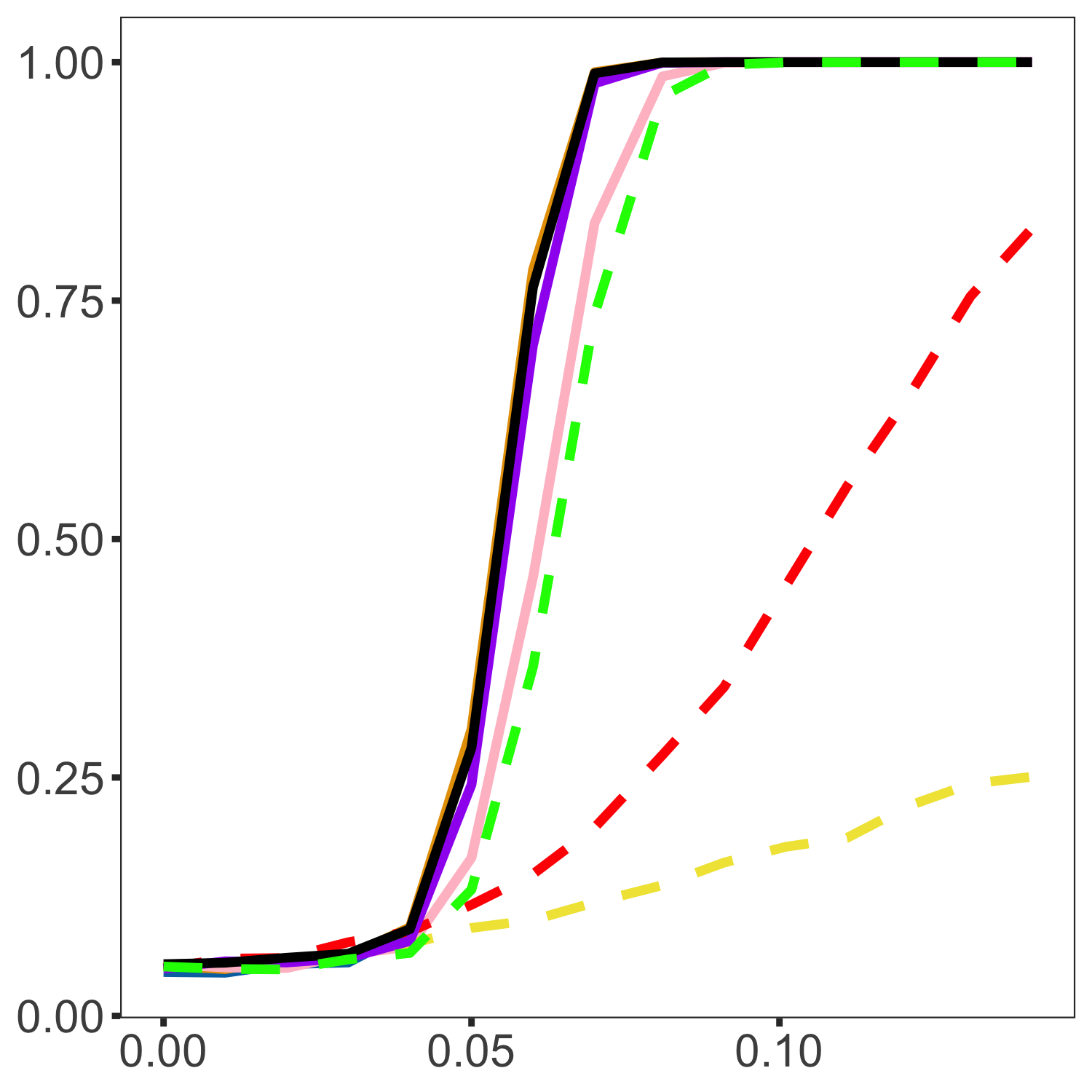}
    \end{subfigure}%
    \begin{subfigure}[t]{0.23\textwidth}
        \centering
        \includegraphics[width=\linewidth, height=0.7\linewidth]{./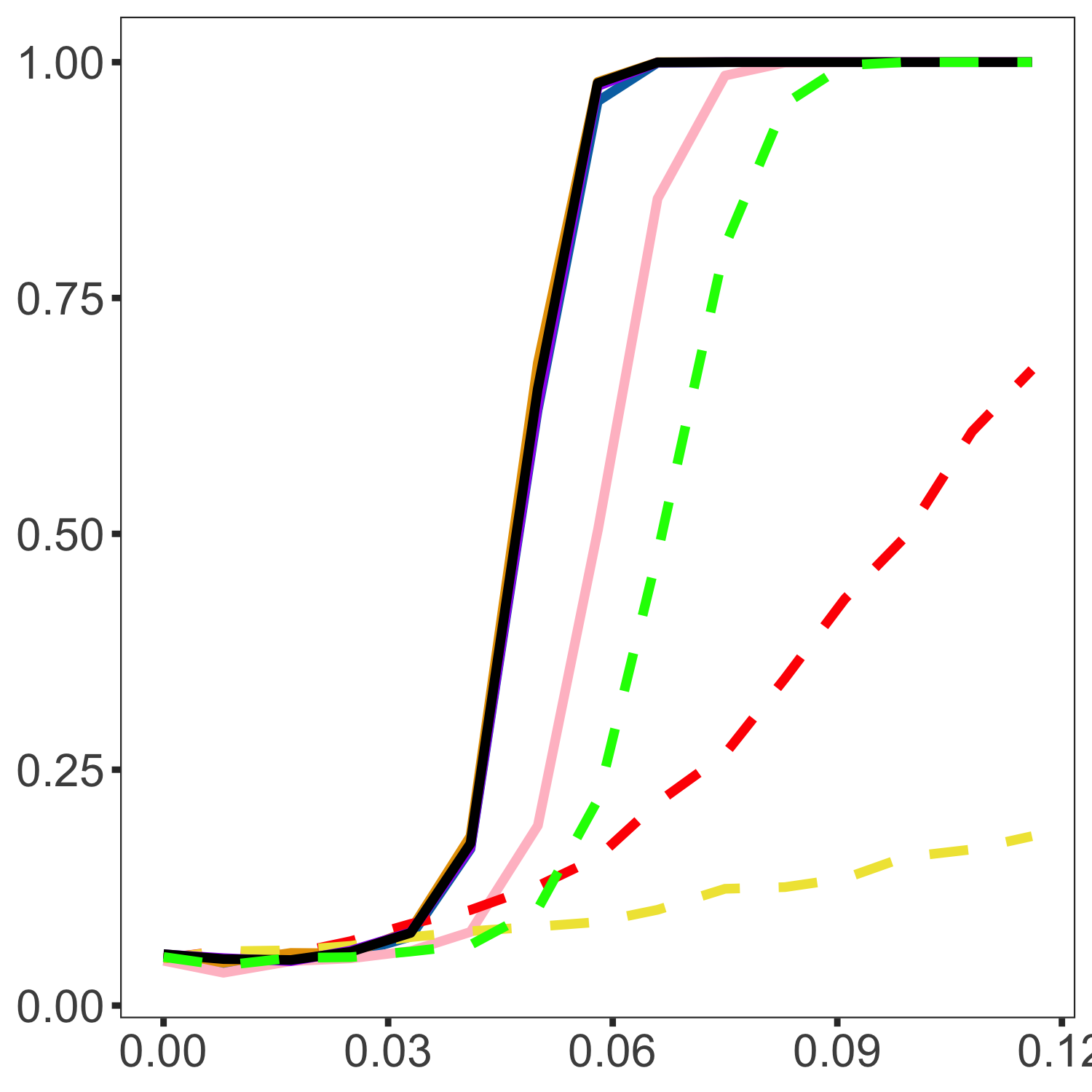}
    \end{subfigure}
     \begin{subfigure}[t]{0.23\textwidth}
        \centering
        \includegraphics[width=\linewidth, height=0.7\linewidth]{./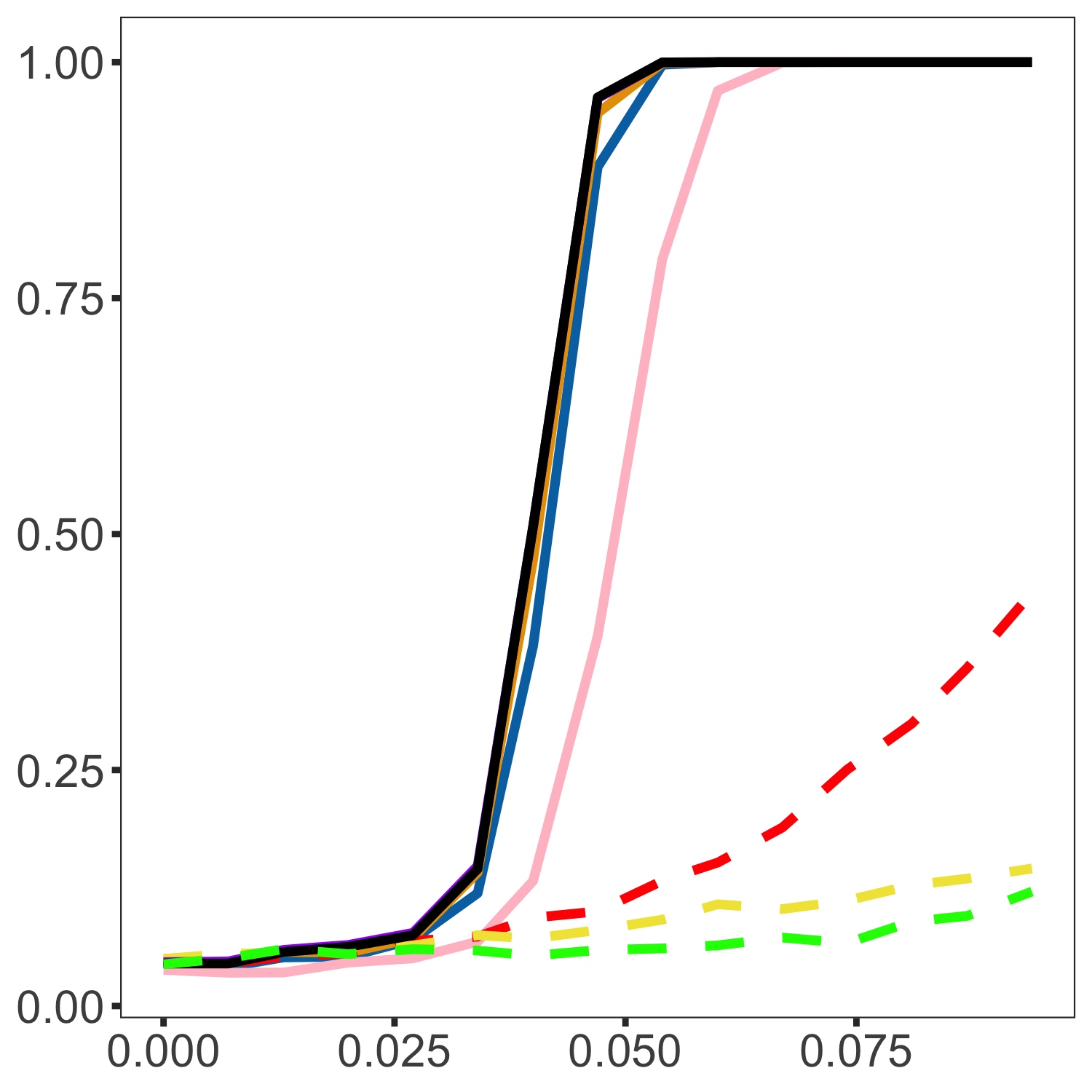}
    \end{subfigure}
    \begin{subfigure}[t]{0.23\textwidth}
        \centering
        \includegraphics[width=\linewidth, height=0.7\linewidth]{./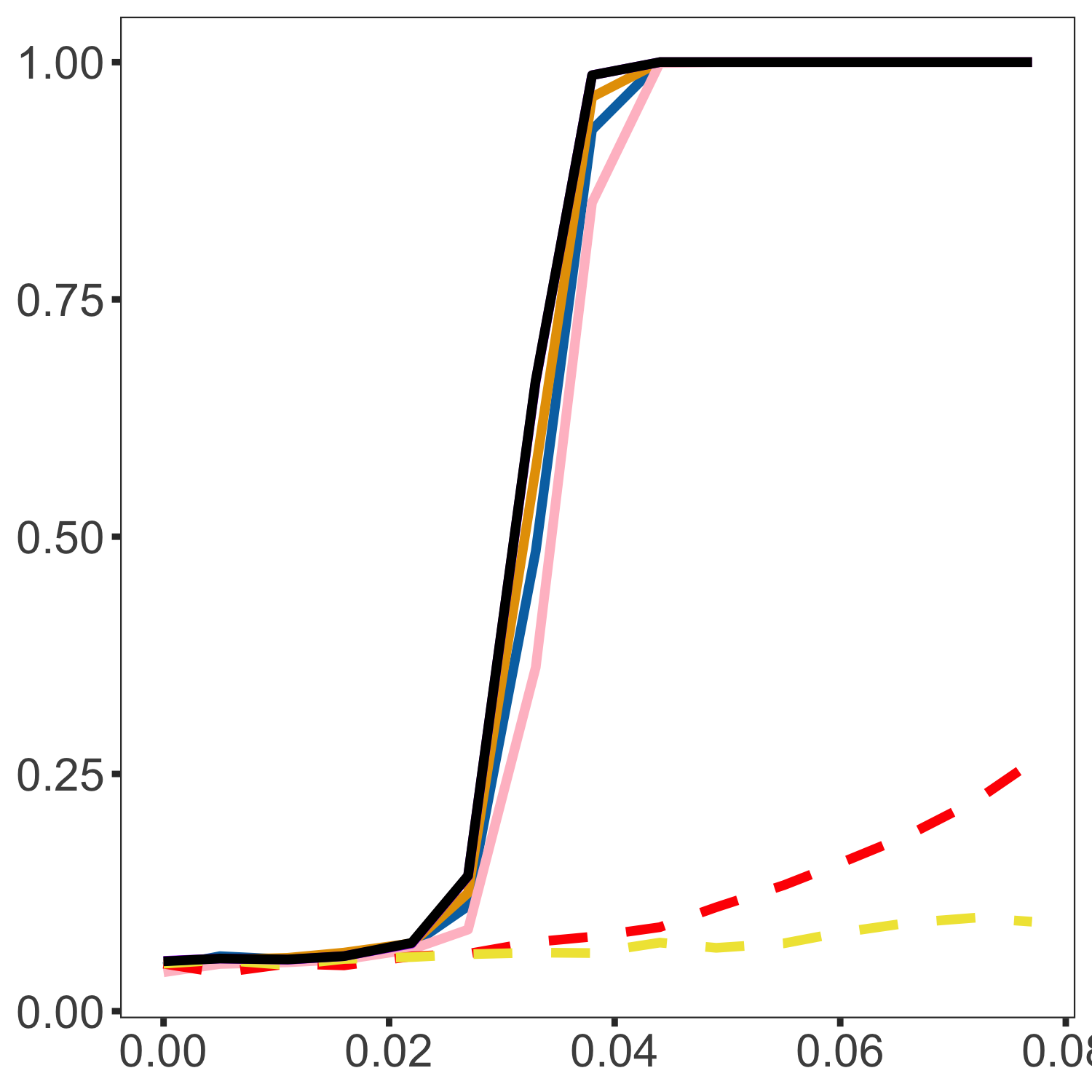}
    \end{subfigure}
   \caption{Size-adjusted empirical power when \(\Sigma\) is AR-ACF. Columns (left to right) correspond to \(\hat{\gamma}_2 = 0.3, 0.5, 0.9, 2\); rows (top to bottom) correspond to \(n_1 = 50, 100, 250\). Solid curves: blue (\(\lambda = 0.5\)), orange (\(\lambda = 1\)), black (\(\lambda=\hat{\lambda}_{I_p}\)), purple (\(\lambda=\hat{\lambda}_{\Sigma_p}\)), and pink (\(\lambda=\hat{\lambda}_*\)). Dashed curves: red (Proj-LRT), yellow (Ridge-LRT), and green (\cite{han2016tracy}, \(\lambda=0\)), the latter available only when \(p<n_1+n_2\).}
    \label{fig:emp_power_Sigma3}
\end{figure}

\begin{figure}[htbp]
    \centering
    \begin{subfigure}[t]{0.23\textwidth}
        \centering
         \includegraphics[width=\linewidth, height=0.7\linewidth]{./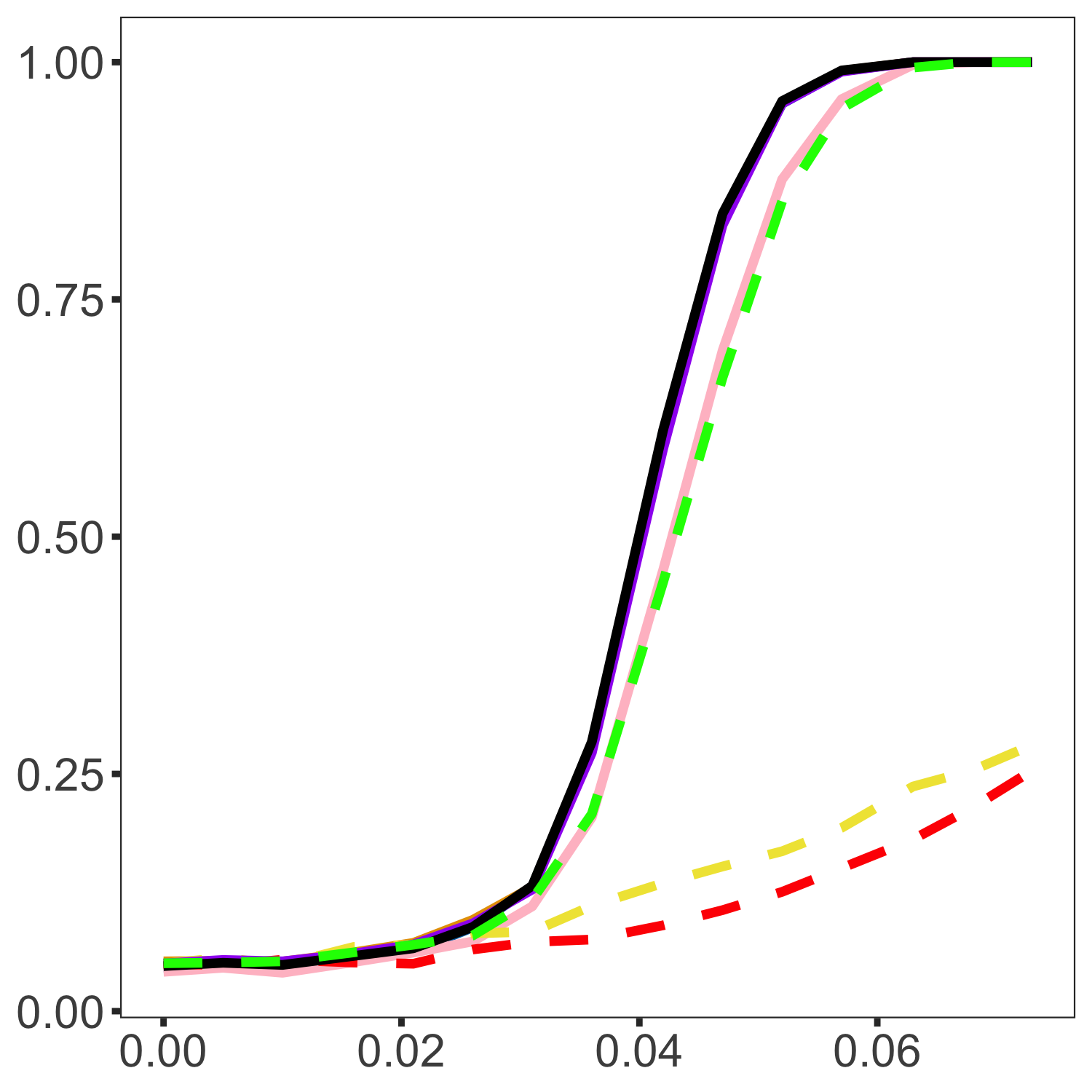}
    \end{subfigure}%
    \begin{subfigure}[t]{0.23\textwidth}
        \centering
        \includegraphics[width=\linewidth, height=0.7\linewidth]{./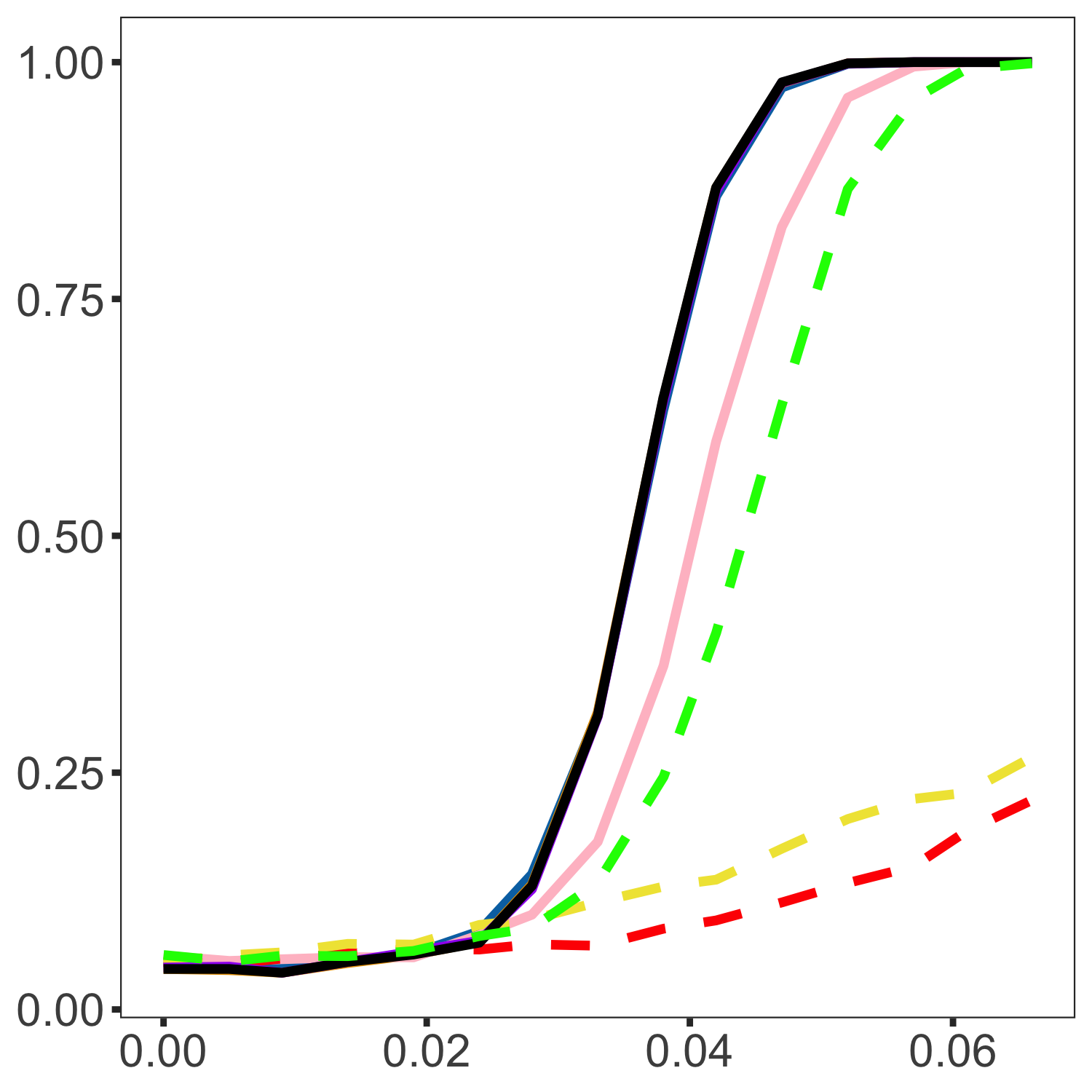}
    \end{subfigure}
     \begin{subfigure}[t]{0.23\textwidth}
        \centering
        \includegraphics[width=\linewidth, height=0.7\linewidth]{./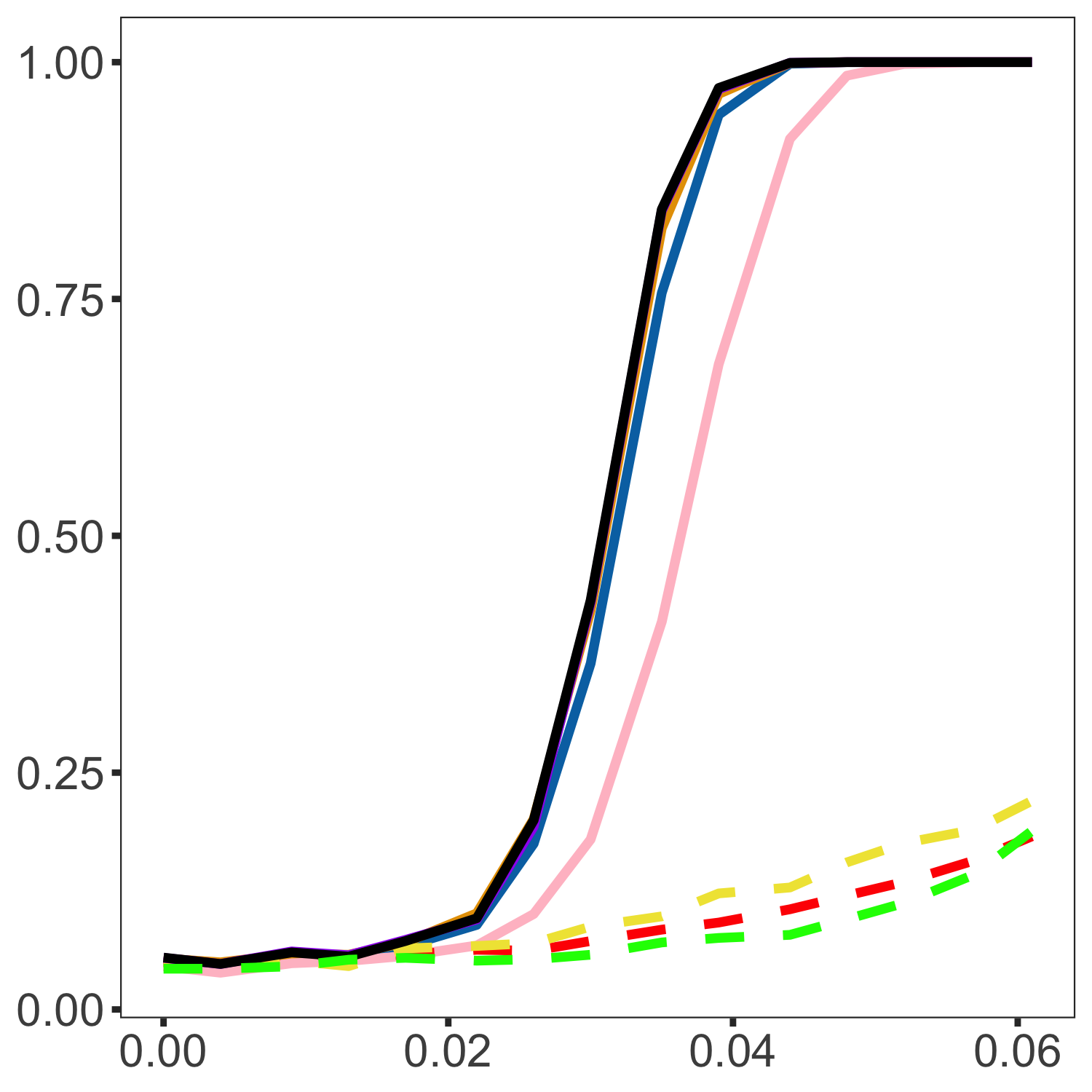}
    \end{subfigure}
    \begin{subfigure}[t]{0.23\textwidth}
        \centering
        \includegraphics[width=\linewidth, height=0.7\linewidth]{./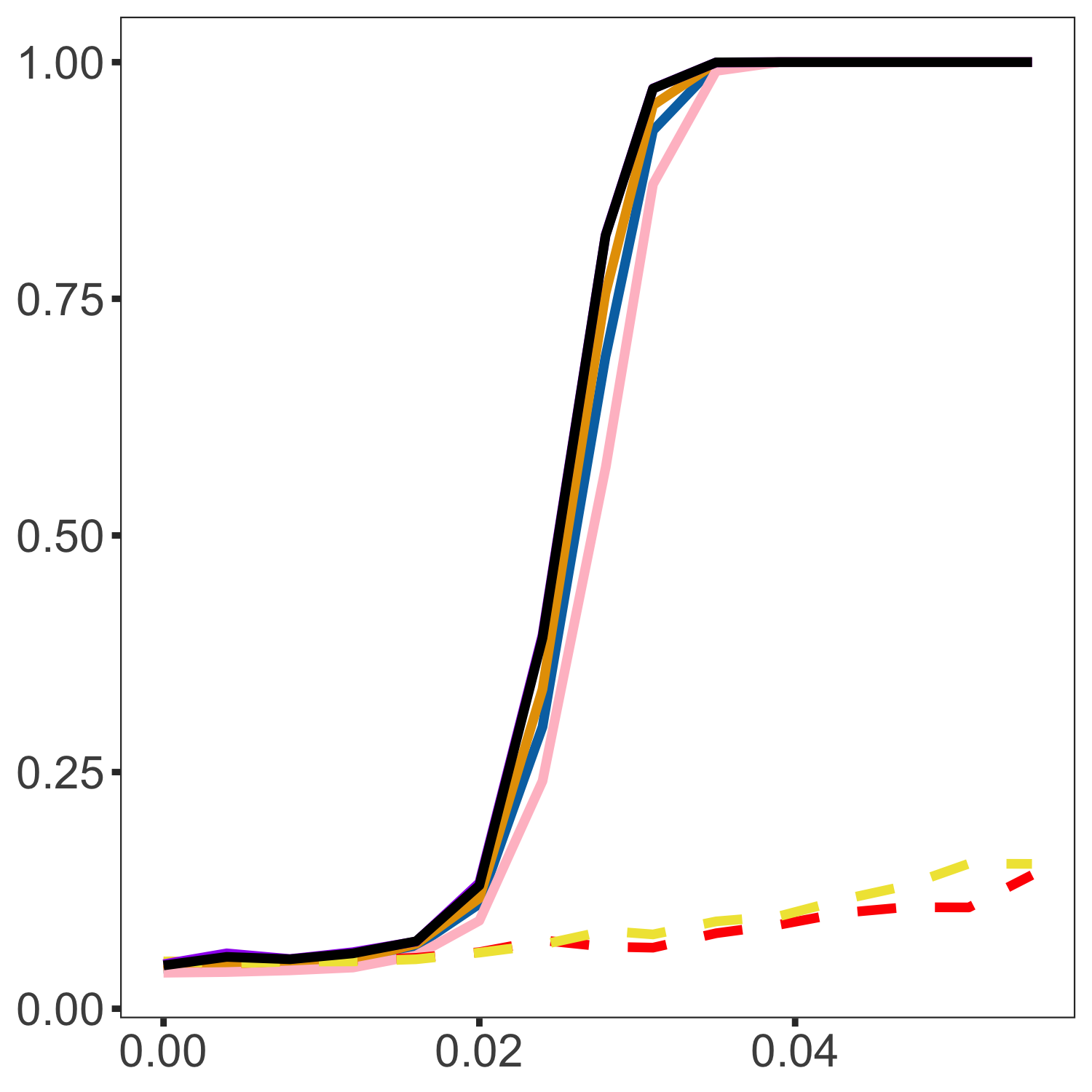}
    \end{subfigure}
    \vfill
    \begin{subfigure}[t]{0.23\textwidth}
        \centering
         \includegraphics[width=\linewidth, height=0.7\linewidth]{./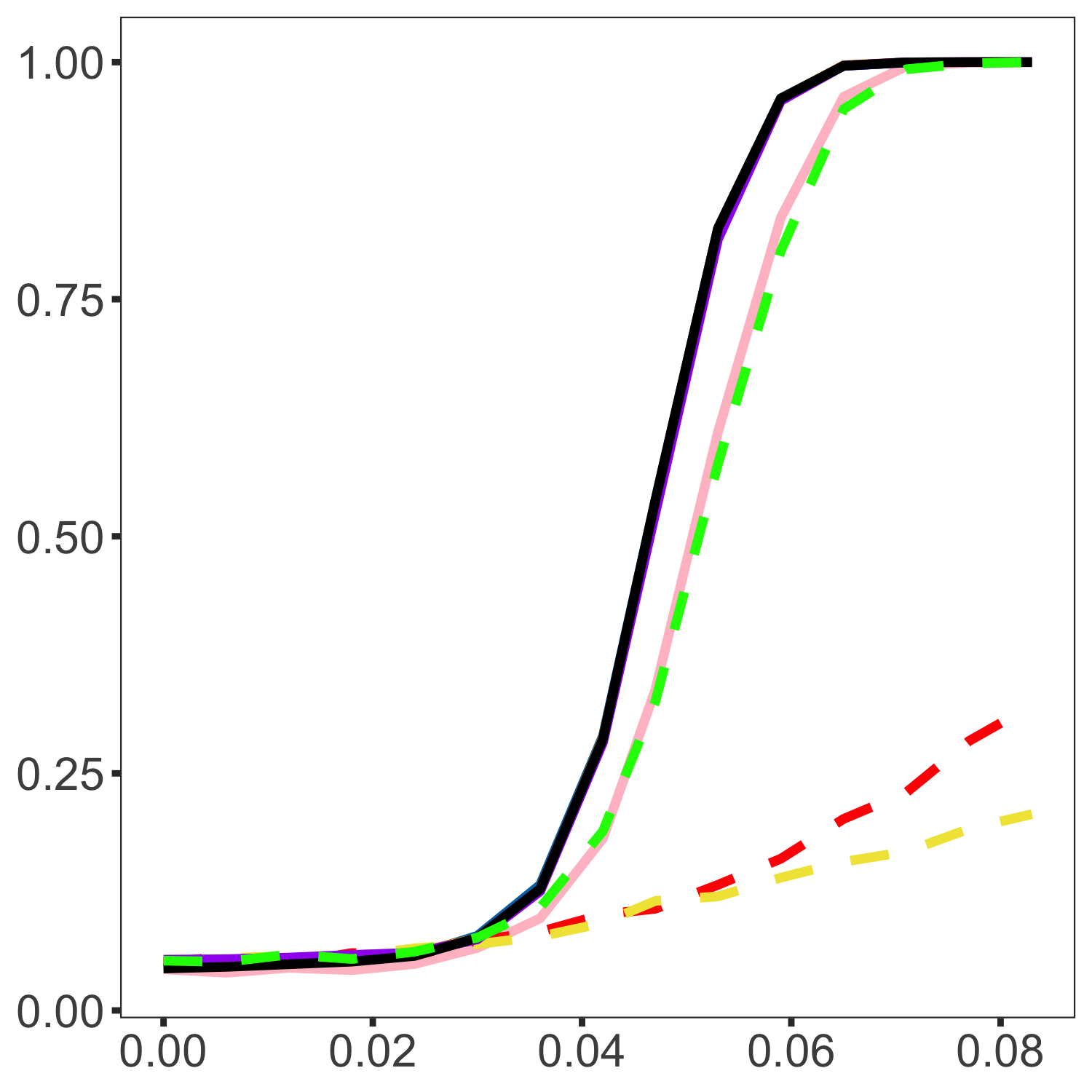}
    \end{subfigure}%
    \begin{subfigure}[t]{0.23\textwidth}
        \centering
        \includegraphics[width=\linewidth, height=0.7\linewidth]{./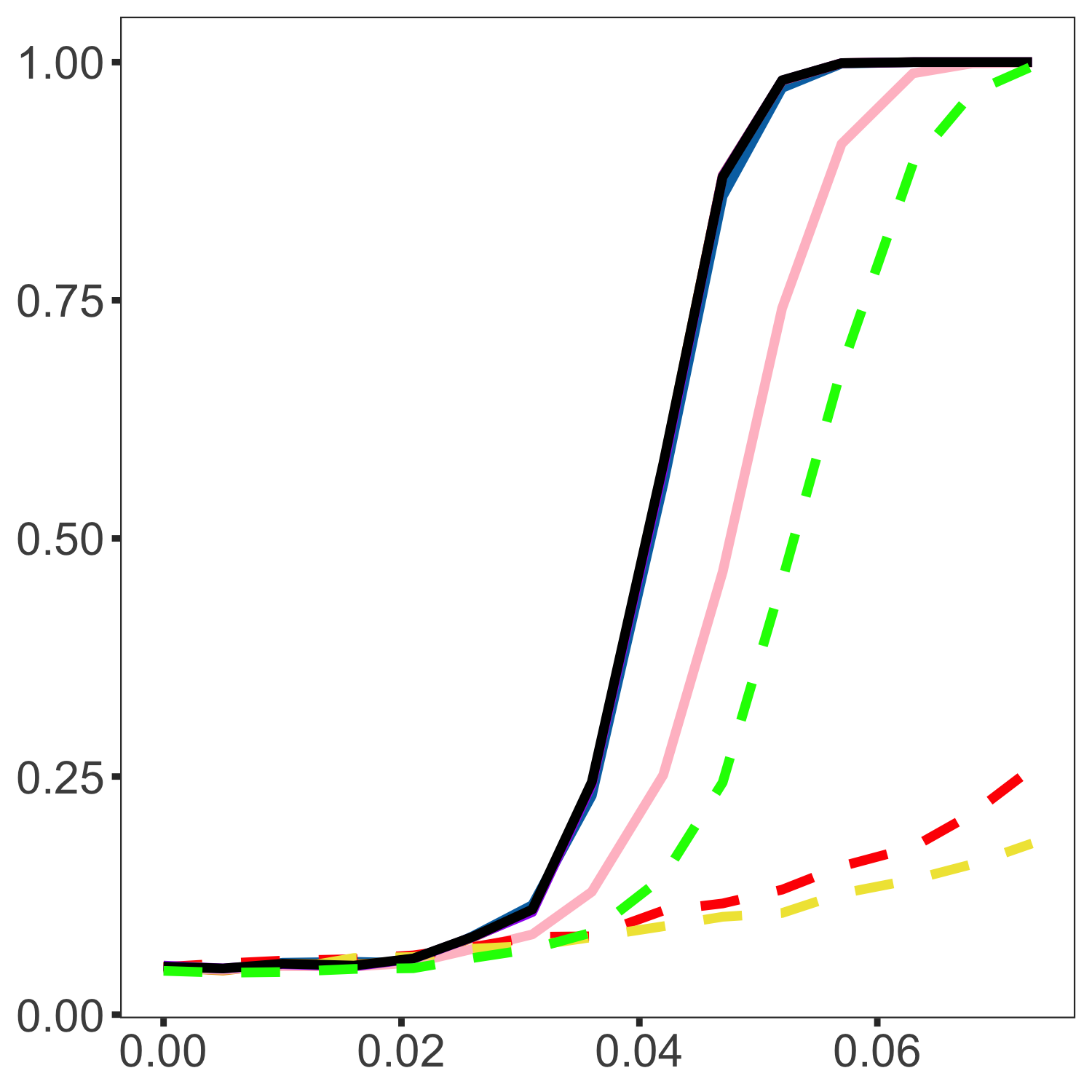}
    \end{subfigure}
     \begin{subfigure}[t]{0.23\textwidth}
        \centering
        \includegraphics[width=\linewidth, height=0.7\linewidth]{./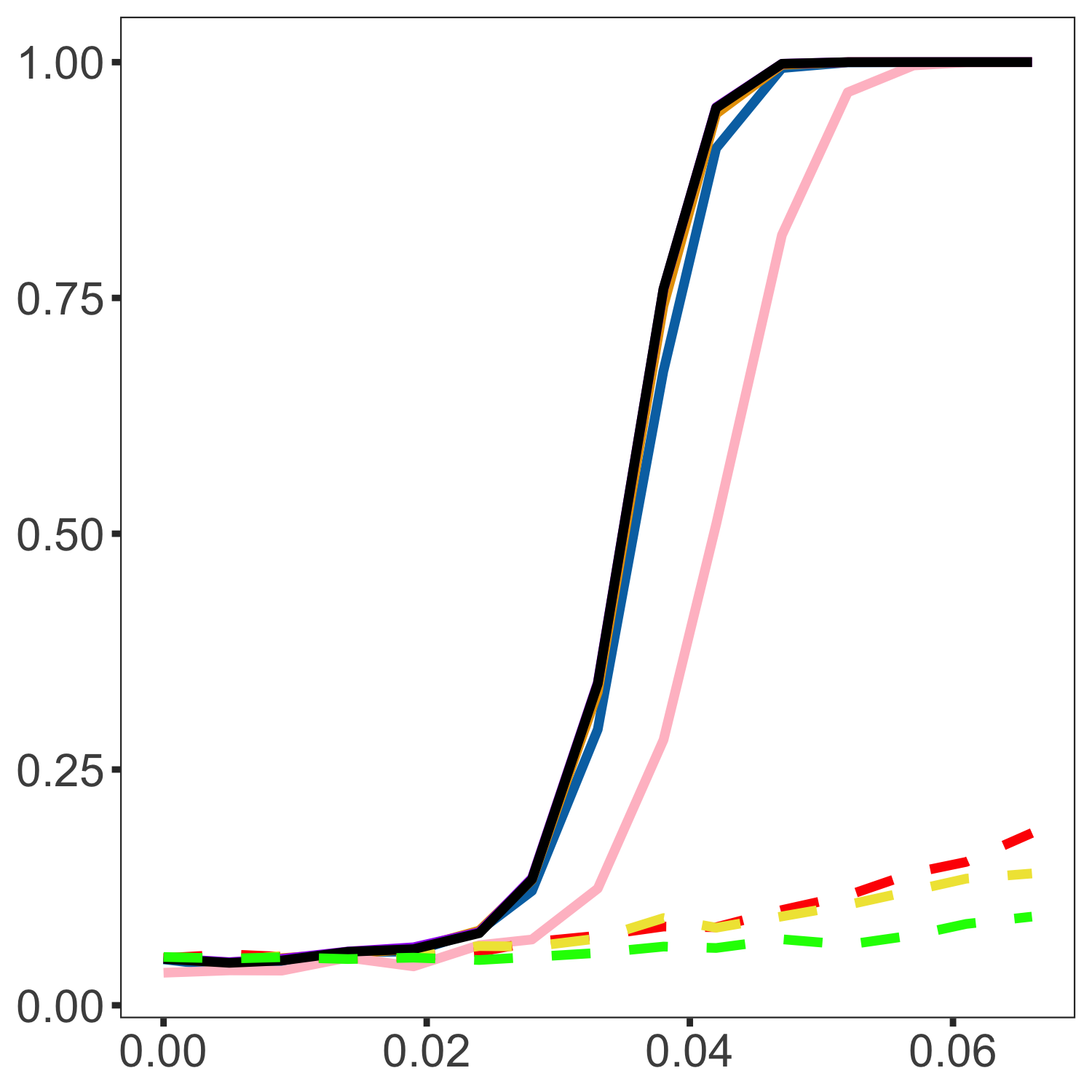}
    \end{subfigure}
    \begin{subfigure}[t]{0.23\textwidth}
        \centering
        \includegraphics[width=\linewidth, height=0.7\linewidth]{./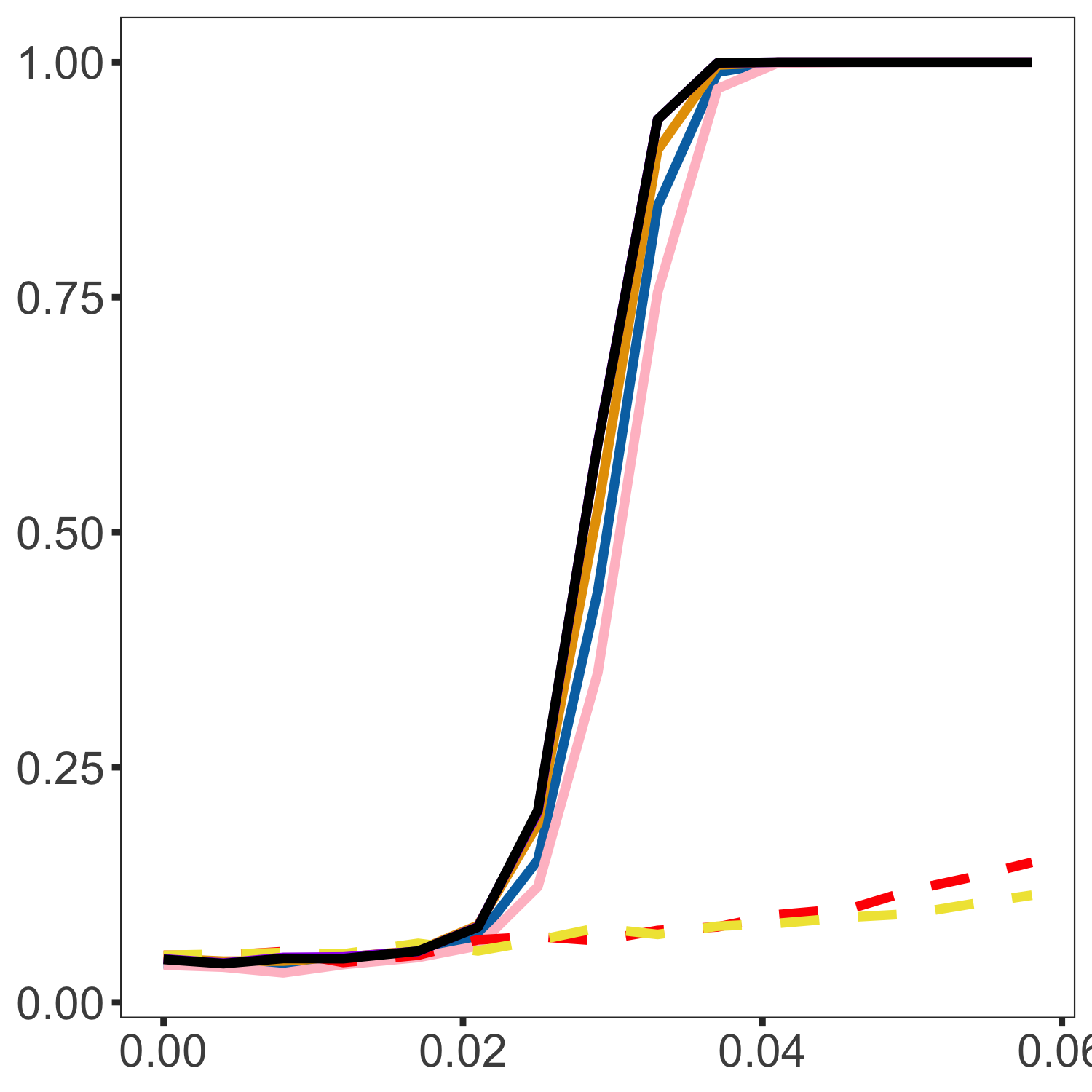}
    \end{subfigure}
    \vfill
    \begin{subfigure}[t]{0.23\textwidth}
        \centering
         \includegraphics[width=\linewidth, height=0.7\linewidth]{./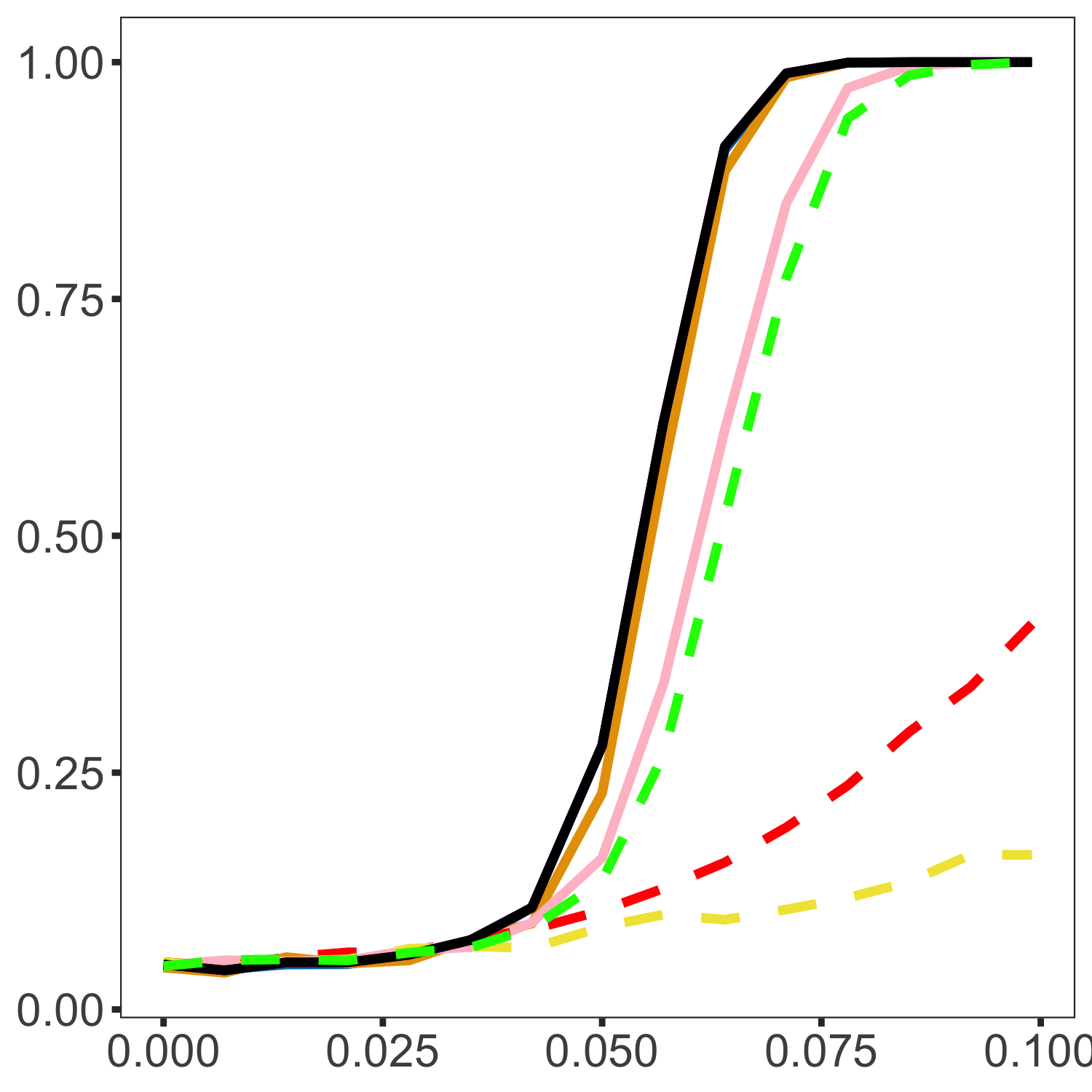}
    \end{subfigure}%
    \begin{subfigure}[t]{0.23\textwidth}
        \centering
        \includegraphics[width=\linewidth, height=0.7\linewidth]{./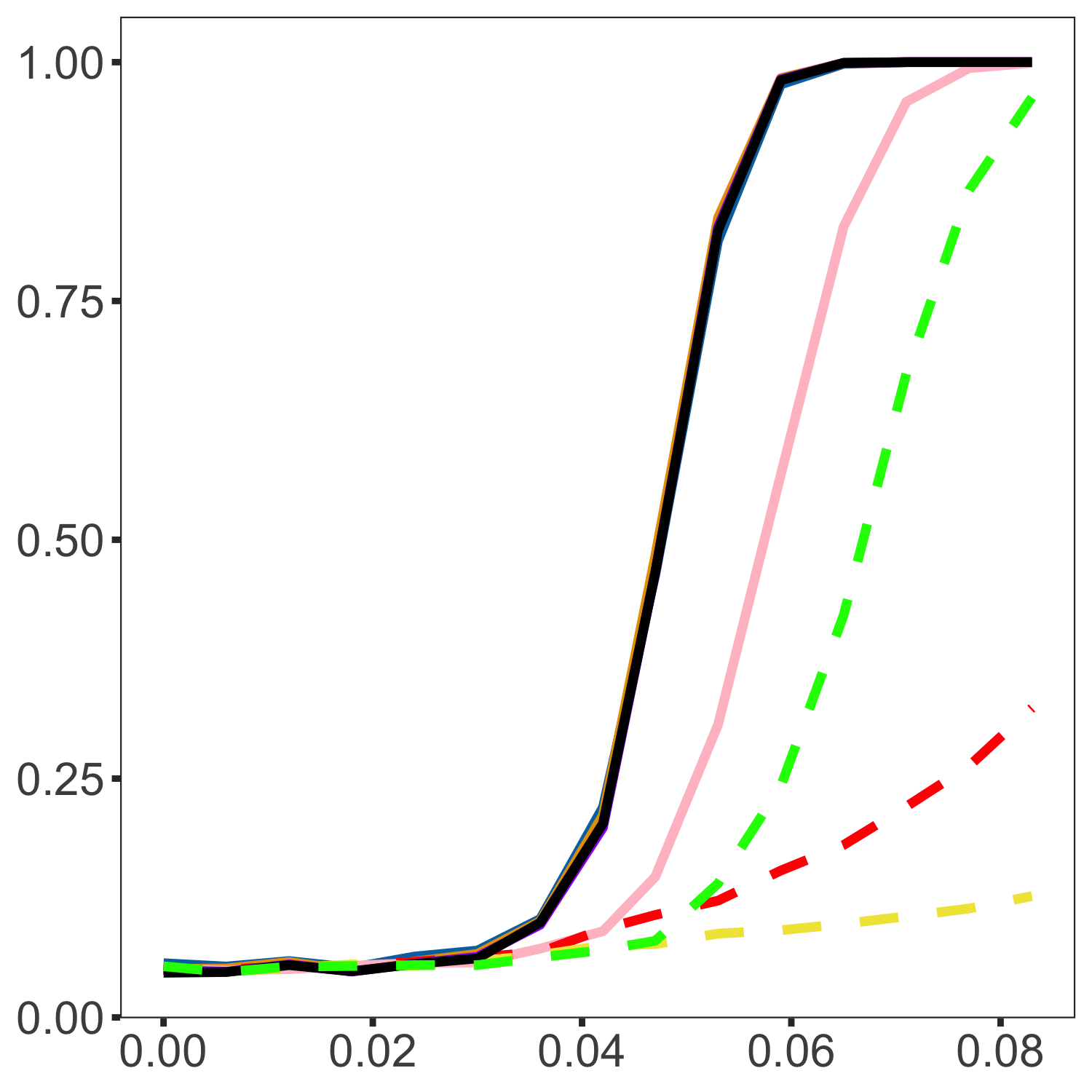}
    \end{subfigure}
     \begin{subfigure}[t]{0.23\textwidth}
        \centering
        \includegraphics[width=\linewidth, height=0.7\linewidth]{./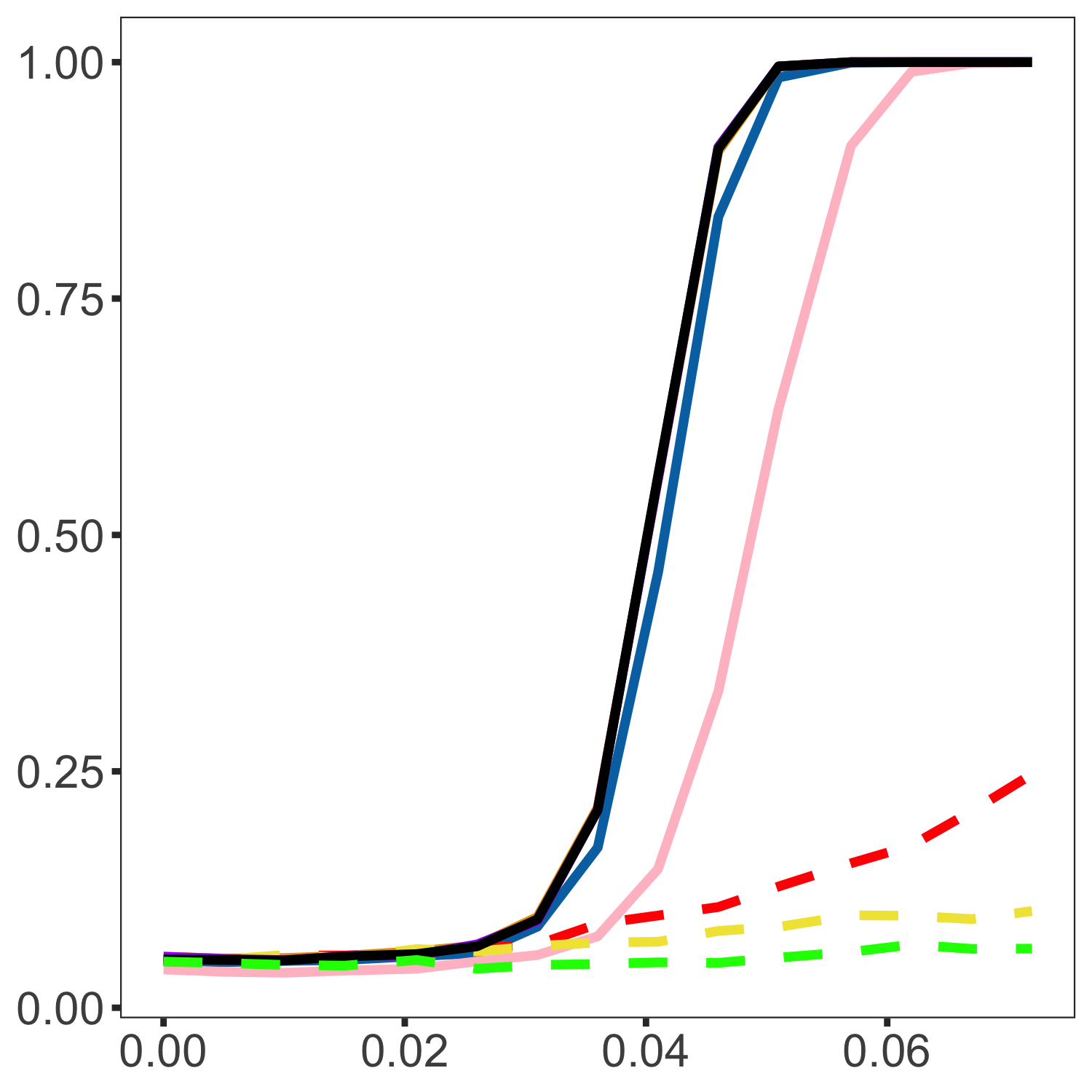}
    \end{subfigure}
    \begin{subfigure}[t]{0.23\textwidth}
        \centering
        \includegraphics[width=\linewidth, height=0.7\linewidth]{./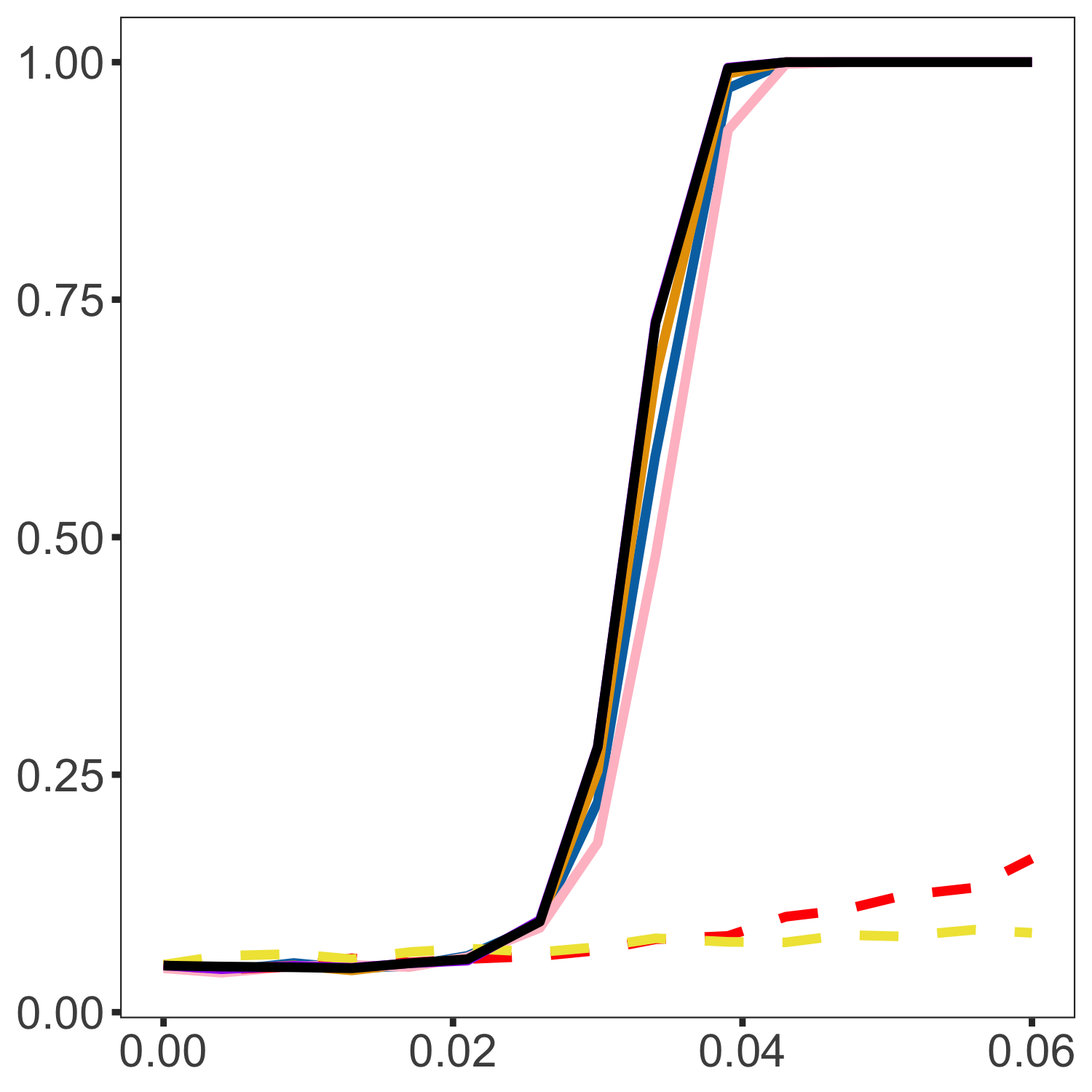}
    \end{subfigure}
   \caption{Size-adjusted empirical power when \(\Sigma\) is Factor. Columns (left to right) correspond to \(\hat{\gamma}_2 = 0.3, 0.5, 0.9, 2\); rows (top to bottom) correspond to \(n_1 = 50, 100, 250\). Solid curves: blue (\(\lambda = 0.5\)), orange (\(\lambda = 1\)), black (\(\lambda=\hat{\lambda}_{I_p}\)), purple (\(\lambda=\hat{\lambda}_{\Sigma_p}\)), and pink (\(\lambda=\hat{\lambda}_*\)). Dashed curves: red (Proj-LRT), yellow (Ridge-LRT), and green (\cite{han2016tracy}, \(\lambda=0\)), the latter available only when \(p<n_1+n_2\).}
    \label{fig:emp_power_Sigma4}
\end{figure}

\subsection{Results under high-rank alternatives}\label{subsec:full_rank_simulation}

In this section, we examine the performance of the proposed methods under a class of high-rank alternatives and compare them with likelihood-based approaches, namely Ridge-LRT and Proj-LRT.

Specifically, while all other model configurations remain the same as those in Section~\ref{sec:simulation_settings}, we consider the following specification of $B$ under the alternative:
\begin{itemize}
    \item[(vi)] $X$ is an $n_1 \times (n_1+n_2)$ matrix whose entries are i.i.d.\ $N(0,1)$. The coefficient matrix $B$ ($p\times n_1$) is assumed to have rank $(n_1/2)$, where its first $(n_1/2)$ columns are drawn from $N(0,\zeta^2 D)$ and the remaining columns are set to zero. Here $D=\operatorname{diag}(d_1^2,\dots,d_p^2)$, where $d_j$ are evenly spaced between $2$ and $10$, and $D$ is further normalized so that $\tr(D)=p$. The parameter $\zeta$ controls the signal strength. We test $H_0:B=0$ against $H_a:B\neq0$, that is, $C=I_{n_1}$.
\end{itemize}

The empirical power of all competing methods under this configuration is shown in Figures~\ref{fig:FR_emp_power_Sigma1}--\ref{fig:FR_emp_power_Sigma4}. The results indicate that, although less efficient than likelihood-based approaches as expected, the largest-root tests still retain considerable power under high-rank alternatives.

\begin{figure}[htbp]
    \centering
    \begin{subfigure}[t]{0.23\textwidth}
        \centering
         \includegraphics[width=\linewidth, height=0.7\linewidth]{./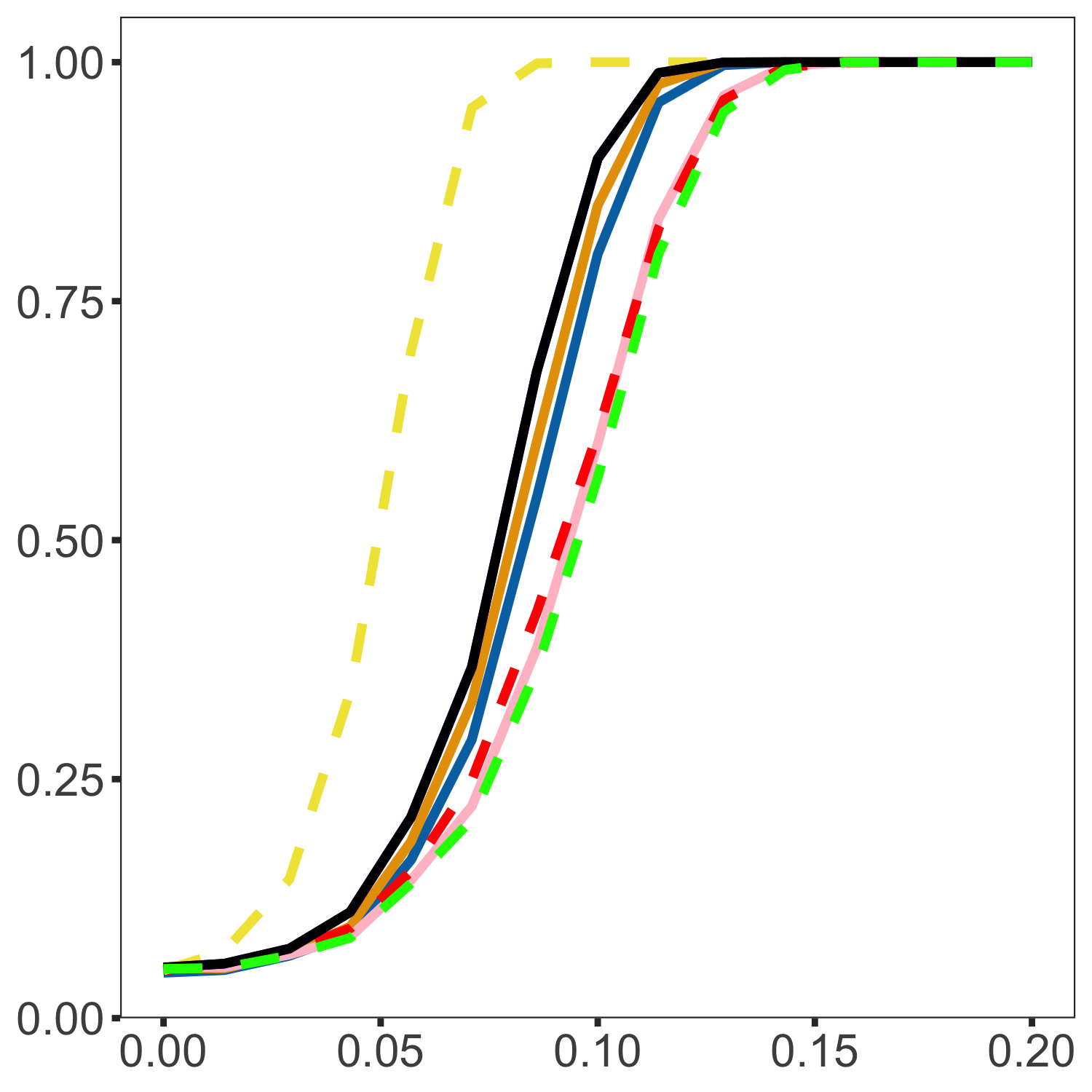}
    \end{subfigure}%
    \begin{subfigure}[t]{0.23\textwidth}
        \centering
        \includegraphics[width=\linewidth, height=0.7\linewidth]{./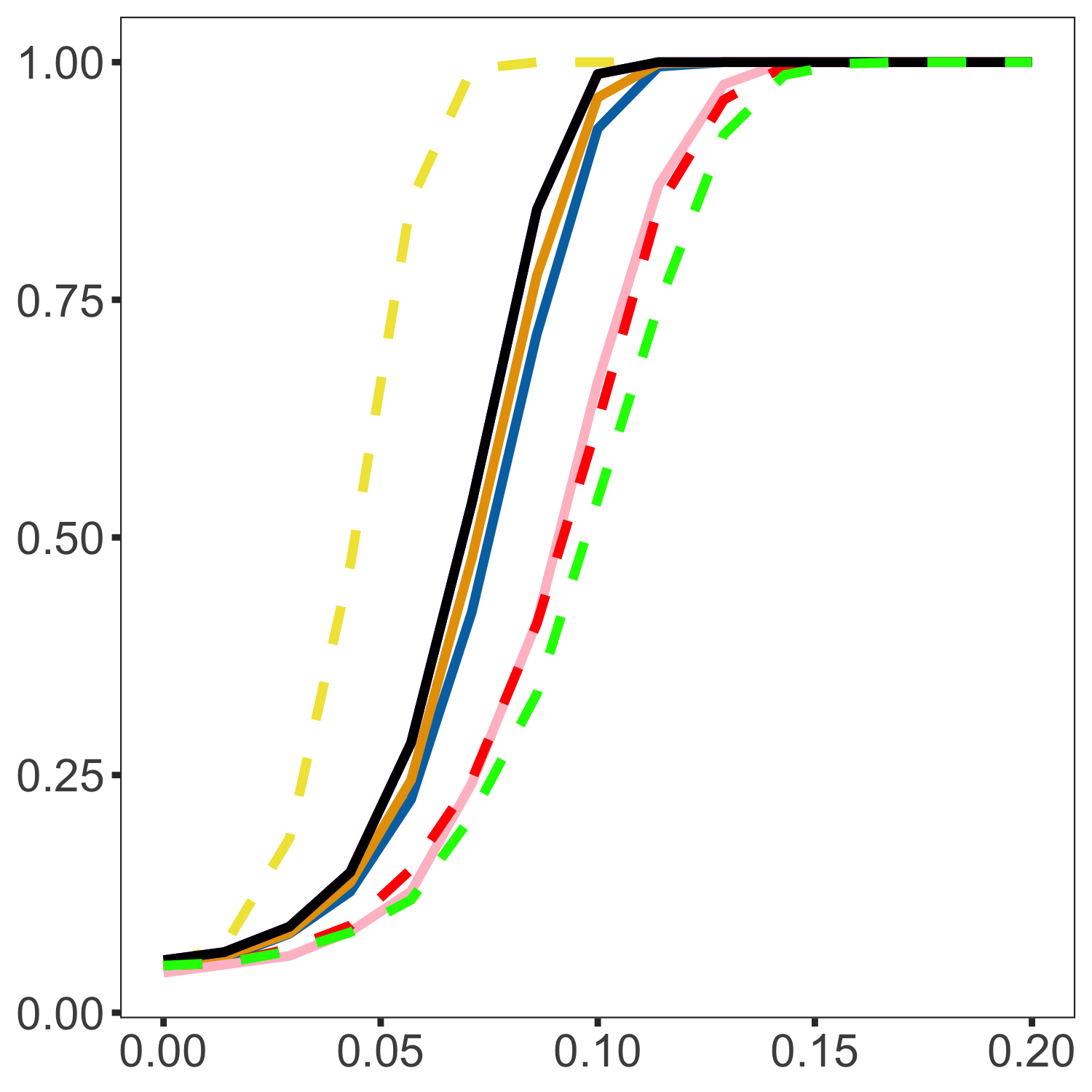}
    \end{subfigure}
     \begin{subfigure}[t]{0.23\textwidth}
        \centering
        \includegraphics[width=\linewidth, height=0.7\linewidth]{./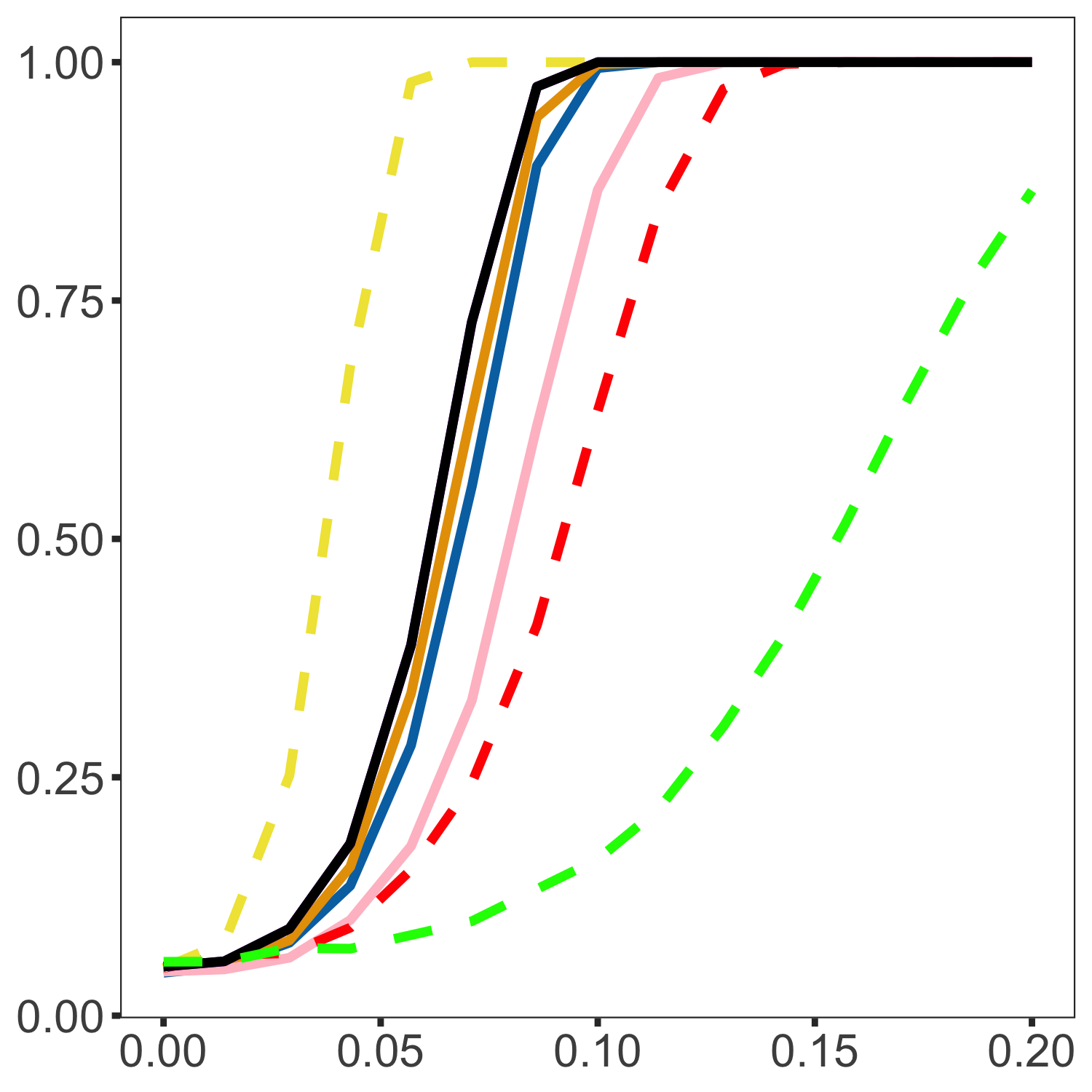}
    \end{subfigure}
    \begin{subfigure}[t]{0.23\textwidth}
        \centering
        \includegraphics[width=\linewidth, height=0.7\linewidth]{./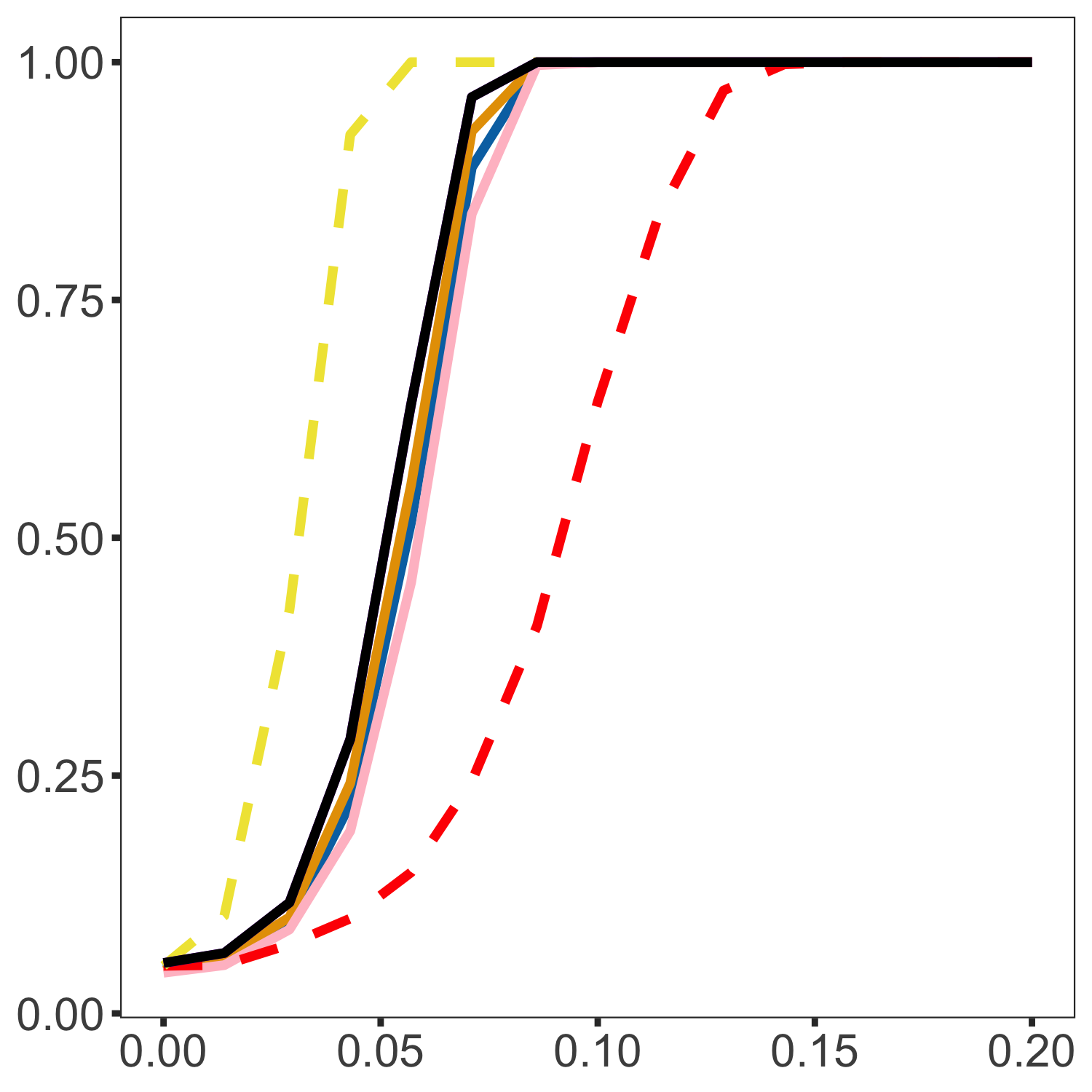}
    \end{subfigure}
    \vfill
    \begin{subfigure}[t]{0.23\textwidth}
        \centering
         \includegraphics[width=\linewidth, height=0.7\linewidth]{./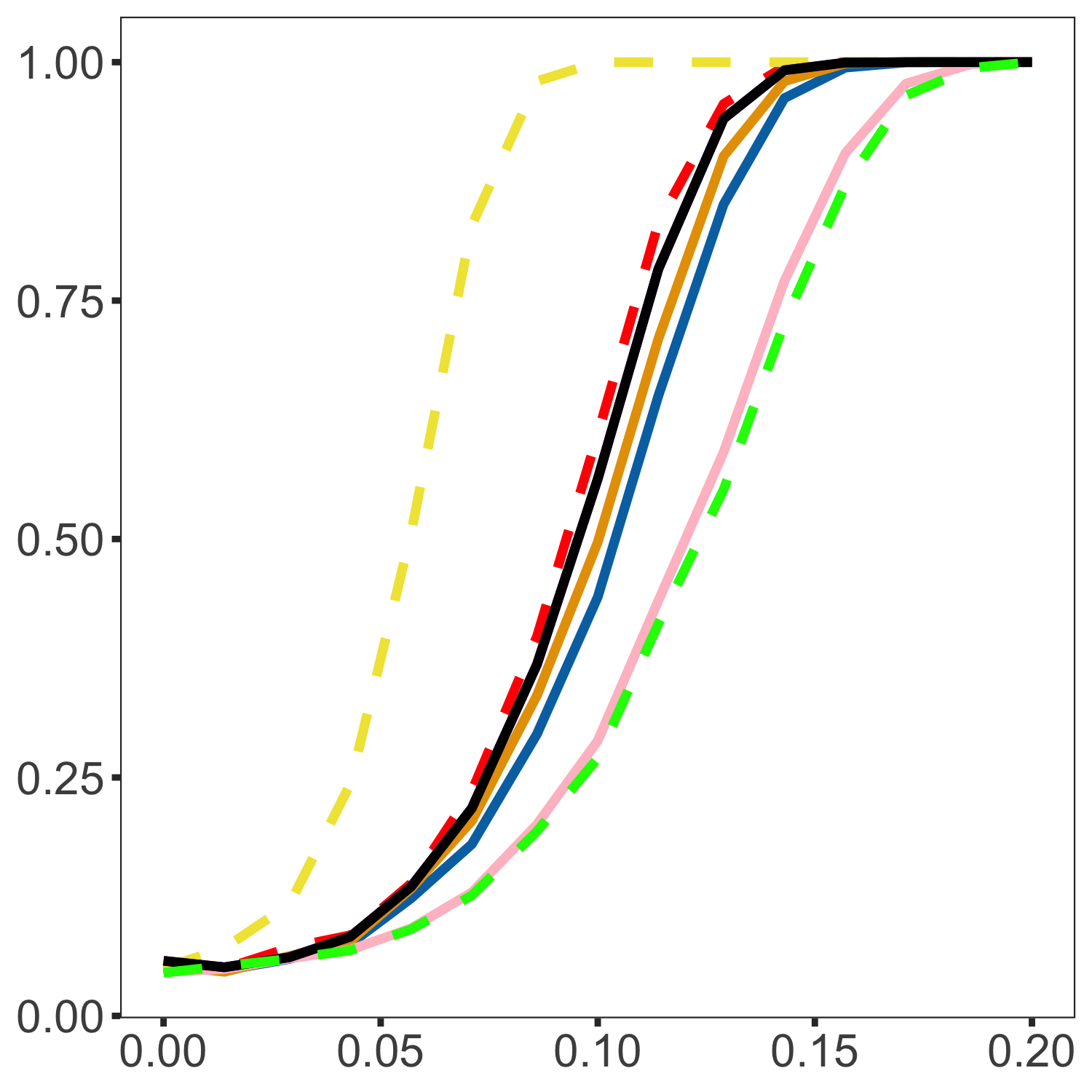}
    \end{subfigure}%
    \begin{subfigure}[t]{0.23\textwidth}
        \centering
        \includegraphics[width=\linewidth, height=0.7\linewidth]{./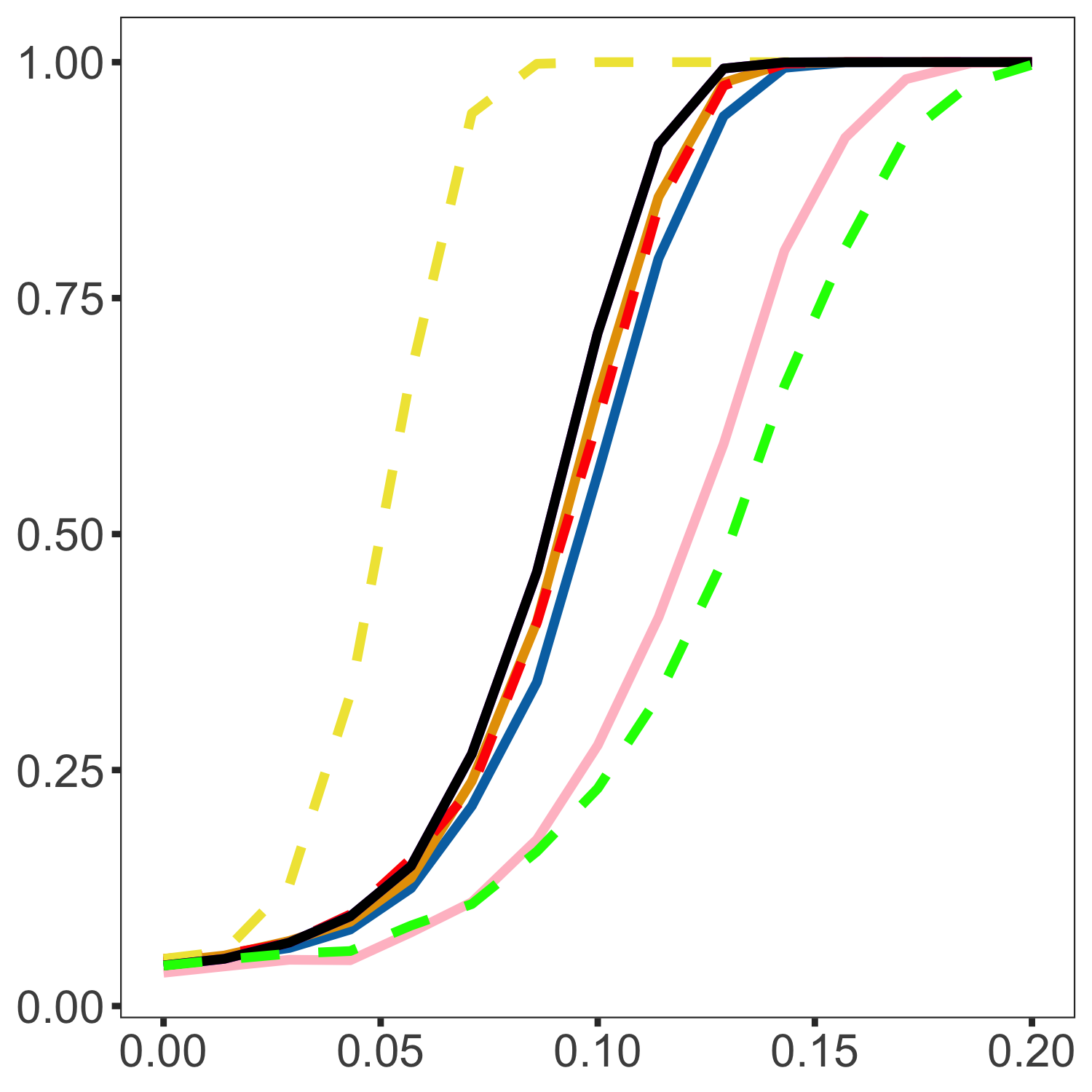}
    \end{subfigure}
     \begin{subfigure}[t]{0.23\textwidth}
        \centering
        \includegraphics[width=\linewidth, height=0.7\linewidth]{./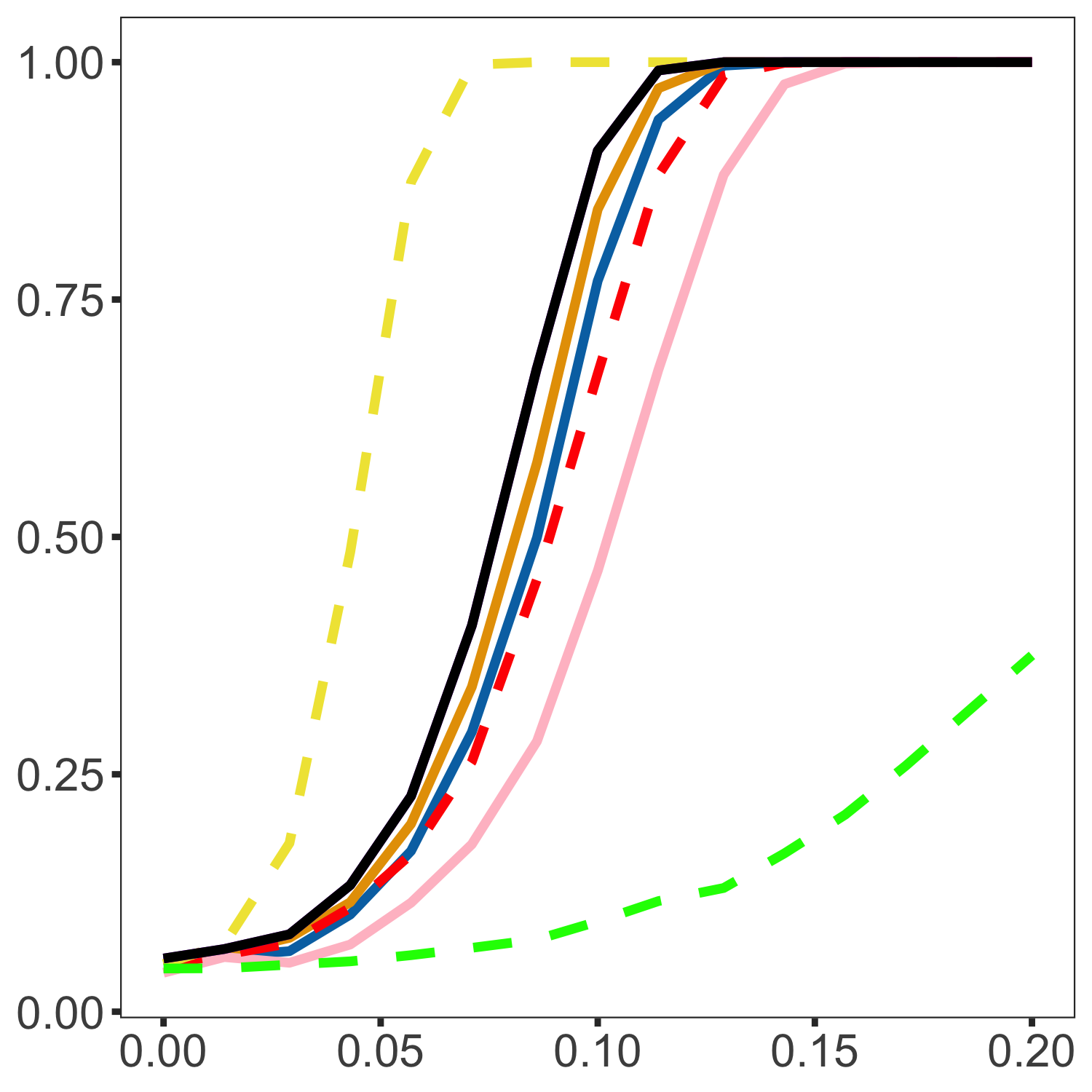}
    \end{subfigure}
    \begin{subfigure}[t]{0.23\textwidth}
        \centering
        \includegraphics[width=\linewidth, height=0.7\linewidth]{./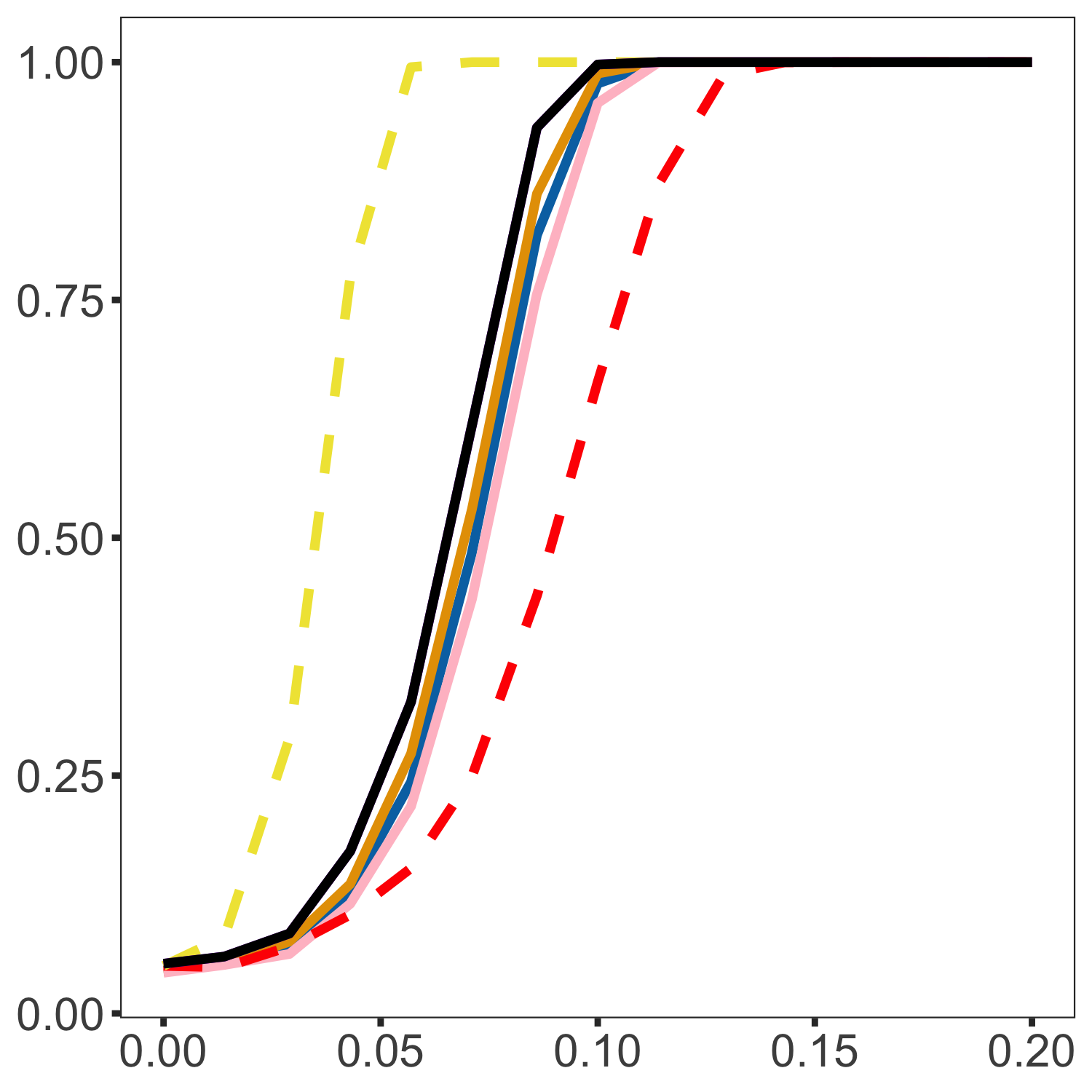}
    \end{subfigure}
    \vfill
    \begin{subfigure}[t]{0.23\textwidth}
        \centering
         \includegraphics[width=\linewidth, height=0.7\linewidth]{./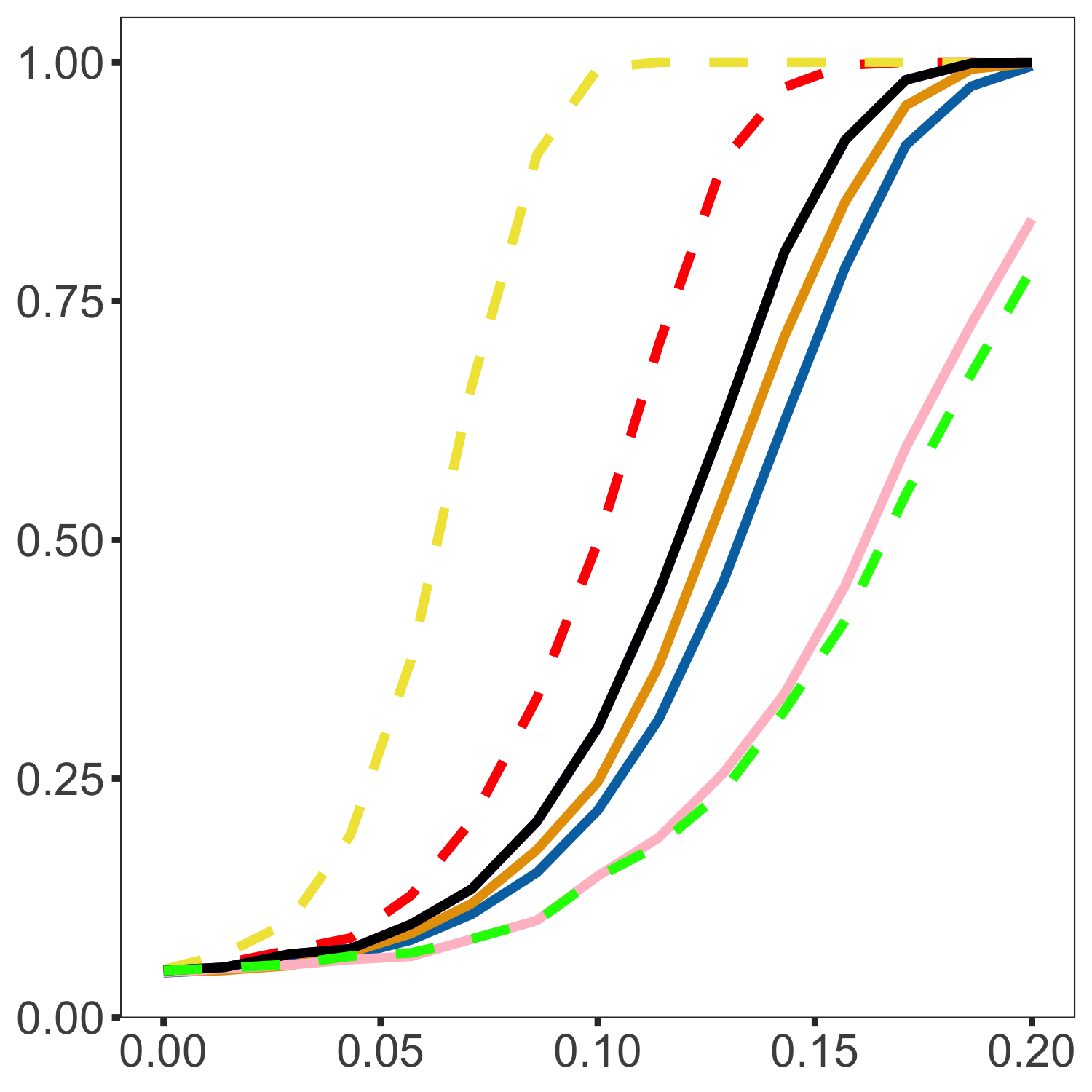}
    \end{subfigure}%
    \begin{subfigure}[t]{0.23\textwidth}
        \centering
        \includegraphics[width=\linewidth, height=0.7\linewidth]{./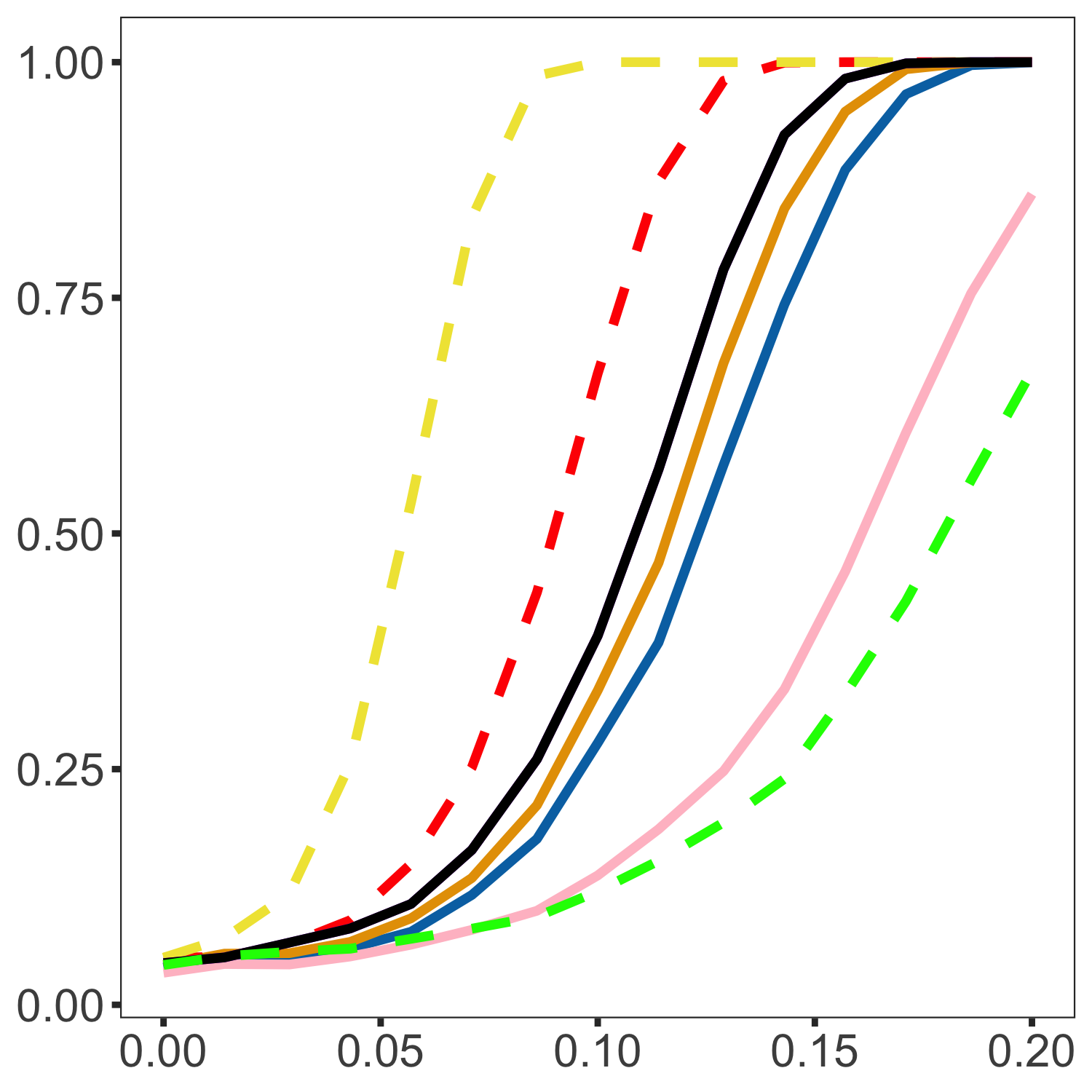}
    \end{subfigure}
     \begin{subfigure}[t]{0.23\textwidth}
        \centering
        \includegraphics[width=\linewidth, height=0.7\linewidth]{./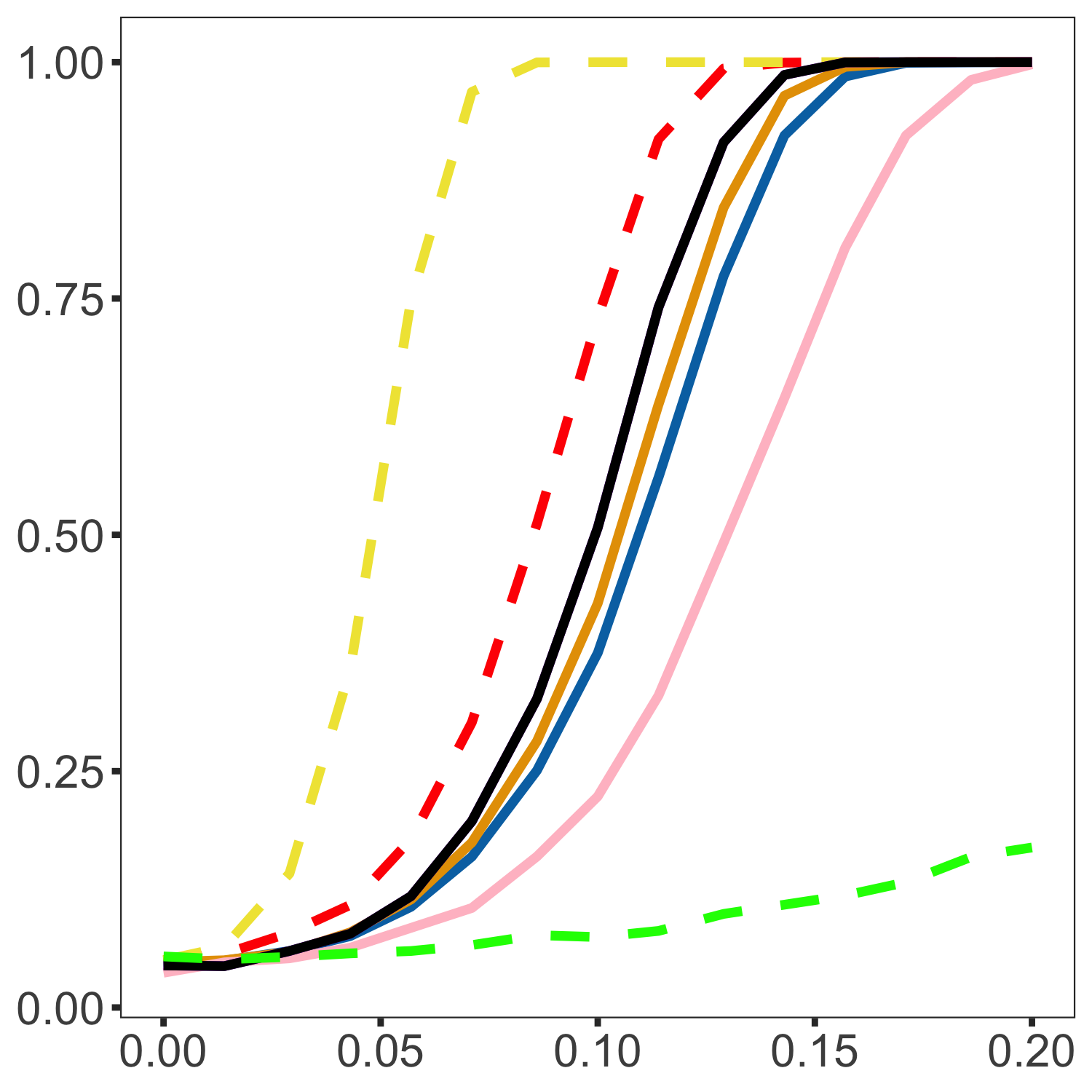}
    \end{subfigure}
    \begin{subfigure}[t]{0.23\textwidth}
        \centering
        \includegraphics[width=\linewidth, height=0.7\linewidth]{./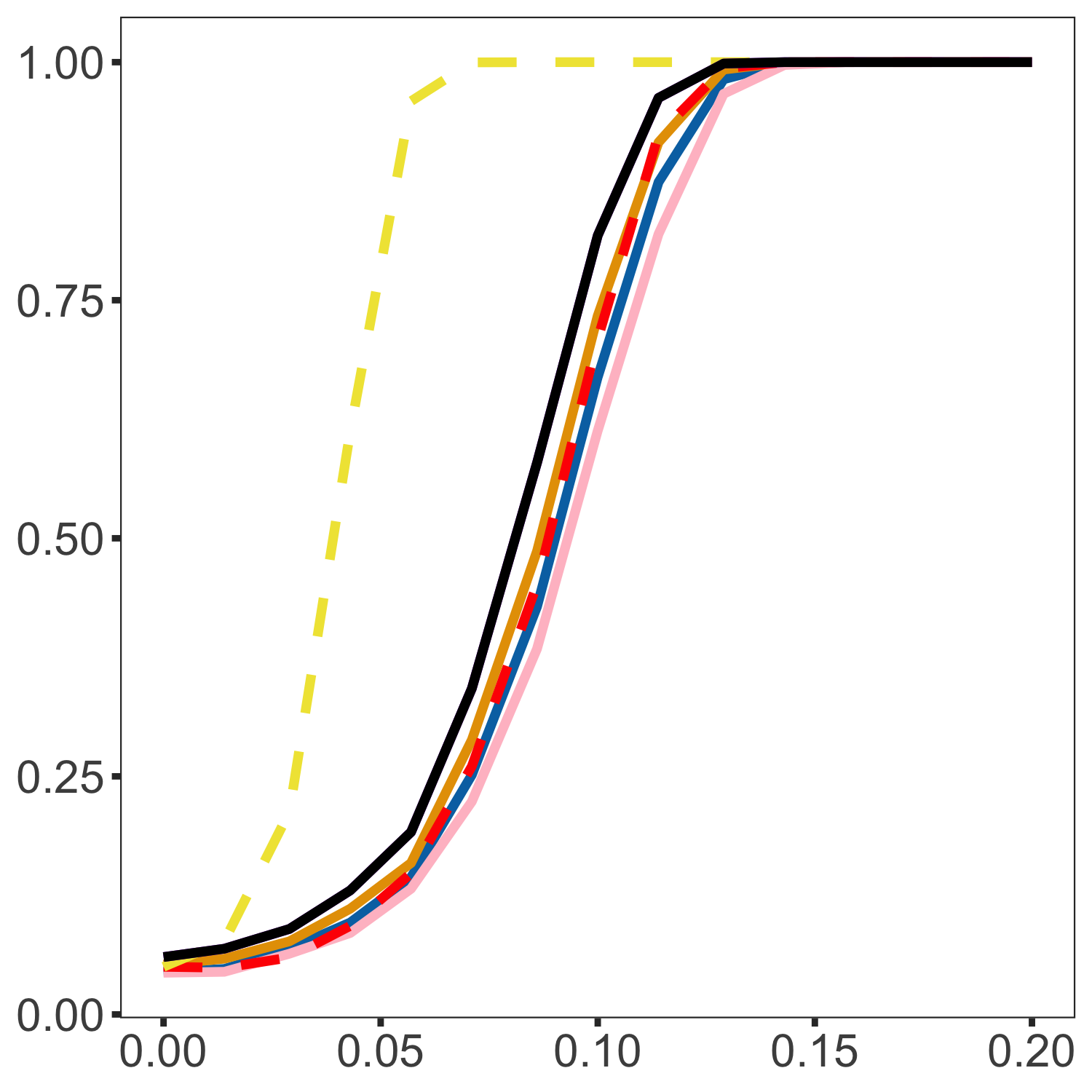}
    \end{subfigure}
   \caption{Size-adjusted empirical power under high-rank alternatives when \(\Sigma\) is Identity. Columns (left to right) correspond to \(\hat{\gamma}_2 = 0.3, 0.5, 0.9, 2\); rows (top to bottom) correspond to \(n_1 = 50, 100, 250\). Solid curves: blue (\(\lambda = 0.5\)), orange (\(\lambda = 1\)), black (\(\lambda=\hat{\lambda}_{I_p}\)), purple (\(\lambda=\hat{\lambda}_{\Sigma_p}\)), and pink (\(\lambda=\hat{\lambda}_*\)). Dashed curves: red (Proj-LRT), yellow (Ridge-LRT), and green (\cite{han2016tracy}, \(\lambda=0\)), the latter available only when \(p<n_1+n_2\).}
    \label{fig:FR_emp_power_Sigma1}
\end{figure}

\begin{figure}[htbp]
    \centering
    \begin{subfigure}[t]{0.23\textwidth}
        \centering
         \includegraphics[width=\linewidth, height=0.7\linewidth]{./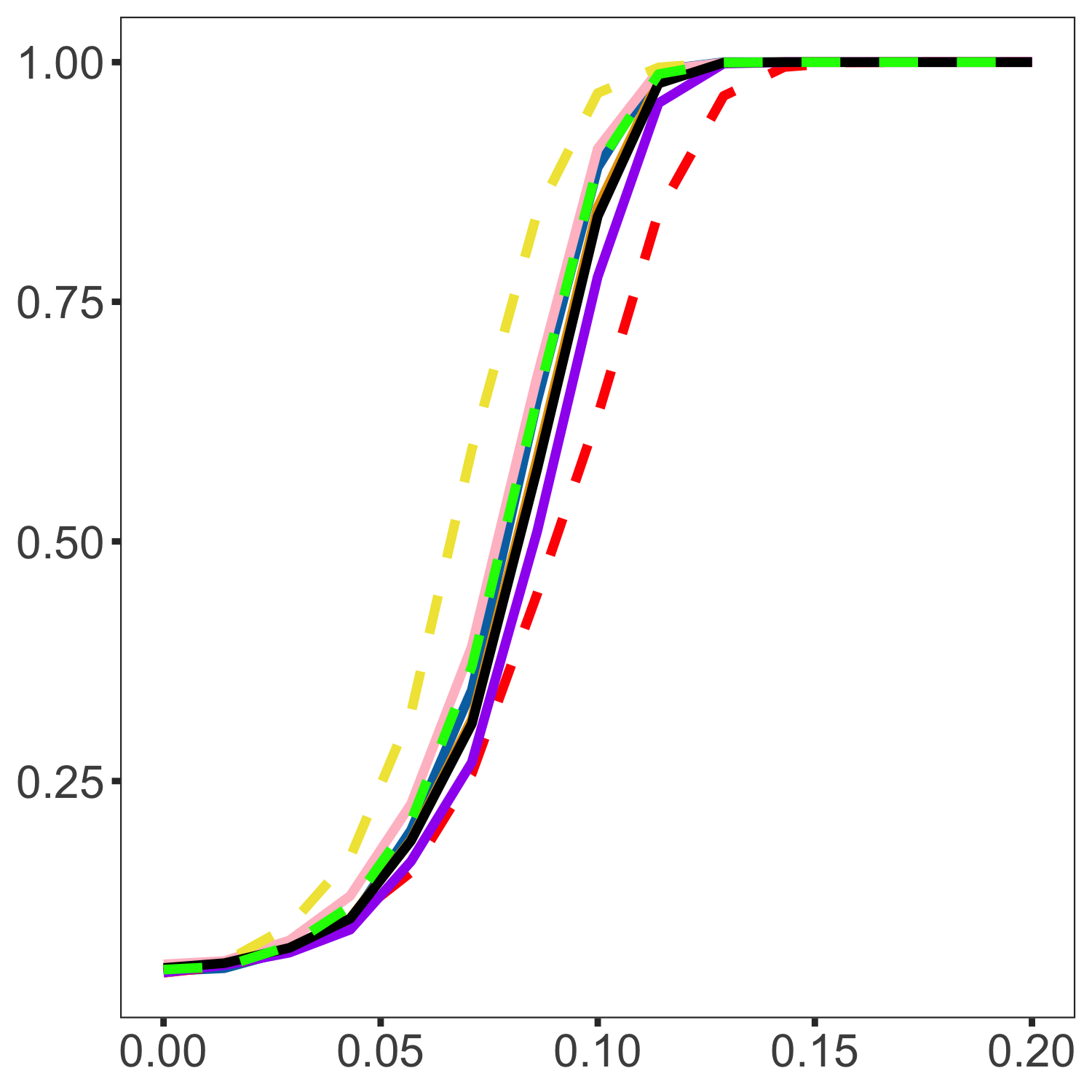}
    \end{subfigure}%
    \begin{subfigure}[t]{0.23\textwidth}
        \centering
        \includegraphics[width=\linewidth, height=0.7\linewidth]{./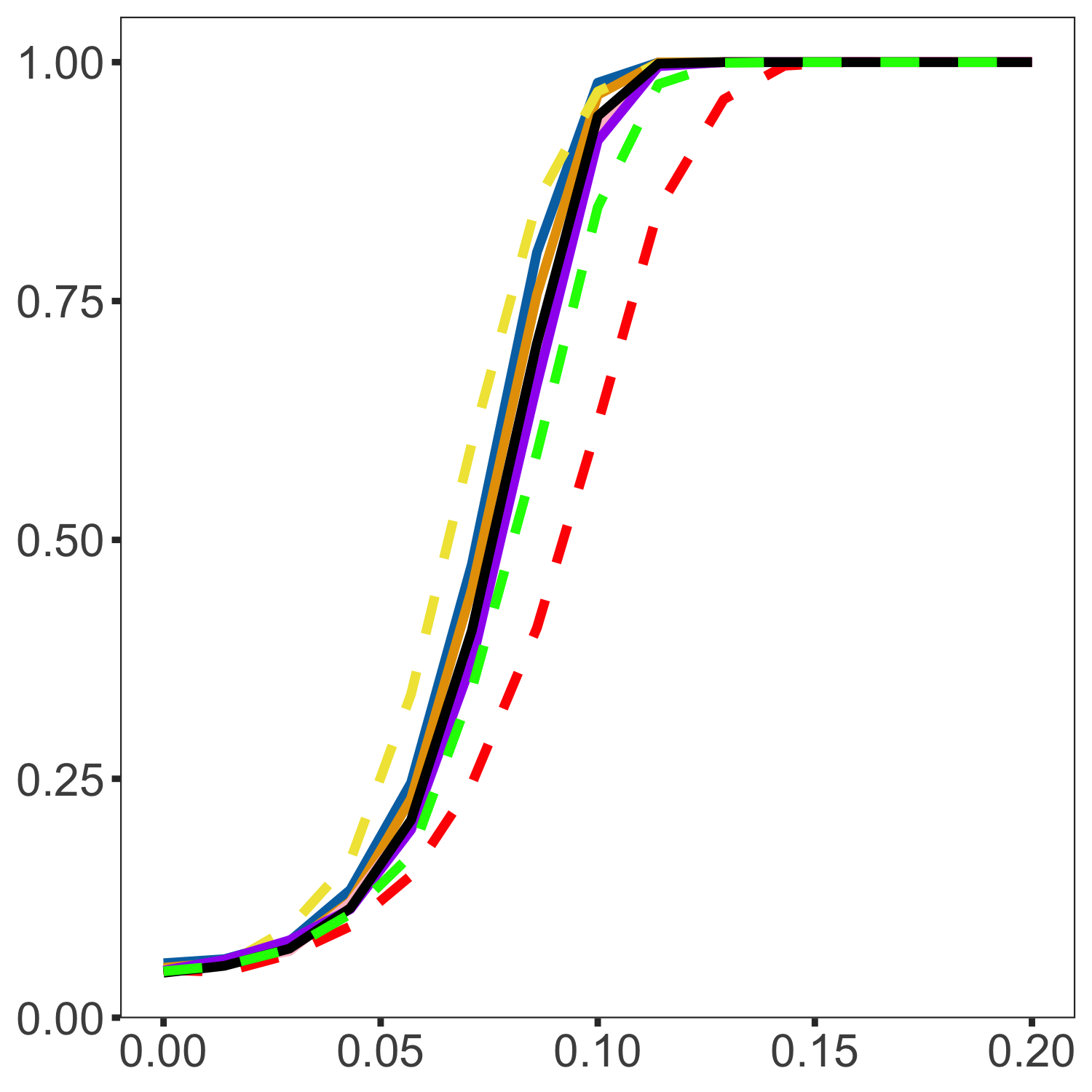}
    \end{subfigure}
     \begin{subfigure}[t]{0.23\textwidth}
        \centering
        \includegraphics[width=\linewidth, height=0.7\linewidth]{./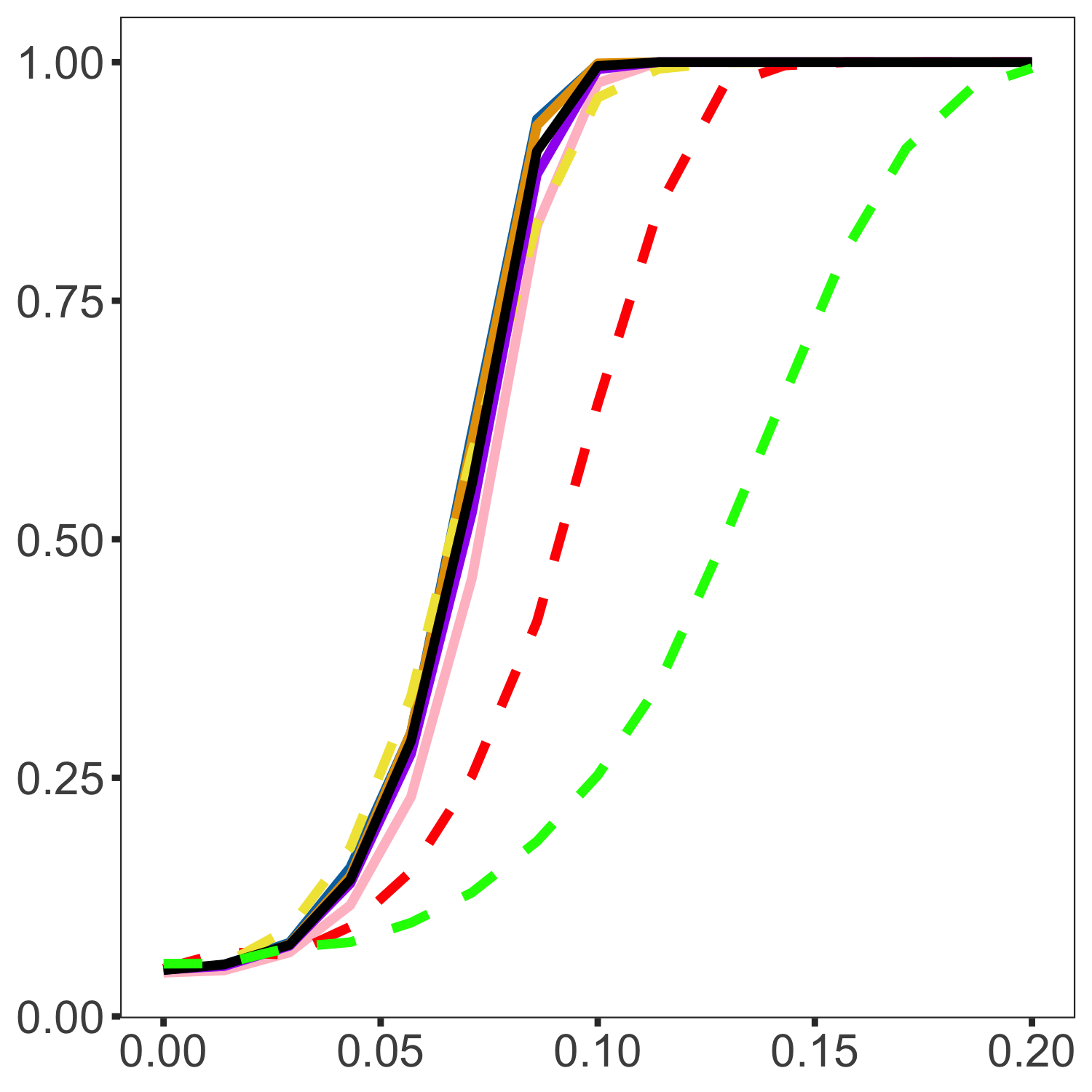}
    \end{subfigure}
    \begin{subfigure}[t]{0.23\textwidth}
        \centering
        \includegraphics[width=\linewidth, height=0.7\linewidth]{./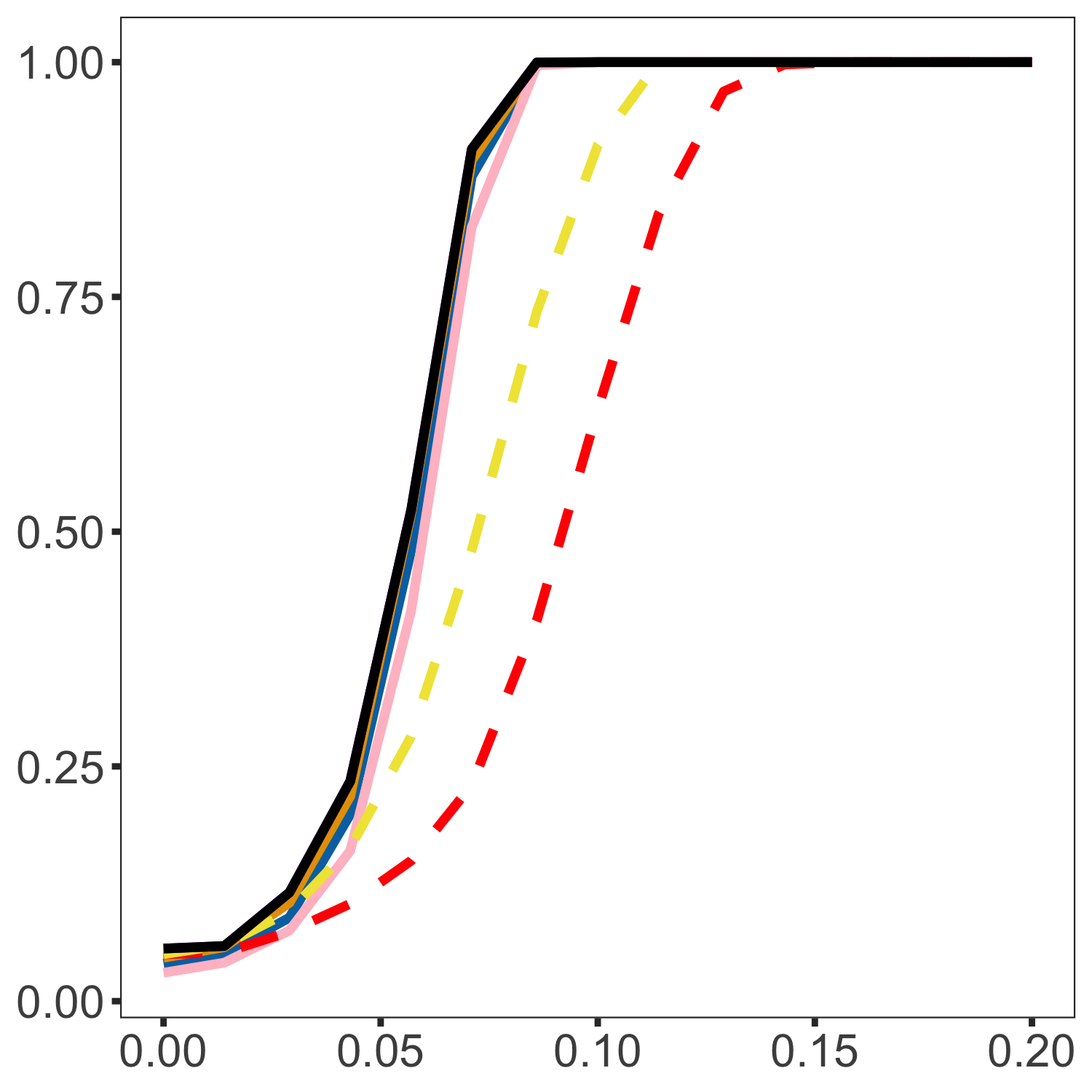}
    \end{subfigure}
    \vfill
    \begin{subfigure}[t]{0.23\textwidth}
        \centering
         \includegraphics[width=\linewidth, height=0.7\linewidth]{./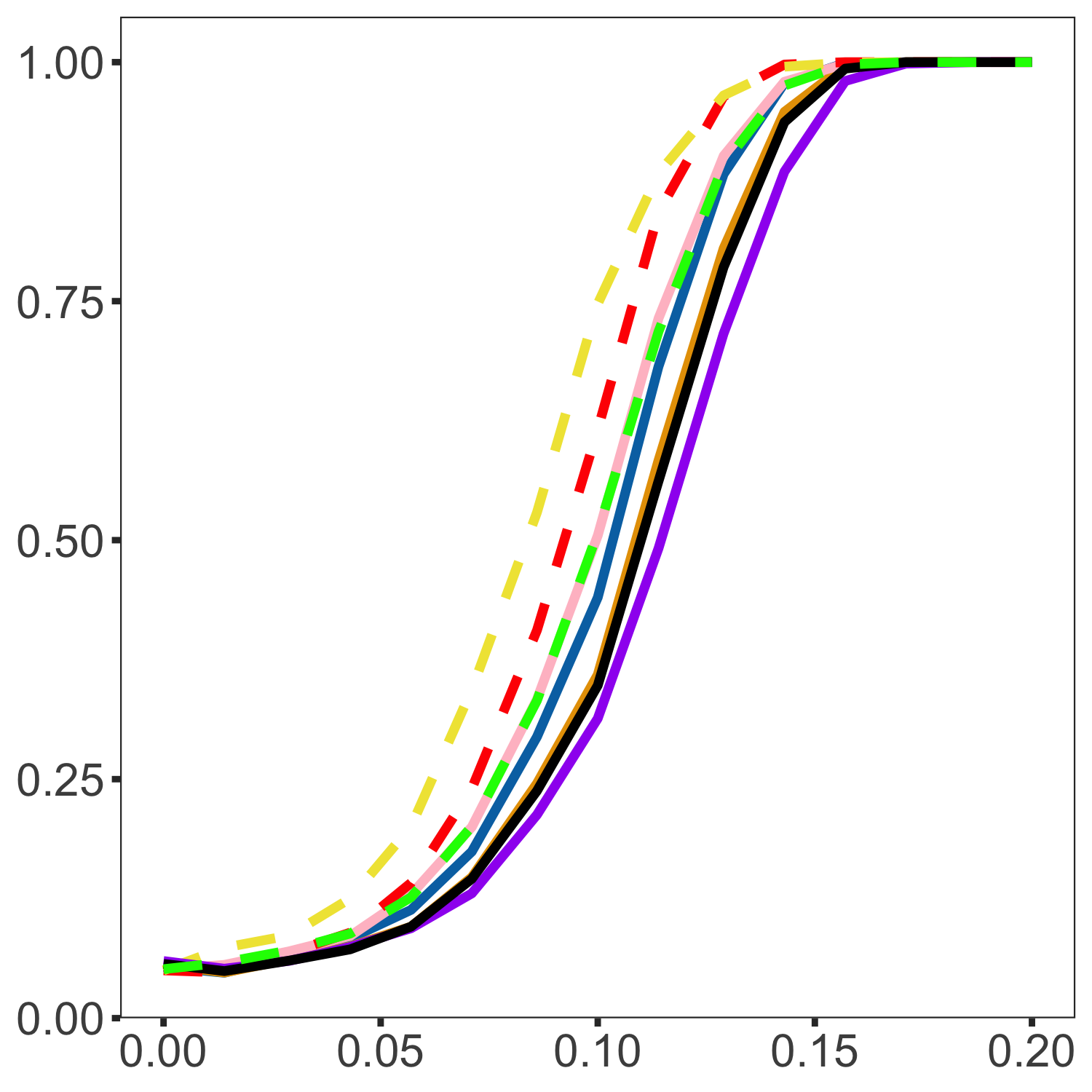}
    \end{subfigure}%
    \begin{subfigure}[t]{0.23\textwidth}
        \centering
        \includegraphics[width=\linewidth, height=0.7\linewidth]{./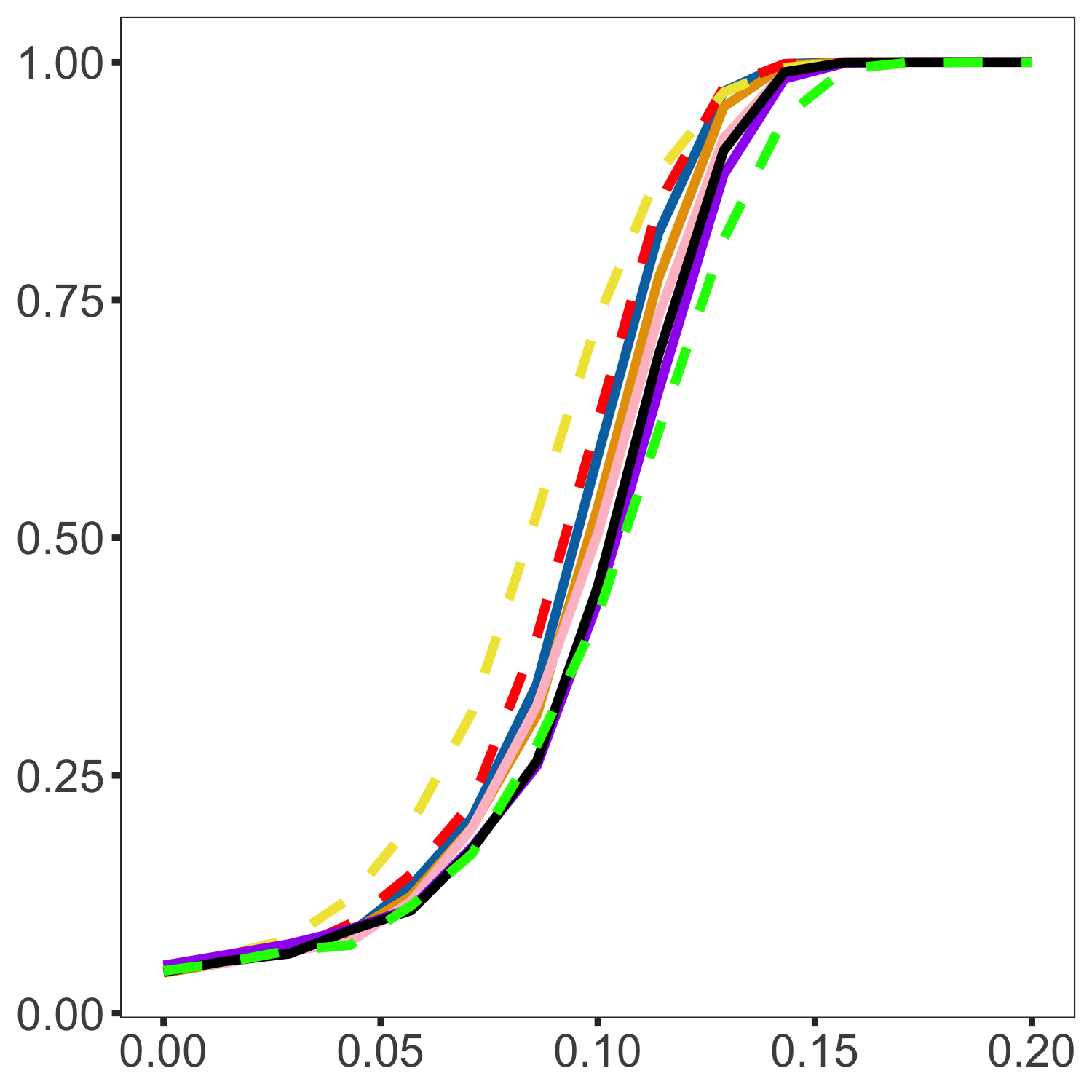}
    \end{subfigure}
     \begin{subfigure}[t]{0.23\textwidth}
        \centering
        \includegraphics[width=\linewidth, height=0.7\linewidth]{./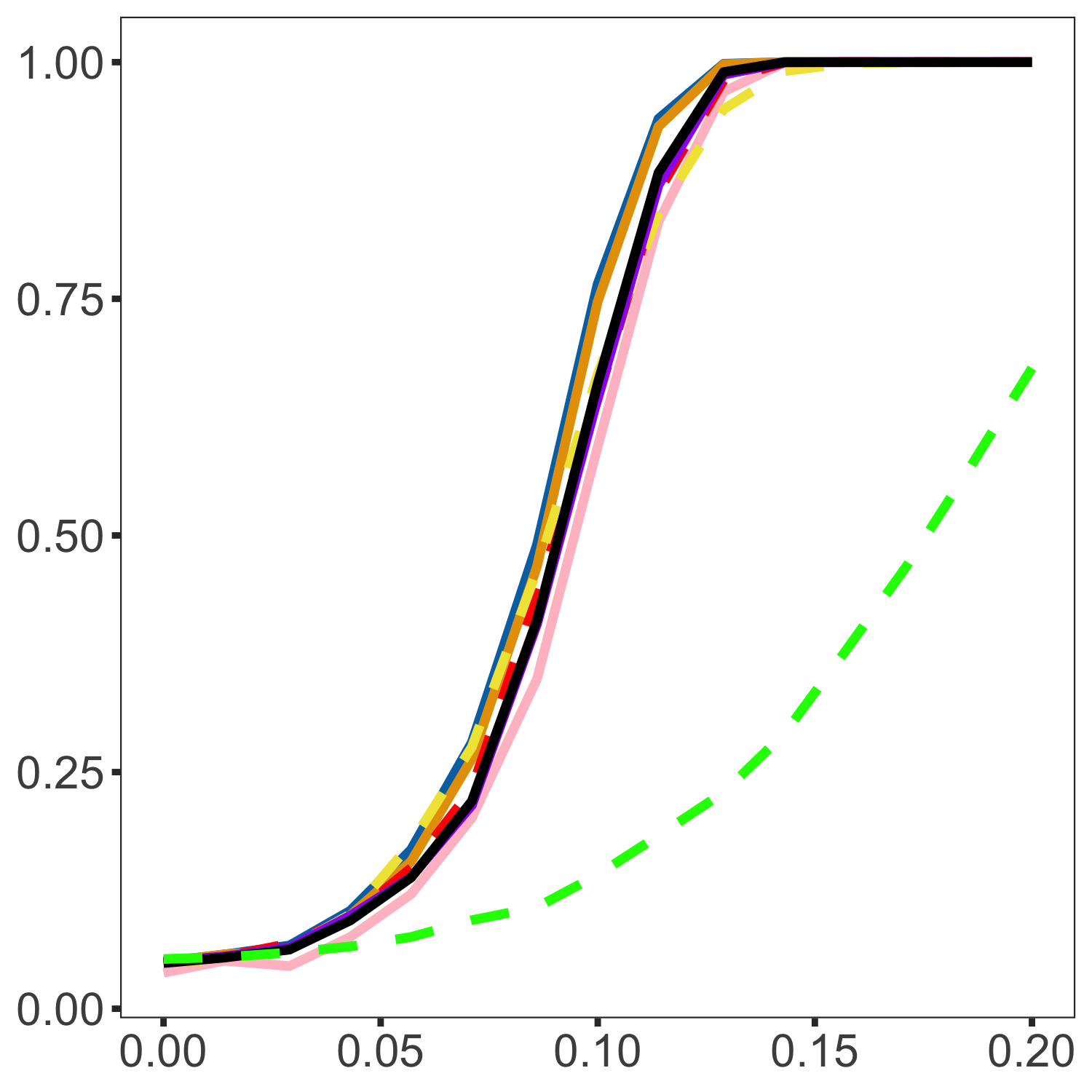}
    \end{subfigure}
    \begin{subfigure}[t]{0.23\textwidth}
        \centering
        \includegraphics[width=\linewidth, height=0.7\linewidth]{./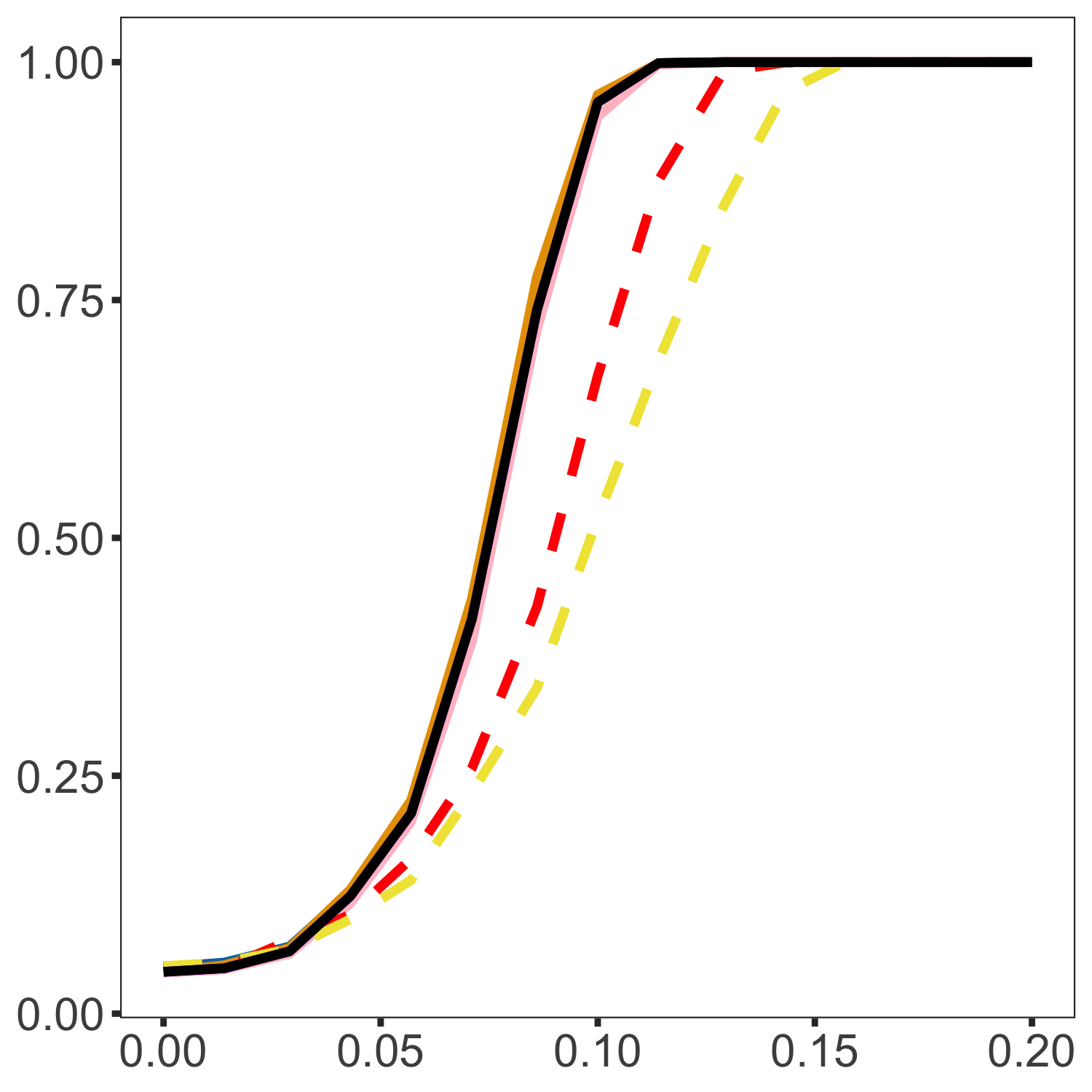}
    \end{subfigure}
    \vfill
    \begin{subfigure}[t]{0.23\textwidth}
        \centering
         \includegraphics[width=\linewidth, height=0.7\linewidth]{./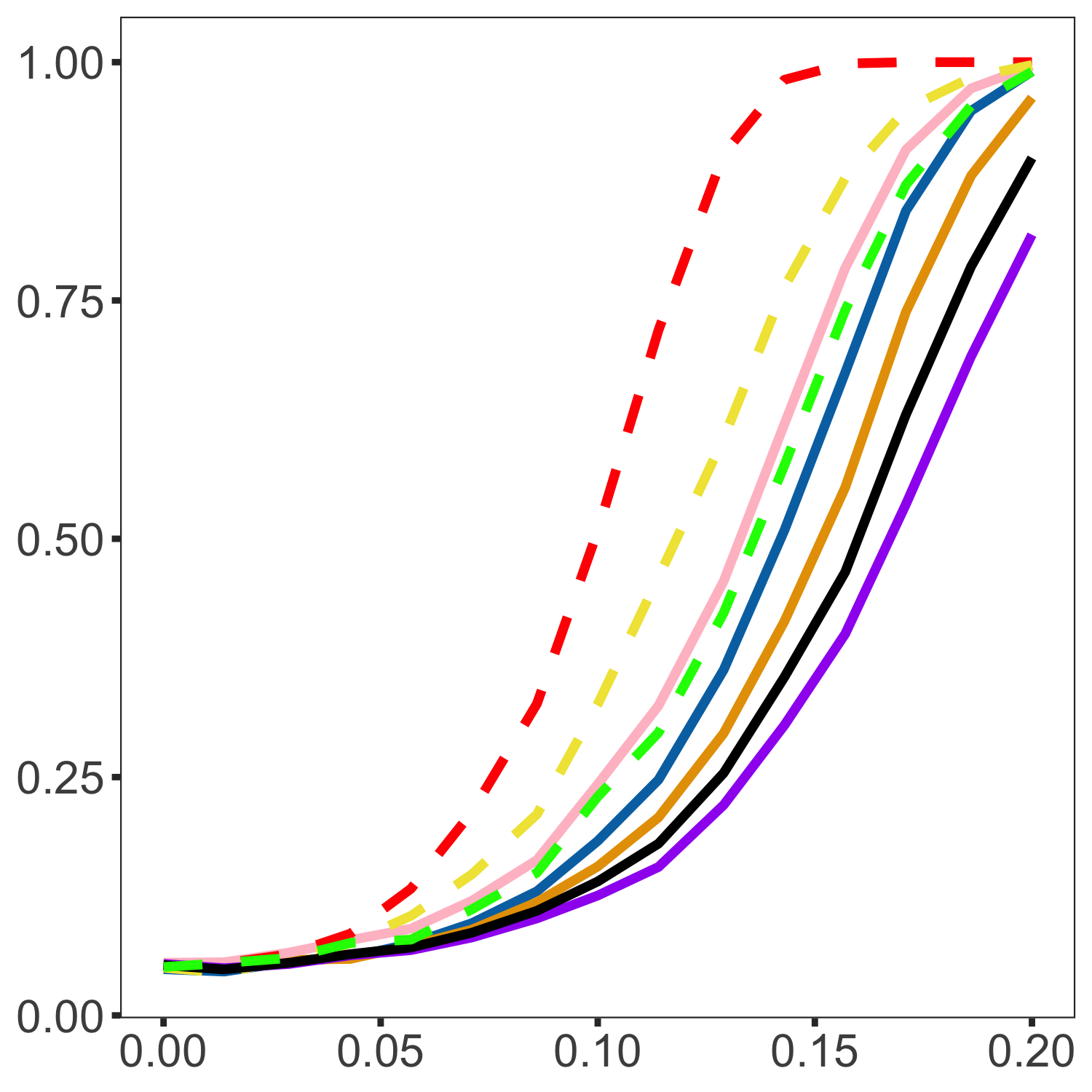}
    \end{subfigure}%
    \begin{subfigure}[t]{0.23\textwidth}
        \centering
        \includegraphics[width=\linewidth, height=0.7\linewidth]{./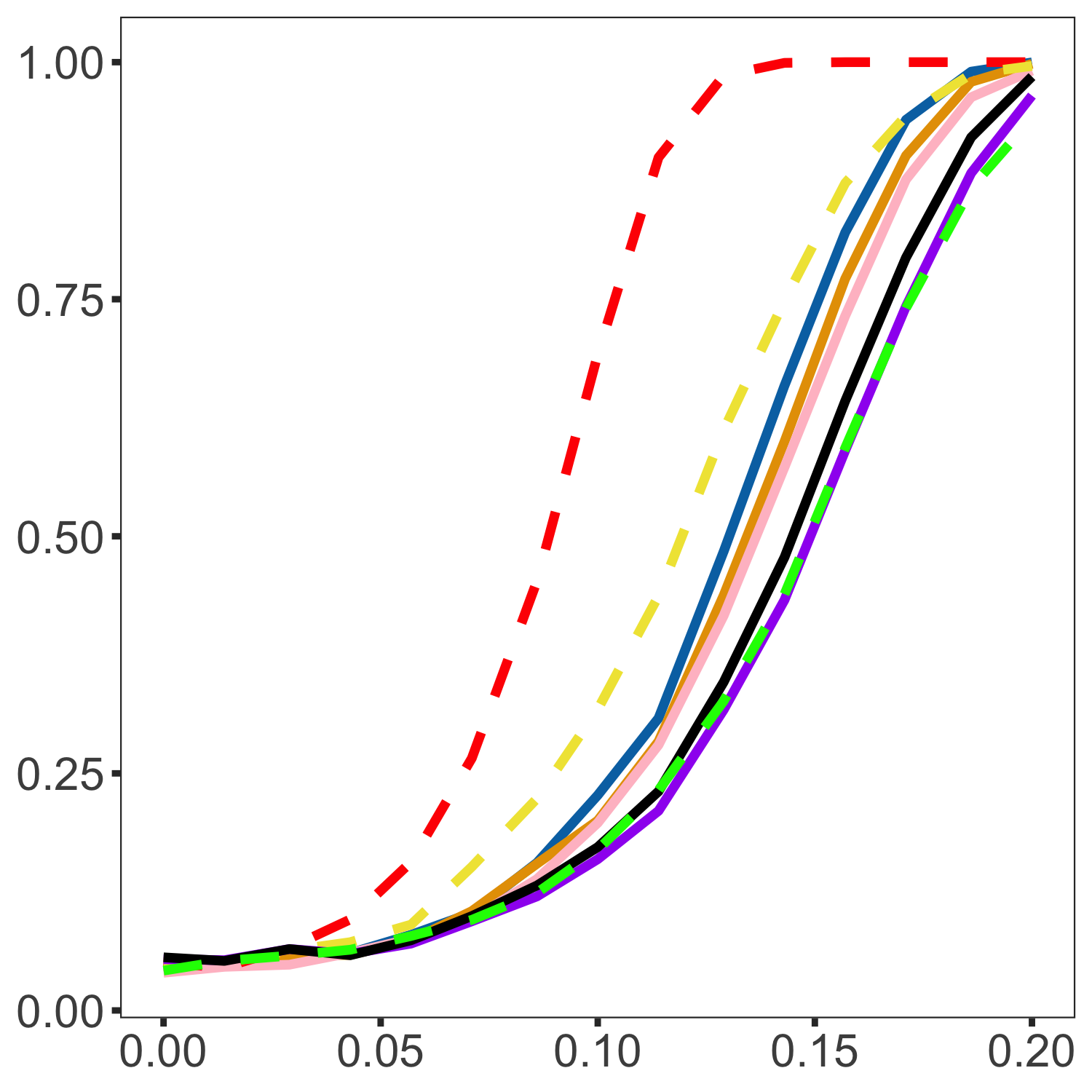}
    \end{subfigure}
     \begin{subfigure}[t]{0.23\textwidth}
        \centering
        \includegraphics[width=\linewidth, height=0.7\linewidth]{./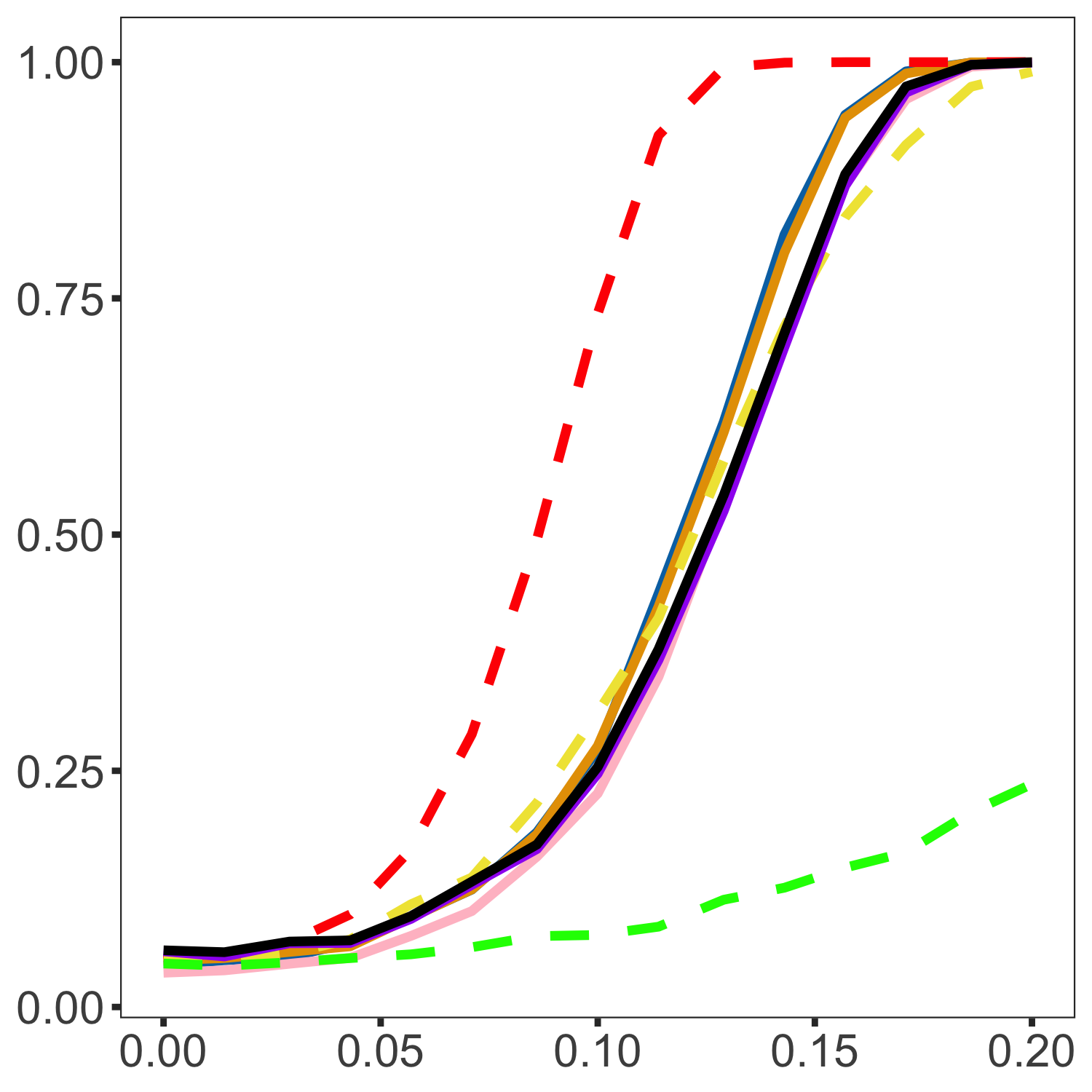}
    \end{subfigure}
    \begin{subfigure}[t]{0.23\textwidth}
        \centering
        \includegraphics[width=\linewidth, height=0.7\linewidth]{./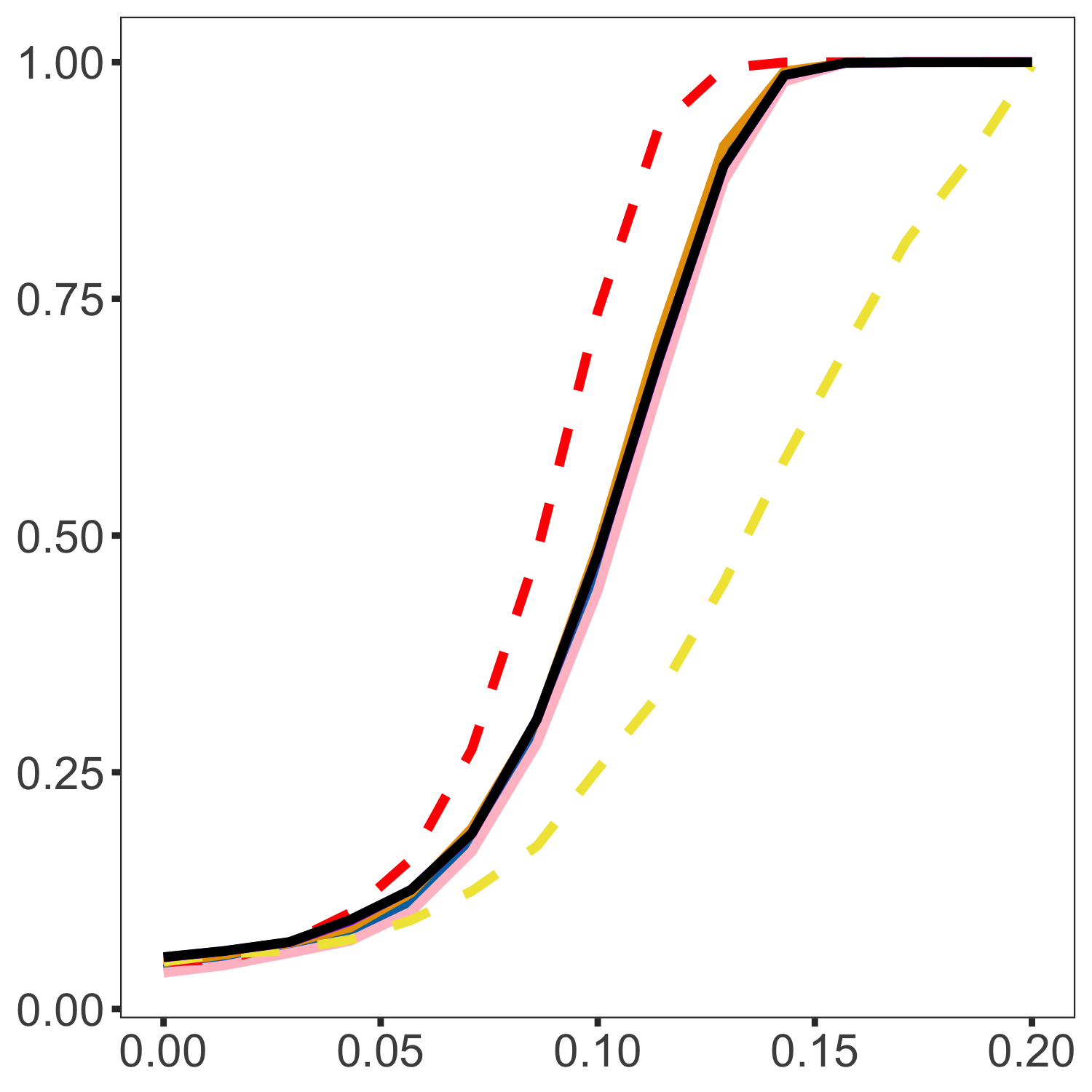}
    \end{subfigure}
   \caption{Size-adjusted empirical power under high-rank alternatives when \(\Sigma\) is Poly-Decay. Columns (left to right) correspond to \(\hat{\gamma}_2 = 0.3, 0.5, 0.9, 2\); rows (top to bottom) correspond to \(n_1 = 50, 100, 250\). Solid curves: blue (\(\lambda = 0.5\)), orange (\(\lambda = 1\)), black (\(\lambda=\hat{\lambda}_{I_p}\)), purple (\(\lambda=\hat{\lambda}_{\Sigma_p}\)), and pink (\(\lambda=\hat{\lambda}_*\)). Dashed curves: red (Proj-LRT), yellow (Ridge-LRT), and green (\cite{han2016tracy}, \(\lambda=0\)), the latter available only when \(p<n_1+n_2\).}
    \label{fig:FR_emp_power_Sigma2}
\end{figure}

\begin{figure}[htbp]
    \centering
    \begin{subfigure}[t]{0.23\textwidth}
        \centering
         \includegraphics[width=\linewidth, height=0.7\linewidth]{./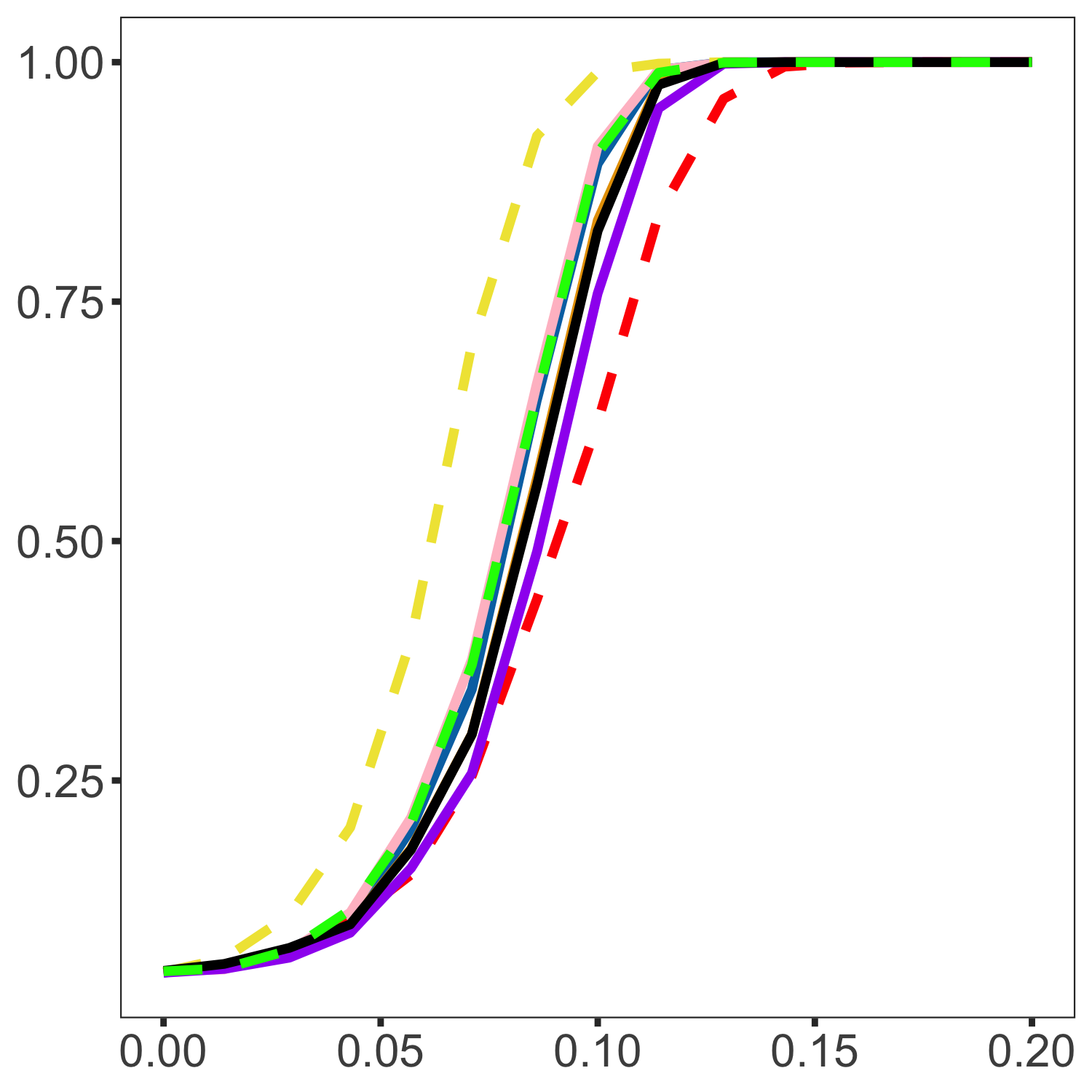}
    \end{subfigure}%
    \begin{subfigure}[t]{0.23\textwidth}
        \centering
        \includegraphics[width=\linewidth, height=0.7\linewidth]{./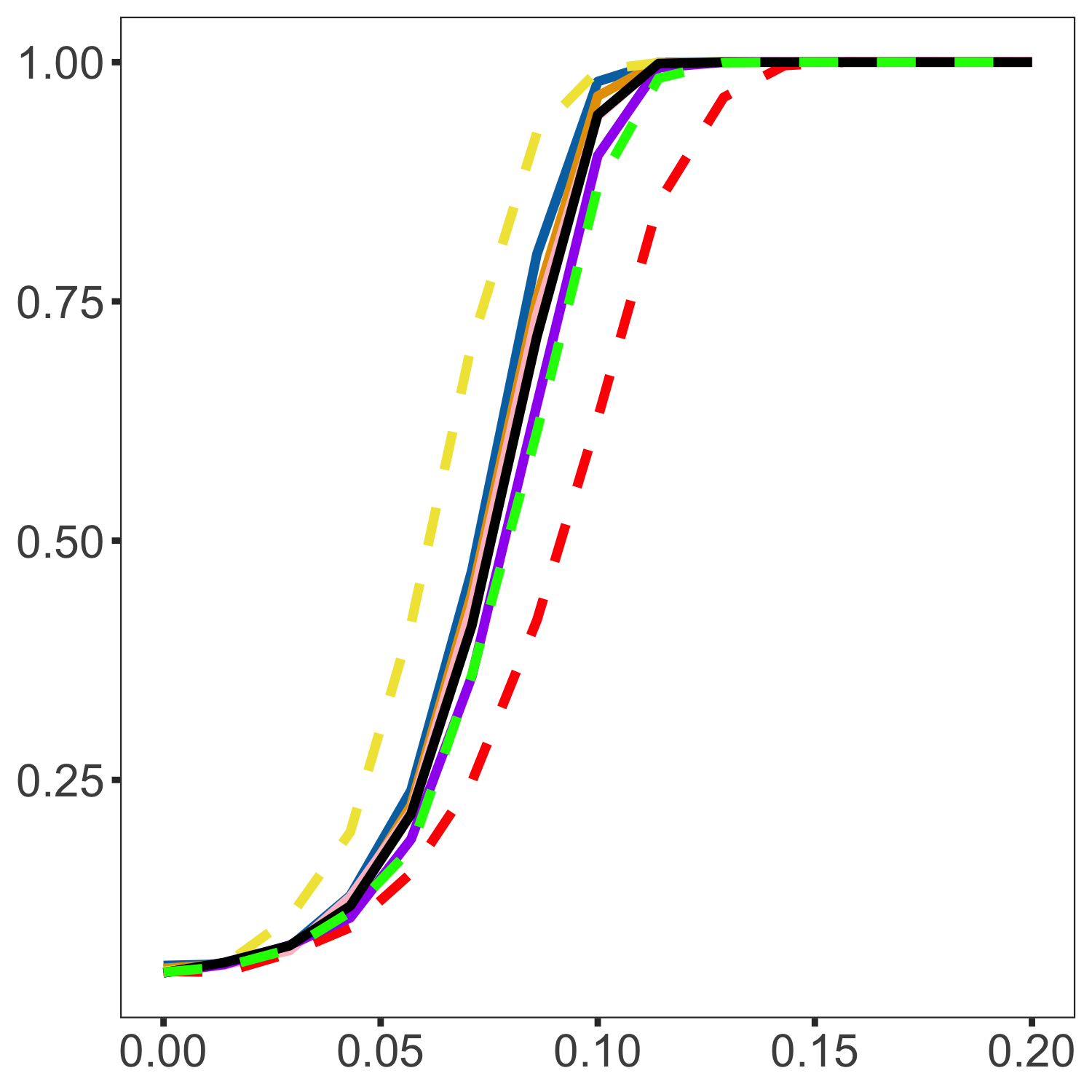}
    \end{subfigure}
     \begin{subfigure}[t]{0.23\textwidth}
        \centering
        \includegraphics[width=\linewidth, height=0.7\linewidth]{./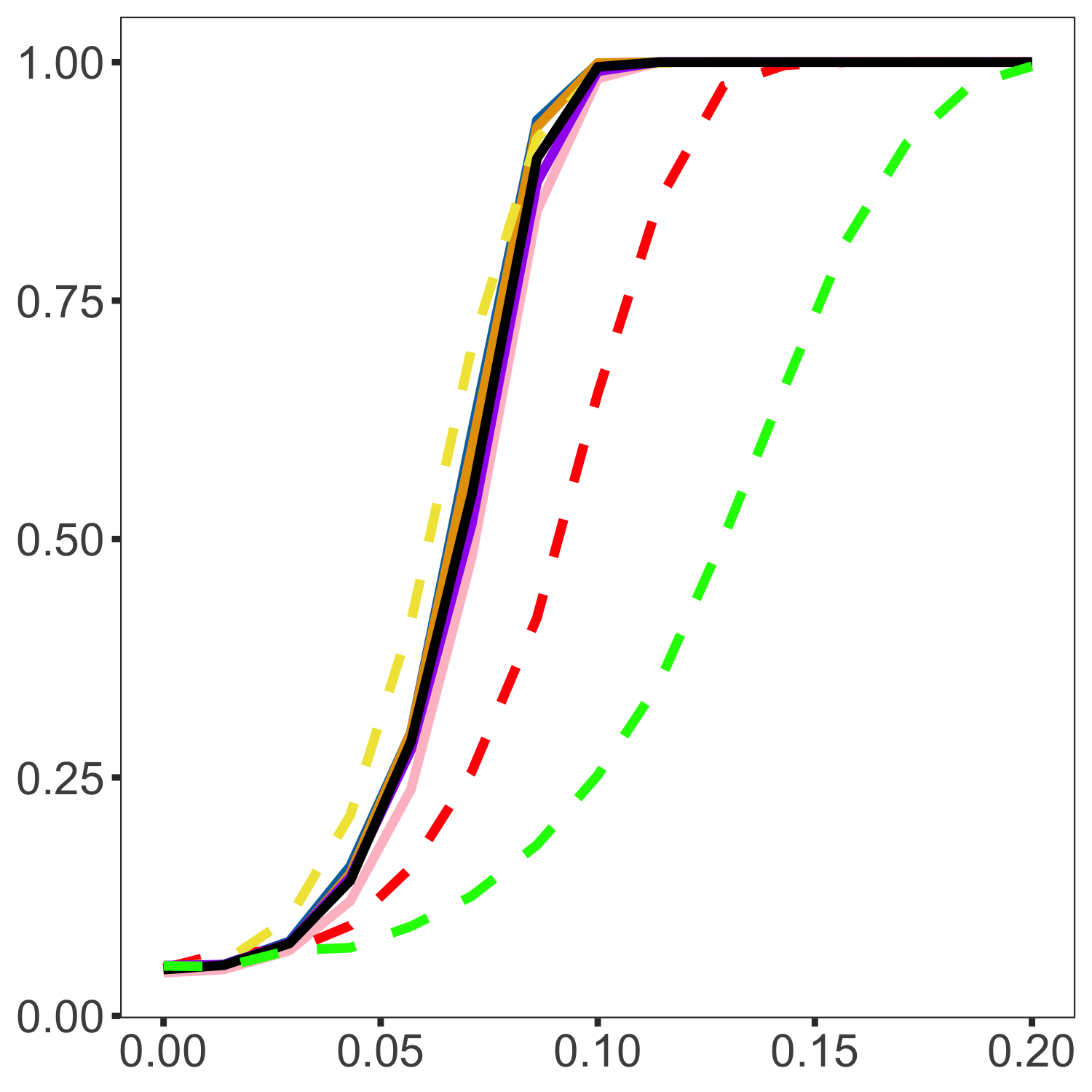}
    \end{subfigure}
    \begin{subfigure}[t]{0.23\textwidth}
        \centering
        \includegraphics[width=\linewidth, height=0.7\linewidth]{./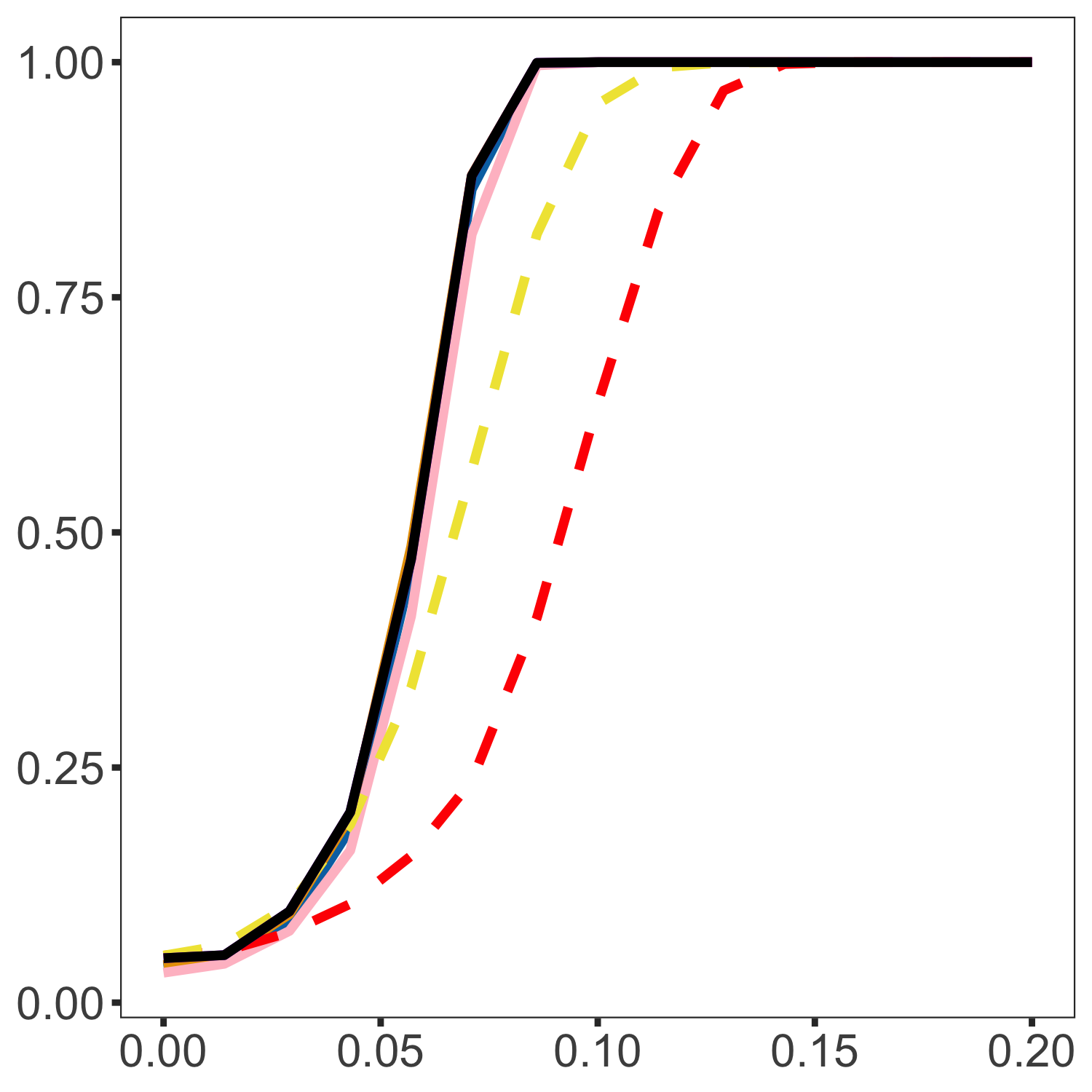}
    \end{subfigure}
    \vfill
    \begin{subfigure}[t]{0.23\textwidth}
        \centering
         \includegraphics[width=\linewidth, height=0.7\linewidth]{./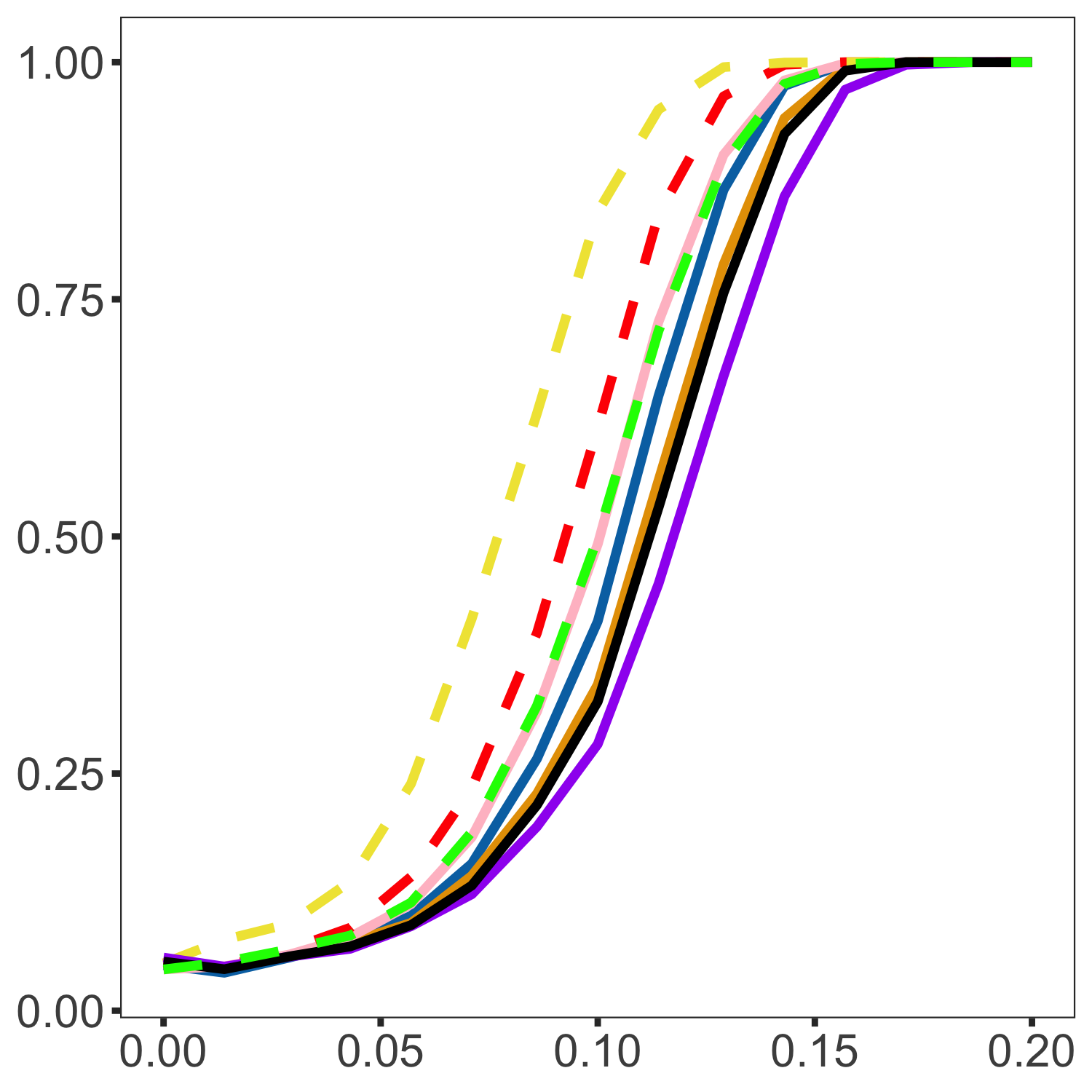}
    \end{subfigure}%
    \begin{subfigure}[t]{0.23\textwidth}
        \centering
        \includegraphics[width=\linewidth, height=0.7\linewidth]{./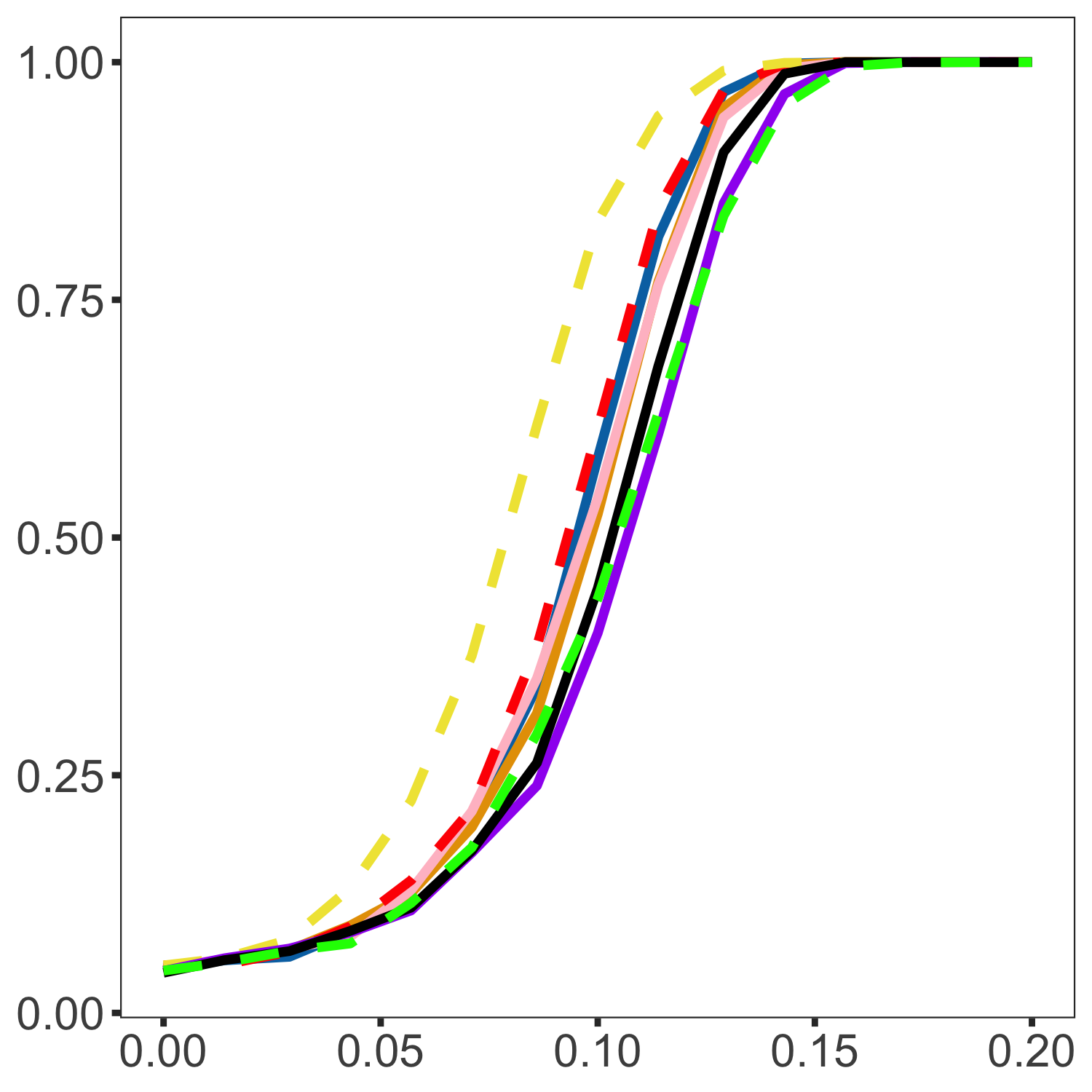}
    \end{subfigure}
     \begin{subfigure}[t]{0.23\textwidth}
        \centering
        \includegraphics[width=\linewidth, height=0.7\linewidth]{./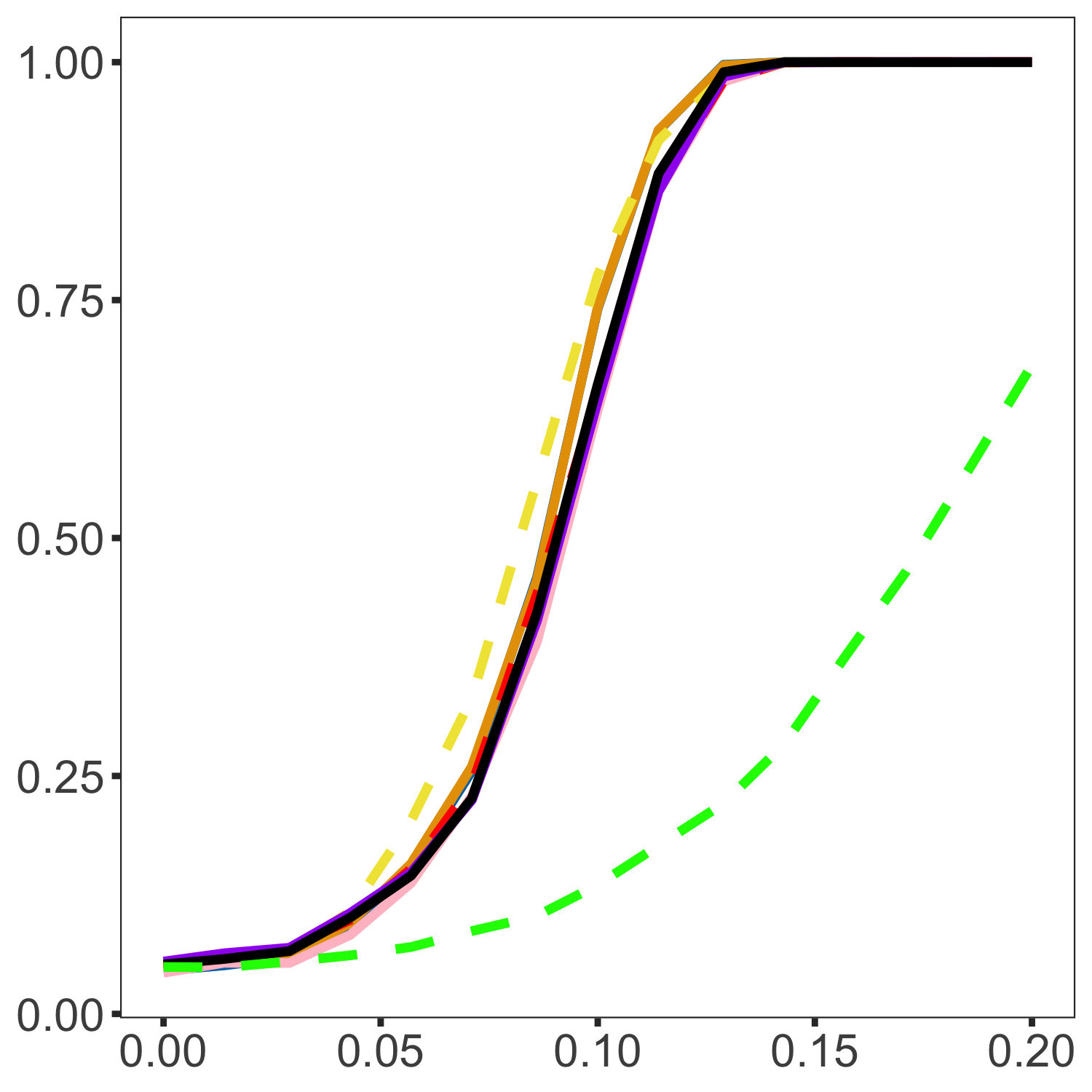}
    \end{subfigure}
    \begin{subfigure}[t]{0.23\textwidth}
        \centering
        \includegraphics[width=\linewidth, height=0.7\linewidth]{./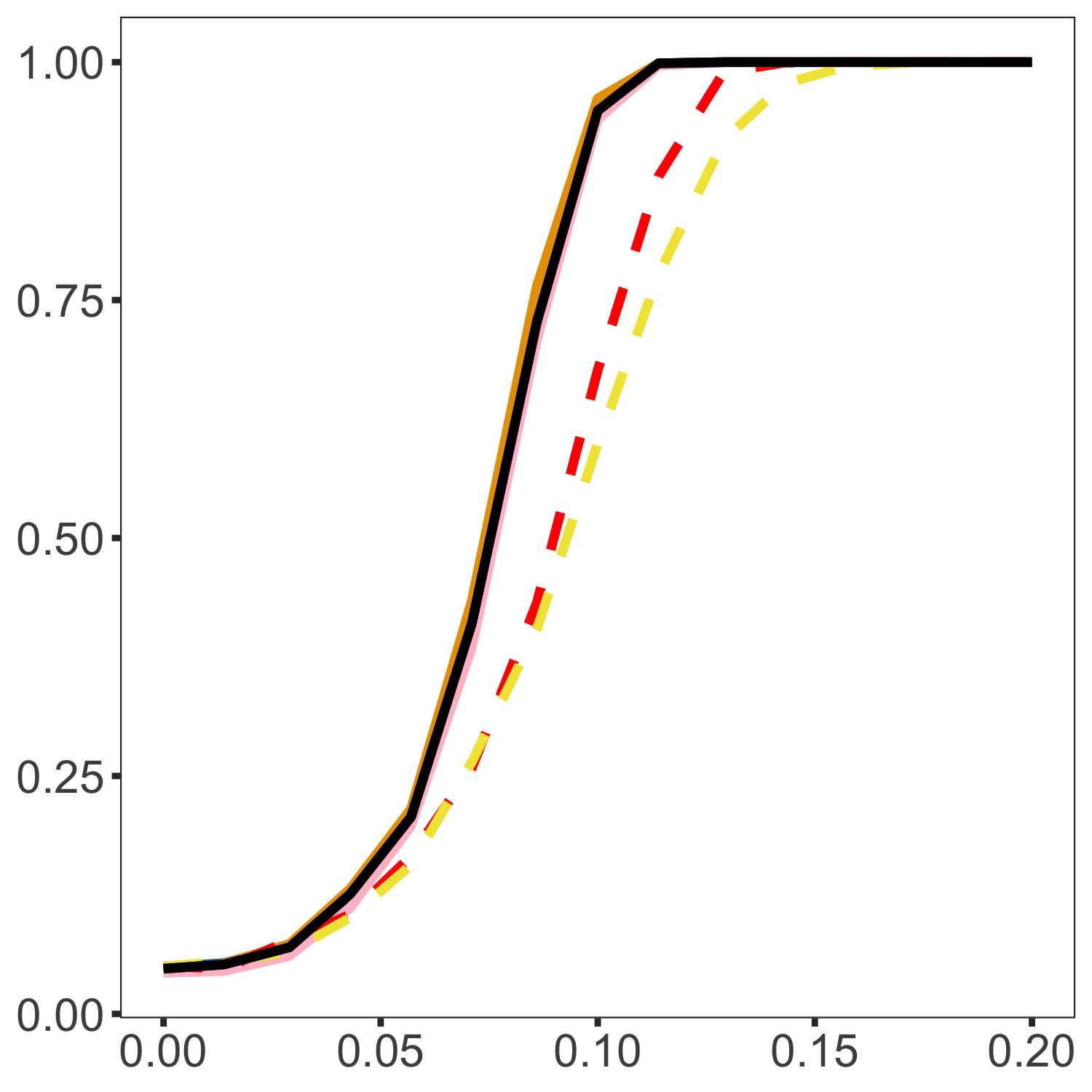}
    \end{subfigure}
    \vfill
    \begin{subfigure}[t]{0.23\textwidth}
        \centering
         \includegraphics[width=\linewidth, height=0.7\linewidth]{./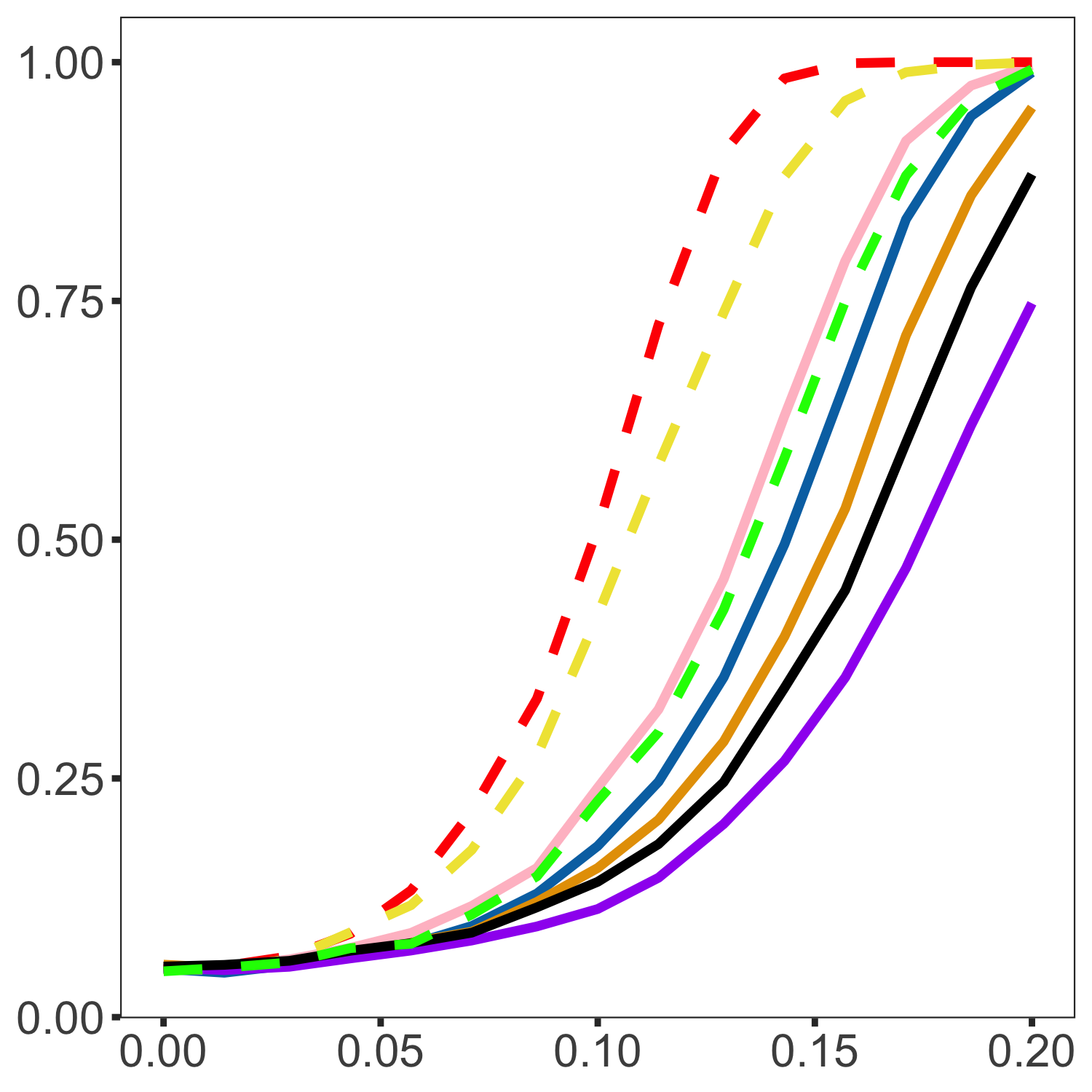}
    \end{subfigure}%
    \begin{subfigure}[t]{0.23\textwidth}
        \centering
        \includegraphics[width=\linewidth, height=0.7\linewidth]{./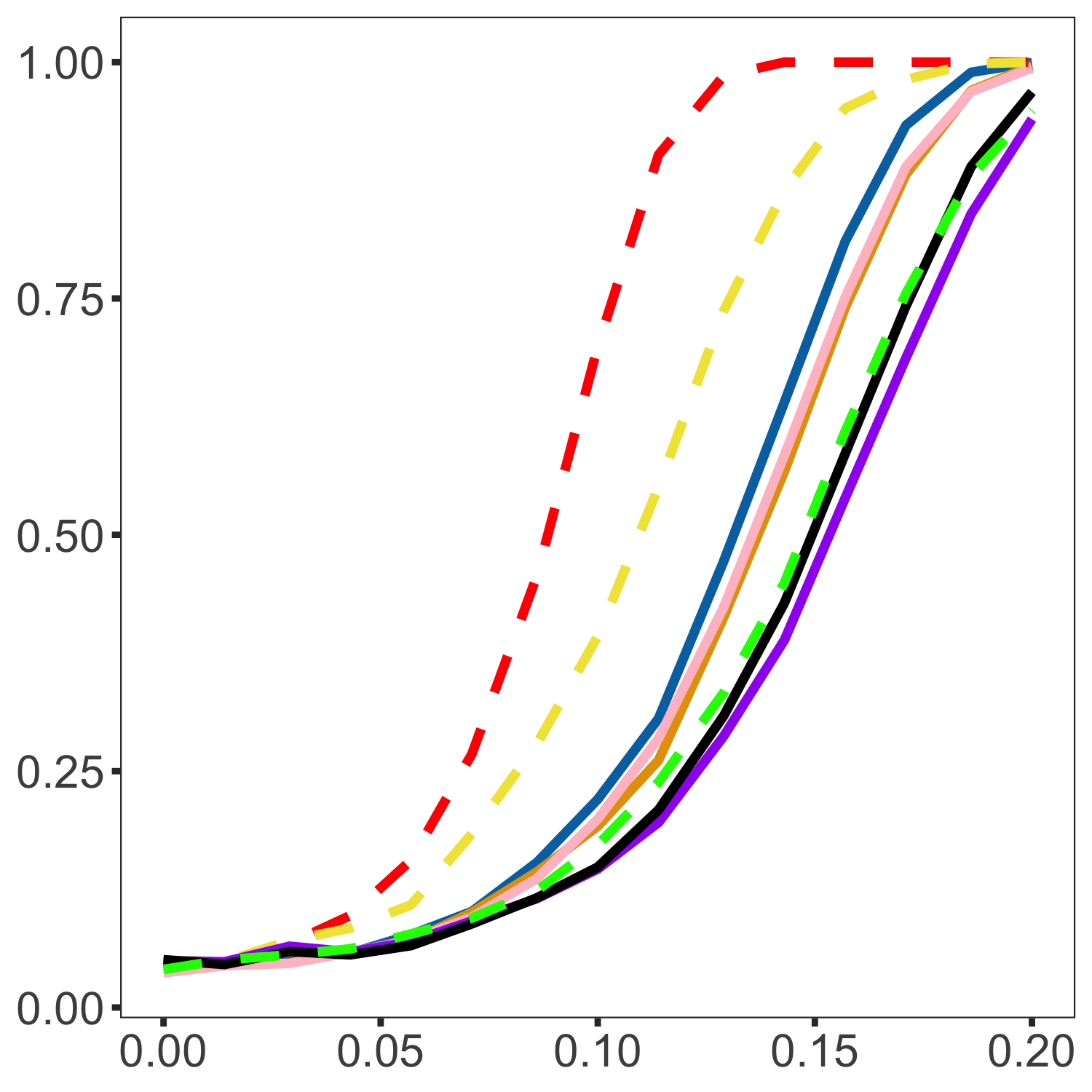}
    \end{subfigure}
     \begin{subfigure}[t]{0.23\textwidth}
        \centering
        \includegraphics[width=\linewidth, height=0.7\linewidth]{./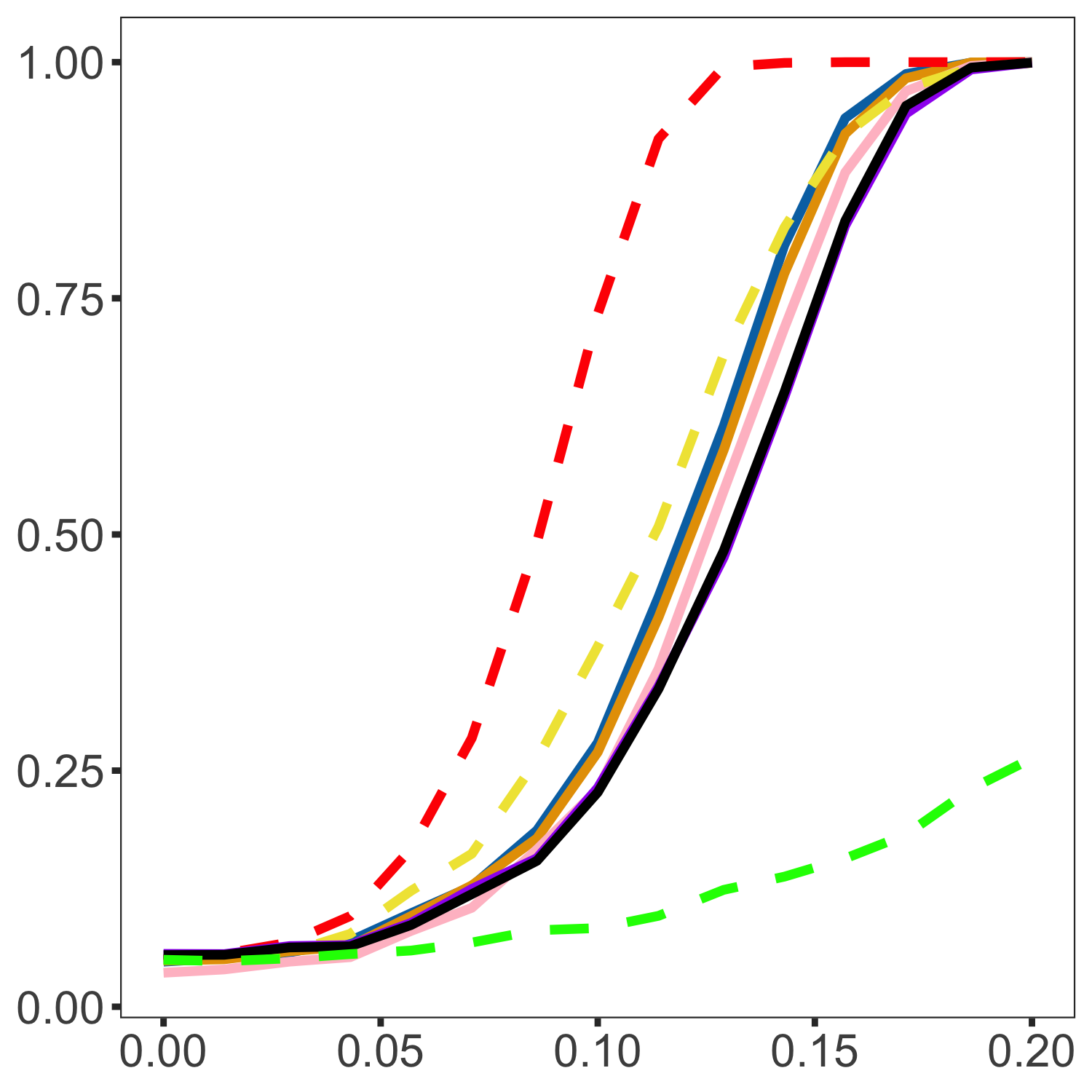}
    \end{subfigure}
    \begin{subfigure}[t]{0.23\textwidth}
        \centering
        \includegraphics[width=\linewidth, height=0.7\linewidth]{./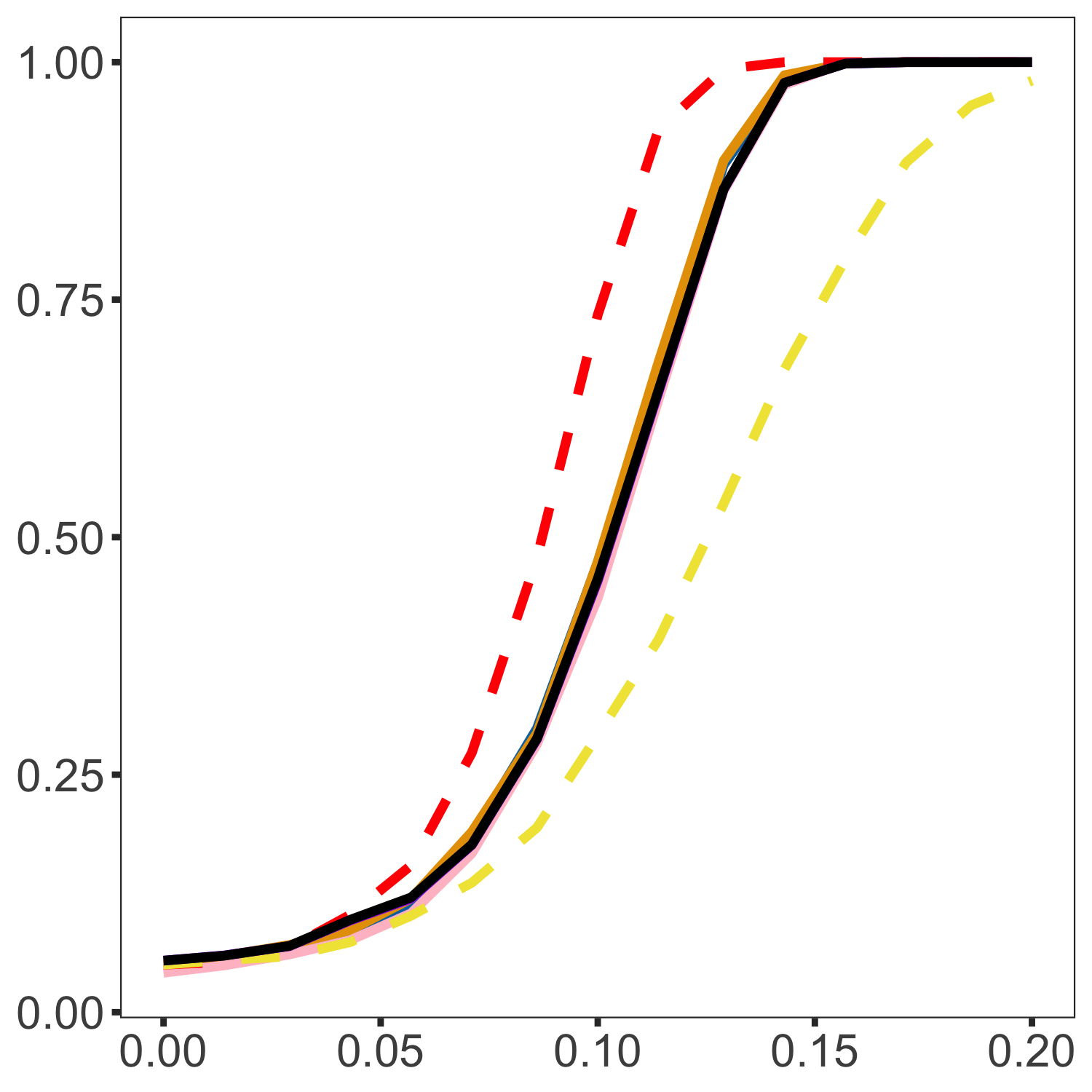}
    \end{subfigure}
   \caption{Size-adjusted empirical power under high-rank alternatives when \(\Sigma\) is AR-ACF. Columns (left to right) correspond to \(\hat{\gamma}_2 = 0.3, 0.5, 0.9, 2\); rows (top to bottom) correspond to \(n_1 = 50, 100, 250\). Solid curves: blue (\(\lambda = 0.5\)), orange (\(\lambda = 1\)), black (\(\lambda=\hat{\lambda}_{I_p}\)), purple (\(\lambda=\hat{\lambda}_{\Sigma_p}\)), and pink (\(\lambda=\hat{\lambda}_*\)). Dashed curves: red (Proj-LRT), yellow (Ridge-LRT), and green (\cite{han2016tracy}, \(\lambda=0\)), the latter available only when \(p<n_1+n_2\).}
    \label{fig:FR_emp_power_Sigma3}
\end{figure}

\begin{figure}[htbp]
    \centering
    \begin{subfigure}[t]{0.23\textwidth}
        \centering
         \includegraphics[width=\linewidth, height=0.7\linewidth]{./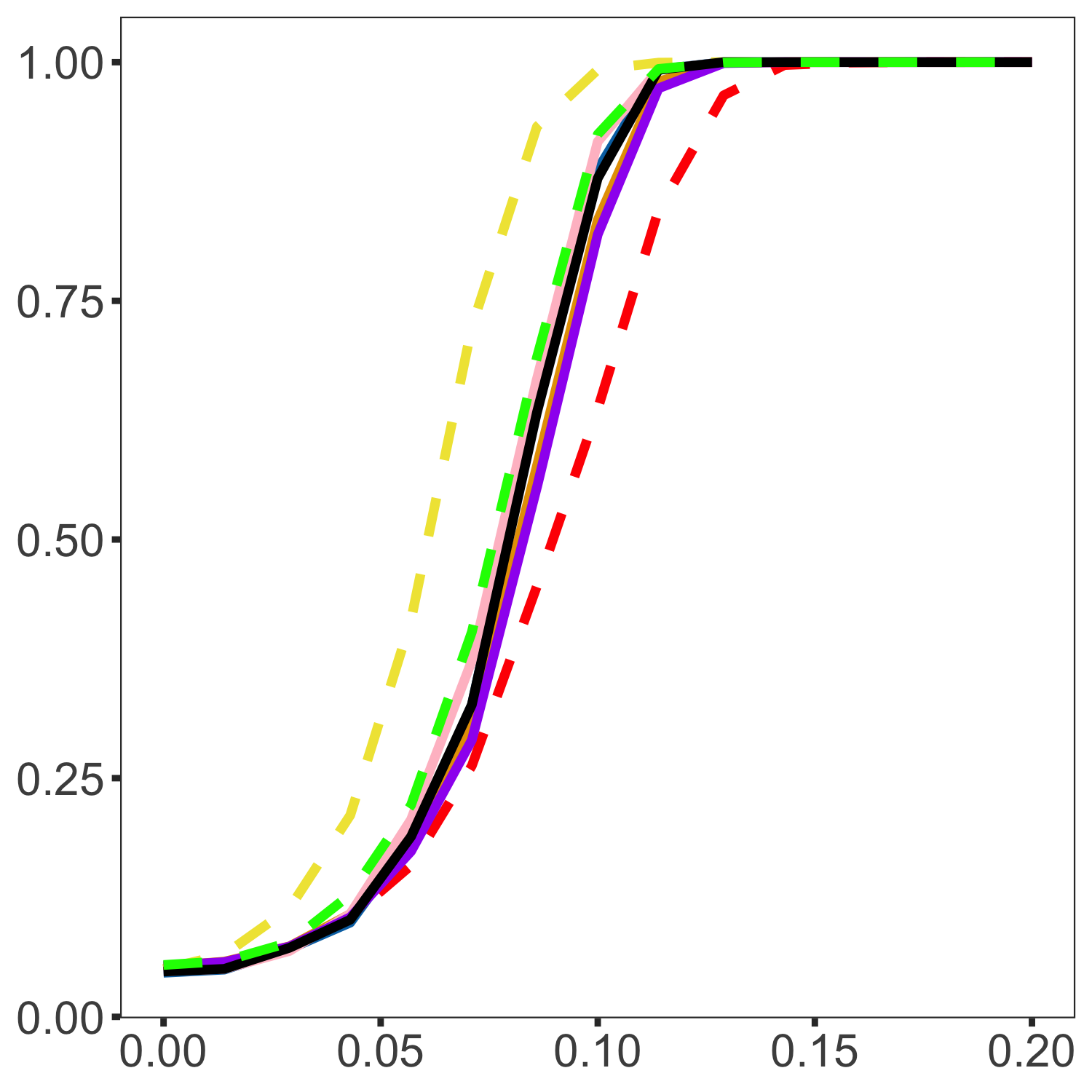}
    \end{subfigure}%
    \begin{subfigure}[t]{0.23\textwidth}
        \centering
        \includegraphics[width=\linewidth, height=0.7\linewidth]{./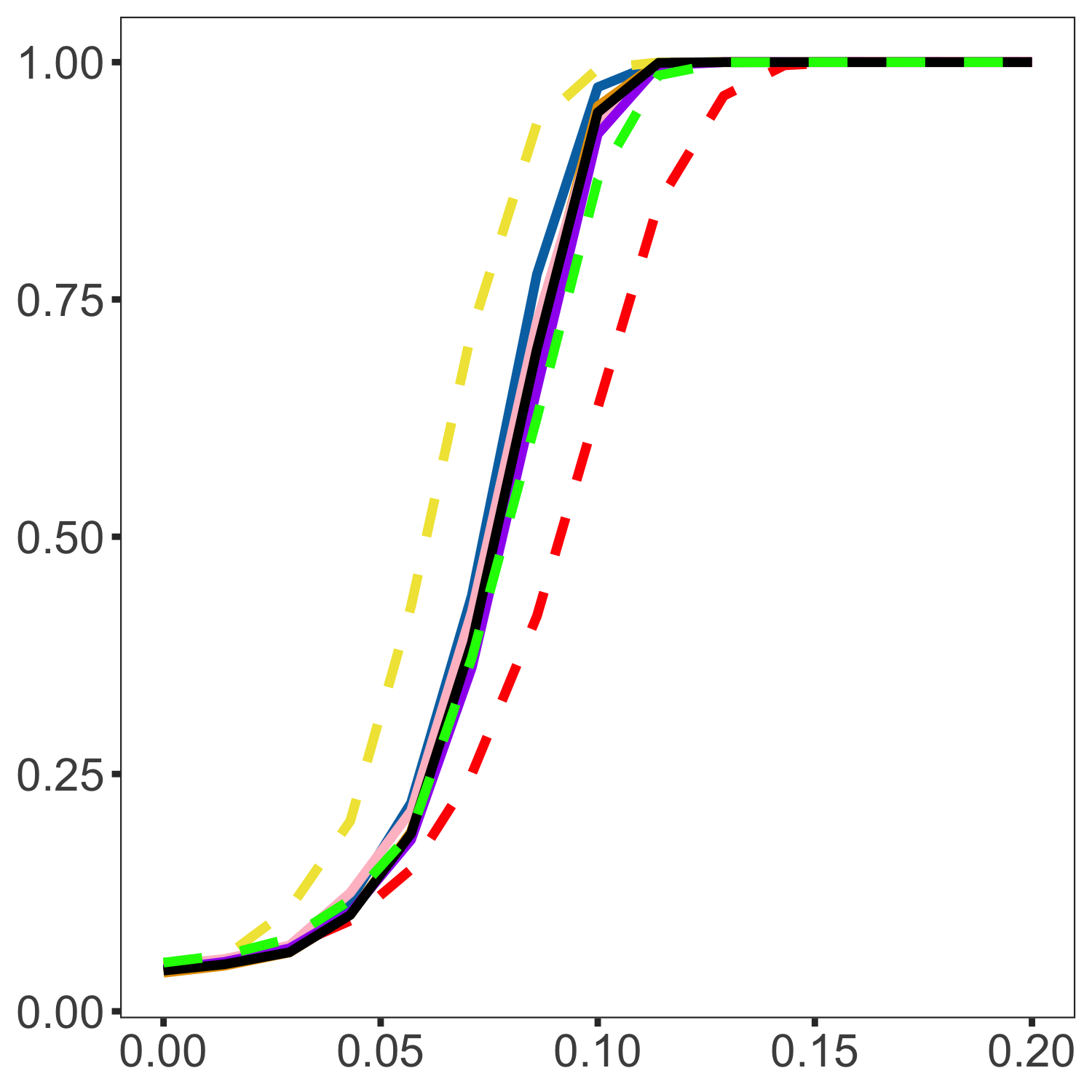}
    \end{subfigure}
     \begin{subfigure}[t]{0.23\textwidth}
        \centering
        \includegraphics[width=\linewidth, height=0.7\linewidth]{./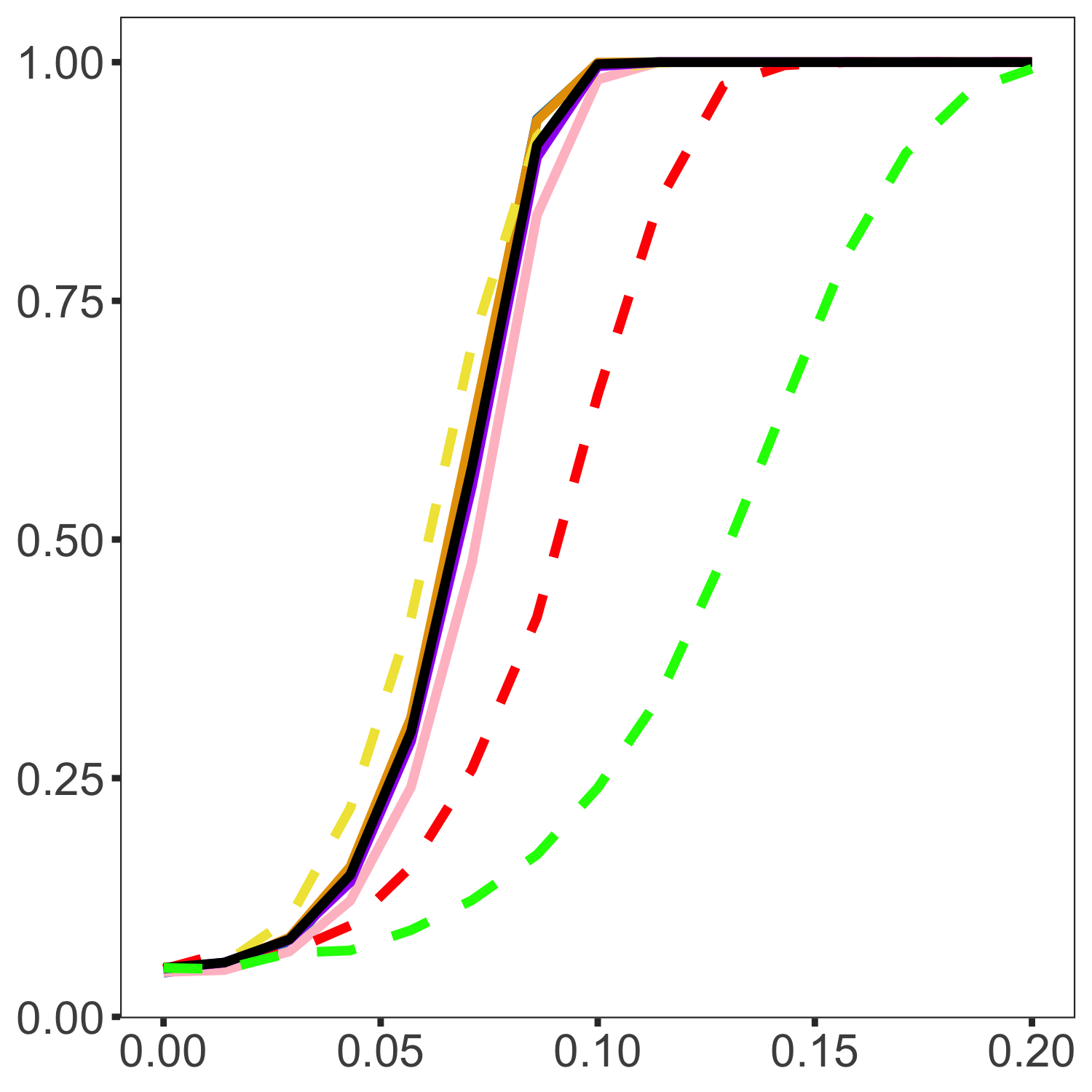}
    \end{subfigure}
    \begin{subfigure}[t]{0.23\textwidth}
        \centering
        \includegraphics[width=\linewidth, height=0.7\linewidth]{./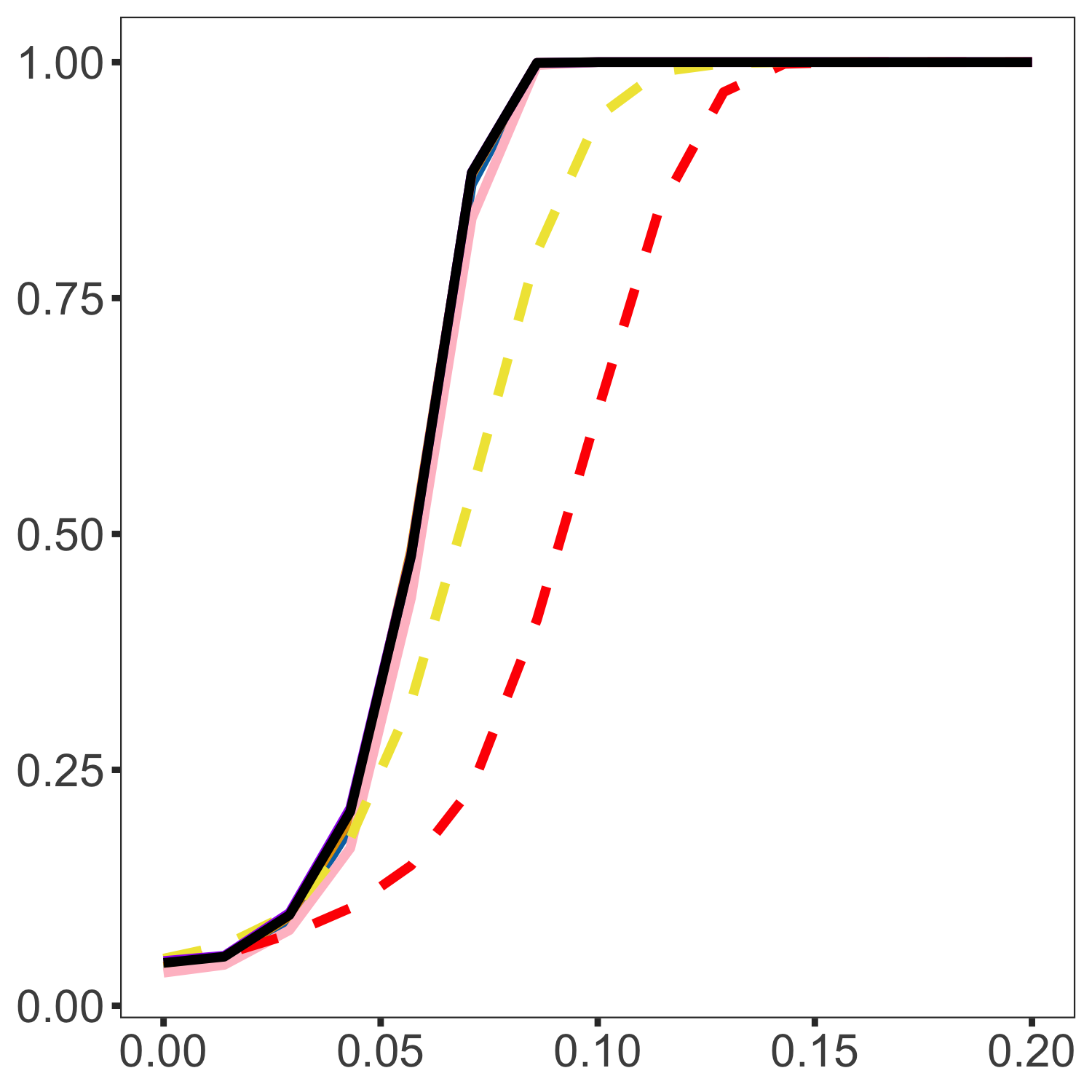}
    \end{subfigure}
    \vfill
    \begin{subfigure}[t]{0.23\textwidth}
        \centering
         \includegraphics[width=\linewidth, height=0.7\linewidth]{./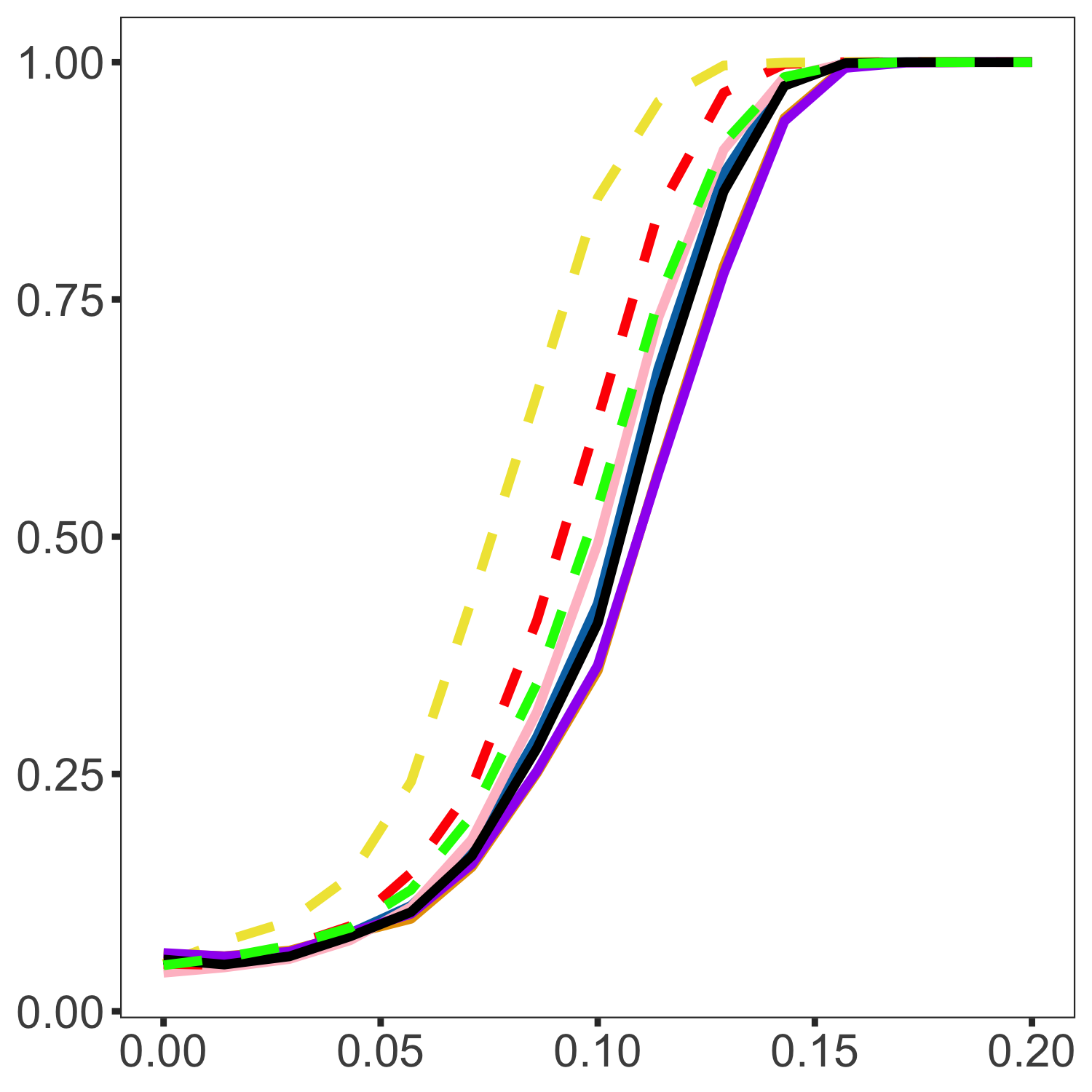}
    \end{subfigure}%
    \begin{subfigure}[t]{0.23\textwidth}
        \centering
        \includegraphics[width=\linewidth, height=0.7\linewidth]{./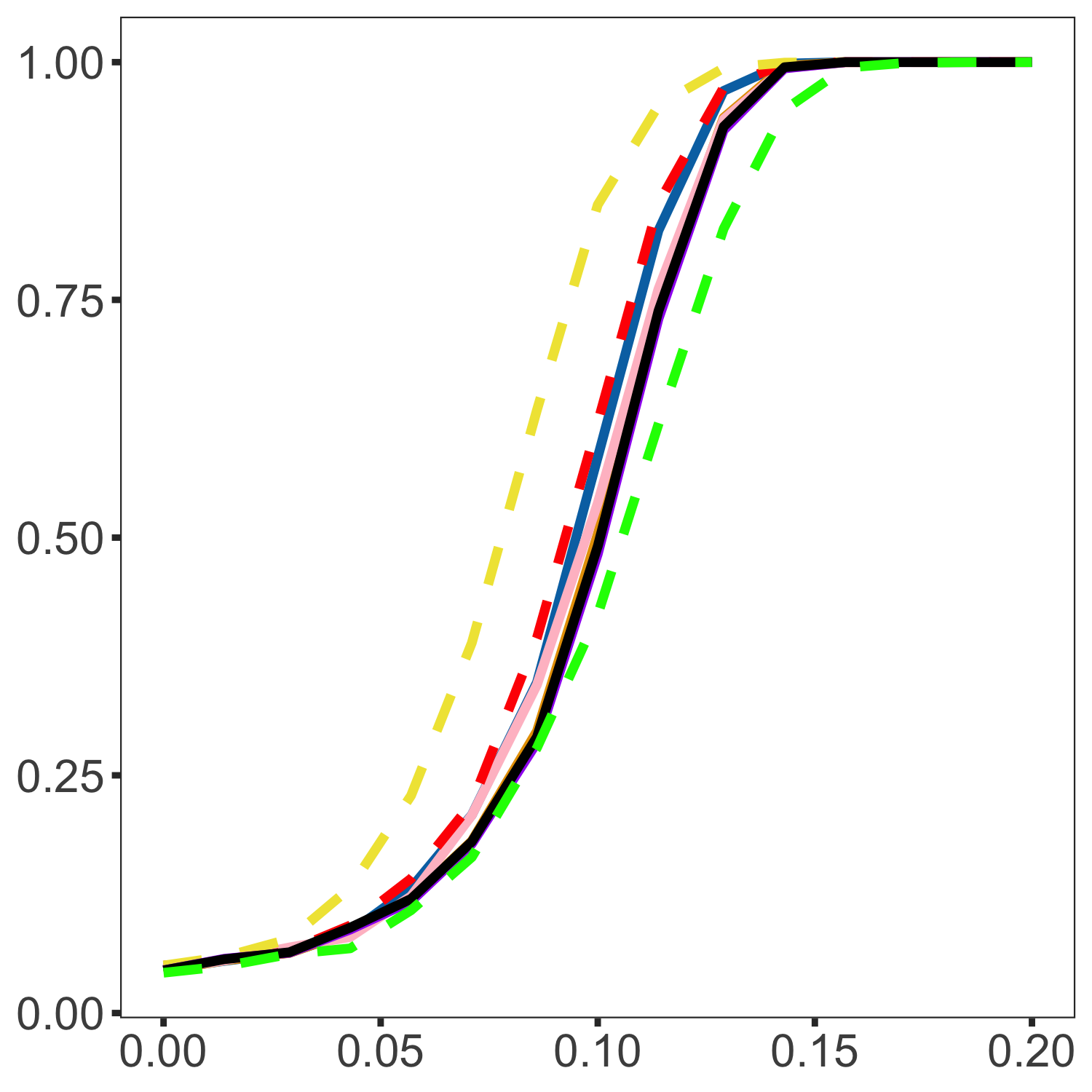}
    \end{subfigure}
     \begin{subfigure}[t]{0.23\textwidth}
        \centering
        \includegraphics[width=\linewidth, height=0.7\linewidth]{./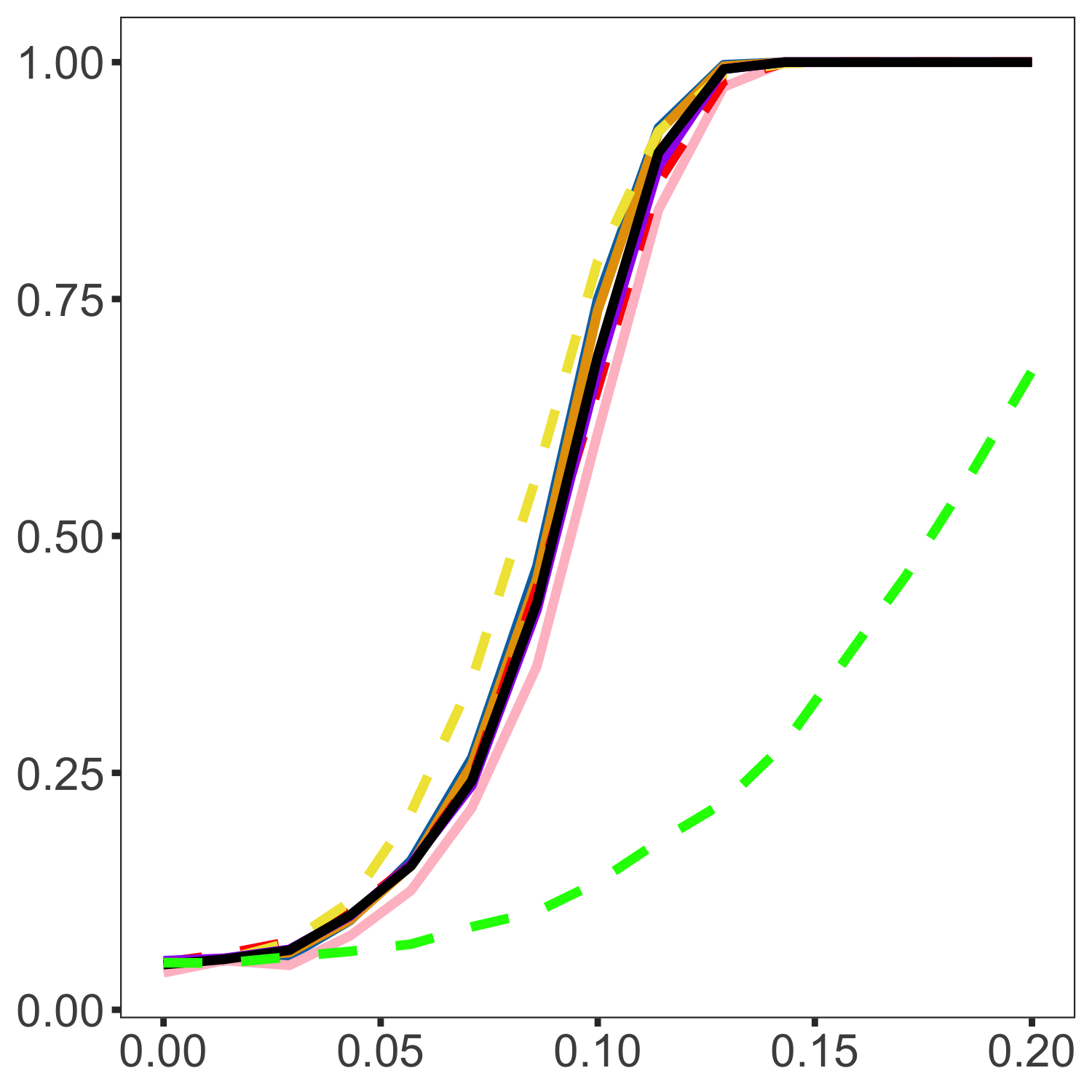}
    \end{subfigure}
    \begin{subfigure}[t]{0.23\textwidth}
        \centering
        \includegraphics[width=\linewidth, height=0.7\linewidth]{./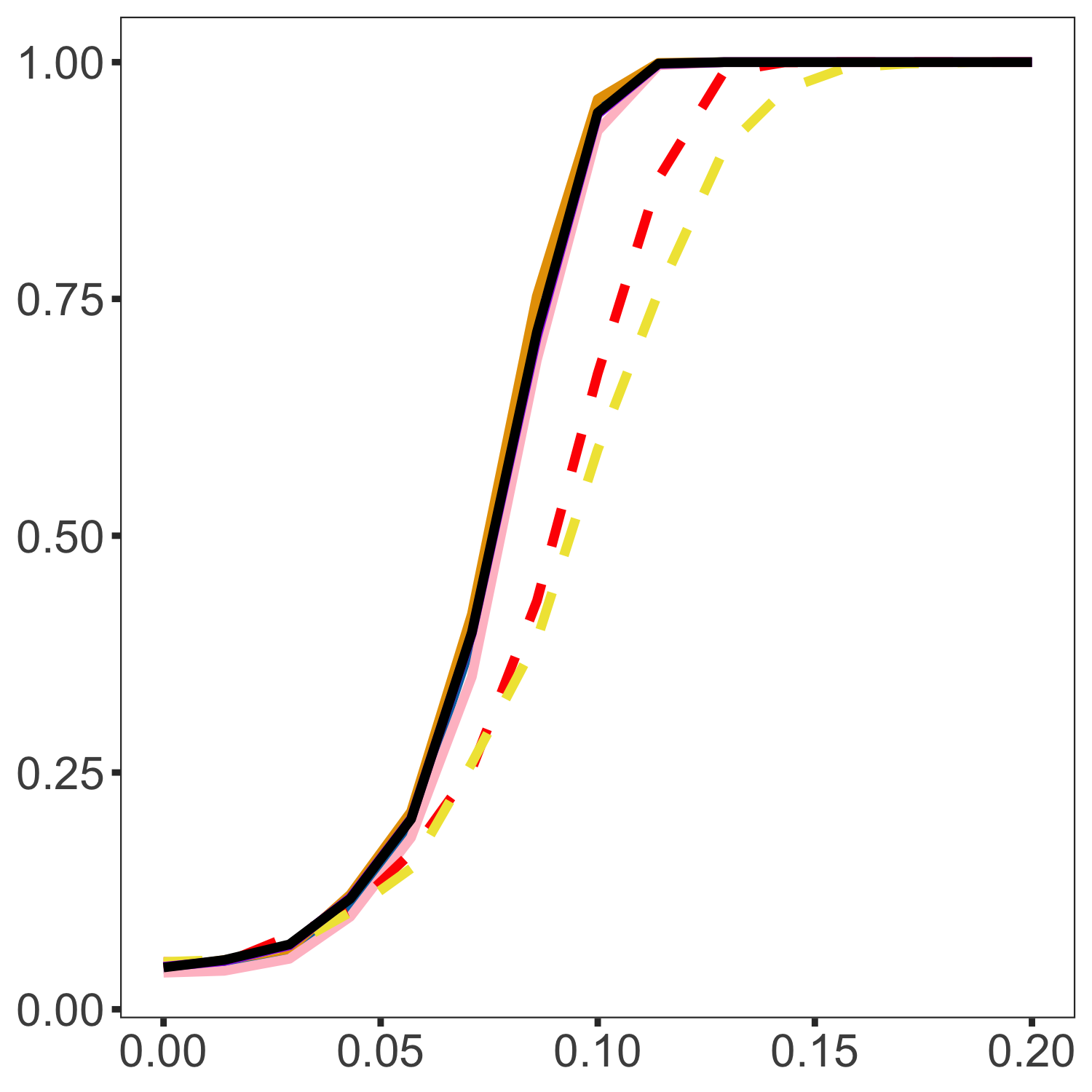}
    \end{subfigure}
    \vfill
    \begin{subfigure}[t]{0.23\textwidth}
        \centering
         \includegraphics[width=\linewidth, height=0.7\linewidth]{./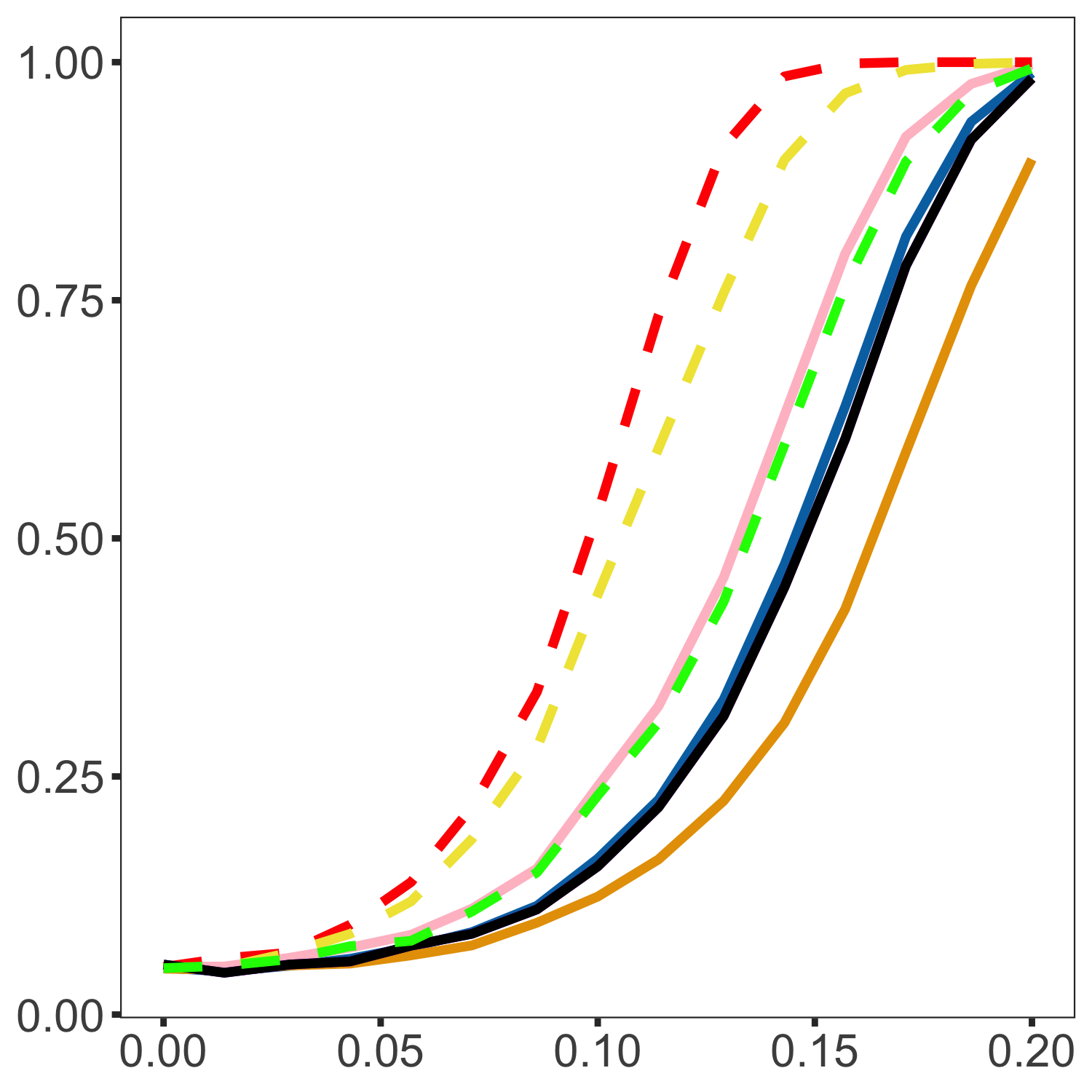}
    \end{subfigure}%
    \begin{subfigure}[t]{0.23\textwidth}
        \centering
        \includegraphics[width=\linewidth, height=0.7\linewidth]{./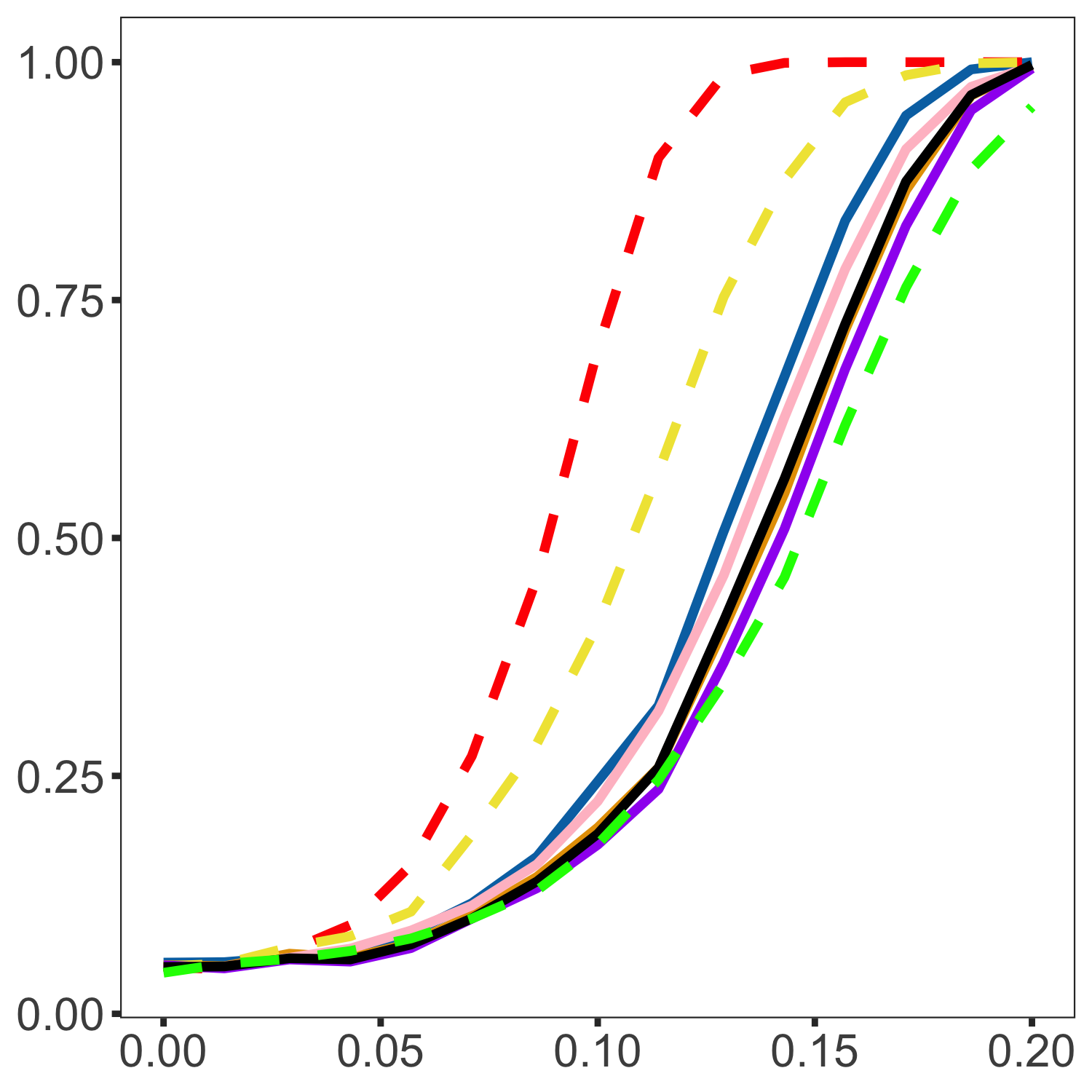}
    \end{subfigure}
     \begin{subfigure}[t]{0.23\textwidth}
        \centering
        \includegraphics[width=\linewidth, height=0.7\linewidth]{./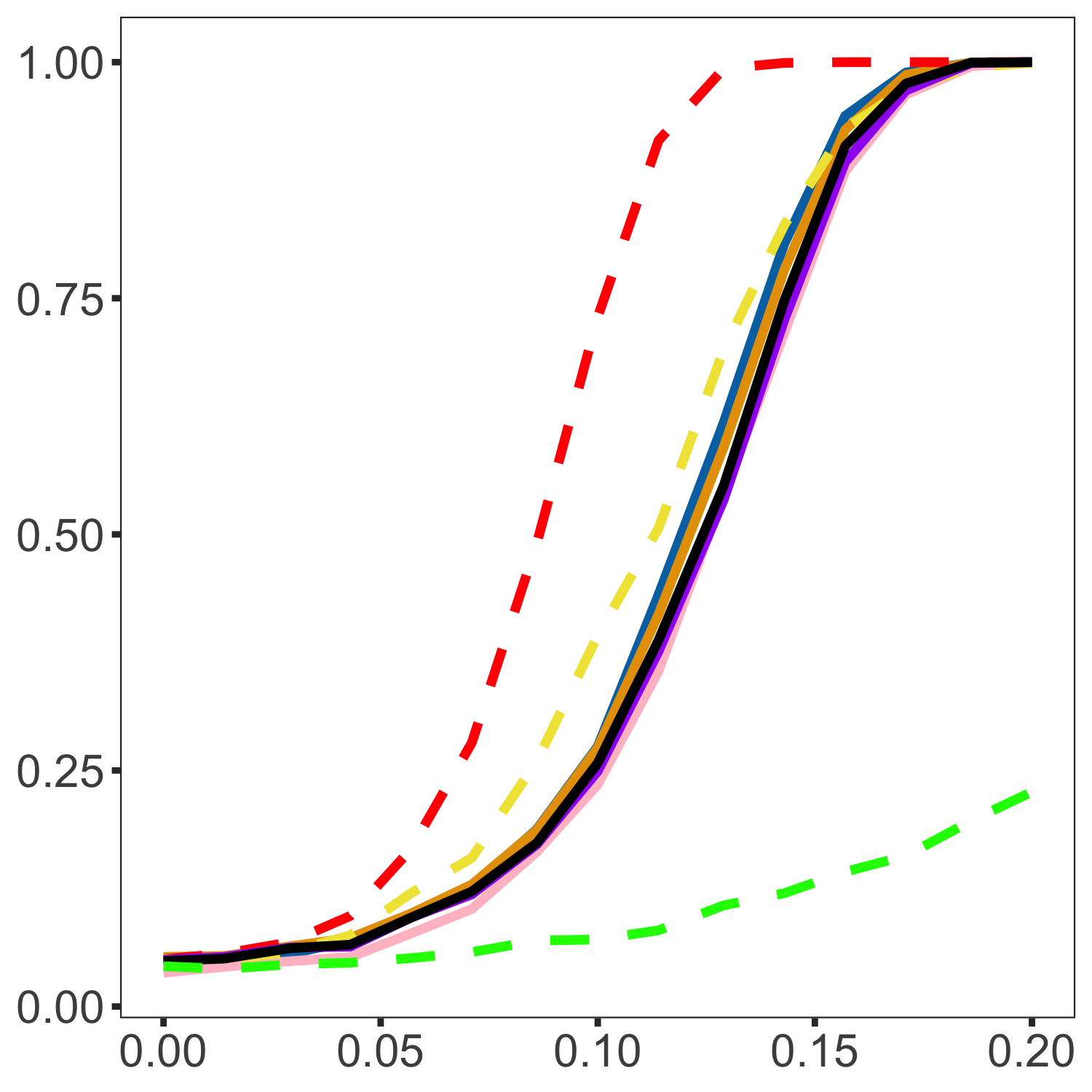}
    \end{subfigure}
    \begin{subfigure}[t]{0.23\textwidth}
        \centering
        \includegraphics[width=\linewidth, height=0.7\linewidth]{./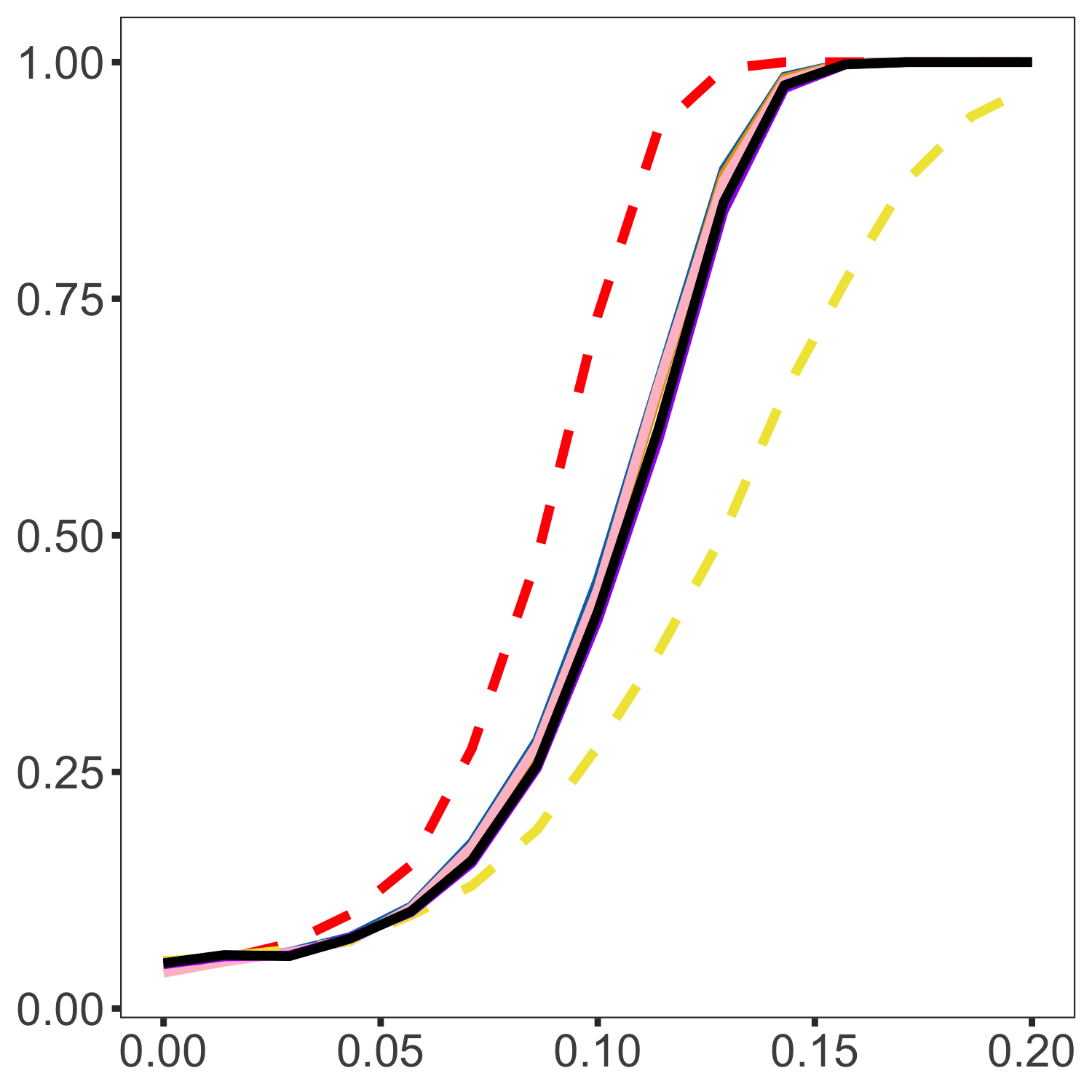}
    \end{subfigure}
   \caption{Size-adjusted empirical power under high-rank alternatives when \(\Sigma\) is Factor. Columns (left to right) correspond to \(\hat{\gamma}_2 = 0.3, 0.5, 0.9, 2\); rows (top to bottom) correspond to \(n_1 = 50, 100, 250\). Solid curves: blue (\(\lambda = 0.5\)), orange (\(\lambda = 1\)), black (\(\lambda=\hat{\lambda}_{I_p}\)), purple (\(\lambda=\hat{\lambda}_{\Sigma_p}\)), and pink (\(\lambda=\hat{\lambda}_*\)). Dashed curves: red (Proj-LRT), yellow (Ridge-LRT), and green (\cite{han2016tracy}, \(\lambda=0\)), the latter available only when \(p<n_1+n_2\).}
    \label{fig:FR_emp_power_Sigma4}
\end{figure}

\newpage
\clearpage 

\section{Basic Definitions and Additional Properties of the M-P Equations}\label{sec:basic_definitions}

In the remaining sections, we present the proofs of all technical results stated in the manuscript. We begin in this section by introducing the necessary notations and definitions. We then provide additional results concerning the properties of the Mar\v{c}enko–Pastur equation \eqref{eq:MP_W2} and equation \eqref{eq:MP_G}. Throughout this section, we assume Conditions~\ref{enum:high_dimensional_regime}–\ref{enum:regular_edge}.

\subsection{Basic definitions}\label{subsec:basic_definitions}
Throughout, we use $a$, $a'$, $c$, $c'$, $C$, and $C'$ to denote generic constants whose values may change from line to line. Unless stated otherwise, these constants do not depend on $p$. 
We use $E = E(z)$ and $\eta = \eta(z)$ to denote the real and imaginary parts of a complex number $z$, respectively. The argument $z$ is frequently omitted from expressions when no confusion may arise.

\begin{definition}[Matrix norms]\label{def:matrix_norm}
Let $A= (a_{ij})$ be a complex matrix. We define the following norms 
\[ \| A \| = \max_{\|x\| =1} \|x^*Ax\| , \quad \|A\|_\infty  = \max_{i,j}|a_{ij}|, \quad \|A\|_F = \sqrt{\tr (AA^*)},\]
where $A^*$ is the conjugate transpose of a matrix $A$. If $A$ is a vector, we write $|A| = \|A\|$ for simplicity. 
\end{definition}

The following notion of a high-probability bound has been used in a number of works on random matrix theory. It provides a simple way of systematizing and making precise statements of the form ``$\xi$ is bounded with high probability by $\zeta$ up to small powers of $n$''. 
\begin{definition}[Stochastic domination]\label{def:stochastic_domination}~
\begin{itemize}
\item[(i)] Consider two families of nonnegative random variables
$$
\xi=\left(\xi^{(n)}(u): m \in \mathbb{N}, u \in U^{(n)}\right), \quad \zeta=\left(\zeta^{(n)}(u): n \in \mathbb{N}, u \in U^{(n)}\right),
$$
where $U^{(n)}$ is a possibly $n$-dependent parameter set. We say that $\xi$ is stochastically dominated by $\zeta$, uniformly in $u$, if for all (small) $\varepsilon>0$ and (large) $D>0$ we have
$$
\sup _{u \in U^{(n)}} \mathbb{P}\left[\xi^{(n)}(u)>n^{\varepsilon} \zeta^{(n)}(u)\right] \leqslant n^{-D}
$$
for large enough $n \geqslant n_0(\varepsilon, D)$. Throughout this paper the stochastic domination will always be uniform in all parameters (such as matrix indices, deterministic vectors, and spectral parameters $z$) that are not explicitly fixed. 
If $\xi$ is stochastically dominated by $\zeta$, uniformly in $u$, we use the notation $\xi \prec \zeta$. Moreover, if for some complex family $\xi$ we have $|\xi| \prec \zeta$ we also write $\xi=O_{\prec}(\zeta)$.
\item[(ii)] We extend the definition of $O_{\prec}(\cdot)$ to matrices in the weak operator sense as follows. Let $A$ be a family of complex square random matrices and $\zeta$ a family of nonnegative random variables. Then we use $A=O_{\prec}(\zeta)$ to mean $|\langle\mathbf{v}, A \mathbf{w}\rangle| \prec \zeta|\mathbf{v}||\mathbf{w}|$ uniformly for all deterministic vectors $\mathbf{v}$ and $\mathbf{w}$.
\item[(iii)] If there exists a positive constant $C$ such that $\xi \leq C\zeta$, then we write $\xi \lesssim \zeta$. If further there exists a positive constant $C'$ such that $\zeta \leq C'\xi$, then we write $\xi \asymp \zeta$. 
\end{itemize}
\end{definition}

\begin{definition}\label{def:high_probability}
We say that an event $\Lambda$ holds with high probability if for any large positive constant $D$, there exists $n_0(D)$ such that 
\[\mP(\Lambda^\complement)\leq n^{-D}, ~\mbox{ for any }n\geq n_0(D).\]
\end{definition}

\subsection{Additional properties of the Mar\v{c}enko-Pastur Equation}\label{subsec:additional_properties_of_the_mar_v_c}
We present additional properties of the solution to the Mar\v{c}enko-Pastur Equation \eqref{eq:MP_W2} evaluated at the probability measure $F$ being the distribution of $1/X$ when $X\sim \calG_{p\lambda}$ and $\gamma = \hat{\gamma}_1$. That is, we study the properties of $q(z) = q_\lambda(z)$ being the unique solution in $\mathbb{C}^+$ to  
\begin{equation}\label{eq:def_q_z}
 z =  - \frac{1}{q} + \hat{\gamma}_1 \int\frac{d\calG_{p\lambda}(\tau)}{ \tau + q}.
\end{equation}
Following foundational results in RMT, $q(z)$ is the Stieltjes transform of a probability measure, denoted by $\calF_{q}$. It is further known that $\calF_q$ is compactly supported in $(0, \infty)$. It is understood that $q(z)$ and $\calF_q$ depend on $p$ and $\lambda$.

Recall the definition of $\phi_{p\lambda}(\iz)$: $\mathbb{C}^+\to\mathbb{C}^+$ as the Stieltjes transform of $\calG_{p\lambda}$ (See the statement right after Eq. \eqref{eq:MP_G}). That is,
\[ \phi_{p\lambda}(\iz) = \int \frac{d\calG_{p\lambda}(\tau)}{\tau - \iz}, \quad \iz \in \mathbb{C}^+. \]
For convenience, we extend the definition of $\phi_{p\lambda}(\iz)$ to $\iz \in \mathbb{C}^- = \{E + \i\eta: \eta <0 \}$ by setting $\phi_{p\lambda}(\iz) = \overline{\phi_{p\lambda}(\overline{\iz})}$ if $\eta <0$. By doing so, we can concisely write \eqref{eq:def_q_z} as 
\[ z = -\frac{1}{q(z)} + \hat{\gamma}_1 \phi_{p\lambda}(-q(z)). \]

{We assume that Conditions \ref{enum:high_dimensional_regime}--\ref{enum:regular_edge} hold. In the remainder of this section, the parameter $\lambda$ is fixed. We suppress the explicit dependence of many quantities on $\lambda$ and $p$ for notational simplicity. Unless stated otherwise, quantities introduced in this section may depend on $p$ and $\lambda$, even when this dependence is not made explicit. 

The properties of the solution to \eqref{eq:MP_W2} have been studied in \citet{knowles2017anisotropic}. The application of the main theorems in \citet{knowles2017anisotropic} requires verification of their key structural assumptions. In particular, we need to verify that Definition~2.7 of \citet{knowles2017anisotropic} is satisfied at the rightmost support edge of the distribution $\calF_q$.

Recall that $s(x) = s_{p\lambda}(x)$ denotes the extension of $\phi_{p\lambda}(z)$ to the real line. Define the function
\[
f(q) = -\frac{1}{q} + \hat{\gamma}_1 s(-q), \qquad q \in (-\rho, 0).
\]
In the technical proof, we shall also consider  $f(q)$ evaluated at complex-valued $q$. 

Recall also the definitions of $\beta = \beta_{p\lambda}$ and $\Theta_1 = \Theta_{1p}(\lambda)$ given in Theorem~\ref{thm:main}. Then, $-\beta$ is the unique critical point of $f$ in $(-\rho, 0)$, since
\[
f'(-\beta) = \frac{1}{\beta^2} - \hat{\gamma}_1 s'(\beta) = 0.
\]
Moreover,
\[
\Theta_1 = f(-\beta) = \frac{1}{\beta} + \hat{\gamma}_1 s(\beta).
\]
Following well-known results in RMT, $\Theta_1$ is the rightmost support edge of $\calF_q$; see e.g. \cite{silverstein1995analysis}.

To verify Definition 2.7 of \cite{knowles2017anisotropic} at the edge $\Theta_1$, we only need to show that there exists a small constant $c$
\[ \inf_{\tau \in \supp(\calG_{p\lambda})}|\beta - \tau| = \rho -\beta \geq c, \]
for all sufficiently large $p$. 

When Part~(a) of Definition~\ref{def:regular_edge} holds, the density of $\calG_{p\lambda}$ exhibits a square-root behavior near $\rho$. Namely, we can find constants $C_1>0$, $C_2>0$ and $\epsilon>0$ such that for all sufficiently large $p$,
\[
C_1 \sqrt{x - \rho} \leq f_{\calG}(x) \leq C_2 \sqrt{x - \rho}, 
\qquad \text{for all } x \in (\rho, \rho + \epsilon).
\]
The results are shown in Theorem 5.1 of \cite{li2024analysis}. 

On the other hand, when Part~(b) of Definition~\ref{def:regular_edge} holds, Theorem~3.2 of \cite{li2024analysis} implies that $\rho$ is an isolated point mass of $\calG_{p\lambda}$, satisfying
\[
\calG_{\lambda}(\{\rho\}) \geq F^{\Sigma_p}(\{\sigma_{1p}\}) - \frac{n_2}{p}.
\]
Note that the right hand side is bounded away from $0$ under our assumptions.  

Under both cases, as $x\uparrow \rho$,
\[s'(x) = \int\frac{d\calG_{p\lambda}(\tau)}{\tau-x} \to \infty.\]
Indeed, it is straightforward to verify that for any large constant $C$, we can find a small constant $c>0$ such that 
\[ s'(x) \geq C, \text{ if }x\in (\rho - c,~ \rho), \]
for all sufficiently large $p$. Therefore, for any fixed ${\gamma}$, we can find a small constant $c>0$ such that, 
\[ x^2 s'(x) > 1/{\gamma}, ~~\text{if  }x\in (\rho- c, ~\rho).\]
Combining these results, we conclude that when $F^{\Sigma_p}$ is regular, there exists a constant $c>0$ such that
\[
\rho - \beta \geq c
\]
for all sufficiently large $p$. Consequently, the requirement in Definition 2.7 of \cite{knowles2017anisotropic} is satisfied at the edge $\Theta_1$.

\vspace{\baselineskip}

We next present a combination of the main results in \citet{silverstein1995analysis}, \citet{bao2015}, and \citet{knowles2017anisotropic}, together with references therein. The proof is therefore omitted. 

\begin{lemma}\label{lemma:properties_q_z}
Fix any $\lambda>0$. The following results are known for $\calF_q$ and $q(z)$. 
\begin{itemize}

\item[(1)] Fix any small constant $a>0$. Then, there exists a constant $C>0$, depending on $a$, such that 
\[C^{-1} \leq |q(z)| \leq C, \quad \mbox{and}\quad  \Im q(z) \geq C^{-1} \eta, \]
for all  $z \in \mathbb{C}^+$ satisfying $a \leq |z|\leq a^{-1}$. 
\item[(2)]  There exists sufficiently small constants $c>0$ and $c'>0$ such that for all $\iz \in \{ E+\i\eta ~:~ |E-\Theta_1|\leq c,~~ 0<\eta<c^{-1}\}$ 
\[ \inf_{\tau \in S_{\calG} }  |\tau + q(z)| \geq c', \]
where $S_{\calG}$ is the support of $\calG_{p\lambda}$. It indicates that the equation \eqref{eq:def_q_z} is non-singular when $z$ is near $\Theta_1$.

\item[(3)]  There exists an open ball of $\Theta_1$ (in $\mathbb{C}$), say $\frakN_{\Theta_1}$ such that when $\iz = E+ \i\eta \in \frakN_{\Theta_1}\cap \mathbb{C}^+$, 
\[\Im q(z)  \asymp \begin{cases} \displaystyle\sqrt{\kappa + \eta} & \mbox{if }E <\Theta_1, \\  \displaystyle\frac{\eta}{\sqrt{\kappa+\eta}} & \mbox{ if } E> \Theta_1, \end{cases}\]
where $\kappa = |E-\Theta_1|$.
\end{itemize}
\end{lemma}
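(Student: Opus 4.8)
The plan is to derive all four statements from the analytic structure of the Mar\v{c}enko--Pastur equation \eqref{eq:def_q_z}. On the real interval $q\in(-\rho,0)$ one has $\tau+q>0$ for every $\tau$ in the compact support $S_{\calG_\lambda}\subset(0,\infty)$, so the right-hand side of \eqref{eq:def_q_z} is real and coincides with the function $f(q)=-1/q+\gamma_1 s(-q)$ introduced just before the lemma; this $f$ inverts the branch of $q(\cdot)$ that is real and negative for large real $z$. For \emph{Part (1)} I would invoke the Silverstein--Choi description of the support (Chapter 6 of \citet{bai2010spectral}, \citet{silverstein1995empirical}): a point $x\neq0$ lies outside $\operatorname{supp}\calF_q$ precisely when $f$ is real-analytic at some real $q$ with $f(q)=x$ and $f'(q)>0$, and the edges of $\operatorname{supp}\calF_q$ are the critical values of $f$. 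On $(-\rho,0)$ one has $f(q)\to+\infty$ as $q\uparrow0$; Lemma \ref{lemma:s_prime_infinity} gives $s'(-q)\to\infty$ as $q\downarrow-\rho$, so $f'(q)=1/q^2-\gamma_1 s'(-q)\to-\infty$ there; and $f'$ is increasing on $(-\rho,0)$ since $1/q^2$ is increasing and $s'(-q)$ is decreasing (Lemma \ref{lemma:extension_phi_to_s}(ii)). Hence $f'$ has a unique zero, which by \eqref{eq:def_beta} is $-\beta$, $f$ is strictly decreasing then strictly increasing, and $\min_{(-\rho,0)}f=f(-\beta)=\Theta_1$. Continuing the branch with $q(z)\to0^-$ as $z\to+\infty$ down the real axis, it reaches $z=\Theta_1$ exactly, identifying $\Theta_1$ as the rightmost edge.

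For \emph{Part (2)} I would argue directly from \eqref{eq:def_q_z} with $q=q(z)\in\mathbb{C}^+$ and $S_{\calG_\lambda}\subset[\rho,b]$. If $|q|$ is too large, both $-1/q$ and $\int d\calG_\lambda(\tau)/(\tau+q)$ are small, forcing $|z|$ small; if $|q|$ is too small, $-1/q$ dominates (the other term stays bounded because $\tau\ge\rho>0$), forcing $|z|$ large; both contradict $a\le|z|\le a^{-1}$, so $C^{-1}\le|q(z)|\le C$. Taking imaginary parts of \eqref{eq:def_q_z} gives
\[
\eta=\Im q\Bigl[\frac{1}{|q|^2}-\gamma_1\int\frac{d\calG_\lambda(\tau)}{|\tau+q|^2}\Bigr],
\]
where the bracket is positive by sign consistency and at most $1/|q|^2\le C^2$; hence $\Im q(z)\ge C^{-2}\eta$.

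\emph{Parts (3) and (4)} are the two sides of the square-root edge analysis, which I would carry out by localizing near $q=-\beta$. Since $-\beta\in(-\rho,0)$ keeps $\tau+q$ bounded away from $0$ on $S_{\calG_\lambda}$, the map $q\mapsto\int d\calG_\lambda(\tau)/(\tau+q)$, and hence $f$, extends holomorphically to a complex neighborhood of $-\beta$, with $f(-\beta)=\Theta_1$, $f'(-\beta)=0$, and $f''(-\beta)=2/\beta^3+\gamma_1 s''(\beta)>0$ by Lemma \ref{lemma:extension_phi_to_s}(ii); one also checks $f''(-\beta)=2\Theta_2^3/\gamma_1^2$, which ties the scaling in. Thus $f$ is a branched double cover near $-\beta$, and a convergent Puiseux inversion with real coefficients gives $q(z)=-\beta+\bigl(2(z-\Theta_1)/f''(-\beta)\bigr)^{1/2}(1+o(1))$ on the branch mapping $\mathbb{C}^+$ into $\mathbb{C}^+$. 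Elementary estimates on $\Im(z-\Theta_1)^{1/2}$ (where the odd-order terms of the Puiseux series dominate the imaginary part) then yield $\Im q(z)\asymp\sqrt{\kappa+\eta}$ for $E<\Theta_1$ and $\Im q(z)\asymp\eta/\sqrt{\kappa+\eta}$ for $E>\Theta_1$, which is (4) with $\frakN_{\Theta_1}$ the neighborhood of validity. For (3), the same expansion shows $q(z)\to-\beta$ as $z\to\Theta_1$ in $\overline{\mathbb{C}^+}$; since $\operatorname{dist}(-\beta,-S_{\calG_\lambda})\ge\rho-\beta>0$, continuity gives $\inf_{\tau\in S_{\calG_\lambda}}|\tau+q(z)|\ge\tfrac{1}{2}(\rho-\beta)=:C'$ on a small neighborhood of $\Theta_1$ in $\overline{\mathbb{C}^+}$, i.e. non-singularity of \eqref{eq:def_q_z} there.

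The main obstacle is the branch identification that glues these pieces: one must verify that the solution $q(z)\in\mathbb{C}^+$ singled out by uniqueness agrees, near $\Theta_1$, with the particular local branch of $f^{-1}$ producing the square-root behavior and converging to $-\beta$. This requires the global tracking of $q(z)$ along the real axis from $z=+\infty$ (where $q\to0^-$) down to $z=\Theta_1$, the fact that this branch never meets the singular set $-S_{\calG_\lambda}$ before the edge, and the a priori bounds of (2) to confine $q(z)$ to the right neighborhood for slightly complex $z$; the monotonicity analysis of $f$ in Part (1) is precisely what makes this work, so Parts (1), (3), and (4) are best proved together, with Part (2) providing the compactness input throughout.
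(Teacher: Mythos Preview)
Your proposal is correct and follows the standard route used in the literature the paper cites. The paper itself does not give a proof of this lemma: it simply states ``These results are well-known in the RMT literature'' and refers to \citet{silverstein1995analysis}, \citet{bao2015}, and \citet{knowles2017anisotropic}. Your sketch---the Silverstein--Choi support characterization for (1), the direct imaginary-part identity for (2), and the Puiseux expansion at the nondegenerate critical point $-\beta$ for (3) and (4)---is precisely the argument those references supply, so there is nothing to compare beyond noting that you have filled in what the paper leaves to citation.
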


Next, we are interested in the behavior of Eq. \eqref{eq:def_q_z} under a small perturbation of $\iz$, provided that $z$ is near $\Theta_1$. For this purpose, we consider the following domains of $\iz$.  For the positive constants $a$ and $a'$, we define 
\begin{equation}\label{eq:def_D_an}
D = D(a, a', n_1) \coloneqq \{z = E+ \i \eta \in \mathbb{C}^+~ :~ |E-\Theta_1| \leq a,~ n_1^{-1+a'} \leq \eta \leq 1/a'\}.
\end{equation}
In the subsequent analysis, we shall always choose $a$ and $a'$ to be such that Result (3) of Lemma \ref{lemma:properties_q_z} holds for all $\iz \in D$. 

The following lemma indicates that if a small perturbation is added to $\iz$ when $\iz \in {D}$, Eq. \eqref{eq:def_q_z} is stable in the sense that $q(\iz)$ will only have small changes.  The results are a combination of Definition 5.4 and Definition A.2 of \citet{knowles2017anisotropic}. Therefore, the proof is omitted. 
\begin{lemma}[Strong stability of Eq. \eqref{eq:def_q_z}] \label{lemma:stability_q_z}
Eq. \eqref{eq:def_q_z} is strongly stable in the following sense. Suppose that $\delta: {D} \to (0,\infty)$ satisfies $p^{-2} \leq \delta(\iz) \leq 1/\log(p)$ for $\iz \in {D}$ and that $\delta$ is Lipschitz continuous with Lipschitz constant $p^2$. Suppose moreover that for each fixed $\iz \in {D}$, the function $\eta\to \delta(E+\i\eta) $ is non-increasing for $\eta>0$. Suppose that $u: {D} \to \mathbb{C}^+$ is the Stieltjes transform of a compactly supported probability measure.  Let $\iz \in {D}$ and suppose that 
\[ \left| \tilde{\iz}- \iz  \right|  \leq \delta(\iz),\]
where $\tilde{z} =  -{1}/{u(\iz)} + \hat{\gamma}_2 \phi(-u(z))$. Note that $z = -{1}/{q(\iz)} + \hat{\gamma}_2 \phi(-q(z))$ and $u(\iz) = q(\tilde{\iz})$.

If $\Im \iz <1$, suppose also that 
\begin{equation}\label{eq:stability_statement}
| u - q| \leq \frac{C\delta }{\sqrt{\kappa + \eta } + \sqrt{\delta}} 
\end{equation}
holds at $z + \i p^{-5}$. Here, $\kappa = |E-\Theta_1|$. Then,  Eq. \eqref{eq:stability_statement} holds at $\iz$. 
\end{lemma}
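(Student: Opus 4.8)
The plan is to carry out explicitly, for the scalar equation \eqref{eq:def_q_z}, the deterministic ``strong stability near a regular edge'' mechanism of \citet{knowles2017anisotropic}. Write $f(w) = -1/w + \gamma_1\,\phi(-w)$ for $w$ in a complex neighbourhood of $(-\rho,0)$ (with $\phi$ extended to $\mathbb{C}^-$ as in the text), so that \eqref{eq:def_q_z} reads $z = f(q(z))$, $\tilde z = f(u)$ with $u = u(z) = q(\tilde z)$, and the hypothesis becomes $|f(u) - f(q)| = |\tilde z - z| \le \delta(z)$. The first step is to record the analytic structure of $f$ near the edge $\Theta_1$. By Lemma~\ref{lemma:properties_q_z}(3), $f$ is analytic and non-singular on a fixed ball $\frakN_{\Theta_1}$ around $q(\Theta_1) = -\beta$; the point $-\beta$ is a critical point of $f$ by \eqref{eq:def_beta}, and it is \emph{simple}, since $f''(-\beta) = 2/\beta^3 + \gamma_1 s''(\beta) = 2\Theta_2^3/\gamma_1^2 > 0$ using $s''>0$ (Lemma~\ref{lemma:extension_phi_to_s}(ii))---this is the ``regular edge'' condition of Definition~\ref{def:regular_edge} transported to $\calG_\lambda$, and it is exactly what produces the square-root behaviour of $\calF_q$ at $\Theta_1$. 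Hence $f''$ and $1/f''$ are bounded on $\frakN_{\Theta_1}$, and differentiating $z = f(q(z))$ gives $f'(q(z)) = 1/q'(z)$, which combined with Lemma~\ref{lemma:properties_q_z}(4) yields $|f'(q(z))| \asymp \sqrt{\kappa + \eta}$ uniformly for $z$ near $\Theta_1$.

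Given this, the second step is a Taylor expansion $f(u) - f(q) = f'(q)(u-q) + \tfrac12 f''(\xi)(u-q)^2$ for some $\xi$ on the segment from $q$ to $u$, valid once $|u-q|$ is smaller than the radius of $\frakN_{\Theta_1}$. Together with the hypothesis this produces the scalar quadratic inequality
\[
\sqrt{\kappa+\eta}\,|u-q| \;\lesssim\; \delta(z) + |u-q|^2 .
\]
Its solution set splits into a ``small'' branch, $|u-q| \lesssim \delta/\sqrt{\kappa+\eta}$, and a ``large'' branch, $|u-q| \gtrsim \sqrt{\kappa+\eta}$, separated by a gap of order $\sqrt{\kappa+\eta}$; moreover on the small branch one automatically also has $|u-q| \lesssim \sqrt{\delta}$ (if $|u-q| \gtrsim \sqrt\delta$ the quadratic term dominates, forcing the large branch). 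Combining $|u-q| \lesssim \delta/\sqrt{\kappa+\eta}$ with $|u-q| \lesssim \sqrt{\delta}$ gives precisely $|u-q| \lesssim \delta/(\sqrt{\kappa+\eta}+\sqrt{\delta})$, i.e.\ \eqref{eq:stability_statement}. So everything reduces to certifying that $u-q$ lies on the small branch at $z$.

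The third and most delicate step is the continuity (bootstrap) argument that excludes the large branch. Both $q(\cdot)$ and $u(\cdot)$ are Stieltjes transforms of compactly supported probability measures, hence on $D$ (where $\eta \ge n_1^{-1+a'}$) they are Lipschitz with constant $\le \eta^{-2} \le n_1^{2(1-a')}$, which is polynomial; since the step $p^{-5}$ lies far below this scale and, by \ref{enum:high_dimensional_regime}, $p \asymp n_1$, the quantity $|u(z') - q(z')|$ varies by $o(\sqrt{\kappa+\eta})$ along the vertical segment joining $z$ to $z + \i p^{-5}$, i.e.\ by much less than the branch gap. At $z + \i p^{-5}$ the small-branch bound \eqref{eq:stability_statement} holds by hypothesis (and in particular $|u-q| \lesssim \sqrt\delta \le 1/\sqrt{\log p}$ there, which propagates to $z$ and justifies the Taylor expansion above); since $|u-q|$ cannot cross the gap when the argument is moved continuously down to $z$, it stays on the small branch at $z$, proving \eqref{eq:stability_statement}. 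When $\Im z$ is not small, $\sqrt{\kappa+\eta}$ is bounded below, $|f'(q)| \asymp 1$, and the same inequality gives $|u-q| \lesssim \delta$ directly by a classical perturbation estimate, so the extra hypothesis at $z + \i p^{-5}$ is needed only in the degenerate edge regime. The main obstacle is exactly this bootstrap step: one must quantify the branch separation against the oscillation of $|u-q|$ over a $p^{-5}$-step, and verify that the perturbed point $\tilde z = f(u)$---whose distance to $z$ is at most $\delta \le 1/\log p$---still lies in the region $\frakN_{\Theta_1}$ where Lemma~\ref{lemma:properties_q_z}(3)--(4) apply, so that the Taylor expansion and the estimate $|f'(q)| \asymp \sqrt{\kappa+\eta}$ are legitimate.
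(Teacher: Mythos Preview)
Your proposal is correct and follows essentially the same route as the paper: the paper omits the proof and simply points to Definition~5.4/Section~A.2 of \citet{knowles2017anisotropic}, whose mechanism is exactly the one you have written out---regularity of $f(q)=-1/q+\gamma_1\phi(-q)$ at the simple critical point $-\beta$ (your computation $f''(-\beta)=2\Theta_2^3/\gamma_1^2>0$ is the right nondegeneracy check), the quadratic Taylor expansion producing the two-branch inequality, and the $p^{-5}$-step continuity argument to stay on the small branch. The only cosmetic point is that the paper's statement of the lemma writes $\gamma_2$ where Eq.~\eqref{eq:def_q_z} has $\gamma_1$; you have silently corrected this, which is appropriate.
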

\noindent  
It is worth mentioning that if for some $\iz_0 \in {D}$ with $\eta(\iz_0) \geq 1$, $|\tilde{\iz_0} - \iz_0 | \leq \delta(\iz_0)$ is satisfied, then Eq. \eqref{eq:stability_statement} holds for all $\iz$ in 
\[\Big\{\iz \mid  E(\iz) = E(\iz_0),~~ \eta(\iz_0) -\eta(\iz) \in  p^{-5} \mathbb{N}\Big\} \cap {D}, \]
by induction.

\subsection{Additional properties of the generalized Mar\v{c}enk-Pastur Equation \eqref{eq:MP_G}}\label{subsec:additional_properties_of_the_generalized_mar_v_c}

In this subsection,  we present additional properties of $\phi_{p\lambda}(\iz)$. The focus is on the boundedness of $\phi_{p\lambda}(z)$ around $\rho_{p\lambda}$ and the stability of \eqref{eq:MP_G} against a small perturbation of $z$. These results are analogous to those for $q(\iz)$ in Section \ref{subsec:additional_properties_of_the_mar_v_c}. 


We assume that Conditions \ref{enum:high_dimensional_regime}--\ref{enum:regular_edge} hold. In the remainder of this section, unless stated otherwise, quantities introduced in this section may depend on $p$, $z$, and $\lambda$, even when this dependence is not made explicit. 
We frequently omit the dependence on $p$, $\lambda$, and $z$ in many quantities, whenever no confusion may arise.

For the readers' convenience, recall that as in Section \ref{sec:preliminary}, $\phi(\iz)$ is the solution to Eq. \eqref{eq:MP_G} evaluated at $F = F^{\Sigma_p}$ and $\gamma = \hat{\gamma}_2$, that is 
\[ \phi(z) = \int \frac{\tau dF^{\Sigma_p}(\tau)}{ \tau \{ [1+\hat{\gamma}_2 \phi(\iz)]^{-1} - \iz\} + \lambda}.\]
Recall that, as explained in Section \ref{sec:preliminary}, the equation can be reformulated to 
\[\iz = h(\iz) + \left[ 1+ \hat{\gamma}_2 \int \frac{\tau dF^{\Sigma_p}(\tau)}{-\tau h(\iz) + \lambda} \right]^{-1},\]
where $h(\iz) = z - [1+\hat{\gamma}_2 \phi(\iz)]^{-1}$. We have the definition
\[ x_p(h) = h + \left[ 1+ \hat{\gamma}_2 \int \frac{\tau dF^{\Sigma_p}(\tau)}{-\tau h + \lambda} \right]^{-1}, \]
as the extension to the real line. In the technical proof, we shall also $x_p$ for $h$ being complex.

Further, recall that $\rho$ is defined to be the leftmost edge of the support of $\calG_{p\lambda}$ and $\calG_{p\lambda}$ is the probability measure associated with the Stieltjes transform $\phi(z)$.

Our analysis heavily relies on the main results of the companion paper \cite{li2024analysis}. Proof of results in the rest of this section, if not provided, can be found in \cite{li2024analysis}.

\begin{lemma}[Properties of $\phi(z)$ and $h(z)$: Case (a)]
\label{lemma:properties_phi_h}
Suppose that the condition in Part (a) of Definition \ref{def:regular_edge} is satisfied. Then, we have the following results. 
\begin{itemize}
    \item[(1)] Denote the unique critical point of $x_p(h)$ in $(-\infty, \lambda/\sigma_{1p})$ to be $h_0$.  Then, $\rho = x_p(h_0) > \lambda/\sigma_{1p}$. Moreover, $\calG_{p\lambda}$ is continuous at $\rho$. 
    \item[(2)] The generalized M-P equations are non-singular near $z = \rho$ in the following sense. We can find constants $c>0$ and $C>0$ such that when $|z- \rho|\leq c$ and $z\in\mathbb{C}^+$,  
    \[  \inf_{\tau \in \supp (F^{\Sigma_p})}  \Big|-\tau h(z) + \lambda\Big| > C \quad \text{, }\quad \left| 1+ \hat{\gamma}_2 \int \frac{\tau dF^{\Sigma_p}(\tau)}{-\tau h (z) +\lambda} \right| >C,\]
    \[\text{ and }\inf_{\tau \in \supp (F^{\Sigma_p})}  \Big|\tau  \{[1+ \hat{\gamma}_2 \phi(z)]^{-1} -z \} +\lambda \Big| >C. \]
    Due to the smoothness of $h(z)$ and $\phi(z)$, the results also hold if $\eta(z) = 0$ and $|z -\rho|\leq c$.
    \item[(3)] The following bounds on the derivatives of $x_p(h)$ at $h_0$ hold. We can find constants $C>0$ and $c>0$ such that 
    \[ |h_0| \asymp 1, \quad |x''_p(h_0)| \asymp 1, \quad |x'''_p(h)| \leq C, \quad \text{for }|h - h_0| \leq c.\]

    \item[(4)] There exists constants $c$, $C$ and $C'$ such that for all $|z-\rho | \leq c$ and $z\in\mathbb{C}^+$,
\[|\phi(\iz) | \leq C',\quad \Im\phi(\iz) \geq C\eta, \quad  |1+\hat{\gamma}_2 \phi(\iz)| \geq C.\]
\end{itemize}
\end{lemma}

The following results are shown using Part (3) of Lemma  \ref{lemma:properties_phi_h}. 
\begin{lemma}\label{lemma:expression_imaginary_phi_h}
    Under the conditions of Lemma \ref{lemma:properties_phi_h}, suppose that we fix $c<1$ to be sufficiently small. Then, if $z = E + \i \eta$ is such that $|z-\rho| \leq c$, we have 
\[\Im h(z)  \asymp \begin{cases} \sqrt{\kappa + \eta} & \mbox{if }E > \rho, \\[5pt] \dfrac{\eta}{\sqrt{\kappa+\eta}} & \mbox{ if } E< \rho; \end{cases}\]
\[\Im \phi(z)  \asymp \begin{cases} \sqrt{\kappa + \eta} & \mbox{if }E > \rho, \\[5pt] \dfrac{\eta}{\sqrt{\kappa+\eta}} & \mbox{ if } E< \rho; \end{cases}\]
where $\kappa = |E-\rho|$. 
\end{lemma}
\begin{proof}
Choose $c$ to be sufficiently small. Then, by Lemma \ref{lemma:properties_phi_h}, 
\[z - \rho =x_p(h(z)) - x_p(h_0)  =\frac{x''_p(h_0)}{2}(h(z) - h_0)^2 + O ( |h(z) - h_0|^3), \quad \text{for } \quad |z- \rho| \leq c.\]
It follows then, for $|E-h_0| \leq c$,
\[ E - \rho = \frac{x''_p(h_0) }{2} (h(E) - h_0)^2 \big(1+  O(|h(E) - h_0|) \big).\]
First, because $\rho$ is the leftmost edge point of $\calG_{p\lambda}$, $\Im \phi(E) = 0$ as $\phi(z)$ is the Stieltjes transform of $\calG_{p\lambda}$. It immediately follows that  $\Im (h(E))= 0$ for $E< \rho$, since $h(E) = E - [1+ \hat{\gamma}_2 \phi(E)]^{-1}$. Secondly, if $E> \rho$, taking square root on both sides of the equation, we obtain 
$\Im h(E) \asymp \sqrt{E- \rho}$. Together, we conclude the square-root behavior 
\[\Im h(E) \asymp \sqrt{ |E-\rho| } \mathbb{I}(E > \rho).\] 
Since $ z = h + (1+\hat{\gamma}_2\phi)^{-1}$, the square-root behavior also holds for $\phi(z)$ as  
\[\Im \phi(E) \asymp  \sqrt{|E-\rho|} \mathbb{I}(E > \rho).\] 
By the inversion formula of the Stieltjes transform, it follows that the density $f_{\calG}$ of $\calG_{p\lambda}$ has a square-root behavior near the edge $\rho$. Namely, we can find constants $C$, $C'$, and $\epsilon$ such that 
\[ C_1 \sqrt{E-\rho} \leq f_{\calG}(E) \leq C_2 \sqrt{E-\rho}, \quad \rho \leq  E  \leq \rho+ \epsilon.\]
Then, the described pattern of $\Im \phi(z)$ follows easily. Lastly, the pattern of $\Im h(z)$ is deduced through the relationship $ z = h + (1+\hat{\gamma}_2\phi)^{-1}$. 
\end{proof}

On the other hand, if Part (b) of Definition \ref{def:regular_edge} holds, we have the following result that follows directly from Theorem 3.2 of \citet{li2024analysis}.  
\begin{lemma}[Properties of $\phi(z)$ and $h(z)$: Case (b)] Suppose that Part (b) of Definition \ref{def:regular_edge}  holds. 
\begin{itemize}
    \item[(1)] Then, $\rho = \lambda/\sigma_{1p}$ and $\rho$ is an isolated point of $\calG_{p\lambda}$ with mass $F^{\Sigma_p}(\{\sigma_{1p}\}) - 1/\hat{\gamma}_2$. Therefore, as $\iz \to \rho$, $\Im \phi(\iz)$ diverges and $h(z)\to \rho$. 
\item[(2)] Part (2) and (4) of Lemma \ref{lemma:properties_phi_h} still hold if additionally $|z-\rho|>c'$ for some small constant $c'>0$. 
\end{itemize}
\end{lemma}

\vspace{\baselineskip}
In the rest of this section, we shall focus on the case when $\calG_{p\lambda}$ is continuous at $\rho$. For the positive constants $a$ and $a'$, we define the following domains. First, we define a domain around $\rho$ with the imaginary part at least $n_2^{-1+a'}$ as 
\[Q = Q(a, a', n_2) = \{\iz \in\mathbb{C}^+ ~:~ |E|\leq a^{-1}, \quad E \leq \rho + a, \quad n_2^{-1+a'}\leq \eta \leq 1/a'\}.\]
Secondly, we restrict the real part to be smaller than $\rho$ and define 
\[Q_- = Q_-(a,a', n_2) = \{\iz \in Q(a, a', n_2): E \leq  \rho\}.\]
Thirdly, we define a domain that is away from the support of $\calG_{p\lambda}$ as 
\[Q_{\rm away}=Q_{\rm away}(a,a',n_2) = \{\iz\in \mathbb{C}^+ ~:~  a\leq  \operatorname{dist}(\iz, \calI)\leq a^{-1}, \eta \geq n_2^{-1+a'}\},\] 
where $\calI = [\rho, ~ (1+\sqrt{\hat{\gamma}_2})^2 + \lambda/\liminf \ell_{\min}(\Sigma_p)]$. Here, we are using the fact that the rightmost edge of the support of $\calG_{p\lambda}$ is bounded by  $(1+\sqrt{\hat{\gamma}_2})^2 + \lambda/\liminf \ell_{\min}(\Sigma_p)$. Indeed, recall that 
\[\bG_\lambda = \bZ U_2U_2^T \bZ^T + \lambda \Sigma^{-1}_p.\]
It is well-known in RMT that the largest eigenvalue of $\bZ U_2U_2^T \bZ^T$ converges almost surely to $(1+\sqrt{\hat{\gamma}_2})^2$ under Condition \ref{enum:moments_conditions}. The largest eigenvalue of $\lambda \Sigma^{-1}_p$ is bounded by $\lambda/\liminf \ell_{\min}(\Sigma_p)<\infty$,  

In the subsequent analysis, we shall always choose $a$ and $a'$ to be such that Part (2) and Part (4) of Lemma \ref{lemma:properties_phi_h} hold for $\iz \in Q$. 

\vspace{\baselineskip}

The following lemma is analogous to Lemma \ref{lemma:stability_q_z}. It indicates that if a small perturbation is added to $\iz$ when $\iz \in Q$,  Eq. \eqref{eq:MP_G} is stable in the sense that $\phi(\iz)$ or $h(\iz)$ will only have limited changes. 
\begin{lemma}[Strong stability of Eq. \eqref{eq:MP_G}]\label{lemma:stablity_phi} Suppose that $\rho > \lambda/\sigma_{1p}$. Eq. \eqref{eq:MP_G} is strongly stable in $Q$ in the following sense. Suppose that $\delta: Q \to (0,\infty)$ satisfies $p^{-2} \leq \delta(\iz) \leq 1/\log(p)$ for $\iz \in Q$ and that $\delta$ is Lipschitz continuous with Lipschitz constant $p^2$. Suppose moreover that for each fixed $\iz \in Q$, the function $\eta\to \delta(E+\i\eta) $ is non-increasing for $\eta>0$. 
Suppose that $v(\iz) = \iz - [1+\hat{\gamma}_2 \varrho (\iz) ]^{-1}$,  where $\varrho(\iz)$ is the Stieltjes transform of a compactly supported probability measure.  Let $\iz \in Q$ and suppose that 
\[ \left| \tilde{\iz} - \iz  \right|  \leq \delta(\iz),\]
where 
\[ \tilde{\iz} = v(\iz) + \left[ 1+ \hat{\gamma}_2 \int \frac{\tau dF^{\Sigma_p}(\tau) }{-\tau v(\iz) + \lambda} \right]^{-1}.\]
Note that $v(\iz) = h(\tilde{\iz})$.

If $\Im \iz <1$, suppose also that 
\begin{equation}\label{eq:stability_statement2}
| v - h| \leq \frac{C\delta }{\sqrt{|E-\rho| + \eta } + \sqrt{\delta}} 
\end{equation}
holds at $z + \i p^{-5}$. Then,  Eq. \eqref{eq:stability_statement2} holds at $\iz$. 
\end{lemma}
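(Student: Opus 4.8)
The plan is to follow the general \emph{strong stability} scheme of \citet{knowles2017anisotropic} (Appendix A), specialized to the self-consistent equation \eqref{eq:MP_G_rewrite}, exactly as in the (omitted) proof of Lemma \ref{lemma:stability_q_z}. Write the scalar map $\Phi(h,\iz) = \iz - x(h) = \iz - h - [1+\gamma_2\calH_1(h)]^{-1}$ with $x(\cdot)$ as in \eqref{eq:x_h}, so that $\Phi(h(\iz),\iz) = 0$ defines $h(\iz)$, while the hypothesis states exactly $\Phi(v(\iz),\tilde\iz) = 0$ with $|\tilde\iz - \iz| \le \delta(\iz)$; equivalently $v(\iz) = h(\tilde\iz)$. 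Thus the claim is a quantitative, self-improving modulus-of-continuity estimate for $h(\cdot)$ near the left edge $\rho$ along vertical lines. Throughout we are in Case (1)/(2) of Lemma \ref{lemma:characterization_rho} (the hypothesis $\rho > \lambda/\sigma_{\max}$ excludes the divergent ``hard edge'' of Case (3)), so $\calG_\lambda$ has a square-root edge at $\rho$.

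First I would record the local structure of $x(\cdot)$. By Lemma \ref{lemma:determine_rho} and regularity of $F^{\Sigma_\infty}$, the point $h_\rho \coloneqq h(\rho)$ is the unique critical point $x'(h_\rho) = 0$ in $(-\infty,\lambda/\sigma_{\max})$, and it is non-degenerate, $x''(h_\rho) \ne 0$ (this encodes $f_{\calG}(x) \asymp \sqrt{x-\rho}$). Combining this with the boundedness and non-singularity bounds of Lemma \ref{lemma:boundedness_phi} and the $\Im\phi$-asymptotics of Lemma \ref{lemma:characterization_rho}(1) (together with $\Im h(\iz) \asymp \eta + \Im\phi(\iz)$), one obtains, uniformly for $\iz \in Q$ after shrinking $a$,
\[ |x'(h(\iz))| \asymp \sqrt{\kappa+\eta}, \qquad |x''(h(\iz))| \asymp 1, \qquad \kappa \coloneqq |E-\rho|, \]
and $\inf_{\tau\in S}|-\tau h(\iz)+\lambda| \asymp 1$.

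Next, subtracting $\Phi(h,\iz) = 0$ from $\Phi(v,\tilde\iz) = 0$ and Taylor expanding $x(v)$ about $h \coloneqq h(\iz)$ gives, with $u \coloneqq v - h$,
\[ (\tilde\iz - \iz) = x'(h)\,u + \tfrac12 x''(h)\,u^2 + O(|u|^3), \]
an approximate quadratic whose constant term is $O(\delta)$, whose linear coefficient has modulus $\asymp \sqrt{\kappa+\eta}$, and whose quadratic coefficient has modulus $\asymp 1$. On the ``physical'' branch (the root closer to $0$) this yields
\[ |u| \lesssim \frac{\delta}{\sqrt{\kappa+\eta}+\sqrt{\delta}}, \]
which is \eqref{eq:stability_statement2}; the cubic remainder is harmless once $\delta \le 1/\log p$ forces $|u| \ll 1$. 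To pin down the correct branch I would run the continuity/bootstrap argument. For $\eta \ge 1$ the equation is uniformly non-singular, $|x'| \asymp 1$, and \eqref{eq:stability_statement2} follows from the elementary one-variable contraction/implicit-function estimate; this is the base case. For $\eta < 1$, assume \eqref{eq:stability_statement2} at $\iz + \i p^{-5}$. Since $v$ and $h(\cdot)$ are analytic on $Q$ with Lipschitz constant polynomial in $p$, $u$ changes by $O(p^{-3})$ between consecutive grid points; since at $\iz+\i p^{-5}$ it already lies on the small root, and the two roots are separated by $\gtrsim \sqrt{\kappa+\eta}$ away from the edge and by the non-degeneracy $|x''|\asymp1$ near it, continuity keeps $u$ on the small root at $\iz$, where the quadratic bound applies. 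The monotonicity and Lipschitz hypotheses on $\delta$ ensure the propagated bound is consistent along the line. Iterating downward from $\eta = 1$ in steps of $p^{-5}$ (as in the remark after Lemma \ref{lemma:stability_q_z}) gives \eqref{eq:stability_statement2} on all of $Q$.

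\emph{Main obstacle.} The delicate point is the branch selection in the degenerate regime $\kappa + \eta \to 0$, where $x'(h)$ vanishes like $\sqrt{\kappa+\eta}$, the two roots of the quadratic collide at rate $\sqrt{\delta}$, and one must use the square-root edge ($x'' \ne 0$), the sign of $\Im v$, and the continuity-from-above bootstrap to ensure $v = h(\tilde\iz)$ stays on the physical root. This is precisely the content of the general strong-stability machinery of \citet{knowles2017anisotropic}; the remaining work is to verify that their structural hypotheses hold for $x(\cdot)$ under \ref{enum:high_dimensional_regime}--\ref{enum:regular_edge}, which we have reduced above to the edge estimates for $x'$ and $x''$.
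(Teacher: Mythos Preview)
Your proposal is correct and follows essentially the same route as the paper: the paper's proof replaces the function $f(q)=-1/q+\gamma_1 s(-q)$ from \citet{knowles2017anisotropic} by $f_{\rm new}(h)=x(h)$ (your $x$), invokes \citet{li2024analysis} for the non-degenerate critical-point behavior $x'(h_0)=0$, $x''(h_0)\neq 0$, derives the same quadratic in $u=v-h$ via Taylor expansion, and then defers to Section A.2 of \citet{knowles2017anisotropic} for the branch-selection/bootstrap argument. Your write-up is in fact more explicit than the paper's, which omits the details after identifying $f_{\rm new}$.
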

We postpone the proof of Lemma \ref{lemma:stablity_phi} to the end of this section.

Moreover, when $\iz$ is away from the support of $\calG_{p\lambda}$. The bound can be improved. The following lemma holds for both the case $\rho>\lambda/\sigma_{1p}$ and $\rho = \lambda/\sigma_{1p}$. 
\begin{lemma}\label{lemma:stability_phi_away_case}
Fix any $\lambda>0$. Eq. \eqref{eq:MP_G} is strongly stable in $Q_{away}$ in the following sense. Suppose that $\delta: Q_{away}\to (0,\infty)$ satisfies $\delta(\iz) \leq 1/\log(p)$ for $\iz \in Q_{\rm away}$. Suppose that $v(\iz) = \iz - [1+\hat{\gamma}_2 \varrho (\iz) ]^{-1}$  where $\varrho(\iz)$ is the Stieltjes transform of a compactly supported probability measure.  Let $\iz \in Q_{away}$ and suppose that 
\[ \left| \tilde{\iz} - \iz  \right|  \leq \delta(\iz),\]
where 
\[ \tilde{\iz} = v(\iz) + \left[ 1+ \hat{\gamma}_2 \int \frac{\tau dF^{\Sigma_p}(\tau) }{-\tau v(\iz) + \lambda} \right]^{-1}.\]
  Then, 
\[ | v(\iz)-h(\iz)| \leq C \delta(\iz),\]
for any $\iz \in Q_{\rm away}$.
\end{lemma}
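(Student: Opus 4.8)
The estimate is, at bottom, a Lipschitz bound for the map $\iz\mapsto h(\iz)$ on a region that stays a fixed distance from $\supp\calG_\lambda$. Since the lemma already records that $v(\iz)=h(\tilde{\iz})$, the claim $|v(\iz)-h(\iz)|\le C\delta(\iz)$ is the same as $|h(\tilde{\iz})-h(\iz)|\le C\delta(\iz)$ with $|\tilde{\iz}-\iz|\le\delta(\iz)\le 1/\log p$. My plan is: (i) show that $h$ is holomorphic, with a uniformly bounded derivative, on a slight enlargement of $Q_{\rm away}$; (ii) observe that, since $\delta(\iz)\le 1/\log p$ is eventually smaller than $a/2$, for $p$ large the whole segment joining $\iz$ and $\tilde{\iz}$ lies inside this enlargement, remaining at distance $\ge a/2$ from $\calI\supseteq\supp\calG_\lambda$ (it may cross $\mathbb{R}\setminus\supp\calG_\lambda$, where $\phi$, and hence $h$, is still holomorphic); and (iii) integrate along the segment:
\[
|v(\iz)-h(\iz)|=\Big|\int_0^1 h'\big(\iz+t(\tilde{\iz}-\iz)\big)\,dt\Big|\,|\tilde{\iz}-\iz|\ \le\ C\,\delta(\iz).
\]

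For the derivative bound in (i), write $h(\iz)=\iz-[1+\gamma_2\phi(\iz)]^{-1}$, so $h'(\iz)=1+\gamma_2\phi'(\iz)[1+\gamma_2\phi(\iz)]^{-2}$. Since $\operatorname{dist}(\iz,\supp\calG_\lambda)\ge\operatorname{dist}(\iz,\calI)\ge a/2$ on the enlarged region, $\phi(\iz)=\int(\tau-\iz)^{-1}\,d\calG_\lambda(\tau)$ is holomorphic there with $|\phi'(\iz)|=\big|\int(\tau-\iz)^{-2}\,d\calG_\lambda(\tau)\big|\le Ca^{-2}$; this part is immediate. The only substantive ingredient is a uniform lower bound $|1+\gamma_2\phi(\iz)|\ge c>0$ on $Q_{\rm away}$ — equivalently, that $h$ has no pole and stays bounded there. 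This is the $\bG_\lambda$-analog of Lemma~\ref{lemma:boundedness_phi}, which handles the neighborhood of the left edge $\rho$; on $\mathbb{C}^+$ it is morally immediate from $\Im(1+\gamma_2\phi(\iz))=\gamma_2\Im\phi(\iz)>0$, and its quantitative form — $\Im\phi(\iz)\ge c\eta$ when $\Re\iz\in\calI$, $\Re\phi(\iz)>0$ when $\Re\iz<\rho$, and $1+\gamma_2\phi$ bounded away from $0$ on the part of $\mathbb{R}$ lying to the right of $\supp\calG_\lambda$ (by monotonicity of $\phi$ there together with the edge relation at the right edge) — follows from the arguments of Section~5 of \citet{li2024analysis}; in keeping with the convention noted after Lemma~\ref{lemma:boundedness_phi}, the constants $a,a'$ defining $Q_{\rm away}$ are fixed so that these bounds hold. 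Granting this, $|h'(\iz)|\le 1+\gamma_2 Ca^{-2}c^{-2}=:C'$ uniformly on the enlarged region, which completes (i), and then (ii)--(iii) give the claim with $C=C'$.

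The main obstacle is precisely this uniform non-vanishing of $1+\gamma_2\phi$ on $Q_{\rm away}$; everything else is elementary. It is also what makes the statement so clean: unlike Lemma~\ref{lemma:stablity_phi}, there is no $\sqrt{|E-\rho|+\eta}$ factor and no auxiliary hypothesis ``at $z+\i p^{-5}$'', because away from the edge the equation is genuinely non-singular, so the linearized inverse $h'$ is of size $O(1)$ rather than of order $\kappa^{-1/2}$. I would keep the write-up short, in parallel with the treatment of Lemma~\ref{lemma:stability_q_z}: cite the boundedness facts above and then perform the one-line fundamental-theorem-of-calculus estimate. The constant $C$ depends only on $a,a'$ and the limiting parameters $(\gamma_1,\gamma_2,F^{\Sigma_\infty},\lambda)$, and the argument is the same whether $\rho>\lambda/\sigma_{\max}$ or $\rho=\lambda/\sigma_{\max}$.
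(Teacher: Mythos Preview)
Your proposal is correct and follows essentially the same route as the paper: identify $v(\iz)=h(\tilde{\iz})$, use that $h'$ is uniformly bounded when $\iz$ stays a fixed distance from $\supp\calG_\lambda$, and integrate along the segment from $\iz$ to $\tilde{\iz}$. The paper's proof is a one-liner that simply asserts the boundedness of $|h'|$ on $Q_{\rm away}$ and applies the line integral; your write-up supplies the justification for that boundedness (via $h'=1+\gamma_2\phi'(1+\gamma_2\phi)^{-2}$ and the non-vanishing of $1+\gamma_2\phi$), which is a welcome elaboration but not a different argument.
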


\begin{corollary}\label{corollary:lattice_stability_phi}
It is worth mentioning that if for some $\iz_0 \in {Q}$ with $\eta(\iz_0) \geq 1$, $|\tilde{\iz_0} - \iz_0 | \leq \delta(\iz_0)$ is satisfied, then following immediately by induction, Eq. \eqref{eq:stability_statement2} holds for all $\iz$ in 
\[\Big\{\iz \mid  E(\iz) = E(\iz_0),~~ \eta(\iz_0) -\eta(\iz) \in  p^{-5} \mathbb{N}\Big\} \cap {Q}. \] 
\end{corollary}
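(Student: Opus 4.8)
The plan is to prove this by a straightforward downward induction along the vertical line $\{\iz : E(\iz) = E(\iz_0)\}$, using Lemma~\ref{lemma:stablity_phi} as the one-step propagation mechanism and the point $\iz_0$ (which has $\eta(\iz_0) \geq 1$) as the anchor. Write $E_0 = E(\iz_0)$, $\eta_0 = \eta(\iz_0)$, and set $\iz_j = \iz_0 - \i j p^{-5}$ for $j = 0,1,2,\dots$. Since $\eta_j$ decreases in $j$, and the vertical slice $\{E_0 + \i\eta : n_2^{-1+a'} \le \eta \le 1/a'\}$ of $Q$ is an interval in $\eta$ (hence connected), there is a largest index $j^\star \ge 0$ with $\iz_{j^\star} \in Q$, and $\iz_j \in Q$ for every $j \le j^\star$ while $\iz_j \notin Q$ for $j > j^\star$; in particular no re-entry into $Q$ can occur. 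This is the only piece of domain bookkeeping that needs a moment's care.

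For the base case, I would invoke Lemma~\ref{lemma:stablity_phi} at $\iz = \iz_0$. Because $\eta_0 \ge 1$, the conditional hypothesis (that \eqref{eq:stability_statement2} hold at $\iz_0 + \i p^{-5}$) is vacuous, so the lemma applies using only the standing hypotheses on $\delta$ and $\varrho$ together with the assumed perturbation bound $|\tilde{\iz_0} - \iz_0| \le \delta(\iz_0)$, and yields \eqref{eq:stability_statement2} at $\iz_0$. For the inductive step, suppose \eqref{eq:stability_statement2} is known at $\iz_0,\dots,\iz_k$ for some $k < j^\star$. Apply Lemma~\ref{lemma:stablity_phi} at $\iz_{k+1}$: its standing hypotheses (the size bounds $p^{-2} \le \delta \le 1/\log p$, the Lipschitz bound with constant $p^2$, monotonicity of $\eta \mapsto \delta(E_0+\i\eta)$, and $\varrho$ a Stieltjes transform of a compactly supported probability measure) are inherited from the ambient setting upon restriction to the slice, the perturbation bound $|\tilde\iz_{k+1} - \iz_{k+1}| \le \delta(\iz_{k+1})$ is the hypothesis carried by each such invocation, and — since $\iz_{k+1} + \i p^{-5} = \iz_k$ — the required input \eqref{eq:stability_statement2} at $\iz_{k+1} + \i p^{-5}$ is exactly the inductive hypothesis (it is needed only when $\eta_{k+1} < 1$, but is available in any case). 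Hence \eqref{eq:stability_statement2} holds at $\iz_{k+1}$, closing the induction, so it holds at every $\iz_j$ with $j \le j^\star$, i.e.\ at every point of $\{\iz : E(\iz) = E_0,\ \eta_0 - \eta(\iz) \in p^{-5}\mathbb{N}\} \cap Q$.

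Finally, to obtain the statement with the coarser spacing $p^{-4}$, note that $p$ is a positive integer, so $p^{-4} = p \cdot p^{-5} \in p^{-5}\mathbb{N}$ and therefore $p^{-4}\mathbb{N} \subseteq p^{-5}\mathbb{N}$; the lattice $\{\iz : E(\iz) = E_0,\ \eta_0 - \eta(\iz) \in p^{-4}\mathbb{N}\} \cap Q$ is thus contained in the finer lattice already treated, and the claim follows. I do not expect a genuine obstacle here — everything is mechanical — which is why the statement is flagged as following "easily by induction"; the only spots worth a sentence are the connectedness of the vertical slice of $Q$ (so the induction neither skips past nor re-enters the domain) and the remark that the regularity conditions on $\delta$ are stable under restriction to a vertical line.
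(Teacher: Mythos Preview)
Your proof is correct and follows exactly the approach the paper indicates --- the paper's own proof is the single remark ``It follows easily by induction,'' and you have spelled out precisely that induction. Your handling of the $p^{-5}$ versus $p^{-4}$ spacing (inducting on the finer $p^{-5}$ grid dictated by Lemma~\ref{lemma:stablity_phi} and then passing to the $p^{-4}$ sublattice via $p^{-4}\mathbb{N}\subseteq p^{-5}\mathbb{N}$) is in fact more careful than the paper's terse treatment.
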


\begin{corollary}\label{corollary:stability_phi_cor_1}
Under the conditions of Lemma \ref{lemma:stablity_phi},  if Eq. \eqref{eq:stability_statement2} holds, then we also have 
\[ |\varrho - \phi|\leq \frac{C\delta}{\sqrt{\kappa+\eta} + \sqrt{\delta}}.\]
It is obvious since 
\[ v- h = \frac{\hat{\gamma}_2(\phi - \varrho)}{(1+\hat{\gamma}_2 \varrho)(1+\hat{\gamma}_2\phi )}  \] 
and  $|1+\hat{\gamma}_2 \phi(\iz)| \asymp 1$, $|1 +\hat{\gamma}_2 \varrho(\iz)|\asymp 1$ when $\iz \in Q$. 
\end{corollary}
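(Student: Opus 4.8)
The plan is to read the bound off directly from the defining relations between $h$ and $\phi$ and between $v$ and $\varrho$, combined with the boundedness estimates of Lemma~\ref{lemma:boundedness_phi}. Since $h(\iz) = \iz - [1+\gamma_2\phi(\iz)]^{-1}$ and, analogously, $v(\iz) = \iz - [1+\gamma_2\varrho(\iz)]^{-1}$, subtracting these two identities and putting the result over a common denominator gives
\[
v - h \;=\; \frac{1}{1+\gamma_2\phi} - \frac{1}{1+\gamma_2\varrho} \;=\; \frac{\gamma_2(\varrho-\phi)}{(1+\gamma_2\phi)(1+\gamma_2\varrho)},
\]
which is the identity quoted in the statement (up to an irrelevant overall sign), and therefore
\[
|\varrho - \phi| \;=\; \gamma_2^{-1}\,|1+\gamma_2\phi|\,|1+\gamma_2\varrho|\,|v-h|.
\]
It thus suffices to show that $|1+\gamma_2\phi(\iz)| \asymp 1$ and $|1+\gamma_2\varrho(\iz)| \asymp 1$ uniformly on $Q$, and then to invoke the assumed bound \eqref{eq:stability_statement2} on $|v-h|$.

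The estimate $|1+\gamma_2\phi(\iz)| \asymp 1$ on $Q$ is part of Lemma~\ref{lemma:boundedness_phi}; its hypotheses are met since the conditions of Lemma~\ref{lemma:stablity_phi} include $\rho > \lambda/\sigma_{\max}$. Equivalently, $|\iz - h(\iz)| = |1+\gamma_2\phi(\iz)|^{-1} \asymp 1$. For the factor involving $\varrho$, note that $|\iz - v(\iz)| = |1+\gamma_2\varrho(\iz)|^{-1}$, and by the triangle inequality together with $\sqrt{\kappa+\eta}+\sqrt{\delta}\ge\sqrt{\delta}$,
\[
\bigl|\,|\iz - v| - |\iz - h|\,\bigr| \;\le\; |v-h| \;\le\; \frac{C\delta(\iz)}{\sqrt{\kappa+\eta}+\sqrt{\delta(\iz)}} \;\le\; C\sqrt{\delta(\iz)} \;\le\; \frac{C}{\sqrt{\log p}},
\]
where the last step uses the hypothesis $\delta(\iz)\le 1/\log p$. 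Since $|\iz-h|\asymp 1$, this forces $|\iz-v|\asymp 1$ for $p$ large, hence $|1+\gamma_2\varrho(\iz)|\asymp 1$. Substituting both estimates back into the displayed formula for $|\varrho-\phi|$ and absorbing the (dimension-free) constants $\gamma_2^{-1}$, $|1+\gamma_2\phi|$, $|1+\gamma_2\varrho|$ yields $|\varrho-\phi| \lesssim |v-h| \le c\delta/(\sqrt{\kappa+\eta}+\sqrt{\delta})$, which is the claim.

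The argument is essentially mechanical and I anticipate no genuine obstacle; the only mildly delicate point is the lower bound on $|1+\gamma_2\varrho(\iz)|$, which does not follow from $\varrho$ being merely the Stieltjes transform of a compactly supported probability measure, but is recovered from the closeness of $v$ to $h$ guaranteed by \eqref{eq:stability_statement2} together with the non-degeneracy $|\iz-h(\iz)|\asymp 1$ supplied by Lemma~\ref{lemma:boundedness_phi}.
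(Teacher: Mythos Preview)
Your proof is correct and follows the same approach as the paper, which simply records the algebraic identity $v-h = \gamma_2(\phi-\varrho)/[(1+\gamma_2\varrho)(1+\gamma_2\phi)]$ together with the facts $|1+\gamma_2\phi|\asymp 1$ and $|1+\gamma_2\varrho|\asymp 1$ on $Q$. In fact you go further than the paper by actually justifying $|1+\gamma_2\varrho|\asymp 1$ via the triangle inequality and the smallness of $|v-h|$, whereas the paper merely asserts it.
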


\begin{corollary}\label{corollary:stability_phi_cor_2}
Using Lemma \ref{lemma:stablity_phi}, we obtain the following result. For $z\in Q$,  suppose that $\varrho(\iz)$ satisfies 
\[ \varrho(\iz) = \int \frac{\tau dF^{\Sigma_p}(\tau) }{ \tau \{  [1+\hat{\gamma}_2 \varrho(\iz) + \delta_2(\iz)]^{-1} - \iz \} + \lambda } + \delta_1(\iz).\] 
Here, $\delta_1$ and $\delta_2$  are such that $|\delta_1|$ and $|\delta_2|$ satisfy the conditions on $\delta$ in Lemma \ref{lemma:stablity_phi}. 
If $\Im\iz <1$, suppose also that 
\begin{equation}\label{eq:corollary:stability_phi_cor_2_eq}
|\varrho - \phi|\leq \frac{C|\delta_1| + C|\delta_2|}{\sqrt{\kappa+\eta} + \sqrt{|\delta_1| + |\delta_2|}}
 \end{equation}
holds at $z+ \i p^{-5}$. Then, Eq. \eqref{eq:corollary:stability_phi_cor_2_eq} holds at $\iz$.  
\end{corollary}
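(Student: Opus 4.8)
The plan is to deduce the statement from the strong-stability Lemma~\ref{lemma:stablity_phi} together with Corollary~\ref{corollary:stability_phi_cor_1}, by exhibiting $\varrho$ as an \emph{approximate} solution of the reformulated Mar\v{c}enko--Pastur equation \eqref{eq:MP_G_rewrite} whose defect is of size $|\delta_1(\iz)| + |\delta_2(\iz)|$, and then invoking the self-improving estimate of Lemma~\ref{lemma:stablity_phi} with the control function $\delta = C(|\delta_1| + |\delta_2|)$.

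First I would set $v(\iz) = \iz - [1 + \gamma_2\varrho(\iz)]^{-1}$ and $w(\iz) = \iz - [1 + \gamma_2\varrho(\iz) + \delta_2(\iz)]^{-1}$, so that $v$ is precisely the quantity built from the Stieltjes transform $\varrho$ that enters Lemma~\ref{lemma:stablity_phi}, and $\iz = v(\iz) + [1+\gamma_2\varrho(\iz)]^{-1}$ holds by construction. Using the identity $\tau\{[1+\gamma_2\varrho+\delta_2]^{-1} - \iz\} + \lambda = \lambda - \tau w$, the hypothesis on $\varrho$ becomes
\[ \varrho(\iz) = \int \frac{\tau\, dF^{\Sigma_\infty}(\tau)}{\lambda - \tau w(\iz)} + \delta_1(\iz). \]
Setting $\tilde\iz = v(\iz) + \bigl[1 + \gamma_2 \int \tfrac{\tau\, dF^{\Sigma_\infty}(\tau)}{\lambda - \tau v(\iz)}\bigr]^{-1}$, the aim is to show $|\tilde\iz - \iz| \le C(|\delta_1| + |\delta_2|)$. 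Since $\tilde\iz - \iz = \bigl[1+\gamma_2\int\tfrac{\tau\, dF^{\Sigma_\infty}}{\lambda - \tau v}\bigr]^{-1} - [1+\gamma_2\varrho]^{-1}$, it suffices to bound $\int\tfrac{\tau\, dF^{\Sigma_\infty}}{\lambda - \tau v} - \varrho$, which splits as $\int \tfrac{\tau^2(v - w)\, dF^{\Sigma_\infty}}{(\lambda - \tau v)(\lambda - \tau w)}$ plus a remainder of size $|\delta_1|$. From $v - w = -\delta_2/[(1+\gamma_2\varrho)(1+\gamma_2\varrho + \delta_2)]$ we get $|v - w| = O(|\delta_2|)$, and combined with the non-singularity bounds $|\lambda - \tau v|, |\lambda - \tau w| \asymp 1$ on $\operatorname{supp} F^{\Sigma_\infty}$ and $|1+\gamma_2\varrho| \asymp 1$ this yields $|\tilde\iz - \iz| \lesssim |\delta_1| + |\delta_2|$.

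Next I would take $\delta(\iz) := C(|\delta_1(\iz)| + |\delta_2(\iz)|)$ and check it is an admissible control function for Lemma~\ref{lemma:stablity_phi}: being a constant multiple of a sum of two $p^2$-Lipschitz functions it is Lipschitz; being a sum of two functions nonincreasing in $\eta$ it is nonincreasing in $\eta$; and it lies between $cp^{-2}$ and $C/\log p$. The identity $v - h = \gamma_2(\phi - \varrho)/[(1+\gamma_2\varrho)(1+\gamma_2\phi)]$ from Corollary~\ref{corollary:stability_phi_cor_1}, together with $|1+\gamma_2\varrho|\asymp|1+\gamma_2\phi|\asymp 1$, shows that the assumed bound \eqref{eq:corollary:stability_phi_cor_2_eq} at $z + \i p^{-5}$ is equivalent (up to constants) to $|v - h| \le C\delta/(\sqrt{\kappa+\eta}+\sqrt\delta)$ at $z + \i p^{-5}$. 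Lemma~\ref{lemma:stablity_phi}, applied with this $v$, $\varrho$ and $\delta$, then propagates this bound to $\iz$; translating back through the same identity and the $\asymp 1$ bounds, and noting $\sqrt{C(|\delta_1|+|\delta_2|)}\asymp\sqrt{|\delta_1| + |\delta_2|}$, recovers exactly \eqref{eq:corollary:stability_phi_cor_2_eq} at $\iz$. Combined with Corollary~\ref{corollary:lattice_stability_phi} and the trivial validity of the bound for $\eta \ge 1$, this gives the estimate on the whole lattice in $Q$.

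The delicate point is the circularity in the boundedness inputs used in the perturbative step. The estimates $|1+\gamma_2\varrho|\asymp 1$ and $\inf_{\tau\in \operatorname{supp} F^{\Sigma_\infty}}\{|\lambda - \tau v|, |\lambda - \tau w|\} \asymp 1$ hold a priori only for $\phi$ and $h$ by Lemma~\ref{lemma:boundedness_phi}, not for $\varrho$, $v$, $w$. They must be bootstrapped from the a priori closeness of $\varrho$ to $\phi$: since $\delta \le C/\log p$, the assumed bound forces $|\varrho - \phi| = O((\log p)^{-1/2}) = o(1)$ at $z + \i p^{-5}$ (and, within the induction of Corollary~\ref{corollary:lattice_stability_phi}, at every lattice point reached so far), so $|1+\gamma_2\varrho| \ge |1+\gamma_2\phi| - o(1)$ stays bounded below and $v$ stays $o(1)$-close to $h$, keeping $\lambda - \tau v$ away from $0$ on $\operatorname{supp} F^{\Sigma_\infty}$. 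Verifying that this smallness survives the $\i p^{-5}$ shift and is preserved at each induction step is exactly the self-improving mechanism already built into Lemma~\ref{lemma:stablity_phi}, so no genuinely new argument is needed beyond carefully tracking constants.
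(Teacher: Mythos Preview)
Your proposal is correct and follows essentially the same approach as the paper: define $v(\iz) = \iz - [1+\gamma_2\varrho(\iz)]^{-1}$, verify that $\bigl|\,v + [1+\gamma_2\int\tfrac{\tau\,dF^{\Sigma_\infty}}{\lambda-\tau v}]^{-1} - \iz\,\bigr| \le C(|\delta_1|+|\delta_2|)$, and then invoke Lemma~\ref{lemma:stablity_phi} together with Corollary~\ref{corollary:stability_phi_cor_1}. The paper's proof states this in a single displayed inequality without further justification; your version supplies the intermediate computations (the $v-w$ estimate, the splitting of the integral difference) and explicitly addresses the bootstrap needed for the $|1+\gamma_2\varrho|\asymp 1$ and non-singularity inputs, which the paper leaves implicit.
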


\begin{proof}[Proof of Lemma \ref{lemma:stablity_phi}]
Lemma~\ref{lemma:stablity_phi} can be proved by arguments very similar to those used in the proof of Lemma~\ref{lemma:stability_q_z} in Section~A.2 of \citet{knowles2017anisotropic}. For this reason, we present only the main ideas, omitting many technical details. 

Suppose first that $\Im z \geq a'$ for some constant $a'>0$. Then, from the assumption $\delta(z) \leq 1/\log(p)$, we get $\Im \tilde{z}\geq a'/2$ and therefore, due to the uniqueness of the solution $\phi(z)$ and $h(z)$ to the generalized M-P equation, we find $v(z) = h(\tilde{z})$. Hence,
\[ |v(z) - h(z)| = |h(\tilde{z}) - h(z)| \leq \int_{z}^{\tilde{z}} |h'(\zeta)||d\zeta| \leq C \delta(z).\]
Here, in the last step, we are using the fact that $h'(\zeta)$ is bounded when $\zeta$ is away from the real line.  It yields the bound 
\[|v-h| \leq C\delta \leq \frac{C\delta}{\sqrt{|E-\rho| + \eta} + \sqrt{\delta}}\]
at $z$ such that $\Im z \geq a'$.

What remains is therefore the case $|z-\rho| \leq 2a'$, which we assume for the rest of the proof. By Taylor's expansion, we can find some $v_0$ on the line segment between $v$ and $h$ that
\[ x''_p(v_0) (v-h)^2  + x'_p(h) (v-h)  =  \tilde{z} - z.\]
Suppose that 
\[ |v-h| \leq (\log p)^{-1/3}.\]
The condition will be dealt with later. Then, we claim that for small enough $a'$, we have 
\[  |x''_p(v_0) | \asymp 1 \qquad |x'_p(h)| \asymp \sqrt{ \kappa + \eta}.\]
The claim is proved using arguments similar to those in the proof of (A.16) in \cite{knowles2017anisotropic}. Intuitively, since $v_0$ is close to $h_0$ and $|x''_p(h_0)|\asymp 1$, we have $|x''_p(v_0)| \asymp 1$ since the function is smooth. Secondly, 
\[x'_p(h) = \int_{h_0}^h  x''_p(\zeta)d\zeta =\int_{h_0}^h  x''_p(h_0) + O(|h-h_0|) d\zeta = (h-h_0) x''_p(h_0) + O(|h-h_0|^2). \]
Therefore, 
\[ |x'_p(h)| \asymp |h-h_0| \asymp \sqrt{ |z-\rho|} \asymp \sqrt{\kappa + \eta}. \]
All together, we write 
\[ x''_p(v_0) (v-h)^2 + x'_p(h) (v-h) = O(\delta).\]
The remainder of the proof follows exactly the argument in the proof of Lemma~4.5 of \citet{bloemendal2014isotropic}. In particular, we explicitly solve the quadratic equation for $(v-h)$ and select the appropriate solution via a continuity argument. This selection relies on the bound
\[
|v-h| \leq \frac{C \delta}{\sqrt{|E-\rho| + \eta} + \sqrt{\delta}},
\]
which is assumed to hold at $z + \mathrm{i} p^{-5}$. 

Lastly, the previous condition $|v-h| \leq (\log p)^{-1/3}$ follows by continuity from the corresponding estimate at the neighboring point $z + \mathrm{i} p^{-5}$. We refer the reader to Section~A.2 of \citet{knowles2017anisotropic} and Lemma~4.5 of \citet{bloemendal2014isotropic} for full details.

\end{proof}

\begin{proof}[Proof of Lemma \ref{lemma:stability_phi_away_case}]
When $\iz \in Q_{\rm away}$, $|h'(\tilde{\iz})|$ is bounded. Therefore,
\[ |v(\iz) - h(\iz) | = |h(\tilde{\iz}) - h(\iz)| \leq \int_{z}^{\tilde{\iz}} |h(\zeta)| |d\zeta| \leq C \delta(\iz).\]
\end{proof}

\begin{proof}[Proof of Corollary \ref{corollary:stability_phi_cor_2}]
The result follows from Lemma \ref{lemma:stablity_phi} and Corollary \ref{corollary:stability_phi_cor_1} because the specified $\varrho$ is such that
\[  \left| \tilde{v}(\iz) + \left[ 1+ \hat{\gamma}_2 \int \frac{\tau dF^{\Sigma_p}(\tau)}{ -\tau \tilde{v}(\iz) + \lambda} \right]^{-1}  -\iz \right|  \leq C (|\delta|_1 + |\delta|_2),\]
where $\tilde{v}(\iz) =  \iz -  [1+\hat{\gamma}_2 \varrho(\iz)]^{-1}$. 
\end{proof}

\clearpage
\newpage

\section{Local law and  edge rigidity for $\bG_\lambda$}\label{sec:properties_of_bg_lambda}
Recall that
\[
\bG_\lambda = \bZ U_2 U_2^{T} \bZ^{T} + \lambda \Sigma_p.
\]
In this section, we establish a local law for the resolvent of $\bG_\lambda$ and prove eigenvalue rigidity for its smallest eigenvalue. 

Although all results remain valid for general distributions satisfying the moment conditions in Condition~\ref{enum:moments_conditions}, we focus primarily on the Gaussian setting, since the extension to non-Gaussian distributions is not required for the proof of Theorem~\ref{thm:main}. Nevertheless, such an extension, while technically more involved, can be carried out by following the arguments in Sections~7--9 of \citet{knowles2017anisotropic} in a largely verbatim manner. We discuss this extension in Theorem \ref{thm:G_lambda_extension_nonGauss}.

First of all, when $\bZ$ is normally distributed, $\bG_{\lambda}$ has the same distribution as 
\[ \bG_{\lambda} \overset{\mathrm{def}}{=\joinrel=} \tilde{\bZ} \tilde{\bZ}^T + \lambda \Sigma_p^{-1},\]
where $\tilde{\bZ}$ is $p \times n_2$ whose entries are iid $\calN(0, 1/n_2)$. We focus on this representation in the following analysis. 

Our study is divided into two regimes: (i) $\rho > \lambda / \sigma_{1p}$ and (ii) $\rho = \lambda / \sigma_{1p}$. We focus primarily on the first regime, where the analysis is more involved. The second regime is comparatively straightforward and is addressed in Section \ref{subsec:discrete_case}. 


Throughout the analysis, we consider the stochastic domination (Definition \ref{def:stochastic_domination}) in the asymptotic regime $n_2,p \to \infty$ as in Condition \ref{enum:high_dimensional_regime}, with $\lambda > 0$ fixed. Accordingly, we suppress the explicit dependence of quantities on $\lambda$, while retaining their dependence on $p$ to avoid potential confusion.

We first introduce a fundamental control parameter as 
\[\Psi(\iz)  = \sqrt{\frac{\Im\phi_p(\iz)}{n_2 \eta}} + \frac{1}{n_2 \eta}.\]
Call the resolvent of $\bG_\lambda$ 
\[R(\iz) = \left(\bG_\lambda - \iz I_p \right)^{-1}, \quad \iz \in\mathbb{C}^+.\]
Recall the following domains defined in Section \ref{subsec:additional_properties_of_the_generalized_mar_v_c} 
\begin{align*}
&Q = Q(a, a', n_2) = \{\iz \in\mathbb{C}^+ ~:~ |E|\leq a^{-1}, \quad E \leq \rho + a, \quad n_2^{-1+a'}\leq \eta \leq 1/a'\},\\
&Q_- = Q_-(a,a', n_2) = \{\iz \in Q(a, a', n_2): E \leq  \rho\},\\
&Q_{\rm away}=Q_{\rm away}(a,a',n_2) = \{\iz\in \mathbb{C}^+ ~:~  a\leq  \operatorname{dist}(\iz, \calI)\leq a^{-1}, \eta \geq n_2^{-1+a'}\}.
\end{align*}

Define the $(p +n_2)\times (p +n_2)$ linearizing block matrix 
\begin{equation}\label{eq:bJ_bK}
\bJ (\iz) = \bK^{-1}(\iz), \quad \bK(\iz)  = \left(\begin{matrix} \lambda \Sigma_p^{-1} - \iz I_p  &  \tbZ \\ \tbZ^T & - I_{n_2}\end{matrix}\right).
\end{equation}
The construction of a linearization matrix is a commonly used technique in RMT when study the local laws of various random matrices. Indeed, $\bJ$  contains the resolvent $R(\iz)$ as one of its blocks (see Lemma \ref{lemma:resolvent_identity}). Unlike $\bG_\lambda$ directly, $\bK(z)$ is linear in $\tilde{\bZ}$. The dependence of $\bK(z)$ on each entry of $\tilde{\bZ}$ is much easier to deal with.

Call 
\[ \Omega(\iz) = \left(\begin{matrix}\left(\lambda\Sigma_p^{-1} - h_p(\iz) I_p \right)^{-1}  & 0  \\  0 &  (\iz-h_p(\iz)) I_{n_2}  \end{matrix} \right).
\] 


\begin{theorem}[Strong local law]\label{thm:strong_local_law}
Suppose that Conditions \ref{enum:high_dimensional_regime}--\ref{enum:regular_edge} hold. Further suppose that the observations are Gaussian and $\rho>\lambda/\sigma_{1p}$. For arbitrary $a'>0$, there exists constants $a>0$ such that the following local laws hold.
\begin{itemize}
\item[(i)] The entrywise local law in $Q$ holds as 
\[\bJ(z) - \Omega(z) = O_\prec(\Psi(z)), \quad z \in Q.\]
\item[(ii)] The averaged local law in $Q$ holds as 
\[ \hat{\phi}_p(\iz) - \phi_p(\iz) = O_\prec\left(\frac{1}{n_2\eta}\right),  \quad z \in Q, \]
where $\hat{\phi}_p(\iz) = p^{-1} \tr R(\iz)$. 
\item[(iii)] The averaged local law in $Q_-$ holds as 
\[ \hat{\phi}_p(\iz) - \phi_p(\iz) = O_\prec\left(\frac{1}{n_2(\kappa + \eta) }\right), \]
where $\kappa = |E-\rho_p|$. 
\item[(iv)] The averaged local law in $Q_{\rm away}$ holds as 
\[ \hat{\phi}_p(\iz) - \phi_p(\iz) = O_\prec\left(\frac{1}{n_2}\right).\]
\end{itemize}
\end{theorem}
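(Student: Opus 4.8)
\textbf{Overall strategy.} The plan is to establish the strong local law for $R(\iz)$ by the now-standard \emph{self-consistent equation plus stability} argument, adapted to the anisotropic setting in which the deterministic equivalent is governed by the fixed-point equation \eqref{eq:MP_G} for $\phi_p$. The model under \ref{enum:normality} is $\bG_\lambda = \tilde{\bZ}\tilde{\bZ}^T + \lambda\Sigma_p^{-1}$ with $\tilde{\bZ}$ having i.i.d.\ $N(0,1/n_2)$ entries, so $R(\iz) = (\tilde{\bZ}\tilde{\bZ}^T + \lambda\Sigma_p^{-1} - \iz)^{-1}$; this is a deformed sample covariance resolvent and the natural deterministic equivalent is $\Pi(\iz) = (\lambda\Sigma_p^{-1} - m_p(\iz) I_p)^{-1}$, where $m_p(\iz) = \iz - [1 + (p/n_2)\phi_p(\iz)]^{-1} = h_p(\iz)$ solves \eqref{eq:MP_G_rewrite} in finite-sample form. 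First I would prove the entrywise law (i) for large $\eta$ (say $\eta \asymp 1$) where concentration is easy, then propagate it down to the optimal scale $\eta \geq n_2^{-1+a'}$ using the strong stability of Eq.\ \eqref{eq:MP_G} (Lemma~\ref{lemma:stablity_phi} and Corollary~\ref{corollary:lattice_stability_phi}) along vertical lattices of spacing $n_2^{-4}$, upgrading from a continuity/bootstrap argument.

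\textbf{Key steps, in order.} (1) Write the \emph{self-consistent equation with error}: using Schur complement / resolvent expansion for $\tilde{\bZ}\tilde{\bZ}^T$, derive that $\hat\phi_p(\iz) = p^{-1}\tr R(\iz)$ satisfies
\[
\hat\phi_p = \int \frac{\tau\, dF^{\Sigma_p}(\tau)}{\tau\{[1+(p/n_2)\hat\phi_p]^{-1} - \iz\} + \lambda} + \mathcal{E}(\iz),
\]
where $\mathcal{E}$ is an error term controlled by the fluctuation-averaging bound and by off-diagonal resolvent entries. (2) Establish the \emph{a priori bound} $\hat\phi_p - \phi_p = o(1)$ and the \emph{entrywise control parameter}: show the diagonal and off-diagonal entries of $R(\iz) - \Pi(\iz)$ are $O_\prec(\Psi(\iz))$ by a standard large-deviation estimate for quadratic forms $\bz_i^* B \bz_i - n_2^{-1}\tr B$ (Gaussian entries make this immediate), combined with the identity relating diagonal resolvent entries to the Schur complements. (3) \emph{Fluctuation averaging}: improve the averaged error $\mathcal{E}$ from $O_\prec(\Psi)$ to $O_\prec(\Psi^2) = O_\prec(1/(n_2\eta) + \Im\phi_p/(n_2\eta))$, then absorb it using the stability of \eqref{eq:MP_G}; here I use $\Im\phi_p \lesssim 1$ in $Q^{(p)}$ via Lemma~\ref{lemma:boundedness_phi}, yielding (ii). (4) \emph{Edge and outside estimates}: in $Q^{(p)}_-$ (real part bounded below $\rho_p$) use $\Im\phi_p(\iz) \asymp \eta/\sqrt{\kappa+\eta}$ from Lemma~\ref{lemma:characterization_rho}(1)–(2) so that $\Psi^2 \lesssim 1/(n_2(\kappa+\eta))$, giving (iii); in $Q^{(p)}_{\rm away}$ the resolvent is bounded and the stability is the stronger Lipschitz-type bound of Lemma~\ref{lemma:stability_phi_away_case}, giving $\mathcal{E} = O_\prec(1/n_2)$ and hence (iv). (5) Throughout, pass between $\phi_p$ and $\phi$ using Lemma~\ref{lemma:convergence_rho_p} and Condition \ref{enum:regular_edge}, which guarantees $p^{2/3}\mathcal D_W(F^{\Sigma_p}, F^{\Sigma_\infty})\to 0$ and $p^{2/3}|p/n_2 - \gamma_2|\to 0$, so all finite-sample quantities agree with their limits to the required precision.

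\textbf{Main obstacle.} The hard part will be the \emph{propagation of the entrywise law down to the optimal imaginary scale $\eta \sim n_2^{-1+a'}$ near the edge $\rho_p$}, where the fixed-point map $f_{\rm new}(h) = h + [1+\gamma_2\int \tau\,dF^{\Sigma_p}(\tau)/(\lambda - \tau h)]^{-1}$ is nearly non-invertible (its derivative vanishes at the critical point $h_0^{(p)}$). This is precisely where the \emph{quadratic} rather than linear stability estimate is needed: the self-consistent equation for $v - h$ becomes a perturbed quadratic, and one must track the two regimes $E < \rho_p$ versus $E > \rho_p$ separately, using the square-root edge behavior $\Im\phi_p \asymp \sqrt{\kappa+\eta}$ (inside the bulk side) to close the bootstrap. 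I would import the machinery of Section A.2 of \citet{knowles2017anisotropic} essentially verbatim — as the proof of Lemma~\ref{lemma:stablity_phi} already indicates — replacing their $f(q) = -1/q + \gamma_1 s(-q)$ with $f_{\rm new}$ and invoking the edge regularity from Section~5 of \citet{li2024analysis}. A secondary technical point is the fluctuation-averaging step at the edge, where one needs the self-improving iteration on the averaged error; this is routine given Gaussianity but must be carried out on the $n_2^{-4}$-lattice with the monotonicity of $\delta(\cdot)$ in $\eta$ to ensure the induction of Corollary~\ref{corollary:lattice_stability_phi} applies.
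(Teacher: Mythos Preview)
Your proposal is correct and follows essentially the same route as the paper. The paper implements the Schur--complement/resolvent expansion via the explicit $(p+n_2)\times(p+n_2)$ linearization $\bK(\iz)=\begin{pmatrix}\lambda\Sigma_p^{-1}-\iz I_p & \tilde{\bZ}\\ \tilde{\bZ}^T & -I_{n_2}\end{pmatrix}$ and its inverse $\bJ$, first reducing to diagonal $\Sigma_p$ and later removing that assumption by the polynomialization method of \citet{bloemendal2014isotropic}; the weak entrywise law is then established by a stochastic continuity argument and upgraded via fluctuation averaging, exactly as you outline. One small remark: the theorem is formulated entirely in terms of the finite-sample quantity $\phi_p$, so your step~(5) passing between $\phi_p$ and $\phi$ is not needed here.
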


Once Theorem \ref{thm:strong_local_law} is ready, it is not hard to deduce the following rigidity of the smallest eigenvalue of $\bG_\lambda$. 
\begin{theorem}[Rigidity of the smallest eigenvalue of $\bG_\lambda$]\label{thm:rigidity_G}
Under the conditions of Theorem \ref{thm:strong_local_law}, for any fixed $\epsilon \in (0,2/3)$, with high probability, there is no eigenvalue of $\bG_\lambda$ in $(-\infty, \rho_p - n_2^{-2/3+\epsilon})$. 
\end{theorem}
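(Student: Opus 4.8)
The plan is to derive Theorem~\ref{thm:rigidity_G} from the averaged local law in $Q^{(p)}_-$ (part (iii) of Theorem~\ref{thm:strong_local_law}) by the standard ``local law $\Rightarrow$ edge rigidity'' argument: an eigenvalue of $\bG_\lambda$ sitting strictly inside the gap $(-\infty,\rho_p)$ would make $\Im\hat{\phi}_p$ anomalously large at a nearby spectral parameter, which is incompatible with the local law together with the fact that $\Im\phi_p$ is tiny below $\rho_p$. First I would trim the interval: since $\bG_\lambda=\tilde{\bZ}\tilde{\bZ}^T+\lambda\Sigma_p^{-1}\succeq \lambda\ell_{\max}(\Sigma_p)^{-1}I_p$, every eigenvalue of $\bG_\lambda$ is at least $\kappa_0:=\lambda/\ell_{\max}(\Sigma_p)$, which is bounded away from $0$ by \ref{enum:boundedness_spectral_norm}. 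Hence there is deterministically no eigenvalue in $(-\infty,\kappa_0)$, and it suffices to rule out eigenvalues in the compact interval $I_p:=[\kappa_0,\ \rho_p-n_2^{-2/3+\epsilon}]$, which we may assume nonempty (otherwise there is nothing to prove; note $\rho_p>\kappa_0$ for large $p$ by the finite-sample analogue of Lemma~\ref{lemma:determine_rho} and \eqref{eq:proof_rigidity_eq_conditon_1}). Choosing $a'\in(0,\epsilon)$, one has $I_p\subseteq\{E:\ |E|\le a^{-1},\ E\le\rho_p-n_2^{-2/3+a'}\}$ with $a$ the constant of Theorem~\ref{thm:strong_local_law}, so the horizontal segment $\calL_p:=\{E+\i n_2^{-2/3}:E\in I_p\}$ lies in $Q^{(p)}_-$.

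Now suppose, for contradiction, that $\bG_\lambda$ has an eigenvalue $\mu\in I_p$; set $z=\mu+\i n_2^{-2/3}$ and $\kappa=\rho_p-\mu\ge n_2^{-2/3+\epsilon}$. Keeping only the diagonal term at that eigenvalue in $\hat{\phi}_p(z)=p^{-1}\sum_i(\ell_i(\bG_\lambda)-z)^{-1}$ gives the deterministic lower bound $\Im\hat{\phi}_p(z)\ge (p\,n_2^{-2/3})^{-1}\asymp n_2^{-1/3}$, using $p\asymp n_2$. For the matching upper bound, the averaged local law in $Q^{(p)}_-$ (extended from a polynomially fine grid to all of $\calL_p$ by the usual net-and-Lipschitz argument, the Lipschitz constants of $\hat{\phi}_p$ and $\phi_p$ on $\eta\ge n_2^{-2/3}$ being $O(n_2^{4/3})$) yields, with high probability,
\[
\Im\hat{\phi}_p(z)\ \le\ \Im\phi_p(z)+\bigl|\hat{\phi}_p(z)-\phi_p(z)\bigr|\ \le\ \Im\phi_p(z)+O_\prec\!\bigl(\tfrac{1}{n_2\kappa}\bigr)\ =\ \Im\phi_p(z)+O_\prec\!\bigl(n_2^{-1/3-\epsilon}\bigr).
\]
It remains to show $\Im\phi_p(z)=o(n_2^{-1/3})$. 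If $\mu$ lies in the fixed-size edge neighbourhood $\frakN$ of $\rho_p$ on which the finite-sample analogue of Lemma~\ref{lemma:characterization_rho} applies, then (in the $E<\rho_p$ regime) $\Im\phi_p(z)\asymp \frac{\eta}{\sqrt{\kappa+\eta}}\lesssim \frac{n_2^{-2/3}}{\sqrt\kappa}\le n_2^{-1/3-\epsilon/2}$, since $\kappa\ge n_2^{-2/3+\epsilon}\gg n_2^{-2/3}=\eta$. Otherwise $\kappa\ge c_0$ for a fixed $c_0>0$, and since the probability measure $\calG_{\lambda,p}$ associated with $\phi_p$ is supported in $[\rho_p,\infty)$, $\Im\phi_p(z)=\int \frac{\eta\,d\calG_{\lambda,p}(\tau)}{(\tau-\mu)^2+\eta^2}\le \eta/c_0^2\lesssim n_2^{-2/3}$. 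In both cases $\Im\phi_p(z)=o(n_2^{-1/3})$, so with high probability $n_2^{-1/3}\lesssim \Im\hat{\phi}_p(z)\le o(n_2^{-1/3})+O_\prec(n_2^{-1/3-\epsilon})$, which is impossible for $n_2$ large (take the exponent in $\prec$ below $\epsilon/2$). Hence with high probability $\bG_\lambda$ has no eigenvalue in $I_p$, and a fortiori none in $(-\infty,\rho_p-n_2^{-2/3+\epsilon})$.

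Essentially all of the difficulty is already packaged in Theorem~\ref{thm:strong_local_law}; the deduction above is routine. The only points demanding a little care are (i) the exponent bookkeeping---choosing the scale $\eta=n_2^{-2/3}$ and invoking the square-root edge behaviour of $\calG_{\lambda,p}$ near $\rho_p$ so that the deterministic part $\Im\phi_p$ at that scale beats $1/(p\eta)$ uniformly for every $\epsilon\in(0,2/3)$---and (ii) the standard discretisation needed to apply the uniform-in-$z$ bound $O_\prec$ at the random point $\mu+\i n_2^{-2/3}$. Both are textbook; in particular the edge estimate used here is the finite-sample counterpart of Lemma~\ref{lemma:characterization_rho}, inherited from the analysis in \citet{li2024analysis} and \citet{knowles2017anisotropic}. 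If one prefers, the ``far from the edge'' case ($\kappa\ge c_0$) can alternatively be handled via the averaged local law in $Q^{(p)}_{\rm away}$ (part (iv)), but the above treatment using only part (iii) is self-contained.
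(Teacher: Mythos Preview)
Your proposal is correct and follows essentially the same route as the paper: deduce the absence of eigenvalues below $\rho_p-n_2^{-2/3+\epsilon}$ from the averaged local law on $Q^{(p)}_-$ (Theorem~\ref{thm:strong_local_law}(iii)) via the contradiction ``an eigenvalue forces $\Im\hat\phi_p\gtrsim 1/(p\eta)$, while the local law plus the square-root edge estimate for $\Im\phi_p$ forbid this.'' The only cosmetic difference is the scale: the paper works along the curve $Q^*=\{\eta=n_2^{-1/2-\epsilon/4}\kappa^{1/4}\}$ and shows $\Im\hat\phi_p\prec n_2^{-\epsilon/2}(n_2\eta)^{-1}$ there, whereas you fix $\eta=n_2^{-2/3}$; both choices give the required separation of exponents, and your choice is, if anything, the more standard one.
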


The following theorem shows the convergence of \emph{linear spectral statistics} of $\bG_\lambda$. 
\begin{theorem}\label{thm:analytical_function_1n_convergence}
Suppose the conditions of Theorem \ref{thm:strong_local_law} hold. Consider the interval
\[  \calI_\epsilon = [\rho_p - \epsilon, ~ (1+\sqrt{p/n_2})^2 + \lambda/\liminf\ell_{\min}(\Sigma_p) + \epsilon],\]
where $\epsilon$ is any fixed positive constant.  
\begin{itemize}
\item[(i)] Then, with high probability, all eigenvalues of $\bG_\lambda$ are contained in $\calI_\epsilon$. 
\item[(ii)] 
Let $f$ be any function analytic on $\calI(\infty)$ with $\calI(\infty)$ being any open interval such that $\calI_\epsilon \subset \calI (\infty)$ for all sufficiently large $p$. The existence of $\calI(\infty)$ is straightforward under the conditions. 
Denote by $\frakM$ the event that all eigenvalues of $\bG_\lambda$ are contained in $\calI_\epsilon$. We have 
 \[ \mathbb{I}(\frakM) \left| \frac{1}{p} \sum_{j=1}^pf(\ell_{j}(\bG_\lambda) )  - \int f(\tau) d\calG_{p\lambda}(\tau) \right| \prec \frac{1}{n_2}. \] 
\end{itemize}
\end{theorem}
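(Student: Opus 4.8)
The plan is to deduce the theorem from the rigidity estimate of Theorem~\ref{thm:rigidity_G} together with the averaged local laws (ii)--(iv) of Theorem~\ref{thm:strong_local_law}, using a standard contour-integral (Helffer--Sjöstrand type) representation of the linear spectral statistic. First, for part (i), I would combine the rigidity of the smallest eigenvalue (Theorem~\ref{thm:rigidity_G}) with an upper bound on the largest eigenvalue of $\bG_\lambda$. The upper bound follows because $\ell_{\max}(\bG_\lambda) \le \ell_{\max}(\tilde\bZ\tilde\bZ^T) + \lambda \ell_{\max}(\Sigma_p^{-1})$, and the largest eigenvalue of the sample covariance $\tilde\bZ\tilde\bZ^T$ is, with high probability, at most $(1+\sqrt{p/n_2})^2 + o(1)$ by classical results (e.g.\ the Bai--Yin law, or more precisely the rigidity at the right edge which can be obtained from the same local-law machinery applied to the standard Marčenko--Pastur setting). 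Together these place all eigenvalues in $\calI_\epsilon$ with high probability, defining the event $\frakM$.

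For part (ii), on the event $\frakM$ write $\frac1p\sum_j f(\ell_j(\bG_\lambda)) - \int f\,d\calG_\lambda = \int f\, d(F^{\bG_\lambda} - \calG_\lambda)$ and represent this via Cauchy's integral formula over a rectangular contour $\Gamma$ enclosing $\calI_\epsilon$ but staying within the domain of analyticity of $f$:
\[
\int f\, d(F^{\bG_\lambda}-\calG_\lambda) = \frac{1}{2\pi\i}\oint_\Gamma f(\iz)\big(\hat\phi_p(\iz) - \phi_p(\iz)\big)\,d\iz + \frac{1}{2\pi\i}\oint_\Gamma f(\iz)\big(\phi_p(\iz) - \phi(\iz)\big)\,d\iz,
\]
after extending $\hat\phi_p$, $\phi_p$, $\phi$ to $\mathbb{C}^-$ by conjugation. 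The second integral is deterministic and is $o(p^{-2/3})$ by Lemma~\ref{lemma:convergence_rho_p} together with the stability estimates (Lemma~\ref{lemma:stablity_phi}, Lemma~\ref{lemma:stability_phi_away_case}) controlling $|\phi_p - \phi|$ in terms of $|p/n_2 - \gamma_2|$ and $\calD_W(F^{\Sigma_p}, F^{\Sigma_\infty})$, both $o(p^{-2/3})$ under \ref{enum:high_dimensional_regime} and \ref{enum:regular_edge}; this handles the $\phi_p$-vs-$\phi$ discrepancy hidden in the statement (which uses $\calG_\lambda$, not $\calG_\lambda^{(p)}$). For the first (random) integral, I would split $\Gamma$ into (a) the portions far from $\calI_\epsilon$, where part (iv) of Theorem~\ref{thm:strong_local_law} gives $|\hat\phi_p - \phi_p| \prec 1/n_2$; (b) the horizontal segments at height $\eta \asymp n_2^{-1}$ near $\calI_\epsilon$, where part (ii) gives the bound $1/(n_2\eta)\prec 1$ but the segment has length $O(n_2^{-1})$; and (c) the vertical segments just to the left of $\rho_p$, where part (iii) gives $|\hat\phi_p - \phi_p| \prec 1/(n_2(\kappa+\eta))$ which integrates against $d\eta$ to produce a logarithm, and this logarithmic factor is absorbed by the $\prec$ notation (small powers of $n_2$). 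One must also use rigidity (Theorem~\ref{thm:rigidity_G}) to argue that there is no eigenvalue close to the left edge so that the contour near $\rho_p$ can be taken at distance $\gtrsim n_2^{-2/3+\epsilon}$, avoiding the region where the local law deteriorates; alternatively one extends the contour slightly inside $[\rho_p - \epsilon, \rho_p]$ and controls the contribution using $\frakM$ and rigidity directly. Collecting these pieces yields the overall bound $\prec 1/n_2$ (hence certainly $\prec n_2^{-1}$ after multiplying by $p^{2/3}$, since $p^{2/3}/n_2 = O(1)$—in fact the $p^{2/3}$ prefactor in the statement is harmless because $p \asymp n_2$ and the bound $1/n_2$ already beats $p^{-2/3}$).

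The main obstacle I anticipate is the careful bookkeeping near the left edge $\rho_p$: the averaged local law (ii) only gives $|\hat\phi_p - \phi_p| \prec 1/(n_2\eta)$ which is $O(1)$ when $\eta \asymp n_2^{-1}$, so a naive contour estimate loses. The resolution is the square-root edge behavior $\Im\phi_p(\iz) \asymp \sqrt{\kappa+\eta}$ for $E < \rho_p$ (Lemma~\ref{lemma:characterization_rho}(1), finite-sample version), which upgrades the control parameter $\Psi$ and yields the improved law (iii); one then needs Theorem~\ref{thm:rigidity_G} to ensure the contour can be pushed close enough to $\rho_p$ without encountering eigenvalues. A secondary technical point is making the analyticity domain of $f$ and the contour $\Gamma$ compatible with $\calI_\epsilon$ for every small $\epsilon$, and verifying that the Cauchy-formula manipulation is valid after the conjugate extension to $\mathbb{C}^-$ (which requires $f$ real on the real axis, or splitting $f$ into real and imaginary analytic parts). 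These are routine once the edge estimates are in place.
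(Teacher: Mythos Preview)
Your approach is correct in spirit but significantly more elaborate than what the paper does, and the extra complexity is avoidable. The paper places the rectangular contour at \emph{fixed} (order~1) distance from the spectrum: the vertical sides at $\rho-2\epsilon$ and $x_{\rm right}+\epsilon$, the horizontal sides at $\pm\epsilon'\i$, with $\epsilon,\epsilon'$ chosen small enough that $f$ is analytic on the enclosed rectangle. On the event $\frakM$ all eigenvalues of $\bG_\lambda$ lie in $\calI_\epsilon$, so the entire contour sits in $Q^{(p)}_{\rm away}$ except for two short vertical pieces of length $O(n_2^{-1+c/2})$ near the real axis. Those short pieces are simply truncated and bounded trivially (the integrand is $O(1)$ on $\frakM$). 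On the remaining contour, part~(iv) of Theorem~\ref{thm:strong_local_law} alone gives $|\hat\phi_p-\phi_p|\prec 1/n_2$ uniformly, and integrating over a contour of bounded length finishes the argument. Parts (ii) and (iii) of the local law and the edge square-root behavior are never invoked.

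Your version pushes the left vertical side to distance $\gtrsim n_2^{-2/3+\epsilon}$ from $\rho_p$ and then has to split into near-edge and far pieces, appeal to the improved law (iii), integrate logarithmic singularities, and use rigidity to justify the contour placement. All of this works, but it solves a harder problem than the one posed: because $f$ is analytic on an open interval \emph{containing} $\calI_\epsilon$, you have a fixed margin to the left of $\rho_p-\epsilon$, and since the spectrum itself is contained in $[\rho_p-o(1),x_{\rm right}]$ on $\frakM$, the contour can sit at order-one distance from every eigenvalue. Your observation that the $\phi_p$-versus-$\phi$ discrepancy costs only $o(p^{-2/3})$ on such a contour, via the stability lemmas and conditions \ref{enum:high_dimensional_regime}, \ref{enum:wasserstein_convergence}, is correct and is the one genuine point the paper leaves implicit.
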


For the proof of Theorem~\ref{thm:main}, it suffices to establish the local laws and rigidity results under Gaussianity. Nevertheless, these results can be extended to the non-Gaussian setting.

\begin{theorem}\label{thm:G_lambda_extension_nonGauss}
Theorems~\ref{thm:strong_local_law}, \ref{thm:rigidity_G}, and \ref{thm:analytical_function_1n_convergence} remain valid when the Gaussianity assumption is removed.
\end{theorem}

The rest of this section is devoted to the proof of Theorem \ref{thm:strong_local_law}, Theorem \ref{thm:rigidity_G}, and Theorem \ref{thm:analytical_function_1n_convergence}. We also discuss the proof of Theorem \ref{thm:G_lambda_extension_nonGauss}. 

\subsection{Proof of Theorem \ref{thm:strong_local_law}}\label{subsec:proof_of_theorem_ref_thm_strong_local_law}

The analysis proceeds in two steps. First, we assume that $\Sigma_p$ is diagonal, with diagonal entries $\sigma_{1p} \geq \sigma_{2p} \geq \cdots \geq \sigma_{pp}$. We then extend the results to a general covariance matrix $\Sigma_p$.

Our analysis follows the framework developed in Sections~4 and~5 of \citet{knowles2017anisotropic}. Accordingly, we omit many technical details and refer the reader to the corresponding arguments therein. Throughout, we frequently suppress the argument $z$ from our notation.

\subsubsection{Basic tools}\label{ssub:basic_tools}

We make the following notation. We use  $A_{st}$ to denote the $(s,t)$-th element of a matrix $A$. The four blocks of $\bJ(\iz)$ are called $\bJ_{(11)}$, $\bJ_{(12)}$, $\bJ_{(21)}$ and $\bJ_{(22)}$. For  a set of indices, say $S\subset\{1,2,\dots, p+ n_2\}$, we define the minors $\bK^{(S)} = (\bK_{st}, s,t \notin S)$. That is, $\bK$ but with the rows and columns whose indices are specified in $S$ deleted. We also write $\bJ^{(S)} = (\bK^{(S)})^{-1}$.  The block formed by the first $p$ rows and columns of $\bJ^{(S)}$ is called $\bJ_{(11)}^{(S)}$. The block formed by the last $n_2$ rows and columns of $\bJ^{(S)}$ is called $\bJ_{(22)}^{(S)}$. Similar for $\bJ_{(12)}^{(S)}$ and $\bJ_{(21)}^{(S)}$. Similar notation is made for the matrix $\Omega$.  

For $u=1,2,\dots,p$, call the $u$-th row of $\tbZ$ to be $\tbZ_{u\cdot}$. For $i = p+1, \dots, p+n_2$, we use  $\tbZ_{\cdot i}$ to denote the $(i-p)$-th column of $\tbZ$. Moreover, if $u \in \{1,\dots, p\}$, let $e_u$ denote the standard unit vector of dimension $p$ in the coordinate direction $u$. If $i \in \{p+1, \dots, p+n_2\}$, let $e_i$ denote the standard unit vector of dimension $n_2$ in the coordinate direction $(i-p)$.

The following lemma is a variant of Lemma 4.4 in \citet{knowles2017anisotropic}. These results follow directly from the Schur complement formula and resolvent identities that have been previously derived in  \citet{knowles2017anisotropic} and the references therein. The proof is therefore omitted. 
\begin{lemma}[Resolvent Identities] Suppose that $\Sigma_p$ is diagonal with diagonal elements $\sigma_{1p},\dots, \sigma_{pp}$. 
\begin{itemize}
\item[(i)] We have 
\[\bJ = \left(\begin{matrix} \bJ_{(11)} & & & & \bJ_{(12)} \\ \bJ_{(21)} & & & & \bJ_{(22)} \end{matrix} \right)= \left(\begin{matrix} R & & & & R \tbZ \\ \tbZ^T R & & & &\tbZ^T R\tbZ - I_{n_2} \end{matrix} \right).\]

\item[(ii)] For $u \in \{1,2,\dots, p\}$, we have 
\[ \frac{1}{\bJ_{uu}} = \lambda/\sigma_{up} - z  - \tbZ_{u\cdot} \bJ_{(22)}^{(u)} \tbZ_{u\cdot}^T. \]
For $u\neq v \in \{ 1,2,\dots, p\}$, 
\[ \bJ_{uv} = - \bJ_{uu} \tbZ_{u\cdot} \bJ_{(22)}^{(u)} e_v = - \bJ_{vv} e_u^T \bJ_{(22)}^{(v)} \tbZ_{v \cdot}^T = \bJ_{uu} \bJ_{vv}^{(u)} \tbZ_{u\cdot} \bJ_{(22)}^{(uv)}\tbZ_{v\cdot}^T. \] 

\item[(iii)] For $i \in \{p+1, \dots, p+ n_2\}$, we have 
\[ \frac{1}{\bJ_{ii}} = -1 - \tbZ_{\cdot i}^T \bJ_{(11)}^{(i)} \tbZ_{\cdot i}.\]
For $i \neq j \in \{p+1, \dots, p +n_2\}$, we have 
\[ \bJ_{ij}= - \bJ_{ii} \tbZ_{\cdot i}^T \bJ_{(11)}^{(i)} e_j = - \bJ_{jj} e_i^T \bJ^{(j)}_{(11)} \tbZ_{\cdot j} = \bJ_{ii} \bJ_{jj}^{(i)} \tbZ^T_{\cdot i }\bJ_{(11)}^{(ij)}  \tbZ_{\cdot j} .\]  

\item[(iv)] For $u \in \{1, 2, \dots, p\}$ and $i \in\{p+1, \dots, p+n_2\}$, 
\[\bJ_{ui}= - \bJ_{ii} e_u^T \bJ^{(i)}_{(11)} \tbZ_{\cdot i}. \]

\item[(v)] If $i,j, k \notin S$ and $i,j \neq k$,
\[ \bJ_{ij}^{(S)} = \bJ_{ij}^{(S\cup \{k\})} + \frac{  \bJ_{ik}^{(S)} \bJ_{kj}^{(S)}}{\bJ_{kk}^{(S)}}.\] 
\end{itemize}
\label{lemma:resolvent_identity}
\end{lemma}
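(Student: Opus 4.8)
The plan is to treat this as a purely linear-algebraic statement: every displayed identity is a consequence of the Schur complement formula for block matrices together with the standard scalar resolvent identities, with no probabilistic content. First I would isolate the two general facts that do all the work. For an invertible matrix $M$ with inverse $G$ and an index $k$, one has the Schur complement identity
\[
\frac{1}{G_{kk}} = M_{kk} - M_{k\cdot}^{(k)}\,(M^{(k)})^{-1}\,M_{\cdot k}^{(k)},
\]
where $M_{k\cdot}^{(k)}$ and $M_{\cdot k}^{(k)}$ denote the $k$-th row and column of $M$ with the $k$-th entry removed, and $M^{(k)}$ is $M$ with row and column $k$ removed; and for off-diagonal entries, $G_{ij} = -G_{ii}\sum_{k\neq i} M_{ik}\,(M^{(i)})^{-1}_{kj}$. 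Part (v) is the minor-expansion identity $G_{ij}^{(S)} = G_{ij}^{(S\cup\{k\})} + G_{ik}^{(S)}G_{kj}^{(S)}/G_{kk}^{(S)}$; I would obtain it by permuting the index $k$ of the minor $M^{(S)}$ to the front and applying the block-inverse formula, exactly as in Lemma~4.4 of \citet{knowles2017anisotropic}. Once these are recorded, the rest of the lemma is substitution.

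For part (i), I would apply the $2\times 2$ block inversion formula to $\bK(\iz)$ using the lower-right block $-I_{n_2}$. Its Schur complement inside $\bK$ is $(\lambda\Sigma_p^{-1}-\iz I_p) - \tbZ(-I_{n_2})^{-1}\tbZ^T = \tbZ\tbZ^T + \lambda\Sigma_p^{-1} - \iz I_p = \bG_\lambda - \iz I_p$, whose inverse is $R(\iz)$ by definition. Substituting into the block-inverse formula gives $\bJ_{(11)} = R$, $\bJ_{(12)} = -R\,\tbZ\,(-I_{n_2}) = R\tbZ$, $\bJ_{(21)} = -(-I_{n_2})\tbZ^T R = \tbZ^T R$, and $\bJ_{(22)} = -I_{n_2} + (-I_{n_2})\tbZ^T R\tbZ(-I_{n_2}) = \tbZ^T R\tbZ - I_{n_2}$. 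This is the linearization step linking $\bJ$ to the resolvent of interest; equivalently one may verify it by checking $\bK\bJ = I_{p+n_2}$ directly, a single block computation.

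For parts (ii)--(iv) I would specialize the scalar identities to $M = \bK(\iz)$, using crucially that $\lambda\Sigma_p^{-1}-\iz I_p$ is diagonal. Thus for $u\le p$ the only nonzero off-diagonal entries of row $u$ of $\bK$ lie in the $\tbZ$-block, $\bK_{u,\,p+a} = \tbZ_{ua}$, so $M_{u\cdot}^{(u)}(M^{(u)})^{-1}M_{\cdot u}^{(u)}$ collapses to $\tbZ_{u\cdot}\bJ_{(22)}^{(u)}\tbZ_{u\cdot}^T$, and with $\bK_{uu} = \lambda/\sigma_u - \iz$ the first identity of (ii) follows; the off-diagonal and cross-block formulas come from the companion expansion $\bJ_{uv} = -\bJ_{uu}\sum_{k\neq u}\bK_{uk}\bJ_{kv}^{(u)}$, followed by one application of (v) to put the expression in the symmetric third form, and the dual expressions (with the roles of $u,v$, or of $i,j$, interchanged) follow by running the same argument from the other index. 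For $i\in\{p+1,\dots,p+n_2\}$ the identical reasoning applies with $\bK_{ii} = -1$ and with row $i$ of $\bK$, restricted to the $\Sigma_p$-block, equal to $\tbZ_{\cdot i}^T$, which also yields the mixed identity in (iv).

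I do not expect a real obstacle: the content is standard and the only thing demanding care is keeping the minor/superscript bookkeeping $(\cdot)^{(S)}$ consistent and tracking the signs introduced by the block-inverse formula (hence the paper's choice to cite \citet{knowles2017anisotropic} rather than reproduce the computation). If a self-contained proof were wanted, the cleanest organization is to prove (i) first by direct multiplication, then express each minor $\bJ^{(S)}$ as the inverse of the corresponding minor of $\bK$ and deduce (ii)--(v) from the two scalar identities above.
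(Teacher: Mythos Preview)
Your proposal is correct and follows exactly the approach the paper indicates: the paper omits the proof entirely, stating that the identities ``follow directly from the Schur complement formula and resolvent identities that have been previously derived in \citet{knowles2017anisotropic} and the references therein,'' which is precisely the block-inverse/Schur-complement route you outline. Your write-up is in fact more detailed than what the paper provides.
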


The following result is our fundamental tool for estimating entries of $\bJ$. 
\begin{lemma}\label{lemma:fundamental_tool}
Fix $a>0$. Then, the following estimates hold for any $z\in Q$. We have 
\[  \|\bJ\| \leq \frac{C}{\eta^3}, \quad  \quad \quad \|\partial z \bJ \|\leq \frac{C}{\eta^6}.\]
Furthermore, let $\bw\in \mathbb{R}^p$ and $\bv \in \mathbb{R}^{n_2}$. Then, we have the bounds 
\begin{align*}
&\sum_{u=1}^p |\bw^T \bJ_{(11)} e_u|^2 = \frac{\Im \bw^T \bJ_{(11)} \bw }{\eta},\\
&\sum_{i=p+1}^{p+n_2} | \bv^T \bJ_{(22)} e_i |^2 \leq \frac{C\|\tbZ\tbZ^T \|}{\eta} \Im \bv^T\bJ_{(22)} \bv + C \bv^T\bv,\\
&\sum_{u=1}^p | \bv^T \bJ_{(21)} e_u|^2 = \frac{\Im \bv^T \bJ_{(22)} \bv}{\eta},\\  
&\sum_{i=p+1}^{p+ n_2} | \bw^T \bJ_{(12)} e_i |^2\leq  C \| \tbZ\tbZ^T \| \sum_{u=1}^{p} |\bw^T \bJ_{(11)} e_u|^2. 
\end{align*}
The estimates remain true for $\bJ^{(S)}$ instead of $\bJ$ if $S\subset\{1,\dots, p \}$ or $S\subset\{p+1,\dots, p+n_2\}$. 
\end{lemma}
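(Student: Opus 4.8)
The plan is to reduce the whole lemma to two elementary ingredients combined with the block representation of $\bJ$ furnished by Lemma~\ref{lemma:resolvent_identity}(i),
\[
\bJ(z) = \begin{pmatrix} R & R\tbZ \\ \tbZ^{T}R & \tbZ^{T}R\tbZ - I_{n_2}\end{pmatrix},
\qquad R = R(z) = (\bG_\lambda - zI_p)^{-1}.
\]
The first ingredient is the trivial spectral bound $\|R\|\le\eta^{-1}$, valid because $\bG_\lambda$ is Hermitian; the second is the Ward identity $RR^{*}=R^{*}R=\eta^{-1}\Im R$, which also holds since $R$ and $R^{*}$ are commuting functions of the Hermitian $\bG_\lambda$. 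I will additionally use the positive-semidefinite domination $\tbZ\tbZ^{T}\preceq\|\tbZ\tbZ^{T}\|I_p$ and, for the operator-norm bounds only, the high-probability estimate $\|\tbZ\tbZ^{T}\|\prec1$ (bounded support of the limiting Mar\v{c}enko--Pastur law of $\tbZ\tbZ^{T}$ plus a routine edge bound); the four quadratic-form bounds, which carry $\|\tbZ\tbZ^{T}\|$ as an explicit factor, will be purely pathwise.

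For the operator-norm estimate: $\|\bJ_{(11)}\|=\|R\|\le\eta^{-1}$; for the off-diagonal blocks I write $\|\bJ_{(12)}\|^{2}=\|R\tbZ\tbZ^{T}R^{*}\|=\|R(\bG_\lambda-\lambda\Sigma_p^{-1})R^{*}\|=\|R^{*}+zRR^{*}-\lambda R\Sigma_p^{-1}R^{*}\|$, and on $Q^{(p)}$ the quantities $|z|$, $\lambda$, $\|\Sigma_p^{-1}\|$ are bounded while $\eta\le1/a'$, so this is $O(\eta^{-2})$, giving $\|\bJ_{(12)}\|=\|\bJ_{(21)}\|=O(\eta^{-1})$; and $\|\bJ_{(22)}\|\le\|\tbZ\tbZ^{T}\|\,\|R\|+1\prec\eta^{-1}+1$. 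Summing the blocks yields $\|\bJ\|\le C\eta^{-3}$ on $Q^{(p)}$ with high probability for a constant $C=C(a,a')$; the exponent $3$ is deliberately wasteful, and any bounded power is all that later arguments require. The derivative bound follows at once from $\partial_{z}\bJ=-\bK^{-1}(\partial_{z}\bK)\bK^{-1}=\bJ\,P\,\bJ$ with $P=\operatorname{diag}(I_p,0)$, so $\|\partial_{z}\bJ\|\le\|\bJ\|^{2}\le C^{2}\eta^{-6}$.

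For the four quadratic-form sums I use that $\bw,\bv$ are real, so each sum is the quadratic form of the relevant block times its conjugate transpose. Thus $\sum_{u}|\bw^{T}\bJ_{(11)}e_u|^{2}=\bw^{T}\bJ_{(11)}\bJ_{(11)}^{*}\bw=\eta^{-1}\bw^{T}\Im(R)\bw=\eta^{-1}\Im(\bw^{T}\bJ_{(11)}\bw)$ by the Ward identity; likewise $\sum_{u}|\bv^{T}\bJ_{(21)}e_u|^{2}=\bv^{T}\tbZ^{T}RR^{*}\tbZ\bv=\eta^{-1}\Im(\bv^{T}\tbZ^{T}R\tbZ\bv)=\eta^{-1}\Im(\bv^{T}\bJ_{(22)}\bv)$, using $\tbZ^{T}R\tbZ=\bJ_{(22)}+I_{n_2}$. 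For the $\bJ_{(12)}$ sum, $\sum_{i}|\bw^{T}\bJ_{(12)}e_i|^{2}=\bw^{T}R\tbZ\tbZ^{T}R^{*}\bw\le\|\tbZ\tbZ^{T}\|\,\bw^{T}RR^{*}\bw=\|\tbZ\tbZ^{T}\|\sum_{u}|\bw^{T}\bJ_{(11)}e_u|^{2}$. Finally, $\sum_{i}|\bv^{T}\bJ_{(22)}e_i|^{2}=\|\bJ_{(22)}^{*}\bv\|^{2}=\|\tbZ^{T}R^{*}\tbZ\bv-\bv\|^{2}\le2\|\tbZ^{T}R^{*}\tbZ\bv\|^{2}+2\|\bv\|^{2}$, and $\|\tbZ^{T}R^{*}\tbZ\bv\|^{2}=(\tbZ\bv)^{*}R\tbZ\tbZ^{T}R^{*}(\tbZ\bv)\le\|\tbZ\tbZ^{T}\|(\tbZ\bv)^{*}RR^{*}(\tbZ\bv)=\|\tbZ\tbZ^{T}\|\,\eta^{-1}\Im(\bv^{T}\bJ_{(22)}\bv)$, which is precisely the stated bound with $C=2$.

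For the minors: when $S\subset\{1,\dots,p\}$ (resp.\ $S\subset\{p+1,\dots,p+n_2\}$) the matrix $\bK^{(S)}$ has the same block shape as $\bK$, with $\lambda\Sigma_p^{-1}-zI$ replaced by its principal minor (resp.\ unchanged) and $\tbZ$ replaced by the submatrix of its rows (resp.\ columns) not indexed by $S$; since $\bJ_{(11)}^{(S)}$ is again a Hermitian resolvent and $\bJ_{(12)}^{(S)}$, $\bJ_{(22)}^{(S)}$ retain the forms $R'\tbZ'$ and $\tbZ'^{T}R'\tbZ'-I$, the arguments above carry over verbatim. The only point demanding any care is the $\bJ_{(22)}$ block --- the sole one not handled by a bare Ward identity --- where one must route through $\|a-b\|^{2}\le2\|a\|^{2}+2\|b\|^{2}$ and absorb the additive $C\bv^{T}\bv$ term; and, for the norm bounds, checking that the constants are uniform over $Q^{(p)}$, which is exactly where boundedness of $|z|$, $\lambda$, $\|\Sigma_p^{-1}\|$ and the constraint $\eta\le1/a'$ come in. Everything else is a direct computation with no genuine obstacle.
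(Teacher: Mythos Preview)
Your treatment of the four quadratic-form bounds is essentially identical to the paper's: both expand each sum as the quadratic form of $(\text{block})(\text{block})^*$, invoke the Ward identity $RR^*=\eta^{-1}\Im R$ for the $\bJ_{(11)}$ and $\bJ_{(21)}$ sums, and for the $\bJ_{(22)}$ sum write $\bJ_{(22)}=\tbZ^T R\tbZ - I$, apply $\|a-b\|^2\le 2\|a\|^2+2\|b\|^2$, and dominate $\tbZ\tbZ^T\preceq\|\tbZ\tbZ^T\|I$. The paper in fact only spells out the first two and dismisses the remaining two as ``similar''; your explicit computations are equivalent. The derivative bound via $\partial_z\bJ=\bJ(\partial_z\bK)\bJ$ and the minor claim are also the same.

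The one genuine difference is the operator-norm bound. The paper factors
\[
\bK=\operatorname{diag}\bigl((\lambda\Sigma_p^{-1}-zI)^{1/2},I\bigr)\cdot J_2\cdot\operatorname{diag}\bigl((\lambda\Sigma_p^{-1}-zI)^{1/2},I\bigr)
\]
and bounds $\|J_2^{-1}\|$ using the explicit eigenstructure of the symmetric $2\times2$ block matrix $J_2$ (citing Knowles--Yin (4.12)--(4.15)); the key step is $\max_u|s_u-1|^{-1}\le\|B\|\,\|(A-B)^{-1}\|\le C/\eta$ with $A=\tbZ\tbZ^T$, $B=\lambda\Sigma_p^{-1}-zI$, which yields a \emph{deterministic} $\|\bJ\|\le C\eta^{-3}$. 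You instead bound the four blocks of $\bJ$ directly from the Schur representation; your identity $R\tbZ\tbZ^TR^*=R^*+zRR^*-\lambda R\Sigma_p^{-1}R^*$ gives a clean deterministic $\|\bJ_{(12)}\|=O(\eta^{-1})$, but for $\|\bJ_{(22)}\|$ you fall back on $\|\tbZ\tbZ^T\|\prec1$, making your overall bound high-probability rather than pathwise. Since the norm bound is only ever used for a crude Lipschitz estimate in lattice continuity arguments, either version suffices and the distinction is cosmetic; your route is more elementary in that it avoids the Knowles--Yin block-eigenvalue machinery.
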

\begin{proof}
First, consider the bound on the operator norm of $\bJ$. 
\begin{align*}
\|\bJ\| &\leq \left\| \left(\begin{matrix} (\lambda\Sigma_p^{-1} - \iz I_p)^{-1} & 0\\ 0 & I \end{matrix}\right)  \right\| \left\| \left(\begin{matrix} I_p &  (\lambda\Sigma_p^{-1} - \iz I_p)^{-1/2}\tbZ   \\ \tbZ^T (\lambda\Sigma_p^{-1} - \iz I_p)^{-1/2} & -I_{n_2} \end{matrix}\right)^{-1}\right\| \\
&= \|J_1\| \|J_2\|, \mbox{ say}.  
\end{align*}
For $J_1$, we clearly have $\| J_1 \| \leq C/(|E - \lambda/\sigma_{1p}| + \eta)\leq C/\eta$, for $\iz \in Q$. The eigenstructure of a matrix of the form $J_2$ is analyzed in (4.12-4.15) of \citet{knowles2017anisotropic}. Based on their arguments, 
\begin{equation*}
\begin{aligned}
\|J_2\| \leq C (1 + \max_{u} \frac{|s_u| + 1}{|s_u- 1|})&\leq  C (1 + \max_u \frac{(C/\eta)\|\tbZ\tbZ^T\| + 1}{|s_u -1|} ) \\
&\leq C + (C/\eta)\|\tbZ\tbZ^T\| \|\lambda\Sigma^{-1}_p - \iz I\| \|(\tbZ\tbZ^T - (\lambda\Sigma_p^{-1} - \iz I))^{-1}\|\\
& \leq C+ \frac{C}{\eta^2},
\end{aligned}
\end{equation*}
where $s_k$'s are the eigenvalues of  $(\lambda\Sigma_p^{-1} - \iz I)^{-1/2}\tbZ \tbZ^T (\lambda\Sigma_p^{-1} - \iz I)^{-1/2}$.  The bound on $\|\bJ\|$ follows. 

The bound on the operator norm of $\partial_{\iz} \bJ$ follows from the fact that $\| \partial_{\iz} \bJ\| = \| \bJ (\partial_{\iz} \bK ) \bJ  \| \leq \frac{C}{\eta^6}$.

Next, using Result (i) of Lemma \ref{lemma:resolvent_identity}, 
\[  \sum_{u=1}^p |\bw^T \bJ_{(11)} e_u|^2  = \sum_{u=1}^p \bw^T \bJ_{(11)} e_u e_u^T \overline{\bJ}_{(11)} \bw  = \bw^T \bJ_{(11)} \overline{\bJ}_{(11)} \bw = \bw^T \frac{ \Im \bJ_{(11)} }{\eta} \bw.\] 
Moreover, 
\begin{align*}
\sum_{i=p+1}^{p+n_2}& |\bv^T \bJ_{(22)} e_i|^2  = \sum_{i=p+1}^{p+n_2} |\bv^T \tbZ^T R\tbZ e_i - \bv^T e_i|^2 \leq 2 \bv^T \tbZ^T \bJ_{(11)} \tbZ \tbZ^T \overline{\bJ}_{(11)} \tbZ\bv + 2\bv^T \bv \\
& \leq C \|\tbZ\tbZ^T\| \bv^T \tbZ^T \bJ_{(11)} \overline{\bJ}_{(11)} \tbZ \bv + C\bv^T \bv  \leq C \|\tbZ \tbZ^T \| \frac{\Im \bv^T \bJ_{(22)} \bv}{\eta} + C\bv^T \bv. 
\end{align*}
The remaining two results on $\bJ_{(12)}$ and $\bJ_{(21)}$ can be proved similarly using Result (i) of Lemma \ref{lemma:resolvent_identity}. We omit the details. 
\end{proof}

\begin{lemma}\label{lemma:bounds_on_ZZ}
Under \ref{enum:high_dimensional_regime}, $\|\tbZ\tbZ^T\| \leq  (1+\sqrt{p/n_2})^{2} + \epsilon$ with high probability, for any fixed $\epsilon >0$.  
\end{lemma}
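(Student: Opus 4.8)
The plan is to recognize $\|\tbZ\tbZ^T\|$ as the square of the largest singular value of $\tbZ$ and then invoke the classical non-asymptotic bound for the operator norm of a rectangular Gaussian matrix. Since we work under \ref{enum:normality}, write $\tbZ = n_2^{-1/2}G$, where $G$ is a $p\times n_2$ matrix with i.i.d.\ $N(0,1)$ entries. The nonzero eigenvalues of $\tbZ\tbZ^T$ are exactly the squares of the singular values of $\tbZ$, so $\|\tbZ\tbZ^T\| = \|\tbZ\|^2 = n_2^{-1}\,s_{\max}(G)^2$, and it suffices to control $s_{\max}(G)$.

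First I would use the standard expectation bound $\mE\, s_{\max}(G) \le \sqrt{p} + \sqrt{n_2}$ (via Gordon's comparison inequality / Slepian's lemma, or Davidson--Szarek). Next, since $A\mapsto s_{\max}(A)=\|A\|_{\mathrm{op}}$ is $1$-Lipschitz with respect to the Frobenius norm, the Gaussian concentration inequality yields, for every $t>0$,
\[
\mP\big( s_{\max}(G) \ge \sqrt{p} + \sqrt{n_2} + t \big) \le e^{-t^2/2}.
\]
Dividing by $\sqrt{n_2}$ and squaring,
\[
\mP\!\left( \|\tbZ\tbZ^T\| \ge \Big(1 + \sqrt{p/n_2} + t/\sqrt{n_2}\Big)^{2} \right) \le e^{-t^2/2}, \qquad t>0.
\]

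To finish, fix $\epsilon>0$. Because $p/n_2\to\gamma_2\in(0,\infty)$ under \ref{enum:high_dimensional_regime}, the quantity $1+\sqrt{p/n_2}$ lies in a fixed compact set for all large $p$, so I can pick a constant $\epsilon'\in(0,1)$, depending only on $\epsilon$ and $\gamma_2$, with $(1+\sqrt{p/n_2}+\epsilon')^2 \le (1+\sqrt{p/n_2})^2 + \epsilon$ eventually. Taking $t=\epsilon'\sqrt{n_2}$ gives
\[
\mP\big( \|\tbZ\tbZ^T\| \ge (1+\sqrt{p/n_2})^2 + \epsilon \big) \le e^{-(\epsilon')^2 n_2/2}.
\]
Since $n_2\asymp p\asymp n$ under \ref{enum:high_dimensional_regime}, the right-hand side is bounded by $n^{-D}$ for every $D>0$ once $n$ is large, which is exactly the conclusion in the sense of Definition \ref{def:high_probability}.

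The argument is essentially routine; the only subtlety is that Definition \ref{def:high_probability} demands a super-polynomially (not merely $o(1)$) small failure probability, which is why the deviation $t$ must be taken of order $\sqrt{n_2}$ rather than a fixed constant. The only place a genuine obstacle would appear is if one wished to remove \ref{enum:normality} and argue under the moment conditions \ref{enum:moments_conditions} alone: then Gaussian concentration is unavailable and one would instead need a quantitative edge bound via a high-order trace moment estimate $\mE\,\tr(\tbZ\tbZ^T)^{k}$ with $k$ growing slowly in $n$ followed by Markov's inequality, or else defer the bound to the local-law analysis developed later in this section. For the present purposes the Gaussian route is entirely sufficient.
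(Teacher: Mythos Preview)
Your argument is correct. The paper does not give an argument at all; it simply cites Theorem~2.10 of \citet{bloemendal2014isotropic}. By contrast, you supply a self-contained elementary proof via the Gordon/Davidson--Szarek expectation bound together with Gaussian Lipschitz concentration, which is entirely adequate in the present context because the whole of this section is carried out under the standing Gaussianity assumption \ref{enum:normality}. The advantage of your route is explicitness and independence from the local-law machinery; the advantage of the paper's citation is that the result of \citet{bloemendal2014isotropic} also covers the non-Gaussian case under moment conditions, a point you yourself anticipate in your closing remark.
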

\noindent The lemma follows from Theorem 2.10 of \citet{bloemendal2014isotropic}.

\subsubsection{Weak entrywise law} \label{ssub:weak_entrywise_law}

We first show a weak entrywise local law of $\bJ$ in $Q$. 
\begin{proposition}\label{proposition:weak_local_law}
Suppose that the assumptions of Theorem \ref{thm:strong_local_law} hold. Define 
\[ \Lambda  = \max_{1\leq s,t\leq p+n_2 } |(\bJ - \Omega)_{st}| \quad  \mbox{and} \quad \Lambda_o = \max_{s\neq t} | (\bJ-\Omega)_{st}  |.\]
Then, $\Lambda \prec (n_2 \eta)^{-1/4}$ uniformly in $\iz \in Q \cup Q_{\rm away}$. 
\end{proposition}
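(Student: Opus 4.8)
The goal is a \emph{weak} entrywise local law, which typically serves as an a priori input to a self-improving bootstrap argument that eventually yields the strong law of Theorem~\ref{thm:strong_local_law}. The plan is to follow the now-standard ``self-consistent equation plus stability'' strategy from Sections~4--5 of \citet{knowles2017anisotropic}, adapted to the present matrix $\bK(\iz)$ whose $(1,1)$-block carries the extra $\lambda\Sigma_p^{-1}$ term. First I would fix a large spectral parameter (i.e.\ $\eta$ of order one), where $\|\bJ\|\lesssim 1$ and all denominators are bounded below by Lemma~\ref{lemma:fundamental_tool} and Lemma~\ref{lemma:bounds_on_ZZ}; at such $\iz$ the bound $\Lambda\prec (n_2\eta)^{-1/2}$ is essentially immediate from large-deviation estimates. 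Then I would run a continuity (bootstrap) argument in $\eta$ down a fine lattice $\eta\in\eta(\iz_0)-p^{-4}\mathbb{N}$ with $E$ fixed, exactly as in Corollary~\ref{corollary:lattice_stability_phi}, using the Lipschitz continuity of $\bJ$ in $\iz$ (the bound $\|\partial_\iz\bJ\|\le C\eta^{-6}$ from Lemma~\ref{lemma:fundamental_tool}) to propagate the estimate between lattice points.

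The analytic heart of the argument is the derivation of an approximate self-consistent equation for the diagonal entries of $\bJ$. Using the resolvent identities (ii)--(iv) of Lemma~\ref{lemma:resolvent_identity}, one writes $1/\bJ_{uu} = \lambda/\sigma_u - z - \tbZ_{u\cdot}\bJ_{(22)}^{(u)}\tbZ_{u\cdot}^T$ for $u\le p$ and $1/\bJ_{ii} = -1 - \tbZ_{\cdot i}^T\bJ_{(11)}^{(i)}\tbZ_{\cdot i}$ for $i>p$. The quadratic forms in the Gaussian rows/columns of $\tbZ$ concentrate around $n_2^{-1}\tr\bJ_{(22)}^{(u)}$ and $n_2^{-1}\tr\bJ_{(11)}^{(i)}$ respectively; the fluctuation is controlled by the large-deviation bound for quadratic forms (Hanson--Wright type), whose error is governed precisely by the control parameter $\Psi(\iz)$ through the identities in Lemma~\ref{lemma:fundamental_tool} relating $\sum_u|\bw^T\bJ_{(11)}e_u|^2$ to $\eta^{-1}\Im\bw^T\bJ_{(11)}\bw$. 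Replacing the minors $\bJ^{(u)}$, $\bJ^{(i)}$ by the full $\bJ$ using identity (v) of Lemma~\ref{lemma:resolvent_identity} (a step costing an extra $O_\prec(\Lambda^2/\eta)$ or so), and abbreviating $\hat\phi_p = p^{-1}\tr\bJ_{(11)}$, one arrives at a closed approximate system: $\bJ_{uu}\approx (\lambda/\sigma_u - z - (p/n_2)\hat{m}_p)^{-1}$ with $\hat m_p$ related to $\hat\phi_p$ via $\hat m_p = z - [1+(p/n_2)\hat\phi_p]^{-1}$, up to an error $O_\prec(\Psi + \Lambda_o^2 + \Lambda\Lambda_o + \ldots)$. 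Summing over $u$ gives a perturbed version of the finite-sample Mar\v{c}enko--Pastur equation~\eqref{eq:finite_sample_version_MP_G} for $\hat\phi_p$.

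Next I would invoke the strong stability of Eq.~\eqref{eq:MP_G} (Lemma~\ref{lemma:stablity_phi}, with its finite-sample analogue for $\phi_p$, valid here because $\rho>\lambda/\sigma_{\max}$ by \ref{enum:non_discrete}) together with Corollary~\ref{corollary:stability_phi_cor_1} and Corollary~\ref{corollary:stability_phi_cor_2}: the perturbed equation forces $|\hat\phi_p - \phi_p|$ to be small, hence $\hat m_p\approx m_p$, hence each $\bJ_{uu}\approx \Omega_{uu}=(\lambda/\sigma_u - m_p)^{-1}$ and each $\bJ_{ii}\approx z - m_p = \Omega_{ii}$; the off-diagonal entries $\bJ_{uv}$, $\bJ_{ij}$, $\bJ_{ui}$ are then bounded directly by the resolvent identities since they come with a factor of a fluctuating quadratic/linear form, giving $\Lambda_o\prec \sqrt{\Psi}$ or better. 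Feeding this back into the error terms of the self-consistent equation closes the bootstrap and yields the claimed $\Lambda\prec(n_2\eta)^{-1/4}$ — the exponent $1/4$ (rather than the optimal $\sqrt{\Psi}$) reflecting that this is only the weak law, where one has not yet shown $\Im\phi_p\lesssim 1$ uniformly and so bounds $\Psi\lesssim (n_2\eta)^{-1/2}$ crudely. For the regime $\iz\in Q^{(p)}_{\rm away}$ the argument is easier: all resolvents are bounded, $\Psi\lesssim n_2^{-1/2}$, and one applies Lemma~\ref{lemma:stability_phi_away_case} instead.

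The main obstacle I anticipate is the bookkeeping around the minor-removal steps and the precise tracking of which error terms are ``self-improving'' versus genuinely negligible — that is, organizing the fixed-point/bootstrap in $\eta$ so that the a priori bound $\Lambda\prec(n_2\eta)^{-1/4}$ appearing on the right-hand side of the self-consistent equation is strictly improved at each lattice step, while simultaneously controlling the non-Hermitian-looking structure of $\bK$ (whose $(2,2)$-block is $-I_{n_2}$, not a spectral-parameter-dependent block). Since the paper explicitly says it ``closely follow[s] the arguments in Sections 4 and 5 of \citet{knowles2017anisotropic}'' and omits details, I would likewise cite those arguments for the large-deviation and fluctuation-averaging estimates, and concentrate the written proof on the two genuinely new ingredients: the modification of the self-consistent equation to accommodate $\lambda\Sigma_p^{-1}$, and the invocation of the stability lemmas of Section~\ref{subsec:additional_properties_of_the_generalized_mar_v_c} in place of their Wishart-type counterparts.
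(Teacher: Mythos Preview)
Your proposal is correct and follows essentially the same route as the paper: self-consistent equations from Lemma~\ref{lemma:resolvent_identity}, large-deviation control of the fluctuations $W_s$ and off-diagonals via Lemma~\ref{lemma:fundamental_tool}, the perturbed Mar\v{c}enko--Pastur equation for $\hat\phi_p$, stability Lemmas~\ref{lemma:stablity_phi}--\ref{lemma:stability_phi_away_case}, an initial bound at $\eta\ge 1$, and a stochastic continuity argument down to small $\eta$. Two small clarifications: the paper organizes the continuity step as a ``gap in the range'' argument for $v(z)=\max_{s\in L(z)}\Lambda(s)(n_2\Im s)^{1/4}$ rather than a direct iteration, and the exponent $1/4$ arises because the stability lemma gives $|\hat\phi_p-\phi_p|\lesssim\sqrt{\delta}$ with $\delta\prec\Psi_\frakB\lesssim(n_2\eta)^{-1/2}$ (your remark about $\Im\phi_p$ not being uniformly bounded is not the reason---it is bounded on $Q^{(p)}$); also $\Lambda_o\prec\Psi_\frakB$ directly, not $\sqrt{\Psi}$.
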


Define the averaged control parameter 
\[ \frakB= \frakB_p + \frakB_n, \quad \frakB_p = \left| \frac{1}{p} \sum_{i=u}^p (\bJ_{uu} - \Omega_{uu})\right|, \quad \frakB_n = \left| \frac{1}{n_2} \sum_{i=p+1}^{p+n_2} (\bJ_{ii} - \Omega_{ii})\right|. \]
Note that $\frakB_p = |\hat{\phi}_p  -\phi_p|$. We have the trivial bound $\frakB \leq C \Lambda$.

For $s\in \{1,2,\dots, p+n_2\}$, we introduce the conditional expectation
\[  \mE_s[~\cdot~] = \mE[ ~\cdot ~\mid \bK^{(s)}]. \]
Using Lemma \ref{lemma:resolvent_identity}, we get that, for $u\in \{1,\dots, p\}$, 
\begin{equation}\label{eq:self_consistent1}
\frac{1}{\bJ_{uu}} = \lambda/\sigma_{up} - z - \frac{1}{n_2} \tr \bJ_{(22)}^{(u)} - W_u, \quad \quad W_u = (1- \mE_u)(\tbZ_{u\cdot} \bJ_{(22)}^{(u)} \tbZ_{u\cdot}^T).
\end{equation}
For $i \in \{p+1, \cdots, p+n_2\}$, 
\begin{equation}\label{eq:self_consistent2}
\frac{1}{\bJ_{ii}} = -1 - \frac{1}{n_2} \tr \bJ_{(11)}^{(i)} - W_i, \quad\quad W_i = (1- \mE_i) (\tbZ_{\cdot i}^T \bJ_{(11)}^{(i)} \tbZ_{\cdot i}). 
\end{equation}

We define the $\iz$-dependent event $\Xi = \{\Lambda \leq 1/(\log n_2) \}$ and the control parameter
\[ \Psi_{\frakB} = \sqrt{\frac{\Im \phi_p + \frakB }{n_2 \eta}}.\] 
The following estimate is analogous to Lemma 5.2 of \citet{knowles2017anisotropic}. 
\begin{lemma}\label{lemma:bounded_W_off_diagonal}
Suppose that the assumptions of Theorem \ref{thm:strong_local_law} hold. Then, uniformly  for $s \in \{1,\dots, p+n_2\}$ and $\iz \in Q\cup Q_{\rm away}$, we have 
\begin{align*}
&\mathbb{I}(\Xi) (|W_s| + \Lambda_o) \prec \Psi_{\frakB},\\
&\mathbb{I}(\eta \geq 1) (|W_s| + \Lambda_o) \prec \Psi_{\frakB}.
\end{align*}
\end{lemma}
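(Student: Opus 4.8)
The plan is to bound $|W_s|$ via a large-deviation estimate for quadratic forms, and then bound $\Lambda_o$ by the off-diagonal resolvent identities in Lemma \ref{lemma:resolvent_identity}, all conditioned on the event $\Xi$ (or on $\eta \ge 1$, which plays an analogous role since the resolvent is trivially bounded there). The starting point for $|W_s|$ is the representation \eqref{eq:self_consistent1}--\eqref{eq:self_consistent2}: $W_u = (1-\mE_u)(\tbZ_{u\cdot} \bJ_{(22)}^{(u)} \tbZ_{u\cdot}^T)$, a centered quadratic form in the independent Gaussian row $\tbZ_{u\cdot}$ (independent of $\bJ_{(22)}^{(u)}$, since the latter is built from $\bK^{(u)}$). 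The large-deviation bound for such forms (the standard Hanson--Wright type estimate used throughout \citet{knowles2017anisotropic}, e.g. their Lemma 3.8) gives, for any deterministic matrix $A$,
\[
\Big| \tbZ_{u\cdot} A \tbZ_{u\cdot}^T - \tfrac{1}{n_2}\tr A \Big| \prec \frac{\|A\|_F}{n_2} + \frac{\|A\|}{n_2}\,,
\]
so applied conditionally with $A = \bJ_{(22)}^{(u)}$ we get $|W_u| \prec n_2^{-1}\|\bJ_{(22)}^{(u)}\|_F + n_2^{-1}\|\bJ_{(22)}^{(u)}\|$. The analogous statement holds for $W_i$ with $\bJ_{(11)}^{(i)}$.

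**Converting the Hilbert--Schmidt norm into the control parameter.** The key identity is $\|\bJ_{(22)}^{(u)}\|_F^2 = \sum_{i} |e_i^T \bJ_{(22)}^{(u)} e_i'|^2 = \eta^{-1}\Im\tr \bJ_{(22)}^{(u)}$, i.e. the Ward identity, which is exactly the kind of bound supplied by Lemma \ref{lemma:fundamental_tool} (the first displayed estimate there, applied with $\bw = e_u$ ranging, or directly its trace version). On the event $\Xi$ we have $\bJ_{(22)}^{(u)} \approx \Omega_{(22)}^{(u)}$ up to $\Lambda \le 1/\log n_2$, and $\Omega_{(22)} = (\iz - m_p(\iz))I_{n_2}$ so $\tfrac1{n_2}\Im\tr\Omega_{(22)} \asymp \Im m_p$. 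Using the relation $\Im m_p = \Im\iz\cdot(\text{something bounded}) + (p/n_2)(\cdots)\Im\phi_p$, one checks $\Im m_p \asymp \eta + \Im\phi_p$ up to multiplicative constants (the constants are controlled by Lemma \ref{lemma:boundedness_phi} / its finite-sample analogue, so $|1+(p/n_2)\phi_p|\asymp 1$). Replacing the true $\bJ$-quantities by $\Omega$-quantities costs at most an error governed by $\Lambda$, and the bookkeeping then yields $\|\bJ_{(22)}^{(u)}\|_F^2/n_2^2 \prec (\Im\phi_p + \frakB + \text{tiny})/(n_2\eta)$, which is $\Psi_{\frakB}^2$ after absorbing the $\eta$-term into $\Im\phi_p + \frakB$; the operator-norm term $\|\bJ_{(22)}^{(u)}\|/n_2 \lesssim (n_2\eta)^{-1}$ is dominated by $\Psi_{\frakB}$. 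One also needs to pass between $\bJ^{(u)}$ and $\bJ$ (and between $\frakB$ defined with or without the removed index), which is the standard minor-versus-full-matrix interchange via Result (v) of Lemma \ref{lemma:resolvent_identity}; this costs only a further $O_\prec(\Psi_{\frakB})$ or $O_\prec(\Lambda^2)$, negligible on $\Xi$.

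**Bounding $\Lambda_o$.** For the off-diagonal entries, use Result (ii) of Lemma \ref{lemma:resolvent_identity}: $\bJ_{uv} = -\bJ_{uu}\,\tbZ_{u\cdot}\bJ_{(22)}^{(u)} e_v$, and again $\tbZ_{u\cdot}\bJ_{(22)}^{(u)}e_v$ is a linear form in the Gaussian vector $\tbZ_{u\cdot}$ independent of $\bJ^{(u)}_{(22)}$, so the large-deviation estimate gives $|\tbZ_{u\cdot}\bJ_{(22)}^{(u)}e_v| \prec n_2^{-1/2}\|\bJ_{(22)}^{(u)}e_v\|$, and $\|\bJ_{(22)}^{(u)}e_v\|^2 = \sum_i |e_v^T\bJ_{(22)}^{(u)} e_i|^2$, bounded by Lemma \ref{lemma:fundamental_tool} in terms of $\eta^{-1}\Im(e_v^T\bJ_{(22)}e_v)$ plus a bounded term. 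On $\Xi$ this is $\asymp (\Im m_p + \eta)/\eta \asymp (\Im\phi_p + \frakB)/\eta$ up to constants, and since $|\bJ_{uu}|\asymp 1$ on $\Xi$ we obtain $|\bJ_{uv}| \prec \Psi_{\frakB}$. The mixed block $\bJ_{ui}$ (Result (iv)) and the $\bJ_{ij}$ block (Result (iii)) are handled identically, using the corresponding norm bounds in Lemma \ref{lemma:fundamental_tool} together with Lemma \ref{lemma:bounds_on_ZZ} to keep the $\|\tbZ\tbZ^T\|$ factors bounded. Taking the maximum over all $O((p+n_2)^2)$ pairs and using that $\prec$ is preserved under such unions (polynomially many events) gives $\mathbbm 1(\Xi)\Lambda_o \prec \Psi_{\frakB}$. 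The case $\eta\ge 1$ is easier and identical in structure since $\|\bJ\|\le C$ deterministically there by Lemma \ref{lemma:fundamental_tool}, so $\Xi$ is automatic.

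**Main obstacle.** The delicate point is the circularity between $\Psi_{\frakB}$ (which contains $\frakB$) and the quantity being estimated ($\frakB \le C\Lambda$, and $\Lambda$ is what we are controlling): we are proving a self-improving bound, not an absolute one, so we must be careful that the large-deviation estimates are applied with the minor $\bJ^{(u)}$ quantities (genuinely independent of the probed randomness) and that all interchanges between $\bJ^{(u)}$, $\bJ$, $\frakB$, and the $\Im\phi_p$ reference are controlled by $\Lambda$ on $\Xi$ without ever dividing by something that could be small. Verifying $\Im m_p \asymp \Im\phi_p + \eta$ with $\asymp$-constants uniform over $Q^{(p)}$ — and in particular that $\Im\phi_p$ does not vanish faster than $\eta$ in the relevant subregion — is where the finite-sample edge-regularity from Lemma \ref{lemma:characterization_rho}/\ref{lemma:boundedness_phi} is essential, and is the step most likely to require care.
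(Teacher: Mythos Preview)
Your overall strategy—large-deviation estimates for quadratic/linear forms combined with the Ward-type identities of Lemma~\ref{lemma:fundamental_tool} and the minor–full bookkeeping via Result~(v) of Lemma~\ref{lemma:resolvent_identity}—is exactly the paper's. Your treatment of $|W_s|$ (quadratic form $\to$ Frobenius norm $\to$ trace $\to$ $\Im\phi_p+\frakB$) matches what the paper means by ``an analogous argument for $|W_s|$''.

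The one substantive difference is in the off-diagonal bound. You invoke the \emph{linear-form} version of Result~(ii), $\bJ_{uv}=-\bJ_{uu}\,\tbZ_{u\cdot}(\cdots)e_v$, so the large-deviation step leaves a single column norm and hence, after the Ward identity, a \emph{single} diagonal entry $\Im \bJ_{vv}^{(u)}$. On $\Xi$ that entry is $\Im\Omega_{vv}+O(\Lambda)$, controlled by $\Im\phi_p+\Lambda$, not by $\Im\phi_p+\frakB$ as you write. (There is also a dimensional slip: for $u,v\le p$ the relevant column lives in the mixed block $\bJ_{(21)}^{(u)}$, not in $\bJ_{(22)}^{(u)}$.) Since the lemma asks specifically for $\Psi_{\frakB}$, and the distinction between $\frakB$ and $\Lambda$ is precisely what drives the later bootstrap, this is a genuine gap as written.

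The paper sidesteps it by using the \emph{bilinear-form} identity in Result~(ii),
\[
\bJ_{uv}=\bJ_{uu}\,\bJ_{vv}^{(u)}\,\tbZ_{u\cdot}\,\bJ_{(22)}^{(uv)}\,\tbZ_{v\cdot}^{T},
\]
so that the large-deviation bound produces the full Frobenius norm $n_2^{-2}\sum_{i,j}|\bJ_{ij}^{(uv)}|^{2}$. Lemma~\ref{lemma:fundamental_tool} converts this into the \emph{trace} $n_2^{-1}\sum_i\Im\bJ_{ii}^{(uv)}$, which after passing from $\bJ^{(uv)}$ to $\bJ$ via Result~(v) equals $\Im\phi_p+\frakB+O(\Lambda_o^{2})$. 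The only circular term is $\Lambda_o^{2}$, and since $(n_2\eta)^{-1}\to 0$ on $Q^{(p)}$ it is absorbed, giving $\mathbbm{1}(\Xi)\Lambda_o\prec\Psi_{\frakB}$ directly. Your linear-form route can be closed too, but only by simultaneously controlling $\Lambda_d$ through the self-consistent equations \eqref{eq:self_consistent1}--\eqref{eq:self_consistent2}; the paper's bilinear route decouples the two estimates and is cleaner for the stated conclusion.
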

\begin{proof}[Proof of Lemma \ref{lemma:bounded_W_off_diagonal}]
The proof relies on the identities from Lemma \ref{lemma:resolvent_identity} and large deviation estimates like Lemma 3.1 of \citet{bloemendal2014isotropic}. The arguments in the proof are similar to those in Lemma 5.2 of \citet{knowles2017anisotropic}. 

From Lemma \ref{lemma:properties_phi_h}, when $\iz \in Q\cup Q_{\rm away}$, $|\Omega_{ss}| \asymp 1$. Therefore, 
\[ \mathbb{I}(\Xi) \bJ_{ss} \asymp 1. \] 
Using Result (v) of Lemma \ref{lemma:resolvent_identity} and a simple induction argument, it is not hard to conclude that 
\[ \mathbb{I}(\Xi) \bJ_{ss}^{(S)} \asymp 1,\]
for any $S \subset \{1,\dots, p+n_2\}$ and $s\notin S$, satisfying $|S|\leq C$.

Let us first estimate $\Lambda_o$. When $u\neq v \in \{1,\dots, p\}$, using Result (ii) of Lemma \ref{lemma:resolvent_identity} and a large deviation estimation (Lemma 3.1 of \citet{bloemendal2014isotropic}), 
\begin{equation}\label{eq:proof_rigidity_eq_bounds_bJ_uv}
\mathbb{I}(\Xi)  |\bJ_{uv}| \leq  \mathbb{I}(\Xi) |\bJ_{uu} \bJ_{vv}^{(u)} | \left|\sum_{i,j = p+1}^{p+n_2}  \tilde{Z}_{ui} \tilde{Z}_{vj} \bJ^{(uv)}_{i, j}    \right| \prec \mathbb{I}(\Xi) C \left( \frac{1}{n_2^2} \sum_{i,j =p+1}^{p+n_2} |\bJ_{ij}^{(uv)} |^2 \right)^{1/2}.
\end{equation}
The term in parentheses is 
\begin{align*}
&\mathbb{I}(\Xi) \frac{1}{n_2^2} \sum_{i,j=p+1}^{p+n_2} |\bJ_{ij}^{(uv)} |^2  = \mathbb{I}(\Xi) \frac{1}{n_2^2} \sum_{i,j=p+1}^{p+n_2} |e_i^T\bJ^{(uv)}_{22} e_j|^2  \\
& \prec \mathbb{I}(\Xi) \frac{1}{n_2^2\eta} \|\tilde{\bZ}\tilde{\bZ}^T\| \sum_{i=p+1}^{p+n_2}\Im \bJ_{ii}^{(uv)} + \mathbb{I}(\Xi)  \frac{1}{n_2} \\
& \prec \mathbb{I}(\Xi) \frac{1}{n_2^2\eta} \|\tilde{\bZ}\tilde{\bZ}^T\| \sum_{i=p+1}^{p+n_2}\Im \bJ_{ii} +  \mathbb{I}(\Xi) \frac{\Lambda_o^2}{n_2\eta} +  \mathbb{I}(\Xi)  \frac{1}{n_2} \\
&\prec \frac{\Im \phi_p + \frakB + \Lambda_o^2}{n_2\eta}  + \frac{1}{n_2} \prec \frac{\Im \phi_p + \frakB + \Lambda_o^2 }{n_2\eta}.
\end{align*}
Here, we are using Result (v) of Lemma \ref{lemma:resolvent_identity}, Lemma \ref{lemma:bounds_on_ZZ}  and Lemma \ref{lemma:fundamental_tool}. The last step is due to the fact that $1\prec {\Im\phi_p}/\eta$.

A similar result for $\bJ_{ij}$, $\bJ_{ui}$, and $\bJ_{iu}$, $i\in\{p+1, p+n_2\}$ and $u\in\{1, p\}$ can be obtained using Result (iii) and (iv) of Lemma \ref{lemma:resolvent_identity} and a large deviation estimation. We omit the details as the arguments are very similar.

All together, we conclude that 
\[\mathbb{I}(\Xi) \Lambda_o \prec \mathbb{I}(\Xi) \sqrt{\frac{\Im \phi_p + \frakB +\Lambda_o^2}{n_2\eta}}.\] 
Since $1/\sqrt{n_2\eta} \to 0$ when $\iz \in Q$, it is easy to deduce that $\mathbb{I}(\Xi)\Lambda_0 \prec \Psi_{\frakB}$.

An analogous argument for $|W_s|$ (see e.g. Lemma 5.2 of \citet{erdos2013local}) completes the proof of the statement in Proposition \ref{proposition:weak_local_law} when $\Xi$ holds.

As for the case when $\eta\geq 1$, we proceed similarly. For $\eta\geq 1$, we proceed as above to get $|W_s|\prec n_2^{-1/2}$, where we used that $\|\bJ\|\leq C/\eta \leq C$ by Lemma \ref{lemma:fundamental_tool}. Similarly, as in Eq. \eqref{eq:proof_rigidity_eq_bounds_bJ_uv}, we get 
\[\bJ_{uv} \leq  |\bJ_{uu} \bJ_{vv}^{(u)} | \left|\sum_{i,j = p+1}^{p+n_2}  \tilde{Z}_{ui} \tilde{Z}_{vj} \bJ^{(uv)}_{i, j}    \right|  \prec |\bJ_{uu} \bJ_{vv}^{(u)} \frac{1}{\sqrt{n_2}},\]
where in the last step we used that $|\Im\bJ^{(uv)}_{ii}| \leq C$ because $\|\bJ\|\leq C$. Moreover, $|\bJ_{uu} \bJ_{vv}^{(u)} | \leq C$. This completes the proof of Lemma \ref{lemma:bounded_W_off_diagonal}. 
\end{proof}

In the following, we aim to show that $\hat{\phi}_p$ satisfies a perturbed version of Eq. \eqref{eq:MP_G} when $\Xi$ holds or when $\eta\geq 1$. From Eq. \eqref{eq:self_consistent1}, Eq. \eqref{eq:self_consistent2}, Lemma \ref{lemma:bounded_W_off_diagonal}, Result (v) of Lemma \ref{lemma:resolvent_identity}, we get that uniformly for $u \in \{1,\dots, p\}$, $i \in \{p+1,\dots, p+n_2\}$, $z\in Q\cup Q_{\rm away}$, 
\begin{equation}\label{eq:self_consistent1_approximate}
\mathbb{I}(\Xi) \frac{1}{\bJ_{uu}} = \mathbb{I}(\Xi) \left(\lambda/\sigma_{up} - z  - \frac{1}{n_2} \sum_{i=p+1}^{p+n_2} \bJ_{ii} - W_u + O_\prec(\Psi_{\frakB}^2)\right),
\end{equation}
\begin{equation}\label{eq:self_consistent2_approximate}
- \mathbb{I}(\Xi) \frac{1}{\bJ_{ii}} = \mathbb{I}(\Xi)\left(1+ \frac{1}{n_2} \sum_{u=1}^p  \bJ_{uu} + W_i + O_\prec(\Psi_{\frakB}^2) \right).
\end{equation}

Invert Eq. \eqref{eq:self_consistent2_approximate}, get the Taylor's expansion of the right-hand side to the order of $W_i^2$, and average the equation over $i=p+1,\dots, p+n_2$. We obtain 
\begin{equation}\label{eq:expression_bJ_22_trace}
 -\mathbb{I}(\Xi) \frac{1}{n_2}  \sum_{i=p+1}^{p+n_2} \bJ_{ii}  = \mathbb{I}(\Xi) \frac{1}{1 + \frac{1}{n_2} \sum_{u=1}^{p} \bJ_{uu}} + \mathbb{I}(\Xi) \delta_1,
 \end{equation}
where the residual $\delta_1$ is 
\[\mathbb{I}(\Xi)\delta_1 = \mathbb{I}(\Xi) [W]_n + \mathbb{I}(\Xi)\frac{1}{n_2}\sum_{i=p+1}^{p+n_2} O_{\prec}( |W_i|^2) + \mathbb{I}(\Xi) O_\prec(\Psi_{\frakB}^2) = \mathbb{I}(\Xi)[W]_n + O_\prec(\Psi_\frakB^2),\]
\[ [W]_n  = \frac{1}{(1 + \frac{1}{n_2} \sum_{u=1}^{p} \bJ_{uu})^2} \frac{1}{n_2} \sum_{i=p+1}^{p+n_2} W_i.\]
Here, we are using Lemma \ref{lemma:bounded_W_off_diagonal}. 

Similarly, from Eq. \eqref{eq:self_consistent1_approximate},
\[ \mathbb{I}(\Xi) \frac{1}{p} \sum_{u=1}^p \bJ_{uu}  = \mathbb{I}(\Xi) \frac{1}{p} \sum_{u=1}^p  \frac{1}{\lambda/\sigma_{up}  -z -\frac{1}{n_2} \sum_{i=p+1}^{p+n_2} \bJ_{ii}} + \delta_2,\]
where 
\[\mathbb{I}(\Xi) \delta_2 = \mathbb{I}(\Xi) [W]_p + \mathbb{I}(\Xi) \frac{1}{p} \sum_{u=1}^{p} O_\prec(|W_u|^2) + \mathbb{I}(\Xi)O_\prec(\Psi_\frakB^2) = \mathbb{I}(\Xi) [W]_p + O_\prec(\Psi_\frakB^2),\]
\[[W]_p = \frac{1}{p} \sum_{u=1}^p  \frac{-1}{(\lambda/\sigma_{up}  -z -\frac{1}{n_2} \sum_{i=p+1}^{p+n_2} \bJ_{ii})^2} W_u. \]

All together, we obtain
\[ \mathbb{I}(\Xi) \hat{\phi}_p = \mathbb{I}(\Xi)  \int \frac{dF^{\Sigma_p}(\tau)}{\lambda/\tau - z - \displaystyle\frac{1}{1+ (p/n_2)\hat{\phi}_p}  + \mathbb{I}(\Xi) \delta_1 } + \mathbb{I}(\Xi) \delta_2, \]
which is a perturbed version of Eq. \eqref{eq:MP_G}. Note that Lemma \ref{lemma:bounded_W_off_diagonal} guarantees that $\mathbb{I}(\Xi)\delta_1\prec \Psi_\frakB$ and $\mathbb{I}(\Xi)\delta_2\prec \Psi_\frakB$. This estimate is actually not tight. We shall later improve the estimate to $O_\prec (\Psi_\frakB^2)$. 

On the other hand, it is not hard to identify that the derivation actually only relies on the fact that $\mathbb{I}(\Xi)(|W|_s + \Lambda_o) \prec \Psi_\frakB$ from Lemma \ref{lemma:bounded_W_off_diagonal}. Since the same control holds when $\eta \geq 1$. We obtain that 
\[ \mathbb{I}(\eta\geq1) \hat{\phi}_p = \mathbb{I}(\eta\geq 1)  \int \frac{dF^{\Sigma_p}(\tau)}{\lambda/\tau - z - \displaystyle\frac{1}{1+ (p/n_2)\hat{\phi}_p}  +\mathbb{I}(\eta\geq 1)  \delta_1 } + \mathbb{I}(\eta\geq1)\delta_2,\]
and $\mathbb{I}(\eta\geq1)\delta_1 \prec \Psi_\frakB$, $\mathbb{I}(\Xi) \delta_2 \prec \Psi_\frakB$.

We apply Lemma \ref{lemma:stablity_phi}, Lemma \ref{lemma:stability_phi_away_case}, and Corollary \ref{corollary:stability_phi_cor_2} to control the difference between $\hat{\phi}_p$ and $\phi_p$ when $\eta \geq 1$. In particular, we have the following results. 
\begin{lemma}\label{lemma:weak_local_eta_1}
Suppose that the conditions of Theorem \ref{thm:rigidity_G} hold. Then, we have $\Lambda \prec n_2^{-1/2}$ uniformly in $\iz \in Q\cup Q_{\rm away}$ satisfying $\eta \geq1$.  
\end{lemma}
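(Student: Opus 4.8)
The plan is to exploit the large-$\eta$ regime, where everything is well-conditioned, and run a bootstrap that terminates in one step because the naive bound $\Lambda \prec 1$ is already much better than the threshold $1/\log n_2$ defining $\Xi$. First I would observe that for $\iz \in Q^{(p)}\cup Q^{(p)}_{\rm away}$ with $\eta \ge 1$ we have $\|\bJ\| \le C/\eta \le C$ by Lemma \ref{lemma:fundamental_tool}, and also $\|\Omega\| \le C$ by Lemma \ref{lemma:boundedness_phi} (the bounds on $\phi_p$ transfer to $\Omega$); hence $\Lambda \le C$ deterministically, so in particular $\mathbbm{1}(\eta\ge 1)$ makes the event $\Xi$ irrelevant and the second display of Lemma \ref{lemma:bounded_W_off_diagonal} applies directly: $|W_s| + \Lambda_o \prec \Psi_\frakB$ uniformly for $\eta\ge 1$. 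Since $\eta \ge 1$ forces $\Psi_\frakB = \sqrt{(\Im\phi_p + \frakB)/(n_2\eta)} \prec \sqrt{1/n_2}$ (using $\Im\phi_p \le C$ and $\frakB \le C\Lambda \le C$), we already get the off-diagonal control $\Lambda_o \prec n_2^{-1/2}$ for free.

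Next I would upgrade the self-consistent equation. From the derivation in the excerpt (Eqs. \eqref{eq:self_consistent1_approximate}--\eqref{eq:self_consistent2_approximate}, which hold with $\mathbbm{1}(\eta\ge1)$ replacing $\mathbbm{1}(\Xi)$), $\hat\phi_p$ satisfies
\[
\hat\phi_p = \int \frac{dF^{\Sigma_p}(\tau)}{\lambda/\tau - \iz - (1 + (p/n_2)\hat\phi_p)^{-1} + \delta_1} + \delta_2,
\]
with $\delta_1, \delta_2 \prec \Psi_\frakB \prec n_2^{-1/2}$ for $\eta\ge1$. Here the key improvement, exactly as in the $\Xi$ case, is to Taylor-expand the inverted forms of \eqref{eq:self_consistent1_approximate}--\eqref{eq:self_consistent2_approximate} to second order in the $W$'s and to recognize that the linear-in-$W$ pieces $[W]_p$, $[W]_n$ are, by a fluctuation-averaging (martingale) estimate of the type used in Lemma 5.2 of \citet{erdos2013local} or Section 5 of \citet{knowles2017anisotropic}, of order $\prec \Psi_\frakB^2 \prec n_2^{-1}$ rather than merely $\Psi_\frakB$; the quadratic terms $n_2^{-1}\sum |W_i|^2$ are also $\prec \Psi_\frakB^2$. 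Thus in fact $\delta_1, \delta_2 \prec \Psi_\frakB^2 + n_2^{-1} \prec n_2^{-1}$. I would then apply Corollary \ref{corollary:stability_phi_cor_2} (the stability statement for a perturbed Marčenko--Pastur equation), or directly Lemma \ref{lemma:stability_phi_away_case} when $\iz \in Q^{(p)}_{\rm away}$ and the easy-direction part of Lemma \ref{lemma:stablity_phi} when $\eta\ge1$ inside $Q^{(p)}$ (note $\eta\ge1$ is the base case of the stability induction, requiring no prior control at $\iz + \i p^{-5}$), to conclude $|\hat\phi_p - \phi_p| = \frakB_p \prec n_2^{-1}$, and similarly $\frakB_n \prec n_2^{-1}$, hence $\frakB \prec n_2^{-1}$.

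Finally I would feed $\frakB \prec n_2^{-1}$ back: $\Psi_\frakB = \sqrt{(\Im\phi_p + \frakB)/(n_2\eta)} \prec n_2^{-1/2}$ still, but now the diagonal entries are pinned down. From \eqref{eq:self_consistent1} and \eqref{eq:self_consistent2}, inverting and using $|W_s| \prec \Psi_\frakB \prec n_2^{-1/2}$ together with $\frac1{n_2}\tr\bJ_{(22)}^{(u)} = -1/(1+(p/n_2)\hat\phi_p) + O_\prec(n_2^{-1})$ (again a minor-vs-full trace swap via Result (v) of Lemma \ref{lemma:resolvent_identity}), one gets $\bJ_{uu} = \Omega_{uu} + O_\prec(n_2^{-1/2})$ and $\bJ_{ii} = \Omega_{ii} + O_\prec(n_2^{-1/2})$; combined with $\Lambda_o \prec n_2^{-1/2}$ this gives $\Lambda \prec n_2^{-1/2}$ uniformly for $\eta\ge1$, as claimed. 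The main obstacle is the fluctuation-averaging step that turns the linear $W$-terms from size $\Psi_\frakB$ into $\Psi_\frakB^2$; this is the only genuinely probabilistic input, and it is handled by the standard martingale decomposition over the index $s$ exactly as in \citet{knowles2017anisotropic} and \citet{erdos2013local}, which is why I would cite rather than reprove it, noting only that the block structure of $\bJ$ here is the same as theirs.
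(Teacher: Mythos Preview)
Your proposal arrives at the right conclusion but takes an unnecessarily long route and introduces a circularity. The fluctuation-averaging step you invoke (to upgrade $\delta_1,\delta_2$ from $\Psi_\frakB$ to $\Psi_\frakB^2$) relies on Lemma~\ref{lemma:fluctiation_averaging}, whose stated hypothesis is $\Lambda \prec n_2^{-c}$ for some $c>0$; at this stage you only have the deterministic bound $\Lambda \le C$, so the lemma as written does not apply. One could perhaps reprove fluctuation averaging for $\eta \ge 1$ using only $\|\bJ\|\le C$, but that is extra work, not a citation.

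The paper avoids this detour entirely. For $\eta \ge 1$ the stability estimate of Corollary~\ref{corollary:stability_phi_cor_2} (or Lemma~\ref{lemma:stability_phi_away_case}) is effectively Lipschitz: since $\sqrt{\kappa+\eta} \ge 1$, the bound reads $|\hat\phi_p - \phi_p| \le C(|\delta_1| + |\delta_2|)$. Thus the \emph{crude} estimate $\delta_1,\delta_2 \prec \Psi_\frakB \prec n_2^{-1/2}$ (no fluctuation averaging needed) already gives $\frakB_p \prec n_2^{-1/2}$, and plugging this back into \eqref{eq:self_consistent1_approximate}--\eqref{eq:self_consistent2_approximate} together with $|W_s| \prec n_2^{-1/2}$ yields the diagonal control $|\bJ_{ss} - \Omega_{ss}| \prec n_2^{-1/2}$. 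Your stronger intermediate target $\frakB \prec n_2^{-1}$ is never used. The rest of your outline---off-diagonal control via Lemma~\ref{lemma:bounded_W_off_diagonal}, then diagonal entries via the self-consistent equations---matches the paper exactly.
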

\begin{proof}
Applying Lemma \ref{lemma:bounded_W_off_diagonal}, $\Lambda_o \prec \Psi_{\frakB} \prec n_2^{-1/2}$. Here, we are using the fact that $\Im \phi_p  + \frakB = O(1)$, when $\eta \geq 1$. 
We only need to show that the diagonal elements are such that 
\[ \bJ_{uu} -  \frac{1}{\lambda/\sigma_{up} - h_p} \prec n_2^{-1/2}, \quad u =1,2,\dots, p,\]
\[ - \bJ_{ii} - \frac{1}{1+p/n_2 \phi_p } \prec n_2^{-1/2}, \quad i = p+1,p+2,\dots,p+n_2.\]

Applying Lemma \ref{lemma:stability_phi_away_case} or Corollary \ref{corollary:stability_phi_cor_2}, when $\eta \geq 1$, 
\[ |\hat{\phi}_p - \phi_p| \leq {C(|\delta_1|+|\delta_2|)}\prec \Psi_{\frakB} \prec n_2^{-1/2}.\] 
Then, by Eq. \eqref{eq:self_consistent2_approximate}
\[ - \frac{1}{\bJ_{ii}} = 1+ \phi_p + (\hat{\phi}_p - \phi_p) + W_i + O_\prec(\Psi_\frakB^2)  = 1 + \phi_p + O_\prec(n_2^{-1/2}). \]
Therefore, $ - \bJ_{ii} - \frac{1}{1+\phi_p}  \prec n_2^{-1/2}$. The estimate of $\bJ_{uu}$ is obtained similarly. 
\end{proof}

To show Proposition \ref{proposition:weak_local_law}, we apply the stochastic continuity argument, exactly as in Section 4 of \citet{bloemendal2014isotropic}. The argument allows us to propagate the smallness of $\Lambda$ from $\eta \geq 1$ to the whole domain $Q \cup Q_{\rm away}$. Note that in the domain, $\eta \geq n_2^{-1 + a'}$. We choose $\varepsilon < a'/4$ and an arbitrary $D>0$. Let 
\[L(z) = \{\iz\} \cup \{ s\in Q\cup Q_{\rm away}~:~ \Re s =\Re \iz, \quad \Im s \in [\Im \iz,1] \cap [n_2^{-5} \mathbb{N}]\}.\] 
We  introduce the random function
$$
v(z):=\max _{s \in L(z)} \Lambda(s)(n_2 \Im s)^{1 / 4}
$$
Our goal is to prove that with high probability, there is a gap in the range of $v$, i.e.
\begin{equation}\label{eq:gap_in_v}
\mathbb{P}\left(v(z) \leqslant n_2^{\varepsilon}, \quad  v(z)>n_2^{\varepsilon / 2}\right) \leqslant N^{-D+5}
\end{equation}
for all $z \in Q\cup Q_{\rm away}$ and large enough $n_2$. This equation says that with high probability the range of $v$ has a gap: it cannot take values in the interval $\left(n_2^{\varepsilon / 2}, n_2^{\varepsilon}\right]$.

 Since we are dealing with random variables, one has to keep track of the probabilities of exceptional events. To that end, we only work on a discrete set of values of $\eta$, which allows us to control the exceptional probabilities by a simple union bound.

Next, we prove Eq. \eqref{eq:gap_in_v}. First, Lemma \ref{lemma:bounded_W_off_diagonal} yields $\mathbb{I}(\Xi)(|\delta_1| + |\delta_2|)\prec \Psi_{\frakB} \leq (n_2\eta)^{-1/2}$.  Since $\left\{v(z) \leqslant n_2^{\varepsilon}\right\} \subset \Xi(z) \cap \Xi(s)$ for all $z \in Q\cup Q_{\rm away}$ and $s \in L(z)$, we find that for all $z \in Q\cup Q_{\rm away}$ and $s \in L(z)$ that
$$
\mathbb{P}\left(v(z) \leqslant n_2^{\varepsilon},~~ (|\delta_1(s)| + |\delta_2(s)|)(n_2 \Im s)^{1/2}> n_2^{\varepsilon / 2}\right) \leqslant n_2^{-D}
$$
for large enough $n_2$ (independent of $z$ and $w$ ). Using a union bound, we therefore get
$$
\mathbb{P}\left(v(z) \leqslant n_2^{\varepsilon},~~ \max _{s \in L(z)} (|\delta_1 |+ |\delta_2|)(n_2 \Im s)^{1/2}> n_2^{\varepsilon / 2}\right) \leqslant n_2^{-D+5}
$$
Next, applying Lemma \ref{lemma:stablity_phi} and Lemma \ref{lemma:stability_phi_away_case} with the perturbation  $n_2^{\varepsilon / 2}(n_2 \Im s)^{-1 / 2}$, we get
$$
\mathbb{P}\left(v(z) \leqslant n_2^{\varepsilon},~~ \max _{s \in L(z)} \frakB_p(s) (n_2 \Im s)^{1 / 4}> n_2^{\varepsilon / 4}\right) \leqslant n_2^{-D+5}.
$$
Moreover, since $- \bJ_{ii} - (1+(p/n_2)\phi_p)^{-1} = O_\prec(\frakB_p) + O_\prec(\Psi_\frakB)$, we have 
$$
\mathbb{P}\left(v(z) \leqslant n_2^{\varepsilon},~~ \max _{s \in L(z)} \frakB(s) (n_2 \Im s)^{1 / 4}>  2n_2^{\varepsilon / 4}\right) \leqslant n_2^{-D+5}.
$$

Then, Eq. \eqref{eq:gap_in_v} follows from the bound
\begin{align*}
\mathbb{P} &\left( v(z) \leqslant n_2^{\varepsilon}, v(s) \geq n_2^{\varepsilon/2} \right) \\
&\leq \mathbb{P} \left( v(z) \leqslant n_2^{\varepsilon}, \max_{s\in L(z)} \Lambda(s) (n_2\Im s)^{1/4} \geq n_2^{\varepsilon/2} \right) \\
&\leq \mathbb{P} \left( v(z) \leqslant n_2^{\varepsilon}, \max_{s\in L(z)} \Psi_{\frakB} (n_2\Im s)^{1/4} \geq n_2^{\varepsilon/2} \right)\\
&\leq \mathbb{P} \left( v(z) \leqslant n_2^{\varepsilon}, \max_{s\in L(z)} \frakB (n\Im s)^{-1} (n_2\Im s)^{1/2} \geq (1/2) n_2^{\varepsilon} \right) + n^{-D+5} \\
&\leq \mathbb{P} \left( v(z) \leqslant n_2^{\varepsilon}, \max_{s\in L(z)} \frakB (n_2\Im s)^{1/4} \geq n_2^{\varepsilon}(n_2\Im s)^{3/4}/2 \right)+ n^{-D+5}\\
&\leq \mathbb{P} \left( v(z) \leqslant n_2^{\varepsilon}, \max_{s\in L(z)} \frakB (n_2\Im s)^{1/4} \geq n_2^{\varepsilon}(n_2)^{3a'/4}/2 \right)+ n^{-D+5}\\
&\leq 2n_2^{-D+5}. 
\end{align*}

Here, we are using the fact that for all $s \in L(\iz)$
\[\frac{\Im \phi_p(s)}{n_2\Im s} (n_2\Im s)^{1/2}  = \frac{\Im \phi_p(s)}{\sqrt{n_2\Im s}} \prec 1.\] 
Therefore,
\[\mathbb{P} \left( v(z) \leqslant n_2^{\varepsilon}, \max_{s\in L(z)} \Im \phi_p(s) (n\Im s)^{-1} (n_2\Im s)^{1/2} \geq (1/2) n_2^{\varepsilon} \right)\leq n_2^{-D+5}. \]

We conclude the proof of Proposition \ref{proposition:weak_local_law} by combining Eq. \eqref{eq:gap_in_v} and Lemma \eqref{lemma:weak_local_eta_1} with a continuity argument. We choose a lattice $\Delta \subset Q\cup Q_{\rm away}$ such that $|\Delta| \leqslant n_2^{10}$ and for each $z \in Q\cup Q_{\rm away}$ there exists a $s \in \Delta$ satisfying $|z-s| \leqslant N^{-4}$. Then, Eq. \eqref{eq:gap_in_v} combined with a union bound yields
$$
\mathbb{P}\left(\exists s \in \Delta: v(s) \in\left(n_2^{\varepsilon / 2}, n_2^{\varepsilon}\right]\right) \leqslant n_2^{-D+15}
$$
From the definitions of $\Lambda$ and $Q\cup Q_{\rm away}$,  we find that $v$ is Lipschitz continuous, with Lipschitz constant $n_2^2$.
$$
\mathbb{P}\left(\exists z \in Q\cup Q_{\rm away}~:~ v(z) \in\left(2 n_2^{\varepsilon / 2}, n_2^{\varepsilon} \right]\right) \leqslant n_2^{-D+15}
$$
By Lemma \ref{lemma:weak_local_eta_1}, we have
$$
\mathbb{P}\left(v(z)>n_2^{\varepsilon / 2}\right) \leqslant n_2^{-D+15}
$$
for any $\iz \in Q\cup Q_{\rm away}$ with $\eta \geq1$. Therefore, 
\[\mathbb{P}(\max_{\iz \in Q\cup Q_{\rm away}}  v(z) > 2n_2^{\epsilon/2}  ) \leq 2n_2^{-D+15}.\]
It completes the proof of Proposition \ref{proposition:weak_local_law}.

\subsubsection{The averaged local law}\label{ssub:the_averaged_local_law}

We use Proposition \ref{proposition:weak_local_law} and improved estimates for the averaged quantities $[W]_p$ and $[W]_n$ to deduce the averaged local laws in $Q$, $Q_-$, and $Q_{\rm away}$,  as in Theorem \ref{thm:strong_local_law}. Indeed, we shall show that 
\[ \mathbb{I}(\Xi) [W]_p \prec \Psi_\frakB^2 \quad \mbox{ and }\quad \mathbb{I}(\Xi)  [W]_n \prec \Psi_\frakB^2.\]
If so,  the estimate of $\delta_1$ and $\delta_2$ is improved to $\mathbb{I}(\Xi) |\delta_1| \prec \Psi_\frakB^2$ and $\mathbb{I}(\Xi) |\delta_2| \prec \Psi_\frakB^2$. 

\begin{lemma}[Fluctuation averaging]\label{lemma:fluctiation_averaging}
Suppose that the assumptions of Theorem \ref{thm:strong_local_law} hold. Suppose that $\Upsilon$ is a positive, $n_2$-dependent, deterministic function on $Q\cup Q_{\rm away}$ satisfying $n_2^{-1 / 2} \leqslant \Upsilon \leqslant n_2^{-c}$ for some constant $c>0$. Suppose moreover that $\Lambda \prec n_2^{-c}$ and $\Lambda_o \prec \Upsilon$ on $Q\cup Q_{\rm away}$. Then on $Q \cup Q_{\rm away}$ we have

$$
\frac{1}{p} \sum_{u =1}^p  \frac{1}{(\lambda/\sigma_{up}  -z -\frac{1}{n_2} \sum_{i=p+1}^{p+n_2} \bJ_{ii})^2} \left(1-\mathbb{E}_u\right) \frac{1}{\bJ_{uu}}=O_{\prec}\left(\Upsilon^2\right)
$$
and
$$
\frac{1}{n_2} \sum_{i = p+1}^{p+n_2} \left(1-\mathbb{E}_i\right) \frac{1}{\bJ_{ii}}=O_{\prec}\left(\Upsilon^2\right).
$$
\end{lemma}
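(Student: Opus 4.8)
## Proof Plan for Lemma \ref{lemma:fluctiation_averaging} (Fluctuation Averaging)

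The plan is to follow the now-standard fluctuation-averaging mechanism, as developed in the local semicircle law literature (e.g. \citet{erdos2013local}, Section 5 of \citet{knowles2017anisotropic}, and \citet{bloemendal2014isotropic}), adapted to the block resolvent $\bJ$ associated with $\bG_\lambda$. The core idea is that each summand $(1-\mE_u)\bJ_{uu}^{-1}$ is, on the event $\Xi$, of size $O_\prec(\Upsilon)$ and has zero conditional mean $\mE_u$; when we average $p^{-1}\sum_u$ (or $n_2^{-1}\sum_i$) of such quantities, the independence structure across the index $u$ produces an extra gain of $\Upsilon$, yielding $O_\prec(\Upsilon^2)$. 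Since $(\lambda/\sigma_u - z - n_2^{-1}\sum_i \bJ_{ii})^{-1}$ is bounded on $\Xi$ by Lemma \ref{lemma:boundedness_phi} (and its finite-sample analogue) together with Proposition \ref{proposition:weak_local_law}, it suffices to prove the averaging bound for $p^{-1}\sum_u (1-\mE_u)\bJ_{uu}^{-1}$ and $n_2^{-1}\sum_i (1-\mE_i)\bJ_{ii}^{-1}$ separately, and then absorb the bounded prefactor.

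First I would set $X_u = (1-\mE_u)\bJ_{uu}^{-1}$ and, as in the proof of Lemma 5.3 / the fluctuation-averaging lemma of \citet{knowles2017anisotropic}, estimate the high moments $\mE\big|p^{-1}\sum_u X_u\big|^{2q}$ for each fixed $q\in\mathbb{N}$. Expanding the $2q$-th power gives a sum over $2q$-tuples of indices, and the key is to use the resolvent expansion identities of Lemma \ref{lemma:resolvent_identity} (in particular Result (v), $\bJ_{ij}^{(S)} = \bJ_{ij}^{(S\cup\{k\})} + \bJ_{ik}^{(S)}\bJ_{kj}^{(S)}/\bJ_{kk}^{(S)}$) to replace, in each factor $X_{u}$, the resolvent entries by their versions with all the other summation indices removed from the minor. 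Each such replacement costs a factor $\Lambda_o \prec \Upsilon$ (or $\Upsilon^2$ when an off-diagonal entry is squared), while each application of $(1-\mE_u)$ to a quantity that is \emph{independent} of the randomness in row/column $u$ kills the term. The combinatorics — counting how many index coincidences are needed so that no $(1-\mE)$ projection annihilates a term, versus how many powers of $\Upsilon$ are thereby produced — is exactly the bookkeeping carried out in \citet{erdos2013local} and \citet{knowles2017anisotropic}, and I would invoke it essentially verbatim, checking only that our two-block structure and the boundedness estimates from Lemma \ref{lemma:fundamental_tool}, Lemma \ref{lemma:bounded_W_off_diagonal}, and Lemma \ref{lemma:bounds_on_ZZ} supply the inputs their argument requires. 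One then concludes, via Markov's inequality and the arbitrariness of $q$, that $p^{-1}\sum_u X_u = O_\prec(\Upsilon^2)$; the estimate for $n_2^{-1}\sum_i (1-\mE_i)\bJ_{ii}^{-1}$ is entirely parallel, using the $\{p+1,\dots,p+n_2\}$-block identities (iii), (iv) of Lemma \ref{lemma:resolvent_identity}.

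The main obstacle — or rather the only place where genuine care is needed beyond citing \citet{knowles2017anisotropic} — is the \emph{asymmetry} of the block structure: the two blocks have sizes $p$ and $n_2$ with $p/n_2 \to \gamma_2$ possibly $>1$, and the resolvent $R(\iz)$ sits in the $(1,1)$ block only, so the large-deviation inputs for $W_u = (1-\mE_u)(\tbZ_{u\cdot}\bJ_{(22)}^{(u)}\tbZ_{u\cdot}^T)$ and for $W_i = (1-\mE_i)(\tbZ_{\cdot i}^T \bJ_{(11)}^{(i)}\tbZ_{\cdot i})$ have slightly different forms (cf. the two cases in Lemma \ref{lemma:fundamental_tool}). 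I would need to verify that the bound $\|\tbZ\tbZ^T\|\leq (1+\sqrt{p/n_2})^2+\epsilon$ from Lemma \ref{lemma:bounds_on_ZZ} is used to control the extra factor $\|\tbZ\tbZ^T\|$ appearing in the $\bJ_{(22)}$ estimates, so that the off-diagonal decay $\Lambda_o \prec \Upsilon$ propagates uniformly through minors of bounded co-rank on all of $Q^{(p)}\cup Q^{(p)}_{\rm away}$; and that $\bJ_{ss}^{(S)}\asymp 1$ on $\Xi$ for $|S|\leq C$, which was already established inside the proof of Lemma \ref{lemma:bounded_W_off_diagonal}. Granting these, the polynomial-in-$n_2$ cardinality of the index set and the stochastic-domination bookkeeping go through unchanged, and the lemma follows.
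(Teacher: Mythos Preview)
Your proposal is correct and follows essentially the same approach as the paper: the paper itself omits the proof entirely, stating only that the result is an extension of Lemma 5.6 of \citet{knowles2017anisotropic}, Lemma 4.9 of \citet{bloemendal2014isotropic}, and Theorem 4.7 of \citet{erdos2013local}, provable by ``exactly the same method.'' Your outline of the high-moment expansion with resolvent-minor bookkeeping is precisely that method, and the block-asymmetry checks you flag are the right things to verify (and are indeed innocuous, as the paper's silence confirms).
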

Th results are an extension of Lemma 5.6 of \citet{knowles2017anisotropic}, Lemma 4.9 of \citet{bloemendal2014isotropic}, Theorem 4.7 of \citet{erdos2013local}. They can be proved using exactly the same method. We therefore omit the details. The readers may refer to the proof of Theorem 4.7 of \citet{erdos2013local} for details.

Note that  Proposition \ref{proposition:weak_local_law} implies that $1- \mathbb{I}(\Xi) \prec 0$, that is $\Xi$ holds with high probability. In the following analysis, we shall always assume $\Xi$ holds. Combining with 
Lemma \ref{lemma:bounded_W_off_diagonal}, we have 
\[\Lambda \prec \frac{1}{(n_2\eta)^{1/4}}, \quad \Lambda_o \prec \Psi_{\frakB}, \]
on $Q \cup Q_{\rm away}$.  Therefore, the condition of Lemma \ref{lemma:fluctiation_averaging} is satisfied and we obtain
\[ |[W]_p | + |[W]_n| \prec \Psi_\frakB^2,\]
on $Q \cup Q_{\rm away}$. Recall that $\delta_1 = [W]_n + O_\prec(\Psi^2_\frakB)$ and $\delta_2 = [W]_p + O_\prec(\Psi^2_\frakB)$. We immediately get 
\[ (|\delta_1| + |\delta_2|) \prec \Psi_\frakB^2,\]
uniformly in $Q \cup Q_{\rm away}$.

The following lemma holds immediately from the definition of $\Psi_\frakB$.
\begin{lemma}
Let $c >0$ and suppose that we have $\frakB \prec  (n_2\eta)^{-c}$ uniformly in $z \in Q\cup Q_{\rm away}$. Then, we have 
\[  |\delta_1| + |\delta_2| \prec \frac{\Im \phi_p + (n_2\eta)^{-c}}{n_2 \eta}\]
uniformly in $z\in Q\cup Q_{\rm away}$. 
\end{lemma}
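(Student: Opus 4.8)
The plan is to obtain the stated bound as a purely formal consequence of the estimate $|\delta_1| + |\delta_2| \prec \Psi_\frakB^2$ established in the preceding paragraph, combined with the explicit form $\Psi_\frakB^2 = (\Im\phi_p + \frakB)/(n_2\eta)$ and the hypothesis $\frakB \prec (n_2\eta)^{-c}$. No new analytic input is required: the entire argument is a manipulation of the stochastic-domination relation $\prec$, carried out uniformly in the spectral parameter $\iz$ on $Q^{(p)}\cup Q^{(p)}_{\rm away}$.

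First I would note that on $Q^{(p)}\cup Q^{(p)}_{\rm away}$ the quantity $n_2\eta$ is deterministic and bounded below by $n_2^{a'}$, and $\Im\phi_p(\iz)$ is deterministic; hence it suffices to show $\Im\phi_p + \frakB \prec \Im\phi_p + (n_2\eta)^{-c}$ uniformly in $\iz$, since dividing both sides of a $\prec$-relation by the same deterministic positive function preserves the relation uniformly. For the latter, the deterministic term satisfies $\Im\phi_p \le \Im\phi_p + (n_2\eta)^{-c}$ trivially, while the hypothesis gives $\frakB \prec (n_2\eta)^{-c} \le \Im\phi_p + (n_2\eta)^{-c}$, so that $\frakB \prec \Im\phi_p + (n_2\eta)^{-c}$; adding these two bounds and absorbing the harmless factor $2$ into the $n^{\varepsilon}$-slack in the definition of $\prec$ yields the claim. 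Consequently $\Psi_\frakB^2 \prec (\Im\phi_p + (n_2\eta)^{-c})/(n_2\eta)$ uniformly on $Q^{(p)}\cup Q^{(p)}_{\rm away}$, and combining this with $|\delta_1| + |\delta_2| \prec \Psi_\frakB^2$ via transitivity of $\prec$ completes the proof.

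There is essentially no analytic obstacle here; the only points requiring (minor) care are bookkeeping ones. One must confirm that all the $\prec$-relations invoked — the estimate for $|\delta_1| + |\delta_2|$, the hypothesis on $\frakB$, and the resulting conclusion — are uniform over $\iz \in Q^{(p)}\cup Q^{(p)}_{\rm away}$, so that transitivity may be applied uniformly; and one should recall that the bound $|\delta_1| + |\delta_2| \prec \Psi_\frakB^2$ is valid on the event $\Xi$, which by Proposition~\ref{proposition:weak_local_law} occurs with high probability, so the unconditioned $\prec$-statement is legitimate. With these observations the lemma follows immediately.
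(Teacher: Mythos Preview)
Your proposal is correct and takes essentially the same approach as the paper, which simply states that the lemma ``holds immediately from the definition of $\Psi_\frakB$.'' Your write-up is just a careful unpacking of that one-line justification: combine the preceding bound $|\delta_1|+|\delta_2|\prec\Psi_\frakB^2$ with $\Psi_\frakB^2=(\Im\phi_p+\frakB)/(n_2\eta)$ and the hypothesis $\frakB\prec(n_2\eta)^{-c}$.
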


We first consider the averaged local law in $Q$.  Suppose $\frakB\prec (n_2\eta)^{-c}$. Then, uniformly in $\iz \in Q$, 
\[ |\delta_1| + |\delta_2| \prec \frac{\Im \phi_p + (n_2\eta)^{-c} }{n_2\eta} .\]
We invoke Lemma \ref{lemma:stablity_phi} to get
\begin{equation}\label{eq:precise_estimates_B}
\begin{aligned}
\frakB_p =& |\hat{\phi}_p  - \phi_p| \prec \frac{\displaystyle\frac{\Im \phi_p + (n_2\eta)^{-c}}{n_2\eta}}{\sqrt{\kappa+\eta} + \sqrt{\displaystyle\frac{\Im \phi_p + (n_2\eta)^{-c}}{n_2\eta}}}\leq  \frac{\Im \phi_p}{n_2\eta} \frac{1}{\sqrt{\kappa + \eta}} + \sqrt{\frac{(n_2\eta)^{-c}}{n_2\eta}} \\
& \leq \frac{C}{n_2\eta} + (n_2\eta)^{-1/2-c/2} \prec (n_2\eta)^{-1/2-c/2}.
\end{aligned}
\end{equation} 
In the fourth step, we are using the expression of $\Im \phi_p$ as shown in Lemma \ref{lemma:expression_imaginary_phi_h}.  As for $\frakB_n$, plugging the estimate of $\hat{\phi}_p$ back to Eq. \eqref{eq:expression_bJ_22_trace}, we also have 
\[\frakB_n \prec \frakB_p + \Psi_{\frakB}^2 \prec (n_2\eta)^{-1/2-c/2} + \frac{\Im \phi_p + (n_2\eta)^{-c} }{n_2\eta} \prec (n_2\eta)^{-1/2-c/2} . \] 
All together, we have shown that 
\begin{equation}\label{eq:iteration_frakB}
\frakB\prec (n_2\eta)^{-c} \quad  \quad \Longrightarrow \quad \quad   \frakB \prec (n_2\eta)^{-1/2-c/2}.
\end{equation}
From Proposition \ref{proposition:weak_local_law}, we know that $\frakB\leq C\Lambda \prec (n_2\eta)^{-1/4}$. Therefore, for any $\varepsilon >0$, we iterate Eq. \eqref{eq:iteration_frakB} finite number of times to get,
\[ \frakB \prec (n_2\eta)^{-1+\varepsilon}.\]
 It completes the proof of the averaged local law in $Q$ in Theorem \ref{thm:strong_local_law} under the additional assumption that $\Sigma_p$ is diagonal.

Next, we consider the averaged local law in $Q_-$. The follows exactly from the same arguments but with the specific form of $\Im \phi_p$ when $ E \leq \rho - n^{-2/3+a'}$. When $E$ is restricted to $E\leq \rho - n_2^{-2/3+a'}$ and $\eta \geq n_2^{-2/3}$, we have 
\[ \Im \phi_p \asymp \frac{\eta}{\sqrt{\kappa+\eta}},\]
where $\kappa = \rho -E$. Assume that $\frakB\prec (n_2\eta)^{-c}$. When $\iz \in Q_-$, Eq. \eqref{eq:precise_estimates_B} can be improved to 
\begin{align*} 
\frakB_p &\prec \frac{\Im\phi_p}{n_2\eta \sqrt{\kappa+\eta}} + \frac{1}{(n_2\eta)^{c}} \frac{1}{n_2\eta \sqrt{\kappa+\eta}} = \frac{1}{n_2(\kappa+\eta)}+ \frac{1}{(n_2\eta)^{c}} \frac{1}{n_2\eta \sqrt{\kappa+\eta}} \\ 
&\prec \frac{1}{n_2(\kappa+\eta)} + \frac{1}{(n_2\eta)^{c+ a'/2}  },
\end{align*}

 plugging the estimate of $\hat{\phi}_p$ back to Eq. \eqref{eq:expression_bJ_22_trace}, we also have 
\[\frakB_n \prec \frakB_p + \frac{\Im \phi_p + (n_2\eta)^{-c}}{n_2 \eta} \prec  \frac{1}{n_2(\kappa+\eta)} + \frac{1}{(n_2\eta)^{c+ a'/2}}.\]
All together, we get 
\[  \frakB\prec \frac{1}{(n_2\eta)^c}  \quad \quad \Longrightarrow \quad \quad \frakB \prec \frac{1}{n_2(\kappa+\eta)} + \frac{1}{(n_2\eta)^{c+ a'/2}}.\] 
Again, from Proposition \ref{proposition:weak_local_law}, $\frakB \prec (n_2\eta)^{-1/4}$. We iterate the equation finite number of times to get,
\[ \frakB \prec \frac{1}{n_2(\kappa+\eta)},\]
uniformly on $Q_-$. It completes the proof of the averaged local law in $Q_-$ in Theorem \ref{thm:strong_local_law} under the additional assumption that $\Sigma_p$ is diagonal.

Lastly, we consider the averaged local law in $Q_{\rm away}$. It follows from similar arguments to the previous two cases. On $Q_{\rm away}$, we invoke Lemma \ref{lemma:stability_phi_away_case} to control $|\hat{\phi}_p - \phi_p|$. Assuming  $\frakB \prec (n_2\eta)^{-c}$, then 
\[ \frakB_p =   |\hat{\phi}_p - \phi_p| \leq C |\delta_1| + C|\delta_2| \prec \frac{\Im \phi_p + (n_2\eta)^{-c}}{n_2\eta} \prec \frac{1}{n_2} + (n_2\eta)^{-c-1}  \prec \frac{1}{n_2}  + (n_2\eta)^{-c-1}.  \]
Here, we are using the fact that $\Im \phi_p/\eta \asymp 1$ when $\iz$ is away from the support of $\calG_{p\lambda}$. The rest arguments are very similar to those in the previous two cases. We omit further details.

We completed the proof of the averaged local laws when $\Sigma_p$ is diagonal. The diagonal assumption will be removed in Section \ref{ssub:extension_0o_non_diagonal_sigma_p_}.

\subsubsection{The entrywise local law}\label{ssub:the_entrywise_local_law}
It remains to prove the entrywise local law. Using Eq. \eqref{eq:self_consistent1}, Result (v) of Lemma \ref{lemma:resolvent_identity}, Lemma \ref{lemma:bounded_W_off_diagonal} and the averaged local law, we obtain 
\[ \bJ_{uu} - \frac{1}{\lambda/\sigma_{up} - \iz +{\phi}_p} \prec \Psi_{\frakB} \prec \Psi,\]
uniformly for $u=1,\dots, p$ and $\iz\in Q$. 
Similarly, using Eq. \eqref{eq:self_consistent2},
\[ \bJ_{ii} + \frac{1}{1+ (p/n_2) \phi_p} \prec \Psi,\]
uniformly for $i=p+1,\dots, p+n_2$ oand $\iz\in Q$. 
Moreover, since $\Lambda_o \prec \Psi_{\frakB} \prec \Psi$, the entrywise local law follows. 

\subsubsection{Extension to non-diagonal $\Sigma_p$}\label{ssub:extension_0o_non_diagonal_sigma_p_}

While the previous analysis assumes $\Sigma_p$ is diagonal, in this section, we show that Theorem \ref{thm:strong_local_law} and Theorem \ref{thm:rigidity_G} hold when the population covariance matrix is non-diagonal. In order to utilize the notation in the previous sections, we denote the population covariance as $\tilde{\Sigma}_p$ and continue to use $\Sigma_p$ to denote the diagonal matrix containing the eigenvalues of $\tilde{\Sigma}_p$. The eigendecomposition of $\tilde{\Sigma}_p$ is denoted as 
\[ \tilde{\Sigma}_p = H \Sigma_p H^T,\]
where $H$ is the eigenvector matrix of $\tilde{\Sigma}_p$. In this section, we show that if Theorem \ref{thm:strong_local_law} holds with $\Sigma_p$, it also holds with $\tilde{\Sigma}_p$. All definitions in Section \ref{ssub:basic_tools} are kept. 

Define  
\[ \tilde{\bK} = \left(\begin{matrix} H & 0 \\ 0 & I_{n_2} \end{matrix}\right) \bK \left(\begin{matrix} H^T & 0 \\ 0 & I_{n_2} \end{matrix}\right),\]
\[ \tilde{\bJ} = \left(\begin{matrix} H & 0 \\ 0 & I_{n_2} \end{matrix}\right) \bJ \left(\begin{matrix} H^T & 0 \\ 0 & I_{n_2} \end{matrix}\right),\]
\[ \tilde{\Omega} = \left(\begin{matrix} H & 0 \\ 0 & I_{n_2} \end{matrix}\right) \Omega \left(\begin{matrix} H^T & 0 \\ 0 & I_{n_2} \end{matrix}\right).\]
Then, $\tilde{\bK}$, $\tilde{\bK}$ and $\tilde{\Omega}$ are equal to $\bK$, $\bJ$ and $\Omega$ when $\Sigma_p$ is replaced by $\tilde{\Sigma}_p$.

The averaged local laws in $Q$, $Q_-$ and $Q_{\rm away}$  still hold when $\bJ$ and $\Omega$ are replaced by $\tilde{\bJ}$ and $\tilde{\Omega}$, since 
\[\frac{1}{p} \sum_{u=1}^{p}\tilde{\bJ}_{uu} = \frac{1}{p}\tr(\tilde{\bJ}_{(11)}) = \frac{1}{p}\tr (\bJ_{(11)} H^TH) = \hat{\phi}_p.\] 
It remains to show the entrywise local law. We shall actually show 
\[ H (\bJ_{(11)} - \Omega_{(11)}) H^T = O_\prec(\Psi),\]
uniformly in $Q$. In components, this reads
\[\left| \sum_{u,v=1}^{p} H_{iu} \left(\bJ_{uv} - \mathbb{I}(u=v)\frac{1}{\lambda/\sigma_{up} - h_p } \right) H_{vj} \right| \prec \Psi, \]
for all $i,j \in \{1,2, \dots, p\}$. 

The proof is based on the polynomialization method developed in Section 5 of \citet{bloemendal2014isotropic}.  The argument is very similar to that of \citet{bloemendal2014isotropic}, and we only outline the differences.

By the assumption 
\[\bJ_{(11)}-  (\lambda \Sigma_p^{-1} - h_p I_p)^{-1} =   O_\prec (\Psi)\] 
and orthogonality of $H$, we have
\begin{align*}
&\sum_{u,v=1}^{p} H_{iu} \left(\bJ_{uv} - \mathbb{I}(u=v)\frac{1}{\lambda/\sigma_{up} - h_p } \right) H_{vj} \\
&= \sum_{u=1}^{p} H_{iu} \left(\bJ_{uu} - \frac{1}{\lambda/\sigma_{up} - h_p } \right) H_{uj}     +\sum_{u \neq v} H_{iu} \bJ_{uv} H_{vj}=O_{\prec}(\Psi)+\mathcal{Z}
\end{align*}
where we defined $\mathcal{Z}:=\sum_{u \neq v}  H_{i u} \bJ_{uv} H_{vj}$. We need to prove that $|\mathcal{Z}| \prec \Psi$, which, following Section 5 of \citet{bloemendal2014isotropic} we do by estimating the moment $\mathbb{E}|\mathcal{Z}|^k$ for fixed $k \in 2 \mathbb{N}$. The argument from Section 5 of \citet{bloemendal2014isotropic} may be taken over with minor changes.

Use Result (ii) and (v) of Lemma \ref{lemma:resolvent_identity}, 
$$
\sum_{u \neq v \notin S}  H_{i u} \bJ_{uv}^{(S)} H_{vj} = \sum_{u \neq v \notin S}  H_{iu} H_{vj} \bJ_{uu}^{(S)} \bJ_{vv}^{(S \cup \{u\})} \left(\tbZ_{u\cdot} \bJ^{(S\cup\{uv\})}_{(22)} \tbZ^T_{v\cdot}\right)
$$
for any $S\subset \{1,2,\dots, p\}$ and
\begin{align*}
\bJ_{uu}^{(S)}&= \frac{1}{\lambda/\sigma_{up} - \iz - \tbZ_{u\cdot} \bJ_{(22)}^{(S\cup\{u\})}\tbZ_{u\cdot}^T} \\
& = \sum_{\ell=0}^{L-1}  \frac{1}{ (\lambda/\sigma_{up}  -\iz + [1+(p/n_2)\phi_p]^{-1})^{\ell+1} } \left(\left( \tbZ_{u\cdot} \bJ^{(S\cup\{u\})}_{(22)} \tbZ_{u\cdot}^T\right)-\frac{1}{ 1+(p/n_2)\phi_p}\right)^{\ell}\\ 
&\phantom{sssss} +O_{\prec}\left(\Psi^L\right).
\end{align*} 
We omit further details.

\subsection{Proof of Theorem \ref{thm:rigidity_G}}\label{sub:proof_of_theorem_ref_thm_rigidity_g}

The rigidity of the smallest eigenvalue of $\bG_\lambda$ can be proved using the averaged local law in $Q_-$ as in Theorem \ref{thm:strong_local_law}. In the following, $\epsilon \in (0, 2/3)$ is fixed. Since $\bG_\lambda = \tbZ\tbZ^T + \lambda \Sigma_p^{-1}$ and $\tbZ\tbZ^T$ is non-negative definite, all eigenvalues of $\bG_{\lambda}$ are therefore at least $\lambda/\ell_{\max}(\Sigma_p)$.
It then suffices to show that there is no eigenvalue in the interval 
\[[\lambda/\ell_{\max}(\Sigma_p), \rho_p - n_2^{-2/3 + \epsilon}).\]

Recall the definition of $Q(a,a',n_2)$ and $Q_-(a,a',n_2)$ in Section \ref{subsec:additional_properties_of_the_generalized_mar_v_c}.  We consider $a$ and $a'$ to be sufficiently small so that 
\[\Big\{\iz ~:~ E \in [\lambda/\ell_{\max}(\Sigma_p),~ \rho_p - n_2^{-2/3 + \epsilon}],~~  n_2^{-2/3}\leq  \eta \leq 1\Big\} \subset Q_-.\]
Consider the set 
\[Q^* = Q_-\cap \Big\{ \iz \mid \eta =  n_2^{-1/2 - \epsilon/4} |E- \rho|^{1/4} \Big\}.\] 

To show that  there is no eigenvalue in $[\lambda/\ell_{\max}(\Sigma_p), \rho -n_2^{-2/3+\epsilon})$, we only  need to show that uniformly in $Q^*$
\[ \Im \hat{\phi}_p(\iz) \prec n_2^{-\epsilon/2} (n_2\eta)^{-1}. \]
It is because that $\hat{\phi}_p$ is the Stieltjes transform of $\bG_{\lambda}$. When $\iz = \ell(\bG_\lambda) +\i\eta$ where $\ell(\bG_\lambda)$ is an eigenvalue of $\bG_{\lambda}$, 
\[\Im\hat{\phi}_p (\iz) \asymp \frac{1}{n_2\eta}.\]

Recall that in Lemma \ref{lemma:expression_imaginary_phi_h} for $\iz \in Q^*$, 
\[\Im \phi_p \asymp \frac{\eta}{\sqrt{|E-\rho| + \eta} }. \]
We conclude that it suffices to prove 
\[ |\hat{\phi}_p - \phi_p |\prec  \frac{1}{n_2(|E-\rho| + \eta)},\]
which is ensured by the averaged local law in $Q_-$ in Theorem \ref{thm:strong_local_law}.

\subsection{Proof of Theorem \ref{thm:analytical_function_1n_convergence}}

Theorem \ref{thm:rigidity_G} indicates that the smallest eigenvalue of $\bG_\lambda$ is larger than $\rho -\epsilon$ with high probability. Lemma \ref{lemma:bounds_on_ZZ} indicates that the largest eigenvalue of $\bG_\lambda$ is smaller than 
\[x_{\rm right} = (1+\sqrt{\hat{\gamma}_2})^2 + \lambda/ \liminf \ell_{\min}(\Sigma_p) + \epsilon,\]
with high probability.  Therefore, the statement in (i) of Theorem \ref{thm:analytical_function_1n_convergence} holds. 

The convergence of $p^{-1} \sum_{j=1}^p f(\ell_j(\bG_\lambda))$  can be proved using the strategy of \citet{bai2004clt}. We consider a contour in the complex plane that encloses the interval 
\[ [\rho - \epsilon, ~(1+\sqrt{\hat{\gamma}_2})^2 + \lambda/ \liminf \ell_{\min}(\Sigma_p) + \epsilon].\]
Without loss of generality, we choose the rectangle, denoted by $\mathcal{R}$, with the four vertices 
\[ \iz_1 = \rho - 2\epsilon + \epsilon'\i, ~~\iz_2 = \rho - 2\epsilon - \epsilon'\i, ~~\iz_3 = x_{\rm right} + \epsilon + \epsilon'\i, ~~ \iz_4 = x_{\rm right} + \epsilon - \epsilon'\i.\]
In the following, we choose $\epsilon$ and $\epsilon'$ to be sufficiently small so that the function $f$ is analytic on the rectangle. 

Clearly, with high probability, the rectangle encloses all eigenvalues of $\bG_\lambda$. Notably, if it is, we have
\[\frac{1}{p}\sum_{j=1}^p f(\ell_j(\bG_\lambda)) =  \int f(\tau) dF^{\bG_\lambda}(\tau)  =  \frac{1}{2\pi \i} \oint_{\mathcal{R}} f(\iz) \hat{\phi}_p(\iz)d \iz.\]
Here, the contour integral is taken over the rectangle in the positive direction in the complex plane. In the integral, we extend the definition of $\hat{\phi}_p$ to $\mathbb{C}^-$ by setting $\hat{\phi}_p(\iz) = \overline{\hat{\phi}_p(\overline{\iz})}$ if $\eta(\iz) <0$. 
To show the convergence of  the linear spectral statistics, we only need to show 
\[ \mathbb{I}(\frakM)\oint_{\mathcal{R}}  |\hat{\phi}_p(z) - \phi_p(z)| |dz| \prec \frac{1}{n_2}.\]

For any $c \in (0,1)$, we define 
\[\hat{\mathcal{R}}(c) = \mathcal{R} \cap \{\iz : \eta \geq n_2^{-1+c}\}.\] 
Then, 
\[  \mathbb{I}(\frakM) \oint_{\mathcal{R} \setminus \hat{\mathcal{R}}(c)}  |\hat{\phi}_p(\iz) | |d\iz| \leq C n_2^{-1+c}.\]
Here, we are using the fact that when $\frakM$ holds, a deterministic bound on $|\hat{\phi}_p(z)|$ holds uniformly for $z\in \mathcal{R}\setminus\hat{\mathcal{R}}(c)$. 

Therefore, to show 
\[ \mathbb{I}(\frakM) \left| \frac{1}{p}\sum_{j=1}^p f(\ell_j(\bG_\lambda))   -  \int f(\tau) d\calG_{p\lambda}(\tau)\right| \prec \frac{1}{n_2}.\]
We only need to show that for an arbitrary $c \in (0,1)$,
\[ \mathbb{I}(\frakM) \oint_{\hat{\mathcal{R}}(c) } |\hat{\phi}_p(z) - \phi_p(z)| |dz| \prec \frac{1}{n_2},\]It is a direct consequence of the averaged local law on $Q_{\rm away}$ in Theorem \ref{thm:strong_local_law}. It completes the proof.




\subsection{Proof of Theorem \ref{thm:G_lambda_extension_nonGauss}} \label{subsec:local_law_G_nonGaussian}

Although the extension of the local laws to non-Gaussian settings is technically more demanding, a general framework for this purpose has been developed in \citet{knowles2017anisotropic}. Theorem~\ref{thm:G_lambda_extension_nonGauss} can be established by following the arguments in Sections~7--9 of \citet{knowles2017anisotropic} in a largely verbatim manner. Accordingly, in this section, we only describe the similarities and differences, and highlight the modifications required to adapt the proofs in \citet{knowles2017anisotropic} to our setting.

In \citet{knowles2017anisotropic}, concentration of the resolvent matrix
\[
{\bJ}_{KY}(z) = {\bK}_{KY}^{-1}(z), 
\qquad 
{\bK}_{KY}(z) =
\begin{pmatrix}
-\Sigma^{-1} & \frac{1}{\sqrt{n_2}} \bZ \\[7pt]
\frac{1}{\sqrt{n_2}}\bZ^{T} & -z I_{n_2}
\end{pmatrix}
\]
around its deterministic equivalent
\[
{\Omega}_{KY}(z) =
\begin{pmatrix}
-\Sigma_p(1+m_p(z)\Sigma_p)^{-1} & 0\\
0 & m_p(z)I_{n_2}
\end{pmatrix}
\]
is established. Here, $m_p(z)$ is defined analogously to $q(z)$ in Section~\ref{subsec:additional_properties_of_the_mar_v_c}, with $\calG_{p\lambda}$ and $\hat{\gamma}_1$ replaced by $F^{\Sigma_p}$ and $\hat{\gamma}_2$, respectively. In particular, \citet{knowles2017anisotropic} prove the local law
\[
{\bJ}_{KY}(z) - {\Omega}_{KY}(z) = O_\prec\!\left({\Psi}_{KY}(z)\right),
\]
for $z$ in a neighborhood of the support edge of the measure associated with $m_p(z)$, where the control parameter is
\[
{\Psi}_{KY}(z) = \sqrt{\frac{\Im m_p(z)}{n_2 \eta}} + \frac{1}{n_2\eta}.
\]

For any deterministic unit vectors $v$ and $w$, define the inner product function
\[
F_{KY}(\bZ) = F_{KY}(z,v,w) = |v^T\!\left(\bJ_{KY}(z) - \Omega_{KY}(z)\right)w|,
\]
where the resolvent is constructed from the observations $\bZ$. It suffices to show that
\[
F_{KY}(\bZ) \prec \Psi_{KY}(z)
\]
whenever $z$ lies near the edge of the measure associated with $m_p(z)$, for any fixed sequences of unit vectors $v$ and $w$, and for any $\bZ$ satisfying Condition~\ref{enum:moments_conditions}. 

The argument proceeds by first establishing the result under Gaussian observations. Let $\bZ^0$ denote a matrix with i.i.d.\ $N(0,1)$ entries. It then suffices to control the difference
\[
\mE F_{KY}(\bZ) -\mE F_{KY}(\bZ^0),
\]
where $\bZ$ satisfies Condition~\ref{enum:moments_conditions}. To this end, \citet{knowles2017anisotropic} employ an interpolation method in which a family of distributions $\{f_\theta : \theta \in [0,1]\}$ is introduced such that $f_0$ is $N(0,1)$ and $f_1$ is the distribution of the entries of $\bZ$ (see Definition~7.6 of \citet{knowles2017anisotropic} for details). Let $\bZ^\theta$ denote the matrix whose entries follow the distribution $f_\theta$, and denote by $\bJ_{KY}(\bZ^\theta)$, $\bK_{KY}(\bZ^\theta)$, $F_{KY}(\bZ^{\theta})$ the corresponding resolvent, linearization matrix and the inner-product function. 
To control $\mE F_{KY}(\bZ^1) -\mE F_{KY}(\bZ^0)$, one analyzes the change in $\bJ_{KY}(\bZ^\theta)$ induced by replacing a single entry of $\bZ^{\theta}$. Specifically, let $\bZ_{(uv)}^{\theta,s}$ denote the matrix obtained from $\bZ^{\theta}$ by replacing its $(u,v)$-th entry with the value $s$. It turns out that, to control   $\mE F_{KY}(\bZ^1) -\mE F_{KY}(\bZ^0)$, we only need to control 
\[ \sum_{(u,v)} \mE F_{KY}^k (\bZ_{(uv)}^{\theta, z_{uv}^{1}}) - \mE F_{KY}^k (\bZ_{(uv)}^{\theta, z_{uv}^{0}}),\]
for arbitrary power $k\in2\mathbb{N}$, $\theta\in[0,1]$, and $(u,v)$. See Lemmas 7.9 and 7.10 of \cite{knowles2017anisotropic} for details. The analysis relies on explicitly quantifying the difference in $\bJ_{KY}(\bZ^{\theta})$ induced by the change  of the $(u,v)$-th entry of $\bZ^{\theta}$. Since $A^{-1} - B^{-1} = - A^{-1} (A-B) B^{-1}$, we have 
\[ \bJ_{KY}(\bZ_{(uv)}^{\theta, z_{uv}^{1}}) - \bJ_{KY}(\bZ_{(uv)}^{\theta, z_{uv}^{0}}) =  - \bJ_{KY}(\bZ_{(uv)}^{\theta, z_{uv}^{1}}) [\bK_{KY}(\bZ_{(uv)}^{\theta, z_{uv}^{1}})  - \bK_{KY}(\bZ_{(uv)}^{\theta, z_{uv}^{0}}) ] \bJ_{KY}(\bZ_{(uv)}^{\theta, z_{uv}^{0}}).\]
\begin{equation}\label{eq:diff_K_entry_change}
\bK_{KY}(\bZ_{(uv)}^{\theta, z_{uv}^{1}})  - \bK_{KY}(\bZ_{(uv)}^{\theta, z_{uv}^{0}})  = -(z_{uv}^{1} - z_{uv}^{0}) [ e_{(p+n_2)u} e_{n_2v}^T \Delta^T +\Delta e_{n_2v} e_{(p+n_2)u}^T].
\end{equation}
Here, $e_{pi}$ is the standard unit vector of dimension $p$ in the coordinate direction $i$ and 
\[\Delta = (0_{n_2\times p},~~~~ n_2^{-1/2} I_{n_2})^T.\] 

Arguments in Sections 7--9 of \cite{knowles2017anisotropic} are mainly devoted to the study of the above quantities. 

\medskip
Under our context, we consider the concentration of the resolvent matrix 
\[ \bJ(z) = \bK^{-1}(z), \qquad \bK(z) =  \begin{pmatrix}
\lambda \Sigma^{-1}-  zI_p &  \bZ U_2 \\[7pt]
U_2^T\bZ^{T} & - I_{n_2}
\end{pmatrix} \]
to the deterministic equivalent 
\[ \Omega(z) = \left(\begin{matrix}\left(\lambda\Sigma_p^{-1} - h_p(\iz) I_p \right)^{-1}  & 0  \\  0 &  (\iz-h_p(\iz)) I_{n_2}  \end{matrix} \right),\]
when $z$ is near the edge of the support of the probability distribution associated with $\phi_p(z)$ (specifically, $z\in Q(a,a', n_2)$). 

Similarly, we define the inner product function
\[
F(\bZ) = F(z,v,w) = |v^T\!\left(\bJ(z) - \Omega(z)\right)w|,
\]
It suffices to show that
\[
F(\bZ) \prec \Psi(z)
\]
for any fixed sequences of unit vectors $v$ and $w$, and for any $\bZ$ satisfying Condition~\ref{enum:moments_conditions}.

Similar to the framework in \cite{knowles2017anisotropic}, the argument proceeds by first establishing the result under Gaussian observations, which is completed in Theorem \ref{thm:strong_local_law}. The results are then extended to non-Gaussianity by  controlling the difference
\[
\mE F(\bZ) -\mE F(\bZ^0).
\]
Following the framework in \cite{knowles2017anisotropic}, we only need to control 
\[ \sum_{uv} \mE F^k (\bZ_{(uv)}^{\theta, z_{uv}^{1}}) - \mE F^k (\bZ_{(uv)}^{\theta, z_{uv}^{0}}),\]
for arbitrary power $k\in2\mathbb{N}$, $\theta\in[0,1]$, and $(u,v)$. While $\bJ$ differs from $\bJ_{KY}$ in all four blocks, we have similar stochastic domination bounds on 
\[ \|\bJ (z) \| \quad \text{and} \quad \| \bJ (z) - \Omega(z) \|,\]
as that on $\|\bJ_{KY}(z)\|$ and $\|\bJ_{KY}(z) - \Omega_{KY}(z)\|$. Moreover, analogously to those for $\bJ_{KY}$, 
\[ \bJ(\bZ_{(uv)}^{\theta, z_{uv}^{1}}) - \bJ(\bZ_{(uv)}^{\theta, z_{uv}^{0}}) =  - \bJ(\bZ_{(uv)}^{\theta, z_{uv}^{1}}) [\bK(\bZ_{(uv)}^{\theta, z_{uv}^{1}})  - \bK(\bZ_{(uv)}^{\theta, z_{uv}^{0}}) ] \bJ(\bZ_{(uv)}^{\theta, z_{uv}^{0}}).\]
\begin{equation}\label{eq:diff_K_entry_change}
\bK(\bZ_{(uv)}^{\theta, z_{uv}^{1}})  - \bK(\bZ_{(uv)}^{\theta, z_{uv}^{0}})  = -(z_{uv}^{1} - z_{uv}^{0}) [ e_{(p+n_2)u} e_{n_0v}^T \Delta^T_{\rm new} +\Delta_{\rm new} e_{n_0v} e_{(p+n_2)u}^T],
\end{equation}
with the updated matrix
\[\Delta_{\rm new} = (0_{n_0\times p},~~~~ U_2)^T.\] 
Here, we only need to substitute $U_2$ for $I_{n_2}$. 

It is not hard to verify that all bounds in Sections 7--9 of \cite{knowles2017anisotropic} remain valid under this updated definition of $\Delta_{\rm new}$. Therefore, their arguments apply in a largely verbatim manner, with the only necessary change being the substitution of $\Delta_{\rm new}$. This change effectively corresponds to replacing the data matrix ``$X$'' in \cite{knowles2017anisotropic} by $\bZ U_2$. Similar substitution of $\Delta_{\rm new}$ is conducted in  \cite{han2016tracy} and \cite{han2018unified} (See (9.30) and (9.31) of \cite{han2018unified}). We omit further details.

\subsection{Discrete case}\label{subsec:discrete_case}
While the previous subsections, we focus on Case (a), namely when $\rho > \lambda/\sigma_{1p}$,  in this section, we consider the case when $\rho = \lambda/\sigma_{1p}$. 
We show analogous results as Theorem \ref{thm:strong_local_law},  Theorem \ref{thm:rigidity_G} and Theorem \ref{thm:analytical_function_1n_convergence}. 

\begin{theorem}\label{thm:rigidity_discrete_case}
Suppose that \ref{enum:high_dimensional_regime}--\ref{enum:regular_edge}. Assume that $\rho = \lambda/\sigma_{1p}$. Then, with probability $1$, the smallest eigenvalue of $\bG_{\lambda}$ is $\lambda/\ell_{\max}(\Sigma_p)$. 
\end{theorem}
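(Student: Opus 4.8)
The plan is to prove the two matching inequalities $\ell_{\min}(\bG_\lambda)\ge \lambda/\ell_{\max}(\Sigma_p)$ and $\ell_{\min}(\bG_\lambda)\le \lambda/\ell_{\max}(\Sigma_p)$ by elementary, non-asymptotic linear algebra. In sharp contrast with the regime $\rho>\lambda/\sigma_{\max}$ treated earlier, no local law or fluctuation-averaging input is needed here: in the discrete case the relevant edge eigenvalue is deterministic and ``frozen'' at $\lambda/\ell_{\max}(\Sigma_p)$, and the whole statement is really a rank-counting fact about $\tbZ\tbZ^T$.

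For the lower bound I would use the representation $\bG_\lambda = \tbZ\tbZ^T+\lambda\Sigma_p^{-1}$ in distribution, with $\tbZ$ a $p\times n_2$ matrix of i.i.d.\ $N(0,1/n_2)$ entries. Since $\tbZ\tbZ^T\succeq 0$ and $\lambda>0$, we have $\bG_\lambda\succeq \lambda\Sigma_p^{-1}$, and hence, by monotonicity of $\ell_{\min}$ under the Loewner order, $\ell_{\min}(\bG_\lambda)\ge \ell_{\min}(\lambda\Sigma_p^{-1})=\lambda/\ell_{\max}(\Sigma_p)$.

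For the matching upper bound I would exhibit an explicit eigenvector. Let $k_p$ be the multiplicity of $\ell_{\max}(\Sigma_p)$ and let $V_p\subseteq\mR^p$ be the associated eigenspace of $\Sigma_p$ (equivalently, the eigenspace of $\Sigma_p^{-1}$ for its least eigenvalue $1/\ell_{\max}(\Sigma_p)$), so $\dim V_p=k_p$. Under \ref{enum:normality}, $\tbZ$ has full column rank $n_2$ with probability one, and on that event $\ker(\tbZ\tbZ^T)=\ker(\tbZ^T)$ is the orthogonal complement of the $n_2$-dimensional column space of $\tbZ$, hence of dimension $p-n_2$. The bound $\dim\big(V_p\cap\ker(\tbZ^T)\big)\ge \dim V_p+\dim\ker(\tbZ^T)-p=k_p-n_2$ then produces a nonzero $v\in V_p\cap\ker(\tbZ^T)$ whenever $k_p>n_2$, and for such $v$,
\[
\bG_\lambda v=\tbZ\tbZ^T v+\lambda\Sigma_p^{-1}v=\frac{\lambda}{\ell_{\max}(\Sigma_p)}\,v,
\]
so $\lambda/\ell_{\max}(\Sigma_p)$ is an eigenvalue of $\bG_\lambda$; together with the lower bound this gives $\ell_{\min}(\bG_\lambda)=\lambda/\ell_{\max}(\Sigma_p)$. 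It remains to verify $k_p>n_2$ for all large $p$: since $\rho=\lambda/\sigma_{\max}$ and \ref{enum:regular_edge} holds, Lemma \ref{lemma:determine_rho} rules out the continuous regular case of Definition \ref{def:regular_edge}, so we are necessarily in case (i) with $\gamma_2\omega_{\max}>1$ (cf.\ Corollary \ref{corollary:discrete_case}(ii)); then \ref{enum:edge_stability} gives $k_p/p=F^{\Sigma_p}(\{\ell_{\max}(\Sigma_p)\})\to\omega_{\max}$ while $p/n_2\to\gamma_2$ by \ref{enum:high_dimensional_regime}, whence $k_p/n_2\to\gamma_2\omega_{\max}>1$ and $k_p>n_2$ eventually.

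The only genuinely delicate point is the logical step just used: that the hypothesis $\rho=\lambda/\sigma_{\max}$, under the standing regularity assumption, forces the atom of $F^{\Sigma_\infty}$ at $\sigma_{\max}$ to satisfy $\gamma_2\omega_{\max}>1$ — everything else (Loewner monotonicity, almost-sure full rank of a Gaussian matrix, and $\dim(U\cap W)\ge \dim U+\dim W-p$) is routine. I would close with the remark that this result makes precise the ``discrete edge'' phenomenon flagged after Definition \ref{def:regular_edge}: $\rho$ is a genuine atom of $\calG_\lambda$ sitting exactly at the deterministic value $\lambda/\ell_{\max}(\Sigma_p)$, with no Tracy--Widom fluctuation, which explains why the analysis of $\bF_\lambda$ (which relies on the smallest eigenvalue of $\bG_\lambda$) must be handled separately in this regime.
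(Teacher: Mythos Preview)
Your proof is correct and follows essentially the same approach as the paper: both establish the lower bound from $\tbZ\tbZ^T\succeq 0$, verify $k_p>n_2$ for large $p$ via the discrete-edge condition $\gamma_2\omega_{\max}>1$, and then use a rank argument for the upper bound. The only cosmetic difference is that the paper invokes Weyl's inequality (rank of $\tbZ\tbZ^T$ is $n_2<k_p$, so at least one eigenvalue of $\bG_\lambda$ is $\le \lambda/\ell_{\max}(\Sigma_p)$), whereas you explicitly produce an eigenvector in $V_p\cap\ker(\tbZ^T)$ by the dimension formula---these are equivalent, and your version has the minor advantage of showing directly that $\lambda/\ell_{\max}(\Sigma_p)$ is an \emph{exact} eigenvalue of multiplicity at least $k_p-n_2$.
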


\begin{theorem}\label{thm:strong_local_law_discrete_case}
Suppose the conditions in Theorem \ref{thm:rigidity_discrete_case} hold. Then, uniformly in $Q_{\rm away}$
\[ \hat{\phi}_p(\iz)  -{\phi}_p(\iz)  = O_\prec(\frac{1}{n_2 }).\]
\end{theorem}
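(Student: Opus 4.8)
The plan is to re-run, with only cosmetic changes, the $Q_{\rm away}$-branch of the proof of Theorem~\ref{thm:strong_local_law} (part (iv)). The point is that assumption \ref{enum:non_discrete} ($\rho>\lambda/\sigma_{\max}$) enters Section~\ref{subsec:proof_of_theorem_ref_thm_strong_local_law} only through the edge behaviour of $\phi_p$ and of the resolvent $R(\iz)$ near $\rho$; on $Q^{(p)}_{\rm away}$, which by construction stays a distance $\gtrsim a$ from $\calI\supset\supp\calG_\lambda$ and is closed under increasing $\eta$, none of the edge phenomena are visible. Accordingly, the two ingredients that are genuinely ``discrete-case'' statements and that I would invoke are the second half of Lemma~\ref{lemma:boundedness_phi} (boundedness of $\phi_p$, $|1+(p/n_2)\phi_p|\asymp1$, and the non-degeneracy $\inf_{\tau}|-\tau m_p(\iz)+\lambda|\gtrsim1$) and Lemma~\ref{lemma:stability_phi_away_case} (strong stability of the finite-sample version of Eq.~\eqref{eq:finite_sample_version_MP_G} on $Q_{\rm away}$), both of which are already stated to hold when $\rho=\lambda/\sigma_{\max}$. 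I would also record that the finite-sample edge is $\rho_p=\lambda/\ell_{\max}(\Sigma_p)$: since $\gamma_2\omega_{\max}>1$, for all large $p$ one has $(p/n_2)F^{\Sigma_p}(\{\ell_{\max}(\Sigma_p)\})>1$, hence $x'_p(h)>0$ on $(-\infty,\lambda/\ell_{\max}(\Sigma_p))$ (the discrete-case analogue of Corollary~\ref{corollary:discrete_case}(ii)).

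The steps, in order: (1) Collect the properties of $\phi_p$ on $Q^{(p)}_{\rm away}$ --- $|\phi_p|\asymp1$, $|1+(p/n_2)\phi_p|\asymp1$, $\inf_{\tau}|-\tau m_p(\iz)+\lambda|\gtrsim1$, and $\Im\phi_p(\iz)\asymp\eta$ (the last since $\phi_p$ is the Stieltjes transform of a probability measure supported, up to $o(p^{-2/3})$, in $\calI$ while $\operatorname{dist}(\iz,\calI)\gtrsim a$); these hold as in the continuous case by the arguments of Section~5 of \citet{li2024analysis} and Lemma~\ref{lemma:boundedness_phi}, and do not require $\rho>\lambda/\sigma_{\max}$. (2) Prove the weak entrywise law $\Lambda\prec(n_2\eta)^{-1/4}$ on $Q^{(p)}_{\rm away}$ exactly as in Proposition~\ref{proposition:weak_local_law}: Lemma~\ref{lemma:fundamental_tool}, Lemma~\ref{lemma:bounds_on_ZZ}, Lemma~\ref{lemma:bounded_W_off_diagonal} (using $|\Omega_{ss}|\asymp1$ on $Q^{(p)}_{\rm away}$) and Lemma~\ref{lemma:weak_local_eta_1} go through verbatim, and the stochastic-continuity bootstrap from $\eta\ge1$ downwards stays inside $Q^{(p)}_{\rm away}$, the stability input being Lemma~\ref{lemma:stability_phi_away_case}. (3) Upgrade via the fluctuation-averaging Lemma~\ref{lemma:fluctiation_averaging} to $|[W]_p|+|[W]_n|\prec\Psi_{\frakB}^2$, hence $|\delta_1|+|\delta_2|\prec\Psi_{\frakB}^2$, so that $\hat\phi_p$ solves Eq.~\eqref{eq:finite_sample_version_MP_G} up to these errors. (4) Apply Lemma~\ref{lemma:stability_phi_away_case} to get $\frakB_p=|\hat\phi_p-\phi_p|\lesssim|\delta_1|+|\delta_2|\prec(\Im\phi_p+\frakB)/(n_2\eta)\prec n_2^{-1}+\frakB/(n_2\eta)$ using $\Im\phi_p/\eta\asymp1$, and likewise for $\frakB_n$ via Eq.~\eqref{eq:expression_bJ_22_trace}; then, starting from $\frakB\prec(n_2\eta)^{-1/4}$, iterate the implication $\frakB\prec(n_2\eta)^{-c}\Rightarrow\frakB\prec n_2^{-1}+(n_2\eta)^{-c-1}$ finitely many times to reach $|\hat\phi_p-\phi_p|\prec n_2^{-1}$ uniformly on $Q^{(p)}_{\rm away}$, which is the claim. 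The reduction from diagonal to general $\Sigma_p$ is immediate because $p^{-1}\tr\tilde{\bJ}_{(11)}=\hat\phi_p$ is rotation-invariant, as in Section~\ref{ssub:extension_to_non_diagonal_sigma_p_}; no entrywise law is needed here.

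The one place where the discrete case is not literally identical to the continuous one --- and the step I would be most careful about --- is localizing the stochastic-continuity machinery of Proposition~\ref{proposition:weak_local_law} to $Q^{(p)}_{\rm away}$ alone, so that one never references estimates on $Q^{(p)}$ near $\rho$, where $\Im\phi_p$ diverges (Lemma~\ref{lemma:characterization_rho}(3)). Concretely this means checking that the resolvent bounds of Lemma~\ref{lemma:fundamental_tool} and the large-deviation step in Lemma~\ref{lemma:bounded_W_off_diagonal} reference only spectral parameters in $Q^{(p)}_{\rm away}\cup\{\eta\ge1\}$, that the continuity-lattice argument propagates smallness of $\Lambda$ along vertical segments that remain in $Q^{(p)}_{\rm away}$, and that the bootstrap is seeded at $\eta\ge1$ by Lemma~\ref{lemma:weak_local_eta_1}, whose proof likewise uses only $Q_{\rm away}$-type stability. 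Since $Q^{(p)}_{\rm away}$ is uniformly separated from $\calI$ and $\{\eta\ge1\}$ is always away from the spectrum, all of this is genuine bookkeeping rather than new analysis; everything else mirrors Sections~\ref{ssub:weak_entrywise_law}--\ref{ssub:the_averaged_local_law} line by line.
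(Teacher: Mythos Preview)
Your proposal is correct and takes essentially the same approach as the paper: the paper's proof is a one-liner stating that the argument of Theorem~\ref{thm:strong_local_law} carries over with Lemma~\ref{lemma:stability_phi_away_case} replacing Lemma~\ref{lemma:stablity_phi}, and you have simply unpacked that statement, identifying precisely which lemmas (the $Q_{\rm away}$ half of Lemma~\ref{lemma:boundedness_phi}, Lemma~\ref{lemma:stability_phi_away_case}, Proposition~\ref{proposition:weak_local_law}, Lemma~\ref{lemma:fluctiation_averaging}) remain valid without \ref{enum:non_discrete} and why the bootstrap stays inside $Q^{(p)}_{\rm away}$. Your care about localizing the stochastic-continuity argument to $Q^{(p)}_{\rm away}$ alone is well placed but, as you suspect, purely bookkeeping.
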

However, it is worth noting that the local law in $Q(a,a', n_2)$ as in Theorem \ref{thm:strong_local_law} no longer holds, since $\Im \phi_p(z)$ diverges as $z \to \rho = \lambda/\sigma_{1p}$.

\begin{theorem}\label{thm:analytical_function_converg_discrete}
Suppose the conditions in Theorem \ref{thm:rigidity_discrete_case} hold.  The results in Theorem \ref{thm:analytical_function_1n_convergence} still hold. 

\end{theorem}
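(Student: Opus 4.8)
The plan is to follow, essentially verbatim, the argument used to prove Theorem~\ref{thm:analytical_function_1n_convergence}, replacing the inputs that were specific to a non-degenerate edge (Theorem~\ref{thm:rigidity_G} and parts (ii)--(iv) of Theorem~\ref{thm:strong_local_law}) by their discrete-case counterparts, Theorem~\ref{thm:rigidity_discrete_case} and Theorem~\ref{thm:strong_local_law_discrete_case}. Throughout I would write $\rho_p = \lambda/\ell_{\max}(\Sigma_p)$, which under \ref{enum:edge_stability} satisfies $\rho_p \to \lambda/\sigma_{\max} = \rho$ (indeed $\rho_p - \rho = o(n^{-2/3})$ as in Lemma~\ref{lemma:convergence_rho_p}). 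The only structural novelty here is that $\rho$ is now an atom of $\calG_\lambda$, and the key point is that this is harmless: the contour I integrate over will stay at a fixed positive distance from $\supp(\calG_\lambda)$, so only the ``away'' local law of Theorem~\ref{thm:strong_local_law_discrete_case} is ever invoked — which is exactly why the discrete analogue needs only that one estimate rather than the full $Q^{(p)}$, $Q^{(p)}_-$ laws.

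For part (i), by Theorem~\ref{thm:rigidity_discrete_case} the smallest eigenvalue of $\bG_\lambda$ equals $\rho_p$ with probability one, hence is $\geq \rho - \epsilon$ for all large $p$. For the upper end I would use $\ell_{\max}(\bG_\lambda) \leq \ell_{\max}(\tbZ\tbZ^T) + \lambda\,\ell_{\max}(\Sigma_p^{-1}) = \ell_{\max}(\tbZ\tbZ^T) + \lambda/\ell_{\min}(\Sigma_p)$, which by Lemma~\ref{lemma:bounds_on_ZZ} and \ref{enum:boundedness_spectral_norm} is $\leq (1+\sqrt{p/n_2})^2 + \lambda/\liminf\ell_{\min}(\Sigma_p) + \epsilon$ with high probability. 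Thus all eigenvalues of $\bG_\lambda$ lie in $\calI_\epsilon$ with high probability, which is the event $\frakM$.

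For part (ii), on $\frakM$ I would take the rectangle $\mathcal{R}$ with vertices $\rho - 2\epsilon \pm \epsilon'\i$ and $x_{\rm right} + \epsilon \pm \epsilon'\i$, with $\epsilon, \epsilon'>0$ chosen small enough that $f$ is analytic on and inside $\mathcal{R}$ and (using $\rho_p\to\rho$) $\mathcal{R}$ encloses all eigenvalues of $\bG_\lambda$ and all of $\supp(\calG_\lambda)$. Extending $\hat\phi_p,\phi_p$ to $\mathbb{C}^-$ by conjugation, Cauchy's formula gives $\tfrac1p\sum_j f(\ell_j(\bG_\lambda)) = \tfrac{1}{2\pi\i}\oint_{\mathcal{R}} f\hat\phi_p\,d\iz$ and $\int f\,d\calG_\lambda = \tfrac{1}{2\pi\i}\oint_{\mathcal{R}} f\phi_p\,d\iz$ (up to the standard $o(p^{-2/3})$ replacement of the finite-sample measure $\calG_{\lambda,p}$ by $\calG_\lambda$, which is absorbed into the stated bound). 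Fix $c\in(0,1)$ and truncate to $\hat{\mathcal{R}} = \mathcal{R}\cap\{|\eta|\geq n_2^{-1+c/2}\}$: the removed portion has total length $O(n_2^{-1+c/2})$ and sits at $\Re\iz\in\{\rho-2\epsilon,\,x_{\rm right}+\epsilon\}$, hence at distance $\geq\epsilon$ from all eigenvalues and from $\supp(\calG_\lambda)$; bounding $|\hat\phi_p'|,|\phi_p'|\lesssim\epsilon^{-2}$ there and comparing with the point at height $n_2^{-1+c/2}$ shows $\hat\phi_p-\phi_p = O_\prec(n_2^{-1+c/2})$ on it, so its contribution is $O_\prec(n_2^{-2+c})\prec n_2^{-1}$. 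Choosing $a\leq\tfrac12\min(\epsilon,\epsilon')$ and $a'=c/2$ makes $\hat{\mathcal{R}}\subset Q^{(p)}_{\rm away}$, so Theorem~\ref{thm:strong_local_law_discrete_case} gives $\sup_{\iz\in\hat{\mathcal{R}}}|\hat\phi_p(\iz)-\phi_p(\iz)|\prec n_2^{-1}$, and since $\hat{\mathcal{R}}$ has length $O(1)$ the remaining integral is $\prec n_2^{-1}$; adding the two pieces yields $\mathbbm{1}(\frakM)\,p^{2/3}|\tfrac1p\sum_j f(\ell_j(\bG_\lambda)) - \int f\,d\calG_\lambda|\prec n_2^{-1}$.

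I do not expect a genuine obstacle: the whole content is checking that the proof of Theorem~\ref{thm:analytical_function_1n_convergence} uses nothing beyond the away-law and the two one-sided eigenvalue bounds, all of which have discrete-case analogues. If anything is delicate, it is the bookkeeping in the previous paragraph — verifying that the ``tall'' rectangle keeps $\hat{\mathcal{R}}$ uniformly away from the edge atom of $\calG_\lambda$ at $\rho$, and that $\rho_p=\lambda/\ell_{\max}(\Sigma_p)$ converges to $\rho$ quickly enough (via \ref{enum:edge_stability}) for $\supp(\calG_{\lambda,p})$ and all eigenvalues of $\bG_\lambda$ to be captured inside $\calI_\epsilon$ and inside $\mathcal{R}$.
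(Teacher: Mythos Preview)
Your proposal is correct and takes essentially the same approach as the paper: the paper's own proof is a one-sentence reduction, stating that the argument of Theorem~\ref{thm:analytical_function_1n_convergence} carries over verbatim with the bound on $|\hat{\phi}_p - \phi_p|$ in the final step supplied by Theorem~\ref{thm:strong_local_law_discrete_case} in place of Theorem~\ref{thm:strong_local_law}. Your write-up is in fact more detailed than the paper's, explicitly working out part~(i) from Theorem~\ref{thm:rigidity_discrete_case} and Lemma~\ref{lemma:bounds_on_ZZ} and spelling out the contour and truncation bookkeeping for part~(ii).
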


\begin{proof}[Proof of Theorem \ref{thm:rigidity_discrete_case}]
Under Definition \ref{def:regular_edge}, for all sufficiently large $p$, we have 
\[p F^{\Sigma_p}(\{ \ell_{\max} (\Sigma_p)\}) > n_2.\]
Note that $pF^{\Sigma_p}(\{ \ell_{\max} (\Sigma_p)\})$ is the rank of the eigen-subspace associated with the smallest eigenvalue of $\lambda \Sigma_p^{-1}$, that is $\lambda\ell^{-1}_{\max}(\Sigma_p)$. The rank of $\tbZ\tbZ^T$ is $n_2$ with probability $1$, since the condition implies that $p > n_2$. Therefore, using Weyl's inequality, we obtain that there are at least one eigenvalue of $\tbZ\tbZ^T$ less than or equal to $\lambda/\ell_{\max}(\Sigma_p)$. On the other hand, $\ell_{\min} (\tbZ\tbZ^T + \lambda\Sigma_p^{-1}) \geq \lambda/\ell_{\max}(\Sigma_p)$. Therefore, with probability 1, the smallest eigenvalue of $\bG_\lambda$  is $\lambda/\ell_{\max}(\Sigma_p)$.
\end{proof}

\begin{proof}[Proof of Theorem \ref{thm:strong_local_law_discrete_case}]
The proof of Theorem~\ref{thm:strong_local_law_discrete_case} follows the same arguments as those used in the proof of Theorem~\ref{thm:strong_local_law}. The only modification is to replace the bound in Lemma~\ref{lemma:stablity_phi} with the stronger bound provided in Lemma~\ref{lemma:stability_phi_away_case}. We therefore omit the details.
\end{proof}

\begin{proof}
The proof of Theorem~\ref{thm:analytical_function_converg_discrete} follows the same arguments as those used in the proof of Theorem~\ref{thm:analytical_function_1n_convergence}. The bound on
\[
|\hat{\phi}_p - \phi_p|
\]
in the final step is now ensured by Theorem~\ref{thm:strong_local_law_discrete_case}, rather than by Theorem~\ref{thm:strong_local_law}.
.  
\end{proof}

\newpage \clearpage

\section{Proof of Theorem~\ref{thm:main}} \label{sec:proof_theorem_main}

Recall the proof outline presented in the main text. We begin by establishing the result under the Gaussian assumption and subsequently extend it to the non-Gaussian setting. Step~1 has already been completed in Section~\ref{sec:properties_of_bg_lambda}. In this section, we establish the remaining steps.

\subsection{Step 2: Tracy--Widom Limit under Gaussianity}\label{subsec:step2}

Under Gaussianity, the matrices $\bZ U_1$ and $\bZ U_2$ are independent. It is therefore equivalent to consider a matrix of the form
\[
\tilde{\bZ}_1^T \bG_{\lambda}^{-1} \tilde{\bZ}_1, 
\qquad 
\bG_{\lambda} = \tilde{\bZ}_2 \tilde{\bZ}_2^T + \lambda \Sigma_p^{-1},
\]
where $\tilde{\bZ}_1$ is a $p \times n_1$ matrix with i.i.d.\ $N(0,1/n_1)$ entries, $\tilde{\bZ}_2$ is a $p \times n_2$ matrix with i.i.d.\ $N(0,1/n_2)$ entries, and $\tilde{\bZ}_1$ and $\tilde{\bZ}_2$ are independent.

The main strategy of the proof is as follows. Conditioning on $\bG_\lambda$, we apply the main theorem of \cite{lee2016tracy}, treating  $\bG_\lambda^{-1}$ as the population covariance matrix (corresponding to $\Sigma$ in their notation). To justify this application, it remains to verify that the regularity condition in Assumption~2.2 of \citet{knowles2017anisotropic} holds with high probability. 

For the reader’s convenience, we restate here the key quantities introduced in Section~\ref{subsec:additional_properties_of_the_mar_v_c}. Recall the function
\[ f(q) = -\frac{1}{q} + \hat{\gamma}_1 s(-q) = -\frac{1}{q} + \hat{\gamma}_1 \int \frac{d\calG_{p\lambda}(\tau)}{\tau+q}.\]
Then, $-\beta$ is the critical point of $f$ in $(-\rho, 0)$. Namely,
\[0 = f'(-\beta) = \frac{1}{\beta^2} - \hat{\gamma}_1 s'(\beta) = \frac{1}{\beta^2} - \hat{\gamma}_1 \int\frac{d\calG_{p\lambda}(\tau)}{(\tau - \beta)^2}.\]
Also, the centering parameter $\Theta_1 = \Theta_{1p}$ is 
\[ \Theta_1 = f(-\beta) = \frac{1}{\beta} +\hat{\gamma}_1 s(\beta) = \frac{1}{\beta} +  \hat{\gamma}_1 \int \frac{d\calG_{p\lambda}(\tau)}{\tau-\beta}.\]
Moreover, recall the definition of the scaling parameter $\Theta_{2} = \Theta_{2p}$ as 
\[ {\Theta}_{2}  =\left[ \big(\frac{p}{n_1}\big)^3 \int \frac{d\calG_{p\lambda}(\tau)}{(\tau - {\beta})^{3} }  + \frac{(p/n_1)^2}{{\beta}^3}  \right]^{1/3}. \]
The following local domain close to $\Theta_1$ is considered. 
\[ {D} = {D}(a,a',n_1) \coloneqq \{ \iz = E+\i \eta \in \mathbb{C}^+ ~:~ |E - {\Theta}_{1}| \leq a, ~~ n_1^{-1 + a'} \leq \eta \leq 1/a'\}.\]

Results in Section~\ref{sec:properties_of_bg_lambda} indicate that the LSS of $\bG_{\lambda}$ converges to their deterministic counterparts based on $\calG_{p\lambda}$ and the smallest eigenvalue of $\bG_{\lambda}$ converges to $\rho$. We replace $\calG_{p\lambda}$ with the stochastic ESD $F^{\bG_{\lambda}}$ and define the following counterparts of $f(\cdot)$, $\beta$, and $\Theta_1$. 

Denote the eigenvalues of $\bG_{\lambda}$ to be $g_1\geq g_2 \geq \cdots \geq g_p \geq \lambda/\sigma_{1p} >0$. Replacing $\calG_{p\lambda}$ with $F^{\bG_{\lambda}}$, we define  
\[ \hat{f}(q) = -\frac{1}{q} + \hat{\gamma}_1 \int \frac{dF^{\bG_{\lambda}}(\tau)}{\tau+q} = -\frac{1}{q} + \frac{1}{n_1} \sum_{j=1}^p \frac{1}{g_j+q}.\]
Then, there exists a unique solution to the equation $z = \hat{f}(q)$ for $z\in\mathbb{C}^+$, we denote it to be $\hat{q}(z)$. Namely,
\begin{equation}\label{eq:def_hat_q}
z  = -\frac{1}{\hat{q}} + \hat{\gamma}_1 \int \frac{dF^{\bG_{\lambda}}(\tau)}{\tau+ \hat{q} }.
\end{equation}
The unique critical point of $\hat{f}(q)$ in $(-g_p, 0)$ is denoted as $-\hat{\beta}$. Namely,
\[ 0= \hat{f}'(-\hat{\beta}) = \frac{1}{\hat\beta^2}  -  \frac{1}{n_1} \sum_{j=1}^p \frac{1}{(g_j-\hat{\beta})^2}.\]
It is straightforward to verify that $\hat{\beta}$ exists and is unique for any fixed sequence $g_j$'s. Let
\[\hat{\Theta}_1 = \hat{f}(-\hat\beta) = \frac{1}{\hat{\beta}} + \frac{1}{n_1} \sum_{j=1}^p \frac{1}{g_j -\hat{\beta}}. \]
Lastly, call 
\[ \hat{\Theta}_{2}  =\left[ ({\frac{p}{n_1})^3} \int \frac{dF^{\bG_\lambda}(\tau)}{(\tau - \hat{\beta})^{3} }  + \frac{(p/n_1)^2}{\hat{\beta}^3}  \right]^{1/3}. \]

\medskip
The following results are derived from those in Section \ref{sec:properties_of_bg_lambda}.
\begin{theorem}\label{thm:convergence_hatbeta_hatTheta}
    Suppose that Conditions \ref{enum:high_dimensional_regime}-- \ref{enum:regular_edge} hold. Additionally, suppose that the observations are Gaussian. 
    \begin{itemize}
        \item[(i)] There exists a universal constant $c>0$ such that $|\hat{\beta} -g_p|>c$ holds with high probability.
        \item[(ii)]  $| \hat{\Theta}_1 - \Theta_1 | \prec 1/n_2$ and $| \hat{\Theta}_2 - \Theta_2 | \prec 1/n_2$   
        \item[(iii)] For arbitrary $a'>0$, there exists $a>0$ such that 
        \[ |\hat{q}(z) - q(z)| \prec \frac{1}{n_1\eta},\qquad z \in D(a,a', n_1).\]
    \end{itemize}
\end{theorem}

Before presenting the proof of Theorem~\ref{thm:convergence_hatbeta_hatTheta}, we first explain how Theorem~\ref{thm:convergence_hatbeta_hatTheta} implies that Theorem~\ref{thm:main} holds under Gaussianity.

Fix $\bG_\lambda$ and suppose that $|\hat{\beta} - g_p| > c$ holds for some small constant $c$. Then, the condition in Assumption 2.2 of \cite{lee2016tracy} is satisfied. Then, applying the main theorem of \cite{lee2016tracy}, we obtain that, for any $t\in\mathbb{R}$,
\[ \lim_{p \to \infty}  \mP \left( \left\{\frac{p^{2/3}}{\hat{\Theta}_2} \Big(\ell_{\max}(\tilde{\bF}_\lambda) - \hat{\Theta}_1 \Big)  \leq t \right\} \cap \left\{|\hat{\beta}- g_p|>c \right\} \mid \bG_{\lambda}   \right) \longrightarrow \TW_1(t),\]
where $\TW_1(t)$ is the distribution function of the Tracy-Widom distribution of type 1. Now, integrating both sides with respect to the distribution of $\bG_{\lambda}$, using the dominated convergence theorem, we obtain that, for any $t\in\mathbb{R}$,
\[ \lim_{p \to \infty}  \mP \left( \left\{\frac{p^{2/3}}{\hat{\Theta}_2} \Big(\ell_{\max}(\tilde{\bF}_\lambda) - \hat{\Theta}_1 \Big)  \leq t \right\} \cap \left\{|\hat{\beta}- g_p|>c \right\}\right) \longrightarrow \TW_1(t).\]

Using the results of Theorem \ref{thm:convergence_hatbeta_hatTheta}, we can select the constant $c$ to be sufficiently small so that $\mP(|\hat{\beta} - g_p |>c) \to 1$. Also, by the convergence of $\hat{\Theta}_1$ and $\hat{\Theta}_2$ and Slutsky's theorem, the weak convergence still holds when $\hat{\Theta}_1$ and $\hat{\Theta}_2$ are replaced by $\Theta_1$ and $\Theta_2$. It completes the proof of Theorem \ref{thm:main} under Gaussianity.

\medskip
The rest of this subsection is devoted to the proof of Theorem \ref{thm:convergence_hatbeta_hatTheta}. 
\begin{proof}[Proof of Theorem \ref{thm:convergence_hatbeta_hatTheta}]
We separately consider two cases: $\rho > \lambda/\sigma_{1p}$ and $\rho = \lambda/\sigma_{1p}$.  For both cases, since $s'(x) \to \infty$ as $x\to \rho$, we have that $\beta <\rho-c$ for some universal constant $c>0$ and all sufficiently large $p$. We first consider the case when $\rho > \lambda/\sigma_{1p}$. The second case is rather straightforward, and we address it at the end. 

Theorem \ref{thm:rigidity_G} in Section \ref{sec:properties_of_bg_lambda} indicates that  
\[|g_p  - \rho | \prec n_2^{-2/3}.\]
In order to show $|\hat{\beta} - g_p| \geq c$ with high probability for some constant $c$, it suffices to show that 
\[|\hat{\beta} - \beta| \prec \frac{1}{n_2}.\]

Recall the functions $f(q)$ and $\hat{f}(q)$. We extend the functions to complex-valued $q$. In particular, for $q \in \mathbb{C}^+ \cup (-\rho, 0)$, we consider $f'(q)$, $f''(q)$, $\hat{f}'(q)$, and $\hat{f}''(q)$
\begin{align*}
f'(q) &= \frac{1}{q^2} - \hat{\gamma}_1 \int \frac{d\calG_{p\lambda}(\tau)}{(\tau + q)^2}, \\
f''(q) &= -\frac{2}{q^3} + 2\hat{\gamma}_1 \int \frac{d\calG_{p\lambda}(\tau)}{(\tau+q)^3},\\ 
\hat{f}'(q)& = \frac{1}{q^2} - \frac{1}{n_1} \sum_{j=1}^p \frac{1}{(g_p + q)^2},\\
\hat{f}''(q) &= -\frac{2}{q^3} + \frac{2}{n_1} \sum \frac{1}{ (g_j + q)^3}.
\end{align*}
Since $\beta$ is away from $\rho$, there exists a domain 
\[\frakD = \{ \iz ~:~  |E +\beta| \leq C, ~~0\leq \eta \leq C'\}, \quad \text{for some } C,C' >0\] 
such that  
\[\left| f''(q)\right| \asymp 1, \quad q \in \frakD.\]
In the following, we shall always select $C$ and $C'$ to be such that the condition is satisfied. 

\medskip
We notice that in $(-g_p, 0 )$, $\hat{f}''(q) >0$. Therefore, there exists at most one critical point of $\hat{f}(q)$ on $(-g_p, 0)$. To show that $|\hat{\beta} -\beta|\prec n_2^{-1}$, we only need to show that for arbitrary small $\epsilon>0$, there exists a root to $\hat{f}(q)=0$ in $(-\beta- n_1^{-1+\epsilon}, -\beta+ n_1^{-1+ \epsilon})$ with high probability.

Clearly, due to the boundedness of $f''(q)$ in $\frakD$, we can find constants $C_1>0$ and $C_2>0$  such that 
\[  C_1 n_2^{-1+\epsilon}  \leq \Re f'(-\beta + n_2^{-1 + \epsilon} + \i n_2^{-1 + \epsilon/2} )  \leq  C_2   n_2^{-1+\epsilon},\]
\[  - C_2  n_2^{-1+\epsilon}  \leq \Re f'(-\beta - n_2^{-1 + \epsilon} + \i n_2^{-1 + \epsilon/2} )  \leq  -C_1n_2^{-1+\epsilon}, \]
\[\Im f'(-\beta \pm n_2^{-1 + \epsilon} + \i n_2^{-1 + \epsilon/2} )  \asymp  n_2^{-1+\epsilon/2}.\]

For any fixed $q\in \frakD$,  we view 
\[ f(q) = -\frac{1}{q} + \hat{\gamma}_1 \int  \frac{1}{\tau +q} d\calG_{p\lambda}(\tau) = \int  (-1/q) \tau + \frac{\hat{\gamma}_1}{\tau +q } d\calG_{p\lambda}(\tau),\]
as a LSS with the analytical function being $(-1/q) \tau + \hat{\gamma}_1 (\tau+q)^{-1}$. The condition of  Theorem \ref{thm:analytical_function_1n_convergence} is satisfied. 
Applying Theorem \ref{thm:analytical_function_1n_convergence}, we obtain that 
\[ \mathbb{I}(g_p \geq (\rho+ \beta)/2)\left|\hat{f}(-\beta \pm n_2^{-1+\epsilon} + \i n_2^{-1 + \epsilon/2}) - f(-\beta \pm n_2^{-1+\epsilon} + \i n_2^{-1 + \epsilon/2})\right| \prec \frac{1}{n_2}.\]
Here, we also use the uniformity of the stochastic domination across all LSS $f(q)$ when $q\in \frakD$. It is easy to verify by checking the arguments in the proof of Theorem \ref{thm:analytical_function_1n_convergence}.  Indeed, the domination is uniform for all $\{f_q, q \in \frakD\}$ if a universal bound on $\sup_q |1/q|$ and $\sup_q\sup_{\tau \in (\rho-a, \infty)}(\tau+q)^{-1}$ when $q\in \frakD$ exists for some sufficiently small $a>0$.  

Moreover, since when $g_p \geq (\rho+\beta/2)$ and $q\in \frakD$, $|\hat{f}''(q)| \asymp 1$.
We conclude that there exists constants $C_1>0$ and $C_2 >0$ such that with high probability
\[ C_1 n_2^{-1+\epsilon} \leq \hat{f}'(-\beta + n_2^{-1 + \epsilon}) \leq C_2 n_2^{-1+\epsilon},\]
\[ -C_2 n_2^{-1+\epsilon} \leq \hat{f}'(-\beta - n_2^{-1 + \epsilon}) \leq -C_1 n_2^{-1+\epsilon}.\]
It implies that there exists a critical point of $\hat{f}'(q)$ in between $-\beta \pm n_2^{-1+\epsilon}$ with high probability. It completes the proof of Result (i).

\medskip

To show Result (ii),  first observe that when $g_p \geq (\rho+\beta)/2$, 
\[ \hat{\Theta}_{1} - \Theta_1  =  \hat{f}(-\hat{\beta}) - f(-\beta) = (\hat{f}(-\hat{\beta}) - \hat{f}(-\beta)) + (\hat{f}(-\beta) - f(-\beta)).\]
The first term is such that 
\[ \hat{f}(-\hat{\beta}) - \hat{f}(-\beta) = O_{\prec} (\hat{\beta} - \beta)  = O_{\prec}(\frac{1}{n_2}). \]
Here, we are using the fact that $\hat{f}'(q)$ is bounded uniformly when $q\in \frakD$ with high probability. For the second term, we again treat $\hat{f}(-\beta)$ as a LSS of $\bG_{\lambda}$. Using Theorem \ref{thm:analytical_function_1n_convergence},
\[\mathbb{I}( g_p \geq (\rho+\beta)/2) |\hat{f}(-\beta) - f(-\beta)| \prec n_2^{-1}.
\]
It follows that $|\hat{\Theta}_{1} - \Theta_1|  \prec 1/n_2$. 

Moreover, we can verify that $| \hat{\Theta}_2 - \Theta_2 | \prec 1/n_2$ following similar results. We omit details.

\medskip

Lastly, we show Result (iii).
Recall that 
\[ \iz = -\frac{1}{\hat{q}(\iz)} + (p/n_1) \int \frac{dF^{\bG_\lambda}(\tau)}{\tau + \hat{q}(\iz)}.\]
Fix $\iz_0 = E+ \i\eta$ such that $\eta\geq 1$. Applying Lemma \ref{lemma:properties_q_z}, we obtain that 
\[ \Im \hat{q}(\iz_0) \geq C,\]
for some universal constant $C>0$ independent of $\bG_\lambda$. 
Therefore, $-\hat{q}(\iz_0)$ is away from the support of $\calG_{p\lambda}$. Applying the local laws of $\bG_{\lambda}$ away from the support as in Theorem \ref{thm:strong_local_law}, we obtain that 
\[ \left|\int \frac{dF^{\bG_\lambda}(\tau)}{\tau + \hat{q}(\iz_0)} - \int\frac{ d\calG_{p\lambda}(\tau) }{\tau + \hat{q}(\iz_0) }   \right|\prec \frac{1}{n_2}.\]
It follows then
\[ \iz_0 + O_\prec(n_1^{-1}) =  -\frac{1}{\hat{q}(\iz_0)} + (p/n_1) \int \frac{d\calG_{p\lambda}(\tau)}{\tau + \hat{q}(\iz_0)}.\]
Due to the strong stability of the Mar\v{c}enko-Pastur equation (Lemma \ref{lemma:stability_q_z}), we obtain 
\[ |\hat{q}(\iz) - q(\iz)| \prec \frac{1}{n_2}.\]

To propagate the convergence from $\eta\geq 1$ to all  $z\in D$, we then use a stochastic continuity argument and Lemma \ref{lemma:stability_q_z} to propagate the smallness of $|\hat{q}_p(\iz) - q(\iz)|$. The argument is very similar to those in the proof of Proposition \ref{proposition:weak_local_law}. We therefore omit the details. All together, we conclude that 
\[|\hat{q}(\iz) - q(\iz)| \prec \frac{1}{n_2\eta}. \]

\medskip

As for the case when $\rho = \lambda/\sigma_{1p}$, from Theorem \ref{thm:rigidity_discrete_case}, Result (i) clearly holds. Results (ii) and (iii) follow from the same arguments as in the case of $\rho = \lambda/\sigma_{1p}$. We only need to replace the arguments in Theorem \ref{thm:analytical_function_1n_convergence} with  that in Theorem \ref{thm:analytical_function_converg_discrete}. 

 \end{proof}

\subsection{Step 3: Green Function Comparison Theorem and Linearization Matrix}\label{subsec:step3}

While the asymptotic Tracy--Widom distribution under Gaussianity has been established in Steps~1 and~2, the goal of the present step is to extend the result to non-Gaussian distributions satisfying Condition~\ref{enum:moments_conditions}. This phenomenon is known as edge universality in the RMT literature, meaning that the asymptotic distribution of edge eigenvalues is typically unaffected by the specific distribution of the matrix entries, provided that suitable moment conditions are satisfied.

Edge universality is typically established via a Green function comparison theorem. Define the resolvent (Green function) of the matrix $\tilde{\bF}_{\lambda}$ by
\[
\bL(z) = (U_1^{T} \bZ^{T} \bG_{\lambda}^{-1} \bZ U_1 - z I_{n_1})^{-1}, 
\qquad z \in \mathbb{C}^+ .
\]
We consider two settings: one in which $\bZ$ has a general distribution satisfying Condition~\ref{enum:moments_conditions}, and one in which $\bZ$ is Gaussian. To distinguish the latter from the former, we denote the Gaussian matrix by $\bZ^0$. Accordingly, the resolvent corresponding to $\bZ$ is denoted by $\bL_{\bZ}$, while the resolvent corresponding to $\bZ^0$ is denoted by $\bL_{\bZ^0}$.
The averaged trace of $\bL_{\bZ}(z)$ and $\bL_{\bZ^0}$ is taken as 
\[ \overline{L}_{\bZ}(z) = \frac{1}{n_1} \tr (\bL_{\bZ}(z)) \quad \text{and} \quad \overline{L}_{\bZ^0}(z) = \frac{1}{n_1} \tr (\bL_{\bZ^0} (z)).\]
Moreover, the transformed F matrix corresponding to $\bZ^0$ is denoted by $\tilde{\bF}^0_\lambda$. 

\begin{theorem}[Green function comparison]\label{thm:green_function_comparison}
Let $\epsilon>0$ and set $\eta = p^{-2/3-\epsilon}$. Let $E_1, E_2 \in \mathbb{R}$ satisfy $E_1 < E_2$ and
\[
|E_1 - \Theta_1|,\; |E_2 - \Theta_1| \leq p^{-2/3+\epsilon}.
\]
Let $K:\mathbb{R}\rightarrow \mathbb{R}$ be a smooth function such that
\[
\max_x |K^{(l)}(x)| \leq C, \qquad l=1,2,3,4,5,
\]
for some constant $C>0$. Then there exists a constant $c>0$ such that, for all sufficiently large $p$ and sufficiently small $\epsilon$, 
\[
\left|
\mathbb{E} \, K\!\left(p \int_{E_1}^{E_2} \Im \overline{L}_{\bZ}(x+i \eta)\, dx\right)
-
\mathbb{E} \, K\!\left(p \int_{E_1}^{E_2} \Im \overline{L}_{\bZ^0}(x+i \eta)\, dx\right)
\right| 
\leq p^{-c}.
\]
\end{theorem}
The theorem is a standard result in RMT; See, for example, Theorem~6.3 of \citet{erdHos2012rigidity}.  To connect Theorem \ref{thm:green_function_comparison} to the edge universality of $\tilde{\bF}_\lambda$, we need the following results. 
\begin{lemma}[Rigidity of largest eigenvalue]\label{lemma:rough_order_largest_root}
    Under Conditions \ref{enum:high_dimensional_regime}--\ref{enum:regular_edge}, we have
    \[ \ell_{\max}(\tilde{\bF}_\lambda) - \Theta_1  = O_\prec(p^{-2/3}). \]
\end{lemma}
Given Lemma \ref{lemma:rough_order_largest_root}, we can locate to the domain of $E$ such that $|E-\Theta_1|\prec p^{-2/3}$. Fix $E^* \prec p^{-2/3}$ such that it suffices to consider $\ell_{\max}(\tilde{\bF}_\lambda) \leq \Theta_1 + E^*$. Choose $|E - \Theta_1| \prec p^{-2/3}$, $\eta  = p^{-2/3- 9\epsilon}$ and $l = \frac{1}{2} p^{-2/3-\epsilon}$. Then, for some sufficiently small constant $\epsilon>0$ and sufficiently large constant $M$, there exists a constant $p_0$ depending on $\epsilon$ and $M$, such that 
\begin{align}
& \mathbb{E} K\left(\frac{n_1}{\pi} \int_{E-l}^{\Theta_1+E^*} \Im \overline{L}_{\bZ}(x+i \eta) d x\right) \leq \mathbb{P}\left( \ell_{\max}(\tilde{\bF}_\lambda) \leq E\right)\nonumber\\
&\phantom{ssssss} \leq \mathbb{E} K\left(\frac{n_1}{\pi} \int_{E+l}^{\Theta_1+E^*} \Im \overline{L}_{\bZ}(x+i \eta) d x\right)+n_1^{-M}, \label{eq:bound_probability_by_ST_1}\\
&\mathbb{E} K\left(\frac{n_1}{\pi} \int_{E-l}^{\Theta_1+E^*} \Im \overline{L}_{\bZ^0}(x+i \eta) d x\right) \leq \mathbb{P}\left( \ell_{\max}(\tilde{\bF}^0_\lambda) \leq E\right) \nonumber\\
&\phantom{ssssss} \leq \mathbb{E} K\left(\frac{n_1}{\pi} \int_{E+l}^{\Theta_1+E^*} \Im \overline{L}_{\bZ^0}(x+i \eta) d x\right)+n_1^{-M}, \label{eq:bound_probability_by_ST_2}
\end{align}
whenever $p>p_0$ and $K$ is a smooth cutoff function satisfying the condition in Theorem \ref{thm:green_function_comparison}. We omit the details in the function $K$ and the proof of Eq. \eqref{eq:bound_probability_by_ST_1} and Eq. \eqref{eq:bound_probability_by_ST_2}  because it is a standard procedure in RMT. Similar works include Corollary 5.1 of   Lemma 6.1 of \citet{erdHos2012rigidity},  \cite{BaoPanZhou2013LocalEdge}, and Lemma 4.1 of \cite{PillaiYin2014}. 

Following Eq. \eqref{eq:bound_probability_by_ST_1} and Eq. \eqref{eq:bound_probability_by_ST_2},  setting $E$ in Eq. \eqref{eq:bound_probability_by_ST_2} as $E\pm 2l$,  we immediately get 
\[ \mP(\ell_{\max}(\tilde{\bF}_\lambda) \leq E) \geq \mP(\ell_{\max}(\tilde{\bF}^0_\lambda) \leq E -2l) - n_1^{-M},\]
\[ \mP(\ell_{\max}(\tilde{\bF}_\lambda) \leq E) \leq \mP(\ell_{\max}(\tilde{\bF}^0_\lambda) \leq E + 2l) + n_1^{-M}.\]
Since $l = O(p^{-2/3 -\epsilon})$, we have then
\[  \mP(\ell_{\max}(\tilde{\bF}^0_\lambda) \leq E \pm 2l) - \mP(\ell_{\max}(\tilde{\bF}^0_\lambda) \leq E ) \to 0.\]
We then conclude that 
\[ \mP(\ell_{\max}(\tilde{\bF}_\lambda) \leq E)  - \mP (\ell_{\max}(\tilde{\bF}^0_\lambda) \leq  E) \to 0.\]
Since we already established the asymptotic Tracy-Widom distribution of $\ell_{\max}(\tilde{\bF}^0_\lambda)$ as stated in Theorem \ref{thm:main}, $\ell_{\max}(\tilde{\bF}_\lambda)$ has the same weak limit. It will complete the proof of Theorem \ref{thm:main}.

\medskip
It remains to prove Lemma~\ref{lemma:rough_order_largest_root} and Theorem~\ref{thm:green_function_comparison}. To this end, we analyze the behavior of the resolvent $\bL(z)$ in a neighborhood of $\Theta_1$. The objective is to show that the resolvent $\bL_{\bZ}(z)$ has the same asymptotic limit as that of $\bL_{\bZ^0}(z)$. However, it is difficult to work directly with $\bL(z)$ due to its complicated dependence on the individual entries of $\bZ$.

Following the strategy of \citet{han2016tracy} and \citet{han2018unified}, we instead express the resolvent $\bL(z)$ as a submatrix of the inverse of a linearization matrix.
In particular, define the following $3\times 3$ block matrix 
\begin{equation}\label{eq:bH}
\bH_\lambda(z) = \bH_\lambda(z,\bZ) =\left( \begin{matrix} 
- z I_{n_1} &  U_1^T \bZ^T & 0 \\[10pt]
 \bZ U_1 & -\lambda \Sigma_p^{-1} & \bZ U_2 \\[10pt]
0 & U_2^T \bZ^T & I_{n_2} 
\end{matrix} \right).
\end{equation}
Note that $\bH_\lambda(z)$ is linear in $\bZ$. Using the Schur complement formula, it can be verified easily that the upper-left block of the matrix $\bH^{-1}_\lambda(z)$ is the resolvent $\bL(z)$. Therefore, it suffices to study the properties of $\bH^{-1}_\lambda(\iz)$. This linearization technique is widely used in RMT for studying the universality of matrices with complex structure.

It is important to note that when $\lambda = 0$, the matrix $\bH_0(z)$ reduces to the linearization matrix (also denoted by $\bH$) used in \citet{han2016tracy} and \citet{han2018unified}. In other words, our linearization matrix differs from that in \citet{han2016tracy} and \citet{han2018unified} only in a single deterministic block (the middle block). Owing to this structural similarity, the arguments developed in \citet{han2018unified} can be reused with only minor modifications, which substantially simplifies our proof.

The explicit expression of $\bH^{-1}(z)$ can be obtained using the standard block matrix inversion formula
\[
\begin{pmatrix}
\mathbf{K} & \mathbf{B} \\
\mathbf{C} & \mathbf{D}
\end{pmatrix}^{-1}
=
\begin{pmatrix}
0 & 0 \\
0 & \mathbf{D}^{-1}
\end{pmatrix}
+
\begin{pmatrix}
\mathbf{I} \\
-\mathbf{D}^{-1}\mathbf{C}
\end{pmatrix}
\left(\mathbf{K}-\mathbf{B}\mathbf{D}^{-1}\mathbf{C}\right)^{-1}
\begin{pmatrix}
\mathbf{I} ~&~\mathbf{B}\mathbf{D}^{-1}
\end{pmatrix}.
\]

As an illustration, the three diagonal blocks of $\bH_\lambda^{-1}$ are given by
\begin{align*}
(\bH^{-1}_\lambda)_{11} &= \bL(z) = (U_1^{T}\bZ^{T}\bG_\lambda^{-1}\bZ U_1 - z I_{n_1})^{-1},\\[4pt]
(\bH^{-1}_\lambda)_{22} &= (-\bG_{\lambda} + z^{-1}\bZ U_1 U_1^{T}\bZ^{T})^{-1},\\[4pt]
(\bH^{-1}_\lambda)_{33} &= I_{n_2} + U_2^{T}\bZ^{T}(\bH^{-1}_\lambda)_{22}\bZ U_2.
\end{align*}
We omit the explicit expressions for the off-diagonal blocks.

Importantly, when $\bZ U_2$ is regarded as fixed, all blocks of $\bH^{-1}$ can be expressed as linear combinations of the block matrices appearing in equation~(4.3) of \citet{knowles2017anisotropic}, provided that the matrix
\[
\bG^{-1}_\lambda = (\bZ U_2 U_2^{T}\bZ^{T} + \lambda \Sigma_p^{-1})^{-1}
\]
is taken as the population covariance matrix ``$\Sigma$'' in the formulation of \citet{knowles2017anisotropic}. For example, among the diagonal blocks, $(\bH^{-1}_\lambda)_{11}$ corresponds to the last block in (4.3) of \citet{knowles2017anisotropic} (denoted there by $G_n$), while $(\bH^{-1}_\lambda)_{22}$ corresponds to the first block in (4.3) (denoted there by $G_M$). Finally, $(\bH^{-1}_\lambda)_{33}$ can be written as $I_{n_2} + U_2^{T}\bZ^{T}G_M \bZ U_2$.

While the formal proof of the local laws for $\bH^{-1}_\lambda$ is presented in the next section, we briefly describe the conjectured limiting behavior of the blocks of $\bH^{-1}_\lambda$. Assume that the observations are Gaussian, so that $\bZ U_1$ and $\bZ U_2$ are independent. Conditional on $\bZ U_2$, we may therefore apply the deterministic equivalents of $G_M$ and $G_N$ established in \citet{knowles2017anisotropic}. In particular, based on the form of the deterministic equivalents given in equation~(3.5) of \citet{knowles2017anisotropic}, we obtain
\begin{align*}
(\bH^{-1}_\lambda)_{11} &\approx \hat{q}_{\lambda}(z)\, I_{n_1},\\[4pt]
(\bH^{-1}_\lambda)_{22} &\approx -\, \bG^{-1}_{\lambda} \bigl(1+ \hat{q}_{\lambda}(z)\bG_\lambda^{-1}\bigr)^{-1}
= -\, (\bG_\lambda + \hat{q}_{\lambda}(z) I_p)^{-1},\\[4pt]
(\bH^{-1}_\lambda)_{33} &\approx I_{n_2} - U_2^{T} \bZ^{T} (\bG_\lambda + \hat{q}_{\lambda}(z) I_p)^{-1} \bZ U_2,\\[4pt]
(\bH^{-1}_\lambda)_{ij} & \approx 0 ,\quad \text{if } i\neq j. 
\end{align*}
Here, $\hat{q}_{\lambda}(z)$ is defined in \eqref{eq:def_hat_q}. 

Moreover, $(\bG_\lambda + \hat{q}_{\lambda}(z) I_p)^{-1}$ is precisely the resolvent of $\bG_\lambda$ evaluated at $-\hat{q}_{\lambda}(z)$. Equivalently, it is the first block $\bJ_{(11)}$ of $\bJ$ evaluated in $-\hat{q}_\lambda(z)$  (see Eq.~\eqref{eq:bJ_bK} and Lemma~\ref{lemma:resolvent_identity}). In addition, the matrix
\[
I_{n_2} - U_2^{T} \bZ^{T} (\bG_\lambda + \hat{q}_{\lambda}(z) I_p)^{-1} \bZ U_2
\]
can be written as $-\bJ_{(22)}(-\hat{q}_{\lambda}(z))$, where $\bJ_{(22)}$ denotes the lower-right block of the matrix $\bJ$. Combining the deterministic equivalent $\Omega(z)$ of $\bJ(z)$ established in Section~\ref{sec:properties_of_bg_lambda} with the concentration of $\hat{q}_{\lambda}(z)$ to $q_\lambda(z)$, we obtain
\begin{align*}
(\bH^{-1}_\lambda)_{22} 
&\approx -\, (\bG_\lambda + \hat{q}_{\lambda}(z) I_p)^{-1} 
\approx \big( \lambda \Sigma_p^{-1} - h_\lambda(-q_\lambda(z)) I_p \big)^{-1},\\[6pt]
(\bH^{-1}_\lambda)_{33} 
&\approx I_{n_2} - U_2^{T} \bZ^{T} (\bG_\lambda + \hat{q}_{\lambda}(z) I_p)^{-1} \bZ U_2 
\approx \big(h_\lambda(-q_\lambda(z)) - z \big) I_{n_2}.
\end{align*}

We define the following two block diagonal matrices
\begin{align*}
    \widehat{\Pi}_\lambda(z) =\widehat{\Pi} (z, \bZ, \lambda) = \left(
    \begin{matrix}
        \hat{q}_{\lambda}(z) I_{n_1} & 0 & 0\\
        0 & (\bG_\lambda + \hat{q}_{\lambda}(z) I_p)^{-1} & 0\\
        0 & 0 & I_{n_2} - U_2^{T} \bZ^{T} (\bG_\lambda + \hat{q}_{\lambda}(z) I_p)^{-1} \bZ U_2
    \end{matrix}
    \right).
\end{align*}

\begin{align*}
    \Pi_\lambda(z) = \Pi(z, \lambda)  =\begin{pmatrix}
        q(z) I_{n_1} & 0 & 0 \\
        0 & \big( \lambda \Sigma_p^{-1} - h_\lambda(-q_\lambda(z)) I_p \big)^{-1} & 0\\
        0 & & \big(h_\lambda(-q_\lambda(z)) - z \big) I_{n_2}
    \end{pmatrix}.
\end{align*}

It is worth noting that when $\lambda=0$, the matrix $\widehat{\Pi}_0(z)$ coincides with the ``limit'' matrix $\Pi(z)$ in \citet{han2018unified} (see equation~(8.5) therein). In particular, for $\lambda=0$, \citet{han2016tracy} and \citet{han2018unified} established that
\[
\bH^{-1}_0(z) - \widehat{\Pi}_0(z) 
\prec 
\sqrt{\frac{\Im q_0(z)}{n_1\eta}} + \frac{1}{n_1\eta},
\]
whenever $z$ lies in a neighborhood of $\Theta_1(0)$. In the next section, we establish the corresponding concentration result for $\lambda>0$, where $\widehat{\Pi}_\lambda(z)$ is replaced by the deterministic equivalent matrix $\Pi_\lambda(z)$.

\subsection{Step 4: Local Laws for the Linearization Matrix}\label{subsec:step4}

In this section, we establish a local law for the behavior of $\bH^{-1}_\lambda(z)$ when $z$ is close to $\Theta_1$. As before, we frequently suppress the explicit dependence of many quantities on $\lambda$ and $p$ whenever no ambiguity arises.


Recall the domain
\[
D = D(a,a',n_1) = \Big\{ z = E+\mathrm{i}\eta \in \mathbb{C}^+ : |E-\Theta_1| \leq a,\; n_1^{-1+a'} \leq \eta \leq 1/a' \Big\}
\]
from Section~\ref{subsec:additional_properties_of_the_generalized_mar_v_c}. We define the control parameter 
\[ \Phi(z)  =  \Phi(z,\lambda) =  \sqrt{\frac{\Im q_\lambda(z) }{n_1\eta} } + \frac{1}{n_1\eta}.\]



Again, we proceed by first assume Gaussianity. The following theorem follows from a combination of Theorem 3.14 in \cite{knowles2017anisotropic} and \ref{thm:convergence_hatbeta_hatTheta} in Section \ref{subsec:step2}.
\begin{theorem}
    \label{thm:local_laws_bH}
    Suppose that Conditions \ref{enum:high_dimensional_regime}--\ref{enum:regular_edge} hold. Additionally, assume that the observations are Gaussian.  Fix any $\lambda>0$. 
For any $a' >0$,  there exist constants $a>0$ such that the following results hold. 
\begin{itemize}
    \item[(i)] (Entrywise local law)  We have
\[ \bH^{-1}_{\lambda}(\iz) - \widehat{\Pi}_{\lambda}(z) = O_\prec(\Phi(\iz)),\]
uniformly in ${D}(a, a', n_1)$. 
\item[(ii)] (Averaged local law) We have 
\[| \overline{L}(\iz) - {q}_\lambda(\iz)| \prec \frac{1}{n_1 \eta},\]
uniformly in $D(a,a',n_1)$. Here $\overline{L}(z)$ is the averaged trace of $\bL_{\lambda}(z) = (\bH^{-1}_\lambda(z) )_{11}$. 

\item[(iii)] (Local law for $\widehat{\Pi}_\lambda(z)$) Moreover,
\[  \widehat{\Pi}_{\lambda}(z) - \Pi_\lambda(z) = O_\prec(\frac{1}{n_2\eta}),\]
uniformly in ${D}(a, a', n_1)$. It immediately follows that part (i) still holds if $\widehat{\Pi}_\lambda(z)$ is replaced by $\Pi_\lambda(z)$.
\end{itemize}
\end{theorem}

Before proceeding to the proof of Theorem~\ref{thm:local_laws_bH}, as a connection to the literature, \cite{han2016tracy} and \cite{han2018unified} established part (i) and (ii) of Theorem \ref{thm:local_laws_bH} when $\lambda =0$ under both Gaussian and Condition \ref{enum:moments_conditions}.  

\begin{proof}[Proof of Theorem~\ref{thm:local_laws_bH}]
By part~(i) of Theorem~\ref{thm:convergence_hatbeta_hatTheta}, there exists a small constant $c>0$ such that $\mathbb{I}(|\hat{\beta}-g_p|>c)=1$ with high probability. Consequently, when $\bG_{\lambda}^{-1}$ is regarded as the ``population covariance matrix'', the regularity condition in Definition~2.7 of \citet{knowles2017anisotropic} is satisfied with high probability.

As explained in Section~\ref{subsec:step3}, under this regularity condition, each block of $\bH^{-1}$ can be expressed as a linear combination of the blocks of the linearization matrix $G$ studied in \citet{knowles2017anisotropic} (see equation~(4.3) therein). Applying the entrywise local laws from Theorem~3.14 of \citet{knowles2017anisotropic}, we obtain that for any $a'>0$ there exists $2a>0$ such that
\[
\mathbb{I}(|\hat{\beta}-g_p|>c)\,
\big|\bH^{-1}(z)-\widehat{\Pi}(z)\big|
\,\Big|\, \bZ U_2
\prec 
\sqrt{\frac{\Im \hat{q}(z)}{n_1\eta}}
+\frac{1}{n_1\eta},
\]
uniformly for $z\in \hat{D}(2a,a',n_1)$. Here, $\hat{D}(2a,a',n_1)$ denotes the $\bZ U_2$–dependent counterpart of $D(2a,a',n_1)$ obtained by replacing $\Theta_1$ with $\hat{\Theta}_1$. The notation $\xi\,|\,\bZ U_2 \prec \zeta$ means that $\xi$ is stochastically dominated by $\zeta$ (Definition~\ref{def:stochastic_domination}) with respect to the conditional distribution of $\bZ$ given $\bZ U_2$.

Next, by part~(ii) of Theorem~\ref{thm:convergence_hatbeta_hatTheta}, for any $a>0$ we have $D(a,a',n_1)\subset \hat{D}(2a,a',n_1)$ with high probability. Moreover, part~(iii) of Theorem~\ref{thm:convergence_hatbeta_hatTheta} implies that $|\hat{q}(z)-q(z)|\prec 1/(n_1\eta)$. Combining these results, 
\[
\big|\bH^{-1}(z)-\widehat{\Pi}(z)\big|
\,\Big|\, \bZ U_2
\prec 
\sqrt{\frac{\Im q(z)}{n_1\eta}}
+\frac{1}{n_1\eta},
\]
uniformly for $z\in D(a,a',n_1)$. Finally, integrating with respect to the distribution of $\bZ U_2$, we conclude that
\[
\big|\bH^{-1}(z)-\widehat{\Pi}(z)\big|
\prec 
\sqrt{\frac{\Im q(z)}{n_1\eta}}
+\frac{1}{n_1\eta},
\]
uniformly in $D(a,a',n_1)$.

The averaged local law in part (ii) follows directly from the averaged version of Theorem~3.14 of \citet{knowles2017anisotropic} together with analogous arguments.

To show part (iii), we use Theorem \ref{thm:strong_local_law}. We only need to consider the second and third diagonal blocks, since the first block is such that $|\hat{q}(z) - q(z)|\prec 1/(n_2\eta)$ (note that $n_1\asymp  n_2$).  As explained in Section \ref{subsec:step3},  these two blocks in $\widehat{\Pi}(z)$ are linear combinations of the blocks in $\bJ(-\hat{q}(z))$.  Accordingly, the two diagonal blocks in $\Pi(z)$ are the same linear combinations of the blocks in $\Omega(-q(z))$. Given Theorem \ref{thm:strong_local_law}, to show the concentration of $\widehat{\Pi}(z)$ around $\Pi(z)$, it suffices to verify that $-\hat{q}(z)$ is away from the support of $\calG_{p\lambda}$ with high probability. By part (2) of Lemma \ref{lemma:properties_q_z}, there exits a constant $c'>0$ such that 
\[\inf_{\tau \in \supp(\calG_{p\lambda})} |\tau + q(z)|\geq c'\] 
for all $z\in D(a,a',n_1)$. Therefore, $\inf_{\tau \in \supp(\calG_{p\lambda})} | \tau + \hat{q}(z) |\geq c'$ with high probability.  It completes the proof.
\end{proof}

The last step in the proof of Theorem \ref{thm:main} is to extend the results in Theorem \ref{thm:local_laws_bH} to non-Gaussianity. We aim to establish the following results.
\begin{theorem}
    \label{thm:local_laws_bH_nonGauss}
    Suppose that Conditions \ref{enum:high_dimensional_regime}--\ref{enum:regular_edge} hold. Fix any $\lambda>0$. For any $a' >0$,  there exist constants $a>0$ such that the following results hold. 
\begin{itemize}
    \item[(i)] (Entrywise local law)  We have
\[ \bH^{-1}_{\lambda}(\iz) - \Pi_{\lambda}(z) = O_\prec(\Phi(\iz)),\]
uniformly in ${D}(a, a', n_1)$. 
\item[(ii)] (Averaged local law) We have 
\[| \overline{L}(\iz) - {q}_\lambda(\iz)| \prec \frac{1}{n_1 \eta},\]
uniformly in $D(a,a',n_1)$. Here $\overline{L}(z)$ is the averaged trace of $\bL_{\lambda}(z) = (\bH^{-1}_\lambda(z) )_{11}$. 
\end{itemize}
\end{theorem}

Once Theorem~\ref{thm:local_laws_bH_nonGauss} is established, it is standard in RMT to deduce Lemma~\ref{lemma:rough_order_largest_root} (edge rigidity) and Theorem~\ref{thm:green_function_comparison} (the Green function comparison theorem). Consequently, edge universality for $\tilde{\bF}_{\lambda}$ follows, completing the proof of Theorem~\ref{thm:main}. In particular, given Theorem~\ref{thm:local_laws_bH}, the proof of Lemma~\ref{lemma:rough_order_largest_root} is very similar to that of Theorem~\ref{thm:rigidity_G}. Such arguments are standard in the literature; see, for example, Lemma~4 of \citet{han2016tracy} and Lemma~1 of \citet{han2018unified}. We therefore omit the details.

Given Theorem~\ref{thm:local_laws_bH}, Theorem \ref{thm:local_laws_bH_nonGauss}, and Lemma~\ref{lemma:rough_order_largest_root}, it is likewise routine in RMT to derive the bound required for the Green function comparison theorem. Similar arguments can be found in Section~6 of \citet{erdHos2012rigidity} and Section~12 of \citet{han2018unified}. In particular, owing to the close similarity between our linearization matrix $\bH(z)$ and that in \citet{han2018unified}, the arguments in Section~12 of \citet{han2018unified} can be applied essentially verbatim. We therefore omit the details.

\begin{proof}[Proof of Theorem \ref{thm:local_laws_bH_nonGauss}]

The proof of Theorem~\ref{thm:local_laws_bH_nonGauss} is technically considerably more demanding. Nevertheless, closely related analyses have been carried out in Sections~9.1, 9.2, and~10 of \citet{han2016tracy} and \citet{han2018unified}, by adapting the arguments originally developed in \citet{knowles2017anisotropic}. In particular, we have already employed these arguments in Section~\ref{subsec:local_law_G_nonGaussian} to establish the local laws for $\bG_\lambda$ under non-Gaussianity.

Recall that our linearization matrix $\bH_\lambda$ coincides with that of \citet{han2018unified} in the special case $\lambda=0$. Owing to this structural similarity, most of the arguments in \citet{han2018unified} can be directly adapted to our setting. We therefore focus only on describing the similarities and differences, and on identifying the modifications required to tailor those arguments to the present context. With these modifications, Theorem \ref{thm:local_laws_bH_nonGauss} is proved using arguments in Sections 9.1, 9.2 and 10 of \cite{han2018unified} in a verbatim manner.

For any deterministic unit vector $v$ and $w$, any observation matrix $\bZ$ satisfying \ref{enum:moments_conditions}, and $z\in D$, define the inner product function 
\[ F_\lambda (\bZ) = F_\lambda(\bZ, z, v, w)  = |v^T (\bH^{-1}_\lambda (\bZ, z) - \Pi_\lambda(\bZ, z))w|. \]
To show Theorem \ref{thm:local_laws_bH_nonGauss}, it suffices to show that $F_{\lambda}(\bZ) \prec \Phi$ for $z\in D$. While this argument is established in Theorem \ref{thm:local_laws_bH} under Gaussian observations, say $\bZ^0$, we only need to control the difference 
\[ \mE F_\lambda(\bZ) - \mE F_\lambda ( \bZ^0). \]

To this end, \citet{knowles2017anisotropic} employ an interpolation method in which a family of distributions $\{f_\theta : \theta \in [0,1]\}$ is introduced such that $f_0$ is $N(0,1)$ and $f_1$ is the distribution of the entries of $\bZ$ (see Definition~7.6 of \citet{knowles2017anisotropic} for details). Let $\bZ^\theta$ denote the matrix whose entries follow the distribution $f_\theta$. 

To control $\mE F_{\lambda}(\bZ^1) - \mE F_{\lambda}(\bZ^0)$, one analyzes the change in $\bH^{-1}_{\lambda}(\bZ^\theta)$ induced by replacing a single entry of $\bZ^{\theta}$. Specifically, let $\bZ_{(uv)}^{\theta,s}$ denote the matrix obtained from $\bZ^{\theta}$ by replacing its $(u,v)$-th entry with the value $s$. It turns out that, to control $\mE F_{\lambda}(\bZ^1) - \mE F_{\lambda}(\bZ^0)$, we only need to control 
\[ \sum_{uv} \mE F_{\lambda}^k (\bZ_{(uv)}^{\theta, z_{uv}^{1}}) - \mE F_{\lambda}^k (\bZ_{(uv)}^{\theta, z_{uv}^{0}}),\]
for arbitrary power $k\in2\mathbb{N}$, $\theta\in[0,1]$, and $(u,v)$. See Lemmas 7.9 and 7.10 of \cite{knowles2017anisotropic} for details. The analysis relies on explicitly quantifying the difference in $\bH^{-1}_{\lambda}(\bZ^{\theta})$ induced by the change  of the $(u,v)$-th entry of $\bZ^{\theta}$. Since $A^{-1} - B^{-1} = - A^{-1} (A-B) B^{-1}$, we have 
\[ \bH^{-1}_{\lambda}(\bZ_{(uv)}^{\theta, z_{uv}^{1}}) - \bH^{-1}_{\lambda}(\bZ_{(uv)}^{\theta, z_{uv}^{0}}) =  - \bH^{-1}_{\lambda}(\bZ_{(uv)}^{\theta, z_{uv}^{1}}) [\bH_\lambda(\bZ_{(uv)}^{\theta, z_{uv}^{1}})  - \bH_\lambda(\bZ_{(uv)}^{\theta, z_{uv}^{0}}) ] \bH^{-1}_\lambda(\bZ_{(uv)}^{\theta, z_{uv}^{0}}).\]
\begin{equation}\label{eq:diff_bH_entry_change}
\bH_{\lambda}(\bZ_{(uv)}^{\theta, z_{uv}^{1}})  - \bH_{\lambda}(\bZ_{(uv)}^{\theta, z_{uv}^{0}})  = -(z_{uv}^{1} - z_{uv}^{0}) [ e_{(n_1+ p+n_2)u} e_{n_0v}^T \Delta^T +\Delta e_{n_0v} e_{(n_1+ p+n_2)u}^T].
\end{equation}
Here, $e_{pi}$ is the standard unit vector of dimension $p$ in the coordinate direction $i$ and 
\[\Delta = \begin{pmatrix}
    U_1^T \\[4pt]
    0_{p\times n_0} \\[4pt]
    U_2^T
\end{pmatrix}.\] 
Arguments in Sections 9.1, 9.2 and 10 of \cite{han2018unified} are devoted to the study of the above quantities when $\lambda=0$. 

The similarities between our setting and that of \citet{han2018unified} can be summarized as follows. First, the bounds on the norms $\|\bH_\lambda(z)\|$, $\|\bH^{-1}_\lambda(z)\|$, and $\|\partial_z\bH_\lambda^{-1}(z)\|$ by the imaginary part of $z$ are unaffected by the presence of the parameter $\lambda$. In particular, the estimates in (9.16)--(9.19) of \citet{han2018unified} remain valid for all $\lambda>0$. Second, bounds of the form $|v^{T}(\bH^{-1}_0(\bZ^\theta)-\widehat{\Pi}_0(z))w|$, such as those established in Lemma~8 of \citet{han2018unified}, continue to hold for $|v^{T}(\bH^{-1}_\lambda(\bZ^\theta)-\Pi_\lambda(z))w|$. Third, the analysis of the perturbation induced by modifying a single entry of $\bZ^\theta$ can be carried out in exactly the same manner as in \citet{han2018unified}. In particular, equation~\eqref{eq:diff_bH_entry_change} does not depend on $\lambda$, since in the difference $\bH_{\lambda}(\bZ_{(uv)}^{\theta,z_{uv}^{1}}) - \bH_{\lambda}(\bZ_{(uv)}^{\theta,z_{uv}^{0}})$ the middle block cancels.

The main difference between our setting and that of \citet{han2018unified} lies in the treatment of the deterministic equivalent matrix. In our context, concentration is established with respect to the deterministic equivalent $\Pi_\lambda(z)$, whereas \citet{han2018unified} employ a $\bZ U_2$–dependent equivalent $\widehat{\Pi}_0(z)$. This distinction leads to a simpler analysis in our case. Indeed, when individual entries of $\bZ$ are modified, the matrix $\widehat{\Pi}_0(z)$ changes accordingly, while $\Pi_\lambda(z)$ remains unaffected. Consequently, all arguments in \citet{han2018unified} that pertain to the analysis of $\widehat{\Pi}_0(z)$ can be omitted in our setting. These include, for example, the definition of the groups $g^{(j)}$ in Section~9.1.2, the second term on the right-hand side of equation~(9.42), the quantities $\mathcal{H}_{sti}$ and $\mathcal{H}_{stu}$ appearing between equations~(9.48) and~(9.49), equation~(9.53), as well as the discussion from the fourth paragraph of Section~9.2 up to equation~(9.60), among others.

With these modifications in place, the remaining arguments of \citet{han2018unified} can be applied in a verbatim manner. We therefore omit further details.
\end{proof}

\newpage
\clearpage
\section{Proof of Theorem \ref{thm:consistency_estimators}, Lemma \ref{lemma:consistency_condition1}, and Lemma \ref{lemma:consistency_condition3}}\label{sec:proof_of_theorem_ref_thm_consistency_estimators}

First, consider the proof of Theorem \ref{thm:consistency_estimators}. Let $F^{\Sigma_p}_{(p)}$ be the discretization of $F^{\Sigma_p}$ onto the points in $R_p$. Specifically, we assume that $F^{\Sigma_p}_{(p)}$ is the distribution that places masses $F^{\Sigma_p}(( \sigma_{p,j-1}, \sigma_{pj}])$ at the point $\sigma_{pj}$, $j =1,2,\dots$, and $\sigma_{p0} = -\infty$. Since  $\mbox{size}(R_p) = o(p^{-2/3})$, clearly 
\[ \mathcal{D}_W(F^{\Sigma_p}_{(p)}, F^{\Sigma_p}) = o(p^{-2/3}).\]

Recall the errors in Algorithm \ref{algo:linear_program} as 
\[e_{ij} =  \frac{\hat{Q}_j(z_i)}{|\hat{Q}_j(z_i)|} -\frac{1}{|\hat{Q}_j(z_i)|} \sum_{k=1}^K \frac{\sigma_k^j w_k}{(\sigma_k \lambda \hat{\varphi}(z_i) +\lambda)^j}.\]
By the convergence of $\hat{\varphi}(z)$, uniformly for $y \in J_p$, $\hat{\varphi}(y) - \varphi(y) = o_{P}(n^{-2/3})$ and $\hat{\varphi}'(y) - \varphi'(y) = o_{P}(n^{-2/3})$ . Therefore, together with Mar\v{c}enko-Pastur equation \eqref{eq:MP_W2}, 
\[ \sup_{y \in J_p }\left|\frac{\hat{Q}(y)}{|\hat{Q}(y)|} -\frac{1}{|\hat{Q}(y)|}  \int \frac{\tau dF^{\Sigma_p}(\tau) }{\tau \lambda \hat{\varphi}(y) +\lambda}    \right|  = o_{P}(p^{-2/3}). \] 
It follows then 
\[ \sup_{y \in J_p }\left|\frac{\hat{Q}(y)}{|\hat{Q}(y)|} -\frac{1}{|\hat{Q}(y)|}  \int \frac{\tau dF^{\Sigma_p}_{(p)}(\tau) }{\tau \lambda \hat{\varphi}(y) +\lambda}    \right|  = o_{P}(p^{-2/3}). \] 
Then, we can conclude that the estimated measure $\hat{F}_p$ is such that
\[  \sup_{y\in J_p} \left|\int \frac{\tau d\hat{F}_p(\tau) }{\tau {\varphi}(y) + 1}  -   \int \frac{\tau dF^{\Sigma_p}(\tau) }{\tau {\varphi}(y) + 1} \right|= o_{P}(p^{-2/3}).\]
Under the assumption $\{\varphi(y), y\in J_p \}$ is dense in $D(\nu)$. Therefore,  
\begin{equation}\label{eq:proof_theorem_consistency_eq1}
\sup_{z \in D(\nu)} \left|\int \frac{\tau d\hat{F}_p(\tau) }{\tau z + 1}  -   \int \frac{\tau dF^{\Sigma_p}(\tau) }{\tau z + 1} \right| = o_{P}(p^{-2/3}). 
\end{equation}
Similarly, we can get
\begin{equation}\label{eq:proof_theorem_consistency_eq2}
\sup_{z \in D(\nu)} \left|\int \frac{\tau^2 d\hat{F}_p(\tau) }{(\tau z + 1)^2}  -   \int \frac{\tau^2 dF^{\Sigma_p}(\tau) }{(\tau z + 1)^2} \right| = o_{P}(p^{-2/3}). 
\end{equation}

Notice that the functions are analytic in $z$. Consider the power series expansion 
\begin{align*}
\int \frac{\tau d\hat{F}_p(\tau) }{\tau z + 1}  -   \int \frac{\tau dF^{\Sigma_p}(\tau) }{\tau z + 1} & = \sum_{m=0}^\infty (-1)^m  z^m \left(\int \tau^{m+1} (d\hat{F}_p(\tau) - dF^{\Sigma_p}(\tau)) \right) \\ &=\sum_{m=0}^\infty (-1)^m  z^m (\mu_{m+1}(\hat{F}_p) - \mu_{m+1} (F^{\Sigma_p})).
\end{align*}
Here, $\mu_{m}(\cdot)$ is the $m$th moment of a distribution.  Using Cauchy's estimates on the power series coefficients, we have 
\[ \delta_p =  \sup_{m\geq 1} p^{2/3} \nu^m |\mu_{m}(\hat{F}_p) - \mu_{m} (F^{\Sigma_p})| = o_P(1).\]
Then, it follows that the characteristic functions of $\hat{F}_p$ and $F^{\Sigma_p}$ are such that for $t\in \mathbb{R}$, 
\[\left| \frac{\mE_{\hat{F}_p} e^{i tX} -  \mE_{F^{\Sigma_p}} e^{i tX}}{t} \right| \leq C \nu^{-1}\delta_p p^{-2/3} e^{t/\nu}, \]
\[\left|\frac{d}{dt}  \frac{\mE_{\hat{F}_p} e^{i tX} -  \mE_{F^{\Sigma_p}} e^{i tX}}{t}  \right| \leq C\nu^{-2} \delta_p p^{-2/3} e^{t/\nu}. \]
Following from Corollary 8.3 of \cite{bobkov2016proximity}, for all $T\geq 3$
\begin{align*}
\calD_W(\hat{F}_p, F^{\Sigma_p}) \leq &\left(\int_{-T}^T  \left| \frac{\mE_{\hat{F}_p} e^{i tX} -  \mE_{F^{\Sigma_p}} e^{i tX}}{t} \right|^2 dt\right)^{1/2} + \left(\left|\frac{d}{dt}  \frac{\mE_{\hat{F}_p} e^{i tX} -  \mE_{F^{\Sigma_p}} e^{i tX}}{t}  \right|^2 dt\right)^{1/2} \\
&+ \frac{C}{T}.
\end{align*}
It follows then 
\[\calD_W(\hat{F}_p, F^{\Sigma_p}) \leq \frac{C(1+\nu)}{\nu^2} p^{-2/3} \exp(T)\delta_p + \frac{C}{T}.\]
Note that setting $T= \log (p^{1/3})$, we have 
\[\calD_W(\hat{F}_p, F^{\Sigma_p}) = o_P(1). \]

Next, we consider the scenario when $J_p$ is such that $\{\varphi(y): ~ y\in J_p\}$ eventually covers $[-h_\beta/\lambda - \varepsilon, -h_\beta/\lambda + \varepsilon]$.  Again, due to the fact that uniformly for  $y\in J_p$,  $\hat{\varphi}(y_p) -\varphi(y_p) = o_{P}(p^{-2/3})$, $\hat{\varphi}'(y_p) -\varphi'(y_p) = o_{P}(p^{-2/3})$,  Eq. \eqref{eq:proof_theorem_consistency_eq1} and Eq. \eqref{eq:proof_theorem_consistency_eq2}, 
\[\sup_{h \in [h_\beta -\lambda \varepsilon, h_\beta + \lambda \varepsilon]  } \left|\int \frac{\tau d\hat{F}_p(\tau)}{-h\tau  + \lambda}  - \int \frac{\tau d{F}^{\Sigma_p}(\tau)}{-h\tau  + \lambda}  \right|  = o_P(p^{-2/3}).\]

\[\sup_{h \in [h_\beta -\lambda \varepsilon, h_\beta + \lambda \varepsilon] } \left|\int \frac{\tau^2 d\hat{F}_p(\tau)}{(-h\tau  + \lambda)^2}  - \int \frac{\tau^2 d{F}^{\Sigma_p}(\tau)}{(-h\tau  + \lambda)^2}  \right|  = o_P(p^{-2/3}).\]
It indicates that uniformly in  $h \in [h_\beta -\lambda \varepsilon, h_\beta + \lambda \varepsilon]$, 
\[ \hat{\calH}_1(h) - \calH_1(h)  = o_P(p^{-2/3}), \quad   \hat{\calH}_2(h) - \calH_2(h)  = o_P(p^{-2/3}) \quad \mbox{and} \quad \hat{\calH}_3(h) - \calH_3(h) = o_P(1). \]
Here, the bound on $\hat\calH_3(h)-  \calH_3(h)$ is due to $\calD_W(\hat{F}_p, F^{\Sigma_p})$. As $x$,  $s(x)$, $s'(x)$, and $s''(x)$ are smooth in $\calH_j(h)$, $j=1,2,3$ as shown in Theorem \ref{thm:determine_s}, we get 
\[\hat{\beta} - \beta = o_P(p^{-2/3}), \quad \hat{s}(\hat\beta) - s(\beta) = o_P(p^{-2/3}),\] 
\[\hat{s}'(\hat\beta) - s'(\beta) = o_P(p^{-2/3}), \quad  \hat{s}''(\hat\beta) - s''(\beta) = o_P(1).\]
Consequently, $\hat{\Theta}_1 -\Theta_1 =  o_P(p^{-2/3})$ and $\hat{\Theta}_2 - \Theta_2 = o_P(1)$.

As for Lemma \ref{lemma:consistency_condition1}, recall that $\varphi(z)$ is the Stieltjes transform of $\calW$. Then,
\[ \varphi(z) = \int\frac{d\calW }{\tau - z}.\]
It follows that $\{\varphi(z): z \in \mathbb{C}^+ ~\mbox{and}~   \operatorname{dist}(z, \operatorname{supp}(\calW)) \geq c \}$ contains the half disk $\calD(\nu)$ for some $\nu >0$. We can then select $J_p$ such that $\varphi(y)$ is dense in $\calD(\nu)$ when $y\in J_p$.

As for Lemma \ref{lemma:consistency_condition3}, as $\hat{\gamma}_1 \to \infty$, $\beta \to 0$ and $h_\beta/\lambda \to -\varphi(-\lambda)$.  Therefore, select $J_p$ to be a grid containing a subsequence dense in $[-3/2\lambda, -\lambda/2]$. Then, $\{\varphi(y): y\in J_p\}$ will eventually covers a neighborhood of $-h_\beta/\lambda$. 


\section{Proof of Lemma \ref{lemma:alternatives_deterministic}, Corollary \ref{corollary:power_consistency_deterministic}, Lemma \ref{lemma:alternatives_prob} and Corollary \ref{corollary:power_consistency_prob}}
\label{sec:proof_power}

First of all, under $H_a: BC \neq 0$, we can decompose $\bF_\lambda$ as
\begin{align*} 
\bF_\lambda &=  \bF_\lambda^{(0)}  + \frac{1}{n_1} BC (C^T(XX^T)^{-1}C)^{-1} C^T B^T (\bW_2+\lambda I_p)^{-1} \\
&+ \frac{1}{n_1} \left(BX P_1 \bZ^T \Sigma^{1/2}_p + \Sigma_p^{1/2} \bZ P_1X^T B^T\right)(\bW_2+\lambda I_p)^{-1},
\end{align*} 
where $\bF_{\lambda}^{(0)}$ is the regularized $F$-matrix under the null hypothesis $H_0: BC = 0$. 

Therefore, under \textbf{DA}, 
\begin{align*}
\frac{1}{p^s} \bF_{\lambda} =& \frac{1}{p^s} \bF_{\lambda}^{(0)} + d_pq_pq_p^T(\bW_2+\lambda I_p)^{-1}  \\
&+   \frac{1}{n_1p^s} \left(BX P_1 \bZ^T \Sigma^{1/2}_p + \Sigma_p^{1/2} \bZ P_1X^T B^T\right)(\bW_2+\lambda I_p)^{-1}.
\end{align*}
Since $\ell_{\max} (\bF_{\lambda}^{(0)})  = O_P(1)$ as indicated by Theorem \ref{thm:main} and  
\[ \frac{1}{p^{s/2} n_1} \| BX P_1\|_2 \| \bZ^T\| \| \Sigma_p^{1/2}\| \|(\bW_2+ \lambda I_p)^{-1} \| = O_P(1), \]
we have 
\[ \frac{1}{p^s} \ell_{\max} (\bF_{\lambda}) = d_p q_p^T (\bW_2+\lambda I_p)^{-1}  q_p  + o_P(1). \]

Using the main results of \citet{karoui2011geometric}, we can show that 
\[  q_p^T (\bW_2+\lambda I_p)^{-1}  q_p = q_p^T(\lambda \varphi(-\lambda)\Sigma_p + \lambda I_p)^{-1}q_p + o_P(1).\]
The result of Lemma \ref{lemma:alternatives_deterministic} follows.  

As for Corollary \ref{corollary:power_consistency_deterministic}, under the assumed conditions, $\tilde{\ell}(\lambda) \to \infty$. Therefore, 
\[ \mP\Big( \tilde{\ell}(\lambda) > \TW_1(1-\alpha)\mid BC  \Big)\longrightarrow 1.\]

Under \textbf{PA}, similarly, we get $p^{-s}\ell_{\max}(\bF_{\lambda})$ is dominated by the following term
\[ \frac{1}{p^s} \ell_{\max} (\bF_{\lambda}) = \frac{1}{p} \nu^T D^{1/2} (\bW_2+\lambda I_p)^{-1} D^{1/2} \nu + o_P(1). \]
Using Lemma 2.7 of \citet{bai1998no}, conditional on $\bW_2$,
\[\frac{1}{p} \nu^T D^{1/2} (\bW_2+\lambda I_p)^{-1} D^{1/2} \nu  = \frac{1}{p} \tr\Big[(\bW_2 + \lambda I_p)^{-1}D\Big] + o_P(1).\]
Here, the residual $o_P(1)$ is with respect to $\mP_{BC}$ and the convergence is uniform with respect to $\bW_2$ since $\|(\bW_2 + \lambda I_p)^{-1}\|_2\leq 1/\lambda$ for any $\bW_2$. 

Again, using the main results of \citet{karoui2011geometric},
\[ \frac{1}{p} \tr\Big[(\bW_2 + \lambda I_p)^{-1}D\Big] = \frac{1}{p}\tr\Big[(\lambda \varphi(-\lambda)\Sigma_p + \lambda I_p)^{-1}D \Big] + o_P(1).\]
The results of Lemma \ref{lemma:alternatives_prob} follow. 

As for Corollary \ref{corollary:power_consistency_prob}, under the assumed conditions, $\tilde{\ell}(\lambda) \to \infty$. Therefore, 
\[ \mP\Big( \tilde{\ell}(\lambda) > \TW_1(1-\alpha)\mid BC  \Big) \stackrel{\mP_{BC}}{\longrightarrow}1.\]

\section{Potential relaxation of technical conditions}\label{sec:relaxtion_conditions}
The technical assumptions imposed in this work may be further relaxed. In this section, we briefly discuss possible directions for such relaxation.

First, in Condition~\ref{enum:high_dimensional_regime}, convergence of \(\hat{\gamma}_1 = p/n_1\) and \(\hat{\gamma}_2 = p/n_2\) to \(\gamma_1\) and \(\gamma_2\) is assumed. This requirement may be relaxed by replacing convergence with mere boundedness of \(\hat{\gamma}_1\) and \(\hat{\gamma}_2\) in the sense that 
\begin{itemize}
    \item[\textbf{Boundedness of aspect ratios.}] There exist constants $c,C>0$ such that $c< \hat{\gamma}_1 < C$ and $c<\hat{\gamma}_2<C$, for all sufficiently large $p$. 
\end{itemize}
Such a formulation has been adopted in the RMT literature; see, for example, \cite{ding2024eigenvector}. In fact, under this weaker assumption, no modifications to the arguments in Sections~\ref{sec:properties_of_bg_lambda}, \ref{sec:proof_theorem_main}, \ref{sec:proof_of_theorem_ref_thm_consistency_estimators}, and~\ref{sec:proof_power} would be required.

Second, on the distributional side, Condition~\ref{enum:moments_conditions} requires the existence of all moments of \(z_{ij}\). For edge universality of \(\bG_{\lambda}\) and \(\tilde{\bF}_\lambda\), this assumption could potentially be weakened to a fourth-moment–type condition. Indeed, \cite{DingYang2018AAP} showed that universality for the sample covariance matrix \(n_0^{-1}\bZ\bZ^T\) holds if and only if
\[
\lim_{t\to\infty} t^{4}\,\mathbb{P}(|z_{ij}|\ge t)=0,
\]
which is slightly weaker than the standard finite fourth-moment condition. In the present setting, we conjecture that universality for
\[
\bG_\lambda = \bZ U_1 U_1^T \bZ^T + \lambda \Sigma_p^{-1}
\]
and for \(\tilde{\bF}_\lambda\) could likewise be established under this tail condition. Proving such an extension would require substantial additional technical work and is therefore left for future research.

Third, Conditions~\ref{enum:wasserstein_convergence}, \ref{enum:edge_stability}, and \ref{enum:regular_edge} may be replaced by the following alternative set of assumptions. Recall that
\[
x_p(h)=h+\left[1+\frac{p}{n_2}\int\frac{\tau\, dF^{\Sigma_p}(\tau)}{\lambda-\tau h}\right]^{-1},
\]
and
\[
f(q)=-\frac{1}{q}+\frac{p}{n_1}\int\frac{d\calG_{p\lambda}(\tau)}{\tau+q}
=-\frac{1}{q}+\hat{\gamma}_1\,\phi_{p\lambda}(-q).
\]

\begin{itemize}
\item[\textbf{Regularity of the population spectrum.}]
There exists a constant \(c>0\) such that, for all sufficiently large \(p\), the following conditions (i) and (ii) hold:
\begin{enumerate}
\item[(i)] One of the following alternatives is satisfied:
\begin{itemize}
\item[(a)] \(|\lambda-\sigma_{1p}h_0|\ge c\), where \(h_0\) is the unique solution of \(x_p'(h_0)=0\) in \((-\infty,\lambda/\sigma_{1p})\). In this case, define \(\rho_{p\lambda}=x_p(h_0)\).
\item[(b)] \((p/n_2)\,F^{\Sigma_p}(\{\sigma_{1p}\})>1+c\). In this case, define \(\rho_{p\lambda}=\lambda/\sigma_{1p}\).
\end{itemize}
\item[(ii)] \(|\beta-\rho_{p\lambda}|\ge c\), where \(\beta\) satisfies \(f'(-\beta)=0\) for \(\beta\in(-\infty,\rho_{p\lambda})\).
\end{enumerate}
\end{itemize}

Conditions of this form are used in studies of edge universality for random matrices in the literature; see, for example, Assumption~2.5 of \cite{DingYang2018AAP}.

\clearpage
\section{Extension to more general linear hypotheses}\label{sec:extension_to_more_general_hypotheses}

In this section, we discuss how the proposed method can be extended from the linear hypothesis $H_0: BC = 0$ to the more general form
\begin{equation}\label{eq:generalized_LH}
H_0: ABC = \Gamma \quad \text{against} \quad H_a: ABC \neq \Gamma,
\end{equation}
where $A$ is a given $p_0 \times p$ matrix of rank $p_0$, $C$ is a given $m \times n_1$ matrix of rank $n_1$, and $\Gamma$ is a given $p_0 \times n_1$ matrix.

The proposed method can be adapted to accommodate this hypothesis. Recall the linear model
\[
\bY = BX + \Sigma_p^{1/2}\bZ .
\]
Define the transformed response matrix
\[
\tilde{\bY} = A\bY .
\]
Then
\[
\tilde{\bY} = \tilde{B}X + \tilde{\Sigma}_p^{1/2}\bZ ,
\]
where $\tilde{B} = AB$ and $\tilde{\Sigma}_p = A\Sigma_p A^T$. Under this transformation, the hypothesis \eqref{eq:generalized_LH} is equivalent to
\[
H_0: \tilde{B}C = \Gamma \quad \text{against} \quad H_a: \tilde{B}C \neq \Gamma .
\]

Define the matrices
\begin{align*}
\bW_1^* &= \frac{1}{n_1}
\left[\tilde{\bY}X^T(XX^T)^{-1}C - \Gamma \right]
\left[C^T(XX^T)^{-1}C\right]^{-1}
\left[C^T(XX^T)^{-1}X\tilde{\bY}^T - \Gamma^T\right], \\
\bW_2^* &= \frac{1}{n_2}
\tilde{\bY}\left[I - X^T(XX^T)^{-1}X\right]\tilde{\bY}^T,
\end{align*}
with $n_1$ being the rank of $C$ and $n_2 = n_0 - m$ being the effective sample size. 
The proposed procedure can then be applied to the regularized $F$-matrix
\[
\bW_1^*(\bW_2^* + \lambda I_{p_0})^{-1}.
\]
Here, it is understood that $p$ and $\Sigma_p$ are replaced by $p_0$ and $\tilde{\Sigma}_p$. The theoretical results continue to hold provided that $(p_0, n_1, n_2, \tilde{\Sigma}_p)$ satisfy Conditions~\ref{enum:high_dimensional_regime}--\ref{enum:regular_edge}.

\end{document}